\def\draft{1}
\newcommand*{\myfont}{\fontfamily{bch}\selectfont}
\DeclareTextFontCommand{\textmyfont}{\myfont}
\newtheorem{theorem}{Theorem}[section]
\newtheorem{corollary}[theorem]{Corollary}
\newtheorem{lemma}[theorem]{Lemma}
\newtheorem{observation}[theorem]{Observation}
\newtheorem{proposition}[theorem]{Proposition}
\newtheorem{definition}[theorem]{Definition}
\newtheorem{claim}[theorem]{Claim}
\newtheorem{remark}[theorem]{Remark}
\newtheorem{example}{Example}[theorem]
\newcommand{\prob}[2]{\mathop{\mathrm{Pr}}_{#1}[#2]}
\newcommand{\poly}{\mathop{\mathrm{poly}}}
\newcommand{\F}{\mathbb{F}}
\newcommand{\R}{\mathbb{R}}
\newcommand{\N}{\mathbb{N}}
\newcommand{\mc}[1]{\mathcal{#1}}
\newcommand{\Boo}{\{0,1 \}}
\newcommand{\sgrid}{\mathcal{S}}
\newcommand{\bigO}{\mathcal{O}}
\newcommand{\veca}{\mathbf{a}}
\newcommand{\vecb}{\mathbf{b}}
\newcommand{\Sym}{\mathrm{Sym}}
\newcommand{\J}{\mathcal{J}}
\newcommand{\A}{\mathcal{A}}
\newcommand{\Q}{\mathcal{Q}}
\newcommand{\LM}{\mathrm{LM}}
\newcommand{\paren}[1]{\left( #1 \right)}
\newcommand{\brac}[1]{\left[ #1 \right]}
\newcommand{\set}[1]{\left\{ #1 \right\}}
\newcommand{\setcond}[2]{\left\{ #1 \;\middle\vert\; #2 \right\}}
\newcommand{\abs}[1]{\left\lvert#1\right\rvert}
\DeclareMathOperator*{\E}{\mathbb{E}}
\newcommand{\x}{\mathbf{x}}
\newcommand{\y}{\mathbf{y}}
\newcommand{\z}{\mathbf{z}}
\newcommand{\ODLSZ}{\mathrm{ODLSZ}}
\newcommand{\by}{\mathbf{y}}
\newcommand{\bern}{\textnormal{Bern}}
\newcommand{\Z}{\mathbb{Z}}
\definecolor{thmcolor}{RGB}{235, 235, 235}
\definecolor{citecolor}{RGB}{1, 210, 56}
\definecolor{lemmacolor}{RGB}{130, 169, 252}
\newtcolorbox{algobox}{colback=lightgray!5!white,colframe=lightgray!75!black}
\newtcolorbox{thmbox}{colback=thmcolor!5!white,colframe=black!75!black}
\newtcolorbox{lemmabox}{colback=lemmacolor!5!white,colframe=blue!75!blue}
\newcommand{\anote}[1]{{\color{brown} [Amik: #1]}}
\newcommand{\mnote}[1]{{\color{red} [Madhu: #1]}}
\newcommand{\pnote}[1]{{\color{blue} [Prashanth: #1]}}
\newcommand{\snote}[1]{{\color{teal} [Srikanth: #1]}}
\newcommand{\anote}[1]{}
\newcommand{\mnote}[1]{}
\newcommand{\pnote}[1]{}
\newcommand{\snote}[1]{}
\def\anon{0}
\date{June 23, 2025} 
\begin{document} 
\title{Eigenvalue Bounds for Symmetric Markov Chains on Multislices With Applications}

    \if\anon1{}\else{    
    \author{Prashanth Amireddy\thanks{School of Engineering and Applied Sciences, Harvard University, Cambridge, Massachusetts, USA. Supported in part by a Simons Investigator Award and NSF Award CCF 2152413 to Madhu Sudan and a Simons Investigator Award to Salil Vadhan. Email: \texttt{pamireddy@g.harvard.edu}} \and
    Amik Raj Behera\thanks{Department of Computer Science, University of Copenhagen, Denmark. Supported by Srikanth Srinivasan's start-up grant from the University of Copenhagen. Email: \texttt{ambe@di.ku.dk} } \and
     Srikanth Srinivasan \thanks{Department of Computer Science, University of Copenhagen, Denmark. Supported by the European Research Council (ERC) under grant agreement no. 101125652 (ALBA). Email: \texttt{srsr@di.ku.dk} } \and 
     Madhu Sudan\thanks{School of Engineering and Applied Sciences, Harvard University, Cambridge, Massachusetts, USA. Supported in part by a Simons Investigator Award, NSF Award CCF 2152413 and AFOSR award FA9550-25-1-0112. Email: \texttt{madhu@cs.harvard.edu}}}
     }\fi

	\maketitle
        \pagenumbering{arabic}

\begin{abstract}

We consider random walks on ``balanced multislices''
of any ``grid'' that respects the ``symmetries'' of the grid, and show that a broad class of such walks are good spectral expanders. (A grid is a set of points of the form $\mathcal{S}^n$ for finite $\mathcal{S}$, and a balanced multi-slice is the subset that contains an equal number of coordinates taking every value in $\mathcal{S}$. A walk respects symmetries if the probability of going from $u = (u_1,\ldots,u_n)$ to $v = (v_1,\ldots,v_n)$ is invariant under simultaneous permutations of the coordinates of $u$ and $v$.) Our main theorem shows that, under some technical conditions, every such walk where a single step leads to an almost $\mathcal{O}(1)$-wise independent distribution on the next state, conditioned on the previous state, satisfies a non-trivially small singular value bound.\\

We give two applications of our theorem to error-correcting codes: (1) We give an analog of the Ore-DeMillo-Lipton-Schwartz-Zippel lemma for polynomials, and junta-sums, over balanced multislices. (2) We also give a local list-correction algorithm for $d$-junta-sums mapping an arbitrary grid $\mathcal{S}^n$ to an Abelian group, correcting from a near-optimal $(\frac{1}{|\mathcal{S}|^{d}} - \varepsilon)$ fraction of errors for every $\varepsilon > 0$, where a $d$-junta-sum is a sum of (arbitrarily many) $d$-juntas (and a $d$-junta is a function that depends on only $d$ of the $n$ variables).\\ 

Our proofs are obtained by exploring the representation theory of the symmetric group and merging it with some careful spectral analysis.

\end{abstract} 

        \newpage 
        
\tableofcontents

        \newpage

\section{Introduction}\label{sec:intro}

Consider the following natural random walk whose states are the balanced vectors of $\{0,1\}^n$, i.e., the balanced Boolean slice with an equal number of $0$s and $1$s, where a single step of the random walk takes a state $u$ to a state $v$ at Hamming distance exactly $n/2$ from it. One would expect this random walk to mix extremely rapidly, and indeed this is known. The underlying graph here is a special case of a Johnson graph whose entire eigenspectrum is well known~\cite{Delsarte} and, in particular, implies that the second eigenvalue of this graph is $o_n(1)$.

Now consider the following variant of the above random walk: The states now are elements of the `balanced multislice' in $\{-1,0,1\}^n$, i.e. vectors with exactly $1/3$rd fraction of the letters $-1$, $0$ and $1$, and in a single step from a balanced vector $u$ to a random balanced vector $v$ obtained by flipping exactly $1/3$ fraction of each of the letters of $u$ to $-1$, $0$ and $1$. (So for a single coordinate $i$, $v_i$ is uniform in $\{-1,0,1\}$ conditioned on $u_i$.) It is intuitive to believe that such a random walk should also converge to the uniform distribution over all balanced vectors extremely fast, but, as far as we know, it was not even known that the second-eigenvalue of this random walk (or its transition probability matrix) has value $o_n(1)$. 

The gap in the understanding between the Boolean and non-Boolean cases in such problems can be significant for fundamental reasons. For example, for the alternate version of the random walk where the transition is defined by a uniformly random transposition of coordinates, it took a decade after optimal bounds on the mixing time were proved in the Boolean case \cite{Diaconis-Shahshahani} to prove similar results in the non-Boolean setting \cite{Scarabotti}. We refer the reader to the work of Filmus, O'Donnell, and Wu \cite{FilmusOdonnellWu} for a nice overview of the challenges posed by the non-Boolean setting in such problems. Some of these obstructions have to do with associated representations of the symmetric group that play a role in the corresponding proofs; these representations are simpler (`multiplicity-free') in the Boolean setting than in the non-Boolean setting. This also creates difficulties in resolving the questions we consider.

The main contribution of this work is to address some of the challenges alluded to above. In particular, we show that the variant random walk described in the second paragraph also has fast mixing, specifically by giving a $o_n(1)$ bound on its second eigenvalue. Indeed, we study this question in more generality for balanced multislices, with ``nearly balanced moves''. We believe the questions carry intrinsic interest and should find broad applications in the field. We justify this belief partially by describing two applications in coding theory:
\begin{itemize}
    \item The first gives a near-tight distance bound on codes obtained by evaluations of polynomials on balanced multislices.
    \item The second gives a local list-correction algorithm for subclasses of polynomials evaluated on {\em grids}. (Note that the second application does not refer to balanced multislices in the problem definition --- the multislices arise naturally in the design and analysis of the local correction algorithm!)
\end{itemize}
We elaborate on our setting and results, the applications, and the technical challenges below.

\subsection{Multislices and Random Walks}

By a {\em grid}, we refer to sets of the form $\sgrid^n$ for some finite set $\sgrid$ and positive integer $n$. (Usually we think of $s := |\sgrid|$ as a constant and study the growth of relevant parameters as a function of $n$). The {\em balanced multislice} of a grid $\sgrid^n$ is the set
\begin{align*}
     \sgrid^n_\mu := \setcond{\veca \in \sgrid^n }{\forall \sigma \in \sgrid, |\{i\in[n] | a_i = \sigma\}| = \frac{n}{s}}.
\end{align*}
(Note that a multislice is non-empty if and only if $s$ divides $n$. We will drop the term ``balanced'' in the future and simply refer to multislices to keep the term short.) 

The random walks we consider have the multislice of some grid as their state space. Recall that such a random walk can be described by a $\sgrid^n_\mu \times \sgrid^n_\mu$  matrix $W$ with $W(\veca,\vecb)$ denoting the probability of transition from state $\veca$ to $\vecb$. We consider walks where every step of the walk makes a ``nearly balanced move''.
To elaborate, let us define the {\em generalized Hamming distance}\footnote{Sometimes, this is also called as a ``meet table'' in algebraic combinatorics.} $\Delta(\veca,\vecb)$, for vectors $\veca,\vecb \in \sgrid^n$, to be the $\sgrid \times \sgrid$ matrix given by $\Delta_{\sigma,\tau}(\veca,\vecb) = |\{i\in [n] | a_i = \sigma, b_i = \tau\}|$. 
We say that a generalized Hamming distance parameter $\Delta \in \Z^{\mathcal{S} \times \mathcal{S}}$ {\em determines} a random walk matrix, denoted $W_\Delta$, if for each vertex ${\bf a}$, the random step corresponding to $W_\Delta$ is obtained by picking, uniformly at random, a vertex ${\bf b}$ on the multislice such that $\Delta({\bf a},{\bf b})=\Delta$. 

For constant $C < \infty$, we say that a generalized Hamming distance parameter $\Delta \in \Z^{\mathcal{S} \times \mathcal{S}}$ is {\em $C$-balanced} if each entry of $\Delta$ is $\frac{m}{s} \pm \paren{C\sqrt{m\log m}}$ where $m=n/s$. In other words, all the entries of $\Delta$ are equal up to a difference of at most $2C\sqrt{m \log m}$.  Informally, when considering $n \to \infty$ we refer to $\Delta$, as also 
 a random walk matrix $W_\Delta$ determined by  $\Delta$, as ``nearly balanced'' if $\Delta$ is $C$-balanced for some constant $C$. Here, we note that $W_\Delta$ is a well-defined random walk matrix over the multislice only if $\Delta/m$ is a doubly-stochastic matrix (i.e., every row and every column of $\Delta$ sums to $m$).

Note that the mixing time of a random walk matrix $W$ is closely tied to the second largest singular value, which we denote by $\sigma_2(W)$. (In particular, the singular values satisfy $1 = \sigma_1 \geq \sigma_2 \geq \cdots \sigma_N \geq 0$ and we let $\sigma_2(W) = \sigma_2$. If the walk is symmetric, then this captures the second eigenvalue. Specifically, if the eigenvalues are $1 = \lambda_1 \geq \lambda_2 \geq \cdots \lambda_N \geq -1$ where $N = |\sgrid_\mu^n|$, then $\sigma_2(W) = \max\{|\lambda_2|,  |\lambda_N|\}$.) Our main goal is to bound the value of $\sigma_2(W)$ by some function $o_n(1)$ that tends to 0 with growing $n$ for a broad class of random walk matrices $W$ over the multislice $S_\mu^n$. In general, it is desirable to have such singular value bounds, and such random walk matrices are said to have good ``spectral expansion'' or ``fast mixing''.

The following theorem gives such a fast mixing result on the balanced multislice for all nearly balanced walks that ``respect symmetries''. More formally, for a permutation $\pi \in \mathrm{Sym}_n$ and $\mathbf{a} \in \sgrid^{n}$, let $\pi(\veca)$ denote the action of $\pi$ on $\sgrid^n_\mu$, i.e., $\pi (\mathbf{a}) := (a_{\pi^{-1}(1)},\ldots,a_{\pi^{-1}(n)})$. 
For a stochastic matrix $M \in \R^{\sgrid^n_\mu \times \sgrid^n_\mu}$, we say $M$ {\em \underline{respects symmetries}} if for all 
permutations $\pi \in \mathrm{Sym}_{n}$ and for all $\veca,\vecb \in \sgrid^n_\mu$ we have $M(\pi(\veca),\pi(\vecb)) = M(\veca,\vecb)$. Our main theorem shows that walks that respect symmetries and are nearly balanced have fast mixing.

\begin{thmbox}
    \begin{restatable}[{\bf Singular value bound for nearly balanced walks}]{theorem}{genbalcor} \label{cor:gen-bal-cor}
        For every $s\ge 2$ and $C < \infty$, there exists $\tau > 0$ such that for every finite set $\sgrid$ of size $s$ and sufficiently large $n\in \N$, the following holds:\newline
        If $W$ is a stochastic matrix over the multislice $\sgrid^n_\mu$ that {\em respects symmetries}, and satisfies the condition that 
        \begin{align*}
            W({\mathbf{a}},{\mathbf{b}}) > 0 \quad \Rightarrow \quad \Delta({\mathbf{a}},{\mathbf{b}})  \text{ is $C$-balanced}  \quad \forall \; {\mathbf{a}},{\mathbf{b}} \in \sgrid^{n}_{\mu},
        \end{align*}
        then $\sigma_2(W) \le 1/n^\tau$.
    \end{restatable}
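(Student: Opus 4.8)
The plan is to use the $\mathrm{Sym}_n$-symmetry to reduce $\sigma_2(W)$ to operator norms of the blocks of $W$ on the graded pieces of the degree (junta) filtration of $L^2(\sgrid^n_\mu)$, and then bound these blocks in two regimes --- small degree (via near-balancedness $\Rightarrow$ almost wise-independence $\Rightarrow$ hypercontractive estimates) and large degree (via representation theory of $\mathrm{Sym}_n$). As a first step, since $W$ respects symmetries it commutes with the permutation action of $\mathrm{Sym}_n$ on $L^2(\sgrid^n_\mu)$; the uniform distribution, being the unique $\mathrm{Sym}_n$-invariant one, is stationary, so $W$ is doubly stochastic and $W^T$ again respects symmetries and the $C$-balanced support condition (as $\Delta(\mathbf b,\mathbf a)=\Delta(\mathbf a,\mathbf b)^T$). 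Hence $\sigma_2(W)^2=\lambda_2(WW^T)$, and we may replace $W$ by $M:=WW^T$: self-adjoint, positive semidefinite, symmetry-respecting, stochastic, with support in ``$2C$-balanced'' pairs and still almost $\mathcal O(1)$-wise independent in one step. It suffices to show $\lambda_2(M)\le n^{-2\tau}$.

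\textbf{Degree filtration and isotypic blocks.} Let $V_{\le d}\subseteq L^2(\sgrid^n_\mu)$ be the span of functions depending on at most $d$ coordinates. Any symmetry-respecting stochastic matrix preserves this filtration: writing it as $\sum_\Delta c_\Delta W_\Delta$ (a nonnegative combination of orbit walks, the general form of such matrices), one checks that if $f$ depends only on $\mathbf a|_T$ with $|T|=d$ then $(W_\Delta f)(\mathbf a)=\mathbb E_{\mathbf b:\,\Delta(\mathbf a,\mathbf b)=\Delta}[f(\mathbf b|_T)]$ depends only on $\mathbf a|_T$, since the constraint $\Delta(\mathbf a,\mathbf b)=\Delta$ is invariant under $\mathrm{Stab}(\mathbf a)$ (and since on the multislice the colour-counts of $\mathbf a$ outside $T$ are determined by $\mathbf a|_T$). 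As $M$ is self-adjoint it preserves each $V_{=d}:=V_{\le d}\cap V_{\le d-1}^{\perp}$, so $\lambda_2(M)=\max_{d\ge 1}\|M|_{V_{=d}}\|$. Decomposing $V_{=d}=\bigoplus_{\lambda:\,\lambda_1=n-d,\ \ell(\lambda)\le s}S^\lambda\otimes U_\lambda$ into $\mathrm{Sym}_n$-isotypic components, Schur's lemma gives $M|_{V_{=d}}=\bigoplus_\lambda\mathrm{id}_{S^\lambda}\otimes A_\lambda$; the goal is $\|A_\lambda\|\le n^{-2\tau}$ for all $\lambda\ne(n)$. Crucially $\dim U_\lambda=K_{\lambda,(m^s)}$ is $>1$ in general once $s\ge 3$, so the $A_\lambda$ are genuine matrices, not scalar eigenvalues as in the multiplicity-free Boolean slice --- this is the core of the non-Boolean difficulty.

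\textbf{Small degrees ($1\le d\le K:=n^{\Theta(\tau)}$).} Near-balancedness implies that for every $\mathbf a$ and $|T|\le K$, the law of $\mathbf b|_T$ under $M(\mathbf a,\cdot)$ is within total variation $\widetilde{O}_{s,C}(K/\sqrt n)$ of the uniform law on $\sgrid^n_\mu$ restricted to $T$ (itself $O(K^2/n)$-close to i.i.d.\ uniform on $\sgrid^T$). So on functions localized on few coordinates $M$ is close to the rank-one complete walk $M_{\mathrm{unif}}$, which kills $V_{=d}$; converting this into $\|M|_{V_{=d}}\|\le n^{-\Omega(1)}$ uses $(2,4)$-hypercontractivity on the multislice (with a constant depending only on $s$, valid up to degrees $\Omega_s(n)$) to control the passage from ``localized'' test functions to arbitrary $f\in V_{=d}$. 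For $\tau$ small this gives $\|M|_{V_{=d}}\|\le n^{-2\tau}$.

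\textbf{Large degrees ($d>K$) --- the main obstacle.} Here there is no almost-$d$-wise independence and the multiplicity spaces are large, so the argument above collapses; this is where the representation theory does real work. One must show $\|A_\lambda\|\to 0$ as $n-\lambda_1\to\infty$, uniformly over $C$-balanced symmetry-respecting walks, via a quantitative handle on how the orbit operators $W_\Delta$ act on $S^\lambda\otimes U_\lambda$. Two plausible routes: (i) a degree-graded trace bound, using $\sigma_2(W)^{2\ell}\le\mathrm{tr}\big((WW^T)^\ell\big)-1=N\big(\Pr[\text{return in }2\ell\text{ steps}]-1/N\big)$ together with the fact that a $C$-balanced step is so spread out that closed short walks are rare, pushing the excess return probability below $n^{-\Omega(1)}$; or (ii) a restriction/recursion, where conditioning on the states' coordinates outside a random window of size $n-n^{\Omega(1)}$ being ``generic'' yields a symmetry-respecting, still-$C$-balanced walk on a far smaller multislice, plus a bound on $\|M|_{V_{=d}}\|$ (large $d$) in terms of its $\sigma_2$, enabling induction on $n$. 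Either way the explicit description of the isotypic projections $P_\lambda$ and of the $W_\Delta$-action on the multiplicity spaces is essential --- the promised merger of $\mathrm{Sym}_n$-representation theory with careful spectral analysis. Finally, taking $K=n^{\Theta(\tau)}$ makes the small- and large-degree regimes meet, yielding $\sigma_2(W)\le n^{-\tau}$.
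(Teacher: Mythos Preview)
Your outline has the right large-scale picture (symmetry $\Rightarrow$ block decomposition indexed by partitions, then a two-regime split), but the large-degree case --- which you flag as ``the main obstacle'' --- is not actually proved, and the paper's argument there is quite different from either of your proposed routes. The paper's key tool is a \emph{Frobenius norm} bound: one first reduces (by convex combination) to the orbit walks $W_\Delta$ and computes $\|W_\Delta\|_F \le n^{O_{s,C}(1)}$ for $C$-balanced $\Delta$. In the block decomposition $M = \bigoplus_\lambda A_\lambda \otimes \mathrm{Id}_{\dim S^\lambda}$, every singular value of $A_\lambda$ is repeated $\dim S^\lambda$ times in $M$, so $\|A_\lambda\|_2^2 \cdot \dim S^\lambda \le \|M\|_F^2$. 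For $\lambda_2 > c$ (a constant), one has $\dim S^\lambda \ge n^{\Omega(c)}$, which forces $\|A_\lambda\| \le n^{-\Omega(1)}$ once $c$ is a large enough constant multiple of the Frobenius exponent. Your trace-bound route~(i) is in this neighborhood, but as written it bounds only the global quantity $\sum_{i\ge 2}\sigma_i^{2\ell}$; with $\ell=1$ this is $\|W\|_F^2 - 1 = n^{O(1)}$, useless, and taking $\ell$ large enough to make it small is circular (you would need $\ell$ comparable to the mixing time). The missing idea is precisely the per-block multiplicity amplification via $\dim S^\lambda$.

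Your small-degree argument is also problematic. With threshold $K = n^{\Theta(\tau)}$, the $(2,4)$-hypercontractive constant at degree $K$ is $C_s^{K}$, super-polynomial in $n$, which swamps any $n^{-\Omega(1)}$ gain from near-uniform marginals. And even at constant degree it is unclear how hypercontractivity converts a pointwise bound on $|Mf|$ for $T$-juntas $f$ into an $L^2$ operator-norm bound on all of $V_{=d}$: a generic $f \in V_{=d}$ is a sum of $\binom{n}{d}$ juntas with uncontrolled coefficients. The paper sidesteps this entirely by taking the threshold at a \emph{constant} $c$; then the multiplicity space $U_\lambda$ has dimension $K_{\lambda,\mu} = O_{s,c}(1)$, and one exhibits explicit spanning vectors $\chi_T$ (indexed by $T \in \mathrm{SSYT}(\lambda,\mu)$, following Dafni--Filmus--Lifshitz--Lindzey--Vinyals) that are $O_{s,c}(1)$-juntas and whose parallelepiped volume is $\Omega_{s,c}(1)$. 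Almost-$O(1)$-wise independence (which genuinely holds for $C$-balanced walks) gives $\|M\chi_T\|_2 = O(\varepsilon)$ pointwise, and an elementary linear-algebra lemma (Cramer's rule with the volume lower bound) converts this into $\|A_\lambda\|_2 = O_{s,c}(\varepsilon)$. No hypercontractivity is needed.

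A minor error: $WW^T$ does not have ``$2C$-balanced'' support --- two $C$-balanced steps can compose to an arbitrarily unbalanced $\Delta(\mathbf a,\mathbf c)$. This doesn't break anything you actually use later (the almost-$k$-wise independence does survive composition), but it would matter for the Frobenius-norm computation, which is another reason the paper first decomposes $W$ as a convex combination of single-$\Delta$ walks $W_\Delta$ and works with those.
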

\end{thmbox}

The above result implies that the random walk on the balanced multislice mentioned earlier in this section (which corresponds to $s=3$ and $C$-balanced generalized distance parameter with $C=0$) has its second largest eigenvalue polynomially bounded. In fact, \Cref{cor:gen-bal-cor} is more general and covers multislices over any grid $\sgrid$ of constant size (i.e., for every $|\sgrid| = \bigO(1)$). Additionally, it is robust to perturbations of transition probabilities as long as the transition probabilities are nearly balanced.

Indeed \Cref{cor:gen-bal-cor} follows from our main technical theorem, stated as \Cref{thm:more-general-matrix-eigenvalue} below, which abstracts the main properties that suffice to prove the bound on the second largest singular value. Specifically \Cref{thm:more-general-matrix-eigenvalue} shows that, in addition to the symmetries respected by the matrix $W$, the important features that suffice to prove fast mixing are:
\begin{enumerate}
    \item The next state of the random walk is ``almost $\bigO(1)$-wise independent'' conditioned on the current state
    \item The Frobenius norm of $W$ is polynomially bounded in $n$.
\end{enumerate}
We elaborate on these conditions below before stating our main technical result \Cref{thm:more-general-matrix-eigenvalue}.\\

\noindent
For a distribution $D$ supported on $\sgrid^n$ and set $T \subseteq [n]$, we let $D_T$ denote the marginal distribution supported on $\sgrid^T$ induced by projecting a random variable $x \sim D$ to its coordinates in $T$. Recall that a distribution $D$ is $k$-wise independent if for every set $T \subseteq [n]$ with $|T| \leq k$ we have $D_T$ is the uniform distribution on $\sgrid^T$. Recall further that $D$ is $\varepsilon$-almost $k$-wise independent if for every set $T \subseteq [n]$ with $|T| \leq k$ we have $D_T$ is $\varepsilon$-close in total variation distance to the uniform distribution on $\sgrid^T$.\\

\noindent
In the following definition we view the rows of a stochastic matrix $M \in \R^{\sgrid^n_\mu \times \sgrid^n_\mu}$, denoted $M(\veca) := (M(\veca,\vecb)| \vecb \in \sgrid^n_\mu)$ for $\veca\in \sgrid^n_\mu$, as distributions supported on $\sgrid^n$ (which have zero support outside $\sgrid^n_\mu$). \\

\begin{definition}[{\bf $\varepsilon$-almost $k$-wise independent matrix}]\label{defn:eps-close-k-wise-matrix}
For parameter $k \in \N$ and $\epsilon > 0$ we say that a stochastic matrix $M \in \mathbb{R}^{\sgrid^n_\mu \times \sgrid^n_\mu}$ is \emph{$\varepsilon$-almost $k$-wise independent} if for every row $\mathbf{a} \in \sgrid^{n}_{\mu}$, the distribution $M(\mathbf{a})$ is  $\varepsilon$-almost $k$-wise independent.
\end{definition}

Finally we recall that for a matrix $M \in \R^{N \times N}$, its Frobenius norm, denoted $\|M\|_F$, is the quantity $\sqrt{\sum_{(i,j) \in N \times N} M(i,j)^2}$. 

We now state the main theorem of our work.\\

\begin{thmbox}
\begin{restatable}[{\bf Singular Value Bound for Markov Chains on Balanced Multislice}]{theorem}{generaleigenvalue}\label{thm:more-general-matrix-eigenvalue}
{
For every $\kappa > 0$, and $s\in\N$ with $\kappa \geq s$ there exists $c_1.c_2,c_3 < \infty$ such that for every $\varepsilon > 0$  and every sufficiently large $n \in \N$ that is divisible by $s$, the following holds:\newline 
Suppose $\sgrid$ is a set of size $s$, and $M \in \mathbb{R}^{\sgrid^{n}_\mu \times \sgrid^{n}_\mu}$ is a stochastic matrix that satisfies the following three conditions:\\

\begin{enumerate}
    \item The matrix $M$ respects symmetries.
    \item $\| M \|_{F} \; \leq \; c_1 \cdot n^{\kappa}$.
    \item The matrix $M$ is $\varepsilon$-almost $k$-wise independent for  $k = 10s \kappa$.\\
\end{enumerate}

\noindent
Then we have  $\sigma_{2}(M) \, \leq \, \max\set{c_2/n, c_3\cdot \varepsilon}$}.

\end{restatable}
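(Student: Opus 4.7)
My plan is to exploit that $M$ respects symmetries and reduce the problem to a block-diagonal analysis via the representation theory of $\Sym_n$. The permutation action of $\Sym_n$ on $\R^{\sgrid^n_\mu}$ decomposes into isotypic components $V_\lambda$ indexed by partitions $\lambda$ of $n$ with at most $s$ parts (appearing with Kostka multiplicity $K_{\lambda\mu}$ where $\mu = (n/s,\ldots,n/s)$); the trivial partition $(n)$ is the span of the constant function and carries the Perron eigenvalue $1$. Since $M$ commutes with the $\Sym_n$-action, Schur's lemma yields, after identifying $V_\lambda \cong \R^{m_\lambda} \otimes S^\lambda$, that $M|_{V_\lambda} = N_\lambda \otimes I_{\dim S^\lambda}$ for some operator $N_\lambda$ on the multiplicity space $\R^{m_\lambda}$. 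I will use the two resulting identities throughout: $\sigma_2(M) = \max_{\lambda \neq (n)} \|N_\lambda\|_{\mathrm{op}}$ and $\|M\|_F^2 = \sum_\lambda \dim(S^\lambda)\, \|N_\lambda\|_F^2$.

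I will then split partitions by their deficit $d(\lambda) := n - \lambda_1$ and handle the two regimes separately. Choose a threshold $d^* = 2\kappa + 1$, so that $2d^* \leq k = 10s\kappa$. For any $\lambda$ with $d(\lambda) \geq d^* + 1$, standard lower bounds on Specht-module dimensions give $\dim S^\lambda \geq c_\kappa\, n^{2\kappa+2}$, so the Frobenius assumption alone suffices:
\[
\|N_\lambda\|_{\mathrm{op}} \;\leq\; \|N_\lambda\|_F \;\leq\; \|M\|_F / \sqrt{\dim S^\lambda} \;\leq\; c_2/n.
\]
This takes care of every high-deficit isotypic component using only condition~(2).

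For the remaining non-trivial $\lambda$ with $d(\lambda) \leq d^*$, functions in $V_\lambda$ are mean-zero on the multislice and expressible as linear combinations of juntas supported on $\leq d^*$ coordinates. Viewing $\tfrac{1}{|\sgrid^n_\mu|}M$ as the joint law of $(\veca,\vecb)$ with $\veca$ uniform on $\sgrid^n_\mu$ and $\vecb \sim M(\veca)$, I expand any $f, g \in V_\lambda$ via orthogonal junta decompositions $f = \sum_T f_T$, $g = \sum_S g_S$ (with $|S|,|T|\leq d^*$, each piece orthogonal to juntas on proper subsets of its support), and write
\[
\tfrac{1}{|\sgrid^n_\mu|}\,\langle g, Mf\rangle \;=\; \E_{(\veca,\vecb)}\bigl[g(\veca)\,f(\vecb)\bigr] \;=\; \sum_{S,T} \E\bigl[g_S(\veca_S)\, f_T(\vecb_T)\bigr].
\]
Since $|S|+|T| \leq 2d^* \leq k$, the joint marginal of $(\veca_S, \vecb_T)$ is $O(\varepsilon + 1/n)$-close in total variation to the product uniform on $\sgrid^S \times \sgrid^T$ (using that the multislice marginal of $\veca_S$ on $O(1)$ coordinates is $O(1/n)$-close to uniform, and that $M(\veca)$ is $\varepsilon$-almost $k$-wise independent). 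Under that product distribution each term vanishes by the chosen orthogonality normalization, leaving $|\E[g_S f_T]| \leq O(\varepsilon)\|g_S\|_\infty \|f_T\|_\infty$. Converting to $L^2$ norms via the standard $L^\infty/L^2$ comparison for low-support juntas on the multislice, and summing over the $n^{O(d^*)}$ pairs $(S,T)$, will yield $|\langle g, Mf\rangle| \leq c_3 \varepsilon\, \|f\|_2\|g\|_2$, whence $\|N_\lambda\|_{\mathrm{op}} \leq c_3\varepsilon$.

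The main obstacle is the low-deficit step above: I need a junta decomposition of every $V_\lambda$ such that (a) each piece has its $L^\infty$ norm controlled by its $L^2$ norm up to factors that are constant in $n$, (b) replacing the multislice marginals by the product uniform on $\sgrid^S \times \sgrid^T$ is clean and contributes only $O(\varepsilon)$ per term after the cancellations enforced by the orthogonality normalization, and (c) the combinatorial factors from summing over all $(S,T)$ remain $O_{s,\kappa}(1)$. This is precisely where the full strength of the hypothesis $k \geq 10 s \kappa$ (comfortably above $2d^*$) enters, and where a careful use of the representation theory of multislices --- in particular an explicit description of each $V_\lambda$ by harmonic junta-like basis functions of controlled support --- will be essential.
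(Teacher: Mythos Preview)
Your high-level architecture matches the paper's exactly: block-diagonalize via the $\Sym_n$-action, identify $\sigma_2(M)=\max_{\lambda\neq(n)}\|N_\lambda\|_{\mathrm{op}}$, split by deficit, and kill the high-deficit blocks with the Frobenius bound (the paper uses the equivalent threshold $\lambda_2>4\kappa$, which differs from $n-\lambda_1>d^*$ only by a factor of $s$). That part is fine and essentially identical to Lemma~3.5 in the paper.

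The gap is in your low-deficit argument, and you have correctly identified it yourself. Expanding arbitrary $f,g\in V_\lambda$ as $\sum_T f_T$, $\sum_S g_S$ over all subsets $|S|,|T|\le d^*$ produces $n^{O(d^*)}$ terms. Even with the $L^\infty/L^2$ comparison for $d^*$-juntas, one only gets $\bigl(\sum_T\|f_T\|_2\bigr)\bigl(\sum_S\|g_S\|_2\bigr)$, and Cauchy--Schwarz against the orthogonal decomposition loses a factor $\sqrt{\binom{n}{\le d^*}}$ in each sum, yielding $\|N_\lambda\|_{\mathrm{op}}\le n^{O(d^*)}\varepsilon$ rather than $O_{s,\kappa}(\varepsilon)$. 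There is no cancellation mechanism in your outline that removes this polynomial factor, and the slack in the hypothesis $k\ge 10s\kappa$ does not help: it controls the \emph{support size} of each individual term, not the \emph{number} of terms.

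The paper avoids this blowup by never decomposing a generic $f\in V_\lambda$. Instead it exploits that the multiplicity $m_\lambda=K_{\lambda\mu}\le s^{cs}=O_{s,\kappa}(1)$ is a \emph{constant}, and produces $m_\lambda$ explicit vectors $\chi_T\in V_\lambda$ indexed by semistandard Young tableaux of shape $\lambda$ and content $\mu$, one in each irreducible copy $V_{\lambda,j}$ and related by the inter-copy isomorphisms (so that together they represent $N_\lambda$ faithfully). Each $\chi_T$ is a single junta on $\le cs$ coordinates --- not a sum over $\binom{n}{\le d^*}$ sets --- so $\|M\chi_T\|_2=O_{s,\kappa}(\varepsilon)$ follows directly from $\varepsilon$-almost $k$-wise independence with no combinatorial sum. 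The remaining issue is that the $\chi_T$ are not orthonormal; the paper handles this via a volume lemma (Lemma~3.4): if $m$ vectors of length $O(1)$ span a parallelepiped of volume $\Omega(1)$ and each shrinks to length $O(\varepsilon)$ under $N_\lambda$, then $\|N_\lambda\|_{\mathrm{op}}=O(\varepsilon)$. The key new technical content is proving this $\Omega(1)$ volume bound for the SSYT-indexed $\chi_T$'s (Lemma~3.9, item~3), which is where most of the work in the low-deficit case actually lies. Your outline gestures toward ``harmonic junta-like basis functions'' but does not arrive at the crucial reduction from $n^{O(d^*)}$ terms to $m_\lambda=O(1)$ carefully-chosen vectors in the multiplicity space.
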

\end{thmbox}

If the Markov chain is symmetric, then the singular values correspond to the eigenvalues, and hence \Cref{thm:more-general-matrix-eigenvalue} yields eigenvalue bounds for symmetric Markov chains satisfying the properties mentioned above. \Cref{thm:more-general-matrix-eigenvalue} is proved in \Cref{sec:eigenvalue}. The proof involves many standard and some new elements of representation theory for the symmetric group. We elaborate on this in \Cref{subsec:proof-overview-spectral}. We also note that \Cref{cor:gen-bal-cor} immediately follows from \Cref{thm:more-general-matrix-eigenvalue}, modulo some calculations that verify that Condition (2) above applies to $C$-balanced matrices. For more details, see \Cref{subsec:eigbound}.\\ 

\noindent
To illustrate the applicability of \Cref{cor:gen-bal-cor}, we give two examples, both related to coding theoretic aspects of polynomials and other polynomial-like functions that we refer to as \emph{junta-sums}. These results extend corresponding works in the Boolean setting \cite{ABPSS25, ABSS25-SZ-Lemma}, obtaining natural generalizations to non-Boolean settings.

\paragraph*{Distance of polynomials and junta-sums on multislices.}
A function $f:\sgrid^n \to G$ is called a $d$-junta if it depends on only $d$ of the $n$ variables, i.e, there exists a set $I \subseteq [n]$, $|I|\leq d$ and a function $g:\sgrid^I \to G$ such that for all $\veca\in\sgrid^n$, $f(\veca) = g(\veca|_I)$ where $\veca|_I$ is the projection of $\veca$ to the coordinates in $I$. Here we could allow $G$ to be any set, though in this work $G$ will denote an Abelian group. 
We say $f$ is a \emph{degree $d$ junta-sum} (or simply a \emph{$d$-junta-sum}) if there exists $d$-junta's $f_1,\ldots,f_k:\sgrid^n\to G$ such that $f = f_1 + \cdots + f_k$.\newline 
When $G=\F$ is a field and $\sgrid \subseteq \F$, then degree $d$ junta-sums are closely related to the notion of degree $d$ polynomials. In particular, every degree-$d$ polynomial is also a degree-$d$ junta-sum, and degree-$d$ junta-sums are polynomials of degree at most $(s-1)d$ where $s = |\sgrid|$. Junta-sums come up naturally when studying questions related to testing direct sums and low-degree polynomials~\cite{dinur2019direct,bogdanov2021direct,AmireddySS}.

A well-studied question about degree-$d$ polynomials is: How often can a non-zero polynomial be zero on a grid? The well-known and oft-discovered Ore-DeMillo-Lipton-Schwartz-Zippel lemma \cite{ore1922hohere,DL78,Zippel79,Schwartz80} (henceforth ODLSZ lemma) asserts that a non-zero degree-$d$ polynomial over a field $\F$ is non-zero with probability at least $s^{-d/(s-1)}$ over the uniform distribution over $\sgrid^n$. When $\mathcal{S} = \F$, the  precise bound is  $\delta(q,d) = (1-\beta/q)q^{-\alpha}$, where $\alpha$ and $\beta$ are the quotient and remainder respectively when $d$ is divided by $q-1$. The former bound immediately implies that a degree-$d$ junta-sum is non-zero with probability at least $s^{-d}$ over $\sgrid^n$ (and the claim even extends to arbitrary $\sgrid$ and Abelian groups $G$).\\ 

A natural related question then becomes --- \textit{how do these bounds change when considering natural subsets $T$ that are not grids (or more generally product sets)?} Recent work has begun to address such questions~\cite{ABSS25-SZ-Lemma, KoppartyKumarSha}. In this work, we consider the case of the balanced multislice i.e., $T = \sgrid_\mu^n$. Despite the simple nature of these questions, the answer does not seem to have been pinned down before, with the exception of the Boolean case that was resolved recently~\cite{ABSS25-SZ-Lemma}. 
 We are able to show a clean connection between $\sgrid^n$ and $\sgrid_\mu^n$ that allows us to show that these probabilities (in the worst case) differ by at most $\lambda_2(W)$ for some nearly balanced walk over the multislice. This allows us to prove the following theorem, which generalizes the work of~\cite{ABSS25-SZ-Lemma} beyond the Boolean case.\\
 
\begin{thmbox}
\begin{restatable}[{\bf Polynomial distance over multislice}]{theorem}{algebraicszlemma}\label{thm:dist-polys-intro}
    For every finite field $\F=\F_q$, if a degree $d$ polynomial $P(\x)$ is such that $P({\bf a}) \ne 0$ for some ${\bf a} \in \F^n_{\mu}$ on the balanced multislice, then $$\Pr_{{\bf b}\sim {\F^n_{\mu}}} [P({\bf b})\ne 0] \ge \delta(q,d) -\frac{1}{n^{\Omega_{q}(1)}},$$ where $\delta(q,d) = (1-\beta/q)q^{-\alpha}$, where $\alpha$ and $\beta$ are the quotient and remainder respectively when $d$ is divided by $q-1$.     
\end{restatable}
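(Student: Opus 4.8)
The plan is to bootstrap the classical ODLSZ bound on the full grid $\F^n$ to the multislice. Since $P$ is nonzero at some point of $\F^n_\mu$ it is a nonzero polynomial, so the grid ODLSZ lemma gives $\Pr_{\mathbf x \sim \F^n}[P(\mathbf x) \ne 0] \ge \delta(q,d)$. Hence it suffices to prove the transfer bound
\[
\left| \Pr_{\mathbf b \sim \F^n_\mu}[P(\mathbf b) \ne 0] \;-\; \Pr_{\mathbf x \sim \F^n}[P(\mathbf x) \ne 0] \right| \;\le\; \frac{1}{n^{\Omega_q(1)}},
\]
after which the theorem follows immediately.

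For the transfer bound I would first decompose the grid probability over composition types: writing $\rho(c) := \Pr_{\mathbf x \sim \mathrm{slice}_c}[P(\mathbf x)\neq 0]$ for a composition $c$ of $n$ into $q$ parts (so $\rho(c_0) = \Pr_{\F^n_\mu}[P \ne 0]$ for the balanced composition $c_0$), we have $\Pr_{\F^n}[P \ne 0] = \E_c[\rho(c)]$ where $c$ is the composition of a uniformly random grid point. A Chernoff bound shows $c$ lies within $O(\sqrt{n\log n})$ of $c_0$ in $\ell_\infty$ except with probability $n^{-\omega(1)}$, and since $\rho \in [0,1]$ the atypical compositions contribute at most $n^{-\omega(1)}$ to $\E_c[\rho(c)]$. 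So the transfer bound reduces to showing that $\rho$ is essentially constant on near-balanced compositions: $|\rho(c) - \rho(c_0)| \le n^{-\Omega_q(1)}$ whenever $\|c - c_0\|_\infty = O(\sqrt{n\log n})$.

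To prove this near-constancy I would connect $\mathrm{slice}_c$ with $\F^n_\mu = \mathrm{slice}_{c_0}$ through a single ``nearly balanced move'': a bipartite random walk $B$ that, from $\mathbf a \in \mathrm{slice}_c$, re-randomizes a constant fraction of the coordinates while simultaneously repairing the small composition imbalance, so that a step lands uniformly on $\mathrm{slice}_{c_0}$ at a generalized Hamming distance that is $C$-balanced for an absolute constant $C$. Passing to a suitable symmetrization of $B$ puts us in the regime where the spectral bounds of \Cref{cor:gen-bal-cor} (equivalently \Cref{thm:more-general-matrix-eigenvalue}) apply, yielding $\sigma_2(B) \le n^{-\tau}$ with $\tau = \tau(q) > 0$. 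One then feeds the indicator $f = \mathbf 1[P \ne 0]$ --- whose projection orthogonal to constants has $\ell_2$-norm at most $1$ --- into an expander-mixing-type estimate for $B$, using the $O(1)$-wise almost-independence of a single step of $B$ together with the degree-$d$ structure of $P$ to force $\Pr_{\mathrm{slice}_c}[P(\mathbf a)=0]$ and $\Pr_{\F^n_\mu}[P(\mathbf b)=0]$ to agree up to $n^{-\Omega_q(1)}$ (carefully tracking the $q$- and $d$-dependence of $\tau$ and of the constants through \Cref{cor:gen-bal-cor}). Combining the three steps gives $\Pr_{\F^n_\mu}[P\ne 0] = \Pr_{\F^n}[P\ne 0] \pm n^{-\Omega_q(1)} \ge \delta(q,d) - n^{-\Omega_q(1)}$, as required.

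I expect the last step --- turning the spectral bound on $B$ into the near-constancy of $\rho$ --- to be the main obstacle. A naive coupling between $\mathrm{slice}_c$ and $\mathrm{slice}_{c_0}$ is hopeless here: a nearly balanced move rewrites $\Theta(n)$ coordinates and so completely scrambles the value of $P$, and the zero-set indicator $\mathbf 1[P=0]$ is itself a high-degree function, so Cauchy--Schwarz against the (exponentially large) $\ell_2$-norm of the balancing indicator is far too lossy. Making this work requires genuinely exploiting that $P$ has degree $d$: one must arrange the argument so that the spectral gap acts on a space in which the relevant part of $\mathbf 1[P\ne 0]$ is low-complexity --- for instance via an analog of the ODLSZ recursion adapted so that each peeling step stays on a near-balanced multislice and spends only $\sigma_2(B) \le n^{-\tau}$ of error, with the $O(\log n)$ peeling steps accumulating to $n^{-\Omega_q(1)}$ overall.
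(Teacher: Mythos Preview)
Your proposal has a genuine gap in its central step, and the paper takes a substantially different route that avoids this gap.

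The problem is the ``transfer bound'' $|\rho(c)-\rho(c_0)|\le n^{-\Omega_q(1)}$. A spectral gap for a bipartite walk $B$ between $\mathrm{slice}_c$ and $\mathrm{slice}_{c_0}$ simply does not imply this for a bounded indicator like $f=\mathbf 1[P\ne 0]$. The bipartite expander mixing lemma controls quantities of the form $\langle g, Bh\rangle - \E[g]\E[h]$; it says nothing about $|\E[g]-\E[h]|$ when $g=f|_{\mathrm{slice}_c}$ and $h=f|_{\mathrm{slice}_{c_0}}$. Indeed, a function that is constant $1$ on one slice and constant $0$ on the other is entirely consistent with $\sigma_2(B)=0$. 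You correctly flag this step as the obstacle, but the fixes you sketch (``$O(1)$-wise almost-independence together with the degree-$d$ structure'', an ``ODLSZ recursion on near-balanced multislices'') are not concrete: $\mathbf 1[P\ne 0]$ is a high-degree function, so almost-$k$-wise independence of a single step of $B$ gives you nothing useful about it, and an inductive ODLSZ-style argument directly on multislices is essentially the hard problem you started with.

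The paper bypasses this entirely by never comparing the multislice to the full grid $\F^n$. Instead it defines a walk $W_{\ODLSZ}$ \emph{on the multislice itself}: from $\mathbf a\in\F^n_\mu$, choose random bijections $M_j:\mathbf a^{-1}(j)\to[m]$ (with $m=n/q$) and a uniform $\mathbf y\in\F^m$, and set $b_i=y_{M_{a_i}(i)}+a_i$. The point is that for any fixing of the bijections, $Q(\mathbf y):=P(\mathbf b)$ is a degree-$d$ polynomial on the \emph{small} grid $\F^m$ with $Q(\mathbf 0)=P(\mathbf a)\ne 0$, so the classical ODLSZ lemma on $\F^m$ gives $\Pr_{\mathbf b\sim W_{\ODLSZ}(\mathbf a)}[P(\mathbf b)\ne 0]\ge\delta(q,d)$ for every $\mathbf a$ in the nonzero set $U$. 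Now the expander mixing lemma, applied to $W_{\ODLSZ}$ \emph{within} the multislice and using the eigenvalue bound $\lambda_2(W_{\ODLSZ})\le n^{-\Omega_q(1)}$ from \Cref{cor:gen-bal-cor}, directly yields $|U|/|\F^n_\mu|\ge\delta(q,d)-n^{-\Omega_q(1)}$. The key idea you are missing is this random embedding of a copy of $\F^{n/q}$ inside the multislice, arranged so that (i) the restriction of $P$ stays degree $d$, (ii) the nonzero point $\mathbf a$ maps to the origin, and (iii) a uniform point of the embedded grid is exactly a step of a nearly balanced walk from $\mathbf a$.
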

\end{thmbox}

We prove this theorem in \Cref{sec:sz}. 

Note that $\delta(q,d)$ is exactly the distance of the space of degree-$d$ polynomials on the field $\F_q$ and hence the above theorem says that the distance of the space of polynomials on the balanced multislice is nearly exactly what it is in the grid $\F_q^n$.\footnote{An important subtlety here is that there are polynomials that are non-zero in the grid $\F_q^n$ but are zero at all points on the multislice. That is the reason this theorem is only stated for polynomials that are non-zero as functions on the multislice. This is analogous to similar restrictions we place on polynomials in the setting of grids (e.g., in the setting of the Boolean cube, we only consider non-zero \emph{multilinear} polynomials).} An analogous statement can also be made for junta-sums, getting a bound that almost matches the bound over grids, i.e., $1/s^d$ (see~\Cref{thm:dist-junta-sums}).

Following the proof idea of~\cite{ABSS25-SZ-Lemma}, both cases are handled by a similar proof technique that first proves a quantitatively weak bound on the probability that $f$ is non-zero\footnote{We prove these bounds by an adaptation of the standard inductive strategy used to prove the standard ODLSZ lemma. Unfortunately, we are unable to use this strategy to prove a tight bound.}(see \Cref{coro:weak-distance-balanced-slice}), and then randomly identifies a small grid inside the multislice. On each such grid, we can apply the ODLSZ lemma as a black-box to assert that if $f$ is non-zero within the randomly identified small grid, then it is non-zero with the `correct' probability (either $\delta(q,d)$ or $s^{-d}$). It suffices, therefore, to prove that $f$ is non-zero on most of the grids, which is where the main technical theorem regarding the expansion of the walk on the balanced multislice comes into play. We use our eigenvalue bounds along with the quantitatively weak bound already obtained to establish that all but an $n^{-\Omega(1)}$-fraction of the grids satisfy this property. See \Cref{subsec:proof-overview-sz-lemma} for the proof overview and \Cref{sec:sz} for a formal proof.

\paragraph*{Local Correction of Junta-Sums.} 
Our next application considers the local correction problem for junta-sums over {\em grids}. Here a (possibly randomized) corrector is given oracle access to a function $f$ that is known to be $\delta$-close (in normalized Hamming distance) to some degree-$d$ junta-sum $P$, and also given a point $\veca\in \sgrid^n$ and needs to output $P(\veca)$ (with high probability) while making few oracle queries to $f$.\\
In the list-correction setting, the amount of error $\delta$ may be too high for $P$ to be defined uniquely by $f$ and $\delta$, but it may be known a priori that the list size is bounded. In the local list-correction problem, the goal for the corrector is to make a few queries to $f$ to produce several ``oracle'' algorithms, such that for every degree $d$ junta-sum $P$ that is $\delta$-close to $f$, there is an algorithm with oracle access to $f$ that computes $P$. We refer the reader to~\Cref{sec:prelims} for more formal definitions.

Local correction algorithms for low-degree polynomials have played a central role in complexity theory, for example \cite{GoldreichL, STV-list-decoding}. While most of the early works like \cite{gkz-list-decoding,BhowmickL} considered the setting where $\sgrid = G = \F$, some recent works have considered the setting of $\sgrid = \{0,1\}$ and general abelian $G$ such as \cite{ABPSS25} (Note that when $|\sgrid|=2$, then degree-$d$ polynomials are the same as degree-$d$ junta-sums.) 

For general $\sgrid$ and Abelian group $G$, even the list-decoding radius was not completely understood till this work. We prove that for $\delta = |\sgrid|^{-d}-\epsilon$ there are most $\bigO_\varepsilon(1)$ degree $d$ junta-sums $P$ that are $\delta$-close to any given function $f$. (This bound is tight in that for $\delta = |\sgrid|^{-d}$ the number of junta-sums grows with $n$.) This motivates the corresponding local list-correction problem, which we solve tightly in this work. We state an informal version below and point to~\Cref{thm:local-list-correction} for the more precise version.\\

\begin{thmbox}
\begin{theorem}[{\bf Local list-correction of junta-sums (Informal)}]\label{thm:local-list-correction-intro}
    For every set $\sgrid$, every Abelian group $G$, every integer $d$ and $\varepsilon > 0$, there exists an $L = L(\varepsilon,d,\sgrid)$ such that the following holds.\\
    
    There exists an algorithm $\mathcal{A}$ that on oracle access to a function $f:\sgrid^n\to G$, outputs $L$ oracle algorithms $\psi_1,\ldots,\psi_L$ such that for every degree $d$ junta-sum $P:\sgrid^n\to G$ that is $(1/s^d-\varepsilon)$-close to $f$, there exists $i \in [L]$ such that $\psi_i^{f}(\cdot)$ computes $P$ (with high probability for every input).\\
    The query complexity of $\mathcal{A}$ and $\psi_1,\ldots,\psi_L$ is $\mathrm{poly}(\log n)$.     
\end{theorem}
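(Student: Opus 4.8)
The plan is to follow the local list-correction strategy developed for the Boolean case in \cite{ABPSS25}, replacing its Boolean-specific ingredients by the multislice machinery of this paper. Three global facts are in play throughout: (i) the combinatorial bound stated above, that at most $L = L(\varepsilon,d,\sgrid)$ degree-$d$ junta-sums are $(s^{-d}-\varepsilon)$-close to any $f$, which caps the list size; (ii) the ODLSZ-type distance bound for junta-sums on multislices (the junta-sum analogue of \Cref{thm:dist-polys-intro}, i.e.\ \Cref{thm:dist-junta-sums}), which says the junta-sum code restricted to a balanced multislice still has distance $s^{-d}-o(1)$; and (iii) the eigenvalue bound \Cref{cor:gen-bal-cor}. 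I would also record one elementary ingredient: a $(2^{d+1}-1)$-term local identity for a $d$-junta-sum $P=\sum_j g_j$. Fixing any $d+1$ coordinates $i_0,\dots,i_d$ and alternative values $\sigma_{i_t}\neq a_{i_t}$, each $g_j$ is annihilated by $\mathrm{id}-R_{i_t}$ (where $R_i$ replaces coordinate $i$ by $\sigma_i$) for whichever $i_t$ it does not depend on, so $\prod_{t=0}^{d}(\mathrm{id}-R_{i_t})\,P\equiv 0$; expanding gives $P(a)=\sum_{\emptyset\neq T\subseteq\{i_0,\dots,i_d\}}(-1)^{|T|+1}P(a^{(T)})$ with each $a^{(T)}$ within Hamming distance $d+1$ of $a$. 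This is the exact local characterization of junta-sums, but it is not yet robust, since the $a^{(T)}$ all lie in a tiny ball a worst-case $f$ can corrupt entirely.

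To obtain a robust local decoder I would work at a random ``scale-down''. First, to neutralize a worst-case query point $a$, pad the instance with $O_{d,\sgrid}(1)$ blocks of dummy coordinates (on which no junta-sum depends and which $f$ ignores) so that $a$ becomes a \emph{balanced} point; this does not change any distances. Now apply a uniformly random permutation $\pi\in\Sym_n$ and work with $g:=f\circ\pi^{-1}$, which is as close to the junta-sum $Q:=P\circ\pi^{-1}$ as $f$ is to $P$ and satisfies $Q(\pi(a))=P(a)$; the query point becomes a uniformly random point $b:=\pi(a)$ of the balanced multislice. Next draw a random small sub-multislice $C$ through $b$ by resampling a random set $R$ of $m=\Theta(\log\log n)$ coordinates conditioned on preserving the profile, so $|C|=s^{\Theta(m)}=\mathrm{poly}(\log n)$. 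Two properties are needed: by \Cref{thm:dist-junta-sums}, $Q$ restricts to a $d$-junta-sum on $C$ whose code has distance $s^{-d}-o(1)$; and—the crux—composing the passage grid $\to$ balanced multislice (controlled as in the proof of \Cref{thm:dist-polys-intro}) with the nearly balanced, symmetry-respecting walk $b\mapsto C$ and invoking \Cref{cor:gen-bal-cor} (via \Cref{thm:more-general-matrix-eigenvalue}), for all but an $n^{-\Omega(1)}$ fraction of choices the fraction of $C$ on which $g\neq Q$ is within $o(1)$ of $\mathrm{dist}(f,P)\le s^{-d}-\varepsilon$. On a good $C$ we may read all of $C$ (the only query-intensive step, costing $\mathrm{poly}(\log n)$ queries) and, by fact (i) applied at scale $m$, recover $\le L$ candidate restricted junta-sums, one of which evaluated at $b$ is $Q(b)=P(a)$; repeating with fresh randomness and voting makes the good event overwhelmingly likely.

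To assemble the list-corrector, $\mathcal{A}$ fixes one such small random subdomain $C_0$, reads $f|_{C_0}$, list-decodes it to candidates $R^{(1)},\dots,R^{(L)}$, and outputs $\psi_1,\dots,\psi_L$, where $\psi_i$ carries the $\mathrm{poly}(\log n)$-bit description of $R^{(i)}$ (equivalently $O_d(|C_0|^{d})$ group elements) as advice. By \Cref{cor:gen-bal-cor} again, every $P$ in the global list has $P|_{C_0}$ among the $R^{(i)}$, so $P$ is carried by some $\psi_i$. On input $a$, $\psi_i$ runs the Step-2 procedure to produce $\le L$ local candidates and keeps the unique one consistent with its advice; this disambiguation is done, as in \cite{ABPSS25}, by routing the per-query subdomain to meet $C_0$ (or by additionally storing $P^{(i)}$'s values at a few extra fixed random points) and using the ODLSZ bound on pairwise differences to argue that distinct global junta-sums remain distinct there with high probability. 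The total query complexity—reading $C_0$ once and one small subdomain per query with $O(1)$ repetitions—is $\mathrm{poly}(\log n)$.

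The step I expect to be the real obstacle is the sampling/pseudorandomness claim in Step 2: that the chain ``global density $\to$ density in the balanced multislice $\to$ density in a random small sub-multislice'' preserves the corruption density up to $o(1)$ against a worst-case corrupted set. This is exactly what \Cref{cor:gen-bal-cor} is designed to feed, but two points need genuine care. First, the ``resample $m$ coordinates preserving the profile'' operation must be written as a $C$-balanced, symmetry-respecting walk with polynomially bounded Frobenius norm for which the quantitative hypotheses of \Cref{thm:more-general-matrix-eigenvalue} hold even though $m$ is only $\Theta(\log\log n)$, and then a second-moment argument on top of the second-eigenvalue bound is what actually yields the $n^{-\Omega(1)}$ failure probability. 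Second, verifying that the dummy-coordinate padding genuinely reduces arbitrary query points to balanced ones without disturbing distances or the list, and that it is compatible with the disambiguation bookkeeping, is the most delicate part of making the argument airtight.
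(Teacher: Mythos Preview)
Your high-level picture—list-decode on a small random subdomain, then per query build another small subdomain that ``meets'' the first and disambiguate—is indeed the paper's template. But the concrete instantiation you propose breaks at exactly the point you flag, and not in a way that can be patched within your framework.

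The fatal issue is your random walk. ``Resample a random set $R$ of $m=\Theta(\log\log n)$ coordinates of $b$ while preserving the profile'' is a walk on the $n$-dimensional balanced multislice whose generalized Hamming distance matrix has diagonal entries $\approx n/s$ and off-diagonal entries $\approx m/s=O(\log\log n)$. This is as far from $C$-balanced as possible: \Cref{cor:gen-bal-cor} requires every entry to be $n/s^2\pm O(\sqrt{n\log n})$, and \Cref{thm:more-general-matrix-eigenvalue} requires polynomially bounded Frobenius norm, which this walk also violates (the walk is supported on $s^{O(m)}$ neighbors per vertex out of $\binom{n}{n/s,\dots,n/s}$, so each nonzero entry is huge). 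No amount of second-moment manipulation on top will produce the sampler you need from this walk. The padding-and-permute reduction to the multislice is also a red herring: the problem lives on the full grid $\sgrid^n$, and the paper never leaves it.

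The paper's architecture is different in two substantive ways. First, it decouples the list-production step from the per-point correction: the approximating oracles $\Psi_j$ make only $O_\varepsilon(1)$ queries (they read constant-dimensional random \emph{subgrids} $\mathsf{C}\subset\mathsf{C}'\subset\sgrid^n$, not sub-multislices), and the $\widetilde{O}(\log n)^d$ query cost comes entirely from a separate unique-decoding local corrector applied afterward to each $\Psi_j$. Second—and this is where the eigenvalue bound actually enters—the multislice shows up not in $\sgrid^n$ but in the \emph{constant}-dimensional ambient space $\sgrid^{sk}$: the inclusion $\mathsf{C}\subset\mathsf{C}'$ is realized by a random $s$-to-$1$ map $\tau:[sk]\to[k]$, and one must show that for distinct $R_i,R_j$ list-decoded on $\mathsf{C}'$, the restriction $R_i|_{\mathsf{C}}\ne R_j|_{\mathsf{C}}$ with high probability. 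This reduces to a sampling statement on the balanced multislice of $\sgrid^{sk}$ (\Cref{lemma:sampler-balanced-slice}), and the relevant walk there \emph{is} a convex combination of nearly balanced walks on a space of dimension $O_\varepsilon(1)$, so \Cref{cor:gen-bal-cor} applies and gives $\lambda_2\le k^{-\Omega_s(1)}$. Choosing $k$ a large enough constant in $\varepsilon$ then suffices. Your proposal tries to use the expander bound on the wrong space at the wrong scale.
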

\end{thmbox}
This theorem is formalized as \Cref{thm:local-list-correction} with more explicit bounds on the error probability and query complexity, and is proved in \Cref{sec:llc}. 

This theorem generalizes a theorem of Amireddy, Behera, Paraashar, Srinivasan, and Sudan (\cite[Theorem 1.3.4]{ABPSS25}) who solved the corresponding problem over the Boolean cube $\Boo^{n}$. (Note that in the Boolean setting, junta-degree is the same as algebraic degree, and their result is thus expressed in terms of the latter phrase.) 
Our extension follows the same sequence of steps as employed in \cite{ABPSS25}. Their work ultimately ends up using the expansion properties of Boolean multislices, which, as we've noted earlier, is well-understood. Extending their work to general grids requires a number of changes that we elaborate on in  \Cref{subsec:proof-overview-llc}, with the most significant change being the use of \Cref{cor:gen-bal-cor} instead of the expansion results on the Boolean slice.

\subsection{Techniques and Proof Overview}

In this section, we first review known methods for bounding the singular values of walks that respect symmetries and explain where there is a gap in knowledge. We then show how we overcome these challenges by overviewing the proof of our main theorem \Cref{thm:more-general-matrix-eigenvalue} in \Cref{subsec:proof-overview-spectral}. Next, we give an overview of the proof of the ODLSZ theorem for multislices, \Cref{thm:dist-polys-intro}, in \Cref{subsec:proof-overview-sz-lemma}. Finally, we discuss the proof of the local correction theorem for grids, \Cref{thm:local-list-correction-intro} in \Cref{subsec:proof-overview-llc}. 

\subsubsection{Prior Approaches and Obstructions}

We describe some prior cases where random walk matrices respecting symmetries (i.e., the first condition of \Cref{thm:more-general-matrix-eigenvalue}) have been studied and explain the special properties in play there. 

\paragraph*{Boolean Hypercube and Cayley graphs}A broad class of examples bounding eigenvalues of highly symmetric graphs are the bounds on the eigenvalues of Cayley graphs over abelian groups - this captures random walks on the Boolean hypercube and many more general settings. Here it is well known that the random walk matrix is {\em diagonalizable}\footnote{A matrix $M \in \R^{N \times N}$ is said to be diagonalizable if there is a unitary matrix $U \in \R^{N \times N}$ such that $UMU^T$ is a diagonal matrix.} and the eigenvectors of the random walk matrix depend only on the group (and not the set of generators). This makes it possible to determine the entire eigenspectrum for many basic groups using Fourier analysis. We note that Cayley graphs over some non-abelian groups have been studied, but general results are mostly lacking. In these cases, the random walk matrix is typically not diagonalizable, but can be made block diagonal, using the representation theory of the underlying group. This is a complex tool, and many basic questions are unanswered as we elaborate below.

\paragraph*{Boolean slices}One well-studied setting that happens to be the special case of $s = 2$ of our problem is the setting of Boolean slices. Here $\mathcal{S} = \Boo$ and $\sgrid^n_\mu$ is the balanced Boolean slice (all points in $\Boo^{n}$ of Hamming weight exactly $n/2$). 
This setting has particular relevance to the analysis of Boolean functions and combinatorics;  see, e.g. \cite{Delsarte, Filmus2014AnOB, Filmus-JuntaSlices}. The random walk matrices in this setting lie in the \emph{Johnson scheme}, which is an algebra of symmetric matrices that commute with one another. This implies that all such matrices can be diagonalized \emph{simultaneously}, i.e., there exists one unitary matrix $U$ such that for every random walk matrix $M$ on the Boolean slice that respects symmetries, we have that $UMU^T$ is diagonal. This implies that all such matrices $M$ have the same eigenvectors. The works \cite{Filmus2014AnOB,MuraliSrinivasan-SymmetricChain} gave explicit descriptions of the common eigenspaces. This can be quite useful when analyzing the spectrum of such matrices and in a recent example \cite{ABSS25-SZ-Lemma} used this description to show that a particular random walk matrix on the balanced Boolean slice is a good spectral expander (see \cite[Lemma 3.2]{ABSS25-SZ-Lemma}).

\subsubsection{Spectral Expansion of Multislice Walks}\label{subsec:proof-overview-spectral}

Turning to our setting, our matrix $M$ is not {\em diagonalizable} and so the techniques from the analysis of Cayley graphs on abelian groups as well as the random walk on the Boolean slice, do not work in this setting. We have to resort to the use of representation theory, but here as we alluded to earlier, our understanding is not as complete. In what follows, we explain what representation theory implies for our setting and how we build on it. 

\paragraph*{Summary of known facts from representation theory}The fact that our matrix $M$ respects symmetries allows us to invoke results from the representation theory of the symmetric group $\Sym_n$. We cover these results in detail in \Cref{prop:standard-rep-theory} and \Cref{thm:instantiation-rep-thy-sym} (Parts (1) and (2)). Essentially, we can use representation theory to show that our matrix $M$ can be block-diagonalized with relatively few blocks.

Specifically\footnote{This conclusion only requires that $M$ respects symmetries (see \Cref{thm:more-general-matrix-eigenvalue}).} there is an orthonormal matrix $U = U_{\sgrid,n}$ such that for every $M$ respecting symmetries the matrix $UMU^{T}$ is block diagonal with blocks $M_0,M_1,\ldots,M_t$ where $t$ and the ``shape'' of the blocks is known from  standard representation theory. In particular, $M_0 = [1]$ is just a $1\times 1$ matrix that contributes the top singular value (which is $1$), and $M_1,\ldots,M_t$ determine $\sigma_{2}(M) < 1$.\newline

\noindent
Now let us understand the structure of $M_{i}$'s in more detail. Each block $M_i$ is a Kronecker/tensor product of a ``small'' matrix $A_i \in \R^{m(i) \times m(i)}$ with a somewhat larger identity matrix i.e.,
\begin{align*}
    M_i = A_i \otimes \mathrm{Id}_{k(i)}, && \text{where } \; \mathrm{Id}_{k} \text{ is the $k \times k$ identity matrix}.
\end{align*}
See \Cref{fig:block-matrix-M} for an informal pictorial description. Both the quantities $m(i)$ and $k(i)$ are known from the representation theory of $\mathrm{Sym}_{n}$ (and in particular only depend on $\sgrid$ and $n$ and are independent of the particular matrix $M$). However, the small matrices $A_i$'s do depend on $M$, and more importantly, the matrix $U$ is not too well-understood. (In particular, we need to understand the effect of $U$ on $A_i$ and this is not clear.) In particular, if we were to arrange $i$ such that $m(i)$'s are non-decreasing, then $k(i)$'s are also non-decreasing. Intuition from Fourier analysis in the Abelian world would suggest that $\lambda_2$ comes from $M_1 = A_1 \otimes \mathrm{Id}_{k(1)}$ but, as far as we are aware, even this is {\em not known}.

\begin{figure}[!h]
    \centering

\tikzset{every picture/.style={line width=0.75pt}} 

\begin{tikzpicture}[x=0.75pt,y=0.75pt,yscale=-1,xscale=1]

\draw    (158.4,29.73) -- (158.4,219.47) ;
\draw    (397,29.08) -- (397,218.08) ;
\draw  [fill={rgb, 255:red, 155; green, 155; blue, 155 }  ,fill opacity=0.25 ] (177,48.33) -- (216.67,48.33) -- (216.67,88) -- (177,88) -- cycle ;
\draw  [fill={rgb, 255:red, 155; green, 155; blue, 155 }  ,fill opacity=0.25 ] (257.13,109.27) -- (296.8,109.27) -- (296.8,148.93) -- (257.13,148.93) -- cycle ;
\draw  [fill={rgb, 255:red, 155; green, 155; blue, 155 }  ,fill opacity=0.25 ] (296.8,148.93) -- (325.8,148.93) -- (325.8,177.93) -- (296.8,177.93) -- cycle ;
\draw  [fill={rgb, 255:red, 155; green, 155; blue, 155 }  ,fill opacity=0.25 ] (367.75,188.83) -- (397,188.83) -- (397,218.08) -- (367.75,218.08) -- cycle ;
\draw  [fill={rgb, 255:red, 155; green, 155; blue, 155 }  ,fill opacity=0.25 ] (158.4,29.73) -- (177,29.73) -- (177,48.33) -- (158.4,48.33) -- cycle ;
\draw  [color={rgb, 255:red, 155; green, 155; blue, 155 }  ,draw opacity=1 ] (135.98,51.89) .. controls (131.31,51.89) and (128.98,54.22) .. (128.98,58.89) -- (128.98,90.91) .. controls (128.98,97.58) and (126.65,100.91) .. (121.98,100.91) .. controls (126.65,100.91) and (128.98,104.24) .. (128.98,110.91)(128.98,107.91) -- (128.98,142.93) .. controls (128.98,147.6) and (131.31,149.93) .. (135.98,149.93) ;
\draw  [color={rgb, 255:red, 155; green, 155; blue, 155 }  ,draw opacity=1 ] (135.53,151.33) .. controls (130.86,151.33) and (128.53,153.66) .. (128.53,158.33) -- (128.53,175.63) .. controls (128.53,182.3) and (126.2,185.63) .. (121.53,185.63) .. controls (126.2,185.63) and (128.53,188.96) .. (128.53,195.63)(128.53,192.63) -- (128.53,212.93) .. controls (128.53,217.6) and (130.86,219.93) .. (135.53,219.93) ;
\draw [color={rgb, 255:red, 128; green, 128; blue, 128 }  ,draw opacity=1 ] [dash pattern={on 0.84pt off 2.51pt}]  (298.17,127.5) .. controls (337.77,97.8) and (412.98,137.61) .. (453.29,109.3) ;
\draw [shift={(454.5,108.42)}, rotate = 143.13] [color={rgb, 255:red, 128; green, 128; blue, 128 }  ,draw opacity=1 ][line width=0.75]    (10.93,-3.29) .. controls (6.95,-1.4) and (3.31,-0.3) .. (0,0) .. controls (3.31,0.3) and (6.95,1.4) .. (10.93,3.29)   ;
\draw  [color={rgb, 255:red, 144; green, 19; blue, 254 }  ,draw opacity=1 ][fill={rgb, 255:red, 218; green, 188; blue, 246 }  ,fill opacity=0.25 ] (462.13,68.92) -- (523.44,68.92) -- (523.44,128.17) -- (462.13,128.17) -- cycle ;
\draw  [color={rgb, 255:red, 144; green, 19; blue, 254 }  ,draw opacity=1 ][fill={rgb, 255:red, 218; green, 188; blue, 246 }  ,fill opacity=0.25 ] (565.23,87.08) -- (590.33,87.08) -- (590.33,108.58) -- (565.23,108.58) -- cycle ;
\draw  [color={rgb, 255:red, 144; green, 19; blue, 254 }  ,draw opacity=1 ][fill={rgb, 255:red, 218; green, 188; blue, 246 }  ,fill opacity=0.25 ] (632.4,76.93) -- (672.07,76.93) -- (672.07,116.6) -- (632.4,116.6) -- cycle ;
\draw  [color={rgb, 255:red, 144; green, 19; blue, 254 }  ,draw opacity=1 ] (565.67,117.5) .. controls (565.67,121.04) and (567.44,122.81) .. (570.97,122.81) -- (570.97,122.81) .. controls (576.03,122.81) and (578.56,124.58) .. (578.56,128.11) .. controls (578.56,124.58) and (581.09,122.81) .. (586.14,122.81)(583.86,122.81) -- (586.14,122.81) .. controls (589.67,122.81) and (591.44,121.04) .. (591.44,117.5) ;
\draw  [color={rgb, 255:red, 144; green, 19; blue, 254 }  ,draw opacity=1 ] (633.33,127.83) .. controls (633.33,132.5) and (635.66,134.83) .. (640.33,134.83) -- (643.22,134.83) .. controls (649.89,134.83) and (653.22,137.16) .. (653.22,141.83) .. controls (653.22,137.16) and (656.55,134.83) .. (663.22,134.83)(660.22,134.83) -- (666.11,134.83) .. controls (670.78,134.83) and (673.11,132.5) .. (673.11,127.83) ;
\draw    (158.4,219.47) -- (184.13,219.47) ;
\draw    (373.73,29.08) -- (397,29.08) ;
\draw   (159,243) .. controls (159,247.67) and (161.33,250) .. (166,250) -- (266.83,250) .. controls (273.5,250) and (276.83,252.33) .. (276.83,257) .. controls (276.83,252.33) and (280.16,250) .. (286.83,250)(283.83,250) -- (387.67,250) .. controls (392.34,250) and (394.67,247.67) .. (394.67,243) ;

\draw (162.53,34.27) node [anchor=north west][inner sep=0.75pt]  [font=\scriptsize]  {$1$};
\draw (188.8,61.87) node [anchor=north west][inner sep=0.75pt]  [font=\scriptsize]  {$M_{1}$};
\draw (374.8,198.53) node [anchor=north west][inner sep=0.75pt]  [font=\scriptsize]  {$M_{t}$};
\draw (270.4,122.13) node [anchor=north west][inner sep=0.75pt]  [font=\scriptsize]  {$M_{i}$};
\draw (33.13,94.43) node [anchor=north west][inner sep=0.75pt]  [font=\scriptsize,color={rgb, 255:red, 74; green, 144; blue, 226 }  ,opacity=1 ]  {$k( i) \ \gg \ \| M\| _{F} \ $};
\draw (32.13,177.93) node [anchor=north west][inner sep=0.75pt]  [font=\scriptsize,color={rgb, 255:red, 74; green, 144; blue, 226 }  ,opacity=1 ]  {$ \begin{array}{l}
k( i) \ \approx \ \| M\| _{F} \ \\
m( i) \ =\ \mathcal{O}( 1)
\end{array}$};
\draw (485.69,89.87) node [anchor=north west][inner sep=0.75pt]  [font=\scriptsize,color={rgb, 255:red, 144; green, 19; blue, 254 }  ,opacity=1 ]  {$M_{i}$};
\draw (571.47,92.96) node [anchor=north west][inner sep=0.75pt]  [font=\scriptsize,color={rgb, 255:red, 144; green, 19; blue, 254 }  ,opacity=1 ]  {$A_{i}$};
\draw (645.67,89.8) node [anchor=north west][inner sep=0.75pt]  [font=\scriptsize,color={rgb, 255:red, 144; green, 19; blue, 254 }  ,opacity=1 ]  {$\mathrm{Id}$};
\draw (568.78,136.13) node [anchor=north west][inner sep=0.75pt]  [font=\scriptsize,color={rgb, 255:red, 144; green, 19; blue, 254 }  ,opacity=1 ]  {$m( i)$};
\draw (644.78,150.13) node [anchor=north west][inner sep=0.75pt]  [font=\scriptsize,color={rgb, 255:red, 144; green, 19; blue, 254 }  ,opacity=1 ]  {$k( i)$};
\draw (605.44,92.47) node [anchor=north west][inner sep=0.75pt]  [font=\footnotesize,color={rgb, 255:red, 144; green, 19; blue, 254 }  ,opacity=1 ]  {$\otimes $};
\draw (533.78,90.47) node [anchor=north west][inner sep=0.75pt]  [font=\footnotesize]  {$=$};
\draw (245,269.4) node [anchor=north west][inner sep=0.75pt]    {$U\ M\ U^{T}$};
\draw (224.67,92.9) node [anchor=north west][inner sep=0.75pt]    {$\ddots $};
\draw (337.67,179.9) node [anchor=north west][inner sep=0.75pt]  [font=\footnotesize]  {$\ddots $};

\end{tikzpicture}
    
    \caption{An informal visualization for the block diagonalization of our matrix $M$}
    \label{fig:block-matrix-M}
\end{figure}
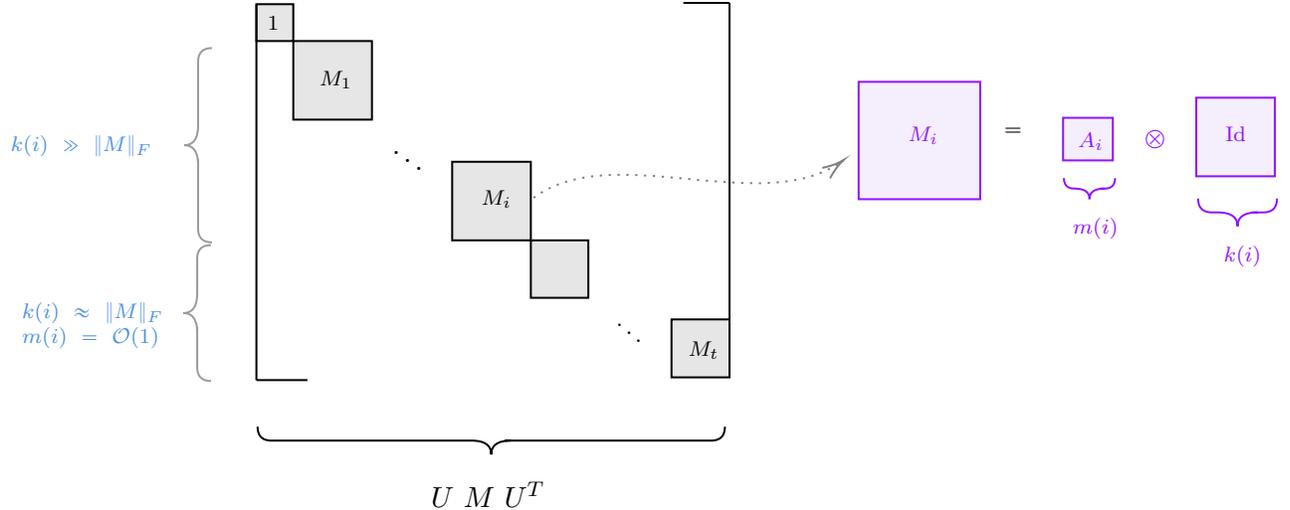

\paragraph*{Our analysis}
Given that the $A_i$'s are not determined by only $\sgrid$ and $n$, and $U$ is not explicitly understood, we need to find some crude ways to bound the singular values of $M$. We give such an analysis in \Cref{sec:eigenvalue} and summarize the essence here. We start with the following informal observation.\\

\begin{observation}\label{obs:informal-proof-overview}
If for some $i \in [t]$, the quantity $k(i)$ is a polynomial larger than the Frobenius norm of $M$, then every singular value corresponding to the block $M_{i}$ must necessarily be small. \\
\end{observation}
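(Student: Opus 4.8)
The plan is to exploit the rigid tensor structure $M_i = A_i \otimes \mathrm{Id}_{k(i)}$ together with the fact that the block-diagonalizing matrix $U = U_{\sgrid,n}$ is orthonormal, so that conjugation by $U$ preserves the Frobenius norm and hence distributes $\|M\|_F^2$ among the blocks. Since the singular values of a block diagonal matrix are the union (with multiplicity) of the singular values of its blocks, it suffices to control the largest singular value of $M_i$, and the tensor structure reduces this to controlling the largest singular value of the small matrix $A_i$, which in turn is at most $\|A_i\|_F$.

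\textbf{Key steps.} First, because $U$ is orthonormal, $\|M\|_F = \|UMU^T\|_F$, and reading off squared entries block by block in the block diagonal matrix $UMU^T$ (with blocks $M_0 = [1], M_1, \ldots, M_t$) gives $\|M\|_F^2 = 1 + \sum_{j=1}^t \|M_j\|_F^2 \ge \|M_i\|_F^2$ for every $i \in [t]$. Second, the Frobenius norm of a Kronecker product factorizes, so $\|M_i\|_F^2 = \|A_i \otimes \mathrm{Id}_{k(i)}\|_F^2 = k(i)\cdot \|A_i\|_F^2$; combining with the first step yields $\|A_i\|_F^2 \le \|M\|_F^2 / k(i)$. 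Third, from the singular value decomposition of $A_i$ one sees that the singular values of $A_i \otimes \mathrm{Id}_{k(i)}$ are exactly those of $A_i$, each repeated $k(i)$ times, so every singular value associated to the block $M_i$ is at most $\sigma_{\max}(A_i) \le \|A_i\|_F \le \|M\|_F / \sqrt{k(i)}$. Consequently, if $k(i)$ exceeds $\|M\|_F^2$ by a polynomial factor, say $k(i) \ge n^{2\tau}\|M\|_F^2$, then every singular value coming from $M_i$ is at most $n^{-\tau} = o_n(1)$, as claimed.

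\textbf{Main obstacle.} For this observation alone the difficulty is purely bookkeeping rather than mathematical: one must be careful that the representation-theoretic decomposition genuinely produces an \emph{orthonormal} $U$ (so that the Frobenius norm is conserved) and that the multiplicity appearing in the tensor factor really is the quantity $k(i)$ quoted from the representation theory of $\Sym_n$; the three norm estimates themselves are elementary. The real work of the paper lies elsewhere — namely in handling the complementary regime, where $k(i)$ is \emph{not} polynomially larger than $\|M\|_F$, in which case one must instead use the almost $k$-wise independence hypothesis (and the explicit shape of the $A_i$) to argue directly that the small matrices $A_i$ have small singular values — but that argument is outside the scope of this single observation.
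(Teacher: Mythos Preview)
Your proposal is correct and matches the paper's own justification essentially verbatim: the paper writes that ``the Frobenius norm is lower bounded by the sum of the squares of the singular values of $M_i$, each of which repeats at least $k(i)$ times, so each singular value must be small,'' and its formal version (Lemma~\ref{lemma:large-block-tiny-eigenvalue}) isolates a single singular value $\beta$ via $\dim(V_{\lambda,1})\cdot |\beta|^2 \le \|M\|_F^2$. Your route through $\|A_i\|_F$ is a harmless cosmetic variant of the same inequality, and your comments about orthonormality of $U$ and the location of the real difficulty (the small-$k(i)$ blocks) are accurate.
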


The observation follows from the fact that the Frobenius norm is lower bounded by the sum of the squares of the singular values of $M_i$, each of which repeats at least $k(i)$ times, so each singular value must be small.

In the context of our goal, the observation above immediately allows us to eliminate all the large blocks in the block diagonalization of $M$ and turns the focus to the small blocks --- where $k(i)$ is bounded by some polynomial in $n$. Here, standard facts of representations of $\mathrm{Sym}_{n}$ imply that the corresponding matrix $A_i$ is of constant size (dependent on $\sgrid$ and the exponent of $n$ in $k(i)$ but not on $n$). However, this does not immediately translate to bounds on the singular values, since these depend on the actual matrix $A_i$ and its entries. Ideally, we would have liked to get our hands on the singular vectors of $M$ corresponding to singular vectors of the $A_i$'s (after a change of basis according to $U$), but such vectors were not known (and we do not get such vectors either). 
Fortunately, a collection of vectors that span the space corresponding to the singular vectors of $A_i$'s is known. In particular, we use a description given by Dafni, Filmus, Lifshitz, Lindzey, and Vinyals \cite{complexity-symmetric-group} --- see \Cref{defn:chi-T}. Our main contribution adds two observations about this collection of vectors, called ``special vectors'' below.

Our main contribution here does get something almost as good, for our purposes (i.e., to show that each $M_i$  has top singular value  $o_n(1)$):

\begin{quote}
\noindent \textit{We observe that the special vectors given in \Cref{defn:chi-T} are ``weakly orthogonal'' in the sense that they have $\Omega(1)$ volume (in the space they span). We further observe that these functions are junta-like and so shrink significantly when acted on by $\varepsilon$-almost $k$-wise independent matrices for sufficiently large constant $k$. (See~\Cref{thm:instantiation-rep-thy-sym}, Parts 3(c) and 3(d)).}
\end{quote}
More specifically, the work of \cite{complexity-symmetric-group} yields $\dim(A_i) = m(i)$ many vectors (see~\Cref{subsec:alt-special-vectors}) of $M$ that are supported on coordinates of one of the blocks of $M_i$ after transforming the basis according to $U$. We show that while these vectors do not form an orthogonal basis, they are sufficiently divergent to ensure their determinantal volume is large (see~\Cref{lem:special-vecs}). Thus, bounding the length of the vectors obtained by applying the linear map $A_i$ by $o_n(1)$ suffices to bound the spectral norm of $A_i$ and hence also $M_i$. Details of this part may be found in \Cref{subsec:alt-special-vectors}. We give some more insight in the next paragraph.\\

\noindent
To show our singular value bounds, we use the fact that the vectors in the basis correspond to $\bigO(1)$-junta's. Specifically, note that a vector that $M$ acts on can be viewed as a function $f$ from $\sgrid^n_\mu$ to $\mathbb{R}$, which can in turn be viewed as a partial function from $\sgrid^n$ to $\mathbb{R}$. We show that this function depends on only $\bigO(1)$-coordinates of the input vector. (See \Cref{subsec:alt-special-vectors}.) This property now combines nicely with our third condition in \Cref{thm:more-general-matrix-eigenvalue} which asserts that after multiplication by $M$ any vector $f$ looks essentially random when projected on $\bigO(1)$-coordinates and so has little correlation left with $f$ --- this immediately translates into an upper bound on the singular value of $M$ corresponding to coordinates in $M_i$, and yields a proof of \Cref{thm:more-general-matrix-eigenvalue}. 

\paragraph*{Why does our \Cref{cor:gen-bal-cor} hold only for nearly-balanced walks?}The reason is related to \Cref{obs:informal-proof-overview}. We note that the Frobenius norm condition is not completely natural, and indeed the natural matrices in our applications do not satisfy this condition (and we have to find workarounds). The Frobenius norm restriction is satisfied by nearly-balanced walks as considered in \Cref{cor:gen-bal-cor}, and indeed it is one of the reasons why \Cref{cor:gen-bal-cor} is restricted to such nearly balanced walks.

\subsubsection{Distance Lemma over Balanced Multislice}\label{subsec:proof-overview-sz-lemma}
In this subsection, we discuss the proof overview for \Cref{thm:dist-polys-intro}. The strategy is a generalization of the proof for \cite[Lemma 3.2]{ABSS25-SZ-Lemma}. The idea is to find a random copy of $\sgrid^{n/s}$ inside the balanced multislice $\sgrid^{n}_{\mu}$ such that it is a good \emph{sampler} for $\sgrid^{n}_{\mu}$, i.e. if we choose points from this subgrid $\sgrid^{n/s}$ at random, then the corresponding points in $\sgrid^{n}_{\mu}$ behave like random samples. As we explain now, this guarantee essentially allows us to move from balanced multislice to subgrid, where we have a complete understanding of distance. For every non-zero $d$-junta-sum $P: \sgrid^{n}_{\mu} \to G$, we choose a random copy of $\sgrid^{n/s}$ inside $\sgrid^{n}_{\mu}$ and restrict $P$ to this copy. With the sampling guarantee, we can argue that the restricted $d$-junta-sum is also non-zero on the subgrid $\sgrid^{n/s}$, and we get the claimed bound by applying \Cref{clm:dist-junta-sums} on this restricted polynomial. Next, we explain the process of finding a random copy of the $n/s$-dimensional subgrid inside the balanced multislice.\\

The key step in our proof is to show that we can find a random copy of $\sgrid^{n/s}$ inside $\sgrid^{n}_{\mu}$, which is a sampler for $\sgrid^{n}_{\mu}$. We do it by randomly grouping the coordinates $x_{1},\ldots,x_{n}$ into $n/s$ buckets of size $s$ and in each bucket, we randomly assign distinct values to the $s$ coordinates. We prove that for two random points in $\sgrid^{n/s}$, their corresponding points in the balanced multislice $\sgrid^{n}_{\mu}$ are almost pairwise independent. We show this via the second moment method and the expander mixing lemma. We use our main theorem \Cref{thm:more-general-matrix-eigenvalue} to show that the random walk on $\sgrid^{n}_{\mu}$ arising from the above random process has good spectral expansion, making the expander mixing lemma applicable in this context.

\subsubsection{Local List Correction for Junta-Sums}\label{subsec:proof-overview-llc}
Our local list corrector (see \Cref{thm:local-list-correction}) is a generalization of \cite[Theorem 1.3.4]{ABPSS25} to $d$-junta-sums and arbitrary grids $\sgrid^{n}$ (instead of degree-$d$ \emph{polynomials} and \emph{Boolean} cube). We do not dwell on the algorithm here, but only highlight and discuss the key technical difference in our work and the previous work of \cite{ABPSS25}. We request the reader to please refer to \Cref{sec:llc} for more details on the algorithm.\\

An important step of our local list corrector involves a random restriction of $\sgrid^{sk}$ to a subgrid $\sgrid^{k}$, as follows: We randomly group the $sk$ coordinates into $k$ groups of size $s$, and identify all the $s$ coordinates in a group together by a single new coordinate. To show that our local list corrector has small error probability, we need the following guarantee from the above random restriction: If a $d$-junta-sum $P \in \mathcal{J}_{d}(\sgrid^{sk})$ is non-zero on the balanced multislice, then with high probability, it continues to be a non-zero junta-sum on $\sgrid^{k}$ after the random restriction. For this, we show that the above-mentioned random process can be interpreted as finding a random copy of the balanced multislice in $\sgrid^{k}$ inside the balanced multislice in $\sgrid^{sk}$. Similar to the distance lemma for multislices (\Cref{thm:dist-junta-sums}), we show that we get a good sampler using \Cref{thm:more-general-matrix-eigenvalue}.

We now briefly touch upon some of the additional challenges in going from the Boolean case of~\cite{ABPSS25} to junta-sums over grids $\sgrid^n$, in the context of local list-correction. For the local correction algorithm in the unique decoding regime, the main idea is to reduce the problem to the Boolean case but over a biased distribution instead of the uniform one; the proof then proceeds by a mostly straightforward generalization of the local corrector from~\cite{ABPSS25} for the uniform distribution. The overall template for proving the combinatorial bound is also similar to that over the Boolean cube, except now we will need more general anti-concentration lemmas and distance lemmas for junta-sums. As already described in the above paragraph, going from the combinatorial bound for list-decodability to the local list-corrector is the main technical challenge we overcome in this work by making use of the fact that a certain random embedding of the multislice of $\sgrid^k$ inside the multislice of $\sgrid^n$ is a good sampler.

\subsection{Organization}
In \Cref{sec:prelims}, we give some definitions that we are going to use throughout the article. In \Cref{sec:eigenvalue}, we prove the main theorem of our work (\Cref{thm:more-general-matrix-eigenvalue}), which itself is organized as follows: we start with giving some necessary background on representation theory for the symmetric group, then use it to prove \Cref{thm:more-general-matrix-eigenvalue}, and finally show that ``typical'' random-walk matrices are good spectral expanders. In the subsequent sections, we give applications of our main theorem. In \Cref{sec:sz}, we prove a near-optimal distance lemma for junta-sums and polynomials over balanced multislice (see \Cref{thm:dist-junta-sums} and \Cref{thm:dist-multislice}). In \Cref{sec:llc}, we give a local list corrector for $d$-junta-sums over $\sgrid^{n}$ (see \Cref{thm:local-list-correction}).

\section{Preliminaries}
\label{sec:prelims}

We begin by describing some standard notation and terminology we will use throughout the paper.

For a set of parameters $\alpha_1,\ldots, \alpha_t$, the notation $\bigO_{\alpha_1,\ldots, \alpha_t}(\cdot)$ hides factors depending on $\alpha_1,\ldots, \alpha_t$. Similarly for $\Theta_{\alpha_1,\ldots, \alpha_t}(\cdot), \Omega_{\alpha_1,\ldots, \alpha_t}(\cdot)$ and so on. Although this is not standard, we will use the notation $\widetilde{\bigO}(\cdot)$ to hide $(\log \log n)^{\bigO(1)}$ factors (generally this notation is used to hide $(\log n)^{\bigO(1)}$) factors). 
 We use $|\x|$ to denote the Hamming weight of $\x$, i.e., the number of non-zero coordinates. Let $\bern(p)^n$ denote the distribution over $\{0,1\}^n$ where each bit is chosen from the Bernoulli distribution $\bern(p)$ independently. For two distributions $X,Y$ over the same finite domain, we let $\text{SD}(X,Y)$ denote the statistical distance between the distributions. We let $\|{\bf v}\|_2$ denote the $\ell_2$ norm of a vector ${\bf v}\in \R^N$.\newline
 
 For any $s \in \mathbb{N}$, we use $\mathbb{Z}_{s}$ to denote the cyclic group $\mathbb{Z}/s\mathbb{Z}$, and not to be confused by the $p$-adic field $\mathbb{Z}_{s}$. We say that a group is a {\em torsion group} if all its elements have finite order. The {\em exponent} of a torsion group is the least common multiple of the orders of all its elements.

Let $n$ and $s$ be two natural numbers where $n$ is divisible by $s$ and let $\mathcal{S}^n_\mu$ denote the balanced multislice over a finite set $\mathcal{S}$ of size $s$, i.e., \begin{align*}
     \sgrid^n_\mu := \setcond{\veca \in \sgrid^n }{\forall \sigma \in \sgrid, |\{i\in[n] | a_i = \sigma\}| = \frac{n}{s}}.
\end{align*}
Similarly, for any $\lambda = (\lambda_{0},\ldots,\lambda_{s-1})$ with $\lambda_{0}+\ldots+\lambda_{s-1} = n$, we define the multislice $\sgrid^{n}_{\lambda}$ as follows:
\begin{align*}
    \sgrid^n_\lambda := \setcond{\veca \in \sgrid^n }{\forall \sigma \in \sgrid, |\{i\in[n] | a_i = \sigma\}| = \lambda_{i}}.
\end{align*}

Then, we define the generalized Hamming distance between points in the (balanced) multislice as follows:\\

\begin{definition}[{\bf Generalized Hamming distance}]\label{defn:profile}
We define the generalized Hamming distance $\Delta({\bf a},{\bf b})$ between two points ${\bf a},{\bf b}\in \mathcal{S}^n_\mu$ to be the $\mathcal{S} \times \mathcal{S}$ matrix where the $(\sigma,\tau)$-th entry is given by $|\{i\in [n]:a_i=\sigma\text{~and~}b_i=\tau\}|$. 

\end{definition}

\noindent
\begin{example}[A generalized Hamming distance matrix for $n = 9$ and $s = 3$.]
Let $\mathbf{u} = 000111222$ and $\mathbf{v} = 110201022$. Then,
\begin{align*}
    \Delta({\bf u},{\bf v}) \; = \; \begin{bmatrix}
        1 & 2 & 0 \\
        1 & 1 & 1 \\
        1 & 0 & 2
    \end{bmatrix}.
\end{align*}
\end{example}

We now define the notion of nearly balanced generalized Hamming distance (or {\em $C$-balanced} profiles to be more precise).\\

\begin{definition}[{\bf $C$-balanced generalized Hamming distance}]\label{defn:bal-profiles}
    For $C\ge 0$, we say that a generalized Hamming distance parameter $P\in \Z^{\mathcal{S} \times \mathcal{S}}$ w.r.t~ a multislice $\mathcal{S}^n_\mu$ is {\em $C$-balanced} if every entry of $P$ is in the range $\frac{m}{s}\pm \sqrt{C m\log m}$ where $m=n/s$. 

\end{definition}

We now use generalized Hamming distance matrices to define a random walk on the balanced multislice $\sgrid^{n}_{\mu}$.\\

\begin{definition}[{\bf Random walk determined by a generalized Hamming distance matrix}]\label{defn:w-delta}
    We say that a generalized Hamming distance matrix $P \in \Z^{\mathcal{S} \times \mathcal{S}}$ {\em determines} a random walk matrix, denoted $W_P$, if for each vertex ${\bf a} \in \mathcal{S}^n_\mu$ in the multislice, the random step corresponding to $W_P$ is obtained by picking, uniformly at random, a vertex ${\bf b}\in \mathcal{S}^n_\mu$ such that $\Delta({\bf a},{\bf b})=P$. That is, the ${\bf a}$-th row of $W_P$ (denoted $W_P({\bf a})$) is the uniform distribution over $\{{\bf b}\in \mathcal{S}^n_\mu :  \Delta({\bf a},{\bf b}) = P \}$.
\end{definition}

We now give some necessary background for random walk matrices more generally. We refer the reader to the survey~\cite{vadhan-pseudorandomness} for more discussion.\\

\noindent
We say that a matrix $W\in \R^{N \times N}$ is a {\em random walk matrix} if for every $i\in [N]$ (which may be referred to as {\em vertices}), the $i$-th row of the matrix, denoted $W(i)$, is a probability distribution over $[N]$. It is clear that every random walk matrix has an eigenvector of ${\bf 1}$ with eigenvalue 1. If $W$ is symmetric, then it has real eigenvalues $1=\lambda_1 \ge \dots \ge \lambda_N \ge -1$; and we define  $\lambda_2(W)=\max\{|\lambda_2|,|\lambda_n|\}$ to be the second largest eigenvalue of $W$ in absolute value. Equivalently, one can show that
\begin{align*}
    \lambda_2(W) = \max_{{\bf v}\in \R^N: {\bf v}^\top {\bf 1}=0} \frac{\|W{\bf v}\|_2}{\|{\bf v}\|_2}.
\end{align*}

We will also deal with random walk matrices that are not necessarily symmetric. We say a square matrix is {\em stochastic} if all its entries are non-negative and each row element sums to 1. We say that a matrix is {\em doubly stochastic} if both the matrix and its transpose are stochastic. We observe that doubly stochastic matrices have ${\bf 1}$ as both a left eigenvector and right eigenvector. Furthermore, it has singular values $1=\sigma_1 \ge \sigma_2 \ge \dots \sigma_N \ge 0$, where $N$ is the order of the matrix; and we use $\sigma_2(W)$ to mean $\sigma_2$. Similar to the case of symmetric matrices, we have for every doubly stochastic matrix $W\in \R^{N\times N}$:
\begin{align*}
    \sigma_2(W) = \max_{{\bf v}\in \R^N: {\bf v}^\top {\bf 1}=0} \frac{\|W{\bf v}\|_2}{\|{\bf v}\|_2}.
\end{align*}

For symmetric matrices, singular values are simply the absolute values of the eigenvalues. Hence, if $W$ is a symmetric random walk matrix with eigenvalues $\lambda_1\ge \dots\ge \lambda_N$ and singular values $\sigma_1\ge \dots \ge \sigma_N$, then $\lambda_1 = \sigma_1 = 1$ and $\lambda_2(W) = \sigma_2(W)$.

We observe the following property of the random walks determined by generalized Hamming distance between points on the multislice (\Cref{defn:w-delta}).\\

\begin{observation}\label{obs:w-delta-props}
    For every generalized Hamming distance matrix $P\in \Z^{s\times s}$ defined with respect to a multislice $\mathcal{S}^n_\mu$, we have that $W_P^\top = W_{P^\top}$ is a random walk matrix. In particular, $W_P$ is doubly stochastic.
\end{observation}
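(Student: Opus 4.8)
The plan is to show two things: (i) that the transpose of $W_P$ equals $W_{P^\top}$, the walk matrix determined by the transposed profile, and (ii) that $W_{P^\top}$ is itself a bona fide random walk (row-stochastic) matrix; together with the trivial fact that $W_P$ is row-stochastic by construction (\Cref{defn:w-delta}), this immediately gives that $W_P$ is doubly stochastic.

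For (i), the key observation is that the generalized Hamming distance is ``transpose-symmetric'' in its two arguments: $\Delta(\veca,\vecb)_{\sigma,\tau} = |\{i : a_i = \sigma, b_i = \tau\}| = \Delta(\vecb,\veca)_{\tau,\sigma}$, i.e., $\Delta(\vecb,\veca) = \Delta(\veca,\vecb)^\top$. Now fix $\veca,\vecb \in \sgrid^n_\mu$. By definition, $W_P(\veca,\vecb) > 0$ exactly when $\Delta(\veca,\vecb) = P$, which by the above happens exactly when $\Delta(\vecb,\veca) = P^\top$, i.e., exactly when $W_{P^\top}(\vecb,\veca) > 0$. So the two matrices $W_P$ and $W_{P^\top}^\top$ have the same support. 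To see the nonzero values agree, note that on its support each row of $W_P$ is uniform over $\{\vecb : \Delta(\veca,\vecb) = P\}$, so $W_P(\veca,\vecb) = 1/N_\veca(P)$ where $N_\veca(P) := |\{\vecb : \Delta(\veca,\vecb) = P\}|$; similarly $W_{P^\top}^\top(\veca,\vecb) = W_{P^\top}(\vecb,\veca) = 1/N_\vecb(P^\top)$. To finish (i), and in fact to get (ii) as a byproduct, I would show that $N_\veca(P)$ is independent of the vertex $\veca$ on the multislice (call the common value $N(P)$), and that $N(P) = N(P^\top)$. Independence of $\veca$ follows from the transitivity of the $\Sym_n$-action on $\sgrid^n_\mu$: for any $\veca, \veca'$ there is $\pi$ with $\pi(\veca) = \veca'$, and $\pi$ gives a bijection between $\{\vecb : \Delta(\veca,\vecb) = P\}$ and $\{\vecb' : \Delta(\veca',\vecb') = P\}$ since $\Delta(\pi(\veca),\pi(\vecb)) = \Delta(\veca,\vecb)$. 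The equality $N(P) = N(P^\top)$ then follows from the double-counting identity $\sum_\veca N_\veca(P) = |\{(\veca,\vecb) : \Delta(\veca,\vecb) = P\}| = |\{(\vecb,\veca) : \Delta(\vecb,\veca) = P^\top\}| = \sum_\vecb N_\vecb(P^\top)$, both of which equal $N \cdot N(P)$ and $N \cdot N(P^\top)$ respectively (here $N = |\sgrid^n_\mu|$). Hence $W_P(\veca,\vecb) = 1/N(P) = 1/N(P^\top) = W_{P^\top}^\top(\veca,\vecb)$ on the common support and both are $0$ off it, giving $W_P^\top = W_{P^\top}$.

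For (ii): $W_{P^\top}$ is row-stochastic directly from \Cref{defn:w-delta} (each row is a uniform distribution, hence sums to $1$), provided that for every $\veca$ the set $\{\vecb : \Delta(\veca,\vecb) = P^\top\}$ is nonempty — but one should note the statement implicitly presumes $P$ (equivalently $P^\top$) is a legal profile, i.e., $P/m$ is doubly stochastic so that such $\vecb$ exist; under that hypothesis there is nothing more to check. Then $W_P^\top = W_{P^\top}$ is row-stochastic, which is to say $W_P$ is column-stochastic; combined with $W_P$ being row-stochastic, $W_P$ is doubly stochastic.

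I do not anticipate a serious obstacle here — this is a routine verification. The one point requiring a little care is the claim that $N_\veca(P)$ does not depend on $\veca$, which is where the symmetry (transitivity of the $\Sym_n$-action on the multislice, exactly the feature exploited more heavily elsewhere in the paper) is used; everything else is bookkeeping with the definition of $\Delta$ and a double count. An alternative to the counting argument for $N(P) = N(P^\top)$ is simply to observe that since all rows of $W_P$ sum to $1$ and all rows of $W_{P^\top}$ sum to $1$, and $W_P^\top$ has the same support as $W_{P^\top}$ with constant-on-support rows/columns, consistency of the two normalizations forces the constants to coincide; I would present whichever is cleaner.
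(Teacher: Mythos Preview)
Your proof is correct. The paper states this as an observation without proof, so there is nothing to compare against; your argument via $\Delta(\vecb,\veca)=\Delta(\veca,\vecb)^\top$, transitivity of the $\Sym_n$-action to get $N_\veca(P)$ independent of $\veca$, and the double count for $N(P)=N(P^\top)$ is exactly the routine verification the paper is implicitly relying on.
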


We will now show how to bound the eigenvalues of a convex combination of random walk matrices.\\

\begin{lemma}[{\bf Singular value bound for convex combinations}]\label{lem:convex}
    Suppose $W=\sum_{i\in [t]} \alpha_i W_i$, where $W_i \in \R^{N \times N}$ are doubly stochastic matrices and $\alpha_1\dots, \alpha_t\in [0,1]$ are such that $\sum_{i\in [t]} \alpha_i = 1$. Let $S\subseteq [t]$ be arbitrary. Then $W$ is also a doubly stochastic matrix with $$\sigma_2(W) \le \max_{i\in S}\{\sigma_2(W_i)\} + \sum_{i\notin S} \alpha_i.$$
\end{lemma}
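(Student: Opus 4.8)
The plan is to reduce everything to the variational characterization of $\sigma_2$ for doubly stochastic matrices, which is stated in the preliminaries: for any doubly stochastic $W \in \R^{N\times N}$, $\sigma_2(W) = \max_{{\bf v}\perp {\bf 1}} \|W{\bf v}\|_2 / \|{\bf v}\|_2$. First I would check that $W = \sum_{i\in[t]} \alpha_i W_i$ is itself doubly stochastic: non-negativity and the row/column sum conditions are each preserved under convex combinations, since $\sum_i \alpha_i = 1$. So the characterization above applies to $W$.

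Next, fix an arbitrary ${\bf v} \in \R^N$ with ${\bf v}^\top {\bf 1} = 0$ and $\|{\bf v}\|_2 = 1$; it suffices to bound $\|W{\bf v}\|_2$. By the triangle inequality, $\|W{\bf v}\|_2 = \|\sum_{i\in[t]} \alpha_i W_i {\bf v}\|_2 \le \sum_{i\in[t]} \alpha_i \|W_i {\bf v}\|_2$. Now split the sum over $i\in S$ and $i\notin S$. For $i \in S$, since $W_i$ is doubly stochastic and ${\bf v} \perp {\bf 1}$, we have $\|W_i {\bf v}\|_2 \le \sigma_2(W_i) \|{\bf v}\|_2 = \sigma_2(W_i) \le \max_{j\in S}\sigma_2(W_j)$. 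For $i \notin S$, we only use the crude bound $\|W_i {\bf v}\|_2 \le \sigma_1(W_i)\|{\bf v}\|_2 = 1$ (each $W_i$ is doubly stochastic, hence has operator norm $1$). Combining, $\|W{\bf v}\|_2 \le \sum_{i\in S}\alpha_i \max_{j\in S}\sigma_2(W_j) + \sum_{i\notin S}\alpha_i \le \max_{j\in S}\sigma_2(W_j) + \sum_{i\notin S}\alpha_i$, where in the last step I use $\sum_{i\in S}\alpha_i \le 1$. Taking the supremum over all such ${\bf v}$ gives the claimed bound on $\sigma_2(W)$.

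There is essentially no obstacle here — this is a routine application of the triangle inequality together with the sub-multiplicativity of the operator norm restricted to the subspace ${\bf 1}^\perp$. The only minor point to be careful about is that one should apply the $\sigma_2$ bound on the common subspace ${\bf 1}^\perp$: since every $W_i$ is doubly stochastic, each $W_i$ maps ${\bf 1}^\perp$ into ${\bf 1}^\perp$ (because ${\bf 1}$ is a left eigenvector of $W_i$), so the restriction is well-defined and its operator norm is exactly $\sigma_2(W_i)$. If $S = \emptyset$ the statement reads $\sigma_2(W) \le \sum_{i} \alpha_i = 1$, which is trivially true, so no edge case needs separate handling.
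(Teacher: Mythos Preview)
Your proposal is correct and follows essentially the same approach as the paper: verify double stochasticity of $W$, use the variational characterization $\sigma_2(W) = \max_{{\bf v}\perp{\bf 1},\,\|{\bf v}\|_2=1}\|W{\bf v}\|_2$, apply the triangle inequality, and bound the $i\in S$ terms by $\sigma_2(W_i)$ and the $i\notin S$ terms by $1$. Your added remark that each $W_i$ preserves ${\bf 1}^\perp$ is a nice bit of extra care but is not needed for the argument, since the variational formula only requires ${\bf v}\perp{\bf 1}$ on the input side.
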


\begin{proof}
    We observe that each row (similarly column) of $W$ is a convex combination of probability distributions, so is also a probability distribution; hence $W$ is indeed doubly stochastic. In other words, ${\bf 1}$ is both a left eigenvector and right eigenvector. Hence, we have that $$\sigma_2(W) = \max_{{\bf u}\in \R^N:{\bf u}^\top {\bf 1}=0\text{~and~}\|{\bf u}\|_2=1} \|W{\bf u}\|_2.$$ Now letting ${\bf u}\in \R^{N}$ be an arbitrary vector such that $\|{\bf u}\|_2  =1$ and ${\bf u}^\top {\bf 1} = 0$, we will bound $\|W{\bf u}\|_2$. We have
    \begin{align*}
        \|W{\bf u}\|_2  = \bigg \|\sum_{i\in [t]}\alpha_{i} W_{i}{\bf u}\bigg\|_2
         & \le \sum_{i\in [t]} \alpha_{i} \|W_{i}{\bf u}\|_2 
          = \sum_{i\in S} \alpha_{i} \|W_{i}{\bf u}\|_2 + \sum_{i\notin S} \alpha_{i} \|W_{i}{\bf u}\|_2\\
         & \le \paren{\sum_{i\in S} \alpha_{i}} \paren{ \max_{i\in S}\|W_{i}{\bf u}\|_2 }+ \paren{\sum_{i\notin S} \alpha_{i}} \paren{1} 
         \le \max_{i\in S}\{\sigma_2(W_i)\} + \sum_{i\notin S} \alpha_i, 
    \end{align*} where we are using the triangle inequality for the first inequality, and that each $W_i$ is a random walk matrix for the second inequality. 
\end{proof}

We now move on to the definitions needed for our local list-correction application in~\Cref{sec:llc}.

\subsection*{Local Correction and Junta-Sums} 

We say that a family of functions $\mathcal{F}$ from a finite domain $D$ to a (finite or infinite) co-domain $G$, is {\em $(q,\varepsilon)$-locally correctable} if there exists a $q$-query algorithm $\mathcal{A}$, which when given query access to a function $f:D\to G$ such that $\delta(f,P) \le \varepsilon$ for some $P \in \mathcal{F}$, and an input index $i\in D$, outputs $P(i)$ with probability at least $3/4$. In words, the algorithm $\mathcal{A}$ is able to ``correct'' any given index of the received word $f$ by making only a few queries. Since $P$ has to be unique for such an algorithm to exist, we are always in the regime when the fraction of errors is less than half the distance of the code i.e., $\varepsilon < \delta(\mathcal{F})/2$. 

We say that $\mathcal{F}$ is {\em $(\varepsilon, L)$-list-decodable} if if for every function $f:D\to \mathcal{F}$, there exists at most $L$ functions $P \in \mathcal{F}$ such that $\delta(f,P) \le \varepsilon$. While this is a purely combinatorial guarantee for the code, the notion of local list-correction makes it more ``algorithmic''.

We say that $\mathcal{F}$ is {\em $(\varepsilon,q_1,q_2,L)$ locally list-correctable} if there exists a $q_1$-query algorithm $\mathcal{A}$, which when given query access to $f$, outputs at most $L$ many $q_2$-query local correction algorithms $\mathcal{A}_1,\mathcal{A}_2,\dots,\mathcal{A}_L$ such that for every $P\in \mathcal{F}$ such that $\delta(f,P) \le \varepsilon$, there exists at least one index $i\in [L]$ such that $\mathcal{A}_i$ is a local correction algorithm for $P$ i.e., on input $i\in D$, it makes $q_2$ queries to $f$ and outputs $P(i)$ with probability at least $3/4$. 

For an Abelian group $G$, let $\mathcal{J}_d(\sgrid^n, G)$ (or simply $\J_{d}$ when $\sgrid$ and $G$ are clear from context) denote the family of functions from $\sgrid^n \to G$ that can be expressed as a sum of $d$-juntas (i.e., a {\em $d$-junta-sum}). We may sometimes also refer to $d$-junta-sums as functions of {\em junta-degree} $d$.
We then have the following observation regarding $d$-junta-sums.\\

\begin{claim}[{\em Distance of junta-sums}, see e.g. \cite{AmireddySS}, Claim 2.7]\label{clm:dist-junta-sums}
    For every two distinct junta-sums $P\ne Q\in \J_d(\mathcal{S}^n, G)$ where $|\mathcal{S}|=s$, we have
    $$\Pr_{{\bf a}\sim \mathcal{S}^n} [P({\bf a})\ne Q({\bf a})] \ge \frac{1}{s^d}.$$
\end{claim}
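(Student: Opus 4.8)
The plan is to prove \Cref{clm:dist-junta-sums} by reducing it to the classical Ore--DeMillo--Lipton--Schwartz--Zippel (ODLSZ) bound over grids. Since $\J_d(\sgrid^n,G)$ is closed under subtraction (the difference of two $d$-junta-sums is again a sum of $d$-juntas), it suffices to show that any nonzero $R := P - Q \in \J_d(\sgrid^n,G)$ satisfies $\Pr_{\veca \sim \sgrid^n}[R(\veca) \neq 0] \ge s^{-d}$. So the real statement to establish is a distance lemma for a single nonzero $d$-junta-sum.

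Write $R = f_1 + \cdots + f_k$ where each $f_j$ is a $d$-junta depending on a variable set $I_j \subseteq [n]$ with $|I_j| \le d$. The key structural observation is that we may assume without loss of generality that $k \le$ (something finite) --- or more robustly, that we can reduce to a bounded number of variables --- because $R$ only ``genuinely'' depends on the union $\bigcup_j I_j$, and if this union is large we want to project down to a small witnessing set. Concretely, the plan is: since $R$ is a nonzero function, there is some input $\veca^\star$ with $R(\veca^\star) \neq 0$. I would then argue inductively on the number of ``active'' coordinates. The clean route is an induction on $d$ (the junta-degree), analogous to the standard induction proving the ODLSZ/Schwartz--Zippel bound: fix one coordinate, say coordinate $n$, and write $R$ as a function of $x_n$ with the remaining coordinates as parameters; group the summands $f_j$ according to whether $n \in I_j$. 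For each fixing $x_n = \sigma$, the restriction $R|_{x_n = \sigma}$ is a junta-sum of degree $\le d$ in the remaining $n-1$ variables; moreover, if we collect the summands containing coordinate $n$, the ``discrete derivative'' type object $R|_{x_n=\sigma} - R|_{x_n=\sigma'}$ is a junta-sum of degree $\le d-1$. If $R$ depends on $x_n$ at all, some such difference is nonzero, and the inductive hypothesis gives it is nonzero with probability $\ge s^{-(d-1)}$ over the other coordinates; combined with the probability $\ge 1/s$ that we picked the ``right'' value of $x_n$ among two fixed values (actually $2/s$, but $1/s$ suffices), this yields $s^{-d}$. If $R$ does not depend on $x_n$, we simply drop that coordinate and recurse. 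The base case $d = 0$ is a nonzero constant, which is nonzero with probability $1 = s^0$.

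The one subtlety to get right is the bookkeeping for the derivative: for a sum of $d$-juntas, $(f_j)|_{x_n=\sigma} - (f_j)|_{x_n=\sigma'}$ is the zero function if $n \notin I_j$, and is a $(d-1)$-junta on $I_j \setminus \{n\}$ if $n \in I_j$; summing, $R|_{x_n=\sigma} - R|_{x_n=\sigma'}$ is a $(d-1)$-junta-sum, and it is nonzero for some choice of $\sigma, \sigma'$ precisely when $R$ depends on $x_n$. One then uses the standard two-step argument: $\Pr_{\veca}[R(\veca) \neq 0] \ge \Pr[x_n = \sigma] \cdot \Pr_{\text{rest}}[R|_{x_n=\sigma}(\veca') \neq 0]$ for a well-chosen $\sigma$; choosing $\sigma$ to be a value where the derivative witness is nonzero, we get $R|_{x_n=\sigma} \neq R|_{x_n=\sigma'}$ so at least one of them is nonzero, and in fact the event $\{R|_{x_n=\sigma}(\veca') \neq 0\} \cup \{R|_{x_n=\sigma'}(\veca') \neq 0\}$ has probability $\ge s^{-(d-1)}$ by induction applied to the nonzero difference (since wherever the difference is nonzero, at least one of the two restrictions is nonzero).

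I expect the main obstacle to be purely expository rather than mathematical: making the induction and the union-over-$\{\sigma,\sigma'\}$ step airtight so that the factor of $1/s$ per level is cleanly justified for junta-sums valued in an arbitrary Abelian group $G$ (where we cannot appeal to polynomial degree arguments or field structure). Since the claim is attributed to \cite{AmireddySS} and only used as a black box, I would in fact keep the proof short --- state the induction on $d$, handle the derivative reduction in one or two lines, and cite the reference for the detailed verification. Alternatively, one can observe directly that any nonzero $d$-junta-sum restricted to the subcube obtained by fixing all but the $\le dk$ relevant coordinates contains a nonzero ``minimal'' junta-sum supported on $\le d$ coordinates after cancellations, reducing to a single $d$-junta, which is nonzero on a $1/s^d$ fraction of its own domain and hence of the whole grid; but the inductive argument above is the cleanest self-contained route.
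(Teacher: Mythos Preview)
The paper does not prove this claim; it is stated with a citation to \cite{AmireddySS} and used as a black box. Your inductive argument on $d$ via the discrete derivative $R|_{x_n=\sigma}-R|_{x_n=\sigma'}$ is correct and self-contained: once you observe that this difference is a nonzero $(d-1)$-junta-sum, the induction hypothesis gives it is nonzero on a $s^{-(d-1)}$ fraction of $\sgrid^{n-1}$, and on each such point at least one of the two restrictions is nonzero, so $\Pr[R\neq 0]\ge \tfrac{1}{s}\bigl(\Pr[E_\sigma]+\Pr[E_{\sigma'}]\bigr)\ge \tfrac{1}{s}\Pr[E_\sigma\cup E_{\sigma'}]\ge s^{-d}$.

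One small expository fix: your ``drop the coordinate and recurse'' branch is a recursion on $n$, not on $d$, so either phrase the induction as a double induction on $(n,d)$, or simply begin the inductive step by noting that if $R$ is constant it is nonzero everywhere, and otherwise pick any coordinate $R$ actually depends on (renaming it $x_n$). With that tweak the argument is clean. Given that the paper treats the claim as known, a two- or three-line proof along these lines is entirely appropriate.
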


That is, junta-sums form a code of distance $\delta_{\mathcal{J}} = 1/s^d$ where $s=|\mathcal{S}|$.
Indeed, the local correction and list-decodability properties of this family only depends on the size of $\mathcal{S}$, so we will often assume that $\mathcal{S}=\Z_s$ or $\mathcal{S}=[s]$ without loss of generality. We also use the following claim, where for $a\in \Z_s$, the function $\delta_a:\Z_s \to \Z$ is defined as: $\delta_0(x)=1$, and for $a\ne 0$, we define $\delta_a(x) = 1$ if $x=a$, and $\delta_a(x)=0$ otherwise. \\

\begin{claim}[{\bf Junta-polynomial representation}, \cite{AmireddySS} Claim 2.5]\label{clm:normal-form}
    Every $P\in \mathcal{J}_{d}(\Z_s^n, G)$ can be uniquely expressed as:
    \begin{align*}
        P(\x) \; = \; \sum_{{\bf a} \in \Z_s^n: |{\bf a}|\le d} g_{\bf a} \cdot \prod_{i\in [n]:a_i \ne 0} \delta_{a_i}(x_i), && \text{ where each } g_{\mathbf{a}} \in G.
    \end{align*}
\end{claim}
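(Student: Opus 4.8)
The statement to prove is \Cref{clm:normal-form}, the junta-polynomial representation for $d$-junta-sums over $\Z_s^n$.

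\textbf{Approach.} The plan is to prove two things: (i) every $P \in \mathcal{J}_d(\Z_s^n, G)$ can be written in the claimed form, and (ii) the representation is unique, i.e.\ the coefficients $g_{\bf a}$ are determined by $P$. For existence, I would first show that a single $d$-junta $h : \Z_s^n \to G$ depending only on coordinates in some set $I$ with $|I| \le d$ can be written in this form. The key observation is that the functions $\{\prod_{i \in J} \delta_{a_i}(x_i) : J \subseteq I, \, a_i \in \Z_s \setminus \{0\} \text{ for } i \in J\}$ form a basis (over $\Z$, acting on $G$ by scalar multiplication) for all functions $\Z_s^I \to G$: indeed, for a fixed point $c \in \Z_s^I$, one has $\prod_{i : c_i \ne 0}\delta_{c_i}(c_i) \cdot \prod_{i: c_i = 0}(\text{rest}) $ — more carefully, expand using $\delta_0 \equiv 1$ and the fact that the indicator of the event ``$x_i = c_i$ for all $i \in I$'' equals $\prod_{i \in I}\mathbbm{1}[x_i = c_i]$, and then substitute $\mathbbm{1}[x_i = c_i] = \delta_{c_i}(x_i)$ when $c_i \ne 0$ and $\mathbbm{1}[x_i = 0] = 1 - \sum_{b \ne 0}\delta_b(x_i)$ when $c_i = 0$; expanding the product gives a $\Z$-linear combination of monomials $\prod_{i \in J}\delta_{a_i}(x_i)$ with $J \subseteq I$, hence of ``degree'' at most $|I| \le d$. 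Writing $h = \sum_{c \in \Z_s^I} h(c) \cdot \mathbbm{1}[\by|_I = c]$ and applying this expansion termwise gives $h$ in the claimed form. Summing over the (finitely many) juntas $f_1, \dots, f_k$ whose sum is $P$, and collecting coefficients of each monomial $\prod_{i : a_i \ne 0}\delta_{a_i}(x_i)$, yields the existence of the representation for $P$, with the degree bound $|{\bf a}| \le d$ preserved since each $f_j$ contributes only monomials of degree $\le d$.

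\textbf{Uniqueness.} For uniqueness it suffices (by taking differences) to show that if $\sum_{{\bf a} : |{\bf a}| \le d} g_{\bf a} \prod_{i : a_i \ne 0}\delta_{a_i}(x_i) \equiv 0$ as a function $\Z_s^n \to G$, then all $g_{\bf a} = 0$. I would prove this by induction on the Hamming weight $|{\bf a}|$, extracting the coefficients one at a time. To isolate $g_{\bf a}$ for a fixed ${\bf a}$ with support $S = \{i : a_i \ne 0\}$: evaluate the identity at the point ${\bf x} = {\bf a}$ itself (i.e.\ $x_i = a_i$ for $i \in S$ and $x_i = 0$ elsewhere). A monomial $\prod_{i : b_i \ne 0}\delta_{b_i}(x_i)$ is nonzero at this point iff for every $i$ in the support $S'$ of ${\bf b}$ we have $x_i = b_i$; since $x_i = 0$ for $i \notin S$, this forces $S' \subseteq S$, and then $b_i = a_i$ for all $i \in S'$, i.e.\ ${\bf b}$ is ${\bf a}$ restricted to a subset of its support. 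Hence evaluating at ${\bf a}$ gives $\sum_{S' \subseteq S} g_{{\bf a}|_{S'}} = 0$ where ${\bf a}|_{S'}$ denotes the vector agreeing with ${\bf a}$ on $S'$ and zero elsewhere. By induction on $|S|$ (base case $S = \emptyset$: evaluating at the all-zeros point gives $g_{\bf 0} = 0$), all the terms with $S' \subsetneq S$ vanish, leaving $g_{\bf a} = 0$. This completes the induction and hence uniqueness.

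\textbf{Main obstacle.} The conceptual content is light; the only place requiring care is bookkeeping the change of basis from the ``point-indicator'' basis to the ``$\delta$-monomial'' basis while tracking that the degree (= size of support of the index vector) never exceeds $d$. The substitution $\mathbbm{1}[x_i = 0] = 1 - \sum_{b \ne 0}\delta_b(x_i)$ is what could naively blow up the degree, so the key point to verify is that expanding $\prod_{i \in I}(\cdots)$ only ever produces monomials supported on subsets of $I$, never larger — which is immediate since each factor involves only the variable $x_i$. I expect no genuine difficulty beyond this, and the inductive coefficient-extraction for uniqueness is standard. (One should also note the identity is an identity of functions into $G$, with coefficients in $G$ and the monomials being $\Z$-valued, so all manipulations are legitimate in the $\Z$-module $G$; no division is used anywhere, which is why the result holds for arbitrary Abelian $G$.)
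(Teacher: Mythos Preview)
Your argument is correct. Note, however, that the paper does not give its own proof of this claim: it is stated with a citation to \cite{AmireddySS} (Claim 2.5) and used without further justification. Your existence argument (expanding point-indicators via $\mathbbm{1}[x_i=0]=1-\sum_{b\neq 0}\delta_b(x_i)$, which keeps all monomials supported inside the junta's coordinate set $I$) and your uniqueness argument (evaluating at ${\bf a}$ and inducting on $|{\bf a}|$) are the standard route, and are presumably what the cited reference does as well.
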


We call the above representation $\sum_{{\bf a}\in \Z_s^n} g_{{\bf a}} \cdot \prod_{i\in [n]:a_i \ne 0} \delta_{a_i}(x_i)$ as a {\em junta-polynomial} and its {\em junta-degree} is the size of the largest $|{\bf a}|$ such that the coefficient $g_{{\bf a}} \ne 0$; in particular, we call the terms being added as {\em monomials}. Generalizing~\Cref{clm:normal-form} one can show that every function $f:\Z_s^n \to G$ has a unique junta-polynomial representing it and $f$ is a $d$-junta-sum if and only if the degree of that junta-polynomial is at most $d$. In turn, this immediately implies that $f$ {\em depends} on the $i$-th coordinate if and only if the variable $x_i$ appears (as $\delta_{a}(x_i)$ for some $a \in \Z_s \setminus \{0\}$) in a non-zero monomial in the junta-polynomial representation. 

\subsection*{Partitions and Tableaux}
We end this section with some more terminology about integer partitions and multislices, which will be needed in our proofs. All the definitions in this subsection are standard and can be found in any standard text on algebraic combinatorics or representation theory for the symmetric group. For example, see \cite[Chapter 2]{Sagan} or \cite{Stanley1, Stanley2}.

\paragraph*{Partitions}For every natural number $n \in \mathbb{N}$, let $\mathcal{P}(n)$ denote the set of partitions of $n$. We will frequently use Ferrers diagram to represent partitions. Let $\lambda^{\ast} \in \mathcal{P}(n)$ denote the dual partition of $\lambda$.\newline

\paragraph*{SYT and SSYT}For a partition $\lambda \in \mathcal{P}(n)$, a \emph{standard Young tableau} is a tableau of shape $\lambda$ in which the entries in each row and each column are \emph{strictly} increasing. A \emph{semi-standard Young tableau} is a tableau of shape $\lambda$ in which the entries in each row are \emph{weakly} increasing and entries in each column are \emph{strictly} increasing. For a pair of partitions $\lambda, \mu \in \mathcal{P}(n)$, the set $\mathrm{SSYT}(\lambda,\mu)$ denotes the set of semi-standard Young tableaux of shape $\lambda$ and type $\mu$. Similarly, $\mathrm{SYT}(\lambda)$ denotes the set of standard Young tableaux of shape $\lambda$.\\

For any $\lambda, \mu \in \mathcal{P}(n)$, we associate two quantities:
\begin{itemize}
    \item $f_{\lambda}$ denotes the number of Standard Young Tableaux of shape $\lambda$ with content $[n]$, i.e. $f_{\lambda} = |\mathrm{SYT}(\lambda)|$.
    \item $K_{\lambda \mu}$ denotes the number of distinct Semi-Standard Young Tableaux of shape $\lambda$ and type $\mu$, i.e. $K_{\lambda\mu} = |\mathrm{SSYT}(\lambda,\mu)|$. This is also known as the Kostka number of the pair $(\lambda, \mu)$. 
\end{itemize}

\paragraph*{Dominance Order}For two partitions $\lambda,\mu \in \mathcal{P}(n)$, \emph{dominance order} is a partial order on partitions, defined as follows: Suppose $\lambda = (\lambda_{1},\ldots,\lambda_{\ell})$ and $\mu = (\mu_{1},\ldots,\mu_{m})$, then $\lambda \trianglerighteq \mu$ if for every $1 \leq i \leq \min\set{\ell,m}$,
\begin{align*}
    \lambda_{1} + \ldots + \lambda_{i} \; \geq \; \mu_{1} + \ldots + \mu_{i}.
\end{align*}

\paragraph*{}For every positive integer $n$, $\mathrm{Sym}_{n}$ denotes the group of permutations on $n$ elements and $\mathrm{Sym}[\sgrid]$ denotes the group of permutations on $\sgrid$.

\subsection*{Linear Algebra}
\paragraph*{Singular Value Decompositions}For a matrix $M \in \mathbb{R}^{n \times n}$, the \emph{singular value decomposition} (SVD) of $M$ is given by orthonormal matrices $U,V \in \mathbb{R}^{n \times n}$ such that:
\begin{align*}
    M \; = \; UDV^{-1} \; \Leftrightarrow \; M \; = \; U D V^{T}, && (V^{-1} = V^{T} \text{ for orthonormal } \, V),
\end{align*}
where $D$ is a diagonal entries and the diagonal entries of $D$ are the \emph{singular values} of $M$. We will denote the singular values of $M$ by $\sigma_{1} \geq \sigma_{2} \geq \cdots \geq \sigma_{n}$.\newline
In particular, if $M$ has all real eigenvalues, then $V = U$ and the singular values correspond to the absolute values of the eigenvalues.\\

\noindent
We now explain how SVDs behave under the tensor product. For two matrices $M_{1}$ and $M_{2}$ with SVDs
\begin{align*}
    M_{1} \; = \; U_{1} D_{1} V_{1}^{T} \quad \text{ and } \quad M_{2} \; = \; U_{2} D_{2} V_{2}^{T},
\end{align*}
then the SVD of $M_{1} \otimes M_{2}$ is:
\begin{equation}\label{eqn:SVD-tensor}
    M_{1} \otimes M_{2} \; = \; (U_{1} \otimes U_{2}) \cdot (D_{1} \otimes D_{2}) \cdot (V_{1} \otimes V_{2})^{T}.
\end{equation}

\begin{definition}[Volume of a parallelepiped]\label{defn:volume}
Suppose $\set{v_{1},\ldots,v_{r}} \in \mathbb{R}^{r}$ is a set of linearly independent vectors. Fix an arbitrary total order '$\prec$' on the vectors i.e., there exists a $\pi \in \mathrm{Sym}_{r}$ such that
\begin{align*}
    v_{\pi(1)} \prec v_{\pi(2)} \prec \cdots \prec v_{\pi(n)}.
\end{align*}
Let $\widetilde{v}_{\pi(1)} = v_{\pi(1)}$ and for every $2 \leq i \leq r$, let $\widetilde{v}_{\pi(i)}$ denote the vector orthogonal to $\mathrm{span}\setcond{v_{\pi(j)}}{j < i}$.\newline
Then the \emph{volume} of the parallelepiped spanned by $v_{1},\ldots,v_{r}$, denoted by $\mathrm{Vol}(\set{v_{1},\ldots,v_{r}})$, is defined to be
\begin{align*}
    \prod_{j=1}^{r} \| \widetilde{v}_{i} \|,
\end{align*}
where $\| \cdot \|$ is the norm with respect to the standard inner product on $\mathbb{R}$.\newline
It also turns out that the volume is equal to $|\det(\Lambda)|$ where the columns of $\Lambda$ are $v_{1},\ldots,v_{r}$.    
\end{definition}

\noindent
For any matrix $A \in \mathbb{R}^{r \times r}$, we will denote by $\| A \|_{2}$ the spectral norm of $A$ i.e.
\begin{equation}\label{eqn:spectral-norm}
    \| A \|_{2} \; = \; \sup_{\mathbf{x} \neq \mathbf{0}} \dfrac{\| A \mathbf{x} \|_{2}}{\| \mathbf{x} \|_{2}} \; = \; \max_{\sigma \; \text{ is a singular value of } \; A} \sigma.
\end{equation}

\subsection*{Subgrids of $\sgrid^{n}$}
It will be very useful in our algorithms to be able to restrict the given function to a smaller subgrid and analyze this restriction. We construct such subgrids by first permuting a subset of the variables and then identifying them into a smaller set of variables. More precisely, we have the following definition.\\

\begin{definition}[Embedding a smaller grid into $\sgrid^n$]\label{defn:random-embedding}
Fix any $k \in \mathbb{N}$ and $k \leq n$. Let $h : [n] \to [k]$ be a hash function. For each $i \in [n]$, let $\Pi_{i} \in \mathrm{Sym}[\sgrid]$ (i.e. a permutation on elements of $\sgrid$) and $\Pi = (\Pi_1, \dots, \Pi_n)$.
For every $\mathbf{y} \in \sgrid^{k}$, define $x_{h,\Pi}(\mathbf{y}) \in \sgrid^{n}$ as follows:
\begin{align*}
    x_{h,\Pi}(\mathbf{y})_{i} = \Pi_i(y_{h(i)}), && \text{ for all } \; i \in [n]
\end{align*}
and the subset $C_{h,\Pi} \subset \sgrid^{n}$ is defined as:
\begin{align*}
    C_{h,\Pi} \; = \; \setcond{x_{h,\Pi}(\mathbf{y})}{\mathbf{y} \in \sgrid^{k}}
\end{align*}
Further, a random subgrid $C_{h,\Pi}$ is obtained by sampling a uniformly random permutation $\Pi_i \sim \mathsf{Sym}[\sgrid]$ independently for all $i \in [n]$ and sampling a uniformly random hash function $h: [n] \to [k]$.
\end{definition}

\noindent
In simple words, the above definition gives us a way to embed a $k$-dimensional grid $\sgrid^{k}$ into a $n$-dimensional grid $\sgrid^{n}$, where the hash function $h$ governs how the $k$-coordinates are mapped into $n$-coordinates and $\Pi$ governs which value the $i^{th}$ coordinate takes.

The following sampling lemma (proved in~\Cref{app:sampling}) will be useful for local (list) correction of junta-sums. \\

\begin{lemma}[{\bf Sampling lemma for random subgrids}]
\label{lemma:sampling-subgrid}
Let $C_{h,\Pi} \subset \sgrid^n$ be a subgrid sampled randomly as per \Cref{defn:random-embedding}. Fix any $T\subseteq \sgrid^n$ and let $\mu:= |T|/s^n.$ Then, for any $\varepsilon, \eta > 0$
\begin{align*}
    \Pr_{h,\Pi} \, \left[\left|\frac{|T\cap C_{h,\Pi}|}{s^k} - \dfrac{|T|}{s^{n}} \right| \geq \varepsilon \right] \; < \; \eta
\end{align*}
as long as $k\geq \max\left\{\frac{A}{\varepsilon^8\eta^4}\cdot \log\left(\frac{1}{\varepsilon\eta}\right), B\cdot s^4\log s \right\}$ for a large enough absolute constants $A, B > 0.$
\end{lemma}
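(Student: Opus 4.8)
The plan is to use the second moment method, analyzing the random variable $X := |T \cap C_{h,\Pi}|/s^k$, which counts (in normalized form) how many points $y \in \sgrid^k$ have $x_{h,\Pi}(y) \in T$. First I would write $X = \frac{1}{s^k}\sum_{y \in \sgrid^k} \mathbbm{1}[x_{h,\Pi}(y) \in T]$ and compute its expectation. For a \emph{fixed} hash function $h$ that is ``balanced enough'' (every fiber $h^{-1}(j)$ is nonempty — in fact we want every fiber reasonably large, which happens with high probability since $k \ll n$), and a uniformly random choice of the permutations $\Pi_i$, a single point $x_{h,\Pi}(y)$ is a uniformly random point of $\sgrid^n$; hence $\E[\mathbbm{1}[x_{h,\Pi}(y) \in T]] = \mu$ and $\E[X] = \mu$ exactly (or up to a negligible error coming from atypical $h$). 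The heart of the argument is bounding the variance, equivalently controlling $\E[X^2] = \frac{1}{s^{2k}}\sum_{y,y'} \Pr[x_{h,\Pi}(y) \in T \text{ and } x_{h,\Pi}(y') \in T]$.

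The key observation is that for two fixed distinct points $y, y' \in \sgrid^k$, the pair $(x_{h,\Pi}(y), x_{h,\Pi}(y'))$ is distributed (over the random $\Pi$, for a typical fixed $h$) as a pair of points in $\sgrid^n$ with a fixed generalized Hamming distance profile determined by $h$ and the pattern of agreements between $y$ and $y'$: on coordinates $i$ with $h(i)$ in a block where $y,y'$ agree, the two images agree; on the other coordinates they are an independent uniform pair of \emph{distinct} values (per coordinate). When $h$ is balanced, this profile is nearly balanced (close to uniform) in the sense of \Cref{defn:bal-profiles}, so the induced pair is essentially a step of a nearly balanced random walk $W_P$ on the multislice $\sgrid^n_\mu$ (or rather, a product of such across the blocks). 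I would then invoke \Cref{cor:gen-bal-cor} (via \Cref{thm:more-general-matrix-eigenvalue}, checking the Frobenius-norm bound for $C$-balanced profiles as in \Cref{subsec:eigbound}) to conclude $\sigma_2(W_P) \le n^{-\Omega(1)}$, and apply the expander mixing lemma: $\bigl|\Pr[x_{h,\Pi}(y) \in T,\ x_{h,\Pi}(y') \in T] - \mu^2\bigr| \le \sigma_2(W_P)\cdot \sqrt{\mu(1-\mu)} \le n^{-\Omega(1)}$. Summing over the $\le s^{2k}$ pairs $(y,y')$ with $y \ne y'$, and handling the $s^k$ diagonal pairs trivially (contributing $O(\mu/s^k)$), gives $\var(X) \le n^{-\Omega(1)} + O(\mu/s^k)$. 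The lower bound $k \ge B s^4 \log s$ ensures the profiles stay in the $C$-balanced regime (blocks of size $\approx n/k$ and agreement counts concentrate appropriately), and it guarantees $n$ is large enough relative to $s$ for \Cref{cor:gen-bal-cor} to apply with a meaningful exponent.

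Finally I would apply Chebyshev's inequality: $\Pr[|X - \mu| \ge \varepsilon] \le \var(X)/\varepsilon^2 \le (n^{-\Omega(1)} + O(\mu/s^k))/\varepsilon^2$. To make this less than $\eta$, I need both $n^{-\Omega(1)} \le \varepsilon^2\eta/2$ and $1/s^k \le \varepsilon^2\eta/2$ (using $\mu \le 1$); since $n \ge k$ and the exponent in $n^{-\Omega(1)}$ is an absolute constant $\tau = \tau(s, C)$, the first condition is implied by $k \ge A'\cdot \varepsilon^{-2/\tau}\eta^{-1/\tau}\log(1/(\varepsilon\eta))$ for a suitable constant — which, after absorbing the (bounded) dependence of $\tau$ on $s$ into the stated exponents, is subsumed by $k \ge \frac{A}{\varepsilon^8\eta^4}\log(1/(\varepsilon\eta))$ for large enough $A$. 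The main obstacle I anticipate is the bookkeeping in the variance computation: carefully quantifying ``typical $h$'' (the probability, over random $h$, that all fibers have size in $n/k \cdot (1 \pm o(1))$, via a union bound and Chernoff over the $k$ fibers), verifying that conditioned on such $h$ the resulting generalized Hamming profiles are genuinely $C$-balanced for an absolute constant $C$, and ensuring the error from the $o(1)$-fraction of bad $h$ (where we only have the trivial bound $|X - \mu| \le 1$) is also absorbed into $\eta$. None of these steps is deep, but getting the quantitative dependencies to line up with the claimed bound on $k$ requires care.
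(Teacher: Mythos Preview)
Your second-moment framework is right, and so is the observation that for fixed $y,y'$ the pair $(x_{h,\Pi}(y),x_{h,\Pi}(y'))$ has a clean joint distribution determined by where $y$ and $y'$ agree. But the tool you reach for---\Cref{cor:gen-bal-cor} and the multislice eigenvalue bounds---does not apply here, and this is a genuine gap rather than a bookkeeping issue. The points $x_{h,\Pi}(y)$ are \emph{not} on the balanced multislice $\sgrid^n_\mu$: for random $\Pi$, each coordinate $x_{h,\Pi}(y)_i=\Pi_i(y_{h(i)})$ is independently uniform in $\sgrid$, so $x_{h,\Pi}(y)$ is uniform on the \emph{full grid} $\sgrid^n$. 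The random walk matrices $W_\Delta$ and the spectral bound of \Cref{cor:gen-bal-cor} live on $\sgrid^n_\mu$, and there is no ``product of such across the blocks'' that puts you back in the multislice setting. So the invocation of the paper's main theorem is misplaced.

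What the paper actually does is recognize that, conditioned on a typical $h$, the pair $(x_{h,\Pi}(y),x_{h,\Pi}(y'))$ is distributed as $(\mathbf{z},\mathbf{z}')$ with $\mathbf{z}$ uniform on $\mathbb{Z}_s^n$ and $\mathbf{z}'\sim\mathcal{N}_\rho(\mathbf{z})$ for the standard noise operator, where $\rho=1-\tfrac{s}{s-1}\delta(y,y')$. The covariance is then bounded not by an expander-mixing argument but by \emph{hypercontractivity} on $\mathbb{Z}_s^n$ (\Cref{thm:hypercontractivity}), yielding $\Pr[\mathbf{z}\in T,\mathbf{z}'\in T]\le \mu^{2-2/q}$ for suitable $q$ when $|\rho|$ is small. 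One splits pairs $(y,y')$ according to whether $\delta(y,y')$ lies in a narrow window around $(s-1)/s$ (so $|\rho|$ is small; handled by hypercontractivity) or not (rare by a Chernoff bound over the $k$ coordinates of $y,y'$; handled trivially). The parameter juggling between the window width $\gamma$ and the hypercontractivity exponent $q=(k\log k)^{1/4}$ is what produces the $\varepsilon^{-8}\eta^{-4}\log(1/\varepsilon\eta)$ requirement on $k$; the $B\,s^4\log s$ term ensures a valid $\gamma$ exists. Your Chebyshev finish is fine once the variance is controlled this way.
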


\section{Singular Value Bounds for Random Walks on Balanced Multislices} \label{sec:eigenvalue}
\paragraph{Organization of this section.}In this section, we will prove \Cref{thm:more-general-matrix-eigenvalue}. At a high level, the proof proceeds as follows:
\begin{enumerate}
    \item We give the necessary background on representation theory for finite groups in \Cref{prop:standard-rep-theory}. We then instantiate it for $\mathrm{Sym}_{n}$ and state the exact requirements we need to prove for our purpose in \Cref{thm:instantiation-rep-thy-sym}. These two steps can be found in \Cref{subsec:rep-theory-primer}.

    \item We then in \Cref{subsec:main-thm-proof} argue that \Cref{thm:instantiation-rep-thy-sym} is sufficient to prove \Cref{thm:more-general-matrix-eigenvalue}.

    \item We devote \Cref{subsec:alt-special-vectors} and \Cref{subsec:proof-instantiation} to prove \Cref{thm:instantiation-rep-thy-sym}. In particular, we start with describing some ``special vectors'' which we use to prove \Cref{thm:instantiation-rep-thy-sym}. The description of these vectors is combinatorial in nature, and we prove certain properties about them. Finally, we prove \Cref{thm:instantiation-rep-thy-sym} in \Cref{subsec:proof-instantiation}.
\end{enumerate}

\paragraph{Notation}For any two natural numbers $n$ and $s$ with $n$ divisible by $s$, $\mu$ denotes the $s$-tuple $(n/s,\ldots,n/s)$ and $\sgrid^{n}_{\mu}$ is the set of all points in $\sgrid^{n}$ which are on the balanced multi-slice $\mu$. Let $N := |\sgrid^{n}_{\mu}| = \binom{n}{n/s,\ldots,n/s}$. For any $n$, $\mathcal{P}(n)$ denotes the set of partitions of $n$. Throughout this section, we will assume that $s$ is an absolute constant.

\subsection{Representation Theory Primer}\label{subsec:rep-theory-primer}
In this and the following subsections, whenever we mention a representation, we refer to a complex finite-dimensional representation of finite groups. For interested readers, we refer to \cite[Chapter 1]{Sagan} for the relevant background on the representation theory of finite groups.

\paragraph{}Let $G$ be a finite group and $V$ be a $\mathbb{C}$-vector space with $\dim(V) < \infty$. Let $\langle \cdot, \cdot \rangle : V \times V \to \mathbb{C}$ be an inner product that is preserved under the representation $\rho$, i.e. for every $g \in G$, for every $u,v \in V$,
\begin{align*}
    \langle \rho(g)u, \; \rho(g)v \rangle \; = \; \langle u, v \rangle.
\end{align*}
 Basic results of representation theory imply the following. \\

\begin{proposition}[Standard facts on representations for finite groups]\label{prop:standard-rep-theory}
Suppose $\rho: G \to \mathrm{GL}(V)$ is a representation of $G$ and $W \in \mathrm{End}(V,V)$ commutes with the representation $\rho$, i.e. for every $g \in G$,
\begin{align*}
    \rho(g) \circ W \; \equiv \; W \circ \rho(g) && \text{(equivalent as linear operators).}
\end{align*}
In other words, $W$ is an intertwining operator from $(\rho, V)$ to itself. Then,
\begin{enumerate}
    \item There exists sub-representations $V_{1},\ldots,V_{r}$ such that
    \begin{align*}
        V \; \cong \; \bigoplus_{i=1}^{r} V_{i}
    \end{align*}
     where $\set{V_{1},\ldots,V_{r}}$ are orthogonal subspaces (with respect to the inner product mentioned above).\newline
     Moreover, for every $i \in [r]$, the following holds. There exists an irreducible representation $U_{i}$ and an integer $m_{i} \geq 1$ such that
    \begin{align*}
        V \; \cong \; \bigoplus_{j=1}^{m_{i}} \; V_{i,j} \quad \text{ and } \quad V_{i,1} \cong \cdots \cong V_{i,m_{i}} \cong U_{i}.
    \end{align*}
    The subrepresentation $V_{i}$ is called as the \emph{isotypic component} of $(\rho, V)$ corresponding to the irreducible representation $U_{i}$. This means
    \begin{align*}
        \dim(V_{i}) = m_{i} \cdot \dim(U_{i}) \quad \text{ and } \quad \dim(V) = \sum_{i=1}^{r} m_{i} \cdot \dim(U_{i}).
    \end{align*}

    \item Fix any $i \in [r]$. Let $\mathcal{B}_{i,1}$ be an ordered basis for $V_{i,1}$. For every $2 \leq j \leq m_{i}$, there exists an unique isomorphism $\mathcal{L}_{i,j} : V_{i,1} \to V_{i,j}$. Let $\mathcal{B}_{i,j}$ denote the image of $\mathcal{B}_{i,1}$ under $\mathcal{L}_{i,j}$, and $\mathcal{B}_{i,j}$ is a basis for $V_{i,j}$. Let $\mathcal{B}_{i}$ be an ordered basis for $V_{i}$ obtained by concatenating $\mathcal{B}_{i,1},\ldots, \mathcal{B}_{i,m_i}$ in that order. Similarly, $\mathcal{B}$ obtained by concatenating $\mathcal{B}_{1},\ldots, \mathcal{B}_r$ is an ordered basis for $V$.

    \item The linear map $W$ preserves the isotypic components, i.e. for each $i \in [r]$, $W|_{V_{i}} \in \mathrm{End}(V_{i}, V_{i})$. In particular, under the ordered basis $\mathcal{B}$, the map $W$ (when viewed as a $\dim(V) \times \dim(V)$ matrix) has the following structure:
    \begin{align*}
        W \; = \; \bigoplus_{i=1}^{r} W_{i}, && (\text{direct sum of matrices})
    \end{align*}
    where for each $i \in [r]$, $W_{i}$ is a $\dim(V_{i}) \times \dim(V_{i})$ dimensional matrix.\newline
    Furthermore, for every $i \in [r]$, there exists a unique $m_{i} \times m_{i}$ dimensional matrix $A_{i}$ such that
    \begin{align*}
        W_{i} \; = \; A_{i} \, \otimes \, \mathrm{Id}_{\dim(U_{i})}, && \text{where } \, \mathrm{Id_{k}} \text{ is the $k \times k$ dimensional identity matrix.}
    \end{align*}

    \item Fix any $i \in [r]$. For every non-zero $v \in V_{i,1}$, define a $\mathbb{C}$-space $Y_{i,v} := \mathrm{span}\set{v,\mathcal{L}_{i,2}(v),\ldots,\mathcal{L}_{i,m_{i}}(v)}$. Then $W_{i}|_{Y_{i,v}} \in \mathrm{End}(Y_{i,v}, Y_{i,v})$ and $W_{i}|_{Y_{i,v}} = A_{i}$ when we represent the linear map in the ordered basis $(v,\mathcal{L}_{i,2}(v),\ldots,\mathcal{L}_{i,m_{i}}(v)).$
\end{enumerate}
\end{proposition}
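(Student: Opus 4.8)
The proposition is entirely classical, and the plan is to assemble it from three textbook facts: Maschke's theorem, Schur's lemma over $\mathbb{C}$, and the intrinsic characterization of isotypic components. For Part 1, I would use the $G$-invariant inner product to note that the orthogonal complement of any subrepresentation is again a subrepresentation; iterating, $V$ decomposes as an orthogonal direct sum of irreducible subrepresentations. Collecting the summands isomorphic to a fixed irreducible $U_i$ into a space $V_i$ yields pairwise orthogonal subspaces, being orthogonal sums of mutually orthogonal pieces. That this grouping is canonical — so that $V_i$ really is the isotypic component — follows because the sum of all subrepresentations of $V$ isomorphic to $U_i$ meets the analogous sum for any $U_j\not\cong U_i$ trivially ($\mathrm{Hom}_G(U_i,U_j)=0$ by Schur), which forces that intrinsic sum to coincide with our $V_i$; setting $m_i:=\dim(V_i)/\dim(U_i)$ then gives the stated dimension counts.

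For Parts 2 and 3, the key device is, for each $i$, a $G$-isomorphism $\Phi_i:U_i\otimes\mathbb{C}^{m_i}\xrightarrow{\ \sim\ }V_i$ assembled from a fixed decomposition $V_i=\bigoplus_{j=1}^{m_i}V_{i,j}$ with each $V_{i,j}\cong U_i$: on $U_i\otimes e_j$ it is the composite of a chosen isomorphism $U_i\cong V_{i,1}$ with $\mathcal{L}_{i,j}$ (taking $\mathcal{L}_{i,1}=\mathrm{id}$). Since $W$ commutes with $\rho$, it carries each subrepresentation to an isomorphic one, hence — by the canonicity from Part 1 — preserves every $V_i$, so $W=\bigoplus_i W_i$ with $W_i:=W|_{V_i}$. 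Conjugating $W_i$ through $\Phi_i$ gives a $G$-equivariant endomorphism of $U_i\otimes\mathbb{C}^{m_i}$, and Schur's lemma identifies $\mathrm{End}_G(U_i\otimes\mathbb{C}^{m_i})$ with $(\mathbb{C}\cdot\mathrm{id}_{U_i})\otimes\mathrm{End}(\mathbb{C}^{m_i})$; thus this endomorphism equals $\mathrm{id}_{U_i}\otimes A_i$ for a unique $A_i\in\mathbb{C}^{m_i\times m_i}$. Reading this back in the concatenated ordered basis $\mathcal{B}_i=(\mathcal{B}_{i,1},\ldots,\mathcal{B}_{i,m_i})$, with the copy index $j$ outermost and the within-copy index innermost, turns the operator $\mathrm{id}_{U_i}\otimes A_i$ into exactly the matrix $A_i\otimes\mathrm{Id}_{\dim(U_i)}$, which is the asserted block form; the basis $\mathcal{B}$ of $V$ is the concatenation of the $\mathcal{B}_i$.

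Part 4 then follows by restricting the same picture to the slice $(\mathbb{C}w)\otimes\mathbb{C}^{m_i}$, where $w\in U_i$ is the preimage under the chosen isomorphism of a given nonzero $v\in V_{i,1}$: its image under $\Phi_i$ is exactly $Y_{i,v}=\mathrm{span}\{v,\mathcal{L}_{i,2}(v),\ldots,\mathcal{L}_{i,m_i}(v)\}$, this slice is stable under $\mathrm{id}_{U_i}\otimes A_i$, and the matrix of the induced map in the ordered basis $(v,\mathcal{L}_{i,2}(v),\ldots,\mathcal{L}_{i,m_i}(v))$ is $A_i$ by construction. I do not anticipate a genuine obstacle, since every ingredient is standard; the only places that need care are the bookkeeping that pins down the identification $V_i\cong U_i\otimes\mathbb{C}^{m_i}$ and the ordering convention so that the block appears as $A_i\otimes\mathrm{Id}_{\dim(U_i)}$ rather than $\mathrm{Id}_{\dim(U_i)}\otimes A_i$, and the reading of ``unique isomorphism $\mathcal{L}_{i,j}$'' — which by Schur's lemma is only canonical up to a nonzero scalar and should be understood as a fixed choice of $G$-isomorphism made once and for all.
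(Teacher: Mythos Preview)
The paper does not actually prove this proposition: it is stated as a collection of ``standard facts'' with an implicit reference to Sagan's textbook (Chapter~1), and no argument is given. Your proposal supplies exactly the classical proof one would find there --- Maschke plus Schur plus the canonical isotypic decomposition, with the $U_i\otimes\mathbb{C}^{m_i}$ model doing the bookkeeping --- and is correct; your caveat that ``unique isomorphism $\mathcal{L}_{i,j}$'' must be read as ``unique up to nonzero scalar, with a choice fixed once and for all'' is well taken and is indeed how the paper uses it downstream.
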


\noindent
The following corollary is immediate from the third item of \Cref{prop:standard-rep-theory}.\\

\begin{corollary}\label{coro:singular-value-indexing}
We follow the same notation from \Cref{prop:standard-rep-theory}. Suppose $\set{\beta^{i}_{1},\ldots,\beta^{i}_{m_{i}}}$ is the multi-set of singular values of $A_{i}$ where each $\beta^{i}_{j} \in \mathbb{C}$. Then using \Cref{eqn:SVD-tensor}, we get that in the multi-set of singular values of $W_{i}$, the frequency of $\beta^{i}_{j}$ is equal to the frequency of $\beta^{i}_{j}$ in the multi-set $\set{\beta^{i}_{1},\ldots,\beta^{i}_{m_{i}}}$ times $\dim(U_{i})$. 
\end{corollary}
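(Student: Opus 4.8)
The plan is to read off the claim directly from the tensor structure $W_i = A_i \otimes \mathrm{Id}_{\dim(U_i)}$ established in Part (3) of \Cref{prop:standard-rep-theory}, combined with the behavior of singular value decompositions under tensor products recorded in \Cref{eqn:SVD-tensor}. First I would write the SVD of the small matrix as $A_i = U_i' D_i V_i'^{T}$, where $D_i$ is the diagonal matrix whose entries are the multiset $\{\beta^i_1,\ldots,\beta^i_{m_i}\}$ (allowing repetitions), and $U_i', V_i'$ are $m_i \times m_i$ orthonormal matrices. Taking the trivial SVD $\mathrm{Id}_{\dim(U_i)} = \mathrm{Id}\cdot \mathrm{Id} \cdot \mathrm{Id}^{T}$, \Cref{eqn:SVD-tensor} gives
\begin{align*}
W_i \;=\; A_i \otimes \mathrm{Id}_{\dim(U_i)} \;=\; (U_i' \otimes \mathrm{Id})\,(D_i \otimes \mathrm{Id})\,(V_i' \otimes \mathrm{Id})^{T}.
\end{align*}
Since $U_i'\otimes \mathrm{Id}$ and $V_i'\otimes \mathrm{Id}$ are orthonormal (a tensor product of orthonormal matrices is orthonormal) and $D_i \otimes \mathrm{Id}$ is diagonal with non-negative entries, this is a valid SVD of $W_i$, so the singular values of $W_i$ are exactly the diagonal entries of $D_i \otimes \mathrm{Id}_{\dim(U_i)}$.

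Next I would simply count multiplicities. The diagonal of $D_i \otimes \mathrm{Id}_{\dim(U_i)}$ consists of each diagonal entry of $D_i$ — i.e., each element of the multiset $\{\beta^i_1,\ldots,\beta^i_{m_i}\}$ — repeated $\dim(U_i)$ times. Hence for any value $\beta$, its multiplicity among the singular values of $W_i$ equals $\dim(U_i)$ times its multiplicity in the multiset $\{\beta^i_1,\ldots,\beta^i_{m_i}\}$, which is precisely the statement of the corollary. One small point worth spelling out is that the $\beta^i_j$ are genuine singular values and hence real and non-negative even though $A_i$ is a priori a complex matrix, so that $D_i$ really is a legitimate diagonal factor of an SVD; this is just the definition of singular values, so no work is needed.

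I do not anticipate a genuine obstacle here: the corollary is an immediate bookkeeping consequence of Part (3) of \Cref{prop:standard-rep-theory} and the tensor-SVD identity \Cref{eqn:SVD-tensor}, both of which are already available. The only thing to be mildly careful about is the convention for ordering/indexing singular values and making sure the multiset language is used consistently (so that repeated singular values of $A_i$ are counted with the correct multiplicity), but this is purely notational.
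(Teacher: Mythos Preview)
Your proposal is correct and matches the paper's approach exactly: the paper declares the corollary ``immediate from the third item of \Cref{prop:standard-rep-theory}'' together with \Cref{eqn:SVD-tensor}, and what you have written is precisely the spelled-out version of that immediate argument.
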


\paragraph{}Now we turn to the representation theory for $\mathrm{Sym}_{n}$. In particular, we will be considering the representation of $\mathrm{Sym}_{n}$ on the space of functions on a slice of $\sgrid^{n}$.

\paragraph{\underline{Space of functions on a slice}}For any partition $\lambda \in \mathcal{P}(n)$, let $\mathbb{M}^{\lambda}$ denote the $\mathbb{C}$-vector space of functions over the slice $\sgrid^{n}_{\lambda}$ i.e.
\begin{align*}
    \mathbb{M}^{\lambda} \; = \; \set{f:\sgrid^{n}_{\lambda} \to \mathbb{C}}.
\end{align*}
It is easy to see that $\dim(\mathbb{M}^{\lambda}) = |\sgrid^{n}_{\lambda}|$. There is a natural action of the symmetric group $\mathrm{Sym}_{n}$ on $\mathbb{M}^{\lambda}$: For all $\pi \in \mathrm{Sym}_{n}$ and for all $f \in \mathbb{M}^{\lambda}$,
\begin{align*}
    (\pi f)(\mathbf{x}) \, = \, f(\pi^{-1} \mathbf{x}), && \text{ where } \; \pi^{-1} \mathbf{x} = (x_{\pi^{-1}(1)},\ldots,x_{\pi^{-1}(n)})
\end{align*}

\paragraph{\underline{Representation of $\mathrm{Sym}_{n}$}}Let $(\rho, \mathbb{M}^{\mu})$ denote the following $\mathbb{C}$ representation of $\mathrm{Sym}_{n}$:
\begin{align*}
    \rho: \mathrm{Sym}_{n} \to \mathrm{GL}(\mathbb{M}^{\mu})
\end{align*}
\begin{equation}\label{eqn:rep-definition}
     (\rho(\pi) \; f)(\mathbf{x}) \; = \; f(\pi^{-1}\mathbf{x}), \quad \quad \forall \; \pi \in \mathrm{Sym}_{n}, \quad \forall \; f \in \mathbb{M}^{\mu}.
\end{equation}

\paragraph{\underline{Invariant inner product}}Next we mention an inner product $\langle \cdot, \cdot \rangle$ on the space $\mathbb{M}^{\mu} \times \mathbb{M}^{\mu}$ which will be invariant under the representation $\rho$. The inner product is defined as follows:
\begin{align*}
    \langle \cdot, \cdot \rangle : \mathbb{M}^{\mu} \times \mathbb{M}^{\mu} \to \mathbb{R}_{\geq 0}    
\end{align*}
\begin{equation}\label{eqn:inner-prod}
    \langle f,g \rangle \; = \; \mathbb{E}_{\mathbf{x} \sim \sgrid^{n}_{\mu}} [f(\mathbf{x}) \cdot g(\mathbf{x})] \; = \; \dfrac{1}{N} \sum_{\mathbf{x} \in \sgrid^{n}_{\mu}} [f(\mathbf{x}) \cdot g(\mathbf{x})].
\end{equation}
It is not hard to see that the above inner product is invariant under the representation $\rho$, i.e., for every $f,g \in \mathbb{M}^{\mu}$, the following holds:
\begin{align*}
    \langle \rho(\pi)f, \; \rho(\pi)g \rangle \; = \; \langle f, \; g \rangle.
\end{align*}

This representation is quite well-studied in the representation theory for finite groups. There is a complete understanding of the decomposition of $\mathbb{M}^{\lambda}$ into its irreducible representations. In particular, the irreducible representations of $(\rho, \mathbb{M}^{\mu})$ are given by the \emph{Specht modules} $\mathbb{S}^{\lambda} \subset \mathbb{M}^{\lambda}$. See \cite[Chapter 2]{Sagan} for an excellent exposition on the irreducible decompositions. Next, we state \Cref{thm:instantiation-rep-thy-sym}, which we will use to prove \Cref{thm:more-general-matrix-eigenvalue}. We do not require specific details of the irreducible representation, so we only state what is sufficient for our purpose.\\

\begin{theorem}\label{thm:instantiation-rep-thy-sym}
Fix any $n,s \in \mathbb{N}$ where $n$ is divisible by $s$ and let $\mu = (n/s,\ldots,n/s) \in \mathcal{P}(n)$. The following holds:
\begin{enumerate}
    \item The subrepresentations of $\mathbb{M}^{\mu}$ are indexed by $\lambda \in \mathcal{P}(n)$ and in particular, there exists subrepresentations $V_{\lambda,1},\ldots,V_{\lambda,m_{\lambda}}$ for an integer $m_{\lambda} \in \mathbb{N}$ such that:
    \begin{align*}
        \mathbb{M}^{\mu} \; \cong \; \bigoplus_{\lambda \trianglerighteq \mu} \; \bigoplus_{j=1}^{m_{\lambda}} V_{\lambda,j},
    \end{align*}
    where $V_{\lambda,1} \cong \cdots \cong V_{\lambda,m_{\lambda}}$.
    \begin{enumerate}
        \item For $\lambda = (n)$, $m_{\lambda} = 1$ and $\dim(V_{\lambda,1}) = 1$. This corresponds to the trivial subrepresentation spanned by the function that takes the value $1$ at each point of $\mathcal{S}^n_\mu$.
    \end{enumerate}

    \item  Let $c \in \mathbb{N}$ denote an absolute constant $> 1$. For every partition $\lambda = (\lambda_{1},\ldots,\lambda_{\ell}) \in \mathcal{P}(n)$, we have
    \begin{enumerate}
        \item If $\lambda_{2} > c$, then $\dim(V_{\lambda,1}) \, = \cdots = \, \dim(V_{\lambda,m_{\lambda}}) \; \geq \; n^{0.8c}$.

        \item If $\lambda_{2} \leq c$, then $m_{\lambda} \; \leq \; s^{cs}.$ As $s$ and $c$ are constants, $m_{\lambda} = \bigO_{s,c}(1)$.
    \end{enumerate}

    \item For every constant $c \in \mathbb{N}$, the following holds. Fix any $\lambda = (\lambda_{1},\ldots,\lambda_{\ell}) \in \mathcal{P}(n)$ such that $\lambda_{2} \leq c$. Then there exists vectors $u^{\lambda}_{1},\ldots,u^{\lambda}_{m_{\lambda}} \in \mathbb{M}^{\mu}$ where for every $j \in [m_{\lambda}]$, the vector $u^{\lambda}_{j} \in V_{\lambda,j}$, satisfying the following conditions:
    \begin{enumerate}
        \item For every $j > 1,$ $u^\lambda_j$ is the image of $u^\lambda_1$ under the unique isomorphism between representations $V_{\lambda,1}$ and $V_{\lambda,j}$.
        \item For every $j \in [m_{\lambda}]$, $\| u^{\lambda}_{j} \|_{2} = \Theta_{s,c}(1)$. Here, the norm is with respect to the invariant inner product stated in \Cref{eqn:inner-prod}.
        \item If $\mathcal{D}$ is a probability distribution on the balanced multislice $\sgrid^{n}_{\mu}$ such that $\mathcal{D}$ is an $\varepsilon$-almost $k$-wise independent and uniform distribution for some $k \geq cs$, then,
        \begin{align*}
            \bigg|\mathbb{E}_{\mathbf{x} \sim \mathcal{D}}[u^{\lambda}_{j}(\mathbf{x})] \bigg| \; \leq \; \bigO_{s,c}(\varepsilon), && \text{ for all } \; j \in [m_{\lambda}].
        \end{align*}
        \item The volume of the parallelepiped (see \Cref{defn:volume}) formed by $\set{u^{\lambda}_{j}}_{j=1}^{m_{\lambda}}$ is at least a constant, i.e., $\mathrm{Vol}(u^{\lambda}_{1},\ldots,u^{\lambda}_{m_{\lambda}}) \; = \; \Omega_{s,c}(1)$.
    \end{enumerate}
\end{enumerate} 
\end{theorem}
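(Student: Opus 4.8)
The plan is to treat the three parts of \Cref{thm:instantiation-rep-thy-sym} essentially independently, since each needs different tools. For Part~1, I would identify $\mathbb{M}^{\mu}$ (with the action in \Cref{eqn:rep-definition}) with Young's permutation module $M^{\mu}$ of $\mathrm{Sym}_n$ --- the $\mathrm{Sym}_n$-set $\sgrid^n_\mu$ is exactly the set of words of content $\mu$ --- and then invoke Young's rule: $M^{\mu}\cong\bigoplus_{\lambda\trianglerighteq\mu}K_{\lambda\mu}\,\mathbb{S}^{\lambda}$, where $\mathbb{S}^\lambda$ is the Specht module and $K_{\lambda\mu}$ the Kostka number (see \cite[Ch.~2]{Sagan}). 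Taking $V_{\lambda,j}\cong\mathbb{S}^\lambda$ and $m_\lambda:=K_{\lambda\mu}$ gives the stated decomposition, and orthogonality of distinct isotypic components with respect to the invariant inner product \Cref{eqn:inner-prod} is standard. For $\lambda=(n)$ one has $K_{(n)\mu}=1=\dim\mathbb{S}^{(n)}$ and $\mathbb{S}^{(n)}$ is the trivial module, realized as the span of the all-ones function, which is 1(a).

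For Part~2 I would first record that $\lambda\trianglerighteq\mu=(n/s)^s$ forces (the index-$s$ dominance inequality must be an equality) $\lambda$ to have at most $s$ parts with $\lambda_1\ge n/s$. For 2(b): if $\lambda_2\le c$ then at most $(s-1)c$ cells of $\lambda$ lie below the first row, in an SSYT of content $\mu$ every such cell carries a value in $\{2,\dots,s\}$ (it is strictly above a cell $\ge 1$), and once these $\le (s-1)c$ cells are filled the content of row $1$ is forced with a unique weakly increasing arrangement; hence $m_\lambda=K_{\lambda\mu}\le s^{(s-1)c}\le s^{cs}$. For 2(a): if $\lambda_2\ge c+1$, delete corner cells from rows $\ge 3$ one at a time and apply the branching rule to get $f_\lambda\ge f_{(\lambda_1,\lambda_2)}$; plugging $\lambda_1\ge n/s$ and $c+1\le\lambda_2\le\lambda_1$ into the two-row formula $f_{(a,b)}=\binom{a+b}{b}-\binom{a+b}{b-1}$ and splitting into $\lambda_2\le\lambda_1/2$ (where $f_{(\lambda_1,\lambda_2)}\ge\tfrac12\binom{\lambda_1}{c+1}$) and $\lambda_2>\lambda_1/2$ (where it is exponentially large in $\lambda_1$) shows $f_\lambda\ge n^{0.8c}$ for all sufficiently large $n$.

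Part~3 is the substantive part. For $\lambda$ with $\lambda_2\le c$ --- so $\lambda$ has $\le s$ rows and only $r:=(s-1)c=O_{s,c}(1)$ cells below row $1$ --- I would use the explicit ``special vectors'' $\{\chi_T\}$ of \cite{complexity-symmetric-group} (\Cref{defn:chi-T}), a family of size $m_\lambda$ each of which is a bounded product of $O_{s,c}(1)$ indicators $[x_i=\sigma]$, hence an $r$-junta; set $u^\lambda_1$ to be the orthogonal projection of a canonical $\chi_{T_1}$ onto $V_{\lambda,1}$ and $u^\lambda_j:=\mathcal{L}_{\lambda,j}(u^\lambda_1)$, which gives (a) by construction and, by equivariance, realizes $u^\lambda_j$ as the $V_{\lambda,j}$-projection of the corresponding $\chi_T$. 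Then (b) follows from a second-moment estimate of $\|\chi_T\|_2$ (the $O(1)$ indicator factors are nearly independent since for large $n$ the marginal of $\mathrm{Unif}(\sgrid^n_\mu)$ on $O(1)$ coordinates is $(1\pm O(1/n))$-uniform) together with the observation that projecting onto $V_{\lambda,j}$ removes only $O_{s,c}(1/n)$-mass; with the junta property this also gives $\|u^\lambda_j\|_\infty=O_{s,c}(1)$. For (c): writing $T$ for the support of $u^\lambda_j$, of size $\le r\le cs\le k$, we have $\mathrm{SD}(\mathcal{D}_T,\mathrm{Unif}(\sgrid^T))\le\varepsilon$, so $|\mathbb{E}_{\mathcal{D}}[u^\lambda_j]-\mathbb{E}_{\mathrm{Unif}(\sgrid^T)}[u^\lambda_j]|\le\varepsilon\|u^\lambda_j\|_\infty=O_{s,c}(\varepsilon)$, and since the $\chi_T$ are built to be harmonic, $\mathbb{E}_{\mathrm{Unif}(\sgrid^T)}[u^\lambda_j]=0$ (if one only knows $\mathbb{E}_{\sgrid^n_\mu}[u^\lambda_j]=0$ there is an extra harmless $O_{s,c}(1/n)$). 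For (d): the Gram matrix $\mathcal{G}=(\langle u^\lambda_j,u^\lambda_{j'}\rangle)_{j,j'}$ has entries that, up to $O_{s,c}(1/n)$, are fixed ratios of multinomials converging as $n\to\infty$ to an $m_\lambda\times m_\lambda$ matrix $\mathcal{G}_\infty$; since the limiting functions on $\sgrid^r$ form a basis of the $\lambda$-component, $\mathcal{G}_\infty$ is nonsingular, so $\det\mathcal{G}=\Omega_{s,c}(1)$ and $\mathrm{Vol}(u^\lambda_1,\dots,u^\lambda_{m_\lambda})=\sqrt{\det\mathcal{G}}=\Omega_{s,c}(1)$ (this is \Cref{lem:special-vecs}).

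The main obstacle I expect is exactly the volume bound (d) and its stability under projection to a single isotypic copy: on a multislice the special vectors are genuinely non-orthogonal --- this is where multiplicity $>1$, absent in the Boolean slice, bites --- so one cannot avoid an explicit determinantal estimate, and one must also verify that passing from $\mathbb{M}^\mu$ down to $V_{\lambda,j}$ costs only $O(1/n)$ in norms and Gram entries. Parts~1--2 and the ``junta-shrinkage'' half of 3(c) should be routine by comparison; the bulk of the effort (the content of \Cref{subsec:alt-special-vectors}--\Cref{subsec:proof-instantiation}) goes into pinning down the $\chi_T$'s and controlling their Gram matrix.
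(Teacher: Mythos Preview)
Your treatment of Parts~1 and~2 is essentially what the paper does (Young's rule gives $m_\lambda=K_{\lambda\mu}$ and $\dim V_{\lambda,j}=f_\lambda$; for 2(a) the paper simply cites \cite{Ellis-Friedgut-Pilpel-2011} rather than giving your branching-rule reduction, and for 2(b) the counting is the same). The real divergence is in Part~3, in two places.

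First, you propose to \emph{project} $\chi_{T}$ onto $V_{\lambda,j}$ and then argue that this ``removes only $O_{s,c}(1/n)$ mass.'' But no projection is needed: by construction (see \Cref{defn:chi-T} and the discussion in \Cref{subsec:proof-instantiation}), $\chi_T$ is precisely the image of a fixed polytabloid $v\in\mathbb{S}^\lambda$ under the embedding $\Theta_T:\mathbb{S}^\lambda\hookrightarrow\mathbb{M}^\mu$ indexed by $T\in\mathrm{SSYT}(\lambda,\mu)$, so $\chi_T\in V_{\lambda,T}$ already and the paper takes $u^\lambda_j=\chi_T$ directly. This is what gives 3(a) for free, and it means your ``stability under projection'' worry is a phantom. (Relatedly, $\chi_T$ is a \emph{signed sum} of $|C_\lambda|\le (s!)^c$ indicator functions, not a product of indicators; this is why $\|\chi_T\|_\infty\le (s!)^c$ is immediate and gives the upper half of 3(b), with the lower half falling out of 3(d).)

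Second, and more seriously, your plan for 3(d) --- show the Gram matrix converges as $n\to\infty$ to some $\mathcal{G}_\infty$ and then assert $\mathcal{G}_\infty$ is nonsingular because ``the limiting functions on $\mathcal{S}^r$ form a basis of the $\lambda$-component'' --- has a genuine gap at exactly the hard step. You have not said what ``the $\lambda$-component'' of $\mathbb{C}[\mathcal{S}^r]$ means once the multislice constraint is dropped, nor why the $m_\lambda$ many $\chi_T$'s remain linearly independent there; this is precisely the content that needs proof. The paper's argument is different and fully explicit: it defines a total order on $\mathrm{SSYT}(\lambda,\mu)$ and, for each $S$, exhibits a set $A_S\subseteq\mathcal{S}^n_\mu$ of density $\Omega_{s,c}(1)$ on which $\chi_S\equiv 1$ while $\chi_T\equiv 0$ for every $T>S$ (\Cref{clm:SSYT-order} is the combinatorial lemma that makes this work). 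This ``upper-triangular with $\Omega(1)$ diagonal'' structure directly lower-bounds each $\|\tilde\chi_S\|_2$ and hence the volume, with no limiting argument and no appeal to an unproved basis claim. If you want to salvage your Gram-matrix route, you would still need something equivalent to \Cref{clm:SSYT-order} to certify nonsingularity of $\mathcal{G}_\infty$, at which point the paper's direct argument is shorter.
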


\noindent
\begin{remark}
    \label{rem:instantiation-rep-thy-sym}
    We note that items $1$ and $2$ in the \Cref{thm:instantiation-rep-thy-sym} are standard results in the representation theory of $\mathrm{Sym}_{n}$, or simple consequences thereof. Also see \Cref{thm:irrep-decomposition} for more details on item $1$. The new technical observations we make are in proving items 3.(c) and 3.(d) of \Cref{thm:instantiation-rep-thy-sym}. As we will elaborate later in \Cref{subsec:alt-special-vectors}, our proof for item $3$ of \Cref{thm:instantiation-rep-thy-sym} uses \cite{complexity-symmetric-group}. In this work, we analyze a set of functions described already in \cite{complexity-symmetric-group} and show that they satisfy additional properties, which allows us to prove our main theorem (\Cref{thm:more-general-matrix-eigenvalue}). 
    
\end{remark}

We will first show how \Cref{thm:instantiation-rep-thy-sym} implies \Cref{thm:more-general-matrix-eigenvalue}. We defer the proof of \Cref{thm:instantiation-rep-thy-sym} to \Cref{subsec:proof-instantiation}. We will also require the following lemma on estimating the singular values of small matrices. It says that if we have a set of linearly independent vectors whose parallelepiped has a significant volume, then they are ``useful'' in estimating the singular values.\\

\begin{restatable}[Estimating singular values using special vectors]{lemma}{estimatingeigenvalue}\label{lemma:estimating-eigenvalue}
As a special case, in this lemma, we will work with the standard inner product in Euclidean space $\mathbb{R}^r.$ The lengths, volumes etc. below are defined using this standard inner product.

Let $Q \in \mathbb{R}^{r \times r}$ be a matrix and $\set{v_{1},\ldots,v_{r}}$ be a set of linearly independent vectors satisfying the following conditions:
\begin{enumerate}
    \item For each $j \in [r]$, $\| v_{j} \|_{2} \leq m$ for some $m \in \mathbb{R}_{> 0}$.
    \item For each $j \in [r]$, $\| Qv_{j} \|_{2} \leq q$ for some $q \in \mathbb{R}_{> 0}$.
    \item The volume (recall \Cref{defn:volume}) $\mathrm{Vol}(v_{1},\ldots,v_{r}) \geq \tau$.
\end{enumerate}
Then\footnote{Recall the notation in \Cref{eqn:spectral-norm}.} $\| Q \|_{2} \leq r \cdot  \dfrac{\max\set{m^{r},1} \cdot r!}{\tau} \cdot q$.
\end{restatable}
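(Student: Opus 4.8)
The plan is to bound $\|Q\|_2$ by showing that $Q$ cannot stretch \emph{any} unit vector by much, using the fact that the $v_j$'s, while not orthonormal, span $\mathbb{R}^r$ in a ``well-conditioned'' way quantified by the volume lower bound. First I would recall that $\|Q\|_2 = \max_{\|x\|_2 = 1}\|Qx\|_2$, so it suffices to write an arbitrary unit vector $x$ as a linear combination $x = \sum_{j=1}^r c_j v_j$ and control the coefficients $c_j$: then $\|Qx\|_2 \le \sum_j |c_j|\,\|Qv_j\|_2 \le q \sum_j |c_j|$ by condition (2) and the triangle inequality, and we are reduced to bounding $\sum_j |c_j|$ in terms of $r$, $m$, and $\tau$.

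The heart of the argument is therefore a quantitative bound on the coefficients in the expansion of a unit vector in the (non-orthonormal) basis $\{v_1,\dots,v_r\}$. I would do this via Cramer's rule: if $\Lambda$ is the matrix whose columns are $v_1,\dots,v_r$, then $c = \Lambda^{-1} x$, and $c_j = \det(\Lambda^{(j)})/\det(\Lambda)$ where $\Lambda^{(j)}$ is $\Lambda$ with its $j$-th column replaced by $x$. By \Cref{defn:volume}, $|\det(\Lambda)| = \mathrm{Vol}(v_1,\dots,v_r) \ge \tau$. For the numerator, I would use Hadamard's inequality: $|\det(\Lambda^{(j)})| \le \|x\|_2 \prod_{i \ne j}\|v_i\|_2 \le 1 \cdot m^{r-1} \le \max\{m^r,1\}$ (the $\max$ with $1$ handles the case $m < 1$, and the slightly lossy exponent $r$ instead of $r-1$ is harmless). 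Hence $|c_j| \le \max\{m^r,1\}/\tau$ for each $j$, so $\sum_j |c_j| \le r \cdot \max\{m^r,1\}/\tau$, giving $\|Qx\|_2 \le q \cdot r \cdot \max\{m^r,1\}/\tau$. This already essentially matches the claimed bound; the extra factor of $r!$ in the statement suggests the authors instead expand $\det(\Lambda^{(j)})$ by a cruder cofactor/permutation-sum estimate (each of the $r!$ terms bounded by $m^{r-1}$) rather than invoking Hadamard, which would also work and is more elementary.

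The main obstacle — though it is minor — is simply being careful about degenerate regimes: ensuring the bound is stated correctly when $m < 1$ (handled by the $\max\{m^r,1\}$), and making sure the volume definition being used (orthogonalize-and-multiply-lengths, equivalently $|\det \Lambda|$) is exactly the one invoked, so that $|\det(\Lambda)| \ge \tau$ is immediate rather than requiring an inequality between different notions of ``volume.'' Since \Cref{defn:volume} explicitly records that the volume equals $|\det(\Lambda)|$, this is already set up. I would then just assemble: $\|Q\|_2 \le q \sum_j |c_j| \le r \cdot \frac{\max\{m^r,1\}\cdot r!}{\tau}\cdot q$, matching the statement (with $r!$ absorbing whichever crude determinant estimate is used in place of Hadamard). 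No representation theory or spectral machinery is needed here — this is a purely linear-algebraic lemma whose only subtlety is the bookkeeping of the determinant bound.
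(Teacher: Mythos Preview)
Your proposal is correct and follows essentially the same route as the paper: write an arbitrary unit vector in the basis $\{v_1,\dots,v_r\}$, use Cramer's rule with $|\det(\Lambda)| \ge \tau$, bound the numerator determinants, and finish with the triangle inequality on $\|Qx\|_2$. You even correctly anticipated the source of the $r!$: the paper bounds $|\det(\Lambda^{(j)})|$ via the crude permutation expansion $r!\cdot \|\Lambda^{(j)}\|_\infty^r \le r!\cdot \max\{m^r,1\}$ (using $\|v_j\|_\infty \le \|v_j\|_2 \le m$ and $\|x\|_\infty \le 1$) rather than Hadamard's inequality, so your version actually gives a sharper constant.
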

\begin{proof}[Proof of \Cref{lemma:estimating-eigenvalue}]
 By definition of spectral norm,
\begin{align*}
    \| Q \|_{2} \; = \; \sup_{\mathbf{x}: \| \mathbf{x}\|_{2}=1} \| Q \mathbf{x} \|_{2}, && \text{ where } \; \mathbf{x} \in \mathrm{span}(v_{1},\ldots,v_{r}).
\end{align*}
Choose an arbitrary $\mathbf{x} \in \mathrm{span}(v_{1},\ldots,v_{r})$ with $\| \mathbf{x} \|_{2} = 1$. We know there exists coefficients $\alpha_{1},\ldots,\alpha_{r} \in \mathbb{C}$ such that $\mathbf{x} = \sum_{i=1}^{r} \, \alpha_{i} v_{i}$. We first upper bound $|\alpha_{i}|$ for all $i \in [r]$.

\noindent
Let $\Lambda \in \mathbb{R}^{r \times r}$ denote the matrix whose columns are $v_{1},\ldots,v_{r}$. Let $\bm{\alpha} = (\alpha_{1},\ldots,\alpha_{r})$. Then $\Lambda \bm{\alpha} = \mathbf{x}$. In other words,
\begin{align*}
    \alpha_{i} \; = \; \dfrac{\det(\Lambda_{i})}{\det(\Lambda)},
\end{align*}
where $\Lambda_{i}$ is the matrix whose $i^{th}$ column $v_{i}$ is replaced by $\mathbf{x}$. We have $\| \mathbf{x} \|_{\infty} \leq \| 
\mathbf{x} \|_{2} = 1$ and for every $j \in [r]$, $\| v_{j} \|_{\infty} \leq \| v_{j} \|_{2} \leq m$. Combining these two, we have $\| \Lambda_{i} \|_{\infty} \leq \max\set{m,1}$. This implies that $\det(\Lambda_{i}) \leq r! \cdot \| \Lambda_{i} \|_{\infty}^{r} \leq \max\set{m^{r},1} \cdot r!$. Recall also that the volume of the parallelepiped spanned by $v_1,\ldots, v_r$ is given by $|\det(\Lambda)|.$ Thus we have,
\begin{align*}
    |\alpha_{i}| \; \leq \; \dfrac{\max\set{m^{r},1} \cdot r!}{\tau}, && \text{ for all } \; i \in [r].
\end{align*}
Now let us consider $\| Q \mathbf{x}  \|_{2}$:
\begin{align*}
    \| Q \mathbf{x} \|_{2} \; \leq \; \sum_{i=1}^{r} \, |\alpha_{i}| \cdot \|Q v_{i} \|_{2} \; \leq \; r \cdot  \dfrac{\max\set{m^{r},1} \cdot r!}{\tau} \cdot q.
\end{align*}
This finishes the proof of \Cref{lemma:estimating-eigenvalue}.
\end{proof}

\subsection{Proof of Our Main Technical Theorem (\Cref{thm:more-general-matrix-eigenvalue})}\label{subsec:main-thm-proof}
In this subsection, we give the proof of our main technical theorem (\Cref{thm:more-general-matrix-eigenvalue}), assuming \Cref{thm:instantiation-rep-thy-sym}. We defer the proof of \Cref{thm:instantiation-rep-thy-sym} to \Cref{subsec:proof-instantiation}.\\

\begin{proof}[Proof of \Cref{thm:more-general-matrix-eigenvalue} (using \Cref{thm:instantiation-rep-thy-sym})]

The first condition on the matrix $M$ implies that $M$ commutes with the representation $(\rho, \mathbb{M}^{\mu})$ (see \Cref{eqn:rep-definition}). Using the third item of \Cref{prop:standard-rep-theory} and \Cref{coro:singular-value-indexing}, we know that the singular values of $M$ can be divided into groups indexed by partitions $\lambda \in \mathcal{P}(n)$. We can classify the singular values of $M$ into three categories:
\begin{itemize}
    \item Singular values corresponding to the partition $\lambda = (n)$. As stated in 1.(a) of \Cref{thm:instantiation-rep-thy-sym}, it corresponds to the $1$-dimensional vector space and thus has singular value $1$.

    \item Singular values indexed by partitions $\lambda = (\lambda_{1},\ldots,\lambda_{\ell}) \in \mathcal{P}(n)$ with $\lambda_{2} > c > 1$ (here $c$ is the constant from \Cref{thm:instantiation-rep-thy-sym}).

    \item Singular values indexed by partitions $\lambda = (\lambda_{1},\ldots,\lambda_{\ell}) \in \mathcal{P}(n)$ with $\lambda_{2} \leq c$ and $\lambda \neq (n)$ (here $c$ is the constant from \Cref{thm:instantiation-rep-thy-sym}).
\end{itemize}

\paragraph{}To bound $\sigma_{2}(M)$, we only need to upper bound the singular values in the second and third categories. Before proceeding, we set some notation for convenience. Applying the third item of \Cref{prop:standard-rep-theory} on the matrix $M$, we know the following: For every partition $\lambda \in \mathcal{P}(n)$ with $\lambda \trianglerighteq \mu$, there exists a square matrix $\widetilde{M}_{\lambda}$ of dimensions $m_{\lambda} \times m_{\lambda}$ such that
\begin{align*}
    M \; = \; \bigoplus_{\lambda \trianglerighteq \mu} \; \paren{\widetilde{M}_{\lambda} \otimes \mathrm{Id}_{\dim(V_{\lambda})}}, && \text{ where } \dim(V_{\lambda}) \, = \, m_{\lambda} \cdot \dim(V_{\lambda,1}).
\end{align*}

\noindent
\Cref{coro:singular-value-indexing} tells us that the multiset of singular values of $M$ is essentially governed by the multiset of singular values of $\widetilde{M}_{\lambda}$'s for different $\lambda$'s. For every $\lambda \trianglerighteq \mu$, let $\set{\beta^{\lambda}_{1},\ldots,\beta^{\lambda}_{m_{\lambda}}}$ denote the multiset of singular values of $\widetilde{M}_{\lambda}$ (i.e. we account for repetitions too).

\paragraph{}To upper bound the singular values of the second category, we use the upper bound on the \textbf{Frobenius norm} of $M$. More precisely, we will show the following lemma.

\begin{lemmabox}
\begin{lemma}\label{lemma:large-block-tiny-eigenvalue}
Let $\kappa, c_{1} > 0$ be constants such that the Frobenius norm $\| M \|_{F} \leq c_{1} \cdot n^{\kappa}$ (see \Cref{thm:more-general-matrix-eigenvalue}) and let $c = 4 \kappa$. Let $\lambda = (\lambda_{1},\ldots,\lambda_{\ell}) \in \mathcal{P}(n)$ with $\lambda_{2} > c$. Then\footnote{Recall the notation in \Cref{eqn:spectral-norm}.},
\begin{align*}
    \| \widetilde{M}_{\lambda} \|_{2} \; \leq \; \dfrac{1}{c_{1}' \cdot n^{\kappa}},
\end{align*}
where $c_{1}'$ is a constant depending on $c_{1}$ and $\kappa$.
\end{lemma}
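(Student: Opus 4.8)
The plan is to combine the tensor-with-identity shape of the blocks of $M$ with the bound on $\|M\|_F$: when an irreducible is ``large'' --- meaning $\dim(V_{\lambda,1})$ is a large power of $n$ --- the associated small block $\widetilde M_\lambda$ is forced to have tiny operator norm, because each singular value of $\widetilde M_\lambda$ is replicated $\dim(V_{\lambda,1})$ times among the singular values of $M$, while the total ``energy'' of $M$ is only $\|M\|_F^2 \le c_1^2 n^{2\kappa}$. Passing to the block-diagonalized form $UMU^{T}$ (both the Frobenius and the operator norm are invariant under conjugation by the orthonormal $U$), \Cref{prop:standard-rep-theory}(3) says the $\lambda$-block is $\widetilde M_\lambda \otimes \mathrm{Id}_{\dim(V_{\lambda,1})}$, so by \eqref{eqn:SVD-tensor} (equivalently \Cref{coro:singular-value-indexing}) every singular value of the $m_\lambda\times m_\lambda$ matrix $\widetilde M_\lambda$ occurs with multiplicity at least $\dim(V_{\lambda,1})$ in $M$.

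First I would record the elementary inequality
\[
    \|M\|_F^2 \;=\; \sum_{\lambda' \trianglerighteq \mu} \dim(V_{\lambda',1})\cdot\|\widetilde M_{\lambda'}\|_F^2 \;\ge\; \dim(V_{\lambda,1})\cdot\|\widetilde M_{\lambda}\|_F^2 \;\ge\; \dim(V_{\lambda,1})\cdot\|\widetilde M_{\lambda}\|_2^2,
\]
using $\|A\otimes \mathrm{Id}_k\|_F^2 = k\,\|A\|_F^2$ and the fact that the Frobenius norm dominates the operator norm. Then I would invoke \Cref{thm:instantiation-rep-thy-sym}(2a): as $\lambda_2 > c$, we have $\dim(V_{\lambda,1}) \ge n^{0.8c}$. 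Feeding in the hypothesis $\|M\|_F \le c_1 n^{\kappa}$,
\[
    \|\widetilde M_{\lambda}\|_2^2 \;\le\; \frac{\|M\|_F^2}{\dim(V_{\lambda,1})} \;\le\; c_1^2\, n^{\,2\kappa - 0.8c}.
\]
Since $c$ is a sufficiently large constant multiple of $\kappa$, the exponent $2\kappa - 0.8c$ is negative and bounded away from $0$; choosing the constant so that $2\kappa - 0.8c \le -2\kappa$ gives $\|\widetilde M_{\lambda}\|_2 \le c_1 n^{-\kappa}$, and then $c_1' := 1/c_1$ (for $n$ large) yields the claim.

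There is essentially no conceptual obstacle here once \Cref{thm:instantiation-rep-thy-sym}(2a) is in hand --- that dimension lower bound for Specht modules with $\lambda_2$ large is the real content, and it is a standard fact. The two things to be careful about are: (i) correctly identifying the replication factor of $\widetilde M_\lambda$'s singular values inside $M$ as $\dim(V_{\lambda,1})$ (and not the multiplicity $m_\lambda$), which is exactly what the $\widetilde M_\lambda \otimes \mathrm{Id}$ structure of \Cref{prop:standard-rep-theory}(3) provides; and (ii) the bookkeeping that ties the ``large block'' cutoff $c$ to the Frobenius exponent $\kappa$, i.e.\ making sure the polynomial gain $n^{0.8c}$ comfortably dominates the Frobenius budget $n^{2\kappa}$ --- which is why $c$ is taken proportional to $\kappa$.
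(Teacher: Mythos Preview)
Your proposal is correct and matches the paper's proof essentially line for line: both compute $\|M\|_F^2 \ge \dim(V_{\lambda,1})\cdot\|\widetilde M_\lambda\|_2^2$ via the tensor structure of \Cref{prop:standard-rep-theory}(3) (equivalently \Cref{coro:singular-value-indexing}), then feed in the dimension lower bound from \Cref{thm:instantiation-rep-thy-sym}(2a) against the Frobenius budget. One arithmetic nitpick: since the lemma \emph{fixes} $c = 4\kappa$, you cannot ``choose the constant'' at the end --- with $n^{0.8c} = n^{3.2\kappa}$ your inequality actually reads $\|\widetilde M_\lambda\|_2 \le c_1 n^{-0.6\kappa}$ rather than $n^{-\kappa}$; the paper's own proof writes $\Omega_\kappa(n^{4\kappa})$ at this step (tacitly using the sharper $f_\lambda \ge \Omega_c(n^c)$ of \Cref{claim:large-alg-mult}), but either exponent is polynomially small in $n$ and is all that is needed downstream.
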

\end{lemmabox}
\begin{proof}[Proof of \Cref{lemma:large-block-tiny-eigenvalue}]
We know that the square of the Frobenius norm equals the sum of singular values squared, i.e. if $\set{\beta^{\gamma}_{1},\ldots,\beta^{\gamma}_{m_{\gamma}}}$ is the multiset of singular values of $\widetilde{M}_{\gamma}$, then
\begin{align*}
     \sum_{\gamma \trianglerighteq \mu} \dim(V_{\gamma,1}) \cdot \sum_{i=1}^{m_{\gamma}} |\beta^{\gamma}_{i}|^{2} \; \leq \; \| M \|_{F}^{2}.
\end{align*}
Fix an arbitrary $\lambda \in \mathcal{P}(n)$ with $\lambda_{2} > c$. As every term on the left side of the above inequality is a non-negative number, we have the following inequality for any $i \in [m_{\lambda}]$:
\begin{align*}
\dim(V_{\lambda,1}) \cdot|\beta^{\lambda}_{i}|^{2} \; \leq \; \| M \|_{F}^{2} \\ \\
\Rightarrow \quad \Omega_{\kappa}( n^{4 \kappa} ) \cdot |\beta^{\lambda}_{i}|^{2} \; \leq \;  c_{1}^{2} \cdot n^{2\kappa} && \text{(Using 2.(a) of \Cref{thm:instantiation-rep-thy-sym} and 2 of \Cref{thm:more-general-matrix-eigenvalue}}) \\ \\
\Rightarrow |\beta^{\lambda}_{i}| \; \leq \; \dfrac{1}{c_{1} \cdot n^{\kappa}}.
\end{align*}
Since the above upper bound holds for every $i \in [m_{\lambda}]$, we get the desired bound on $\| \widetilde{M}_{\lambda} \|_{2}$. This finishes the proof of \Cref{lemma:large-block-tiny-eigenvalue}.    
\end{proof}

\noindent
Next, we have to upper bound the singular values of the third category, and this requires more steps in comparison to the previous lemma. We use the \textbf{$\varepsilon$-almost  $k$-wise independence} of $M$ in this step. We start by stating the bound.

\begin{lemmabox}
\begin{lemma}\label{lemma:small-block-small-eigenvalue}
 Let $c = 4 \kappa$ (the same constant from \Cref{lemma:large-block-tiny-eigenvalue}) and $\lambda \in \mathcal{P}(n)$ be a partition with $\lambda \trianglerighteq \mu$, $\lambda_{2} \leq c$, and $\lambda \neq (n)$. Then,
\begin{align*}
    \| \widetilde{M}_{\lambda} \|_{2} \; \leq \; \bigO_{s,\kappa}(\varepsilon),
\end{align*}
where $\varepsilon$ is the distance parameter in the third item of \Cref{thm:more-general-matrix-eigenvalue}.
\end{lemma}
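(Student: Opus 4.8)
The plan is to apply the linear-algebraic estimate from \Cref{lemma:estimating-eigenvalue} to the small matrix $A_\lambda = \widetilde M_\lambda$, using as our ``special vectors'' the family $u^\lambda_1,\ldots,u^\lambda_{m_\lambda}$ supplied by item~3 of \Cref{thm:instantiation-rep-thy-sym}. Recall from item~4 of \Cref{prop:standard-rep-theory} that if we set $Y := \operatorname{span}\{u^\lambda_1,\ldots,u^\lambda_{m_\lambda}\}$ (which, by 3.(a), is exactly $\operatorname{span}\{u^\lambda_1, \mathcal L_{\lambda,2}(u^\lambda_1),\ldots,\mathcal L_{\lambda,m_\lambda}(u^\lambda_1)\}$, i.e.\ a $Y_{\lambda, v}$ in the notation there), then $M|_Y$, written in the ordered basis $(u^\lambda_1,\ldots,u^\lambda_{m_\lambda})$, is precisely the matrix $\widetilde M_\lambda$. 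So it suffices to bound the operator norm of $\widetilde M_\lambda$ acting on this basis, and by \Cref{lemma:estimating-eigenvalue} this reduces to checking three things: a length bound $\|u^\lambda_j\|_2 \le m$, a shrinkage bound $\|M u^\lambda_j\|_2 \le q$, and a volume lower bound $\operatorname{Vol}(u^\lambda_1,\ldots,u^\lambda_{m_\lambda}) \ge \tau$ — all with respect to the invariant inner product \Cref{eqn:inner-prod}, which is the relevant ``standard'' inner product on $\mathbb M^\mu$ here.

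First I would record the easy inputs. The length bound $\|u^\lambda_j\|_2 = \Theta_{s,c}(1)$ is item 3.(b), so $m = O_{s,c}(1)$. The volume bound $\operatorname{Vol}(u^\lambda_1,\ldots,u^\lambda_{m_\lambda}) = \Omega_{s,c}(1)$ is item 3.(d), so $\tau = \Omega_{s,c}(1)$. Since $\lambda_2 \le c$, item 2.(b) gives $m_\lambda \le s^{cs} = O_{s,c}(1)$, so $r := m_\lambda$ is an absolute constant (depending only on $s,\kappa$ since $c = 4\kappa$). The one genuine step is the shrinkage bound: I need $\|M u^\lambda_j\|_2 = O_{s,\kappa}(\varepsilon)$ for each $j$. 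Here is where item 3.(c) and the structure of $M$ come in. The key point is that $M u^\lambda_j$ lies in the same isotypic component $V_\lambda$ (item 3 of \Cref{prop:standard-rep-theory}), and for $\lambda \ne (n)$ this component is orthogonal to the trivial component, so $\langle M u^\lambda_j, \mathbf 1\rangle = 0$ — hence $M u^\lambda_j$ has mean zero under the uniform distribution on $\sgrid^n_\mu$. Now I want to argue that $\|M u^\lambda_j\|_2$ is small by pairing $M u^\lambda_j$ against itself: writing $\|M u^\lambda_j\|_2^2 = \langle M u^\lambda_j, M u^\lambda_j\rangle$ and expanding the inner product as an expectation over the rows of $M$. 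Concretely, $\langle M u^\lambda_j, M u^\lambda_j\rangle = \mathbb E_{\veca \sim \sgrid^n_\mu}\big[(M u^\lambda_j)(\veca)^2\big]$, and $(M u^\lambda_j)(\veca) = \mathbb E_{\vecb \sim M(\veca)}[u^\lambda_j(\vecb)]$. Since $M$ respects symmetries, $M(\veca)$ is (for every $\veca$) an $\varepsilon$-almost $k$-wise independent distribution with $k = 10s\kappa \ge cs$ (as $c = 4\kappa$), so item 3.(c) applies directly: $|\mathbb E_{\vecb \sim M(\veca)}[u^\lambda_j(\vecb)]| \le O_{s,c}(\varepsilon)$ for \emph{every} row $\veca$. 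Therefore $(M u^\lambda_j)(\veca)^2 \le O_{s,c}(\varepsilon^2)$ pointwise, and averaging over $\veca$ gives $\|M u^\lambda_j\|_2^2 \le O_{s,c}(\varepsilon^2)$, i.e.\ $q := \|M u^\lambda_j\|_2 \le O_{s,\kappa}(\varepsilon)$.

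Finally I would assemble the pieces: \Cref{lemma:estimating-eigenvalue} with $r = m_\lambda = O_{s,\kappa}(1)$, $m = O_{s,\kappa}(1)$, $\tau = \Omega_{s,\kappa}(1)$, and $q = O_{s,\kappa}(\varepsilon)$ yields
\[
\|\widetilde M_\lambda\|_2 \;\le\; r \cdot \frac{\max\{m^r,1\}\cdot r!}{\tau}\cdot q \;=\; O_{s,\kappa}(\varepsilon),
\]
which is exactly the claimed bound. The main obstacle, and the only place requiring care, is making sure the hypotheses of \Cref{lemma:estimating-eigenvalue} are applied in the right inner-product space: that lemma is stated for the \emph{standard} Euclidean inner product on $\mathbb R^r$, whereas our vectors live in $\mathbb M^\mu$ with the inner product \Cref{eqn:inner-prod}. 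The resolution is that $(\mathbb M^\mu, \langle\cdot,\cdot\rangle)$ restricted to $Y$ is an $r$-dimensional real inner-product space, so after picking an orthonormal basis for $Y$ it is isometric to $\mathbb R^r$ with the standard inner product, and $\widetilde M_\lambda$ is (conjugate to) the matrix of $M|_Y$ in that basis — the operator norm, the lengths of the $u^\lambda_j$, and the volume they span are all preserved under this isometry. One should double-check this identification is consistent with how $\widetilde M_\lambda$ was defined via \Cref{prop:standard-rep-theory} item~4 (it is: item~4 says precisely that $M|_Y = \widetilde M_\lambda = A_\lambda$ in the basis $(u^\lambda_1,\ldots,u^\lambda_{m_\lambda})$), and also that the case $\lambda = (n)$ is correctly excluded so that the mean-zero / orthogonality-to-trivial observation is available (though in fact the proof above only uses 3.(c) for the rows of $M$ and does not even need the mean-zero observation — it is automatic). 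This completes the proof of \Cref{lemma:small-block-small-eigenvalue}.
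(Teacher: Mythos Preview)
Your proposal is correct and follows essentially the same approach as the paper's proof: both invoke the special vectors $u^\lambda_j$ from \Cref{thm:instantiation-rep-thy-sym}, bound $\|Mu^\lambda_j\|_2$ pointwise via item~3.(c) and the $\varepsilon$-almost $k$-wise independence hypothesis, transfer the length and volume bounds through an orthonormal basis of $Y_\lambda$ to reconcile the invariant inner product with the Euclidean setting of \Cref{lemma:estimating-eigenvalue}, and then assemble the conclusion. The paper makes the change-of-basis step slightly more explicit by naming the coordinate vectors $\bm\alpha_j$ and the matrix $\widetilde A_\lambda$, but your treatment of that subtlety is accurate and complete.
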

\end{lemmabox}
\begin{proof}[Proof of \Cref{lemma:small-block-small-eigenvalue}]
Fix a partition $\lambda \trianglerighteq \mu$ with $\lambda_{2} \leq c$ and $\lambda \neq (1,1,\ldots,1)$ for rest of the proof. Let $u^{\lambda}_{1},\ldots,u^{\lambda}_{m_{\lambda}}$ be the vectors guaranteed from the third item of \Cref{thm:instantiation-rep-thy-sym}. The idea is to use 
\Cref{lemma:estimating-eigenvalue} on the vectors $u^{\lambda}_{j}$'s and the matrix $\widetilde{M}_{\lambda}$, but we need to be careful, as we explain below.

\paragraph*{}From the third and fourth items of \Cref{prop:standard-rep-theory}, we have,
\begin{align*}
    Mu^{\lambda}_{j} \; = \; \widetilde{M}_{\lambda} u^{\lambda}_{j}, && \text{where $\widetilde{M}_{\lambda}$ is the operator $M|_{Y_{\lambda}}$} \\ \\
    \Rightarrow \; \| Mu^{\lambda}_{j} \|_{2} \; = \; \| \widetilde{M}_{\lambda}u^{\lambda}_{j} \|_{2},
\end{align*}
where both the norms are with respect to the invariant inner product defined in \Cref{eqn:inner-prod}.

\paragraph*{Fixing an orthonormal basis.}Let
\begin{align*}
    Y_{\lambda} := \mathrm{span}(u^{\lambda}_{1},\ldots,u^{\lambda}_{m_{\lambda}})
\end{align*}
and let $(w_{1},\ldots,w_{m_{\lambda}})$ be an ordered \textit{orthonormal} (with respect to the \emph{invariant} inner product defined in \Cref{eqn:inner-prod}) basis for the space $Y_{\lambda}$.\newline
Let $\widetilde{A}_{\lambda}$ denote the $m_{\lambda} \times m_{\lambda}$ matrix representing the operator $\widetilde{M}_{\lambda}$ under the orthonormal basis $(w_{1},\ldots,w_{m_{\lambda}})$.\newline

\noindent
The singular values remain invariant under the choice of basis\footnote{To see this quickly, note that singular values of a matrix $A$ are the positive square roots of the eigenvalues of $AA^{T}$ and eigenvalues are independent of the choice of basis.}, thus it is enough to bound $\| \widetilde{A}_{\lambda} \|_{2}$. The idea is to use \Cref{lemma:estimating-eigenvalue} on $\widetilde{A}_{\lambda}$ and vectors $u^{\lambda}_{j}$'s to bound $\| \widetilde{A}_{\lambda} \|_{2}$. There is some subtelty regarding norms in using \Cref{lemma:estimating-eigenvalue}, which one needs to be careful about.

\paragraph*{Expressing the $u^{\lambda}_{j}$'s in the orthonormal basis.}For every $j \in [m_{\lambda}]$, let
\begin{align*}
    u^{\lambda}_{j} = \alpha_{j,1} w_{1} + \ldots + \alpha_{j,m_{\lambda}} w_{m_{\lambda}} \quad \text{ and } \quad \bm{\alpha}_{j} := (\alpha_{j,1},\ldots,\alpha_{j,m_{\lambda}}).
\end{align*}
Then,
\begin{align*}
    \| u^{\lambda}_{j} \|_{2} \; = \; \| \bm{\alpha}_{j} \|_{2},
\end{align*}
where the left norm is with respect to the \textit{invariant} inner product defined in \Cref{eqn:inner-prod} and the right norm is the \textit{standard} inner product on $\mathbb{R}^{m_{\lambda}}$. Using 3.(b) of \Cref{thm:instantiation-rep-thy-sym}, we get that for every $j \in [m_{\lambda}]$, $\| \bm{\alpha}_{j} \|_{2} = \Theta_{s,c}(1)$.

\paragraph*{Norm after applying the operator $\widetilde{M}_{\lambda}$.}Now we have the following equality:
\begin{align*}
    \widetilde{M}_{\lambda} u^{\lambda}_{j} \; = \; \widetilde{A}_{\lambda} \bm{\alpha}_{j} \quad
    \Rightarrow \quad \| \widetilde{M}_{\lambda} u^{\lambda}_{j} \|_{2} \; = \; \| \widetilde{A}_{\lambda} \bm{\alpha}_{j} \|_{2},
\end{align*}
where the left norm is with respect to the invariant inner product defined in \Cref{eqn:inner-prod} and the right norm is with respect to the standard inner product. Hence, we get
\begin{align*}
    \| Mu^{\lambda}_{j} \|_{2} \; = \; \| \widetilde{A}_{\lambda} \bm{\alpha}_{j} \|_{2}, && \text{ for every } \; j \in [m_{\lambda}].
\end{align*}

\paragraph*{Upper bounding the norm after applying the operator}We now show that for every $j \in [m_{\lambda}]$, 
\begin{align*}
    \| \widetilde{A}_{\lambda} \bm{\alpha}_{j} \|_{2} \leq \bigO_{s,\kappa}(\varepsilon),
\end{align*}
where the norm is with respect to the standard inner product. From the previous paragraph, it is enough to show that for every $j \in [m_{\lambda}]$, the norm $ \| M u^{\lambda}_{j} \|_{2} \leq \bigO_{s,c}(\varepsilon)$, where the norm is with respect to the invariant inner product.\newline

\noindent
Using the definition of the invariant inner product from \Cref{eqn:inner-prod}, we get,
\begin{align*}
    \| M u^{\lambda}_{j} \|_{2}^{2} \; = \; \mathbb{E}_{\mathbf{x} \sim \sgrid^{n}_{\mu}} \paren{ \mathbb{E}_{\mathbf{y} \sim M(\mathbf{x})}[u^{\lambda}_{j}(\mathbf{y})] }^{2}
\end{align*}
For every $\mathbf{x} \in \sgrid^{n}_{\mu}$, the third item of \Cref{thm:more-general-matrix-eigenvalue} says that $M(\mathbf{x})$ is $\varepsilon$-almost $k$-wise independent for $k = 10s\kappa \geq cs$. Applying item 3.(c) of \Cref{thm:instantiation-rep-thy-sym} on $M(\mathbf{x})$ for an arbitrary $\mathbf{x} \in \sgrid^{n}_{\mu}$, we get
\begin{align*}
    \mathbb{E}_{\mathbf{y} \sim M(\mathbf{x})}[u^{\lambda}_{j}(\mathbf{y})] \; = \; \bigO_{s,\kappa}(\varepsilon).
\end{align*}
As this holds for every $\mathbf{x} \in \sgrid^{n}_{\mu}$, we get,
\begin{align*}
    \mathbb{E}_{\mathbf{x} \sim \sgrid^{n}_{\mu}} \paren{ \mathbb{E}_{\mathbf{y} \sim M(\mathbf{x})}[u^{\lambda}_{j}(\mathbf{y})] }^{2} \; = \; \mathbb{E}_{\mathbf{x} \sim \sgrid^{n}_{\mu}}\paren{\bigO_{s,c}(\varepsilon^{2})} \; = \; \bigO_{s,\kappa}(\varepsilon^{2}).
\end{align*}
Hence we have shown that for every $j \in [m_{\lambda}]$, we get
\begin{align*}
     \| \widetilde{A}_{\lambda} \bm{\alpha}_{j }\|_{2} \; = \; \| M u^{\lambda}_{j} \|_{2} \; = \; \bigO_{s,\kappa}(\varepsilon).
\end{align*}

\paragraph*{Volume of the parallelepiped.}
Fix an arbitrary order on the $u_{j}$'s and recall from \Cref{defn:volume} that
\begin{align*}
    \mathrm{Vol}(u^{\lambda}_{1},\ldots,u^{\lambda}_{m_{\lambda}}) = \| \Tilde{u}^{\lambda}_{1} \|_{2} \cdots \| \Tilde{u}^{\lambda}_{m_{\lambda}} \|_{2},
\end{align*}
where $\widetilde{u}_{j}$ is as defined in \Cref{defn:volume} and the above norms are with respect to the invariant inner product defined in \Cref{eqn:inner-prod}. Similarly, we have $\mathrm{Vol}(\bm{\alpha}_{1},\ldots,\bm{\alpha}_{m_{\lambda}})$, in which the norm is with respect to the standard inner product.\newline

\noindent
Observe that for every $j \in [m_{\lambda}]$, $\mathrm{span}(\Tilde{\bm{\alpha}}_{1},\ldots,\Tilde{\bm{\alpha}}_{j-1}) \cong \mathrm{span}(\Tilde{u}^{\lambda}_{1},\ldots,\Tilde{u}^{\lambda}_{j-1})$, i.e. they are isometric as inner product spaces. Now the component of $u^{\lambda}_{j}$ orthogonal to the $(j-1)$ dimensional subspace has the same norm (in the invariant inner product) as the component of $\bm{\alpha}_{j}$ has norm under the standard inner product. Thus,
\begin{align*}
    \mathrm{Vol}(u^{\lambda}_{1},\ldots,u^{\lambda}_{m_{\lambda}}) \; = \; \mathrm{Vol}(\bm{\alpha}_{1},\ldots,\bm{\alpha}_{m_{\lambda}}) = \Omega_{s,\kappa}(1),
\end{align*}
where the final lower bound is from item 3.(d) of \Cref{thm:instantiation-rep-thy-sym}.\\

\noindent
Now we apply \Cref{lemma:estimating-eigenvalue} on $\widetilde{A}_{\lambda}$ and vectors $\bm{\alpha}_{1},\ldots,\bm{\alpha}_{m_{\lambda}}$. Using 2.(b) of \Cref{thm:instantiation-rep-thy-sym}, we know that $m_{\lambda} = \bigO_{s,c}(1)$. This gives us the desired bound and finishes the proof of \Cref{lemma:small-block-small-eigenvalue}.
\end{proof}

\noindent
Thus, we have proved \Cref{lemma:large-block-tiny-eigenvalue} and \Cref{lemma:small-block-small-eigenvalue}, which gives an upper bound on the singular values of the second and third categories, respectively. Combining \Cref{lemma:large-block-tiny-eigenvalue} and \Cref{lemma:small-block-small-eigenvalue}, we finish the proof of \Cref{thm:more-general-matrix-eigenvalue}.
\end{proof}

\subsection{Construction of Special Vectors}\label{subsec:alt-special-vectors}

In this section, we describe the vectors specified in the third item of \Cref{thm:instantiation-rep-thy-sym}. The construction combines standard literature on the representation theory of the symmetric group~\cite{Sagan} with the recent work of \cite{complexity-symmetric-group}. We need to recall the definition here to show that they satisfy the properties claimed in \Cref{thm:instantiation-rep-thy-sym}.

Throughout this section, fix a partition $\lambda = (\lambda_1,\ldots, \lambda_\ell) \in \mathcal{P}(n)$ and assume that $2\leq \ell \leq s.$ We will  consider (Young) tableaux $T$ of shape $\lambda$, which contain cells $T[i,j]$ where $1\leq i\leq \ell$ and for $1\leq j\leq \lambda_{i}$. Further, we will also consider permutations of such tableaux by permutations that rearrange the elements in each column of $T$. Let $C_{\lambda} := \mathrm{Sym}_{\lambda_1^*}\times\cdots \times \mathrm{Sym}_{\lambda_{\lambda_1}^*}$ and given a permutation $\sigma \in C_\lambda$, we denote by $T^\sigma$ the tableau obtained by rearranging the contents of the cells of $T$ according to $\sigma.$ For every $\sigma = (\sigma^{(1)}, \ldots, \sigma^{(\lambda_{1})}) \in C_{\lambda}$, $\mathrm{sgn}(\sigma) := \mathrm{sgn}(\sigma^{(1)}) \cdots \mathrm{sgn}(\sigma^{(\lambda_{1})})$.

We define $T_0$ to be the canonical tableau of shape $\lambda$ where the cells are labelled as follows:
\begin{equation}\label{eqn:canonical-tableau}
    T_{0}[i,j] \; := \; \sum_{p < i}\lambda_p + j, \quad \quad i \in [\ell], \; j \in [\lambda_{i}].
\end{equation}
The following is a diagram for the canonical tableau $T_{0}$ for some partition $\lambda \in \mathcal{P}(n)$.
\begin{align*}
T_0 \quad = \quad 
\ytableausetup{notabloids, boxframe=normal, boxsize=3em} 
\begin{ytableau}
1 & 2 &  \none[\dots] & \lambda_1-1 & \lambda_1 \\
\lambda_1+1 & \lambda_{1} + 2 & \none[\dots] & \lambda_2 \\
\none[\vdots] & \none[\vdots] \\
n-2 & n-1 \\
n
\end{ytableau}
\end{align*}
 
 Next, we define a polynomial on $\sgrid^{n}_{\mu}$ for every semi-standard Young tableau of shape $\lambda$ and content\footnote{i.e. a tableau with $\mu_1$ many $0$s, $\mu_2$ many $1$s, and so on until $\mu_s$ many $(s-1)$s} $\mu$. Recall that $\mu \in \mathcal{P}(n)$ is the partition of $n$ given by $(\frac{n}{s},\ldots, \frac{n}{s})$.\\

 \begin{definition}\label{defn:chi-T}
 \cite[Section 5.2]{complexity-symmetric-group}. Given a tableau $T'$ of shape $\lambda$ with distinct labels from $[n]$ and another tableau $T$ of shape $\lambda$ with content $\mu$, we define a corresponding $\mathbb{R}$-valued function $e_{T',T}:\sgrid^n_\mu\rightarrow \{0,1\}$ by 
\begin{equation}\label{eq:defn-eT}
e_{T',T}(\bm{x}) = \begin{cases}
    1, & \text{if $\{x_{T'[i,1]},\ldots, x_{T'[i,\lambda_i]}\} = \{T[i,1],\ldots, T[i,\lambda_i]\}$ as multisets for each $1 \leq i \leq \ell$, (*)} \\
    0, & \text{otherwise.}
\end{cases}
\end{equation}

Finally, given a $T\in \mathrm{SSYT}(\lambda,\mu)$, define the function $\chi_T: \sgrid^n_\mu\rightarrow \mathbb{Z}$ by 
\begin{equation}
    \label{eq:defn-chiT}
    \chi_T(\bm{x}) = \sum_{\sigma \in C_\lambda} \mathrm{sgn}(\sigma)\cdot e_{T_0^\sigma, T}(\bm{x}),
\end{equation}
where $T_{0}^{\sigma}$ is the tableau obtained after $\sigma$ acts on the canonical tableau $T_{0}$.\\
\end{definition}

\begin{observation}
\label{obs:chiT-junta}
    Note that condition (*) in \Cref{eq:defn-eT} could equivalently have been stated in terms of rows $i\in \{2,\ldots, \ell\}$ as the condition for $i=1$ is implied by the others (since the input $\bm{x}$ is a point in $\sgrid^n_\mu$). Overall, this implies that $e_{T_0^\sigma,T}$ (and hence $\chi_T$) depends only on variables whose index appears in one of the first $\lambda_2$ columns of $T$. In particular, if $\lambda_2 \leq c$, this implies that $\chi_T$ is a $\gamma$-junta for $\gamma \leq c\ell \leq cs.$
\end{observation}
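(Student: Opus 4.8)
The plan is to prove the two assertions of the observation in turn: (i) that the defining condition (*) in \Cref{eq:defn-eT} is unaffected by dropping the row $i=1$ clause, and (ii) that the resulting test reads only $O_{s,c}(1)$ coordinates of $\bm x$, so that $\chi_T$ (\Cref{eq:defn-chiT}) is a junta on those coordinates. Throughout I use that $\bm x \in \sgrid^n_\mu$ and that $T \in \mathrm{SSYT}(\lambda,\mu)$, and that $2 \le \ell \le s$.

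For (i), fix $\sigma \in C_\lambda$. Since $T_0^\sigma$ (like $T_0$, see \Cref{eqn:canonical-tableau}) carries each label $1,\ldots,n$ exactly once, the cells of $T_0^\sigma$ list each coordinate of $\bm x$ exactly once, so the disjoint union over $i \in [\ell]$ of the multisets $\{x_{T_0^\sigma[i,1]},\ldots,x_{T_0^\sigma[i,\lambda_i]}\}$ is the full multiset of coordinates of $\bm x$, which equals $\mu$ (that is, $n/s$ copies of each element of $\sgrid$) because $\bm x \in \sgrid^n_\mu$. Similarly, the disjoint union over $i \in [\ell]$ of $\{T[i,1],\ldots,T[i,\lambda_i]\}$ is the content of $T$, which is also $\mu$ since $T \in \mathrm{SSYT}(\lambda,\mu)$. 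Consequently, if the row-$i$ multiset equality required in (*) holds for every $i \in \{2,\ldots,\ell\}$, then cancelling these $\ell-1$ equalities from the two identical total multisets forces the $i=1$ equality as well. Hence $e_{T_0^\sigma,T}(\bm x)$ is computed correctly by only imposing (*) for $i \ge 2$.

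For (ii), note that row $i$ of $T_0^\sigma$ occupies the cells $[i,1],\ldots,[i,\lambda_i]$, and for $i \ge 2$ we have $\lambda_i \le \lambda_2$; thus rows $2,\ldots,\ell$ lie entirely within the first $\lambda_2$ columns (the later columns have height $1$ and contribute no cell to any row $\ge 2$). So, after the reduction in (i), $e_{T_0^\sigma,T}$ depends only on the coordinates $x_t$ for $t$ a label appearing in rows $2,\ldots,\ell$ of $T_0^\sigma$. Here is the one place to be a little careful: $\sigma \in C_\lambda$ permutes labels only \emph{within} columns, so the \emph{set} of labels filling any given column of $T_0^\sigma$ is independent of $\sigma$ and equals the corresponding column of $T_0$; therefore, as $\sigma$ ranges over $C_\lambda$, every $e_{T_0^\sigma,T}$ — and hence $\chi_T = \sum_{\sigma\in C_\lambda}\mathrm{sgn}(\sigma)\,e_{T_0^\sigma,T}$ — depends only on the coordinates indexed by labels in the first $\lambda_2$ columns of $T_0$. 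The number of such labels is $\sum_{j=1}^{\lambda_2}\lambda^{\ast}_j \le \ell\lambda_2$, which is at most $c\ell \le cs$ once $\lambda_2 \le c$ (using $\ell \le s$). This is exactly the claimed bound $\gamma \le c\ell \le cs$. The whole argument is bookkeeping with the definitions; the only mildly subtle point, and the closest thing to an obstacle, is the column-invariance of the label sets under $C_\lambda$, which is what prevents the junta-arity from growing with $|C_\lambda|$.
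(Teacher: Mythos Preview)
Your proof is correct and follows exactly the reasoning the paper sketches in the observation itself (the paper offers no separate proof). Your part (i) spells out the multiset-cancellation argument the paper alludes to, and your part (ii) correctly handles the one nontrivial point—that the column-wise action of $C_\lambda$ keeps the set of labels in the first $\lambda_2$ columns fixed, so the junta coordinates do not grow with $\sigma$; you also correctly read ``first $\lambda_2$ columns of $T$'' as referring to $T_0$ (the tableau carrying the coordinate labels), which is the intended meaning.
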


Next we define a total order on the set $\mathrm{SSYT}(\lambda,\mu)$.\\

\begin{definition}[Total order on SSYTs]
Let $\lambda \in \mathcal{P}(n)$ and $\mu = (n/s,\ldots,n/s)$. Given two distinct SSYTs $S,T\in \mathrm{SSYT}(\lambda,\mu),$ we say that $S < T$ if there exists $2 \leq i \leq \ell$ and $j \in [\lambda_{i}]$ such that the following holds:
\begin{enumerate}
    \item For every $k > i$ and for every $j' \in [\lambda_{k}]$, we have $S[k, j'] = T[k, j']$, i.e. the $k^{th}$ rows of $S$ and $T$ are equal.
    \item For every $\lambda_{i} \geq j' > j$ such that $S[i, j'] = T[i,j']$.
    \item Finally, $S[i,j] < T[i,j].$ 
\end{enumerate}
We leave it to the reader to check that this defines a total order on $\mathrm{SSYT}(\lambda,\mu).$\\
\end{definition}

We will need the following claim regarding the aforementioned ordering.\\

\begin{claim}
    \label{clm:SSYT-order}
    Assume that $S,T\in \mathrm{SSYT}(\lambda,\mu)$ and $\sigma\in C_\lambda$ are such that
    \begin{itemize}
        \item either $S < T$
        \item or $S = T$ and $\sigma$ is not the identity permutation.
    \end{itemize}
Then, for any $\sigma\in C_\lambda$, there exists an $i\in \{2,\ldots, \ell\}$ such that the multisets $\{S^{\sigma}[i,1],\ldots, S^{\sigma}[i,\lambda_i]\}$ and $\{T[i,1],\ldots, T[i,\lambda_i]\}$ are distinct.
\end{claim}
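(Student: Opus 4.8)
The plan is to argue by contradiction. Suppose $R_i(S^\sigma) = R_i(T)$ for every $i \in \{2,\ldots,\ell\}$, where $R_i(U) := \{U[i,1],\ldots,U[i,\lambda_i]\}$ denotes the multiset of entries in row $i$ of a tableau $U$ of shape $\lambda$. (Throughout, $\sigma$ is the permutation fixed in the hypothesis; the clause ``for any $\sigma\in C_\lambda$'' in the statement is to be read as referring to this same $\sigma$.) First I would note that this equality automatically propagates to $i=1$: since acting by an element of $C_\lambda$ only rearranges entries within columns, $S^\sigma$ and $T$ have the same content $\mu$, so agreement of rows $2,\ldots,\ell$ forces agreement of row $1$ as well. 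Thus $R_i(S^\sigma)=R_i(T)$ for all $i\in[\ell]$, and it suffices to show this forces $S=T$ and $\sigma=\mathrm{id}$, contradicting both bullets of the hypothesis.

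The main tool will be a counting inequality on the ``lower'' regions of the tableaux. For a tableau $U$ of shape $\lambda$, a threshold $i_0\in[\ell]$, and a value $w$, let $N^{\ge w}_{i_0}(U)$ be the number of cells of $U$ in rows $i_0,\ldots,\ell$ whose entry is at least $w$. The assumption $R_i(S^\sigma)=R_i(T)$ for all $i$ gives $N^{\ge w}_{i_0}(S^\sigma)=N^{\ge w}_{i_0}(T)$ for all $i_0,w$. On the other hand, $S^\sigma$ is obtained from $S$ by permuting entries within each column, and in the SSYT $S$ the $\lambda^{*}_{c}$ entries of column $c$ are strictly increasing, so the entries of $S$ lying in rows $i_0,\ldots,\lambda^{*}_{c}$ of column $c$ are exactly the $\lambda^{*}_{c}-i_0+1$ largest entries of that column; any other selection of that many entries of the column --- in particular the one realized by $S^\sigma$ --- contains at most as many entries $\ge w$. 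Summing over columns (columns of length below $i_0$ contribute $0$ on both sides) yields $N^{\ge w}_{i_0}(S^\sigma)\le N^{\ge w}_{i_0}(S)$ for all $i_0,w$, and hence $N^{\ge w}_{i_0}(T)\le N^{\ge w}_{i_0}(S)$ for all $i_0,w$.

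Now I split into the two cases of the hypothesis. If $S=T$, the last inequality is an equality for all $i_0,w$, which forces equality column by column; since equality for all $w$ at a fixed column $c$ and threshold $i_0$ means the $\lambda^{*}_{c}-i_0+1$ entries selected by $\sigma^{(c)}$ form the same set of values as the $\lambda^{*}_{c}-i_0+1$ largest entries of column $c$ (and these are distinct), $\sigma^{(c)}$ must map the bottom $\lambda^{*}_{c}-i_0+1$ rows of column $c$ to themselves. Letting $i_0$ run from $\lambda^{*}_{c}$ down to $1$ shows $\sigma^{(c)}=\mathrm{id}$ for each $c$, i.e., $\sigma=\mathrm{id}$, contradicting the second bullet. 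If instead $S<T$, pick a witness $(i_0,j_0)$ as in the definition of the order, so $i_0\ge 2$, rows $k>i_0$ of $S$ and $T$ coincide, $S[i_0,j']=T[i_0,j']$ for $j'>j_0$, and $w:=T[i_0,j_0]>S[i_0,j_0]$. Since rows are weakly increasing, row $i_0$ of $T$ has its entries $\ge w$ in positions $\ge j_0$ (at least $\lambda_{i_0}-j_0+1$ of them), while row $i_0$ of $S$ has them only in positions $>j_0$ (exactly $\lambda_{i_0}-j_0$), and rows below $i_0$ contribute equally; hence $N^{\ge w}_{i_0}(T)\ge N^{\ge w}_{i_0}(S)+1$, contradicting $N^{\ge w}_{i_0}(T)\le N^{\ge w}_{i_0}(S)$.

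The step I expect to need the most care is the greedy/majorization claim underlying $N^{\ge w}_{i_0}(S^\sigma)\le N^{\ge w}_{i_0}(S)$ together with its equality case: one must argue cleanly that within a strictly increasing column the bottom $k$ cells carry at least as many entries $\ge w$ as any $k$ cells, and that achieving ``as many as possible for every $w$'' pins the selected cells down as a set. The remaining bookkeeping --- propagating the row-multiset equality to row $1$, noting that the order witness always has $i_0\ge 2$ (so only rows $2,\ldots,\ell$ are used), and the downward induction on $i_0$ in the $S=T$ case --- is routine.
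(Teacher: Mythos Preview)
Your argument is correct and takes a genuinely different route from the paper. The paper argues constructively: it first identifies the largest row $i$ moved by $\sigma$, notes that the row-$i$ contents of $S^\sigma$ can only decrease relative to $S$ (handling the $S=T$ case), and then for $S<T$ splits into three sub-cases ($i<i_0$, $i>i_0$, $i=i_0$) relative to the order-witness row $i_0$, exhibiting in each case a specific row where the multisets differ via sum or multiplicity arguments. Your approach instead packages everything into a single majorization inequality $N^{\ge w}_{i_0}(S^\sigma)\le N^{\ge w}_{i_0}(S)$, valid for all $i_0,w$, and derives both cases uniformly from it: the $S<T$ case by producing a single $(i_0,w)$ with $N^{\ge w}_{i_0}(T)>N^{\ge w}_{i_0}(S)$, and the $S=T$ case by analyzing equality column-by-column. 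This is cleaner in that it avoids the case split on the relative position of $i$ and $i_0$, at the cost of introducing the auxiliary counting function. One minor remark: the propagation to row $1$ is not strictly needed, since in the $S=T$ equality analysis knowing that $\sigma^{(c)}$ stabilizes $\{i_0,\ldots,\lambda^*_c\}$ for every $i_0\ge 2$ already forces it to fix each of $2,\ldots,\lambda^*_c$ and hence also $1$.
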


\begin{proof}
    Choose $i$ to be the largest number such that $\sigma$ moves the contents of some cell in the $i$th row of $S$, assuming $\sigma$ is not the identity; otherwise, set $i = 0$. Assuming that $i\neq 0,$ for each cell in the $i^{th}$ row moved by $\sigma$, we note that the contents of this row can only decrease, since the columns of $S$ are strictly increasing and $\sigma$ does not change the contents of any row $i' > i$. In particular, this implies the claim in the case that $S = T$. We therefore assume that $S\neq T$ and $S <  T$ for the rest of the proof.
         
    Let $i_0$ be the largest number such that the $i_{0}^{th}$ rows of $S$ and $T$ differ. Note that $i_0\in \{2,\ldots, \ell\}.$ Further, fix $j_0$ to be the rightmost cell on this row where $S$ and $T$ differ. Note that $S[i_0,j_0] < T[i_0,j_0].$

 We note that we are immediately done if $i < i_0$ since in this case 
    \[
    \{S[i_0,1],\ldots, S[i_0,\lambda_{i_0}]\} = \{S^{\sigma}[i_0,1],\ldots, S^{\sigma}[i_0,\lambda_{i_0}]\} \neq \{T[i_0,1],\ldots, T[i_0,\lambda_{i_0}]\}
    \]
    So we may assume that $i \geq i_0$, and in particular that $\sigma$ is not the identity.

    Now, we consider two cases.
    \begin{itemize}
        \item If $i > i_0,$ then we have 
        \[
            \sum_{j\in [\lambda_i]}S^{\sigma}[i,j] <  \sum_{j\in [\lambda_i]}S[i,j] = \sum_{j\in [\lambda_i]} T[i,j]
        \]
        implying the claim in this case.
        
        \item If $i = i_0,$  consider the rightmost cell (numbered $j$, say) where $S^\sigma$ and $T$ differ on this row. Note that $S^{\sigma}[i,j_0] \leq S[i,j_0] < T[i,j_0]$ and hence $j \geq j_0.$

        Consider the multiplicities of the element $t := T[i,j]$ in the $i$th rows of $S, S^\sigma$ and $T$, which we denote $m_S, m_{S^\sigma}$ and $m_T$ respectively. Note that $m_S \leq m_T$ because $j\geq j_0$ and $S < T.$ We also know that $S^{\sigma}[i,j]\neq t$ by definition of $j$. Finally, note that for any $j' < j$ we have either $S^{\sigma}[i,j'] = S[i,j']$ or $S^{\sigma}[i,j'] < S[i,j'] \leq S[i,j]\leq T[i,j]$ with the latter two inequalities following from the fact that $S$ is an SSYT and the fact that $j\geq j_0.$ This implies that $m_{S^\sigma} < m_T$, the multisets defined by the $i$th row in the two tableaux $S^\sigma$ and $T$ cannot be equal.
    \end{itemize}
    This finishes the proof of the claim.
\end{proof}

The main result of this subsection is the following lemma, which shows the existence of the special vectors as stated in \Cref{thm:instantiation-rep-thy-sym}.\\

\begin{lemma}
    \label{lem:special-vecs}
    Let $\lambda \in \mathcal{P}(n)$ and $c \in \mathbb{N}$. Assume that $\lambda_2 \leq c$. For every $T \in \mathrm{SSYT}(\lambda,\mu)$, the function $\chi_T$ satisfy the following properties:
    \begin{enumerate}
        \item For each $T \in \mathrm{SSYT}(\lambda,\mu)$, we have $\lVert \chi_T\rVert_2 = \bigO_{s,c}(1).$
        \item Let $\varepsilon > 0$ be arbitrary and assume $k$ is an integer such that $k\geq cs$. For any $\varepsilon$-almost $k$-wise independent distribution $\mathcal{D}$ supported on $\sgrid^n_\mu,$ we have
        \begin{equation}
            \label{eq:special-vecs}
            \left| \mathbb{E}_{\mathbf{x}\sim \mathcal{D}}[\chi_T(\mathbf{x})]\right| \leq \bigO_{s,c}(\varepsilon).
        \end{equation}
        \item We have,
        \begin{align*}
         \mathrm{Vol}(\setcond{\chi_{T}}{T \in \mathrm{SSYT}(\lambda,\mu)}) \; = \; \Omega_{s,c}(1).
        \end{align*}
    \end{enumerate}
\end{lemma}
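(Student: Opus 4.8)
The plan is to prove each of the three properties of the functions $\chi_T$ separately, and in increasing order of difficulty. First, property (1): boundedness of $\lVert\chi_T\rVert_2$. By \Cref{obs:chiT-junta}, each $\chi_T$ depends only on the $\gamma \le cs$ coordinates appearing in the first $\lambda_2 \le c$ columns of $T$. Moreover $\chi_T$ is an integer-valued function, and from its definition \eqref{eq:defn-chiT} it is a signed sum of at most $|C_\lambda|$ indicator functions $e_{T_0^\sigma, T}$; since each row has at most $\ell \le s$ cells, $|C_\lambda| \le (s!)^c = \bigO_{s,c}(1)$, so $|\chi_T(\bm x)| \le (s!)^c$ pointwise. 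Hence $\lVert\chi_T\rVert_2^2 = \mathbb{E}_{\bm x \sim \sgrid^n_\mu}[\chi_T(\bm x)^2] \le (s!)^{2c} = \bigO_{s,c}(1)$. (One should double-check the trivial lower bound $\lVert\chi_T\rVert_2 = \Omega_{s,c}(1)$ as well — this follows because $\chi_T$ is nonzero, it is a $\gamma$-junta for $\gamma = \bigO_{s,c}(1)$, and on the multislice the marginal on any $\gamma$ coordinates is $\gamma$-close to uniform up to factors depending only on $s,c,n$; for large $n$ this forces $\mathbb{E}[\chi_T^2]$ to be within a constant factor of its value under the product distribution, which is a fixed positive constant.)

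Next, property (2): the near-vanishing of $\mathbb{E}_{\bm x \sim \mathcal D}[\chi_T(\bm x)]$ for $\varepsilon$-almost $k$-wise independent $\mathcal D$ with $k \ge cs$. The key point is that $\chi_T$ is a $\gamma$-junta with $\gamma \le cs \le k$, so $\mathbb{E}_{\bm x \sim \mathcal D}[\chi_T(\bm x)]$ depends only on the marginal $\mathcal D_S$ on the relevant coordinate set $S$ (with $|S| = \gamma \le k$), which by hypothesis is $\varepsilon$-close in total variation to uniform on $\sgrid^S$. Therefore $|\mathbb{E}_{\mathcal D}[\chi_T] - \mathbb{E}_{U}[\chi_T]| \le \varepsilon \cdot \lVert\chi_T\rVert_\infty \le \varepsilon (s!)^c = \bigO_{s,c}(\varepsilon)$, where $U$ denotes the uniform distribution on $\sgrid^S$ (equivalently, on $\sgrid^n$). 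So it suffices to show $\mathbb{E}_{\bm x \sim U}[\chi_T(\bm x)] = 0$, i.e., $\chi_T$ has mean zero under the uniform product distribution on the grid. This is where the antisymmetrization over $C_\lambda$ does its work: $\mathbb{E}_U[e_{T_0^\sigma, T}]$ is a product over rows $i$ of the probability that $\{x_{T_0^\sigma[i,1]},\dots\}$ equals a fixed multiset, and because $e_{T_0^\sigma, T}$ only constrains the coordinates in the first $\lambda_2$ columns (Observation~\ref{obs:chiT-junta}) while the column permutation $\sigma$ only permutes entries within columns, one checks that $\mathbb{E}_U[e_{T_0^\sigma, T}]$ is actually independent of $\sigma$. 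Then $\mathbb{E}_U[\chi_T] = \big(\sum_{\sigma \in C_\lambda}\mathrm{sgn}(\sigma)\big)\cdot \mathbb{E}_U[e_{T_0, T}] = 0$ since $C_\lambda$ is a nontrivial product of symmetric groups (here we use $\lambda_2 \ge 2$, equivalently $\ell \ge 2$ columns of length $\ge 2$ — actually we need that at least one column of $\lambda$ has length $\ge 2$, which holds since $\ell \ge 2$). I would state this mean-zero computation carefully as a sublemma.

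Finally, property (3): the volume bound $\mathrm{Vol}(\{\chi_T : T \in \mathrm{SSYT}(\lambda,\mu)\}) = \Omega_{s,c}(1)$. This is the main obstacle and is where \Cref{clm:SSYT-order} is used. The strategy is to order the tableaux $T^{(1)} < T^{(2)} < \dots < T^{(m_\lambda)}$ by the total order on $\mathrm{SSYT}(\lambda,\mu)$ and show that when we Gram–Schmidt the sequence $\chi_{T^{(1)}}, \chi_{T^{(2)}},\dots$ in this order (with respect to the inner product \eqref{eqn:inner-prod}), each successive residual vector $\widetilde\chi_{T^{(r)}}$ has $\lVert\widetilde\chi_{T^{(r)}}\rVert_2 = \Omega_{s,c}(1)$; the volume is the product of these lengths, and $m_\lambda = \bigO_{s,c}(1)$, so the product is $\Omega_{s,c}(1)$. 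To lower-bound $\lVert\widetilde\chi_{T^{(r)}}\rVert_2$ I would exhibit, for each $T = T^{(r)}$, a "witness" test function — namely the indicator $e_{T_0, T}$ itself, or rather its projection — and show (a) $\langle \chi_T, e_{T_0,T}\rangle$ is bounded below by a positive constant (since the $\sigma = \mathrm{id}$ term contributes $\mathbb{E}[e_{T_0,T}^2] = \mathbb{E}[e_{T_0,T}] = \Theta_{s,c}(1)$ and, by \Cref{clm:SSYT-order} applied with $S = T$, every other $\sigma \ne \mathrm{id}$ gives $e_{T_0^\sigma, T}$ disjoint support from $e_{T_0, T}$, hence contributes $0$), while (b) $\langle \chi_S, e_{T_0,T}\rangle = 0$ for every $S < T$ — this is exactly the content of \Cref{clm:SSYT-order} in its first bullet: for $S < T$ and any $\sigma$, some row $i \ge 2$ of $S^\sigma$ has a different multiset than $T$, so $e_{T_0^\sigma, S}$ and $e_{T_0, T}$ have disjoint supports. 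Combining (a) and (b): $e_{T_0,T}$ is orthogonal to $\mathrm{span}\{\chi_S : S < T\}$, so $\langle \widetilde\chi_T, e_{T_0,T}\rangle = \langle \chi_T, e_{T_0,T}\rangle = \Omega_{s,c}(1)$, and by Cauchy–Schwarz $\lVert\widetilde\chi_T\rVert_2 \ge \Omega_{s,c}(1)/\lVert e_{T_0,T}\rVert_2 = \Omega_{s,c}(1)$ using $\lVert e_{T_0,T}\rVert_2 = \bigO_{s,c}(1)$. The bookkeeping to make all the "$\Theta_{s,c}(1)$" claims precise — in particular verifying $\mathbb{E}_{\bm x\sim\sgrid^n_\mu}[e_{T_0,T}] = \Theta_{s,c}(1)$ for large $n$, which again reduces to the fact that $e_{T_0,T}$ is a junta and multislice marginals on $\bigO_{s,c}(1)$ coordinates are close to product — is the routine but slightly delicate part I would need to execute with care.
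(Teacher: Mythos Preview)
Your arguments for Properties 1 and 2 are essentially the same as the paper's and are correct. For Property 2 in particular, both you and the paper use that $\chi_T$ is a $(\le cs)$-junta together with the signed-sum cancellation $\sum_{\sigma \in C_\lambda}\mathrm{sgn}(\sigma)=0$; the only cosmetic difference is that the paper phrases the mean-zero computation over the uniform distribution on $\sgrid^n_\mu$ rather than the product uniform, but the symmetry argument is identical.

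Property 3, however, has a genuine gap. Your claim (b), that $e_{T_0^\sigma,S}$ and $e_{T_0,T}$ have disjoint supports whenever $S<T$, is false. As a concrete counterexample take $s=3$, $\lambda=(n-2,2)$, let $S$ have second row $(1,1)$ and $T$ have second row $(1,2)$ (so $S<T$), and let $\sigma$ be the transposition in column~1. Then $e_{T_0^\sigma,S}(\bm x)=1$ requires $x_1=x_n=1$, while $e_{T_0,T}(\bm x)=1$ requires $\{x_{n-1},x_n\}=\{1,2\}$; any $\bm x$ with $(x_1,x_{n-1},x_n)=(1,2,1)$ lies in both supports. A direct computation gives $\langle\chi_S,e_{T_0,T}\rangle\approx -4/81\neq 0$ for large $n$. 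The same issue breaks your justification of (a): the cross-terms $\langle e_{T_0^\sigma,T},e_{T_0,T}\rangle$ for $\sigma\neq\mathrm{id}$ are not zero either.

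The underlying problem is that \Cref{clm:SSYT-order} compares the row-multisets of $S^\sigma$ with those of $T$, which tells you something about points $\bm x$ whose values at the $T_0$-positions \emph{exactly} equal the entries of $S$ --- not points where the values merely match row-multisets. The paper exploits this by taking as witness the indicator of the finer set
\[
A_S=\bigl\{\bm x\in\sgrid^n_\mu:\ x_{T_0[i,j]}=S[i,j]\ \text{for all cells in the first }\lambda_2\text{ columns}\bigr\},
\]
and running Gram--Schmidt in \emph{decreasing} SSYT order. For $\bm x\in A_S$ one has $x_{T_0^\sigma[i,j]}=S^\sigma[i,j]$, so \Cref{clm:SSYT-order} directly gives $e_{T_0^\sigma,T}(\bm x)=0$ for every $T>S$ and every $\sigma$ (and for $T=S$ with $\sigma\neq\mathrm{id}$); hence $\chi_T|_{A_S}\equiv 0$ for $T>S$ while $\chi_S|_{A_S}\equiv 1$, yielding $\lVert\tilde\chi_S\rVert_2^2\ge |A_S|/N=\Omega_{s,c}(1)$. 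Replacing your witness $e_{T_0,T}$ by $\mathbb{1}_{A_S}$ and reversing the Gram--Schmidt direction repairs the argument.
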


\begin{proof}
    The first item follows almost immediately from the definition of $\chi_T$ in \Cref{eq:defn-chiT} above. From this definition, we get
    \begin{equation}
    \label{eq:bound-chiT}
    \lVert \chi_T\rVert_2 \leq \lVert \chi_T\rVert_\infty = \max_{\mathbf{x}\in \sgrid^n_\mu} |\chi_T(\bm{x})| \leq  |C_\lambda|\cdot \max_{\sigma,\mathbf{x}} |e_{T_0^\sigma, T}(\mathbf{x})| \leq |C_\lambda| \leq (s!)^c
    \end{equation}
    where the first inequality is trivial, the second is the triangle inequality applied to \Cref{eq:defn-chiT}, the third follows from the fact $|e_{T_0^\sigma, T}(\bm{x})|\leq 1$ for each $\mathbf{x}$, and the last follows from the fact that $\lambda$ is $c$-good.

    For the second item, we note that by \Cref{obs:chiT-junta} and the $\varepsilon$-almost $k$-wise independence of $\mathcal{D}$, we have
    \[
    \mathbb{E}_{\mathbf{x} \sim \mathcal{D}}[\chi_{T}(\mathbf{x})] \quad  \leq \quad \varepsilon  \cdot \max_{\mathbf{x} \in \sgrid^{n}} |\chi_{T}(\mathbf{x})|  \; + \; \mathbb{E}_{\mathbf{a} \sim \sgrid^{n}_\mu}[\chi_{T}(\mathbf{a})].
    \leq \bigO_{s,c}(\varepsilon) + \mathbb{E}_{\mathbf{x} \sim \sgrid^{n}_\mu}[\chi_{T}(\mathbf{x})]
    \]
    where the second inequality uses the bound on $|\chi_T(\mathbf{x})|$ proved above. To bound the latter term, we note that for by symmetry, for each $\sigma\in C_\lambda$, the quantity $\mathbb{E}_{\mathbf{x} \sim \sgrid^{n}_\mu}[e_{T_0^\sigma, T}(\mathbf{x})]$ is exactly the same. Since the signed sum defining $\chi_T$ has the same number of positive and negative signs, the sum of the expectations is $0.$ This proves the second item of the claim.

    The third item needs a definition. Given a $S\in \mathrm{SSYT}(\lambda, \mu),$ define subset $A_{S} \subset \sgrid^{n}_{\mu}$ as follows:
    \begin{align*}
        A_{S} \; := \; \setcond{\mathbf{x} \in \sgrid^{n}_{\mu}}{x_{T_0[i,j]} = S[i,j], \; i \in [\ell], \; j\leq \min \{ \lambda_2,\lambda_i \}},
    \end{align*}
  i.e. we define $A_S$ using the first $\lambda_2$ columns of $S$.

    Note that for each $\mathbf{x}\in A_S$, we have the following:
    \begin{itemize}
        \item $e_{T_0,S}(\mathbf{x}) = 1$. This follows immediately from the definition of $e_{T_0,S}$ above.
        \item Now fix $T\in \mathrm{SSTY}(\lambda,\mu)$ and $\sigma\in C_\lambda$ such that either $T > S$ or $S=T$ and $\sigma$ is not the identity permutation (here the identity permutation in $C_{\lambda}$ refers to $\mathrm{id} \times \cdots \times \mathrm{id}$). We claim that $e_{T_0^\sigma, T}(\mathbf{x}) = 0.$ To see this, start by labelling each cell of $T_0$ with the value of the corresponding variable, which leads to a tableau $S'$ which agrees with $S$ on all cells in the first $\lambda_2$ columns. Since $e_{T_0^\sigma,T}$ depends only the variables in these columns, we may change $\mathbf{x}$ in the other coordinates to ensure that $S' = S.$ 
        
        Now, we observe that for any $\sigma\in C_\lambda$, the multiset $\{x_{T_0^\sigma[i,1]},\ldots, x_{T_0^\sigma[i,\lambda_i]}\}$ is equal to the multiset $\{S^{\sigma}[i,1],\ldots, S^\sigma[i,\lambda_i]\}$. In particular, by \Cref{clm:SSYT-order}, there exists an $i\in [\ell]$ so that the  multiset $\{x_{T_0^\sigma[i,1]},\ldots, x_{T_0^\sigma[i,\lambda_i]}\}$ is not equal to $\{T[i,1],\ldots, T[i,\lambda_i]\}$, implying that $e_{T_0^\sigma,T}(\mathbf{x}) = 0.$
    \end{itemize}
    The above implies that for each $\mathbf{x}\in A_S$, we have
    \begin{itemize}
        \item $\chi_S(\mathbf{x}) = 1$ and
        \item $\chi_T(\mathbf{x}) = 0$ for each $T > S$.
    \end{itemize}

\noindent
For each $S\in \mathrm{SSYT}(\lambda,\mu)$, let $\tilde{\chi}_S$ denote the projection of $\chi_S$ to the vector space orthogonal to the span of $\{\chi_T\ |\ S < T\}$.\newline
    To bound $\lVert\tilde{\chi}_S\rVert_2$, we recall that $\tilde{\chi}_S = \chi_S - \chi$ for some $\chi$ in the span of $\{\chi_T\ |\ S < T\}.$ By the above argument, we know that $\chi(\mathbf{x}) = 0$ and hence that $\tilde{\chi}_S(\bm{x}) = 1$ for each $\mathbf{x}\in A_S.$ Hence, we get
    \[
    \lVert\tilde{\chi}_S\rVert_2^2 = \mathbb{E}_{\mathbf{x}\sim \sgrid^n_\mu} \tilde{\chi}_S(\mathbf{x})^2 \geq \frac{|A_S|}{N}
    \]
    where $N = |\sgrid^n_\mu|.$ So to prove the claim, it suffices to show that the latter quantity is $\Omega_{s,c}(1).$

    For each $i \in \{0,\ldots,s-1\}$, let $\gamma_{i}$ denote the number of cells in the first $\lambda_2$ columns of $S$ that are $i$

    Define $\gamma := \gamma_{0}+\ldots+\gamma_{s-1} \leq cs$. Using Stirling's approximation and $\gamma \leq cs \leq n/2$, we get,
\begin{align*}
    |A_{S}| \; = \; \binom{n-\gamma}{\frac{n}{s}-\gamma_{0},\ldots,\frac{n}{s}-\gamma_{s-1}} \; \geq \; \dfrac{(n-\gamma)^{n-\gamma} \cdot s^{n-\gamma}}{(n-s)^{n-\gamma}} \cdot \Omega\paren{ \dfrac{\sqrt{\pi n}}{(2 \pi (\frac{n}{s}))^{s/2}} }
\end{align*}
Again using Stirling's approximation for $N = \binom{n}{n/s,\ldots, n/s}$, we get,
\begin{gather*}
    \dfrac{|A_{S}|}{|\sgrid^{n}_{\mu}|} \; \geq \; \dfrac{(n-\gamma)^{n-\gamma}}{(n-s)^{n-\gamma}} \, \cdot \, \dfrac{1}{s^{\gamma}} \cdot \Omega \paren{\dfrac{\sqrt{\pi n}}{(2\pi(\frac{n}{s}))^{s/2}} \cdot \dfrac{(2\pi(\frac{n}{s}))^{s/2}}{\sqrt{2 \pi n}} } \\ \\
    \geq \; \paren{1 - \dfrac{\gamma-s}{n-s}}^{n-\gamma} \cdot \dfrac{1}{s^{\gamma}} \cdot  \Omega(1) \\ \\
    \geq \; \Omega\paren{ \paren{ 1-\dfrac{2\gamma}{n}}^{n/2\gamma \cdot 2\gamma} \cdot \dfrac{1}{s^{\gamma}} } \; \geq \; \Omega\paren{ \paren{\dfrac{1}{e^{2}s}}^{\gamma} } \; = \; \Omega_{s,\gamma}(1).
\end{gather*}
As $\gamma \leq cs$, we get that $|A_{S}|/N = \Omega_{s,c}(1)$.\\

\noindent
The volume of the parallelepiped is equal to the product of $\| \tilde{\chi}_{T} \|'s$. As we showed above, $\widetilde{\chi}_{T} = \Omega_{s,c}(1)$, and thus we get,
\begin{align*}
    \mathrm{Vol}(\setcond{\chi_{T}}{T \in \mathrm{SSYT}(\lambda,\mu)}) \; \geq \; \prod_{T \in \mathrm{SSYT}(\lambda,\mu)} \, \| \tilde{\chi}_{T} \|_{2} \; = \; \Omega_{s,c}(1).
\end{align*}
This finishes the proof of \Cref{lem:special-vecs}.
\end{proof}

\subsection{Putting Everything Together}\label{subsec:proof-instantiation}
Now we are ready to combine everything and prove \Cref{thm:instantiation-rep-thy-sym}. To do so, we will use the following standard result on the representation of $\mathrm{Sym}_{n}$. The proof can be found in standard texts on representation theory for the symmetric group or \cite{Sagan}.\\

\begin{theorem}[Young's Rule]\label{thm:irrep-decomposition}
(See for e.g. \cite[Corollary 2.11.2]{Sagan}). Fix any $n,s \in \mathbb{N}$ where $n$ is divisible by $s$ and let $\mu = (n/s,\ldots,n/s) \in \mathcal{P}(n)$. 
For every $\lambda \in \mathcal{P}(n)$ with $\lambda \trianglerighteq \mu$, let $V_{\lambda,j}$ and $m_{\lambda}$ be as defined in \Cref{thm:instantiation-rep-thy-sym}. Then,
\begin{align*}
    \dim(V_{\lambda,1}) = \cdots = \dim(V_{\lambda,m_{\lambda}}) = f_{\lambda} \quad \text{ and } \quad m_{\lambda} = K_{\lambda \mu},
\end{align*}
where $f_{\lambda}$ and $K_{\lambda \mu}$ are defined in \Cref{sec:prelims}.
\end{theorem}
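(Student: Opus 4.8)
The plan is to recognize $\mathbb{M}^{\mu}$ as a concrete incarnation of the classical Young permutation module and then quote Young's rule in the form already cited from \cite{Sagan}; everything else is bookkeeping to match notation.

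First I would set up the module identification. A point $\veca \in \sgrid^{n}_{\mu}$ is precisely a function $[n] \to \sgrid$ whose fiber over each symbol has size $n/s$, i.e.\ an ordered partition of $[n]$ into $s$ labelled blocks of size $n/s$. The group $\mathrm{Sym}_{n}$ acts transitively on these, and the stabilizer of the canonical point (first $n/s$ coordinates carrying the first symbol, and so on) is the Young subgroup $\mathrm{Sym}_{\mu} := \mathrm{Sym}_{n/s} \times \cdots \times \mathrm{Sym}_{n/s}$ ($s$ factors). Hence $\sgrid^{n}_{\mu} \cong \mathrm{Sym}_{n}/\mathrm{Sym}_{\mu}$ as $\mathrm{Sym}_{n}$-sets. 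Concretely, sending $\veca$ to the $\mu$-tabloid whose $i$-th row is the fiber $\{\,j : a_{j} = \sigma_{i}\,\}$ gives a $\mathrm{Sym}_{n}$-equivariant bijection between $\sgrid^{n}_{\mu}$ and the set of $\mu$-tabloids, so the permutation representation $\mathbb{M}^{\mu} = \{\,f : \sgrid^{n}_{\mu} \to \mathbb{C}\,\}$ with the action of \Cref{eqn:rep-definition} is isomorphic, as a $\mathbb{C}\mathrm{Sym}_{n}$-module, to the Young permutation module $M^{\mu} := \mathrm{Ind}_{\mathrm{Sym}_{\mu}}^{\mathrm{Sym}_{n}} \mathbf{1}$. (The inner product in \Cref{eqn:inner-prod} is, up to the scalar $1/N$, the standard $\mathrm{Sym}_{n}$-invariant one, since both are $\mathrm{Sym}_{n}$-invariant and the action on the natural basis is transitive; this is irrelevant for the dimension count but reassuring for the rest of the paper.)

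Next I would invoke Young's rule, e.g.\ \cite[Corollary 2.11.2]{Sagan}, which gives the decomposition into irreducibles $M^{\mu} \cong \bigoplus_{\lambda \in \mathcal{P}(n)} (S^{\lambda})^{\oplus K_{\lambda\mu}}$, where $S^{\lambda}$ is the Specht module and $K_{\lambda\mu} = |\mathrm{SSYT}(\lambda,\mu)|$ the Kostka number. Two standard facts then finish the matching with the statement: (i) $K_{\lambda\mu} > 0$ if and only if $\lambda \trianglerighteq \mu$, with $K_{\mu\mu} = 1$, a purely combinatorial fact about semistandard fillings, so the sum is effectively over $\lambda \trianglerighteq \mu$; and (ii) $\dim S^{\lambda} = f_{\lambda} = |\mathrm{SYT}(\lambda)|$, immediate from the polytabloid construction of $S^{\lambda}$ (or the hook length formula). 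Comparing with \Cref{prop:standard-rep-theory}, $V_{\lambda}$ is the $S^{\lambda}$-isotypic component of $\mathbb{M}^{\mu}$, so its number of isomorphic summands is $m_{\lambda} = K_{\lambda\mu}$ and each summand has dimension $\dim V_{\lambda,1} = \cdots = \dim V_{\lambda,m_{\lambda}} = \dim S^{\lambda} = f_{\lambda}$, which is exactly the claim.

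The only genuinely non-trivial input is Young's rule itself, which we are quoting rather than reproving, so there is no real obstacle. The single point that deserves care is the identification of $\mathbb{M}^{\mu}$ — defined in the paper as functions on a multislice over an abstract alphabet $\sgrid$ — with the textbook module $M^{\mu}$ defined on $\mu$-tabloids; once the equivariant bijection above is spelled out, the theorem follows at once.
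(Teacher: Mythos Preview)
Your proposal is correct and matches the paper's approach, which is simply to cite \cite[Corollary 2.11.2]{Sagan} without proof. You supply more detail than the paper does---in particular, you spell out the $\mathrm{Sym}_n$-equivariant bijection between $\sgrid^n_\mu$ and $\mu$-tabloids that justifies identifying $\mathbb{M}^\mu$ with the classical permutation module $M^\mu$---but this is exactly the bookkeeping the paper leaves implicit (it only gestures at this translation later in \Cref{app:tabloids}).
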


\noindent
Next we prove two claims regarding $f_{\lambda}$ and $m_{\lambda}$ for certain partitions $\lambda \in \mathcal{P}(n)$. These two claims will be used to prove the item $2$ of \Cref{thm:instantiation-rep-thy-sym}.\\

\begin{restatable}[Lower bound on the algebraic multiplicity of certain eigenvalues]{claim}{dimensionlowerbond}\label{claim:large-alg-mult}
\cite[Lemma 2]{Ellis-Friedgut-Pilpel-2011}\footnote{There is a minor typo in the statement of Lemma 2 in \cite{Ellis-Friedgut-Pilpel-2011}. It should be ``of length \textbf{at most}...'' instead of ``of length greater than...''}. Let $c \in \mathbb{N}$ be a constant with $c > 10s$. Then for any partition $\lambda \in \mathcal{P}(n)$ with $\lambda_{2} > c$,
\begin{align*}
    f_\lambda > \Omega_{c}(n^{c}).
\end{align*}
\end{restatable}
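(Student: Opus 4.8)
The plan is to prove the (equivalent, slightly cleaner) statement that $f_\lambda \geq C_c\cdot n^c$ for an absolute constant $C_c>0$ depending only on $c$, by a direct combinatorial argument using only that $f_\lambda=\dim \mathbb{S}^\lambda$ counts standard Young tableaux of shape $\lambda$, together with two elementary facts: monotonicity, namely $f_\nu\leq f_\lambda$ whenever $\nu\subseteq\lambda$ as Young diagrams (every SYT of $\nu$ extends in at least one way to an SYT of $\lambda$ by filling the skew shape $\lambda/\nu$ with the remaining values in increasing order along rows and columns, and any skew shape admits such a filling by repeatedly placing the largest unused value in a removable corner); and $f_\lambda=f_{\lambda^\ast}$ for the conjugate partition. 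We may assume $n$ is larger than any fixed function of $c$, since for bounded $n$ the trivial bound $f_\lambda\geq 1$ already gives the claim after shrinking $C_c$. Note the hypothesis $c>10s$ plays no role; only $c\in\mathbb{N}$ (hence $c\geq 1$) is used. The hypothesis $\lambda_2>c$ says precisely that $\lambda$ contains the $2\times(c+1)$ rectangle, and equivalently that the conjugate $\lambda^\ast$ has all of its first $c+1$ parts at least $2$. By monotonicity, it therefore suffices, for each such $\lambda$, to locate a sub-diagram $\nu\subseteq\lambda$ for which $f_\nu=\Omega_c(n^c)$ can be read off from a standard formula.

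Write $\ell=\ell(\lambda)$ for the number of rows of $\lambda$ and fix the threshold $B:=n^{c/(c+1)}$. I would split into three exhaustive cases. First, if $\lambda_1\geq B$, then $\lambda\supseteq(\lambda_1,c+1)$, so by monotonicity and the two-row (ballot-number) formula $f_{(a,b)}=\binom{a+b}{b}\tfrac{a-b+1}{a+1}$ for $a\geq b$, we get $f_\lambda\geq f_{(\lambda_1,c+1)}\geq \tfrac{1}{2(c+1)!}\,\lambda_1^{\,c+1}\geq \tfrac{1}{2(c+1)!}\,n^c$ for $n$ large. Second, if $\ell\geq n/(2(c+1))$, then since $\lambda^\ast$ has first part $\ell$ and at least $c+1$ rows, $\lambda^\ast\supseteq(\ell,1^c)$, whence $f_\lambda=f_{\lambda^\ast}\geq f_{(\ell,1^c)}=\binom{\ell+c-1}{c}\geq \tfrac{1}{c!}\,\ell^c=\Omega_c(n^c)$. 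Third, if neither holds, so $\lambda_1<B$ and $\ell<n/(2(c+1))$, let $r$ be the number of rows of $\lambda$ of length $>c$ (so $r\geq 2$). Counting boxes, $n=\sum_i\lambda_i\leq r\cdot\lambda_1+(\ell-r)\cdot c< rB+\ell c< rB+n/2$, forcing $r> n/(2B)=\tfrac12 n^{1/(c+1)}$. Then $\lambda$ contains the $r\times 2$ rectangle, so $f_\lambda\geq f_{r\times 2}=C_r$, the $r$-th Catalan number, and $C_r\geq 2^{\Omega(n^{1/(c+1)})}$ dwarfs $n^c$ for large $n$. In every case $f_\lambda=\Omega_c(n^c)$, which completes the proof.

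The one step that needs genuine care, rather than routine bookkeeping, is the choice of the threshold $B=n^{c/(c+1)}$ and the realization that the "middle" regime must be handled with an \emph{exponential} bound. Naive splits — e.g. "$\lambda_1\geq\sqrt n$ or $\ell\geq\sqrt n$" — only yield $f_\lambda=\Omega_c(n^{(c+1)/2})$ or $\Omega_c(n^{c/2})$, which is too weak, because a shape that is "balanced" (both dimensions polynomially large but well below $n$) has no long row or column yet still forces many rows of length $>c$, and there the only way to win is to use that an $r\times 2$ block already has exponentially many tableaux. Everything else — the two-row, hook, and Catalan evaluations of $f$, and the box-counting in the third case — is standard. (Alternatively one could invoke \cite[Lemma~2]{Ellis-Friedgut-Pilpel-2011} directly, with the typo corrected as in the footnote; the argument above is a self-contained reproof, and it in fact delivers $n^c$ rather than merely the $n^{0.8c}$ needed downstream.)
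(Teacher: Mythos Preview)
Your proof is correct. The paper does not prove this claim at all; it simply cites \cite[Lemma~2]{Ellis-Friedgut-Pilpel-2011} as a black box. So your argument is not a different route to the same destination so much as a self-contained replacement for an external citation.

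That said, your elementary approach is worth noting. The three-way case split on $\lambda_1$, $\ell(\lambda)$, and the ``middle'' regime, combined with the monotonicity $f_\nu\leq f_\lambda$ for $\nu\subseteq\lambda$, reduces everything to explicit evaluations (two-row ballot numbers, hooks, Catalan numbers) rather than appealing to general representation-theoretic machinery. Your observation that the middle regime genuinely requires the exponential Catalan bound --- and that a naive $\sqrt n$ threshold fails --- is the real content; the rest is bookkeeping, as you say. One cosmetic point: the hypothesis $c>10s$ is indeed irrelevant to the bound $f_\lambda=\Omega_c(n^c)$ itself; it appears in the paper only because downstream (in \Cref{lemma:large-block-tiny-eigenvalue}) one needs $c$ large relative to the Frobenius-norm exponent $\kappa$, and the statement was packaged with that later use in mind.
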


\noindent
\begin{claim}[Multiplicity for $c$-good partitions]\label{claim:multiplicity-c-good}
Let $c \in \mathbb{N}$ be a constant and $\lambda \in \mathcal{P}(n)$ with $\lambda_{2} \leq c$ and $\lambda \neq (n)$. Let $m_{\lambda}$ be as defined in the statement of \Cref{thm:instantiation-rep-thy-sym}. Then, $m_{\lambda} \leq s^{sc} = \bigO_{s,c}(1)$.
\end{claim}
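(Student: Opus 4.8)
The plan is to invoke Young's rule (\Cref{thm:irrep-decomposition}), which identifies $m_\lambda$ with the Kostka number $K_{\lambda\mu} = |\mathrm{SSYT}(\lambda,\mu)|$, and then to bound the number of such tableaux by a purely combinatorial argument. First I would record the basic structural facts: a nonzero Kostka number $K_{\lambda\mu}$ forces $\lambda \trianglerighteq \mu$, and since $\mu = (n/s,\ldots,n/s)$ has exactly $s$ parts, dominance $\lambda \trianglerighteq \mu$ implies $\lambda_1 + \cdots + \lambda_s = n$, so $\lambda = (\lambda_1,\ldots,\lambda_\ell)$ has at most $s$ rows, i.e. $\ell \le s$. (If $\lambda \not\trianglerighteq \mu$ then $m_\lambda = K_{\lambda\mu} = 0$ and there is nothing to prove, so we may assume $\lambda \trianglerighteq \mu$.)

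The key step is the observation that a semistandard Young tableau $T$ of shape $\lambda$ and content $\mu$ is completely determined by its entries in rows $2,\ldots,\ell$. Indeed, since the content is fixed to be $\mu$, for each symbol $v$ the number of occurrences of $v$ in the first row of $T$ equals $n/s$ minus the number of occurrences of $v$ among rows $2,\ldots,\ell$; hence the multiset of entries of the first row is determined by the entries of rows $2,\ldots,\ell$, and since the first row of an SSYT is weakly increasing, it is exactly the sorted arrangement of that multiset. Therefore the restriction map sending $T \in \mathrm{SSYT}(\lambda,\mu)$ to its sub-array on rows $2,\ldots,\ell$ is injective.

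Consequently, $m_\lambda = |\mathrm{SSYT}(\lambda,\mu)|$ is at most the number of ways of filling the cells of rows $2,\ldots,\ell$ (ignoring all SSYT constraints) with symbols from an alphabet of size $s$, namely $s^{\lambda_2 + \cdots + \lambda_\ell}$. Since $\lambda_2 \ge \lambda_3 \ge \cdots \ge \lambda_\ell$ and $\ell \le s$, we have $\lambda_2 + \cdots + \lambda_\ell \le (\ell-1)\lambda_2 \le (s-1)c \le sc$, and hence $m_\lambda \le s^{sc} = \bigO_{s,c}(1)$, as claimed. I do not expect a real obstacle here; the only point needing care is the injectivity of the restriction map, which relies simultaneously on the content being pinned to $\mu$ and on the first row of an SSYT being weakly increasing — the same two facts are what make the ``special vectors'' $\chi_T$ in \Cref{subsec:alt-special-vectors} depend only on the first $\lambda_2$ columns.
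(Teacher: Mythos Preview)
Your proposal is correct and follows essentially the same approach as the paper: invoke Young's rule to identify $m_\lambda = K_{\lambda\mu}$, then bound the number of SSYTs by $s$ raised to the number of cells outside the first row, using $\lambda_2+\cdots+\lambda_\ell \le (s-1)c \le sc$. Your write-up is in fact more careful than the paper's, which asserts the same bound tersely without spelling out the injectivity of the restriction-to-rows-$2,\ldots,\ell$ map or the reason $\ell\le s$.
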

\begin{proof}[Proof of \Cref{claim:multiplicity-c-good}]
From \Cref{thm:irrep-decomposition}, we know that $m_{\lambda} = K_{\lambda\mu}$, i.e. the Kostka numbers for shape $\lambda$ and type $\mu$.
We are interested in upper bounding $K_{\lambda\mu}$ for a $c$-good partition $\lambda$. We have $\lambda_{2} + \ldots + \lambda_{\ell} \leq cs$. For each cell in the second row till the last row, there are at most $s$ many choices. As there are $\leq cs$ such cells, we get that $K_{\lambda \mu} \leq s^{cs}$. This finishes the proof of \Cref{claim:multiplicity-c-good}.
\end{proof}

\noindent
Now we are ready to put all the claims and lemmas together to finish the proof of \Cref{thm:instantiation-rep-thy-sym}.\\

\noindent
\begin{proof}[Proof of \Cref{thm:instantiation-rep-thy-sym}]
The first item follows by combining the first item of \Cref{prop:standard-rep-theory} and \Cref{thm:irrep-decomposition}. Item $2$.(a) follows from \Cref{claim:large-alg-mult} and item $2$.(b) follows from \Cref{claim:multiplicity-c-good}.\newline
Finally, we show that the vectors $\chi_{T}$'s meet the conditions stated in the third item.
\begin{enumerate}
    \item For 3.(a), we note that the literature on the representation theory of $\mathrm{Sym}_n$ (see \cite[Section 2.9 \& Section 2.10]{Sagan}\footnote{In the literature of representation theory, these isomorphisms are stated in the language of \textbf{tabloids} and \textbf{polytabloids}. In  \Cref{app:tabloids}, we provide a translation between the language of tabloids/polytabloids and points/functions.}) identifies for each $\lambda\in \mathcal{P}(n)$ exactly $m_\lambda$ many linearly independent ways of embedding the irreducible representation $\mathbb{S}^\lambda$ ($\mathbb{S}^{\lambda}$ is the unique irreducible representation, or Specht module, corresponding to partition $\lambda$) into the representation $\mathbb{M}^{\mu}.$ These embeddings are indexed by elements of $\mathrm{SSYT}(\lambda,\mu)$ and denoted by $\Theta_{T}: \mathbb{S}^{\lambda} \to \mathbb{M}^{\mu}$. Given $T\in \mathrm{SSYT}(\lambda,\mu)$, let $V_{\lambda,T}$ denote the image of $\mathbb{S}^{\lambda}$ under the corresponding embedding $\Theta_T.$
    
    It can be checked that the various $\chi_T$ are the images of the same element $v\in \mathbb{S}^{\lambda}$ under $\Theta_T$ (see also \cite{complexity-symmetric-group}). This implies that for $S,T\in \mathrm{SSYT}(\lambda,\mu)$ $\chi_T$ is the image of $\chi_S$ under the unique isomorphism from $V_{\lambda,S}$ to $V_{\lambda,T}.$ This proves 3.(a). 
    \item Item $1$ of \Cref{lem:special-vecs} shows that they satisfy 3.(b).
    \item Item $2$ of \Cref{lem:special-vecs} shows that they satisfy 3.(c).
    \item Item $3$ of \Cref{lem:special-vecs} shows that they satisfy 3.(d).
\end{enumerate}
This finishes the proof of \Cref{thm:instantiation-rep-thy-sym}.
\end{proof}

\subsection{Singular Value Bound for Nearly Balanced Random Walks} \label{subsec:eigbound}

We now use the statement of~\Cref{thm:more-general-matrix-eigenvalue} to derive the singular value bounds for nearly balanced random walks on the multislice, as stated in~\Cref{cor:gen-bal-cor}.

For this, we will need the following lemma.\\

\begin{lemma}\label{cor:johnson-spl-case}
    For every $s\ge 2$ and $C<\infty$, there exists $\tau > 0$ such that for every finite set $\mathcal{S}$ of size $s$ and sufficiently large $n\in\N$, if a generalized Hamming distance parameter $\Delta\in \Z^{\mathcal{S} \times\mathcal{S}}$ over the multislice $\mathcal{S}^n_\mu$ is $C$-balanced, we have that $\sigma_2(W_\Delta) \le 1/n^{\tau}$, where $W_\Delta$ is the random walk matrix determined by $\Delta$.\\
\end{lemma}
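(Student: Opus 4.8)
The plan is to derive \Cref{cor:johnson-spl-case} as an application of \Cref{thm:more-general-matrix-eigenvalue} to the single random walk matrix $W_\Delta$. To do so, I need to verify that $W_\Delta$ satisfies the three hypotheses of that theorem — it respects symmetries, has polynomially bounded Frobenius norm, and is $\varepsilon$-almost $k$-wise independent for a suitable constant $k$ and a polynomially small $\varepsilon$. Once these are checked, the theorem directly gives $\sigma_2(W_\Delta) \le \max\{c_2/n, c_3 \varepsilon\} \le 1/n^\tau$ for an appropriate $\tau$ depending on $s$ and $C$.

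First, respecting symmetries: this is immediate from the definition of $W_\Delta$ (\Cref{defn:w-delta}). Since the generalized Hamming distance $\Delta(\pi(\veca),\pi(\vecb)) = \Delta(\veca,\vecb)$ for any permutation $\pi \in \mathrm{Sym}_n$ (permuting coordinates simultaneously does not change the coincidence pattern), the uniform distribution over $\{\vecb : \Delta(\veca,\vecb) = \Delta\}$ is carried by $\pi$ to the uniform distribution over $\{\vecb : \Delta(\pi(\veca),\vecb) = \Delta\}$, so $W_\Delta(\pi(\veca),\pi(\vecb)) = W_\Delta(\veca,\vecb)$.

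The Frobenius norm bound is the first genuine computation. Each nonzero entry of $W_\Delta$ equals $1/M_\Delta$ where $M_\Delta = |\{\vecb \in \sgrid^n_\mu : \Delta(\veca,\vecb) = \Delta\}|$ is the out-degree (independent of $\veca$ by symmetry), and there are exactly $N \cdot M_\Delta$ nonzero entries, so $\|W_\Delta\|_F^2 = N \cdot M_\Delta \cdot (1/M_\Delta)^2 = N / M_\Delta$. Now $N = \binom{n}{m,\dots,m}$ with $m = n/s$, and $M_\Delta$ is a product of multinomial coefficients $\prod_{\sigma} \binom{m}{(\Delta_{\sigma,\tau})_\tau}$. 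Since $\Delta$ is $C$-balanced, each $\Delta_{\sigma,\tau}$ is within $O(\sqrt{m\log m})$ of $m/s$, and a standard Stirling estimate shows $M_\Delta = \binom{m}{m/s,\dots,m/s}^{s} / n^{O_{s,C}(1)} = N^{(m/n)\cdot s}\cdot\ldots$; more carefully, one gets $\log M_\Delta = \log N - O_{s,C}(\log n)$, hence $N/M_\Delta \le n^{O_{s,C}(1)}$ and $\|W_\Delta\|_F \le c_1 n^\kappa$ for a constant $\kappa = \kappa(s,C)$. (This is the essence of the remark after \Cref{thm:more-general-matrix-eigenvalue} that the corollary follows "modulo some calculations that verify that Condition (2) applies to $C$-balanced matrices.")

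The $\varepsilon$-almost $k$-wise independence is the main obstacle, and I expect it to require the bulk of the work. Fix $k = 10s\kappa$ (a constant) and a row $\veca$; I must show that for every set $T \subseteq [n]$ with $|T| \le k$, the marginal of $W_\Delta(\veca)$ on the coordinates $T$ is $n^{-\Omega(1)}$-close in total variation to uniform on $\sgrid^T$. Fix $\veca$ and think of drawing $\vecb$ uniformly subject to $\Delta(\veca,\vecb) = \Delta$: this is equivalent to independently, for each value class $\{i : a_i = \sigma\}$ (of size $m$), choosing a uniformly random assignment of the multiset of target values dictated by the $\sigma$-row of $\Delta$ onto those $m$ coordinates. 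So, restricted to the $\le k$ coordinates in $T$, the joint distribution is a product over value classes of small ``sampling without replacement'' marginals. Since $\Delta$ is $C$-balanced, within each class the fractions of each target value are $1/s \pm O(\sqrt{\log m / m})$, and sampling $O(k) = O(1)$ coordinates without replacement from a population of size $m$ with such near-uniform composition yields a distribution within $O(k^2/m) + O(k\sqrt{\log m/m}) = n^{-\Omega(1)}$ of the uniform product distribution on $\sgrid^T$. Summing these deviations over the (constantly many) value classes keeps $\varepsilon = n^{-\Omega(1)}$. Plugging $\varepsilon = n^{-\Omega(1)}$ into \Cref{thm:more-general-matrix-eigenvalue} gives $\sigma_2(W_\Delta) \le \max\{c_2/n, c_3 n^{-\Omega(1)}\} \le n^{-\tau}$ for a suitable $\tau = \tau(s,C) > 0$ and all large enough $n$, completing the proof. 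The one subtlety to be careful about is that the hidden constants in "$\varepsilon = n^{-\Omega(1)}$" and in the Frobenius bound both depend only on $s$ and $C$ (not on $n$), which is exactly what is needed so that $\tau$ can be chosen uniformly in $n$.
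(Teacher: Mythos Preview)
Your proposal is correct and follows essentially the same approach as the paper: apply \Cref{thm:more-general-matrix-eigenvalue} directly to $W_\Delta$ after verifying the three hypotheses (symmetry-respecting, polynomially bounded Frobenius norm via $\|W_\Delta\|_F^2 = N/M_\Delta$ and a Stirling comparison, and $\varepsilon$-almost $k$-wise independence via the ``sampling without replacement within each value class'' picture). The paper carries out the Frobenius norm and $k$-wise independence calculations in more detail, but your sketches identify exactly the right quantities and estimates.
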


\noindent
The above statement implies the claimed general result (i.e., \Cref{cor:gen-bal-cor}) for random walk matrices that are not necessarily given by a single generalized Hamming distance parameter, but as long as they are {\em supported} on balanced generalized Hamming distance parameters.

We first prove~\Cref{cor:johnson-spl-case}.

\begin{proof}[Proof of~\Cref{cor:johnson-spl-case}]
    Here we directly apply our main result~\Cref{thm:more-general-matrix-eigenvalue}, bounding the singular values of matrices satisfying certain properties. For this, we show that $W_{\Delta}$ satisfies the three properties needed to apply~\Cref{thm:more-general-matrix-eigenvalue}. 
    \begin{itemize}
        \item {\bf Permutation invariance:} For every permutation $\pi$ of $[n]$, $W_{\Delta}$ is unchanged if the rows and columns are changed according to the permutation induced by $\pi$ (denoted $\pi()$) on the balanced multislice (denoted $V$ in this proof). This is because the value of the entry $W_{\Delta}({\bf a},{\bf b})$ only depends on $\Delta({\bf a},{\bf b})$, which doesn't get altered by $\pi$, i.e., we have $\Delta({\bf a},{\bf b}) = \Delta(\pi({\bf a}),\pi({\bf b}))$.
        \item {\bf Bounded Frobenius norm:} We will show that $\|W_{\Delta}\|_F \le n^{O_s(1)}$. Denoting $m:=n/s$ and the rows of $\Delta$ by ${\bf p}(0),\dots,{\bf p}(s-1)$, we note that for each ${\bf a}\in V$, there are exactly $D:={m\choose {\bf p}(0)}\dots {m \choose {\bf p}(s-1)}$ \footnote{Here, for a vector of integers ${\bf p}=(p_1,\dots,p_s)$, ${m\choose {\bf p}}$ denotes ${m\choose p_1,\dots,p_s}$.}  points ${\bf b}\in V$ such that $\Delta({\bf a},{\bf b}) = \Delta$. Hence, we have that
        \begin{align*}
            W_{\Delta}({\bf a},{\bf b}) \; = \; \begin{cases}
             1/D, & \text{if } \; \Delta({\bf a},{\bf b}) = P \\
             0, & \text{otherwise}. 
        \end{cases}
        \end{align*}
        Therefore, we have 
        \begin{align}
            \|W_{\Delta}\|_F^2 & = \sum_{{\bf a},{\bf b}\in V} W_{\Delta}({\bf a},{\bf b})^2 \nonumber \\
            & = {|V|D}/{D^2} \nonumber \\
            & = {{sm \choose m,\dots, m}}\bigg /\paren{{m\choose {\bf p}(0)} \dots {m\choose {\bf p}(s-1)}} \label{eqn:frob}
        \end{align}

\paragraph{} In order to bound the above quantity, let ${\bf q} = (q_0,\dots,q_{s-1})\in \Z^{\Z_s}$ be such that $\sum_{j\in \Z_s} q_j = m$ and $|q_j-q_{j'}| \le 1$ for all $j,j'\in \Z_s$ (such a ${\bf q}$ always exists; indeed each $\lfloor m/s \rfloor \le q_j \le \lceil m/s \rceil$). We will first show that $\frac{p_0!\dots p_{s-1}!}{q_0!\dots q_{s-1}!}$ is upper bounded by $m^{\bigO_s(1)}$, where ${\bf p} := {\bf p}(\alpha) = (p_0,\dots,p_{s-1})$ for an arbitrary $\alpha\in \Z_s$.\newline

\begin{claim}
    $\frac{p_0!\dots p_{s-1}!}{q_0!\dots q_{s-1}!} \le m^{\bigO_s(1)}$.
\end{claim}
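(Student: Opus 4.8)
The claim is equivalent to saying that the multinomial coefficient $\binom{m}{p_0,\dots,p_{s-1}}$ (with $m := n/s$) is at least $m^{-\bigO_s(1)}$ times the maximal multinomial coefficient $\binom{m}{q_0,\dots,q_{s-1}}$ over all compositions of $m$ into $s$ nonnegative parts (this maximum is indeed attained at the perfectly balanced $\mathbf q$). So morally the claim says that the multinomial coefficient drops by at most a polynomial factor as one moves from the centre $\mathbf q$ to the nearly-balanced point $\mathbf p$, which lies within $\bigO_s(\sqrt{m\log m})$ of the centre in each coordinate. The essential point is that the naive estimate $p_i!/q_i! \le m^{|p_i-q_i|}$ is useless — $|p_i - q_i|$ can be $\Theta(\sqrt{m\log m})$ — so one has to exploit the cancellation coming from $\sum_i p_i = \sum_i q_i = m$ (the first equality holds since $\Delta/m$ is doubly stochastic, whence every row of $\Delta$ sums to $m$).

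I would prove it by Stirling. Write $p_i = \tfrac{m}{s}(1 + x_i)$ and $q_i = \tfrac{m}{s}(1 + y_i)$, so that $\sum_i x_i = \sum_i y_i = 0$, while $C$-balancedness gives $|x_i| = \bigO_s(\sqrt{\log m/m})$ and the definition of $\mathbf q$ gives $|y_i| = \bigO_s(1/m)$; for $n$ large enough the $p_i, q_i$ are positive integers of size $\Theta(m/s)$ and $|x_i|, |y_i| \le \tfrac12$. Applying $\ln k! = k\ln k - k + \tfrac12\ln(2\pi k) + \bigO(1/k)$ to each factorial and summing over $i$: the linear terms $\sum_i(p_i - q_i)$ vanish, the $\bigO(1/k)$ errors contribute $o(1)$, and the half-log terms $\tfrac12\sum_i \ln(p_i/q_i)$ contribute $\bigO_s(\sqrt{\log m/m}) = o(1)$ (using $|\ln(1+t)| \le 2|t|$ for $|t| \le \tfrac12$). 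This reduces the problem to bounding $S := \sum_i(p_i\ln p_i - q_i\ln q_i)$.

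Expanding $\ln p_i = \ln(m/s) + \ln(1+x_i)$, the part of $S$ proportional to $\ln(m/s)$ equals $\ln(m/s)\cdot\tfrac{m}{s}\big(\sum_i x_i - \sum_i y_i\big) = 0$, using $\sum_i x_i = \sum_i y_i = 0$ once more. Hence $S = \tfrac{m}{s}\sum_i\big[(1+x_i)\ln(1+x_i) - (1+y_i)\ln(1+y_i)\big]$; subtracting $x_i - y_i$ termwise (which again sums to $0$) and invoking the elementary inequality $0 \le (1+t)\ln(1+t) - t \le t^2$ for $|t| \le \tfrac12$ (immediate from $(1+t)\ln(1+t)-t = \int_0^t \ln(1+u)\,du$ together with convexity), we obtain $|S| \le \tfrac{m}{s}\sum_i (x_i^2 + y_i^2) \le \tfrac{m}{s}\cdot s\cdot\bigO_s\big(\tfrac{\log m}{m}\big) = \bigO_s(\log m)$. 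Therefore $\ln\frac{p_0!\cdots p_{s-1}!}{q_0!\cdots q_{s-1}!} = S + o(1) = \bigO_s(\log m)$, i.e.\ the ratio is $m^{\bigO_s(1)}$, as desired.

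The one place that needs care — and what I would call the crux of the argument — is the accounting of cancellations: both the linear term $\sum_i(p_i - q_i)$ and the $\ln(m/s)$-weighted linear term $\tfrac{m}{s}\sum_i(x_i - y_i)$ must be recognized as $0$ by virtue of $\sum_i x_i = \sum_i y_i = 0$, so that only the quadratic fluctuation $\sum_i x_i^2 = \bigO_s(\log m / m)$ survives; multiplied by the prefactor $\tfrac{m}{s}$ this is exactly $\bigO_s(\log m)$. Retaining any term that is only linear in $|x_i|$ (without cancellation) would instead give the worthless bound $m^{\Theta(\sqrt{m\log m})}$.
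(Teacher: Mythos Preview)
Your proof is correct, but it takes a different route from the paper. The paper argues combinatorially: it walks from $\mathbf p$ to $\mathbf q$ along a path $\mathbf p = \mathbf p^{(0)}, \mathbf p^{(1)}, \dots, \mathbf p^{(t)} = \mathbf q$ of length $t \le s\sqrt{Cm\log m}$, where each step moves one unit from the largest coordinate to the smallest. The ratio $\prod_j p_j^{(i-1)}!/\prod_j p_j^{(i)}!$ telescopes to a single quotient $p^{(i-1)}_{c_i'}/p^{(i)}_{c_i}$, which is $1 + \bigO(\sqrt{Cs^2\log m / m})$ because all intermediate coordinates stay in $\tfrac{m}{s} \pm \sqrt{Cm\log m}$; multiplying over the $t$ steps and using $1+x \le e^x$ gives $m^{\bigO(Cs^2)}$.

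Your Stirling argument is equally valid and arguably more transparent about \emph{why} the bound holds: it is exactly the second-order Taylor bound for the entropy function $(1+t)\ln(1+t)$ near $t=0$, with the first-order terms killed by $\sum_i x_i = \sum_i y_i = 0$. The paper's path argument achieves the same cancellation implicitly --- each unit move takes one from a coordinate and gives it to another, so the linear contributions cancel step by step --- but never invokes Stirling, which makes it slightly more elementary. Your approach, on the other hand, immediately generalizes (e.g.\ to non-integer $p_i$ via the Gamma function) and makes the quadratic dependence on the fluctuation scale $\sqrt{\log m/m}$ manifest. Both arrive at an exponent of order $Cs^2$ in $\log m$.
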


\begin{proof}

We consider the following sequence of vectors: $ {\bf p} = {\bf p}^{(0)},  {\bf p}^{(1)}, \dots ,  {\bf p}^{(t)} =  {\bf q}$ (for some finite $t$), where two adjacent ${\bf p}^{(i-1)}$ and ${\bf p}^{(i)}$ differ in exactly two coordinates (say $c_i\ne c_i'\in \Z_s$) such that $p_{c_i}^{(i)} = p_{c_i}^{(i-1)}+1$ and $p_{c_i'}^{(i)} = p_{c_i'}^{(i-1)}-1$, for all $i\in [t]$. We note that since each $p_j \in \frac{m}{s} \pm \sqrt{Cm\log m}$ (as $P$ is a {\em balanced} generalized Hamming distance matrix), such a sequence can be realized with $t\le s \sqrt{Cm\log m}$ by repeatedly picking the smallest and largest elements of ${\bf p}$ and adding one to the smallest element and subtracting one from the largest one. This will also ensure that for each intermediate $i\in [t]$, we have the invariant $p_j^{(i)} \in \frac{m}{s}\pm \sqrt{Cm\log m}$ for all $j\in \Z_s$.

Now, we note that for all $i\in [t]$,
\begin{align*}
    \dfrac{\prod_j p^{(i-1)}_j!}{\prod_j p^{(i)}_j!} \; = \; \dfrac{p^{(i-1)}_{c_i}! p^{(i-1)}_{c_i'}!}{p^{(i)}_{c_i}! p^{(i)}_{c_i'}!} \; = \; \dfrac{p^{(i-1)}_{c_i'}}{p^{(i)}_{c_i}} \; \leq \;  1+\bigO\paren{\sqrt{\frac{Cs^2\log m}{m}}}.
\end{align*}
Using the above bound for all $i\in [t]$ and multiplying them, we get
\begin{align}\label{eqn:fg}
\dfrac{p_0!\dots p_{s-1}!}{q_0!\dots q_{s-1}!} \; \le \; \paren{1+ \bigO\paren{\sqrt{\frac{Cs^2\log m}{m}}}}^t \; \le \; \paren{1+\bigO\paren{\sqrt{\dfrac{Cs^2\log m}{m}}}}^{s\sqrt{Cm \log m}} \le m^{\bigO(Cs^2)},
\end{align} 
where for the last inequality, we are using the inequality $1+x \le e^x$. 

\end{proof}

Now, continuing the computation of~\eqref{eqn:frob}, we have
        \begin{align*}
            \|W_{\Delta}\|_F^2 & = {sm \choose m, \dots, m} \bigg / \paren{{m \choose {\bf p}(0)} \cdots {m \choose {\bf p}(s-1)}}\\ \\
            & \leq \; m^{O(Cs^3)}\cdot\frac{(sm)!q_0!^s\dots q_{s-1}!^s}{m!^{2s}} \tag{using~\Cref{eqn:fg}}\\ \\
            & \le \; m^{\bigO(Cs^3)}\cdot\frac{(sm)!}{m!^{s}}\tag{as $q_j \le \lceil m/s \rceil$} \cdot \paren{\frac{\lceil m/s \rceil!^s}{m!}}^s\\
            & \le \; m^{\bigO(Cs^3)}\cdot s^{sm} \cdot \paren{\frac{(m/(es))^m}{(m/e)^m}}^s \tag{using Stirling's inequality}\\ \\
            & \le m^{\bigO(Cs^3)}.
        \end{align*}
         
        Hence $\|W_{\Delta}\|_F \le n^{\bigO_{s,C}(1)}$.
        \item {\bf $\varepsilon$-almost $k$-wise independence:} We will show that for every $k\le \bigO_s(1)$, $W_{\Delta}$ is $\varepsilon$-almost $k$-wise independent (see \Cref{defn:eps-close-k-wise-matrix}) for some $\varepsilon = 1/n^{\Omega_s(1)}$. That is, for every ${\bf a}\in V$ and $T\in {[n]\choose k}$, we will show that
        \begin{align*}
            \mathrm{SD}(W_{\Delta}({\bf a})|_T, U_T) \; \leq \; \varepsilon,
        \end{align*}
        where $U_T$ denotes the uniform distribution over the coordinates given by $T$.\\

        \noindent
        Let $T= T^{(0)} \cup \dots \cup T^{(s-1)}$ be a partition of $T$, where $T^{(i)} = {\bf a}^{-1}(i) \cap T$ for $i\in \Z_s$. We will fix an arbitrary ${\bf b}\in \Z_s^T$ and upper bound the difference $\abs{\Pr[W_{\Delta}({\bf a})|_T={\bf b}]-\frac{1}{s^k}}$. For this, let ${\bf e}=(e_j)_{j\in \Z_s}$ denote the number of occurrences of $j\in \Z_s$ in ${\bf b}$. Furthermore, let ${\bf e}^{(i)} = (e^{(i)}_j)_{j\in \Z_s}$ where $e^{(i)}_j$ denotes the number of occurrences of $j\in \Z_s$ in ${\bf b}$ when restricted to $T^{(i)}$. To make the notation cleaner, for the rest of the proof, we will use the notation ${\bf p}^{(i)}$ to mean ${\bf p}(i)$. We then have:
        \begin{align}\label{eqn:prob-bound}\Pr[W_{\Delta}({\bf a})|_T = {\bf b}] = {m-|T^{(0)}|\choose {\bf p}^{(0)}-{\bf e}^{(0)}}\dots {m-|T^{(s-1)}| \choose {\bf p}^{(s-1)}-{\bf e}^{(s-1)}} \bigg / \paren{{m\choose {\bf p}^{(0)}}\dots {m\choose {\bf p}^{(s-1)}}}.\end{align}
        For each $i\in \Z_s$, we have
        \begin{align*}
            \frac{{m-|T^{(i)}|\choose {\bf p}^{(i)}-{\bf e}^{(i)}}}{{m\choose {\bf p}^{(i)}}} & = \frac{(m-e^{(i)}_0-\dots-e^{(i)}_{s-1})!}{(p_0^{(i)}-e^{(i)}_0)!\dots(p_{s-1}^{(i)}-e^{(i)}_{s-1})!} \cdot \frac{p_0^{(i)}!\dots p_{s-1}^{(i)}!}{m!}\\ \\
            & = \frac{(p_0^{(i)}\dots (p_0^{(i)}-e^{(i)}_0+1))\dots(p_{s-1}^{(i)}\dots (p_{s-1}^{(i)}-e^{(i)}_{s-1}+1))}{m\dots (m-e^{i}_0-\dots e^{(i)}_{s-1}+1)}\\ \\
            & \in \bigg[\paren{\frac{\frac{m}{s} - \sqrt{Cm\log m} - |T^{(i)}|}{m}}^{|T^{(i)}|}, \paren{\frac{\frac{m}{s}+\sqrt{Cm\log m}}{m-|T^{(i)}|}}^{|T^{(i)}|} \bigg] \tag{as each $p_j^{(i)}\in \frac{m}{s}\pm \sqrt{Cm\log m}$}\\
            & \subseteq \frac{1}{s^{|T^{(i)}|}}\bigg[1\pm \frac{1}{m^{\Omega_{s,C}(1)}}\bigg]^{|T^{(i)}|} \tag{using $|T^{(i)}| \le k \le O_s(1)$}.
        \end{align*}
        
        Plugging the above bound into~\eqref{eqn:prob-bound} and using $\sum_i |T^{(i)}| = |T| = k$ gives that
        \begin{align*}
            \Pr[W_{\Delta}({\bf a})|_T={\bf b}] \; \in \; \prod_{i\in \Z_s} \dfrac{1}{s^{|T^{(i)}|}}\bigg[1\pm \frac{1}{m^{\Omega_s(1)}}\bigg]^{|T^{(i)}|} \; \subseteq \; \bigg[ \frac{1}{s^k} \pm \frac{k}{m^{\Omega_s(1)}} \bigg].
        \end{align*}

        Therefore, we have
        \begin{align*}
            \textrm{SD}(W_{\Delta}({\bf a})|_T, U_T) \; = \; \frac{1}{2}\sum_{{\bf b}\in \mathcal{S}^T} \bigg|\Pr[W_{\Delta}({\bf a})|_T={\bf b}]-\frac{1}{s^k}\bigg| \; \leq \; \frac{s^k \cdot k}{m^{\Omega_{s,C}(1)}} \; \leq \; \frac{1}{n^{\Omega_{s,C}(1)}}.
        \end{align*}

    \end{itemize}
    Thus we have shown that the three conditions needed to apply~\Cref{thm:more-general-matrix-eigenvalue} hold for $W_{\Delta}$. Therefore, we obtain $\sigma_2(W_{\Delta}) \le 1/n^{\Omega_{s,C}(1)}$, finishing the proof of~\Cref{cor:johnson-spl-case}.
\end{proof}

We now finish the proof of~\Cref{cor:gen-bal-cor} using~\Cref{cor:johnson-spl-case}.\\

\genbalcor*

\begin{proof}[Proof of~\Cref{cor:gen-bal-cor}]
    The idea is to express $W$ as a convex combination of $W_\Delta$ for $\Delta$ being $C$-balanced generalized Hamming distance parameters.  We first show that for every ${\bf a}\in \sgrid^n_\mu$, the ${\bf a}$-th row of $W$ can be expressed as a convex combination of the ${\bf a}$-th rows of the random walk matrix determined by the individual generalized Hamming distance parameters (i.e, $W_{\Delta}$). As $W$ respects symmetries, for every ${{\bf a},{\bf b}}\in \sgrid^n_\mu$ and permutation $\pi \in \Sym_n$, we have that $W({\bf a},{\bf b}) = W({\pi({\bf a})},{\pi({\bf b})})$. Now we note that if it holds that $\Delta({\bf a},{\bf b}) = \Delta({\bf a},{\bf c})$ for some ${\bf a},{\bf b},{\bf c}$, then there exists a permutation $\pi \in \Sym_n$ such that $\pi({\bf a})={\bf a}$ and $\pi({\bf b}) = {\bf c}$ (this can be obtained by permuting the coordinates of ${\bf a}$ that take identical values); hence we have $W({\bf a},{\bf b}) = W(\pi({\bf a}),\pi({\bf b})) = W({\bf a},{\bf c})$. Since $W$ has positive entries only at cells corresponding to balanced generalized Hamming distance, we can thus express the ${\bf a}$-th row of $W$ as the following convex combination:
    \begin{align}\label{eqn:convcomb}
    W({\bf a}) = \sum_{{\Delta}~\text{is~} C\text{-balanced}} \alpha_{{\bf a},\Delta} W_\Delta({\bf a}),\end{align} for some $\alpha_{{\bf a},\Delta} \ge 0$ such that $\sum_{\Delta\text{~is~}C\text{-balanced}}{\alpha_{{\bf a},\Delta}} = 1$. We now show that $\alpha_{{\bf a},\Delta}=\alpha_{{\bf b},\Delta}$ for every ${\bf a},{\bf b}\in \sgrid^n_\mu$ and generalized Hamming distance parameter $\Delta$. Let $\pi\in \Sym_n$ be a permutation such that $\pi({\bf a}) = {\bf b}$ and let ${\bf c}\in \sgrid^n_\mu$ be an arbitrary point such that $\Delta({\bf a},{\bf c})=\Delta$. Further, let $t_\Delta$ denote the number of points ${\bf d}\in \sgrid^n_\mu$ such that $\Delta({\bf a},{\bf d})=\Delta$ (note that this does not depend on ${\bf a}$). Then using~\eqref{eqn:convcomb} we have the following.

    \begin{align}\label{eqn:conv1}
        W({\bf a},{\bf c}) = \alpha_{{\bf a},\Delta} \cdot \frac{1}{t_\Delta}
    \end{align}

    Since $W({\bf a},{\bf c}) = W(\pi({\bf a}),\pi({\bf c})) = W({\bf b},\pi({\bf c}))$  and $\delta({\bf b},\pi({\bf c})) = \Delta$, using~\eqref{eqn:convcomb} again, we have:

    \begin{align}\label{eqn:conv2}
        W({\bf b},\pi({\bf c})) = \alpha_{{\bf b},\Delta} \cdot \frac{1}{t_\Delta}
    \end{align}

    From~\eqref{eqn:conv1} and~\eqref{eqn:conv2}, we get that $\alpha_{{\bf a},\Delta} = \alpha_{{\bf b},\Delta}=\Delta$; hence we can simply denote $\alpha_{{\bf a},\Delta}$ by $\alpha_\Delta$. Now, using~\eqref{eqn:convcomb} for all the rows ${\bf a}\in \sgrid^n_\mu$ of $W$, we obtain 
    $$W = \sum_{{\Delta}\text{~is~}C\text{-balanced}} \alpha_\Delta W_\Delta,$$ where $\alpha_\Delta \ge 0$ and $\sum_{\Delta\text{~is~}C\text{-balanced}} \alpha_{\Delta}=1$.

Since $W_\Delta^\top = W_{\Delta^\top}$ is stochastic, we note that $W_\Delta$ is doubly stochastic. Thus, by applying~\Cref{lem:convex} we conclude that $\lambda_2(W') \le \max_{{\Delta}\text{~is~}C\text{-balanced}}\{\lambda_2(W'_{\Delta})\} \le 1/n^{\tau}$, where $\tau > 0$ is a constant given by~\Cref{cor:johnson-spl-case}.  This finishes the proof of~\Cref{cor:gen-bal-cor}.

\end{proof}

\section{Near-Optimal Distance Lemmas Over Balanced Multislices}\label{sec:sz}

In this section, we derive near-optimal polynomial lemmas for junta-sums and polynomials over the balanced multislice. More formally, for a finite set $\mathcal{S}$ of size $s\ge 2$, integer $d\ge 0$, positive integer $n$ divisible by $s$ and $\mu=(n/s,\dots, n/s)$ (repeated $s$ times), we recall that $\mathcal{S}^n_\mu \subseteq \mathcal{S}^n$ denotes the set of points in which each element $i\in \mathcal{S}$ appears $n/s$ many times. 

We also recall that $\J_d(\Z_s^n, G)$ denotes the family of $d$-junta sums from the domain $\Z_s^n$ to an Abelian group $G$. Similarly, we let $\mathcal{P}_d(\F_q^n)$ denote the family of polynomials of degree at most $d$ over a finite field $\F_q$. The well-known ODLSZ lemma states that $\mathcal{P}_d(\F_q^n)$ forms a code of relative distance $\delta = \delta(q,d)$ independent of $n$. Stated more formally,\\

\begin{lemma}[Polynomial distance lemma (ODLSZ lemma)]\label{thm:odlsz}
(See e.g. \cite[Lemma 9.4.1]{GRS-codingbook}). For every finite field $\F=\F_q$, if a polynomial $P\in \mathcal{P}_d(\F^n)$ is such that $P({\bf a})\ne 0$ for some ${\bf a}\in \F^n$,  then $$\Pr_{{\bf b}\sim {\F^n}} [P({\bf b})\ne 0] \ge \delta(q,d),$$ where $\delta(q,d) = (1-\beta/q)q^{-\alpha}$, where $\alpha$ and $\beta$ are the quotient and remainder respectively when $d$ is divided by $q-1$.
\end{lemma}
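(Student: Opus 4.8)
The plan is to prove the statement by induction on the number $n$ of variables, after a routine normalization. Since $x^q = x$ for every $x \in \F_q$, one may repeatedly replace each occurrence of $x_i^q$ in $P$ by $x_i$; this changes neither the function computed on $\F_q^n$ nor the bound ``degree at most $d$'' (the degree can only drop), so we may assume $\deg_{x_i}(P)\le q-1$ for every $i$. For polynomials of this ``reduced'' form, vanishing identically on $\F_q^n$ is equivalent to being the zero polynomial, so the hypothesis ``$P(\mathbf{a})\neq 0$ for some $\mathbf{a}$'' just says that $P$ is a nonzero reduced polynomial. Writing $d = \alpha(q-1)+\beta$ with $0\le \beta\le q-2$, the goal is to show that $P$ is nonzero on at least $\delta(q,d)\,q^{n} = (q-\beta)\,q^{\,n-\alpha-1}$ points of $\F_q^n$.

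For the base case $n=1$, a nonzero reduced univariate polynomial has degree at most $\min(d,q-1)$, hence at most that many roots, hence at least $q-\min(d,q-1)$ nonzeros; this equals $q-\beta = \delta(q,d)\,q$ when $\alpha=0$, and is at least $1\ge \delta(q,d)\,q$ when $\alpha\ge 1$. For the inductive step, view $P$ as a polynomial in $x_n$ over $\F_q[x_1,\dots,x_{n-1}]$ and write $P=\sum_{i=0}^{t} x_n^{\,i}\,Q_i(x_1,\dots,x_{n-1})$, where $t=\deg_{x_n}(P)\le \min(d,q-1)$ and $Q_t\not\equiv 0$ with $\deg(Q_t)\le d-t$. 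Whenever $\mathbf{a}\in\F_q^{n-1}$ satisfies $Q_t(\mathbf{a})\neq 0$, the univariate polynomial $P(\mathbf{a},x_n)$ is nonzero of degree exactly $t$, so it has at least $q-t$ nonzeros. Applying the inductive hypothesis to $Q_t$ (a nonzero polynomial in $n-1$ variables of degree at most $d-t$), the number of such $\mathbf{a}$ is at least $\delta(q,d-t)\,q^{\,n-1}$. Consequently $P$ has at least $\delta(q,d-t)\,(q-t)\,q^{\,n-1}$ nonzeros, and it remains to verify the elementary inequality
\[
\delta(q,d-t)\,(q-t)\ \ge\ q\,\delta(q,d)\qquad\text{for all } 0\le t\le \min(d,q-1).
\]

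I expect this last inequality to be the only real content of the proof, and the one place where a little care is needed. To prove it one splits into the cases $t\le\beta$ and $\beta<t\le q-1$; in the second range one necessarily has $\alpha\ge 1$, and $d-t=(\alpha-1)(q-1)+(q-1+\beta-t)$ displays a legitimate quotient and remainder, which lets one substitute the closed form for $\delta$. In both cases the inequality reduces, after clearing the common powers of $q$, to a quadratic inequality in $t$: with $m:=q-\beta\ge 2$ the first case is equivalent to $t(\beta-t)\ge 0$ for $0\le t\le\beta$, and the second to $(t-\beta)\,\bigl(m-1-(t-\beta)\bigr)\ge 0$ for $0\le t-\beta\le m-1$, both clearly true; the endpoint $t=q-1$ saturates the bound, mirroring the fact that peeling one full block of size $q-1$ off $d$ multiplies the nonzero-count by exactly $1/q$. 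The remaining points — the normalization, the corner case $t=0$ (where $P$ does not involve $x_n$, which is simply the $t=0$ instance of the same computation and again gives equality), and chaining the inductive estimates — are routine.
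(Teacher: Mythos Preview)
The paper does not prove this lemma at all; it is stated with a citation to the coding-theory literature (\cite{GRS-codingbook}) and used as a black box. Your inductive argument is correct and is essentially the standard proof of the sharp ODLSZ bound: reduce to individual degree at most $q-1$, expand in the last variable, apply the inductive hypothesis to the leading coefficient $Q_t$, and verify the elementary inequality $\delta(q,d-t)(q-t)\ge q\,\delta(q,d)$ for $0\le t\le \min(d,q-1)$, which you have split and checked correctly in both ranges.
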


With this setup, we prove the following two main theorems in this section. \\

\begin{thmbox}
\begin{theorem}[{\bf Distance of junta-sums over multislice}]\label{thm:dist-junta-sums}
    If a junta-sum $P \in \mathcal{J}_{d}(\sgrid^{n},G)$ is such that $P({\bf a}) \ne 0$ for some ${\bf a} \in \mathcal{S}^n_\mu$, then $$\Pr_{{\bf b}\sim \mathcal{S}^n}[P({\bf b})\ne 0] \ge \frac{1}{s^d}-\frac{1}{n^{\Omega_{s}(1)}}.$$
\end{theorem}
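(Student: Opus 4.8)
The plan is to transfer the distance question from the balanced multislice $\sgrid^n_\mu$ to an ordinary grid $\sgrid^{n/s}$, on which \Cref{clm:dist-junta-sums} already supplies the sharp $s^{-d}$ bound; this follows the template of \cite{ABSS25-SZ-Lemma}. The transfer uses a random embedding $\phi=\phi_{h,\Pi}\colon \sgrid^{n/s}\to\sgrid^n_\mu$: partition $[n]$ uniformly into $n/s$ buckets of size $s$, identify each bucket with $\Z_s$ uniformly at random, and let $\phi(\mathbf{y})$ place in bucket $j$ the cyclic rotation of $\sgrid$ by $y_j$. The features I will use are elementary: $\phi$ is injective onto a set $C=C_{h,\Pi}\subseteq\sgrid^n_\mu$ of size $s^{n/s}$; each coordinate $\phi(\mathbf{y})_i$ is a function of the \emph{single} variable $y_{h(i)}$, so $P\circ\phi$ is again a $d$-junta-sum on $\sgrid^{n/s}$ whenever $P$ is; and the construction is $\Sym_n$-equivariant, so $\phi_{h,\Pi}(\mathbf{y})$ is uniform on $\sgrid^n_\mu$ for uniform $(h,\Pi,\mathbf{y})$. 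Consequently (reading the displayed probability as being over $\sgrid^n_\mu$), $\Pr_{\mathbf{b}\sim\sgrid^n_\mu}[P(\mathbf{b})\ne 0]=\E_{h,\Pi}[\Pr_{\mathbf{y}\sim\sgrid^{n/s}}[(P\circ\phi)(\mathbf{y})\ne 0]]$. So it is enough to show that a $(1-n^{-\Omega_s(1)})$ fraction of copies $C$ are \emph{good}, i.e.\ $P\circ\phi\not\equiv 0$: for each good copy \Cref{clm:dist-junta-sums} gives $\Pr_{\mathbf{y}}[(P\circ\phi)(\mathbf{y})\ne 0]\ge s^{-d}$, and plugging into the identity yields the theorem.

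To identify where $P$ is nonzero, I first record a quantitatively weak distance bound on the multislice — this is \Cref{coro:weak-distance-balanced-slice}, which I would prove by the standard ODLSZ induction on $(n,d)$: pick a variable $x_j$ on which $P$ depends, split $P$ according to the value of $x_j$, and recurse, the only wrinkle being that fixing $x_j$ lands in a slightly unbalanced slice, so the induction must be run for general nearly balanced slices. The outcome is $\mu:=\Pr_{\mathbf{b}\sim\sgrid^n_\mu}[P(\mathbf{b})\ne 0]\ge s^{-O(d)}$ — a bound with constant base, hence, in the only non-vacuous regime ($s^{-d}\gg n^{-\Omega_s(1)}$, i.e.\ $d=O_s(\log n)$), at least $n^{-\gamma}$ for a constant $\gamma=\gamma(s)$ that we can keep below the exponent the eigenvalue bound will save.

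The main step is a second-moment estimate driven by \Cref{thm:more-general-matrix-eigenvalue}. Let $T=\{\mathbf{x}\in\sgrid^n_\mu:P(\mathbf{x})\ne 0\}$, let $f=\mathbbm{1}_T$ regarded as a function on $\sgrid^n_\mu$, put $N=|\sgrid^n_\mu|$ and $\mu=|T|/N$, and let $W$ be the Markov chain on $\sgrid^n_\mu$ that from $\mathbf{a}$ selects a uniformly random copy $C\ni\mathbf{a}$ and then a uniformly random point of $C$. Then $W$ is symmetric and respects symmetries, and a single $W$-step is ``$C_0$-balanced'' with probability $1-n^{-\Omega(1)}$: for a uniform copy through $\mathbf{a}$ and a uniform $\mathbf{b}\in C$, the entry $\Delta(\mathbf{a},\mathbf{b})_{\sigma,\tau}$ counts the buckets whose rotation-difference equals $\tau-\sigma$, and these differences are i.i.d.\ uniform on $\Z_s$, so each entry of $\Delta(\mathbf{a},\mathbf{b})$ concentrates at $\tfrac{m}{s}\pm O(\sqrt{m\log m})$ with $m=n/s$. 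Writing $W$ as a convex combination $\sum_\Delta\alpha_\Delta W_\Delta$ exactly as in the proof of \Cref{cor:gen-bal-cor}, the total weight on non-$C_0$-balanced $\Delta$ is $n^{-\Omega(1)}$, so \Cref{lem:convex} together with \Cref{cor:gen-bal-cor} gives $\sigma_2(W)\le n^{-\Omega_s(1)}$. Now $|C\cap T|/|C|=\E_{\mathbf{y}}[f(\phi_{h,\Pi}(\mathbf{y}))]$ has $(h,\Pi)$-expectation $\mu$, and its $(h,\Pi)$-second moment is exactly $\E[f(\mathbf{a})f(\mathbf{b})]=\langle f,Wf\rangle$ (with $\mathbf{a}$ uniform on $\sgrid^n_\mu$ and $\mathbf{b}$ a $W$-step from $\mathbf{a}$, since the pair $(\phi_{h,\Pi}(\mathbf{y}),\phi_{h,\Pi}(\mathbf{y}'))$ has precisely this law for independent uniform $\mathbf{y},\mathbf{y}'$). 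Splitting $f$ into its mean and its component orthogonal to constants shows $\langle f,Wf\rangle\le\mu^2+\sigma_2(W)\|f\|_2^2=\mu^2+\sigma_2(W)\mu$, hence $\mathrm{Var}_{h,\Pi}[\,|C\cap T|/|C|\,]\le\sigma_2(W)\mu$. By Chebyshev, $\Pr_{h,\Pi}[\,|C\cap T|/|C|<\mu/2\,]\le 4\sigma_2(W)/\mu\le 4\,n^{-\Omega_s(1)}s^{O(d)}\le n^{-\Omega_s(1)}$ in the regime above (shrinking the theorem's exponent if needed); when this event fails, $C$ meets $T$, i.e.\ $C$ is good. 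Combined with the first two paragraphs, this proves the theorem.

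The genuinely hard point is the bound $\sigma_2(W)\le n^{-\Omega_s(1)}$ for the embedding chain $W$: one must verify that, after discarding $n^{-\Omega(1)}$ of its mass, $W$ is a convex combination of nearly balanced walks, which is exactly what brings \Cref{thm:more-general-matrix-eigenvalue} (through \Cref{cor:gen-bal-cor}), and hence the representation-theoretic core of the paper, to bear. A secondary, more routine obstacle is the inductive proof of the weak bound and ensuring its base is a constant, so that it survives the division by $\sigma_2(W)$ in the Chebyshev step.
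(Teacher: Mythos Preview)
Your argument is correct (in the constant-$d$ regime the paper works in) and in fact mirrors the strategy sketched in the paper's introduction (\S1.2.2): weak bound $\Rightarrow$ second-moment sampler $\Rightarrow$ apply \Cref{clm:dist-junta-sums} on good copies. But the proof the paper actually gives in \S4 is sleeker and bypasses two of your steps. The paper's walk $W_{\mathrm{ODLSZ}}$ is built so that the embedded grid through $\mathbf{a}$ \emph{always} contains $\mathbf{a}$ (namely at $\mathbf{y}=\mathbf{0}$); hence if $P(\mathbf{a})\ne 0$ the restricted junta-sum is automatically nonzero, and \Cref{clm:dist-junta-sums} gives $\Pr_{\mathbf{b}\sim W(\mathbf{a})}[P(\mathbf{b})\ne 0]\ge 1/s^d$ for \emph{every} $\mathbf{a}\in U$. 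The expander mixing lemma then reads $(|U|/N)\cdot s^{-d}\le \Pr[\mathbf{a},\mathbf{b}\in U]\le (|U|/N)^2+\lambda_2\,(|U|/N)$, so dividing by $|U|/N$ gives $|U|/N\ge 1/s^d-\lambda_2$ directly. No weak bound, no Chebyshev; the error term is $\lambda_2$, not $\lambda_2/\mu$, so the conclusion holds uniformly in $d$. Your route genuinely needs $\mu\gg\sigma_2(W)$, and note the paper's weak bound is $1/(sd)^{sd}$ (\Cref{coro:weak-distance-balanced-slice}), not $s^{-O(d)}$ as you wrote; for growing $d$ this would bite.

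The embedding chains are also different in flavor, though both feed into \Cref{cor:gen-bal-cor} after the convex-combination trick. The paper groups coordinates by their \emph{value} under $\mathbf{a}$ into $s$ blocks of size $m=n/s$ and perturbs by a random $\mathbf{y}\in\Z_s^m$; you partition $[n]$ into $n/s$ buckets of size $s$ and rotate within each. The generalized Hamming distance is circulant in either case (determined by a frequency vector of $m$ i.i.d.\ uniform $\Z_s$ variables), so the near-balancedness and eigenvalue arguments are interchangeable.
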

\end{thmbox}

As noted in \Cref{sec:intro} we also prove a similar theorem for algebraic degree as opposed to junta-degree. We recall the theorem statement below.\\

\begin{thmbox}
\algebraicszlemma*
\end{thmbox}

We first prove~\Cref{thm:dist-junta-sums} below followed by the proof of~\Cref{thm:dist-polys-intro}, which is almost identical.

\begin{proof}[Proof of~\Cref{thm:dist-junta-sums}]
    Without loss of generality, we will assume that $\mathcal{S} = \Z_s$, so addition and subtraction of elements of $\mathcal{S}$ make sense. At a high level, the proof proceeds as follows. We consider a random walk matrix $W$ over the multislice which we will describe below, and bound its eigenvalues. We will then use the ``expander mixing lemma'' to derive the required distance lower bound and finish the proof. 

    We shall define a random walk matrix $W_\ODLSZ$ over the points in the multislice, i.e., 
    $V=\mathcal{S}^n_\mu$ and we let $N=|V|={sm \choose {m,\dots,m}}$ where $m=n/s$ is an integer. For each ${\bf a}\in V$, we define the distribution over the neighbors of ${\bf a}$ according to $W_\ODLSZ$ (or equivalently, the ${\bf a}$-th row of $W_\ODLSZ$, denoted $W_\ODLSZ({\bf a})$) as being the random variable output by the  algorithm below (\Cref{alg:ab}).

\begin{algobox}
\begin{algorithm}[H]
\caption{The random walk matrix $W_\ODLSZ$}
\label{alg:ab}

\DontPrintSemicolon

\KwIn{${\bf a}\in V$}
\vspace{3mm}

For $j\in \Z_s$, letting ${\bf a}^{-1}(j) \in {[n]\choose m}$ denote the coordinates of ${\bf a}$ with value $j$, sample uniformly random bijections $M_j:{\bf a}^{-1}(j) \to [m]$ independently for all $j\in \Z_s$. \;
Sample ${\bf y}=(y_1,\dots,y_m) \sim \Z_s^m$ u.a.r. \;
Define ${\bf b}=(b_1,\dots,b_n)$ as follows: For $i\in [n]$, we let $j:=a_i$ and $b_i:=y_{M_j(i)}+j$.\;
\Return{{\bf b}}
\end{algorithm}
\end{algobox}

We first note that ${\bf b}$ is always on the balanced multislice, i.e., ${\bf b}\in V$, so $W_\ODLSZ$ is a well-defined random walk matrix over the balanced multislice. We now argue that for every fixed ${\bf a} \in V$ such that $P({\bf a}) \ne 0$ for a junta-sum $P\in \J_d(\mathcal{S}^n, G)$, it holds that 
\begin{align*}\label{eqn:random-step}
    \Pr_{{\bf b}\sim W_\ODLSZ({\bf a})}[P({\bf b})\ne 0] \ge 1/s^d.
\end{align*}

To see this, we fix the bijections $M_j:a^{-1}(j)\to [m]$ in Step 1 of~\Cref{alg:ab} arbitrarily and get the probability bound over the uniformly random choice of ${\bf y}$ in Step 2. More precisely, letting $$Q({\bf y}) = P({\bf b}) = P((y_{M_{a_i}(i)}+a_i)_{i\in [n]}),$$ we note that $Q:\mathcal{S}^m \to G$ is a $d$-junta-sum since $P$ is a $d$-junta-sum. Moreover, $Q({\bf 0}) = P({\bf a}) \ne 0$. Therefore, by applying~\Cref{clm:dist-junta-sums}, we get that $\Pr_{{\bf y}\sim \mathcal{S}^m}[Q({\bf y})\ne 0] \ge 1/s^d, $ and thus $\Pr_{{\bf b}\sim W_\ODLSZ({\bf a})}[P({\bf b})\ne 0] \ge 1/s^d.$

Letting $U\subseteq V$ denote the set of points in $V$ which evaluate $P$ to a non-zero value, from the above discussion, we have that 
\begin{equation}\label{eqn:forlla-b}
\forall {\bf a}\in U, ~\Pr_{{\bf b}\sim W_\ODLSZ({\bf a})}[{\bf b} \in U] \ge 1/s^d.
\end{equation}

We now use the expander mixing lemma. 

\begin{theorem}[{\bf Expander mixing lemma} see e.g.~\cite{HLW} Lemma 2.5]\label{thm:eml}
    For every symmetric random walk matrix $W\in \R^{V\times V}$ over a finite vertex set $V$ and $U \subset V$, 
    $$\Pr_{\substack{{\bf a}\sim V\\ {\bf b}\sim  W({\bf u})}}[{\bf a}\in U\text{~and~}{\bf b}\in U] \le \paren{\frac{|U|}{|V|}}^2 + \lambda_2(W)\paren{\frac{|U|}{|V|}},$$ where $\lambda_2(W)$ denotes the second largest eigenvalue of $W$ in absolute value. 
\end{theorem}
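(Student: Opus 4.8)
The plan is to give the standard linear-algebraic proof of the expander mixing lemma. Write ${\bf 1}\in\R^V$ for the all-ones vector and, for $U\subseteq V$, let ${\bf 1}_U\in\R^V$ be its indicator vector (which I will also freely view as a function $V\to\{0,1\}$), so that $\|{\bf 1}_U\|_2^2=|U|$ and ${\bf 1}_U^\top{\bf 1}=|U|$. First I would rewrite the left-hand side as a quadratic form: since ${\bf a}$ is uniform on $V$ and then ${\bf b}$ is drawn from the probability distribution $W({\bf a})$ over $V$,
\[
\Pr_{\substack{{\bf a}\sim V\\ {\bf b}\sim W({\bf a})}}[{\bf a}\in U\text{ and }{\bf b}\in U]
= \frac{1}{|V|}\sum_{{\bf a},{\bf b}\in V}{\bf 1}_U({\bf a})\,W({\bf a},{\bf b})\,{\bf 1}_U({\bf b})
= \frac{1}{|V|}\,{\bf 1}_U^\top W\,{\bf 1}_U .
\]

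Next I would decompose ${\bf 1}_U$ into its component along ${\bf 1}$ and the orthogonal remainder: set ${\bf v}:={\bf 1}_U-\frac{|U|}{|V|}{\bf 1}$, so that ${\bf v}^\top{\bf 1}=|U|-\frac{|U|}{|V|}\cdot|V|=0$, and by Pythagoras $\|{\bf v}\|_2^2=\|{\bf 1}_U\|_2^2-\frac{|U|^2}{|V|}=|U|\paren{1-\tfrac{|U|}{|V|}}\le |U|$. Since $W$ is a random-walk matrix we have $W{\bf 1}={\bf 1}$, and since $W$ is symmetric we get ${\bf 1}^\top W{\bf v}=(W{\bf 1})^\top{\bf v}={\bf 1}^\top{\bf v}=0$ and similarly ${\bf v}^\top W{\bf 1}=0$. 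Plugging ${\bf 1}_U=\frac{|U|}{|V|}{\bf 1}+{\bf v}$ into the quadratic form, the two cross terms vanish and ${\bf 1}^\top W{\bf 1}={\bf 1}^\top{\bf 1}=|V|$, leaving
\[
{\bf 1}_U^\top W{\bf 1}_U=\frac{|U|^2}{|V|^2}\,{\bf 1}^\top W{\bf 1}+{\bf v}^\top W{\bf v}=\frac{|U|^2}{|V|}+{\bf v}^\top W{\bf v}.
\]

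Then I would bound the error term ${\bf v}^\top W{\bf v}$ using the variational characterization of $\lambda_2(W)$ recorded in \Cref{sec:prelims}, namely $\lambda_2(W)=\max_{{\bf u}^\top{\bf 1}=0}\|W{\bf u}\|_2/\|{\bf u}\|_2$ (valid since $W$ is symmetric, so its eigenvalues are real and $\lambda_2(W)=\max\{|\lambda_2|,|\lambda_N|\}$ is exactly the operator norm of $W$ restricted to ${\bf 1}^\perp$). As ${\bf v}\perp{\bf 1}$, Cauchy--Schwarz gives
\[
{\bf v}^\top W{\bf v}\le \|{\bf v}\|_2\,\|W{\bf v}\|_2\le \lambda_2(W)\,\|{\bf v}\|_2^2\le \lambda_2(W)\,|U| .
\]
Substituting this back yields
\[
\Pr_{\substack{{\bf a}\sim V\\ {\bf b}\sim W({\bf a})}}[{\bf a}\in U\text{ and }{\bf b}\in U]=\frac{1}{|V|}\paren{\frac{|U|^2}{|V|}+{\bf v}^\top W{\bf v}}\le \paren{\frac{|U|}{|V|}}^2+\lambda_2(W)\,\frac{|U|}{|V|},
\]
which is the claimed bound. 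There is no genuine obstacle in this argument; the only points requiring a little care are verifying that the left-hand probability really equals $\tfrac{1}{|V|}{\bf 1}_U^\top W{\bf 1}_U$ (this uses only that ${\bf a}$ is sampled uniformly from $V$ and that each row $W({\bf a})$ is a probability distribution on $V$), checking that all cross terms vanish (which uses $W{\bf 1}={\bf 1}$ together with the symmetry of $W$), and invoking $\|W{\bf v}\|_2\le\lambda_2(W)\|{\bf v}\|_2$ legitimately, for which symmetry of $W$ is again exactly what is needed — all of which is already set up in the preliminaries.
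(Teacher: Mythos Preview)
Your proof is correct and is the standard argument. The paper itself does not give a proof of this statement at all; it simply cites it as a known result (see e.g.\ \cite{HLW}, Lemma~2.5) and uses it as a black box.
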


In order to apply the above theorem, we will need to show that the random walk matrix $W_\ODLSZ$ we defined is symmetric and has a small $\lambda_2(W_\ODLSZ)$. 

\begin{lemma}\label{lem:eig-val-sz}

 The random walk matrix $W_\ODLSZ$ as defined in \Cref{alg:ab} is symmetric and satisfies $\lambda_2(W_\ODLSZ) \le 1/n^{\Omega_s(1)}$. 
\end{lemma}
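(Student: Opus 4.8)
The plan is to establish the two claims separately: symmetry of $W_\ODLSZ$, and the eigenvalue bound $\lambda_2(W_\ODLSZ) \le 1/n^{\Omega_s(1)}$, the latter by appealing to \Cref{cor:gen-bal-cor} (equivalently \Cref{thm:more-general-matrix-eigenvalue}) after checking its hypotheses. For symmetry, I would argue that $W_\ODLSZ$ ``respects symmetries'' in the sense of \Cref{cor:gen-bal-cor}, and moreover that $W_\ODLSZ(\mathbf{a},\mathbf{b})$ depends only on the generalized Hamming distance $\Delta(\mathbf{a},\mathbf{b})$. Concretely, fix $\mathbf{a},\mathbf{b}\in V$. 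Tracing through \Cref{alg:ab}: given the bijections $M_j$ and the vector $\mathbf{y}$, coordinate $i$ (with $a_i=j$) gets value $b_i = y_{M_j(i)}+j$, so $b_i - a_i = y_{M_j(i)}$ must be constant across all $i$ in a fixed ``block'' $M_j^{-1}(\ell)$; the probability of reaching $\mathbf{b}$ is then a function only of how many coordinates with $a_i=j$ get shifted by each value $r\in\Z_s$, i.e. of the multiset $\{\Delta_{j, j+r}(\mathbf{a},\mathbf{b})\}_{r}$, i.e. of $\Delta(\mathbf{a},\mathbf{b})$ itself (after re-indexing columns by the cyclic shift). Since $\Delta(\mathbf{a},\mathbf{b})^\top = \Delta(\mathbf{b},\mathbf{a})$ and the explicit count is manifestly symmetric under transposing $\Delta$ — flipping the roles of ``source value'' and ``shift'' — we get $W_\ODLSZ(\mathbf{a},\mathbf{b}) = W_\ODLSZ(\mathbf{b},\mathbf{a})$. (Alternatively, one can invoke \Cref{obs:w-delta-props}: $W_\ODLSZ$ is a convex combination of the $W_\Delta$'s, and the relevant $\Delta$'s turn out to be symmetric, so each $W_\Delta$ is symmetric.) This also shows $W_\ODLSZ$ respects symmetries, since $\Delta(\pi(\mathbf{a}),\pi(\mathbf{b})) = \Delta(\mathbf{a},\mathbf{b})$.

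For the eigenvalue bound, I would show that $W_\ODLSZ$ is supported on $C$-balanced generalized Hamming distance parameters for some constant $C = C(s)$, and then invoke \Cref{cor:gen-bal-cor}. This requires understanding which $\Delta$'s arise with positive probability. When $\mathbf{b}\sim W_\ODLSZ(\mathbf{a})$, the entry $\Delta_{j,j+r}(\mathbf{a},\mathbf{b})$ equals the number of coordinates $i$ with $a_i=j$ such that $y_{M_j(i)}=r$; since $\mathbf{y}\sim\Z_s^m$ is uniform and $M_j$ is a uniform bijection onto $[m]$, this count is a sum of $m$ i.i.d. indicators of probability $1/s$ (one per block, weighted, but the blocks partition the $m$ coordinates evenly — here each block is a singleton since $M_j$ is a bijection onto $[m]$), so it is distributed as $\mathrm{Bin}(m,1/s)$. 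Wait — more carefully: $\Delta_{j,j+r}(\mathbf{a},\mathbf{b}) = |\{\ell\in[m] : y_\ell = r\}|$, which is exactly $\mathrm{Bin}(m,1/s)$, independent of $j$. By a Chernoff/Hoeffding bound, with probability $1$ over the support (i.e. for \emph{every} $\mathbf{b}$ in the support? — no, only with high probability), this is $m/s \pm O(\sqrt{m\log m})$. The subtlety: $C$-balancedness must hold for \emph{every} $\mathbf{b}$ in the support of $W_\ODLSZ(\mathbf{a})$, not just typical ones. So I would instead modify the walk slightly, or observe that in fact the support of $W_\ODLSZ$ is not all $C$-balanced $\Delta$'s. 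The cleaner fix, following the template the paper sets up for its applications, is: decompose $W_\ODLSZ = W' + (W_\ODLSZ - W')$ where $W'$ restricts to the $C$-balanced part, bound $\sigma_2(W')$ via \Cref{cor:gen-bal-cor} and \Cref{lem:convex}, and bound the mass of the non-$C$-balanced part by $1/n^{\omega(1)}$ via the Chernoff bound, so it contributes negligibly to $\sigma_2$.

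The main obstacle I anticipate is precisely this last point: $W_\ODLSZ$ as literally defined in \Cref{alg:ab} places positive (though exponentially small) probability on transitions with very unbalanced $\Delta$ (e.g. $\mathbf{y}$ constant gives $\Delta$ with a full column of mass $m$), so \Cref{cor:gen-bal-cor} does not apply verbatim to $W_\ODLSZ$ itself. I would handle this by writing $W_\ODLSZ = \sum_\Delta \alpha_\Delta W_\Delta$ (using that $W_\ODLSZ$ respects symmetries, exactly as in the proof of \Cref{cor:gen-bal-cor}), letting $S$ be the set of $C$-balanced $\Delta$'s for a suitable constant $C = C(s)$, noting $\sum_{\Delta\notin S}\alpha_\Delta \le 1/n^{\omega_s(1)}$ by Hoeffding applied to the $\mathrm{Bin}(m,1/s)$ counts above, and then applying \Cref{lem:convex} with this $S$: $\sigma_2(W_\ODLSZ) \le \max_{\Delta\in S}\sigma_2(W_\Delta) + \sum_{\Delta\notin S}\alpha_\Delta \le 1/n^{\tau} + 1/n^{\omega_s(1)} = 1/n^{\Omega_s(1)}$, where the first term is bounded by \Cref{cor:johnson-spl-case}. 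Since $W_\ODLSZ$ is symmetric, $\sigma_2 = \lambda_2$, completing the proof. The only genuinely technical step is the Hoeffding estimate on the tail mass, which is routine.
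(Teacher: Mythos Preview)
Your plan is essentially the paper's own proof: decompose $W_\ODLSZ$ as a convex combination $\sum_\Delta \alpha_\Delta W_\Delta$ (the paper indexes by the frequency vector $\mathbf{f}$ of $\mathbf{y}$, noting $\Delta_{i,j}=f_{j-i}$), bound the mass on non-$C$-balanced $\Delta$'s by a Chernoff bound on the $\mathrm{Bin}(m,1/s)$ counts, apply \Cref{cor:johnson-spl-case} to the balanced terms, and combine via \Cref{lem:convex}. One small correction to your parenthetical alternative for symmetry: the $\Delta$'s arising here are \emph{circulant} ($\Delta_{i,j}=f_{j-i}$) but not symmetric unless $f_r=f_{-r}$ for all $r$, so the individual $W_\Delta$'s need not be symmetric; the paper handles this by symmetrizing each summand to $(W_\Delta+W_\Delta^\top)/2$, which preserves the convex-combination identity since $W_\ODLSZ$ itself is symmetric. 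Your main symmetry argument (that $\alpha_\Delta$ and the normalizing count are invariant under $\Delta\mapsto\Delta^\top$, because transposing corresponds to $\mathbf{f}\mapsto(f_{-r})_r$) is the correct one.
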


We prove this lemma in \Cref{subsec:sz-eigen}.  

We can now finish the proof of~\Cref{thm:dist-junta-sums} assuming the above lemma. On the one hand,~\eqref{eqn:forlla-b} implies that $$\Pr_{\substack{{\bf a}\sim V\\ {\bf b}\sim  W_\ODLSZ({\bf u})}}[{\bf a}\in U\text{~and~}{\bf b}\in U] \ge \paren{\frac{|U|}{|V|}}\frac{1}{s^d},$$ and on the other hand,~\Cref{thm:eml} and~\Cref{lem:eig-val-sz} imply that $$\Pr_{\substack{{\bf a}\sim V\\ {\bf b}\sim  W_\ODLSZ({\bf u})}}[{\bf a}\in U\text{~and~}{\bf b}\in U]\le \paren{\frac{|U|}{|V|}}\paren{{\frac{|U|}{|V|}+\frac{1}{n^{\Omega_s(1)}}}}.$$ Putting them together, we obtain that $\frac{|U|}{|V|} \ge \frac{1}{s^d} - \frac{1}{n^{\Omega_s(1)}}$, thus finishing the proof of~\Cref{thm:dist-junta-sums}.
\end{proof}

We now prove the near-optimal distance lemma for algebraic degree (\Cref{thm:dist-polys-intro}).

\begin{proof}[Proof of~\Cref{thm:dist-polys-intro}]
    The proof follows exactly the same approach as that of the distance lemma for junta-sums over the balanced multislice (i.e., \Cref{thm:dist-junta-sums}). All the additions and subtraction of the domain elements are now instead done over the field $\F$ instead of the group $\Z_s$. The only other difference is in~\eqref{eqn:forlla-b} where we now get a lower bound of $\delta(q,d)$ instead of $1/s^d$. This is because the restricted function $Q:\F^m \to \F$ is now a function of degree at most $d$, so we can apply the standard ODSLZ lemma (\Cref{thm:odlsz}) instead of the junta-sum distance lemma (\Cref{clm:dist-junta-sums}) to get this bound. Due to its similarity with the proof of~\Cref{thm:dist-junta-sums}, we omit the rest of the details. 
\end{proof}

Hence, it only remains to prove the eigenvalue bounds for $W_\ODLSZ$, i.e.,~\Cref{lem:eig-val-sz}, which we do in the next subsection.

\subsection{Eigenvalue Bounds for $W_\ODLSZ$}\label{subsec:sz-eigen}

Before we proceed with the proof of~\Cref{lem:eig-val-sz}, we remark that it doesn't immediately follow from our result for nearly {\em balanced} random walks (i.e,~\Cref{cor:gen-bal-cor} from~\Cref{sec:eigenvalue}) since $W_{\ODLSZ}$ can potentially have non-zero weights even for edges whose generalized Hamming distance is far from being balanced. Moreover, it doesn't even immediately follow from our more general theorem (\Cref{thm:more-general-matrix-eigenvalue}) from~\Cref{sec:eigenvalue} as it requires a bounded Frobenius norm which isn't the case with $W_\ODLSZ$. However, we are able to reduce it to a setting where~\Cref{cor:gen-bal-cor} actually applies and use it to get the final bound.

\begin{proof}[Proof of~\Cref{lem:eig-val-sz}]
    At a high level, we prove this in the following steps. We first provide an alternate description of the random walk matrix $W_\ODLSZ$ (defined in~\Cref{alg:ab}) using generalized Hamming distance matrices. Then, we express $W_\ODLSZ$ as a convex combination $W_\ODLSZ=\sum_{i\in [t]}\alpha_i W_i$ for some random walk matrices $W_i$ where $\sum_i \alpha_i =1$. Then, we use our expansion result for nearly balanced walks (\Cref{cor:gen-bal-cor}) from~\Cref{sec:eigenvalue} to bound $\lambda_2(W_i)$ for ``most'' $i\in [t]$, and use this to finally bound $\lambda_2(W_\ODLSZ)$. Before we go into the actual proof, we need to recall a few definitions.

    For ${\bf a}, {\bf b}\in V$, we recall (from~\Cref{defn:profile}) that $\Delta({\bf a},{\bf b})\in \Z^{\Z_s\times \Z_s}$ denotes the {\em generalized Hamming distance matrix}, i.e., the $(i,j)$-th entry of the matrix equals the number of coordinates where ${\bf a}$ takes value $i$ and ${\bf b}$ takes value $j$. We now recall the definition of $W_\ODLSZ$ (from~\Cref{alg:ab}): For each ${\bf a}\in V$, its random neighbor ${\bf b}\sim W_\ODLSZ({\bf a})$ is obtained by setting $b_i = y_{M_j(i)}+j$, where $j=a_i$ and $M_j:{\bf a}^{-1}(j) \to [m]$ are bijections chosen u.i.a.r., and $\y \sim \Z_s^n$ is chosen independently. Hence, we see that $$\Delta({\bf a},{\bf b})(i,j) = f_{j-i},$$ where $f_j$ denotes the number of times $j\in \Z_s$ appears in $\y$. In fact, conditioned on $\Delta({\bf a},{\bf b}) = P$ for some fixed $P$, the conditional distribution of ${\bf b}$ is uniform over all points ${\bf b}$ such that $\Delta({\bf a},{\bf b}) = P$, since $M_j$'s are uniform and independent bijections. In particular, this alternate description of $W_\ODLSZ$ shows that it is symmetric.

    Now, for a ``frequency vector'' ${\bf f} = (f_0,\dots, f_{s-1})\in \Z^{\Z_s}$ where $\sum_j f_j = m$, we let $W_{{\bf f}}$ denote the random walk matrix where for each ${\bf a}\in V$, $W_{{\bf f}}({\bf a})$ is the uniform distribution over \begin{align}\label{eqn:wf-defn}\bigg\{{\bf b}\in V: \Delta({\bf a},{\bf b}) = (f_{j-i})_{(i,j)\in \Z_s^2}\bigg\}.\end{align} Then by our previous discussion, for each ${\bf a}\in V$, by conditioning on the choice of the frequency vectors resulting from ${\bf y}\sim \mathcal{S}^m$ and using the total probability law, we obtain $$ W_\ODLSZ({\bf a}) = \sum_{\substack{{\bf f}\in\Z^s\\ \sum_j f_j = m}} \alpha_{{\bf f}} W_{{\bf f}}({\bf a}),$$ where $\alpha_{{\bf f}} = {m \choose {\bf f}}/s^m$ denotes the probability of getting the frequency vector ${\bf f}$ from a uniformly random $\y \in \mathcal{S}^m$.

    The idea now is to apply our eigenvalue bound (\Cref{cor:gen-bal-cor}) from~\Cref{sec:eigenvalue} to the $W_{{\bf f}}$'s and then bound the eigenvalues of $W_\ODLSZ$. However, there are two issues: First, the eigenvalue bound from~\Cref{cor:gen-bal-cor} requires the matrix to be supported only on edges with nearly balanced generalized Hamming distance, which isn't the case for {\em all} $W_{{\bf f}}$'s. Regardless, we show that this holds true for the ``typical'' $W_{{\bf f}}$'s and that this suffices. 
    And secondly, we remark that each $W_{{\bf b}}$ need not be a symmetric matrix. However, since we already know that $W_\ODLSZ=\sum_{{\bf f}} \alpha_{{\bf f}}W_{{\bf f}}$ is symmetric, we have that 
    $$W_\ODLSZ = \sum_{{\bf f}} \alpha_{{\bf f}}\paren{\frac{W_{{\bf f}}+W_{{\bf f}}^\top}{2}},$$ where now we see that the ``components'' $\frac{W_{{\bf f}}+W_{{\bf f}}^\top}{2}$ are symmetric.

    We say that a frequency vector ${\bf f}$ is ``bad'' if there exists a $j\in \Z_s$ such that $f_j \notin \frac{m}{s}\pm \sqrt{\frac{10m\log m}{s}}$, and say that ${\bf f}$ is ``good'' otherwise. We have, by a Chernoff bound, that
    \begin{align}\label{eqn:cher}
        \sum_{{\bf f}\text{~bad}} \alpha_{{\bf f}} \le \frac{s}{m^{\Omega(1)}}. 
    \end{align} Now we claim that for every good ${\bf f}$, it holds that $\lambda_2\paren{W'_{{\bf f}}}$ is small where $W'_{{\bf f}}:=\frac{W_{{\bf f}}+W_{{\bf f}}^\top}{2}$:

    \begin{claim}[{\bf Eigenvalue bounds for $W'_{{\bf f}}$}]\label{lem:johnson}
        Suppose ${\bf f} = (f_0,\dots,f_{s-1})$ is such that $f_j \in \frac{m}{s} \pm \sqrt{\frac{10m\log m}{s}}$ for all $j\in \Z_s$. Then, $\lambda_2(W'_{{\bf f}}) \le \frac{1}{n^{\Omega_s(1)}}$, where $W'_{{\bf f}}:=\frac{W_{{\bf f}}+W_{{\bf f}}^\top}{2}$. 
    \end{claim}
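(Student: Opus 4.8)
The plan is to recognize $W_{{\bf f}}$ as the random walk matrix $W_\Delta$ determined by a single, nearly balanced generalized Hamming distance matrix, and then quote the expansion bound for such walks from \Cref{cor:johnson-spl-case}. Define $\Delta\in\Z^{\Z_s\times\Z_s}$ by $\Delta(i,j) = f_{j-i}$ (indices taken mod $s$). Comparing with the description in \eqref{eqn:wf-defn}, for every ${\bf a}\in V$ the distribution $W_{{\bf f}}({\bf a})$ is precisely the uniform distribution over $\{{\bf b}\in V : \Delta({\bf a},{\bf b}) = \Delta\}$, so $W_{{\bf f}} = W_\Delta$ in the sense of \Cref{defn:w-delta}. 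I would first check this is a legitimate walk on $\mathcal{S}^n_\mu$: every row of $\Delta$ sums to $\sum_j f_{j-i} = \sum_k f_k = m$ and likewise every column, so $\Delta/m$ is doubly stochastic.

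Next I would verify that $\Delta$ is $C$-balanced for the absolute constant $C = 10$. By hypothesis each $f_k$ lies in $\frac{m}{s}\pm\sqrt{\tfrac{10\,m\log m}{s}}\subseteq \frac{m}{s}\pm\sqrt{10\,m\log m}$, and every entry of $\Delta$ equals some $f_k$; hence $\Delta$ is $10$-balanced in the sense of \Cref{defn:bal-profiles}. Therefore \Cref{cor:johnson-spl-case} applies and furnishes a constant $\tau = \tau(s) > 0$ with $\sigma_2(W_{{\bf f}}) = \sigma_2(W_\Delta) \le 1/n^{\tau}$ for all sufficiently large $n$.

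It then remains to transfer this bound to $W'_{{\bf f}} = \tfrac12(W_{{\bf f}} + W_{{\bf f}}^\top)$. By \Cref{obs:w-delta-props}, both $W_{{\bf f}}$ and $W_{{\bf f}}^\top = W_{\Delta^\top}$ are doubly stochastic, so \Cref{lem:convex} (taking the special index set $S$ to be all of $\{1,2\}$) gives $\sigma_2(W'_{{\bf f}}) \le \max\{\sigma_2(W_{{\bf f}}),\sigma_2(W_{{\bf f}}^\top)\}$; since transposing a matrix does not change its singular values, the right-hand side is just $\sigma_2(W_{{\bf f}}) \le 1/n^{\tau}$. Finally, $W'_{{\bf f}}$ is symmetric by construction, so its singular values are the absolute values of its eigenvalues, whence $\lambda_2(W'_{{\bf f}}) = \sigma_2(W'_{{\bf f}}) \le 1/n^{\Omega_s(1)}$, as claimed.

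I do not expect any genuine obstacle here; the proof is essentially unwinding definitions. The only points requiring a little care are (i) matching the quantitative ``goodness'' hypothesis $f_j\in\frac{m}{s}\pm\sqrt{10m\log m/s}$ to the definition of $C$-balanced, where the factor $\sqrt{1/s}$ is simply absorbed into the constant $C$, and (ii) using that $\lambda_2$ of a symmetric doubly stochastic matrix equals its $\sigma_2$, so that the singular-value conclusion of \Cref{cor:johnson-spl-case} yields the stated eigenvalue bound. Alternatively, one could apply \Cref{cor:gen-bal-cor} directly to $W'_{{\bf f}}$, which respects symmetries and is supported only on pairs $({\bf a},{\bf b})$ with $\Delta({\bf a},{\bf b})\in\{\Delta,\Delta^\top\}$, both of which are $10$-balanced.
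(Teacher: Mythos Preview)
Your proposal is correct. The paper's proof is your one-sentence alternative at the end: it applies \Cref{cor:gen-bal-cor} directly to the symmetric matrix $W'_{{\bf f}}$, noting that it respects symmetries and is supported on the two $(10/s)$-balanced profiles $\Delta,\Delta^\top$. Your main route via \Cref{cor:johnson-spl-case} on $W_\Delta$ followed by the convexity bound \Cref{lem:convex} is essentially the same argument unrolled, since \Cref{cor:gen-bal-cor} is itself proved by decomposing into $W_\Delta$'s and invoking \Cref{cor:johnson-spl-case} plus \Cref{lem:convex}.
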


\begin{proof}
    We note that the matrix $W'_{{\bf f}}$ respects symmetries and has non-zero entries only on entries corresponding to a balanced generalized Hamming distance of either $\Delta$ or $\Delta^\top$ (both of which are $(10/s)$-balanced). Hence the proof follows directly by applying~\Cref{cor:gen-bal-cor} to the matrix $W_{{\bf f}}'$.
\end{proof}
    We now bound the eigenvalues of $W_\ODLSZ$ and finish the proof of~\Cref{lem:eig-val-sz}. By applying~\Cref{lem:convex} with $S$ being the set of good ${\bf f}$, we have 
    \begin{align*}
        \lambda_2(W_\ODLSZ) & = \lambda_2\paren{\sum_{{\bf f}}\alpha_{{\bf f}}W'_{{\bf f}}}\\
        & \le  \max_{{\bf f}\text{~good}}\{\lambda_2(W_{{\bf f}}')\} + \sum_{{\bf f}\text{~bad}} \alpha_{{\bf f}}\tag{using~\Cref{lem:convex}}\\
        & \le  \paren{\frac{1}{n^{\Omega_s(1)}} }+ \paren{\sum_{{\bf f}\text{~bad}} \alpha_{{\bf f}}} \tag{applying~\Cref{lem:johnson} to the random walk matrix $W'_{{\bf f}}$}\\
        & \le \frac{1}{n^{\Omega_s(1)}}. \tag{using~\eqref{eqn:cher} }
    \end{align*}
    
    The above bound shows that all the eigenvalues of $W_\ODLSZ$ except the largest one must be bounded above by $1/n^{\Omega_s(1)}$ in absolute value, i.e., $\lambda_2(W_\ODLSZ) \le 1/n^{\Omega_s(1)}$ proving~\Cref{lem:eig-val-sz}.
\end{proof}

\section{Local List Correction of Junta-Sums}\label{sec:llc}
In this section, we will prove the following theorem which is a restatement of \Cref{thm:local-list-correction-intro} with explicit bounds on the query complexity.

\begin{thmbox}
\begin{restatable}[Local List Correction]{theorem}{locallistcorr}\label{thm:local-list-correction}
For every Abelian group $G$ and for every $\varepsilon>0$, the space $\mathcal{J}_{d}(\sgrid^{n}, G)$ is $(1/s^{d}-\varepsilon, \; \bigO_{\varepsilon}(1), \; \Tilde{\bigO}_{\varepsilon}(\log n)^{d}, \; \bigO_{\varepsilon}(1))$-locally list correctable.\\

In particular, there is a randomized algorithm $\mathcal{A}$ such that for a function $f: \sgrid^{n} \to G$ and a parameter $\varepsilon > 0$, $A^{f}(\varepsilon)$ outputs with probability $\geq 3/4$ a list of randomized algorithms $\set{\phi_{i}}_{i=1}^{L}$ ($L = \bigO_{\varepsilon}(1)$)  such that the following holds. For each junta degree-$d$ function $P \in \mathcal{J}_{d}$ that is $(1/s^{d} - \varepsilon)$-close to $f$, there exists at least one randomized algorithm $\phi_i$ such that $\phi_{i}^{f}$ computes $P$ correctly on every input in $\sgrid^{n}$ with probability at least $3/4.$\\

The algorithm $\mathcal{A}$ makes $\bigO_{\varepsilon}(1)$ queries to $f$, while each $\phi_i$ makes $\Tilde{\bigO}_{\varepsilon}(\log n)^{d}$ oracle queries to $f$.
\end{restatable}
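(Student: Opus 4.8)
The plan is to follow the three-stage template of~\cite{ABPSS25}, adapted to $d$-junta-sums over an arbitrary grid $\sgrid^n$ and an arbitrary Abelian group $G$, with the spectral bound~\Cref{cor:gen-bal-cor} (ultimately~\Cref{thm:more-general-matrix-eigenvalue}) playing the role that the well-understood Boolean-slice spectrum plays in~\cite{ABPSS25}. Stage~1 establishes the combinatorial list-size bound; Stage~2 builds a local self-corrector for a single close codeword over a \emph{biased} product distribution; Stage~3 combines these by restricting to a random $\mathrm{polylog}(n)$-dimensional subgrid, which is where the new technical work lies.

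\emph{Stage 1 (combinatorial bound).} I would first show that for $\delta = 1/s^d-\varepsilon$ there are at most $L = L(\varepsilon,d,s) = \bigO_\varepsilon(1)$ junta-sums $P\in\mathcal{J}_d(\sgrid^n,G)$ that are $\delta$-close to any fixed $f$. Arguing as over the Boolean cube: if $P_1,\dots,P_{L+1}$ were distinct and $\delta$-close to $f$, an inclusion–exclusion/second-moment count over $\sgrid^n$ forces two of them to agree on strictly more than a $1/s^d$ fraction of points, contradicting~\Cref{clm:dist-junta-sums}; pushing the bound past a mere pair requires a more refined anti-concentration/distance lemma for junta-sums over non-Boolean grids (roughly: the agreement of two distinct $d$-junta-sums cannot be too close to $1/s^d$, quantitatively in their distance), which generalizes the Boolean statement used in~\cite{ABPSS25}. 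This forces $L+1 \le L(\varepsilon,d,s)$.

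\emph{Stage 2 (local unique-decoding over a biased distribution).} Using the junta-polynomial normal form~\Cref{clm:normal-form}, $P(\mathbf{x}) = \sum_{|\mathbf{a}|\le d} g_{\mathbf{a}}\prod_{i:a_i\ne 0}\delta_{a_i}(x_i)$, one recovers $P(\veca)$ by isolating the contributions of monomials of each support size via finite differencing in a random choice of few coordinates, self-correcting each query with the Stage-1 distance. Because the list-correction reduction in Stage~3 hands us instances distributed according to a biased product measure on $\sgrid^n$ rather than the uniform one, this corrector must be analyzed over such biased distributions; this is a mostly routine generalization of the uniform analysis of~\cite{ABPSS25}. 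Each of the $d$ differencing levels costs a $\widetilde{\bigO}(\log n)$ factor, giving query complexity $\widetilde{\bigO}_\varepsilon(\log n)^d$.

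\emph{Stage 3 (from the combinatorial bound to the local list corrector).} The preprocessing algorithm $\mathcal{A}$ restricts $f$ (and every close $P$) to a random subgrid $C_{h,\Pi}\cong\sgrid^k$ with $k = \mathrm{polylog}(n)$ as in~\Cref{defn:random-embedding}, where the $\bigO_\varepsilon(1)$ close junta-sums can be enumerated directly; this enumeration, together with the Stage-2 corrector, forms the output list $\phi_1,\dots,\phi_L$, and repetition plus a union bound over the list drives the failure probability below the required constant. The crux — and the main obstacle — is showing that this restriction is \emph{faithful}: if $P\ne Q$ as $d$-junta-sums on $\sgrid^n$, then with high probability they remain distinct on the random subgrid, equivalently a nonzero $d$-junta-sum on $\sgrid^{n}_\mu$ stays nonzero as a junta-sum after the block-identification. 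One realizes the block-identification as choosing a random copy of the multislice $\sgrid^{k}_\mu$ inside $\sgrid^{n}_\mu$; the associated random walk on $\sgrid^n_\mu$ respects symmetries and is nearly balanced, so~\Cref{cor:gen-bal-cor} (via~\Cref{thm:more-general-matrix-eigenvalue}) bounds its second eigenvalue, and the expander mixing lemma then shows the embedding is a good sampler — exactly the mechanism used to prove~\Cref{thm:dist-junta-sums}, now combined with~\Cref{lemma:sampling-subgrid}. It is precisely this faithfulness step where the Boolean argument of~\cite{ABPSS25} breaks down and the general eigenvalue bound of~\Cref{thm:more-general-matrix-eigenvalue} is essential; the remaining bookkeeping (query counts, error amplification, handling the biased distribution in Stage~2) is a straightforward adaptation of~\cite{ABPSS25}.
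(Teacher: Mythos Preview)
Your three-stage outline matches the paper's, and you correctly identify the multislice eigenvalue bound as the new ingredient. But Stage~3 as you describe it has a real gap, and the misdescription propagates into where the spectral bound is actually used.

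First, the subgrid dimension $k$ is $\bigO_\varepsilon(1)$, not $\mathrm{polylog}(n)$: the preprocessing algorithm $\mathcal{A}$ must query all of $\sgrid^k$ to brute-force list-decode there, and $s^k$ must be $\bigO_\varepsilon(1)$. More importantly, after you enumerate the close junta-sums $Q_1,\dots,Q_{L'}$ on the $k$-dimensional subgrid $\mathsf{C}$, these are functions on $\sgrid^k$, not on $\sgrid^n$. You cannot simply ``hand them to the Stage-2 corrector'': the local corrector needs an oracle on $\sgrid^n$ that is already close to a specific $P$, and the $Q_j$ do not provide one. The paper's mechanism (the \emph{approximating oracle} $\Psi[\mathsf{C},\sigma,Q]$) is the missing piece: to evaluate at an arbitrary $\mathbf{b}\in\sgrid^n$, it builds a \emph{second} subgrid $\mathsf{C}'\cong\sgrid^{sk}$ spanned by $\mathsf{C}$ and $\mathbf{b}$, brute-force list-decodes $f|_{\mathsf{C}'}$, picks the $R_i$ whose restriction to $\mathsf{C}$ equals the advice $Q$, and returns $R_i(\mathbf{w})$ where $\mathbf{w}$ is the preimage of $\mathbf{b}$ in $\sgrid^{sk}$.

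This is exactly where the spectral bound enters, and not where you placed it. The faithfulness you need is not that distinct junta-sums on $\sgrid^n$ stay distinct on $\sgrid^k$ (that follows from the elementary sampling lemma, \Cref{lemma:sampling-subgrid}). It is that two distinct $R_i,R_j$ on $\sgrid^{sk}$ which differ at $\mathbf{w}$ --- a point on the \emph{balanced multislice} $\sgrid^{sk}_{k,\dots,k}$, by construction of $\mathsf{C}'$ --- remain distinct after the random $s$-to-$1$ identification $\tau:[sk]\to[k]$ that recovers $\mathsf{C}$ from $\mathsf{C}'$. This is the embedding of the multislice of $\sgrid^{k}$ inside the multislice of $\sgrid^{sk}$ (not $\sgrid^n$), and proving it samples well is what requires \Cref{cor:gen-bal-cor} combined with the weak distance lemma on multislices. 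A smaller point: the biased product distribution in Stage~2 arises \emph{internally} in the local corrector (reducing correction over $\sgrid^n$ to correction over a $(1/s)$-biased Boolean cube), not because Stage~3 hands you biased instances --- the approximating oracles are close to $P$ in ordinary Hamming distance.
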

\end{thmbox}

We remark that for the Boolean case,~\cite{ABPSS25} proves that one can reduce the number of queries to a constant depending only on $\varepsilon$ and the {\em torsion} (or {\em exponent}) of the group (see~\Cref{sec:prelims} for a definition). Similarly, in this case, we get a similar statement where the algorithm $\A$ makes $\bigO_\varepsilon(1)$ queries, and each $\phi_i$ makes $\bigO_{M,\varepsilon}(1)$ queries where $M$ is the exponent of the torsion Abelian group $G$. More formally, we prove the following. \\

\begin{theorem}\label{thm:const-exp}
    For every torsion Abelian group $G$ of exponent $M>0$ and every $\varepsilon>0$, the family $\J_d(\sgrid^n, G)$ is $(1/s^d-\varepsilon, \bigO_\varepsilon(1), \bigO_{M,\varepsilon}(1), \bigO_\varepsilon(1))$-locally list correctable. 
\end{theorem}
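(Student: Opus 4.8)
The plan is to prove \Cref{thm:const-exp} in two stages: first establish the $\widetilde{\bigO}_\varepsilon(\log n)^d$-query version of local list correction (\Cref{thm:local-list-correction}), and then upgrade the per-corrector query complexity to $\bigO_{M,\varepsilon}(1)$ when $G$ has finite exponent, following the torsion-group refinement of \cite{ABPSS25}. For \Cref{thm:local-list-correction} the skeleton is the usual two-step one. The combinatorial step shows that $\J_d(\sgrid^n,G)$ is $(1/s^d-\varepsilon,L)$-list-decodable with $L=\bigO_\varepsilon(1)$, independently of $G$ and $n$: since the code has distance $1/s^d$ by \Cref{clm:dist-junta-sums}, too many junta-sums simultaneously $(1/s^d-\varepsilon)$-close to a common $f$ produce a contradiction via an anti-concentration/counting argument, whose quantitative form uses the near-optimal distance lemma \Cref{thm:dist-junta-sums} after passing to random constant-dimensional subgrids through \Cref{defn:random-embedding}, with sampling quality controlled by \Cref{lemma:sampling-subgrid}. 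The algorithmic step builds an outer procedure $\A$ that makes $\bigO_\varepsilon(1)$ queries and outputs $L=\bigO_\varepsilon(1)$ oracle algorithms $\phi_1,\dots,\phi_L$, each carrying $\bigO_\varepsilon(1)$ bits of advice (a constant number of evaluations of the target $P_i$ at fresh random points); by the list-decoding bound the advice pins down $P_i$ among the candidates, so $\phi_i$ can operate as if in a unique-decoding regime.

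On an arbitrary input $\veca\in\sgrid^n$, the corrector $\phi_i$ performs a downward self-reduction: it samples a random embedded subgrid $C_{h,\Pi}\subseteq\sgrid^n$ of dimension $k$ through which a random translate of $\veca$ passes, and observes that $P_i|_{C_{h,\Pi}}$ is again a $d$-junta-sum on $k$ variables, that $f|_{C_{h,\Pi}}$ remains $(1/s^d-\varepsilon/2)$-close to $P_i|_{C_{h,\Pi}}$ with high probability (by \Cref{lemma:sampling-subgrid}), and that the candidate junta-sums stay pairwise well-separated on $C_{h,\Pi}$ (by \Cref{thm:dist-junta-sums}), so the advice still resolves $P_i$ there. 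It then reduces to a unique-decoding correction over a biased product distribution on a Boolean cube and invokes the generalization of the \cite{ABPSS25} corrector; peeling the degree-$j$ contribution to $P_i(\veca)$ for $j=0,1,\dots,d$ costs $\widetilde{\bigO}_\varepsilon(\log n)$ queries per level, hence $\widetilde{\bigO}_\varepsilon(\log n)^d$ in all. The indispensable analytic ingredient here, flagged in \Cref{subsec:proof-overview-llc}, is that the random embedding of the multislice of $\sgrid^k$ inside the multislice of $\sgrid^n$ behaves like genuine random sampling; this we establish from the eigenvalue bound \Cref{thm:more-general-matrix-eigenvalue} (via \Cref{cor:gen-bal-cor}) together with the expander mixing lemma, exactly as in the proof of \Cref{thm:dist-junta-sums}.

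To get the constant-query bound of \Cref{thm:const-exp} we replace the $d$-fold $\log n$-dimensional recursion inside each $\phi_i$ by a direct local correction exploiting bounded exponent, mirroring the torsion case of \cite{ABPSS25}. Since only finitely many group elements are ever touched, we may pass to a finite subgroup of $G$ of exponent dividing $M$, and split it into its $p$-primary components over the primes $p\mid M$ by the Chinese Remainder Theorem; composing $P_i$ with the projection onto a $p$-part is again a $d$-junta-sum, the counting bound over that component is inherited, and the partial answers recombine at the end, so it suffices to handle $G$ of exponent $p^e\mid M$. Over such a $G$, \Cref{clm:normal-form} implies that a function is a $d$-junta-sum iff all its order-$(d+1)$ discrete derivatives vanish, and hence $P_i(\veca)$ is a fixed signed Möbius-type combination of the values of $P_i$ on a bounded-dimensional sub-structure through $\veca$; drawing this sub-structure at random via \Cref{defn:random-embedding} makes its $\bigO_{p^e,d}(1)$ query points near-uniform, so that, conditioned on the advice pinning down $P_i$, they all agree with $P_i$ with good probability, and a constant number of independent repetitions with a majority vote drives the error below $1/4$. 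This costs $\bigO_{p^e,d,\varepsilon}(1)=\bigO_{M,\varepsilon}(1)$ queries, while the outer algorithm is unchanged, still making $\bigO_\varepsilon(1)$ queries; this is the exact analog of constant-query local correction of bounded-degree polynomials over a fixed finite field.

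The main obstacle is the coupling of the bounded-dimensional restriction with the list structure: on a constant-dimensional subgrid one must simultaneously guarantee that (i) $f$ stays close to the intended $P_i$, (ii) the other $\le L-1$ candidates remain far from $P_i$ so the $\bigO_\varepsilon(1)$-bit advice still isolates it, and (iii) the set of junta-sums consistent with the observations does not blow up — all with failure probability small enough to survive the majority vote. Item (ii), and more broadly the claim that a random small multislice sits inside the large one like a uniform sample, is precisely where the distance lemma \Cref{thm:dist-junta-sums} and, underneath it, the spectral-expansion theorem \Cref{thm:more-general-matrix-eigenvalue} (through \Cref{cor:gen-bal-cor}) become essential. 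A secondary subtlety is carrying out the finite-subgroup reduction and all the advice arguments so that every parameter remains independent of $n$ and of the (possibly infinite) ambient group $G$.
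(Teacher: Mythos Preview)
Your two-stage skeleton (approximate oracles with $\bigO_\varepsilon(1)$ queries, then compose with a local unique-decoding corrector) matches the paper, and you are right that the torsion refinement lives entirely inside the unique-decoding corrector. But the torsion step as you describe it has a genuine gap.

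The sentence ``drawing this sub-structure at random via \Cref{defn:random-embedding} makes its $\bigO_{p^e,d}(1)$ query points near-uniform'' is where the argument fails. To recover $P(\veca)$ you must make the embedded subgrid pass through $\veca$, which forces $\Pi_i(0)=a_i$; then for any other point $\mathbf{b}\in\sgrid^k$ the image $x(\mathbf{b})$ is $\rho$-noisy around $\veca$ with $\rho$ determined by the fraction of $0$-coordinates of $\mathbf{b}$. A M\"obius/discrete-derivative identity uses points $\mathbf{b}$ in a radius-$d$ Hamming ball around $0^k$, so almost all coordinates of $\mathbf{b}$ are $0$, and $x(\mathbf{b})$ is concentrated near $\veca$ rather than near-uniform. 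The uniform-distribution error guarantee on $f$ therefore gives you no control at those query points, and the union bound collapses.

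This is precisely the obstacle the paper's proof addresses. After reducing to the $\bern(1/s)^n$-biased Boolean cube, it constructs an interpolating identity $Q(\mathbf{1})=\sum_{\mathbf{b}\in\binom{[sk]}{k}} c_{\mathbf{b}}\,Q(\mathbf{b})$ with integer $c_{\mathbf{b}}$, so that every query point lies on a fixed slice and hence maps to a $\bern(1/s)^n$-distributed query. The existence of such integer coefficients is a number-theoretic fact: one must pick $k=\bigO_{M,d}(1)$ so that the relevant binomial coefficients $\binom{(s-1)k+d-i}{(s-1)k-i}$ are divisible by $M$ for $1\le i\le d$ while $\binom{(s-1)k+d}{(s-1)k}$ is coprime to $M$, and this is obtained from Kummer's theorem on $p$-adic valuations of binomial coefficients. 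Your discrete-derivative formulation does not produce balanced-weight query points, and the CRT reduction to $p$-primary pieces does not sidestep the issue.

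A secondary inaccuracy: the combinatorial list-decoding bound (\Cref{thm:comb-bd}) in the paper is not derived from the multislice distance lemma \Cref{thm:dist-junta-sums}; it goes via an anti-concentration lemma (through Meka--Nguyen--Vu) for groups with no small-order elements, and a tree/footprint argument for $p$-primary groups. The multislice expansion enters only in the analysis of the approximate oracles (specifically through \Cref{coro:restriction-do-not-kill}), not in the combinatorial bound.
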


As stated earlier, \cite{ABPSS25} gave a local list corrector for degree-$d$ \emph{polynomials} over $\sgrid = \Boo$. We note that most of their proof can be extended to \emph{junta sums} and general $\sgrid$ with some extensions to their arguments. However, a key challenge was to show that certain random walk matrix has good spectral expansion. In particular, \cite[Lemma 5.1.1]{ABPSS25} is proved by analyzing the eigenvalues of matrices defined on Johnson graphs. To extend their argument to general grids, we have to analyze the eigenvalues of random walk matrices on the balanced multi-slice. In this section, we describe the random walk matrix arising from the analysis of our local list corrector and show that it has ``large'' spectral gap, using \Cref{cor:gen-bal-cor}. We first give a quick overview of the algorithm, which is an extension of \cite[Algorithm 3 and Algorithm 4]{ABPSS25}.

\paragraph*{Overview of the local list corrector.}Similar to the work of \cite{ABPSS25}, our local list corrector goes as follows:

\begin{itemize}
    \item We design a local corrector $\mathcal{J}_{d}(\sgrid^{n}, G)$ (see \Cref{thm:local-correction}).
    \item We show a combinatorial list decoding bound for $\mathcal{J}_{d}(\sgrid^{n},G)$ (see \Cref{thm:comb-bd}).
    \item We design \emph{approximating oracles} for $\mathcal{J}_{d}(\sgrid^{n}, G)$ (see \Cref{thm:approx-oracles-list-decoding}).
    \item Combining an approximating oracle with local corrector, we get a local list corrector. The bound on query complexity follows from the combinatorial list decoding bound.
\end{itemize}

Our key technical contribution is in analyzing the approximating oracles. We use a very similar algorithm for approximating oracles as in \cite{ABPSS25}, however the correctness is more involved. The first two steps are again analogous to \cite[Section 3 and Section 4]{ABPSS25}. Most of the arguments follow with a simple extension from $\Boo$ to $\sgrid$, and few arguments require a bit more careful analysis. For the sake of completeness, we give a proof for local corrector and combinatorial list decoding bound. We state the results for them below, and after that we will proceed with the local list corrector.

\begin{thmbox}
\begin{restatable}[\em Local correction of junta-sums]{theorem}{localcorr}\label{thm:local-correction} 
     For every $\varepsilon > 0$, finite set $\sgrid$ of size $s\ge 2$ and $d\ge 0$, Abelian group $G$, the family $\mathcal{J}_{d}(\sgrid^{n}, G)$ is $(\widetilde{\bigO}_{\varepsilon}(\log n)^{d},\delta_\mathcal{J}/2 - \varepsilon)$-locally correctable where $\delta_\J:=1/s^d$.
     
    Moreover, if $G$ is a torsion Abelian group of exponent $M$, then the number of queries can be made $O_{M,\varepsilon}(1)$, i.e., $\J_d(\sgrid^n, G)$ is $(\bigO_{M,\varepsilon}(1), \delta_J/2-\varepsilon)$-locally correctable.
\end{restatable}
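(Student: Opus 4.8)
The plan is to reduce the local correction of $d$-junta-sums over an arbitrary grid $\sgrid^n$ to a biased version of the Boolean-cube local corrector of \cite{ABPSS25}, exploiting the junta-polynomial normal form from \Cref{clm:normal-form}. First I would fix the representation $P(\x) = \sum_{\veca : |\veca|\le d} g_\veca \prod_{i : a_i\ne 0}\delta_{a_i}(x_i)$ and the input point $\z\in\sgrid^n$ at which we want to recover $P(\z)$. The key observation is that, having fixed $\z$, we can ``fold'' each coordinate $i$ into a Boolean indicator: replacing $x_i$ by whether $x_i$ equals some distinguished value (say $z_i$ or a chosen alternative) turns each factor $\delta_{a_i}(x_i)$ into something that behaves like a degree-$1$ Boolean function, so that $P$ restricted to an appropriate random subcube looks like a degree-$d$ polynomial over $\{0,1\}^k$ for $k = \poly(\log n)$. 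Concretely, I would sample a random embedding of a small grid (or rather a small cube) into $\sgrid^n$ in the style of \Cref{defn:random-embedding}, pushing $\z$ to a fixed corner, and argue via \Cref{lemma:sampling-subgrid} that the agreement of $f$ with $P$ on the sampled subgrid is close (up to $\pm\varepsilon/10$, say) to the global agreement $1 - \delta$ with high probability. This puts us in the unique-decoding regime on the subgrid, where a slight generalization of the \cite{ABPSS25} corrector for the \emph{biased} distribution $\bern(p)^k$ (with $p$ depending on $s$) applies.

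The second ingredient is the distance lemma for junta-sums restricted to the subgrid: I would use \Cref{clm:dist-junta-sums} (junta-sums have distance $1/s^d$) together with the sampling lemma to ensure that the restricted function is still a genuine $d$-junta-sum and remains far from all other $d$-junta-sums on the subgrid, so the corrector's output is well-defined and equals $P(\z)$. Running the biased-cube corrector from \cite{ABPSS25} --- which itself works by querying $f$ along a constant number of random lines/curves through $\z$ in the subcube and taking a majority vote of a local interpolation --- and composing with the random embedding gives an algorithm making $\widetilde{\bigO}_\varepsilon(\log n)^d$ queries: the $\log n$ factor comes from the subgrid dimension $k = \poly(\log n)$ needed for the sampling lemma, and the $d$-th power comes from the degree-$d$ interpolation (each ``curve'' query expands into $O(d)$ or $\poly(\log n)$ points). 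For the torsion case, I would invoke the same structural improvement as in \cite{ABPSS25}: when $G$ has exponent $M$, a $d$-junta-sum takes values in an $M$-torsion group, so one can replace the $\log n$-sized interpolation structures by $\bigO_{M}(1)$-sized ones (essentially because the relevant ``polynomial identities'' only need to be checked modulo $M$), bringing the query complexity down to $\bigO_{M,\varepsilon}(1)$; this step is a direct adaptation and I would cite \cite[Section 3]{ABPSS25} for the mechanism.

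The main obstacle I anticipate is the passage from the Boolean/uniform corrector of \cite{ABPSS25} to the biased distribution $\bern(p)^k$ that naturally arises when one ``Booleanizes'' a grid coordinate (a uniform value in $\sgrid$ equals a fixed symbol with probability $1/s$, not $1/2$). The anti-concentration and error-reduction lemmas underlying their local corrector --- in particular the claim that a small fraction of errors cannot fool the majority vote over random structures --- must be re-derived for the biased measure; this requires biased analogues of the relevant tail bounds and a biased distance lemma. I expect this to be technically the heaviest part, though conceptually routine: the biased Schwartz--Zippel-type bound for junta-sums over a product of biased coordinates should follow from an inductive argument mirroring the proof of \Cref{clm:dist-junta-sums}, and the random-restriction arguments of \cite{ABPSS25} are robust to changing the underlying product measure as long as every coordinate has both symbols with probability bounded away from $0$ and $1$ (which holds since $1/s \le p \le 1-1/s$). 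A secondary subtlety is making sure the random embedding is chosen \emph{independently of} $\z$ in a way that still maps $\z$ to a canonical corner and still yields the sampler guarantee simultaneously for all the ``bad sets'' the analysis needs to control; this is handled by a union bound over the $\bigO_\varepsilon(1)$ relevant events, using \Cref{lemma:sampling-subgrid} with $\eta$ a small constant.
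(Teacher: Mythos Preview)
Your reduction to the biased Boolean cube is the right move and matches the paper's \Cref{clm:local-redn}: fixing the target $\z$ and randomizing the ``other'' value $w_i\neq z_i$ per coordinate turns a uniform point of $\sgrid^n$ into a $\bern(1/s)^n$-distributed point of $\{0,1\}^n$. But the proposal conflates two distinct stages that the paper keeps separate. The Booleanization step only preserves the error up to a constant factor (see the $10\varepsilon_1$ in \Cref{clm:local-redn}), so applying it at error $\approx 1/(2s^d)-\varepsilon$ pushes you out of the unique-decoding regime for degree-$d$ multilinear polynomials on the biased cube. The paper therefore first performs \emph{error reduction over $\sgrid^n$ itself}, not over the biased cube: it uses random subgrids (\Cref{lem:error-reduction-main}) to drop the error to a small constant, then an error-reduction gadget plus hypercontractivity over $\Z_s^n$ (\Cref{lem:error-reduction-subroutine}, \Cref{lem:error-reduction-gadget}) to drive it down to $\varepsilon_1\sim 1/(\log n)^d$; only then does it Booleanize and enter the biased-cube corrector. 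Your plan to ``re-derive the error-reduction lemmas for the biased measure'' is not what is actually needed, and the subgrid-then-decode step you sketch is really the first error-reduction step, not the full corrector.

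The more substantive gap is in the low-error stage. You describe the biased-cube corrector as ``querying $f$ along a constant number of random lines/curves through $\z$ and taking a majority vote,'' and propose to recover it from a biased distance lemma and tail bounds. That picture is for field-based Reed--Muller codes and does not apply to group-valued junta-sums. What the paper actually builds (\Cref{thm:interpolate}) is a \emph{weight-balanced interpolating set}: a set $\mathcal{S}\subset\{0,1\}^k$ of size $O_{s,d}(k^d)$ with $k=\Theta_{s,d}(\log n)$, together with a distribution $\mathcal{D}$ on $[k]$, such that (i) every $\b\in\mathcal{S}$ has $\E_{j\sim\mathcal{D}}[b_j]=1/s\pm 1/n^2$, and (ii) $Q(\mathbf{1})=\sum_{\b\in\mathcal{S}}c_\b Q(\b)$ with integer coefficients for every $Q\in\J_d(\{0,1\}^k,G)$. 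Property (i) forces each query, after hashing $[n]\to[k]$ via $\mathcal{D}$, to be statistically $1/n$-close to $\bern(1/s)^n$; this is what makes the union bound over $|\mathcal{S}|=O((\log n)^d)$ queries go through at error $\varepsilon_1$, and is the real source of the $(\log n)^d$ query count. The construction is a careful recursion on $m$ (writing $k=rm$) that simultaneously maintains the hitting-set property and the geometric weighting $W_j\propto s^{-j}$; it is not a consequence of a biased Schwartz--Zippel bound.

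Two smaller points. First, your ``secondary subtlety'' about the sampling lemma when $\z$ is fixed is real and is not dispatched by a union bound: \Cref{lemma:sampling-subgrid} needs $\Pi_i$ uniform in $\mathsf{Sym}[\sgrid]$, and conditioning on $\Pi_i(0)=z_i$ for a \emph{fixed} $\z$ breaks that. The paper sidesteps this by proving the error-reduction guarantee only in expectation over a random input and then applying Markov (\Cref{lemma:error-reduction-main}), which yields a randomized oracle with small average error---exactly the hypothesis the low-error corrector consumes. Second, for the torsion case the paper does adapt \cite{ABPSS25}, but the adaptation is not a citation: one has to produce an interpolating set on the slice $\binom{[sk]}{k}$ (so that queries are \emph{exactly} $\bern(1/s)^n$) whose coefficients satisfy divisibility constraints modulo $M$, and this uses Kummer's theorem with a specific choice of $k$ (\Cref{clm:interpolate-slice}).
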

\end{thmbox}

\begin{thmbox}
\begin{restatable}[Combinatorial List Decoding Bound]{theorem}{combbound}\label{thm:comb-bd}
    For every $\varepsilon>0$, positive integers $s,d$, and Abelian group $G$, the family $\mathcal{J}_{d}(\sgrid^{n}, G)$ is $(1/s^d-\varepsilon, \bigO_{\varepsilon}(1))$-list decodable.
\end{restatable}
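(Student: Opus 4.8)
The plan is to follow the template of the Boolean case \cite{ABPSS25}, whose two structural ingredients are the distance lemma for junta-sums (\Cref{clm:dist-junta-sums}, already available over the full grid $\sgrid^n$) and a suitable \emph{anti-concentration lemma for junta-sums}, and to organize the argument as an induction on the junta-degree $d$. Since the statement depends only on $s=|\sgrid|$, I would take $\sgrid=\Z_s$ and use the junta-polynomial representation of \Cref{clm:normal-form}. The workhorse of the induction is the \emph{directional derivative}: for a coordinate $\ell\in[n]$ and value $\sigma\in\Z_s\setminus\{0\}$, set $D_{\ell,\sigma}P:=P|_{x_\ell=0}-P|_{x_\ell=\sigma}$, a function on $\Z_s^{[n]\setminus\{\ell\}}$. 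Inspecting \Cref{clm:normal-form} one checks that $D_{\ell,\sigma}$ maps $\J_d(\Z_s^n,G)$ into $\J_{d-1}(\Z_s^{[n]\setminus\{\ell\}},G)$, and that $D_{\ell,\sigma}Q=0$ exactly when $Q$ has no monomial in which $x_\ell$ occurs as $\delta_\sigma(x_\ell)$.

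For the base case $d=0$ the code is the set of constant functions $G$, the distance is $1$, and a constant $v$ is $(1-\varepsilon)$-close to $f$ iff $\Pr_{\bx}[f(\bx)=v]\ge\varepsilon$; these events are pairwise disjoint, so the list has size at most $1/\varepsilon$. For the inductive step, let $P_1,\dots,P_L\in\J_d(\Z_s^n,G)$ all be $(1/s^d-\varepsilon)$-close to $f$. A short averaging-plus-union-bound computation (using that over the uniform distribution each value class $\{\ell:x_\ell=\tau\}$ has density $\tfrac1s+o(1)$, and that the minimum distance only grows from $1/s^d$ to $1/s^{d-1}$) shows that for each fixed $i$ a constant fraction $p=p(s,\varepsilon)>0$ of pairs $(\ell,\sigma)$ are ``good for $i$'', meaning $D_{\ell,\sigma}f$ is $(1/s^{d-1}-\varepsilon/2)$-close to $D_{\ell,\sigma}P_i$. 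Averaging once more over $i$, some pair $(\ell^\star,\sigma^\star)$ is good for $\ge pL$ indices; for each such $i$ the derivative $D_{\ell^\star,\sigma^\star}P_i$ lies in the list of codewords $(1/s^{d-1}-\varepsilon/2)$-close to $D_{\ell^\star,\sigma^\star}f$, which by the inductive hypothesis has size at most $L_{d-1}$. Hence one can pass to a sublist of size $\ge pL/L_{d-1}$ all of whose members share the same derivative in direction $(\ell^\star,\sigma^\star)$ — equivalently, a sublist whose pairwise differences are nonzero junta-sums avoiding a prescribed ``leading'' monomial type.

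The main obstacle is to turn this reduction into a genuine bound on $L$ \emph{without} paying a factor that depends on $n$, since naively iterating the ``share a derivative'' step over all $\Theta(ns)$ directions gives only $L_d\le (L_{d-1}/p)^{\Theta(n)}$. This is exactly where the generalized anti-concentration lemma is needed: it should say that a nonzero $Q\in\J_d$ is ``active'' in an $\Omega_{s,d}(1)$ fraction of directions $(\ell,\sigma)$ (i.e.\ $D_{\ell,\sigma}Q\ne 0$) — qualitatively because a nonzero monomial of $Q$ on a coordinate set $I$ already forces activity at $(i,a_i)$ for each $i\in I$, but one needs a robust, quantitative version. Combined with the distance lemma, this should allow a \emph{clustering/packing} argument in place of the naive iteration: partition the list into clusters of pairwise-close codewords, bound the number of clusters by a packing bound (from the distance lemma), and bound the size of each cluster by observing that every pairwise difference inside a cluster is simultaneously $O(s^{-d})$-close to $0$ and, by the distance lemma, at distance $\ge s^{-d}$ from $0$, while the anti-concentration bound controls how many such near-extremal differences can coexist; chaining this with the derivative induction should close the recursion as $L_d\le \mathrm{poly}_{s,d}(1/\varepsilon)\cdot L_{d-1}$, hence $L=\bigO_{s,d,\varepsilon}(1)$. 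Two delicate points I expect to recur throughout are (a) extracting the precise quantitative form of anti-concentration over a general alphabet $\Z_s$ that makes the clustering step numerically work, and (b) keeping every estimate phrased purely in terms of agreement fractions, since $G$ may be an arbitrary (possibly infinite, non-torsion) abelian group, so no ``finite ambient code'' shortcut (e.g.\ restricting to a constant-dimensional subgrid) is available.
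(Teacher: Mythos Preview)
Your proposal has a genuine gap at its core: the anti-concentration statement you aim for is simply false as stated. You write that it ``should say that a nonzero $Q\in\J_d$ is `active' in an $\Omega_{s,d}(1)$ fraction of directions $(\ell,\sigma)$''. But a nonzero $Q$ depending on a single coordinate (say $Q=\delta_1(x_1)$) is active in at most $s-1$ of the $n(s-1)$ directions, which is an $O(1/n)$ fraction, not $\Omega_{s,d}(1)$. So there is no version of this lemma that lets you average over directions and extract a sublist sharing a derivative without losing a factor polynomial in $n$; your ``clustering/packing argument in place of the naive iteration'' is therefore resting on a false premise, and the recursion $L_d\le\mathrm{poly}_{s,d}(1/\varepsilon)\cdot L_{d-1}$ does not close.

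The paper's approach is structurally quite different and worth comparing. It does \emph{not} induct on $d$ via derivatives. Instead it splits on the group structure: (i) if $G$ has no element of small order, an anti-concentration lemma of a different flavor (if $P$ depends on at least $r=r(s,d,\varepsilon)$ variables then $\Pr[P\ne 0]\ge 1/s^{d-1}-\varepsilon$) forces any two list elements to differ by a junta on $O_\varepsilon(1)$ variables, after which a sunflower-then-disjoint-petals reduction plus a Chernoff-type tail bound finishes; (ii) if $G$ is $p$-primary for a small prime $p$, one reduces to $\Z_p$ and then to polynomials over a finite field $\F_q$, where a leading-monomial/footprint argument gives the bound, and the $\Z_p$ bound is lifted to arbitrary finite $p$-primary $G$ via a tree argument. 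The two cases are glued as in \cite{ABPSS25}. Your remark (b) about avoiding ``finite ambient code'' shortcuts is on point --- that is exactly why the paper needs the case split --- but your derivative scheme does not supply a substitute for it.
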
    
\end{thmbox}

\noindent
For every $f: \sgrid^{n} \to G$ which is $(\frac{1}{s^{d}} - \varepsilon)$-close to $\mathcal{J}_{d}(\sgrid^{n},G)$, let $\mathsf{List}_{\varepsilon}(f)$ denote the set of $d$-junta-sums that have distance $\leq (1/s^{d} - \varepsilon)$ to $f$, i.e.
\begin{align*}
    \mathsf{List}_{\varepsilon}(f) \; = \; \setcond{P \in \mathcal{J}_{d}(\sgrid^{n}, G)}{\delta(f,P) \leq \frac{1}{s^{d}} - \varepsilon}.
\end{align*}

We give a proof for \Cref{thm:local-correction} and \Cref{thm:comb-bd} later. We informally state a standard observation\footnote{This is also used in \cite{ABPSS24, ABPSS25}. See \cite[Section 5]{ABPSS25} for a more elaborate discussion on it.} in the literature of local list correctors which says that given local correctors, it is enough to design approximating oracles (see \Cref{thm:approx-oracles-list-decoding}):\newline

\noindent
\emph{If there exists a local corrector, then it suffices to design an algorithm which outputs a list of algorithms with the guarantee - For every junta-sum $P$ in the list, there exists an algorithm $A$ in the list which computes $P$ correctly on sufficiently large fraction of $\sgrid^{n}$, and then we can run the local corrector on $A$.}

\noindent
So the focus in this is to design the approximating oracles. The following theorem is larger-grid analogue to \cite[Theorem 5.0.1]{ABPSS25}.\\

\begin{thmbox}
\begin{restatable}[Approximate oracles]{theorem}{approxlocallistcorrection}\label{thm:approx-oracles-list-decoding}
Fix $n \in \mathbb{N}$, $\varepsilon > 0$. Let $f: \sgrid^{n} \to G$ be any function and $L(\varepsilon) := |\mathsf{List}_{\varepsilon}(f)|$. There exists a randomized algorithm $\mathcal{A}_{1}^{f}$ that makes at most $\bigO_{\varepsilon}(1)$ oracle queries and outputs deterministic algorithms $\Psi_{1},\ldots, \Psi_{L'}$ satisfying the following property:\newline
With probability at least $3/4$, for every junta-sum $P \in \mathsf{List}_{\varepsilon}^{f}$, there exists a $j \in [L']$ such that
\begin{enumerate}
    \item $\delta(\Psi_{j}, P) < 1/(10 \cdot 2^{d+1})$
    \item For every $\mathbf{x} \in \sgrid^{n}$, $\Psi_{j}$ computes $P(\mathbf{x})$ by making at most $\bigO_{\varepsilon}(1)$ oracle queries to $f$.
\end{enumerate}
Here $L' = \bigO(L(\varepsilon/2)\log L(\varepsilon)) = \bigO_\varepsilon(1)$.
\end{restatable}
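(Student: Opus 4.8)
The plan is to follow the template of \cite[Theorem 5.0.1]{ABPSS25}, building the approximating oracles out of a "pick two anchor points and interpolate" construction and identifying the junta-sums in the list by the values they take on a small random set of points. First I would recall the key structural fact: any degree-$d$ junta-sum $P$ is determined, up to the list, by its values on roughly $\bigO_\varepsilon(1)$ carefully chosen points — more precisely, by the combinatorial bound (\Cref{thm:comb-bd}) there are at most $L(\varepsilon/2) = \bigO_\varepsilon(1)$ junta-sums that are $(1/s^d - \varepsilon/2)$-close to $f$, and two distinct such junta-sums disagree on at least a $1/s^d$ fraction of $\sgrid^n$ by \Cref{clm:dist-junta-sums}. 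So sampling $O(\log L(\varepsilon))$ independent uniform points and recording "advice" (the tuple of values $f$ would need to take at those points for a given $P$ in the list) gives, with probability $\geq 3/4$, a collection of $L' = \bigO(L(\varepsilon/2)\log L(\varepsilon))$ advice strings, at least one of which is consistent with each $P \in \mathsf{List}_\varepsilon(f)$ and uniquely picks it out.

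Next I would describe each oracle $\Psi_j$. Fixing an advice string, $\Psi_j$ on input $\mathbf{x} \in \sgrid^n$ should recover $P(\mathbf{x})$ by a constant-query procedure: embed a small grid through $\mathbf{x}$ (using \Cref{defn:random-embedding}) so that the restriction of $P$ to this subgrid is again a low-degree junta-sum, use the sampling guarantee (\Cref{lemma:sampling-subgrid}) to ensure the restriction of $f$ is still close to the restriction of $P$ on the subgrid, and then combine the anchor-point values fixed by the advice with $f$-queries on the subgrid to solve for $P(\mathbf{x})$ via the junta-polynomial normal form (\Cref{clm:normal-form}). The key point making this a constant-query algorithm (rather than $\poly(\log n)$) is that $\Psi_j$ is allowed to be merely $\delta$-close to $P$ for $\delta < 1/(10\cdot 2^{d+1})$ rather than exactly correct everywhere; correctness only needs to hold on a large fraction of $\mathbf{x}$, so a single random small-grid restriction suffices and no amplification over $\log n$ directions is needed. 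I would verify the distance bound $\delta(\Psi_j, P) < 1/(10\cdot 2^{d+1})$ by a union bound: the embedded subgrid is a good sampler except with small probability, and conditioned on that, $\Psi_j(\mathbf{x}) = P(\mathbf{x})$ holds with probability close to $1$ over the internal randomness, which can be pushed below $1/(10\cdot 2^{d+1})$ by choosing the subgrid dimension $k = k(\varepsilon, d)$ large enough (still a constant).

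The main obstacle I anticipate is ensuring that the random grid-embedding argument goes through for \emph{junta-sums over an arbitrary grid $\sgrid$} rather than polynomials over the Boolean cube: one must check that restricting a degree-$d$ junta-sum to the image $C_{h,\Pi}$ of a random embedding yields a degree-$d$ junta-sum on $\sgrid^k$ (this is immediate from the definition of junta-sums, since identifying variables and composing with coordinate-wise permutations preserves junta-degree), and, more delicately, that the \emph{restriction of $f$} stays close to the restriction of $P$. The latter is exactly where \Cref{lemma:sampling-subgrid} is invoked with $T = \{\mathbf{x} : f(\mathbf{x}) \neq P(\mathbf{x})\}$, and one has to track that the error parameter $\varepsilon'$ in the sampling lemma can be taken small enough relative to the junta-sum distance $1/s^d$ so that the restricted $f$ remains within the unique-decoding radius $\delta_\J/2$ of the restricted $P$, allowing the local corrector of \Cref{thm:local-correction} to be applied on the subgrid. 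Finally, I would assemble the pieces: run $\mathcal{A}_1^f$ to sample anchor points and advice strings with $\bigO_\varepsilon(1)$ queries, output the $L' = \bigO_\varepsilon(1)$ oracles $\Psi_1, \dots, \Psi_{L'}$, and conclude by a union bound over the (constantly many) list elements and the sampling events.
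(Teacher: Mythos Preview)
Your proposal has a genuine gap in the core mechanism. You propose to fingerprint each $P\in\mathsf{List}_\varepsilon(f)$ by its values at $O(\log L(\varepsilon))$ random anchor points in $\sgrid^n$, and then to compute $P(\mathbf{x})$ by embedding a small random subgrid through $\mathbf{x}$ and ``combining the anchor-point values with $f$-queries on the subgrid.'' But a random $k$-dimensional subgrid $C_{h,\Pi}$ through $\mathbf{x}$ almost surely contains \emph{none} of your anchor points (they were sampled independently from $\sgrid^n$), so the advice cannot be used to disambiguate among the several junta-sums that are $(1/s^d-\varepsilon/2)$-close to $f|_{C_{h,\Pi}}$. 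Relatedly, your claim that the restricted $f$ ``remains within the unique-decoding radius $\delta_{\mathcal J}/2$ of the restricted $P$'' is false: sampling only guarantees $\delta(f|_C,P|_C)\approx\delta(f,P)\le 1/s^d-\varepsilon$, which is still firmly in the list-decoding regime, so \Cref{thm:local-correction} cannot be invoked on the subgrid.

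The paper's construction is structurally different and the difference is exactly where the main theorem of the paper enters. The advice is not a tuple of point-values but an entire junta-sum $Q=P|_{\mathsf C}$ on a random $k$-dimensional subgrid $\mathsf C$. To evaluate at an input $\mathbf b$, the oracle $\Psi[\mathsf C,\sigma,Q]$ builds a larger $sk$-dimensional subgrid $\mathsf C'\supset\mathsf C$ that also contains $\mathbf b$ (\Cref{defn:subgrid-span}), list-decodes $f|_{\mathsf C'}$ to obtain candidates $R_1,\dots,R_{L''}$, and outputs $R_i(\mathbf w)$ for the (hopefully unique) $R_i$ with $R_i|_{\mathsf C}=Q$. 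The delicate error event, which your outline never reaches, is $\mathcal E_{3,P}$: two distinct $R_i,R_j$ in the list on $\mathsf C'$ might collapse to the same function on $\mathsf C$ while differing at $\mathbf w$. Since $\mathbf w$ lies on the balanced multislice of $\sgrid^{sk}$ and $\mathsf C$ is obtained from $\mathsf C'$ via a random $s$-to-$1$ identification, ruling this out amounts to showing that a nonzero junta-sum on the balanced multislice of $\sgrid^{sk}$ rarely vanishes on the balanced multislice of $\sgrid^k$ after the random identification. That is \Cref{coro:restriction-do-not-kill}, which rests on the multislice sampler (\Cref{lemma:sampler-balanced-slice}) and ultimately on the singular-value bound of \Cref{cor:gen-bal-cor}. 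This is the heart of the argument, and your anticipated ``main obstacle'' (that restricting to $C_{h,\Pi}$ preserves junta-degree and closeness) is by comparison the routine part.
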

\end{thmbox}

\noindent
We first show that using \Cref{thm:approx-oracles-list-decoding} and \Cref{thm:local-correction}, we can prove \Cref{thm:local-list-correction} (and~\Cref{thm:const-exp}).

\begin{proof}[Proof of \Cref{thm:local-list-correction} and~\Cref{thm:const-exp}]
We first employ \Cref{algo:error-reduction} with oracle access to $f$ and it outputs deterministic algorithms $\psi_{1},\ldots,\psi_{L'}$ where $L' = \bigO(L(\varepsilon/2) \log L(\varepsilon))$. Next, we run the local corrector for $\mathcal{J}_{d}(\sgrid^{n},G)$ on each of $\psi_{j}$. This completes the description of our local list corrector.\newline
The correctness and query complexity now follow by combining \Cref{thm:approx-oracles-list-decoding} and \Cref{thm:local-correction}.
\end{proof}

\paragraph{Organization of the section}We start by proving a sampling lemma for the \emph{balanced slice} of the grid $\sgrid^{n}$ in  \Cref{subsec:sampling-balanced-slice}. The key tool to prove this sampling lemma will be to show that a certain random walk matrix (it arises from our sampling procedure) is a ``good spectral expander'' (see \Cref{thm:W-eigenvalue-bound}). We will prove by employing \Cref{cor:gen-bal-cor}. After the sampling lemma, we then prove a sub-optimal distance lemma for $d$-juntas on multi-slices of $\sgrid^{n}$ (see \Cref{thm:dist-multislice}). Combining the sampling lemma (\Cref{lemma:sampler-balanced-slice}) and the distance lemma on slices (\Cref{thm:dist-multislice}), we get \Cref{coro:restriction-do-not-kill}. This corollary will be useful in showing that our local list correctors have a small error probability, i.e., \Cref{coro:restriction-do-not-kill} will bound the probability of our local list correctors making a certain type of error. Once we have these statements, we describe a subroutine in \Cref{subsec:error-redn-local-list} and the local list correctors in \Cref{subsec:algo-local-list}. Finally, we analyze the algorithms in \Cref{subsec:analysis-algo-local-list}.

\subsection{A Sampling Lemma for the Balanced Multislice}\label{subsec:sampling-balanced-slice}
\begin{definition}\label{defn:tau}
Let $k,s \in \mathbb{N}$. For a $s$-to-$1$ map $\tau:[sk] \to [k]$, let $\mathsf{C}_{\tau} \subset \sgrid^{sk}$ denote the $k$-dimensional subgrid obtained by identifying coordinates acccording to $\tau$. More precisely, for every $\mathbf{y} \in \sgrid^{k}$, let $x_{\tau}(\mathbf{y}) \in \sgrid^{sk}$ be defined as follows:
\begin{align*}
    x_{\tau}(\mathbf{y})_{i} \; = \; y_{\tau(i)}, && \text{ for all } \; i \in [sk].
\end{align*}
Define $\mathsf{C}_{\tau} := \setcond{x_{\tau}(\mathbf{y})}{\mathbf{y} \in \sgrid^{k}}$.
\end{definition}

\noindent
The main lemma of this subsection is to show that if we sample a uniformly random $s$-to-$1$ map $\tau$, then $\mathsf{C}_{\tau}$ is a \emph{good sampler} for the balanced slice of $\sgrid^{sk}_{k,\ldots,k}$.\\

\begin{lemmabox}
\begin{lemma}[Sampler for the Balanced Slice]\label{lemma:sampler-balanced-slice}
Let $k,s \in \mathbb{N}$. There exists an absolute constant $\eta = \eta(s) > 0$ such that for every subset $S \subseteq \sgrid^{s^{2}k}_{sk,\ldots,sk}$, we have,
\begin{align*}
    \Pr_{\tau}\brac{ \left| \dfrac{|S|}{|\sgrid^{s^{2}k}_{sk,\ldots,sk}|} \; - \; \dfrac{|S \cap \mathsf{C}_{\tau}|}{|\sgrid^{sk}_{k,\ldots,k}|} \right| \; \geq \; \dfrac{1}{k^{\eta}} } \; \leq \; \bigO_{s}\paren{\dfrac{1}{k^{\eta}}},
\end{align*}
where the probability is over the choice of a random $s$-to-$1$ map $\tau:[s^{2}k] \to [sk]$.
\end{lemma}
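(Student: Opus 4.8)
The plan is to prove this as an instance of the ``expander mixing + random embedding'' template already used twice in the paper (for \Cref{thm:dist-junta-sums} and implicitly via \Cref{lemma:sampling-subgrid}). Concretely, I would first reinterpret the random $s$-to-$1$ map $\tau:[s^2k]\to[sk]$ as inducing a random walk on the balanced slice $\sgrid^{s^2k}_{sk,\ldots,sk}$: pick two independent $s$-to-$1$ maps $\tau_1,\tau_2$ and two independent uniform points $\by^{(1)},\by^{(2)}\in\sgrid^{sk}$, and consider the pair $(x_{\tau_1}(\by^{(1)}), x_{\tau_2}(\by^{(2)}))$. Since a uniform $\by\in\sgrid^{sk}$ together with a uniform $s$-to-$1$ map $\tau$ produces a uniform point of the balanced slice $\sgrid^{s^2k}_{sk,\ldots,sk}$, this defines a symmetric random walk matrix $W$ on $V:=\sgrid^{s^2k}_{sk,\ldots,sk}$ whose rows are obtained by: from a vertex $\veca$, recover $(\tau,\by)$ with $x_\tau(\by)=\veca$ (uniformly among all such), then resample a fresh map and point. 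The point of this reformulation is that $W$ respects symmetries and, conditioned on the ``frequency profile'' of how the second embedding lands relative to the first, is supported on a generalized Hamming distance that is nearly balanced with high probability (a Chernoff bound over the $\sgrid^{sk}$-coordinates), exactly as in \Cref{lem:eig-val-sz}.

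The next step is to bound $\lambda_2(W)$. Here I would mimic the proof of \Cref{lem:eig-val-sz} almost verbatim: decompose $W=\sum_{\bf f}\alpha_{\bf f} W_{\bf f}$ over frequency vectors ${\bf f}$ recording how often each symbol pattern appears; symmetrize each $W_{\bf f}$ to $W_{\bf f}'=(W_{\bf f}+W_{\bf f}^\top)/2$; observe that for ``good'' ${\bf f}$ (those within $\pm\sqrt{O(k\log k)}$ of balanced, which carry all but a $1/k^{\Omega(1)}$ fraction of the mass by Chernoff) the matrix $W_{\bf f}'$ respects symmetries and is supported on $O_s(1)$-balanced generalized Hamming distances, so \Cref{cor:gen-bal-cor} applies and gives $\lambda_2(W_{\bf f}')\le 1/k^{\Omega_s(1)}$; then \Cref{lem:convex} with $S=\{$good ${\bf f}\}$ yields $\lambda_2(W)\le 1/k^{\Omega_s(1)}$. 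The only care needed is to check that the generalized Hamming distance between $x_{\tau_1}(\by^{(1)})$ and $x_{\tau_2}(\by^{(2)})$ is a function only of the joint symbol-frequency data and that, conditioned on it, the walk is uniform — this follows because permuting coordinates within a $\tau_1$-block and within a $\tau_2$-block acts transitively, just as the bijections $M_j$ did in \Cref{alg:ab}.

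Finally, I would invoke the expander mixing lemma (\Cref{thm:eml}) in its standard two-sided form: for the symmetric walk $W$ on $V$ and any $S\subseteq V$,
\[
\left|\Pr_{\veca\sim V,\ \vecb\sim W(\veca)}[\veca\in S \wedge \vecb\in S] - \left(\tfrac{|S|}{|V|}\right)^2\right| \le \lambda_2(W)\cdot\tfrac{|S|}{|V|}.
\]
But $\Pr[\veca\in S\wedge\vecb\in S]=\E_{\tau_1}\!\big[(|S\cap \mathsf{C}_{\tau_1}|/|\sgrid^{sk}_{k,\ldots,k}|)\cdot \E_{\tau_2}[\,\cdot\,]\big]$; more directly, writing $p_\tau:=|S\cap \mathsf{C}_\tau|/|\sgrid^{sk}_{k,\ldots,k}|$ and $\mu:=|S|/|V|$, the two independent embeddings give $\E_{\tau_1,\tau_2}[p_{\tau_1}p_{\tau_2}]=\E_\tau[p_\tau]^2$ and the mixing bound rearranges to $\mathrm{Var}_\tau(p_\tau)=\E[p_\tau^2]-\E[p_\tau]^2\le \lambda_2(W)\cdot\mu \le 1/k^{\Omega_s(1)}$; combined with $\E_\tau[p_\tau]=\mu$ (each coordinate of $\sgrid^{s^2k}_{sk,\ldots,sk}$ is equally likely to be hit, so $\mathsf{C}_\tau$ is an unbiased sampler in expectation), Chebyshev's inequality gives $\Pr_\tau[|p_\tau-\mu|\ge 1/k^\eta]\le k^{2\eta}\cdot 1/k^{\Omega_s(1)}$, which is $O_s(1/k^\eta)$ for $\eta$ a small enough constant depending on $s$. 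The main obstacle I anticipate is the bookkeeping in Step 2 — precisely defining the frequency vectors ${\bf f}$ for this two-layer embedding and verifying the ``conditioned on ${\bf f}$, uniform and nearly balanced'' claim — but this is structurally identical to \Cref{lem:eig-val-sz} and should go through without new ideas; everything else is a routine second-moment computation.
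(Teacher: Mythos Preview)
Your overall template (define a walk on the big slice, bound $\lambda_2$ via \Cref{cor:gen-bal-cor}, then second moment plus Chebyshev) is exactly what the paper does, but your definition of $W$ has a gap that breaks the argument. You take \emph{two independent} maps $\tau_1,\tau_2$ together with two independent points, so the pair $(x_{\tau_1}(\by^{(1)}),x_{\tau_2}(\by^{(2)}))$ consists of two \emph{independent} uniform points of the big slice. Hence your $W$ is the trivial walk (each row uniform), $\lambda_2(W)=0$, and the mixing lemma applied to it says only $\bigl|\E_{\tau_1,\tau_2}[p_{\tau_1}p_{\tau_2}]-\mu^2\bigr|\le 0$. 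Since $\E_{\tau_1,\tau_2}[p_{\tau_1}p_{\tau_2}]=(\E_\tau[p_\tau])^2=\mu^2$ by independence, this is vacuous and gives no control over $\E_\tau[p_\tau^2]$. Your line ``the mixing bound rearranges to $\mathrm{Var}_\tau(p_\tau)\le\lambda_2(W)\cdot\mu$'' is precisely where the error lives: nothing in your setup ever produces the quantity $\E_\tau[p_\tau^2]$.

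The fix is to use a \emph{single} $\tau$ and two independent points $\veca,\vecb$ drawn from the small balanced slice $\sgrid^{sk}_{k,\ldots,k}$ (not all of $\sgrid^{sk}$, as you wrote), and set $(\mathbf{u},\mathbf{v})=(x_\tau(\veca),x_\tau(\vecb))$; this is exactly the paper's \Cref{defn:matrix-W}. Now $\Pr[\mathbf{u}\in S\wedge\mathbf{v}\in S]=\E_\tau[p_\tau^2]$, so mixing genuinely bounds the variance. With this correction, the rest of your plan --- decompose $W$ over generalized Hamming distances, note $\Delta(\mathbf{u},\mathbf{v})=s\cdot\Delta(\veca,\vecb)$, discard unbalanced profiles by a hypergeometric tail bound, and apply \Cref{cor:gen-bal-cor} to the good profiles via \Cref{lem:convex} --- is precisely the paper's proof of \Cref{thm:W-eigenvalue-bound}.
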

\end{lemmabox}

\paragraph{Description of the matrix $W$:}  For this section, we will assume that $n$ is divisible by $s$. We will use $\mu$ to denote the \emph{balanced partition} of $n$ into $s$ rows, i.e. $\mu = (n/s, n/s, \ldots, n/s)$. Let $N$ denote the number of points in the balanced slice $\sgrid^{n}_{\mu}$, i.e. $N = |\sgrid^{n}_{\mu}| = \binom{n}{n/s,n/s,\ldots,n/s}$.\\

\begin{definition}[The matrix $W$]\label{defn:matrix-W}
    We will define the random walk matrix by the joint distribution over $\mathcal{S}^n_\mu \times \mathcal{S}^n_\mu$ represented by the matrix $W/N$. In particular, it is the joint probability distribution of $({\bf u},{\bf v})$ corresponding to picking a uniformly random vertex ${\bf u}\sim \mathcal{S}^n_\mu$ and ${\bf v}\sim W({\bf a})$ is its random neighbor corresponding to taking a random step according to $W$. We then define $W/N$ according to the distribution of the output of the following steps:
    \begin{enumerate}
        \item Pick ${\bf a},{\bf b}\sim \mathcal{S}^{sk}_{k,\dots,k}$ uniformly and independently at random. 
        \item Pick a $s$-to-1 map $\tau:[s^2k]\to [sk]$ uniformly at random.
        \item Output $({\bf u},{\bf v})=(x_\tau({\bf a}),x_\tau({\bf b}) )$ (see~\Cref{defn:tau} for the definition of $x_\tau(\cdot)$).
    \end{enumerate}
\end{definition}
We note that $W$ is symmetric since the joint probability distribution of $({\bf u},{\bf v})$ above is symmetric w.r.t.~${\bf u}$ and ${\bf v}$. We further claim below that $W$ has good spectral expansion:

\begin{thmbox}
\begin{restatable}[Spectral expansion of the random walk matrix]{theorem}{expandermatrix}\label{thm:W-eigenvalue-bound}
Let $W \in \mathbb{R}^{N \times N}$ be the symmetric random walk matrix as described previously. Denote the second largest eigenvalue of $W$ (in terms of absolute value) by $\lambda_{2}(W)$. Then there exists $\nu = \nu(s) > 0$ such that
\begin{align*}
    \lambda_{2}(W) \; \leq \; \dfrac{1}{n^{\nu}}.
\end{align*}
\end{restatable}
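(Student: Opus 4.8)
\textbf{Proof plan for \Cref{thm:W-eigenvalue-bound}.}
The plan is to express $W$ as a convex combination of nearly balanced random walk matrices on the balanced multislice $\sgrid^{s^2k}_{\mu}$ (where $n = s^2k$ and $\mu = (sk,\ldots,sk)$), then apply \Cref{cor:gen-bal-cor} together with \Cref{lem:convex}. The first step is to re-describe the joint distribution of $({\bf u},{\bf v})$ from \Cref{defn:matrix-W} in terms of the generalized Hamming distance $\Delta({\bf u},{\bf v})$. Fix ${\bf u} = x_\tau({\bf a})$. Because $\tau$ is a uniformly random $s$-to-$1$ map and ${\bf a},{\bf b}$ are independent uniform points on $\sgrid^{sk}_{k,\ldots,k}$, I would argue that, conditioned on $\Delta({\bf u},{\bf v}) = P$ for any fixed $P$, the point ${\bf v}$ is uniform over $\{{\bf w} \in \sgrid^{s^2k}_\mu : \Delta({\bf u},{\bf w}) = P\}$ — this follows from the full symmetry of the construction under $\Sym_{s^2k}$ acting simultaneously on coordinates. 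Hence $W = \sum_{P} \alpha_P W_P$ where $W_P$ is the walk determined by the generalized Hamming distance parameter $P$ (as in \Cref{defn:w-delta}), and $\alpha_P$ is the probability that two independent uniform points of $\sgrid^{sk}_{k,\ldots,k}$, pushed forward through a random $\tau$, land at generalized Hamming distance $P$. Since $W$ is already symmetric, I would symmetrize and write $W = \sum_P \alpha_P \cdot \tfrac12(W_P + W_P^\top)$, where each $\tfrac12(W_P + W_P^\top)$ respects symmetries and is supported on the generalized Hamming distance parameters $P$ and $P^\top$.

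The second step is a concentration estimate: I would show that with probability $1 - \bigO_s(1/k^{\Omega_s(1)})$ over the random choice of $\tau$ (and the implicit randomness in ${\bf a},{\bf b}$), the resulting generalized Hamming distance parameter $P = \Delta(x_\tau({\bf a}), x_\tau({\bf b}))$ is $C$-balanced for some absolute constant $C = C(s)$. The relevant computation is that each entry $P_{\sigma,\tau}$ counts coordinates $i \in [s^2k]$ with $x_\tau({\bf a})_i = \sigma$ and $x_\tau({\bf b})_i = \tau$; grouping by the fibers of $\tau$ (each of size $s$) this becomes a sum over the $sk$ buckets of indicator-type contributions, and since ${\bf a},{\bf b}$ are uniform on the balanced slice, each entry has expectation $(sk)/s^2 = k$ — wait, more precisely $m/s$ where $m = n/s = sk$ — and fluctuates on the scale $\sqrt{m \log m}$ by a Chernoff/Azuma bound (care is needed because of the slight dependence coming from the balanced-slice constraint and from $\tau$ being a perfect $s$-to-$1$ map rather than a fully random function, but this only costs lower-order terms, much as in the proof of \Cref{lem:eig-val-sz} where \Cref{eqn:cher} plays the analogous role). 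Call the "bad" event $B$; then $\sum_{P \text{ bad}} \alpha_P \le 1/k^{\Omega_s(1)}$.

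The third step is to put it together exactly as in the proof of \Cref{lem:eig-val-sz}. For every good $P$, the symmetric matrix $\tfrac12(W_P + W_P^\top)$ respects symmetries and is supported on $C$-balanced generalized Hamming distance parameters, so \Cref{cor:gen-bal-cor} gives $\lambda_2\big(\tfrac12(W_P + W_P^\top)\big) \le 1/n^{\tau_0}$ for the constant $\tau_0 = \tau_0(s,C)$ from that corollary; note $n = s^2k$ so this is $\le 1/k^{\Omega_s(1)}$. Applying \Cref{lem:convex} with $S$ equal to the set of good $P$'s yields $\lambda_2(W) \le \max_{P \text{ good}} \lambda_2\big(\tfrac12(W_P+W_P^\top)\big) + \sum_{P \text{ bad}} \alpha_P \le 1/k^{\Omega_s(1)}$, which is $1/n^\nu$ for a suitable $\nu = \nu(s) > 0$, as claimed.

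The main obstacle I anticipate is the concentration argument in the second step: establishing that $\Delta(x_\tau({\bf a}), x_\tau({\bf b}))$ is $C$-balanced with the stated failure probability requires handling the joint randomness of the random $s$-to-$1$ map $\tau$ and the two balanced-slice points ${\bf a}, {\bf b}$ carefully — in particular one cannot treat the coordinates as fully independent, so I would condition on $\tau$ first, observe that conditioned on $\tau$ the entries of the generalized Hamming distance are determined by how the balanced multisets ${\bf a}, {\bf b}$ distribute among the $\tau$-fibers, and invoke a concentration inequality for sampling without replacement (or a martingale/negative-association argument) to get the $\sqrt{m\log m}$-scale deviation bound. Everything downstream is a routine reduction to \Cref{cor:gen-bal-cor} and \Cref{lem:convex}. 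I would also double-check one bookkeeping point: the lemma is stated for a generic "$n$ divisible by $s$" but the sampler in \Cref{lemma:sampler-balanced-slice} instantiates it with $n = s^2k$, so the final exponent $\nu$ should be expressed in terms of $k$ and then translated back to $n$; this is harmless since $n$ and $k$ differ by a constant factor $s^2$.
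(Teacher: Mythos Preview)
Your proposal is correct and follows essentially the same route as the paper: decompose $W$ as a convex combination of profile walks $W_P$, show most profiles are $C$-balanced, apply \Cref{cor:gen-bal-cor} to the good summands, and finish with \Cref{lem:convex}. One simplification you missed that dissolves your anticipated obstacle: because $\tau$ is exactly $s$-to-$1$, each fiber contributes $s$ identical copies of the pair $(a_j,b_j)$, so $\Delta(x_\tau({\bf a}),x_\tau({\bf b})) = s\cdot\Delta({\bf a},{\bf b})$ \emph{identically} --- hence the profile is determined by ${\bf a},{\bf b}$ alone, the map $\tau$ plays no role in the concentration step, and the bad-profile mass reduces to a straightforward hypergeometric (Hoeffding) tail bound on the entries of $\Delta({\bf a},{\bf b})$ over the small multislice $\sgrid^{sk}_{k,\ldots,k}$.
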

\end{thmbox}

We first prove \Cref{lemma:sampler-balanced-slice} assuming \Cref{thm:W-eigenvalue-bound}.

\begin{proof}[Proof of \Cref{lemma:sampler-balanced-slice}]
Let $\sigma := |S|/|\sgrid^{s^{2}k}_{sk,\ldots,sk}|$. For every $\mathbf{y} \in \sgrid^{k}$, define $Z(\mathbf{y})$ to be the indicator variable which is $1$ if $x_{\tau}(\mathbf{y}) \in S$. For a uniformly random $s$-to-$1$ map $\tau$, for every $\mathbf{y} \in \sgrid^{sk}_{k,\ldots,k}$, the random variable $x_{\tau}(\mathbf{y})$ is uniformly distributed in $\sgrid^{s^{2}k}_{sk,\ldots,sk}$. Thus for every $\mathbf{y} \in \sgrid^{sk}_{k,\ldots,k}$, the $\mathbb{E}_{\tau}[Z(\mathbf{y})] = \sigma$. Let $Z := |S \cap \mathsf{C}_{\tau}| = \sum_{\mathbf{y} \in \sgrid^{sk}_{k,\ldots,k}} Z_{\mathbf{y}}$ and by linearity of expectation, we have $\mathbb{E}_{\tau}[Z] = |\sgrid^{sk}_{k,\ldots,k}| \cdot \sigma$. We will now bound the variance of $Z$.\\

Using the linearity of expectation, we have,
\begin{align*}
    \mathbb{E}_{\tau}[Z^{2}] \; = \; \sum_{\mathbf{a},\mathbf{b} \in \sgrid^{sk}_{k,\ldots,k}} \mathbb{E}_{\tau}[Z(\mathbf{a}) \cdot Z(\mathbf{b})] \; = \; \sum_{\mathbf{a},\mathbf{b} \in \sgrid^{sk}_{k,\ldots,k}} \Pr_{\tau}[x_{\tau}(\mathbf{a}) \in S \, \wedge \, x_{\tau}(\mathbf{b}) \in S ]
\end{align*}
If we sample $\mathbf{a}, \mathbf{b}$ uniformly and independently at random from $\sgrid^{sk}_{k,\ldots,k}$, then by the definition of the matrix $W$ (\Cref{defn:matrix-W}), we get the following equality:
\begin{equation}\label{eqn:joint-dist}
    \Pr_{\substack{\mathbf{a},\mathbf{b} \sim \sgrid^{sk}_{k,\ldots,k}  \\ \tau}}[x_{\tau}(\mathbf{a}) \in S \, \wedge \, x_{\tau}(\mathbf{b}) \in S ] \; = \; \Pr_{\substack{\mathbf{u} \sim \sgrid^{s^{2}k}_{sk,\ldots,sk}   \\ \mathbf{v} \sim W(\mathbf{u})}}[\mathbf{u} \in S \; \wedge \; \mathbf{v} \in S].
\end{equation}

Using the Expander Mixing Lemma (see e.g.~\Cref{thm:eml}),
\begin{align*}
\Pr_{\substack{\mathbf{u} \sim \sgrid^{s^{2}_{sk,\ldots,sk}} \\ \mathbf{v} \sim W(\mathbf{u})}}[\mathbf{u} \in S \, \wedge \, \mathbf{v} \in S] \; \leq \; \sigma^{2} + \lambda_{2}(W) \\
\Rightarrow \mathbb{E}_{\tau}[Z^{2}] \; \leq \; |\sgrid^{sk}_{k,\ldots,k}|^{2} \cdot (\sigma^{2} + \lambda_{2}(W)) \\
\Rightarrow \mathrm{Var}[Z] \; = \; \mathbb{E}[Z^{2}] - (\mathbb{E}[Z])^{2} \; \leq \; |\sgrid^{sk}_{k,\ldots,k}|^{2} \cdot \lambda_{2}(W).
\end{align*}
Now using Chebyshev's inequality on $Z$, we get,
\begin{align*}
    \Pr_{\tau}\brac{\left| Z - \sigma \cdot |\sgrid^{sk}_{k,\ldots,k}|   \right|  \geq \dfrac{1}{k^{\eta}} \cdot |\sgrid^{sk}_{k,\ldots,k}|  } \; \leq \; \mathrm{Var}(Z) \cdot \dfrac{k^{2\eta}}{|\sgrid^{sk}_{k,\ldots,k}|^{2}} \; \leq \; \lambda_{2}(W) \cdot k^{2 \eta}
\end{align*}
From \Cref{thm:W-eigenvalue-bound}, we know that $\lambda_{2}(W) \leq 1/k^{ \eta}$, which implies that
\begin{align*}
    \Pr_{\tau}\brac{\left| Z - \sigma \cdot |\sgrid^{sk}_{k,\ldots,k}|   \right|  \geq \dfrac{1}{k^{\eta}} \cdot |\sgrid^{sk}_{k,\ldots,k}|  } \; \leq \; \dfrac{1}{k^{ \eta}}.
\end{align*}
This finishes the proof of \Cref{lemma:sampler-balanced-slice}.
\end{proof}

We now prove~\Cref{thm:W-eigenvalue-bound} which bounds the eigenvalues of the random walk matrix $W$.

\begin{proof}[Proof of~\Cref{thm:W-eigenvalue-bound}]
    We recall (from~\Cref{defn:matrix-W}) that the random walk $W$ is over the balanced multislice $V:=\mathcal{S}^{s^2k}_\mu=\mathcal{S}^{s^2k}_{sk,\dots,sk}$ and $N=|\mathcal{S}^{n}_\mu|$ denotes the number of vertices where $n=s^2k$. We will use the following equivalent description of $W$. We observe that $W/N$ is the joint probability distribution of $({\bf u},{\bf v})$ corresponding to picking a uniformly random vertex ${\bf u}\sim V$ and ${\bf v}\sim W({\bf a})$ is its random neighbor corresponding to a taking a random step according to $W$. We now rephrase the description of $W/N$ from~\Cref{defn:matrix-W}:
    \begin{enumerate}
        \item Pick ${\bf a},{\bf b}\sim \mathcal{S}^{sk}_{k,\dots,k}$ uniformly and independently at random. 
        \item Let $P=\Delta({\bf a},{\bf b})$ and $\widetilde{P} = sP$.
        \item Output $({\bf u},{\bf v})$ such that $\Delta({\bf u},{\bf v}) = \widetilde{P}$ uniformly at random. 
    \end{enumerate}

    The above output is indeed distributed according to $W/N$ by noting that the map $\tau$ used in~\Cref{defn:matrix-W} is chosen uniformly and independently from ${\bf a},{\bf b}\sim \mathcal{S}^{sk}_{k,\dots, k}$ and for every such $\tau$ used in~\Cref{defn:matrix-W}, we have that $\Delta({\bf u},{\bf v}) = \Delta(x_\tau({\bf a}),{x_\tau({\bf b})}) = s\cdot \Delta({\bf a},{\bf b})$. 
    Now, applying a total probability rule over the choice of ${\bf a}$ and ${\bf b}$ in Step 1, we have:
    \begin{align}\label{eqn:conv}\frac{W}{N} = \sum_{P} \alpha_P \frac{W_{{sP}}}{N},\end{align} where we use $\alpha_P$ to denote the probability that $\Delta({\bf a},{\bf b}) = P$ for ${\bf a},{\bf b}$ chosen uniformly and independently at random; and $W_{sP}$ denotes the random walk over the multislice, determined by the generalized Hamming distance matrix $sP$ (see~\Cref{defn:w-delta}). We now say that a generalized Hamming distance matrix $P$ is {\em good} if it is $(10/s)$-balanced (by~\Cref{defn:bal-profiles}, this is equivalent to saying all entries of $P$ are $\frac{k}{s}\pm \sqrt{\frac{10k\log k}{s}}$) and $P$ is {\em bad} otherwise. It is easy to see that $sP$ is $10$-balanced w.r.t.~the multislice $\mathcal{S}^{s^2k}_\mu$ if $P$ is good. We first show that the mass of $\alpha_P$ on bad $P$ is small: that is, we show that $\sum_{P\text{~bad}} \alpha_P$, which denotes the probability that $\Delta({\bf a},{\bf b})$ is {\em not} $(10/s)$-balanced, is at most $ 1/k^{\Omega_s(1)}$. By fixing ${\bf a}$ and noting that ${\bf b}$ is still uniformly distributed over $\mathcal{S}^{sk}_\mu$, we see that each entry of $P$ is distributed according to a hypergeometric distribution with a total of $sk$ states and $k$ success states, and we are picking $k$ draws without replacement. By applying Hoeffding bound~\cite{hoeffding1994probability}, we get that this probability is at most $1/2^{-\Omega_s(\sqrt{\log k /k})^2k} \le 1/k^{-\Omega_s(1)}$. Now by a union bound over all the $k^2$ entries of the matrix $P$, we get that the probability that $P$ is not $(10/s)$-balanced is at most $1/n^{\Omega_s(1)}$ as claimed. That is, \begin{align}\label{eqn:bad-hoeff-prob}
        \sum_{P\text{~bad}} \alpha_P \le 1/k^{\Omega_s(1)}.
    \end{align}
    We now use our main eigenvalue bound from~\Cref{sec:eigenvalue} to bound $\lambda_2(W'_{sP})$ when $P$ is good, where $W_{sP}':=\frac{W_{sP}+W_{sP}^\top}{2}$. In particular, since we have that $sP$ and $(sP)^\top$ are $(10/s)$-balanced for good $P$, by applying~\Cref{cor:gen-bal-cor}, we have for all good $P$ that 
    \begin{align}\label{eqn:prof-eig-bound}
    \lambda_2(W'_{sP}) \le 1/k^{\Omega_s(1)}.
    \end{align}
    We note that since we showed that $W$ is symmetric,~\eqref{eqn:conv} implies that $$W=\sum_{P} \alpha_P \paren{\frac{W_{sP}+W_{sP}^\top}{2}} = \sum_{P} \alpha_P W'_{sP}.$$ We can therefore apply~\Cref{lem:convex} with the set $S$ being the set of good $P$ to conclude that $$\lambda_2(W) \le \max_{P\text{~good}}\{\lambda_2(W'_{sP})\} + \sum_{P\text{~bad}}\alpha_P \le 1/k^{\Omega_s(1)}\le 1/n^{\Omega_s(1)},$$ by using~\eqref{eqn:bad-hoeff-prob} and~\eqref{eqn:prof-eig-bound}.
    This finishes the proof of~\Cref{thm:W-eigenvalue-bound}.
\end{proof}

\subsection{Sub-optimal Distance Lemma Over Multislices}
In this subsection, we prove that if a junta sum does {\em not} vanish on a multi-slice, then it does not vanish on at least a \emph{constant fraction} of that multi-slice. It is a generalization of the distance lemma for junta-sums  (\Cref{clm:dist-junta-sums}), generalized from grids to slices. In the case of $\sgrid = \Boo$, such a statement was proved in \cite[Lemma 5.1.6]{ABPSS25}. They proved it by induction on the degree $d$. We observe that a similar induction also works for junta sums. We provide a proof below. For $\sgrid = \Boo$, our lower bound matches \cite[Lemma 5.1.6]{ABPSS25}.

\paragraph*{}For this, we will need the following notation. For integers $d\ge 0$ and $s\ge 2$ and $(n_i)_{i\in \sgrid}$, let $n=\sum_{i\in \sgrid} n_i$, ${\bf n} = (n_i)_{i\in \sgrid} \in \sgrid^{n}$. Let the multi-slice $\sgrid^{n}_{\mathbf{n}} \subseteq \sgrid^{n}$ denote the set of points which contain $n_{i+1}$ many occurrences of the element $i$ for all $i\in \sgrid$.  Let ${n \choose {\bf n}} = {n\choose {n_0,n_1,\dots,n_{s-1}}}$ denote the size of $\sgrid^{n}_{\mathbf{n}}$ (so it is zero if some $n_i$ is negative). We also use the notation ${\bf n}-d$ to denote the tuple $((n_i-d))_{i\in \sgrid}$.\\

\begin{lemmabox}
\begin{theorem}[Sub-optimal distance lemma for junta sums on multi-slices]\label{thm:dist-multislice}
    For every ${\bf n} = (n_i)_{i\in \sgrid}$ with $\sum_{i\in \sgrid} n_i = n$, the following holds. If a junta-sum $P \in \mathcal{J}_{d}(\sgrid^{n},G)$ is non-zero on the multi-slice $\sgrid^{n}_{\mathbf{n}}$ i.e. there exists a point $\mathbf{a} \in \sgrid^{n}_{\mathbf{n}}$ such that $P({\bf a}) \ne 0$, then
    \begin{align*}
        |\setcond{\mathbf{a} \in \sgrid^{n}_{\mathbf{n}}}{P(\mathbf{a}) \neq 0}| \; \geq \; \binom{n-sd}{\mathbf{n}-d}.
    \end{align*}
\end{theorem}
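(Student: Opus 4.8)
The plan is to prove \Cref{thm:dist-multislice} by induction on the junta-degree $d$. The base case $d=0$ is immediate: a $0$-junta-sum is a constant function, so if it is non-zero at one point of the multi-slice it is non-zero everywhere on it, and $\binom{n-0}{\mathbf{n}-0} = \binom{n}{\mathbf{n}} = |\sgrid^n_{\mathbf{n}}|$ as required. For the inductive step, suppose the statement holds for junta-degree $d-1$ and let $P \in \mathcal{J}_d(\sgrid^n, G)$ be non-zero at some $\mathbf{a} \in \sgrid^n_{\mathbf{n}}$. Using the junta-polynomial normal form (\Cref{clm:normal-form}), write $P$ in its unique representation. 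Since $P$ depends (via its monomials) on at most $n$ variables but is not identically zero on the slice, there is some variable $x_j$ and some value $b \in \sgrid$ such that the restriction of $P$ obtained by fixing $x_j = b$ is still non-zero on the appropriate smaller multi-slice; indeed, since $\mathbf{a}$ witnesses non-vanishing, we may simply take $b = a_j$ for a coordinate $j$ on which $P$ genuinely depends. I would then set up the induction so that fixing one coordinate reduces $n$ by $1$, reduces one entry of $\mathbf{n}$ by $1$, and reduces the junta-degree by (at most) $1$.

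The more careful version of the inductive step is the one that actually yields the claimed bound $\binom{n-sd}{\mathbf{n}-d}$: I would pick a variable $x_j$ that appears in $P$, and consider all $s$ restrictions $P|_{x_j = \sigma}$ for $\sigma \in \sgrid$. For each $\sigma$, the function $P|_{x_j = \sigma}$ lies in $\mathcal{J}_{d}(\sgrid^{n-1}, G)$ — but crucially, writing $P = \sum_\tau \delta_\tau(x_j) P_\tau + P_\emptyset$ in the normal form where we isolate the dependence on $x_j$, the difference $P|_{x_j=\sigma} - P|_{x_j=\sigma'}$ has junta-degree at most $d-1$ (each monomial containing $\delta_{\cdot}(x_j)$ loses that factor). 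I would argue that there exist two values $\sigma \neq \sigma'$ such that $Q := P|_{x_j=\sigma} - P|_{x_j=\sigma'}$ is a non-zero $(d-1)$-junta-sum on the slice $\sgrid^{n-1}_{\mathbf{n}'}$, where $\mathbf{n}'$ is obtained from $\mathbf{n}$ by decrementing one coordinate; this follows because $P$ genuinely depends on $x_j$, so not all restrictions $P|_{x_j=\sigma}$ agree. Applying the induction hypothesis to $Q$ gives $\binom{n-1-s(d-1)}{\mathbf{n}' - (d-1)}$ points of $\sgrid^{n-1}_{\mathbf{n}'}$ where $Q$ is non-zero, hence where at least one of $P|_{x_j=\sigma}, P|_{x_j=\sigma'}$ is non-zero; lifting back to $\sgrid^n_{\mathbf{n}}$ by choosing the coordinate $j$ to take the value $\sigma$ or $\sigma'$ appropriately, and summing the resulting counts over the two sheets, one recovers a count of at least $\binom{n-1-s(d-1)}{\mathbf{n}'-(d-1)} \cdot (\text{factor})$. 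I would then verify the binomial identity that this quantity is at least $\binom{n-sd}{\mathbf{n}-d}$, which amounts to a routine Vandermonde-type manipulation once the index bookkeeping is set up correctly.

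The step I expect to be the main obstacle is getting the bookkeeping of the slice parameters to line up so that the induction closes with exactly the target bound $\binom{n-sd}{\mathbf{n}-d}$ rather than something weaker. In particular, one must be careful that decrementing the junta-degree costs exactly one unit in \emph{each} of the $s$ coordinate-budgets in $\mathbf{n}$ (hence the $\mathbf{n}-d$ with $d$ subtracted from every coordinate, and the $n - sd$), which means the induction must peel off \emph{one variable at a time} but account for the degree drop correctly — the natural approach is to peel a variable, reduce to degree $d-1$ via the difference trick, and then the induction hypothesis already supplies the $\mathbf{n}' - (d-1)$ term; one then checks $\binom{n-1-s(d-1)}{\mathbf{n}'-(d-1)} \geq \binom{n-sd}{\mathbf{n}-d}$ coordinatewise, which holds since $n-1-s(d-1) = n - sd + (s-1) \geq n - sd$ and each entry of $\mathbf{n}' - (d-1)$ is at least the corresponding entry of $\mathbf{n} - d$. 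Handling the edge cases where some entry of $\mathbf{n} - d$ is negative (so the binomial coefficient is $0$ and the statement is vacuous) and ensuring the chosen variable $x_j$ and the chosen pair $\sigma, \sigma'$ always exist when $P \not\equiv 0$ on the slice will require a short but careful argument using the uniqueness of the junta-polynomial representation.
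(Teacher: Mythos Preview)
Your inductive skeleton is right --- take a difference of two restrictions to drop the junta-degree by one, then apply induction --- but the specific implementation has a gap at the lifting step.

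When you fix a \emph{single} coordinate $x_j$ to $\sigma$ versus $\sigma'$, the two restricted functions live on \emph{different} sub-multislices: the point $(\sigma,\mathbf{c})$ lies in $\sgrid^n_{\mathbf{n}}$ only when $\mathbf{c}\in\sgrid^{n-1}_{\mathbf{n}'_\sigma}$ (decrement the $\sigma$-count), while $(\sigma',\mathbf{c})$ lies in $\sgrid^n_{\mathbf{n}}$ only when $\mathbf{c}\in\sgrid^{n-1}_{\mathbf{n}'_{\sigma'}}$. So if you apply the induction hypothesis to $Q=P|_{x_j=\sigma}-P|_{x_j=\sigma'}$ on, say, $\sgrid^{n-1}_{\mathbf{n}'_\sigma}$, then for each $\mathbf{c}$ with $Q(\mathbf{c})\neq 0$ you know one of $P(\sigma,\mathbf{c}),P(\sigma',\mathbf{c})$ is non-zero --- but the second of these is \emph{not} a point of $\sgrid^n_{\mathbf{n}}$. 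Your ``choose $\sigma$ or $\sigma'$ appropriately'' step therefore does not produce a non-zero of $P$ on the target slice. (A secondary issue: you also need $Q$ to be non-zero on the chosen sub-slice, and ``$P$ depends on $x_j$'' only gives non-vanishing of $Q$ somewhere in $\sgrid^{n-1}$, not on that particular slice.)

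The paper's proof repairs exactly this by a swap trick: instead of fixing one coordinate, it finds two slice points $\mathbf{a},\mathbf{b}\in\sgrid^n_{\mathbf{n}}$ with $P(\mathbf{a})\neq P(\mathbf{b})$ that differ by a single transposition, say $a_1=b_n=\alpha$, $a_n=b_1=\beta$. Setting
\[
Q(x_2,\ldots,x_{n-1})=P(\alpha,x_2,\ldots,x_{n-1},\beta)-P(\beta,x_2,\ldots,x_{n-1},\alpha)
\]
fixes \emph{two} coordinates with swapped values, so both lifts $(\alpha,\mathbf{d},\beta)$ and $(\beta,\mathbf{d},\alpha)$ lie in $\sgrid^n_{\mathbf{n}}$ whenever $\mathbf{d}$ lies in the single sub-slice $\sgrid^{n-2}_{\mathbf{n}'}$ (decrement both $\alpha$ and $\beta$). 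The witness $\mathbf{a},\mathbf{b}$ also certifies $Q\not\equiv 0$ on that sub-slice. Your binomial bookkeeping then goes through essentially as you wrote it, with $n-2$ in place of $n-1$ and $\mathbf{n}'$ decremented in two entries rather than one.
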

\end{lemmabox}
\begin{proof}[Proof of~\Cref{thm:dist-multislice}]
    The proof of the theorem is by induction on $d$. The base case $d=0$ is handled by noting that $P$ is a constant function in this case. Now suppose $d\ge 1$. We shall assume that $n \ge sd+1$ and $n_i \ge d$ for all $i\in \Z_s$, as the theorem statement is trivial otherwise. 

    We will assume that $P$ is not a constant function over ${[n]\choose {\bf n}}$ as otherwise we are done. In particular, we can always find two points ${\bf a},{\bf b} \in {[n]\choose {\bf n}}$ such that $P({\bf a})\ne P({\bf b})$ and they differ in exactly two coordinates; this follows by noting that we can move from any point on the multislice to any other point by swapping elements a finite number of times. Without loss of generality, we can assume that ${\bf a}$ and ${\bf b}$ differ on the first and last coordinates; i.e., $a_1 = b_n = \alpha$ and $a_n = b_1 = \beta$ for some $\alpha \ne \beta \in \Z_s$. Let ${\bf n}'=(n'_i)_{i\in \Z_s}$ be defined by $n'_i = n_i$ for $i\notin \{\alpha,\beta\}$ and $n'_i=n_i-1$ for $i\in \{\alpha, \beta\}$. We now consider the  function $Q:\sgrid^{n-2} \to G$ defined as:
    \begin{align*}
        Q(x_2,\dots,x_{n-1}) = P(\alpha,x_2,\dots,x_{n-1},\beta) - P(\beta,x_2,\dots,x_{n-1},\alpha).
    \end{align*}
    As $P({\bf a}) - P({\bf b}) \ne 0$, we see that $Q$ is not identically zero over ${[n-2] \choose {\bf n}'}$. We also claim that $Q$ is a $(d-1)$-junta-sum. Indeed, if $$P(x_1,\dots,x_n) = \sum_{{\bf c}\in \Z^n_s:|{\bf c}|\le d} g_{{\bf c}} \cdot \prod_{i\in [n]:c_i \ne 0}\delta_{c_i}(x_i),$$ then in the junta-polynomial of $Q$, all the monomials that do not contain either $x_1$ or $x_n$ will be canceled, while the monomials of degree $d$ that contain either $x_1$ or $x_n$ (or both) will reduce in degree. Hence, $Q \in \mathcal{J}_{d-1}(\sgrid^{n-2}, d-1)$. Now, by induction hypothesis, we have that there are at least ${n-2-s(d-1) \choose {\bf n}'-(d-1)}$ choices for ${\bf d} \in {[n-2]\choose {\bf n}'}$ such that $Q({\bf d}) \ne 0$. For each such ${\bf d}$, we have that $Q({\bf d}) = P(\alpha, {\bf d}, \beta) - P(\beta, {\bf d}, \alpha) \ne 0$ so either ${\bf a}' = (\alpha, {\bf d}, \beta)$ or ${\bf b}' = (\beta, {\bf d}, \alpha)$ is a non-zero of $P$. Furthermore, we can verify that ${\bf a}',{\bf b}' \in {[n]\choose {\bf n}'}$. Let ${\bf e}$ denote the tuple which is $1$ at all indices $i\notin \{\alpha,\beta\}$ and is 0 for $i\in \{\alpha, \beta\}$. Hence, the number of non-zeroes of $P$ over ${[n]\choose {\bf n}}$ is at least the number of such ${\bf d}$ which is at least $${n-2-s(d-1) \choose {\bf n}'-(d-1)} = {n-sd+(s-2) \choose {\bf n}'-(d-1)} = {n-sd+(s-2) \choose ({\bf n}-d)+{\bf e}} \ge {n-sd \choose {\bf n}-d}\cdot {s-2 \choose {\bf e}} \ge {n-sd \choose {\bf n}-d}.$$ 
\end{proof}

\paragraph{}Using \Cref{thm:dist-multislice}, we immediately get the following corollary, which gives a lower bound on the fraction of non-zeroes on the balanced multislice.\\

\begin{corollary}\label{coro:weak-distance-balanced-slice}
Let $n,s,d \in \mathbb{N}$ with $n\ge sd$ divisible by $s$. Let $\mu = (n/s,\ldots,n/s)$. If a junta-sum $P\in \mathcal{J}(\sgrid^{n}, d, G)$ is non-zero over $\sgrid^{n}_{\mu}$, i.e. there exists $\mathbf{a} \in \sgrid^{n}_{\mu}$ such that $P(\mathbf{a}) \neq 0$, then:
    \begin{align*}
        \Pr_{\mathbf{x} \sim \sgrid^{n}_{\mu}}[P({\mathbf{x}}) \neq 0] \; \geq \; \dfrac{1}{(sd)^{sd}}.
    \end{align*}
\end{corollary}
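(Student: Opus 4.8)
The plan is to derive this directly from Theorem~\ref{thm:dist-multislice} by specializing to the balanced profile and estimating the resulting ratio of multinomial coefficients. Concretely, apply Theorem~\ref{thm:dist-multislice} with $\mathbf{n} = \mu = (n/s,\ldots,n/s)$: since $P$ is non-zero at some point of $\sgrid^n_\mu$ and $n/s \geq d$ (so the target multinomial coefficient is well-defined), the theorem gives
$$\left|\{\mathbf{a}\in\sgrid^n_\mu : P(\mathbf{a})\neq 0\}\right| \;\geq\; \binom{n-sd}{\tfrac{n}{s}-d,\ldots,\tfrac{n}{s}-d}.$$
Dividing by $|\sgrid^n_\mu| = \binom{n}{n/s,\ldots,n/s}$, it suffices to show that
$$\frac{(n-sd)!\,\big((n/s)!\big)^s}{n!\,\big((n/s-d)!\big)^s} \;\geq\; \frac{1}{(sd)^{sd}}.$$

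For the estimate I would rewrite the left-hand side as a quotient of two products, each consisting of exactly $sd$ linear factors, namely
$$\frac{\prod_{i=1}^{s}\prod_{j=0}^{d-1}\big(\tfrac{n}{s}-j\big)}{\prod_{j=0}^{sd-1}(n-j)}.$$
Every factor in the numerator is at least $\tfrac{n}{s}-d+1$ and every factor in the denominator is at most $n$, so the quotient is at least $\big(\tfrac{n}{s}-d+1\big)^{sd}\big/\,n^{sd}$. Writing $m = n/s \geq d$, it remains to check $\tfrac{m-d+1}{sm}\geq \tfrac{1}{sd}$: cross-multiplying, this is equivalent to $m(d-1)\geq d(d-1)$, which holds since $m\geq d$ (and for $d=1$ both sides equal $1/s$, giving equality). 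This yields the claimed bound $1/(sd)^{sd}$.

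The content here is entirely carried by Theorem~\ref{thm:dist-multislice}; the corollary is just its specialization to $\mathbf{n}=\mu$ together with elementary (Stirling-free) bounds on the multinomial ratio. The only points that need a little care are the tight edge case $n = sd$ (where $\tfrac{n}{s}-d+1 = 1$ and the inequality is saturated) and the degenerate case $d=1$, so I do not anticipate any substantive obstacle beyond this bookkeeping.
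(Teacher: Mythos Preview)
Your proposal is correct and follows essentially the same approach as the paper: specialize Theorem~\ref{thm:dist-multislice} to $\mathbf{n}=\mu$ and then bound the resulting multinomial ratio by elementary inequalities. Your estimate is in fact slightly cleaner than the paper's (which groups the denominator into $s$ blocks and uses $\frac{a}{b}\ge\frac{a-i}{b-i}$ before arriving at the same final inequality $\frac{m-d+1}{sm}\ge\frac{1}{sd}$), but the two arguments are the same in spirit.
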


\begin{proof}[Proof of \Cref{coro:weak-distance-balanced-slice}]
    Let $n=ms$ for some $m\in \Z$. By \Cref{thm:dist-multislice}, we have that the probability of a random point in $\sgrid^{n}_{\mu}$ being non-zero for $P$ is at least:
    \begin{align*}
        \frac{{n-sd\choose {\bf n}-d}}{{n \choose {\bf n}}} & = \frac{(n-sd)!}{(m-d)!^s} \cdot \frac{m!^s}{n!}\\
        & = \frac{m(m-1)\dots (m-d+1)}{n(n-1)\dots(n-d+1)} \dots \frac{m(m-1)\dots (m-d+1)}{(n-(s-1)d)(n-(s-1)d-1)\dots(n-sd+1)}\\
        & = \paren{\frac{m}{n} \cdot \frac{m-1}{n-1} \cdot \dots \cdot  \frac{m-d+1}{n-d+1}}  \dots \paren{\frac{m}{n-(s-1)d} \cdot \frac{m-1}{n-(s-1)d-1} \cdot  \dots \cdot \frac{m-d+1}{n-sd+1}}\\
        & \ge \paren{\frac{m-d+1}{n-d+1}}^d  \dots \paren{\frac{m-d+1}{n-sd+1}}^d \tag{using $\frac{a}{b} \ge \frac{a-i}{b-i}$ for $0<i<a<b$} \\
        & \ge \paren{\frac{n-sd+s}{s(n-d+1)}}^{sd}\\
        & \ge \frac{1}{(sd)^{sd}}.\tag{using $\frac{n-sd+s}{n-d+1} \ge \frac{1}{d}$ since $n\ge sd-1$}
    \end{align*}
\end{proof}

\paragraph{}Using \Cref{lemma:sampler-balanced-slice} and \Cref{coro:weak-distance-balanced-slice}, we get the following corollary.\\

\begin{lemmabox}
\begin{corollary}\label{coro:restriction-do-not-kill}
There exists an absolute constant $\eta > 0$ for which the following holds. Let $R \in \mathcal{J}_{d}(\sgrid^{s^{2}k}, G)$ be a non-zero function and there exists a $\mathbf{w} \in \sgrid^{s^{2}k}_{sk,\ldots,sk}$ such that $R(\mathbf{w}) \neq 0$. Let $\tau:[s^{2}k] \to [sk]$ be a random $s$-to-$1$ map and $\mathsf{C}_{\tau}$ be the subgrid as defined before. Then,
\begin{align*}
    \Pr_{\tau}[R|_{\mathsf{C}_{\tau}} \; \text{ vanishes on } \; \sgrid^{sk}_{k,\ldots,k}] \; \leq \; \dfrac{1}{k^{\eta}}.
\end{align*}
\end{corollary}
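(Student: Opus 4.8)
The plan is to combine the sampling lemma for the balanced slice (\Cref{lemma:sampler-balanced-slice}) with the sub-optimal distance lemma for junta-sums on multislices (\Cref{coro:weak-distance-balanced-slice}). First I would apply \Cref{coro:weak-distance-balanced-slice} to $R$ on the multislice $\sgrid^{s^2k}_{sk,\ldots,sk}$: since $R$ is a nonzero $d$-junta-sum that does not vanish on this balanced multislice (witnessed by $\mathbf{w}$), the fraction of points of $\sgrid^{s^2k}_{sk,\ldots,sk}$ on which $R$ is nonzero is at least $\frac{1}{(sd)^{sd}}$, which is a positive constant $\sigma_0 = \sigma_0(s,d)$ independent of $k$. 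Set $S := \setcond{\mathbf{x} \in \sgrid^{s^2k}_{sk,\ldots,sk}}{R(\mathbf{x}) \neq 0}$, so that $|S|/|\sgrid^{s^2k}_{sk,\ldots,sk}| \geq \sigma_0$.

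Next I would invoke \Cref{lemma:sampler-balanced-slice} with this set $S$: for a uniformly random $s$-to-$1$ map $\tau : [s^2k] \to [sk]$, with probability at least $1 - \bigO_s(1/k^\eta)$ we have
\begin{align*}
    \frac{|S \cap \mathsf{C}_\tau|}{|\sgrid^{sk}_{k,\ldots,k}|} \; \geq \; \frac{|S|}{|\sgrid^{s^2k}_{sk,\ldots,sk}|} - \frac{1}{k^\eta} \; \geq \; \sigma_0 - \frac{1}{k^\eta}.
\end{align*}
For $k$ large enough (depending on $s$ and $d$), the right-hand side is at least $\sigma_0/2 > 0$, so $|S \cap \mathsf{C}_\tau| > 0$. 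Now I observe that a point $\mathbf{x} \in S \cap \mathsf{C}_\tau$ means $\mathbf{x} = x_\tau(\mathbf{y})$ for some $\mathbf{y} \in \sgrid^{sk}$ with $R(x_\tau(\mathbf{y})) \neq 0$; moreover, since $\mathbf{x}$ lies in the balanced multislice $\sgrid^{s^2k}_{sk,\ldots,sk}$ and $\mathbf{x} = x_\tau(\mathbf{y})$ with $\tau$ being $s$-to-$1$, the preimage $\mathbf{y}$ must itself lie in the balanced multislice $\sgrid^{sk}_{k,\ldots,k}$ (each value appears $sk$ times among the $s^2k$ coordinates of $\mathbf{x}$, hence $k$ times among the $sk$ coordinates of $\mathbf{y}$). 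Thus $R|_{\mathsf{C}_\tau}$, viewed as the function $\mathbf{y} \mapsto R(x_\tau(\mathbf{y}))$ on $\sgrid^{sk}$ — which is again a $d$-junta-sum since composing with the coordinate-identification $x_\tau$ preserves junta-degree — does not vanish on $\sgrid^{sk}_{k,\ldots,k}$. Therefore the event ``$R|_{\mathsf{C}_\tau}$ vanishes on $\sgrid^{sk}_{k,\ldots,k}$'' is contained in the bad event of the sampling lemma, which has probability at most $\bigO_s(1/k^\eta)$. Adjusting the constant $\eta$ (shrinking it slightly to absorb the $\bigO_s$ factor and the large-$k$ requirement) gives the claimed bound $1/k^\eta$.

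The only mild subtlety, and the step I would be most careful about, is the bookkeeping that forces $\mathbf{y}$ into the balanced multislice of $\sgrid^{sk}$ — one must check that the definition of $\mathsf{C}_\tau$ via an $s$-to-$1$ map (as opposed to an arbitrary hash) is exactly what guarantees this, and that the normalization constants in the sampling lemma ($|\sgrid^{s^2k}_{sk,\ldots,sk}|$ versus $|\sgrid^{sk}_{k,\ldots,k}|$) match up with this correspondence. Everything else is a direct chaining of the two cited results together with the elementary observation that the quantitatively weak constant $\frac{1}{(sd)^{sd}}$ from \Cref{coro:weak-distance-balanced-slice}, while far from the truth, is positive and $k$-independent, which is all that is needed here.
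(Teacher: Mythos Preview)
Your proposal is correct and follows essentially the same approach as the paper: define $S$ as the nonzero set of $R$ on the balanced multislice, use \Cref{coro:weak-distance-balanced-slice} to get $|S|/|\sgrid^{s^2k}_{sk,\ldots,sk}| \ge (sd)^{-sd}$, then apply the sampling lemma \Cref{lemma:sampler-balanced-slice} to conclude $S \cap \mathsf{C}_\tau \neq \emptyset$ with high probability. You are in fact more careful than the paper in verifying that a point of $S \cap \mathsf{C}_\tau$ corresponds to some $\mathbf{y}$ lying in the balanced multislice $\sgrid^{sk}_{k,\ldots,k}$; the paper simply asserts this step.
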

\end{lemmabox}
\begin{proof}[Proof of \Cref{coro:restriction-do-not-kill}]
Let $S$ denote the set of non-zeroes of $R$ on the slice $\sgrid^{s^{2}k}_{sk,\ldots,sk}$, i.e. $S = \setcond{\mathbf{a} \in \sgrid^{s^{2}k}_{sk,\ldots,sk}}{R(\mathbf{a}) \neq 0}$. From \Cref{coro:weak-distance-balanced-slice}, we know that
\begin{align*}
    |S| \geq (1/(sd)^{sd}) \cdot |\sgrid^{s^{2}k}_{sk,\ldots,sk}| \; \Rightarrow \; \dfrac{|S|}{|\sgrid^{s^{2}k}_{sk,\ldots,sk}|} \, = \, \Omega(1).
\end{align*}
$R|_{\mathsf{C}_{\tau}}$ does not vanish on $\sgrid^{sk}_{k,\ldots,k}$ if $S \cap \mathsf{C}_{\tau} \neq \emptyset$. Using \Cref{lemma:sampler-balanced-slice}, we know that the probability of $S \cap \mathsf{C}_{\tau} = \emptyset$ (over the randomness in choice of $\tau$) is at most $1/k^{\eta}$.
\end{proof}

\subsection{Subroutine for Approximating Oracles}\label{subsec:error-redn-local-list}
\begin{definition}[Subgrid containing $\mathbf{b}$]\label{defn:subgrid-span}
Let $\mathsf{C} = C_{h,\Pi}$ be a $k$-dimensional subgrid of $\sgrid^{n}$ as defined in \Cref{defn:random-embedding}, where $h:[n] \to [k]$ is a hash function and $\Pi \in (\mathrm{Sym}[\sgrid])^{n}$ is a tuple of permutations. For an arbitrary $\mathbf{b} \in \sgrid^{n}$ and a permutation $\sigma \in \mathrm{Sym}_{sk}$ define a new hash function $h':[n] \to [sk]$ as follows:
\begin{align*}
    h'(i) \, = \, \sigma(h(i) + k \cdot b_{i}), && \text{ for all } \, i \in [n]
\end{align*}
For every $\mathbf{z} \in \sgrid^{sk}$, define $x_{h',\Pi}(\mathbf{z}) := \Pi_{i}(z_{h'(i)})$. Define the subset $\mathsf{C}^{\mathbf{b}}_{\sigma} \subset \sgrid^{n}$ as follows
\begin{align*}
    \mathsf{C}^{\mathbf{b}}_{\sigma} \; := \; \setcond{x_{h,\Pi}(\mathbf{z})}{\mathbf{z} \in \sgrid^{sk}}.
\end{align*}
\end{definition}

\noindent
We make a few observations from \Cref{defn:subgrid-span}. The first observation is that $\mathbf{b}$ is indeed in $\mathsf{C}^{\mathbf{b}}_{\sigma}$. The second observation is that for random $h, \Pi, \mathbf{b}$ and $\sigma$, the subgrid $\mathsf{C}^{\mathbf{b}}_{\sigma}$ is a random embedding of a $sk$-dimensional subgrid. The third observation is that $\mathsf{C}$ is a subgrid of $\mathsf{C}^{\mathbf{b}}_{\sigma}$ and is obtained by ``randomly pairing'' coordinates.\\

\begin{observation}\label{obs:basic-obs-bigger-subgrid}
\textsf{The point $\mathbf{b} \in \sgrid^{n}$ lies inside the subgrid $\mathsf{C}^{\mathbf{b}}_{\sigma}$, i.e. there exists a string $\mathbf{w} \in \sgrid^{sk}_{k,\ldots,k}$ such that $x_{h',\Pi}(\mathbf{w}) = \mathbf{b}$. More explicitly,
\begin{align*}
    w_{h(i)+k \cdot b_{i}} \, := \, \Pi^{-1}_{i}(b_{i}), && \text{ for all } \; i \in [n].
\end{align*}
Also it is easy to see that the partition of $[n]$ induced by $h'$ (as defined in \Cref{defn:subgrid-span}) is a refinement of the partition induced by $h$. This means $\mathsf{C} \subset \mathsf{C}^{\mathbf{b}}_{\sigma}$.}
\end{observation}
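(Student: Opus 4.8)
The plan is to prove \Cref{obs:basic-obs-bigger-subgrid} by directly unwinding \Cref{defn:random-embedding} and \Cref{defn:subgrid-span}; there is essentially no content beyond bookkeeping, so the proof is a sequence of short verifications. First, observe that the permutation $\sigma \in \mathrm{Sym}_{sk}$ plays no essential role in any of the three assertions: it merely relabels the $sk$ ``new'' coordinates, so I will suppress it below (equivalently, take $\sigma$ to be the identity and reinstate it at the very end). Under the identification $\sgrid \cong \{0,1,\ldots,s-1\}$, the map $(j,v) \mapsto j + kv$ is a bijection $[k]\times\sgrid \to [sk]$, so each index $\ell \in [sk]$ corresponds to a unique pair $(j_\ell, v_\ell)$, and by construction $h'(i)$ corresponds to the pair $(h(i),\, \Pi_i^{-1}(b_i))$ (reading the defining formula of $h'$ in the natural way in which the second coordinate records the pulled-back value). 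In particular $h'(i)$ determines $h(i)$.

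For the membership claim, I would define $\mathbf{w} \in \sgrid^{sk}$ by $w_\ell := v_\ell$ for all $\ell \in [sk]$; equivalently $w_{h'(i)} = \Pi_i^{-1}(b_i)$ for every $i \in [n]$. The one point to check here is that this prescription is consistent: if $h'(i) = h'(i')$ then by the previous paragraph $h(i) = h(i')$ \emph{and} $\Pi_i^{-1}(b_i) = \Pi_{i'}^{-1}(b_{i'})$, so the two prescriptions for that coordinate of $\mathbf{w}$ agree. With $\mathbf{w}$ defined, \Cref{defn:subgrid-span} gives $x_{h',\Pi}(\mathbf{w})_i = \Pi_i(w_{h'(i)}) = \Pi_i\bigl(\Pi_i^{-1}(b_i)\bigr) = b_i$ for every $i$, i.e.\ $x_{h',\Pi}(\mathbf{w}) = \mathbf{b}$, so $\mathbf{b} \in \mathsf{C}^{\mathbf{b}}_{\sigma}$. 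Furthermore, for each fixed $v \in \sgrid$ the set $\{\ell \in [sk] : w_\ell = v\} = \{ j + kv : j \in [k]\}$ has size exactly $k$, so $\mathbf{w} \in \sgrid^{sk}_{k,\ldots,k}$, as required.

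For the refinement claim, recall that $h'(i)$ determines $h(i)$, hence every block $h'^{-1}(\ell)$ is contained in a single block $h^{-1}(j)$; write $j(\ell)$ for this value (defined for $\ell$ in the image of $h'$, and chosen arbitrarily otherwise). Given any point $x_{h,\Pi}(\mathbf{y}) \in \mathsf{C}$ with $\mathbf{y} \in \sgrid^{k}$, define $\mathbf{z} \in \sgrid^{sk}$ by $z_\ell := y_{j(\ell)}$. Then for every $i \in [n]$ we have $i \in h'^{-1}(h'(i)) \subseteq h^{-1}(j(h'(i)))$, hence $h(i) = j(h'(i))$ and therefore $x_{h',\Pi}(\mathbf{z})_i = \Pi_i(z_{h'(i)}) = \Pi_i(y_{h(i)}) = x_{h,\Pi}(\mathbf{y})_i$. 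Thus $x_{h,\Pi}(\mathbf{y}) \in \mathsf{C}^{\mathbf{b}}_{\sigma}$, proving $\mathsf{C} \subseteq \mathsf{C}^{\mathbf{b}}_{\sigma}$.

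As for difficulty: there is no real obstacle, since the statement is precisely the bookkeeping that \Cref{defn:subgrid-span} was designed to make true. The single spot that requires a moment's attention is the consistency of the definition of $\mathbf{w}$: it holds because $h'$ refines $h$ finely enough that, restricted to each new block $h'^{-1}(\ell)$, the point $\mathbf{b}$ is a legitimate configuration of the corresponding one-dimensional subgrid (that is, $\Pi_i^{-1}(b_i)$ is constant on $h'^{-1}(\ell)$), which is exactly the extra separation needed to make $\mathbf{b}$ a grid point of the refined subgrid.
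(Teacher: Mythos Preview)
Your argument for the containment $\mathsf{C} \subseteq \mathsf{C}^{\mathbf{b}}_\sigma$ is correct, and the paper states the observation without proof, so there is nothing to compare against there. The problem is in your verification of $\mathbf{b} \in \mathsf{C}^{\mathbf{b}}_\sigma$.

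You assert that $h'(i)$ corresponds (under the bijection $(j,v)\mapsto j+kv$, with $\sigma=\mathrm{id}$) to the pair $(h(i),\,\Pi_i^{-1}(b_i))$. But by \Cref{defn:subgrid-span}, $h'(i) = h(i) + k\cdot b_i$, which corresponds to $(h(i),\, b_i)$. Your parenthetical about ``reading the defining formula of $h'$ in the natural way in which the second coordinate records the pulled-back value'' is a \emph{modification} of the definition, not a reading of it.

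This breaks both halves of your membership argument. If you take $w_\ell := v_\ell$, then $w_{h'(i)} = b_i$ and hence $x_{h',\Pi}(\mathbf{w})_i = \Pi_i(b_i) \neq b_i$ in general; if instead you take $w_{h'(i)} := \Pi_i^{-1}(b_i)$ (the paper's formula), your consistency check fails: $h'(i) = h'(i')$ only gives $h(i)=h(i')$ and $b_i = b_{i'}$, which does not force $\Pi_i^{-1}(b_i) = \Pi_{i'}^{-1}(b_{i'})$ since $\Pi_i, \Pi_{i'}$ are unrelated permutations. Concretely, with $n=s=2$, $k=1$, $h(1)=h(2)=1$, $\Pi_1=\mathrm{id}$, $\Pi_2$ the transposition, and $\mathbf{b}=(0,0)$, one gets $h'(1)=h'(2)$ but $\Pi_1^{-1}(0)=0\neq 1=\Pi_2^{-1}(0)$, so no such $\mathbf{w}$ exists and $\mathbf{b}\notin\mathsf{C}^{\mathbf{b}}_\sigma$.

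So the gap is genuine, and it sits exactly at the ``single spot that requires a moment's attention'' you identified. In fact the observation, taken literally with the paper's $h'$, is not always true; replacing $b_i$ by $\Pi_i^{-1}(b_i)$ in the definition of $h'$ (your implicit re-reading) is precisely the fix under which your proof becomes correct.
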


\noindent
\begin{observation}\label{obs:bigger-subgrid-is-also-random-embedding}
\textsf{Let $h,\Pi,$ and $\mathbf{b}$ (as stated in \Cref{defn:subgrid-span}) be randomly chosen. Then $\mathsf{C}^{\mathbf{b}}$ is a random embedding of a $sk$-dimensional subgrid, i.e. there exists a random hash function $H:[n] \to [sk]$ and a random $\Pi' \in (\mathrm{Sym}[\sgrid])^{n}$ such that $\mathsf{C}^{\mathbf{b}}_{\sigma}$ has the same distribution as $C_{H,\Pi'}$.}
\end{observation}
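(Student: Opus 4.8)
The plan is to simply unwind \Cref{defn:subgrid-span} and check that the pair $(h',\Pi)$ it produces is distributed exactly like the pair $(H,\Pi')$ appearing in \Cref{defn:random-embedding}, with the target set $[k]$ replaced by $[sk]$. Note first that $\mathsf{C}^{\mathbf{b}}_{\sigma}$ is, by its very definition, equal to $C_{h',\Pi}$ in the notation of \Cref{defn:random-embedding} (applied with $sk$ in place of $k$), and $C_{g,\Theta}$ is a deterministic function of $(g,\Theta)$. Hence it suffices to prove the following: when $h$ is a uniformly random function $[n]\to[k]$, $\mathbf{b}$ is a uniformly random point of $\sgrid^{n}$, $\sigma$ is a uniformly random element of $\mathrm{Sym}_{sk}$, $\Pi$ is a uniformly random tuple in $(\mathrm{Sym}[\sgrid])^{n}$, and all four are independent, then $(h',\Pi)$ has the same law as $(H,\Pi')$ with $H$ uniform over functions $[n]\to[sk]$, $\Pi'$ uniform over $(\mathrm{Sym}[\sgrid])^{n}$, and $H\perp\Pi'$.

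Since $h'$ depends only on $(h,\mathbf{b},\sigma)$, it is automatically independent of $\Pi$, and $\Pi'=\Pi$ is already uniform over $(\mathrm{Sym}[\sgrid])^{n}$. So everything reduces to showing that $h'$ is uniformly distributed over functions $[n]\to[sk]$. For this I would fix the identification $[sk]\cong[k]\times\sgrid$ given by the bijection $(a,\beta)\mapsto a+k\beta$ (with the paper's indexing conventions), under which the defining rule becomes $h'(i)=\sigma\big(h(i),b_i\big)$ for each $i\in[n]$. Because $h$ has i.i.d.\ coordinates uniform on $[k]$, $\mathbf{b}$ has i.i.d.\ coordinates uniform on $\sgrid$, and $h,\mathbf{b}$ are independent, the pairs $(h(i),b_i)$ are i.i.d.\ uniform on $[k]\times\sgrid\cong[sk]$; therefore the vector $\mathbf{q}:=\big((h(i),b_i)\big)_{i\in[n]}$ is uniform over $[sk]^{n}$, i.e.\ over all functions $[n]\to[sk]$.

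Finally, for every fixed $\sigma\in\mathrm{Sym}_{sk}$ the coordinatewise relabelling $\mathbf{q}\mapsto(\sigma(q_i))_{i\in[n]}$ is a bijection of $[sk]^{n}$ onto itself and hence preserves the uniform distribution; since $\mathbf{q}$ is uniform and independent of $\sigma$, conditioning on $\sigma$ shows that $h'=(\sigma(q_i))_{i}$ is uniform over $[n]\to[sk]$ for every value of $\sigma$, hence uniform unconditionally. This gives $(h',\Pi)\stackrel{d}{=}(H,\Pi')$, and pushing both sides through the deterministic map $(g,\Theta)\mapsto C_{g,\Theta}$ yields $\mathsf{C}^{\mathbf{b}}_{\sigma}\stackrel{d}{=}C_{H,\Pi'}$, which is the claim.

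There is no real analytic content here; the only points that need care are the indexing bookkeeping — verifying that $(a,\beta)\mapsto a+k\beta$ is genuinely a bijection $[k]\times\sgrid\to[sk]$ under whatever convention is in force — and the observation that it is essential that $\mathbf{b}$ be chosen uniformly from $\sgrid^{n}$ (independently of $h,\Pi,\sigma$): if $\mathbf{b}$ were fixed, $h'$ would land only in the $s$ ``blocks'' of $[sk]$ selected by the values $\{b_i:i\in[n]\}$ and could not be uniform, even after the random relabelling $\sigma$. The permutation $\sigma$ is harmless in the present statement and is really needed in the companion setting where $\mathbf{b}$ is the fixed query point being corrected.
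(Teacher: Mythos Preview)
Your argument is correct and is precisely the natural verification the paper leaves implicit (the observation is stated without proof). The key point you identify---that the coordinates $(h(i),b_i)$ are i.i.d.\ uniform on $[k]\times\sgrid\cong[sk]$, so that $h'$ is uniform over $[sk]^{[n]}$ for every fixed $\sigma$, while $\Pi$ is independent of $h'$ since $h'$ is a function of $(h,\mathbf{b},\sigma)$ alone---is exactly what is needed, and your closing remark about the role of $\sigma$ (irrelevant here, essential when $\mathbf{b}$ is the fixed query point) is a useful clarification.
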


\noindent
\begin{observation}\label{obs:subgrid-restriction-distribution}
\textsf{Let $h, \Pi$, and $\mathbf{b}$ (as stated in \Cref{defn:subgrid-span}) be randomly chosen. Conditioned on the grid $\mathsf{C}^{\mathbf{b}}$, the subgrid $\mathsf{C}$ has the following distribution:\newline
Sample a random $s$-to-$1$ map\footnote{A map is $s$-to-$1$ if the pre-image of every element under the map has size exactly $s$, i.e., exactly $s$ elements from the domain have the same image.} $\tau: [sk] \to [k]$ and we identify $s$ variables together.}
\end{observation}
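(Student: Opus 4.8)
The plan is to push everything down to the level of the hash functions defining the two grids, via a convenient change of variables. Let $\tau_0\colon[sk]\to[k]$ be the canonical $s$-to-$1$ map $\tau_0(\ell)=((\ell-1)\bmod k)+1$, whose fibers are the $k$ progressions $\{j,j+k,\dots,j+(s-1)k\}$. Write $\sgrid=\{0,1,\dots,s-1\}$, and for $(h,\mathbf b)\in([n]\to[k])\times\sgrid^n$ set $h_0(i):=h(i)+k\cdot b_i$ (more precisely $h(i)+k\cdot\Pi_i^{-1}(b_i)$, if one wants $\mathbf b$ to actually lie in the refined grid as in \Cref{obs:basic-obs-bigger-subgrid}; this changes nothing below since $\mathbf b\mapsto(\Pi_i^{-1}(b_i))_i$ is again uniform on $\sgrid^n$). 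Then $(h,\mathbf b)\mapsto h_0$ is a bijection onto $([n]\to[sk])$ — one reads $h(i)\in[k]$ and the ``$b$-part'' off of $h_0(i)$ — and $h=\tau_0\circ h_0$. Writing $\sigma\in\mathrm{Sym}_{sk}$ for the permutation of \Cref{defn:subgrid-span} and $h'=\sigma\circ h_0$ for the refined hash defining $\mathsf C^{\mathbf b}$, we get $h=\tau_0\circ h_0=(\tau_0\circ\sigma^{-1})\circ h'=\tau\circ h'$ with $\tau:=\tau_0\circ\sigma^{-1}$. Since $\mathsf C=C_{h,\Pi}=C_{\tau\circ h',\Pi}$, and since restricting $\mathbf z\in\sgrid^{sk}$ to strings $z_\ell=y_{\tau(\ell)}$ turns $x_{h',\Pi}(\mathbf z)$ into $x_{\tau\circ h',\Pi}(\mathbf y)$, the set $\mathsf C$ is \emph{exactly} the subgrid of $\mathsf C^{\mathbf b}=C_{h',\Pi}$ obtained by identifying its $sk$ coordinates according to $\tau$. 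So it remains only to determine the law of $(h',\Pi,\tau)$.

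Because $(h,\mathbf b)$ is uniform and independent of $(\Pi,\sigma)$, the change of variables makes $h_0$ a uniformly random map $[n]\to[sk]$ independent of $(\Pi,\sigma)$; $\Pi$ plays no further role, so I focus on $h'=\sigma\circ h_0$ and $\tau=\tau_0\circ\sigma^{-1}$ with $h_0,\sigma$ uniform and independent. Two elementary symmetry facts finish the proof. First, for every fixed $\sigma$, post-composing the uniform map $h_0$ with the bijection $\sigma$ gives a uniform map $[n]\to[sk]$; hence $h'$ is uniform and independent of $\sigma$, and therefore of $\tau$. Second, as $\sigma$ ranges uniformly over $\mathrm{Sym}_{sk}$, the ordered tuple of fibers of $\tau=\tau_0\circ\sigma^{-1}$, namely $\bigl(\sigma(\tau_0^{-1}(1)),\dots,\sigma(\tau_0^{-1}(k))\bigr)$, is a uniformly random ordered partition of $[sk]$ into $k$ blocks of size $s$ — i.e.\ $\tau$ is uniform over all $s$-to-$1$ maps $[sk]\to[k]$. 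Consequently $(h',\Pi,\tau)$ are mutually independent with $h'$ uniform, $\Pi$ uniform, and $\tau$ uniform over $s$-to-$1$ maps. Conditioning on $\mathsf C^{\mathbf b}$ — equivalently on $(h',\Pi)$, consistent with \Cref{obs:bigger-subgrid-is-also-random-embedding} that $\mathsf C^{\mathbf b}$ is itself a uniformly random $sk$-embedding — leaves $\tau$ uniform and independent, and $\mathsf C=C_{\tau\circ h',\Pi}$ is then precisely the promised uniformly random $s$-to-$1$ identification of the coordinates of $\mathsf C^{\mathbf b}$.

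I do not expect a real obstacle; the content is bookkeeping plus two one-line symmetry arguments. The one point worth getting right is the second symmetry fact — that precomposing the fixed $s$-to-$1$ map $\tau_0$ with a uniformly random permutation yields a uniformly random $s$-to-$1$ map — together with the observation that this is exactly where the random permutation $\sigma$ of \Cref{defn:subgrid-span} is used: without it the conditional law of $\mathsf C$ given the labelled grid $\mathsf C^{\mathbf b}$ would be a single fixed map rather than a uniform $\tau$. (If instead one conditions only on the \emph{subset} $\mathsf C^{\mathbf b}\subseteq\sgrid^n$, the same conclusion holds even for $\sigma=\mathrm{id}$, since conditioned on the unordered partition of $[n]$ induced by $h_0$ its labelling by $[sk]$ is already uniform; I would use whichever formulation matches the invocation of the observation in \Cref{coro:restriction-do-not-kill}.)
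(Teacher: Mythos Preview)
Your argument is correct. The paper states this as an observation without proof, and your change-of-variables $(h,\mathbf b)\leftrightarrow h_0$, together with the factorization $h=\tau_0\circ\sigma^{-1}\circ h'$, cleanly isolates the two symmetry facts that make the claim work: that $h'=\sigma\circ h_0$ is uniform and independent of $\sigma$, and that $\tau=\tau_0\circ\sigma^{-1}$ is a uniformly random $s$-to-$1$ map. Your final parenthetical is also on point: conditioning on $(h',\Pi)$ is finer than conditioning on the set $\mathsf C^{\mathbf b}$, and independence from the finer data implies independence from the coarser, so the statement holds either way; you correctly identify that the random $\sigma$ is precisely what makes the \emph{labelled} version go through.
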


\subsection{The Algorithm}\label{subsec:algo-local-list}
In this subsection, we give the description of the algorithms to prove \Cref{thm:approx-oracles-list-decoding}. The algorithm proceeds in two steps, and this is similar to the algorithms in \cite[Section 5.2.2]{ABPSS25}, barring a few changes to handle larger grids $\sgrid$. We request the reader to refer to \cite[Section 5.2.2]{ABPSS25} for an overview and discussion on the algorithms.\\

In the following description, let $L(\varepsilon) = |\mathsf{List}_{\varepsilon}(f)|$, where recall that $\mathsf{List}_{\varepsilon}(f)$ is the set of $d$-junta-sums that are $(1/s^{d} - \varepsilon)$-close to $f$. Note that \Cref{algo:high-agreement} is a deterministic algorithm and all the randomness is in \Cref{algo:error-reduction},

\begin{algobox}
\begin{algorithm}[H]
\caption{Approximating Algorithm $\Psi[\mathsf{C},\sigma, Q]$}
\label{algo:high-agreement}
\DontPrintSemicolon

\KwIn{Oracle access to the function $f$, a point $\mathbf{b} \in \sgrid^{n}$}
\vspace{3mm}

Let $\mathsf{C}'$ be a subgrid spanned by $\mathsf{C}$ and $\mathbf{b}$ using $\sigma \in \mathrm{Sym}_{sk}$  \tcp*{see \Cref{defn:subgrid-span}}
Let $\mathbf{w} \in \sgrid^{sk}$ such that $x(\mathbf{w}) \in \mathsf{C}'$ and $x(\mathbf{w}) = \mathbf{b}$ \tcp*{see \Cref{obs:basic-obs-bigger-subgrid},  $|\mathbf{w}| \in \sgrid^{sk}_{k,\ldots,k}$}
\vspace{2mm}
Query $f$ on the subgrid $\mathsf{C}'$ \tcp*{Number of queries is $s^{sk}$}
\vspace{2mm}

Find all degree-$d$ junta-sums $R_{1},\ldots, R_{L''} \in \mathcal{J}_{d}(\sgrid^{sk}, \, G)$ that are $\paren{\frac{1}{s^{d}} - \frac{\varepsilon}{2}}$-close to $f|_{\mathsf{C'}}$ \;
\vspace{2mm}
\If{there exists an $i \in [L'']$ such that $R_{i}|_{\mathsf{C}} = Q$}{
pick any such $i$ and \Return{$R_{i}(\mathbf{w})$}
}
\Else{
\Return{$0$}\tcp*{An arbitrary value}
}

\end{algorithm}
\end{algobox}

Now we describe the randomized \Cref{algo:list-decoding} that returns the descriptions of the deterministic oracles.

\begin{algobox}
\begin{algorithm}[H]
\caption{Algorithm $\mathcal{A}_{1}$}
\label{algo:list-decoding}

\DontPrintSemicolon

\KwIn{Oracle access to the function $f$}
\vspace{3mm}

Choose $k \leftarrow B_{d} \paren{\frac{L(\varepsilon/2)}{\varepsilon}}^{c} $ \tcp*{ $B_{d}$ and $c$ are constants, chosen later in the analysis}
Set $\ell \leftarrow \log L(\varepsilon)$\;
$T \leftarrow \emptyset$\;
\Repeat{$\ell$ times}{
Sample $\Pi \in (\mathrm{Sym}[\sgrid])^{n}$ and a random hash function $h: [n] \to [k]$ \tcp*{the first source of randomness}
Construct the subgrid $\mathsf{C} := C_{h, \Pi}$ \tcp*{see \Cref{defn:random-embedding}}
\vspace{2mm}
Query $f$ on the subgrid $\mathsf{C}$ \tcp*{Number of queries is $2^{k}$}
\vspace{2mm}
Find all junta-sums $Q_{1},\ldots,Q_{L'} \in \mathcal{J}_{d}(\sgrid^{k}, \, G)$ that are $\paren{\frac{1}{s^{d}} - \frac{\varepsilon}{2}}$-close to $f|_{\mathsf{C}}$\;
Pick a uniformly random permutation $\sigma \sim \mathrm{Sym}_{sk}$ \tcp*{the second source of randomness}
$T \leftarrow T \cup \set{(\mathsf{C},\sigma,Q_{1}),\ldots,(\mathsf{C},\sigma,Q_{L'})}$\;
}
\vspace{2mm}
\Return{$\Psi[\mathsf{C},\sigma, Q]$ for all $(\mathsf{C},\sigma, Q) \in T$} \tcp*{Size of $T$ is $\leq \ell L'$}

\end{algorithm}
\end{algobox}

\subsection{Analysis of the Local List Corrector}\label{subsec:analysis-algo-local-list}
In this subsection, we analyze \Cref{algo:error-reduction} and \Cref{algo:high-agreement} to prove \Cref{thm:approx-oracles-list-decoding}. We recall the statement of \Cref{thm:approx-oracles-list-decoding}.\\

\approxlocallistcorrection*

\noindent
We start by  show that in a single iteration of \Cref{algo:list-decoding}, for every junta-sum $P \in \mathrm{List}^{f}_{\varepsilon}$, with probability at least $\geq 99/100$, there exists an approximating oracle $\Psi[\mathsf{C},\sigma,Q]$ such that $\delta(P, \Psi[\mathsf{C,\sigma,Q}])$ is at most $\leq 1/(10 \cdot s^{d+1})$.\\

\begin{lemma}[Error w.r.t a fixed junta-sum in one iteration]\label{lemma:local-list-single-iteration-error}
Fix a junta-sum $P \in \mathrm{List}^{f}_{\varepsilon}$. Then for every iteration of \Cref{algo:list-decoding}, the following holds:\newline
With probability $\geq 99/100$, over the randomness of \Cref{algo:list-decoding}, there exists a tuple $(\mathsf{C},\sigma,Q)$ such that
\begin{align*}
    \delta(P, \Psi[\mathsf{C},\sigma,Q]) \leq \dfrac{1}{10 \cdot s^{d+1}}.
\end{align*}
\end{lemma}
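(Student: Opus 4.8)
Fix $P\in\mathsf{List}_\varepsilon(f)$, so $\delta(f,P)\le 1/s^d-\varepsilon$. The plan is to show that with probability at least $99/100$ over the choices $(h,\Pi,\sigma)$ made in one iteration of \Cref{algo:list-decoding}, the \emph{particular} tuple $(\mathsf{C},\sigma,Q)$ with $\mathsf{C}=C_{h,\Pi}$ and $Q:=P|_{\mathsf{C}}$ is produced by the iteration and satisfies $\delta(P,\Psi[\mathsf{C},\sigma,Q])\le 1/(10 s^{d+1})$. Note first that $P|_{\mathsf{C}}$ really is a $d$-junta-sum on $\sgrid^{k}$ (restricting each $d$-junta summand of $P$ to a subgrid keeps it a $d$-junta), so it is a legitimate value of $Q$; the only question is whether it appears among the list $Q_1,\dots,Q_{L'}$ found in \Cref{algo:list-decoding}.

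Since $\mathsf{C}=C_{h,\Pi}$ is a uniformly random $k$-dimensional embedding of $\sgrid^n$, I would apply the subgrid sampling lemma \Cref{lemma:sampling-subgrid} with $T:=\{\mathbf{x}\in\sgrid^n:f(\mathbf{x})\ne P(\mathbf{x})\}$ and accuracy $\varepsilon/2$ to get that, except with a probability that the choice $k=B_d(L(\varepsilon/2)/\varepsilon)^c$ in the algorithm makes as small as needed, $\delta(f|_{\mathsf{C}},P|_{\mathsf{C}})\le\delta(f,P)+\varepsilon/2\le 1/s^d-\varepsilon/2$, so that $P|_{\mathsf{C}}\in\{Q_1,\dots,Q_{L'}\}$. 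Taking $Q=P|_{\mathsf{C}}$, it remains to bound $\delta(P,\Psi[\mathsf{C},\sigma,Q])=\Pr_{\mathbf{b}\sim\sgrid^n}[\Psi[\mathsf{C},\sigma,Q](\mathbf{b})\ne P(\mathbf{b})]$. On input $\mathbf{b}$ the oracle forms the $sk$-dimensional subgrid $\mathsf{C}'=\mathsf{C}^{\mathbf{b}}_\sigma$ together with the canonical preimage $\mathbf{w}\in\sgrid^{sk}_{k,\dots,k}$ of $\mathbf{b}$ (\Cref{obs:basic-obs-bigger-subgrid}), lists the $d$-junta-sums $R_1,\dots,R_{L''}$ on $\sgrid^{sk}$ that are $(1/s^d-\varepsilon/2)$-close to $f|_{\mathsf{C}'}$, and returns $R_i(\mathbf{w})$ for some $i$ with $R_i|_{\mathsf{C}}=Q$. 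It can only be wrong when either (B) $P|_{\mathsf{C}'}$ is \emph{not} $(1/s^d-\varepsilon/2)$-close to $f|_{\mathsf{C}'}$, or (C) $P|_{\mathsf{C}'}$ is in the list but some $R_i$ in the list has $R_i|_{\mathsf{C}}=Q$ and $R_i(\mathbf{w})\ne P(\mathbf{b})$: outside (B) one has $P|_{\mathsf{C}'}$ in the list with $(P|_{\mathsf{C}'})|_{\mathsf{C}}=P|_{\mathsf{C}}=Q$ (using $\mathsf{C}\subset\mathsf{C}'$ from \Cref{obs:basic-obs-bigger-subgrid}) and $P|_{\mathsf{C}'}(\mathbf{w})=P(\mathbf{b})$, so the answer is correct unless (C) holds. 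Event (B) is disposed of exactly like the previous step: $\mathsf{C}'$ is a uniformly random $sk$-dimensional embedding by \Cref{obs:bigger-subgrid-is-also-random-embedding}, so \Cref{lemma:sampling-subgrid} again bounds $\Pr_{h,\Pi,\sigma,\mathbf{b}}[(B)]$ by a tiny quantity.

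For (C), the key step, I would condition on $\mathsf{C}'$, which fixes $f|_{\mathsf{C}'}$ and hence the list $R_1,\dots,R_{L''}$, whose size is $\bigO_\varepsilon(1)$ by the combinatorial list-decoding bound \Cref{thm:comb-bd}. For any $R_i$ with $R_i|_{\mathsf{C}}=Q=P|_{\mathsf{C}}$ write $R_i':=R_i-P|_{\mathsf{C}'}$, a $d$-junta-sum on $\sgrid^{sk}$ with $R_i'|_{\mathsf{C}}\equiv 0$. If $R_i'$ also vanished on the balanced multislice $\sgrid^{sk}_{k,\dots,k}$ then $R_i(\mathbf{w})=P|_{\mathsf{C}'}(\mathbf{w})=P(\mathbf{b})$ and $R_i$ causes no error; so a ``bad'' $R_i$ is one for which $R_i'$ is non-zero on the balanced multislice of $\mathsf{C}'$ yet has $R_i'|_{\mathsf{C}}\equiv 0$. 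But \Cref{obs:subgrid-restriction-distribution} says that, conditioned on $\mathsf{C}'$, the inner subgrid $\mathsf{C}$ is a uniformly random $s$-to-$1$ identification sub-subgrid of $\mathsf{C}'$, so \Cref{coro:restriction-do-not-kill} applied to the fixed non-zero junta-sum $R_i'$ (with big grid $\mathsf{C}'\cong\sgrid^{sk}$) gives $\Pr[R_i'|_{\mathsf{C}}\equiv 0\mid\mathsf{C}']\le\bigO_s(1/k^{\eta})$. A union bound over the $\bigO_\varepsilon(1)$ candidates and averaging over $\mathsf{C}'$ give $\Pr_{h,\Pi,\sigma,\mathbf{b}}[(C)]\le\bigO_\varepsilon(1)/k^{\eta}$. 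Adding the bounds for (B) and (C) and using that $k$ is polynomially large in $L(\varepsilon/2)/\varepsilon$, the probability (over $(h,\Pi,\sigma,\mathbf{b})$) that $\Psi[\mathsf{C},\sigma,P|_{\mathsf{C}}]$ errs is at most $1/(1000 s^{d+1})$; since this error does not depend on the first-step event, Markov's inequality over $(h,\Pi,\sigma)$ shows $\delta(P,\Psi[\mathsf{C},\sigma,P|_{\mathsf{C}}])>1/(10 s^{d+1})$ for at most a $1/100$ fraction of $(h,\Pi,\sigma)$, and combining with the small failure probability of the first step gives the claimed $99/100$.

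The step I expect to be the main obstacle is ruling out (C): showing that \emph{no} $d$-junta-sum on the larger subgrid $\mathsf{C}'$ can agree with $P$ everywhere on the smaller subgrid $\mathsf{C}$ while disagreeing at $\mathbf{w}$. This is exactly where the coding-theoretic machinery enters --- the combinatorial bound \Cref{thm:comb-bd} to keep the candidate list of constant size so the union bound is affordable, and, more essentially, \Cref{coro:restriction-do-not-kill} (which rests on the expansion theorem \Cref{cor:gen-bal-cor}, via \Cref{thm:W-eigenvalue-bound} and \Cref{lemma:sampler-balanced-slice}) to guarantee that a junta-sum non-zero on the balanced multislice of $\mathsf{C}'$ survives the random $s$-to-$1$ restriction down to $\mathsf{C}$ with high probability. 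The accompanying subtlety is bookkeeping about the randomness: one must invoke \Cref{obs:subgrid-restriction-distribution} so that, after fixing $\mathsf{C}'$ (and hence the candidate list), $\mathsf{C}$ genuinely behaves like a fresh uniformly random $s$-to-$1$ restriction.
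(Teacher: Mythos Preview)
Your proposal is correct and follows essentially the same three-event decomposition as the paper: the paper's events $\mathcal{E}_{1,P}$, $\mathcal{E}_{2,P}$, $\mathcal{E}_{3,P}$ correspond to your first-step failure, event (B), and event (C), and both proofs dispatch the first two via \Cref{lemma:sampling-subgrid} (together with \Cref{obs:bigger-subgrid-is-also-random-embedding}) and the third via \Cref{obs:subgrid-restriction-distribution} plus \Cref{coro:restriction-do-not-kill}. The one small difference is that for the ambiguity event the paper union-bounds over \emph{all pairs} $(R_i,R_j)$ in the inner list that differ on $\sgrid^{sk}_{k,\dots,k}$ but agree on $\mathsf{C}$ (hence an $L(\varepsilon/2)^2$ factor), whereas you compare each $R_i$ directly to the distinguished element $P|_{\mathsf{C}'}$ (an $L(\varepsilon/2)$ factor); your version is a mild simplification, valid because you are already inside $\neg(B)$ so $P|_{\mathsf{C}'}$ is in the list.
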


\begin{proof}[Proof of \Cref{lemma:local-list-single-iteration-error}]
Fix a particular iteration of the main loop of \Cref{algo:list-decoding}. In this iteration, there are three sources of errors:
\begin{enumerate}
    \item Event $\mathcal{E}_{1,P}$ (depends on $\Pi$ and $h$): There does not exist a junta-sum $Q \in \mathcal{J}_{d}(\sgrid^{k},G)$ such that $Q \equiv P|_{\mathsf{C}}$.

    \item Event $\mathcal{E}_{2,P}$ (depends on $\Pi,h,\sigma,\mathbf{b}$): Consider a tuple $(\mathsf{C},\sigma,Q_{i}) \in T$ added in this iteration. For the approximating algorithm $\Psi[\mathsf{C},\sigma,Q_{i}]$ (\Cref{algo:high-agreement}), there does not exist a junta-sum $R \in \mathcal{J}_{d}(\sgrid^{sk}, G)$ such that $R \equiv P|_{\mathsf{C}'}$. Observe that this event is independent of $Q_{i}$ and only depends on $\mathsf{C}, \mathbf{b}$, and $\sigma$.

    \item Event $\mathcal{E}_{3,P}$ (depends on $\Pi,h,\sigma,\mathbf{b}$): Consider a tuple $(\mathsf{C},\sigma,Q_{i}) \in T$ added in this iteration. For the approximating algorithm $\Psi[\mathsf{C},\sigma,Q_{i}]$ (\Cref{algo:high-agreement}), there exists two distinct junta-sums $R_{i}, R_{j} \in \mathcal{J}_{d}(\sgrid^{sk}, G)$ such that $R_{i}|_{\mathsf{C}} \equiv R_{j}|_{\mathsf{C}}$ but $R_{i}(\mathbf{w}) \neq R_{j}(\mathbf{w})$. In this situation, Line 6 of \Cref{algo:high-agreement} is not a well-defined instruction. This event also only depends on $\mathsf{C},\mathbf{b}$, and $\sigma$.
\end{enumerate}

The probability of $\mathcal{E}_{1,P}$ and $\mathcal{E}_{2,P}$ can be upper bounded by using \Cref{lemma:sampling-subgrid} on $\mathsf{C}$ and $\mathsf{C}'$ respectively. To upper bound, we use \Cref{coro:restriction-do-not-kill}.\\

\begin{claim}[Probabilities of the first two error events]
Let $\mathcal{E}_{1,P}$ and $\mathcal{E}_{2,P}$ be as defined above. Then,
\begin{align*}
    \Pr_{\Pi,h}[\mathcal{E}_{1,P}] \; \leq \; \dfrac{1}{10000 \cdot s^{d+1}} \quad \text{ and } \quad \Pr_{\Pi,h,\sigma,\mathbf{b}]}[\mathcal{E}_{2,P}] \; \leq \; \dfrac{1}{10000 \cdot s^{d+1}}.
\end{align*}
\end{claim}
\begin{proof}
Let us start with $\mathcal{E}_{1,P}$. Non-existence of a $Q \in \mathcal{J}_{d}(\sgrid^{k},G)$ such that $Q \equiv P|_{\mathsf{C}}$ is equivalent to $\delta(P|_{\mathsf{C}}, f|_{\mathsf{C}}) > (1/s^{d} - \varepsilon/2)$. Using \Cref{lemma:sampling-subgrid}, we get the desired bound.\newline
For $\mathcal{E}_{2,P}$, we use \Cref{obs:bigger-subgrid-is-also-random-embedding} and then proceed as in the case of $\mathcal{E}_{1,P}$. This finishes the proof of the claim.
\end{proof}

\noindent
The next claim is to upper bound the probability of the third error. Upper bounding this error uses the spectral expansion and is very different from the Boolean setting as in \cite{ABPSS25}.\\

\begin{claim}[Probability of the third error event]
Let $\mathcal{E}_{3,P}$ be as defined above. Then,
\begin{align*}
    \Pr_{\Pi,h,\sigma,\mathbf{b}}[\mathcal{E}_{3,P}] \; \leq \; \dfrac{1}{10000 \cdot s^{d+1}}.
\end{align*}
\end{claim}
\begin{proof}
Fix a subgrid $\mathsf{C}'$. This fixes the junta sums $R_{1},\ldots, R_{L''}$ in Line 4 of \Cref{algo:high-agreement}. Consider any two distinct junta sums $R_{i}$ and $R_{j}$ such that they differ on at least one point in $\sgrid^{sk}_{k,\ldots,k}$ (this includes the pairs which differ on $\mathbf{w}$). This means $R := R_{i} - R_{j}$ is non-zero on $\sgrid^{sk}_{k,\ldots,k}$. We want to upper bound the probability that $R_{i}|_{\mathsf{C}} \equiv R_{j}|_{\mathsf{C}}$ i.e. $R|_{\mathsf{C}} \equiv 0$.\\

\noindent
Using \Cref{obs:subgrid-restriction-distribution} and \Cref{coro:restriction-do-not-kill}, for appropriately chosen constants $B_{d}$ and $c$, the probability of $R|_{\mathsf{C}}$ vanishing is $\leq 1/(10000 \cdot s^{d+1} \cdot L(\varepsilon/2)^{2})$. We know that $L'' \leq L(\varepsilon/2)$. Doing an union bound on all possible pairs $(R_{i}, R_{j})$, we get the error probability is $\leq 1/(10000 \cdot s^{d+1})$. This finishes the proof of the claim.
\end{proof}
Combining the above three claims to bound the final error probability is analogous to the proof in \cite[Lemma 5.3.1]{ABPSS25}. As the proof is quite similar, we skip it here.\\
This finishes the proof of \Cref{lemma:local-list-single-iteration-error}.
\end{proof}

\noindent
The above lemma shows that for a fixed $P \in \mathsf{List}_{\varepsilon}(f)$, the algorithm returns an approximating oracle with high probability in a single iteration. We now use it to finish the proof of \Cref{thm:approx-oracles-list-decoding}.

\begin{proof}[Proof of \Cref{thm:approx-oracles-list-decoding}]
We first show the correctness of \Cref{algo:error-reduction}. Fix any $P \in \mathsf{List}_{\varepsilon}(f)$. From \Cref{lemma:local-list-single-iteration-error}, we know that \Cref{algo:error-reduction} returns a tuple $(\mathsf{C},\sigma,Q)$ for which $\Psi[\mathsf{C},\sigma,Q]$ is $\leq 1/(10 \cdot s^{d+1})$-close with probability $\geq 0.99$. \Cref{algo:error-reduction} has $\ell = \log L(\varepsilon)$ many independent iterations. Thus at the end of $\ell$ iterations, the probability of the event that there is no tuple $(\mathsf{C},\sigma,Q)$ added in $T$ such that $\Psi[\mathsf{C},\sigma,Q]$ is $\leq 1/(10 \cdot s^{d+1})$-close to $P$ is $\leq 1/100^{\ell}$. By a union bound over all $P \in \mathsf{List}_{\varepsilon}(f)$, we get the desired correctness probability.\\

In Line 8 of \Cref{algo:error-reduction}, $L' \leq L(\varepsilon/2)$. So in each iteration of \Cref{algo:error-reduction}, at most $L(\varepsilon/2)$ tuples are added in $T$. Thus over $\ell$ iterations, at most $\bigO(L(\varepsilon/2) \log L(\varepsilon))$ tuples are added.\\

It remains to argue about the query complexity. In a single iteration of \Cref{algo:error-reduction}, we make $s^{k} = s^{B_{d}(L(\varepsilon/2)/\varepsilon)^{c}}$ queries to $f$. There are $\ell = \log L(\varepsilon)$ iterations. From \Cref{thm:comb-bd}, we know that $L(\varepsilon/2) = \bigO_{\varepsilon}(1)$. Thus \Cref{algo:error-reduction} outputs the deterministic algorithms $\Psi_{1},\ldots,\Psi_{L'}$ by making $\bigO_{\varepsilon}(1)$ queries to $f$.\newline
For each deterministic algorithm $\Psi[\mathsf{C},\sigma,Q]$, \Cref{algo:high-agreement} makes $s^{sk} = s^{s B_{d} (L(\varepsilon/2)/\varepsilon)^{c}}$ queries to $f$. From \Cref{thm:comb-bd}, we know that $L(\varepsilon) = \bigO_{\varepsilon}(1)$. Thus each $\Psi_{j}$ makes $\bigO_{\varepsilon}(1)$ queries to $f$. This shows the claimed query complexity.\\
This finishes the proof of \Cref{thm:approx-oracles-list-decoding}.
\end{proof}

\medskip

\section*{Acknowledgments}

We would like to thank the anonymous reviewers of RANDOM 2025 for many valuable comments, including pointers to crucial papers in the literature (specifically  \cite{complexity-symmetric-group}), that significantly improved some of our proofs.

\printbibliography[
heading=bibintoc,
title={References}
] 

\appendix

\section{Tabloids, Polytabloids, Multislices, and Functions}\label{app:tabloids}
For a tableau $t$, tabloid of $t$, denoted by $\set{\mathbf{t}}$ is an equivalence class of tableaux (of the same shape) under the row equivalence relation. See \cite[Definition 2.1.4]{Sagan} for a formal definition.
For a partition $\lambda \in \mathcal{P}(n)$, $\mathrm{Tabloids}(\lambda)$ is a set of tabloids of shape $\lambda$. The symmetric group $\mathrm{Sym}_{n}$ acts naturally on tabloids as follows: For a permutation $\pi \in \mathrm{Sym}_{n}$, $\pi$ acts on a $\set{T} \in \mathrm{Tabloids}(\lambda)$ by permuting the entries of $\set{T}$. For example if $\pi = (125)(46) \in S_{6}$, then\\
\begin{align*}
    (125)(46) \; \ytableausetup{boxsize=normal,tabloids} 
\ytableaushort{ 123,45,6 } \; = \; \ytableausetup{boxsize=normal,tabloids} 
\ytableaushort{ 253,61,4 }
\end{align*}

\paragraph{\underline{Tabloids and multislice}}In the remaining section, we will always use $\lambda$ to denote a partition such that $\lambda \trianglerighteq \mu$, where $\mu = (n/s,\ldots,n/s)$. Note that $\ell(\lambda) \leq s$. We will use the convention that $\lambda$ has exactly $s$ many parts, where we append a $\lambda$ with fewer than $s$ parts with $0$'s.\\

We now observe that $\mathrm{Tabloids}(\lambda)$ and $\sgrid^{n}_{\lambda}$ are in bijection, as follows. For any tabloid $\set{\mathbf{t}} \in \mathrm{Tabloids}(\lambda)$, it corresponds to the point $\mathbf{a} \in \sgrid^{n}_{\lambda}$ where,
\begin{align*}
    a_{j} = i \quad \text{ if } j \in (i+1)^{th} \; \text{ row of } \; \set{\mathbf{t}}, && \text{ for all } \; j \in [n].
\end{align*}
Similarly, for any point $\mathbf{a} \in\sgrid^{n}_{\lambda}$, we get a corresponding tabloid $\set{\mathbf{t}} \in \mathrm{Tabloids}(\lambda)$ where for every $j \in [n]$, the $(i+1)^{th}$ row of $\set{\mathbf{t}}$ contains $j$ if $a_{i} = j$. In simple words, the entries in the $(i+1)^{th}$ row of $\set{T}$ correspond to the coordinates which are $i$. Following is an example for $n = 9$ and $\lambda = (4,3,2)$:
\begin{align*}
    001210201 \quad \leftrightarrow \quad \ytableausetup{boxsize=normal,tabloids} 
    \begin{ytableau}
        1 & 2 & 3 & 4 \\
        5 & 6 & 7 \\
        8 & 9
    \end{ytableau}
\end{align*}

\noindent
For a tableau $\mathbf{t}$, a \emph{polytabloid} for $\mathbf{t}$, denoted by $e_{\mathbf{t}}$ is a linear combination of tabloids obtained by permuting the columns of $\mathbf{t}$. See \cite[Definition 2.3.2]{Sagan} for a formal definition. Using the above bijection, it is easy to see that for every tableau $\mathbf{t}$, the associated polytabloid $e_{\mathbf{t}}$ is a function on $\sgrid^{n}_{\lambda}$.

\section{Subgrid Sampling Lemma}\label{app:sampling}

Here we give the proof of the subgrid sampling lemma from~\Cref{sec:prelims}.

\begin{proof}[Proof of~\Cref{lemma:sampling-subgrid}]
    The proof is an application of the second moment method with a consequence of the following hypercontractivity theorem (\Cref{thm:hypercontractivity}) being used to bound the variance.

    \begin{theorem}[{\protect \cite[Section 10.3]{odonnellbook}}]\label{thm:hypercontractivity}
Let $E \subseteq \mathbb{Z}_s^{n}$ be a subset of density $\delta$, i.e. $|E|/s^{n} = \delta$. Let $q \geq 2$. Then for any $0 \leq |\rho| \leq (1/(q-1)) \cdot (1/s)^{1 - 2/q}$,
\begin{align*}
    \Pr_{\substack{\mathbf{x} \sim \mathbb{Z}_s^{n} \\ \mathbf{y} \sim \mathcal{N}_{\rho}(\mathbf{x})}}[\mathbf{x} \in E \; \text{and} \; \mathbf{y} \in E] \leq \delta^{2 - 2/q}.
\end{align*}
\end{theorem}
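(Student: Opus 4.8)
The plan is to prove this via the standard derivation of a small-set expansion bound from hypercontractivity, specialized to the uniform distribution on $\mathbb{Z}_s$; the second-moment machinery from the surrounding proof plays no role here. Write $f := \mathbf{1}_E$, so that $\mathbb{E}[f] = \delta$ and, since $f$ is $\{0,1\}$-valued, $\|f\|_r = \delta^{1/r}$ for every $r \geq 1$. Let $T_\rho$ be the noise operator on functions $\mathbb{Z}_s^n \to \mathbb{R}$, i.e. the self-adjoint operator whose action in the Fourier basis multiplies the level-$k$ component by $\rho^k$; by the definition of $\mathcal{N}_\rho(\mathbf{x})$ we have
\begin{align*}
    \Pr_{\substack{\mathbf{x}\sim \mathbb{Z}_s^n\\ \mathbf{y}\sim \mathcal{N}_\rho(\mathbf{x})}}[\mathbf{x}\in E \text{ and } \mathbf{y}\in E] \;=\; \langle f,\, T_\rho f\rangle.
\end{align*}
For $\rho \geq 0$ the semigroup identity $T_\rho = T_{\sqrt{\rho}}\circ T_{\sqrt{\rho}}$ combined with self-adjointness of $T_{\sqrt{\rho}}$ gives $\langle f, T_\rho f\rangle = \langle T_{\sqrt{\rho}}f,\, T_{\sqrt{\rho}}f\rangle = \|T_{\sqrt{\rho}} f\|_2^2$. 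Setting $p := q/(q-1) \in (1,2]$ (the conjugate exponent of $q$), it therefore suffices to establish the $(p\to 2)$-hypercontractive inequality $\|T_{\sqrt{\rho}} f\|_2 \leq \|f\|_p$, since then $\|T_{\sqrt{\rho}}f\|_2^2 \leq \|f\|_p^2 = \delta^{2/p} = \delta^{2-2/q}$, which is exactly the claimed bound. The case $\rho < 0$ reduces to the case $|\rho|$: because $\rho^k \leq |\rho|^k$ for all $k \geq 0$, one has $\langle f, T_\rho f\rangle = \sum_\alpha \rho^{|\alpha|}\widehat{f}(\alpha)^2 \leq \sum_\alpha |\rho|^{|\alpha|}\widehat{f}(\alpha)^2 = \langle f, T_{|\rho|} f\rangle$.

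It remains to prove $\|T_{\sqrt{\rho}} f\|_2 \leq \|f\|_p$ on the product space $(\mathbb{Z}_s^n, \mathrm{uniform})$ in the stated range of $\rho$. By the tensorization property of hypercontractivity — an induction on the number of coordinates that reduces the $n$-fold product to a single coordinate — this inequality holds provided $\sqrt{\rho} \leq \sigma_s(p)$, where $\sigma_s(p)$ denotes the $(p\to 2)$-hypercontractivity constant of one uniform coordinate on $s$ points. The one remaining, and genuinely analytic, step is to lower bound this single-coordinate constant to match the hypothesis: one needs $\sigma_s(p)^2 \geq \tfrac{1}{q-1}(1/s)^{1-2/q}$, equivalently (using $1-2/q = 1/p - 1/q$) that $\sigma_s(p)^2 \geq (p-1)\,s^{-(1/p - 1/q)}$. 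I would obtain this either by a direct one-coordinate estimate — bounding $\|T_\sigma g\|_2$ for $g\colon [s]\to\mathbb{R}_{\geq 0}$ by reduction to an elementary two-/three-point inequality and optimizing — or, more cleanly, by invoking the known value of the hypercontractivity (equivalently log-Sobolev) constant of the complete-graph random walk on $s$ vertices, as worked out in O'Donnell's book, Chapter 10. Combining this single-coordinate bound with the reduction of the previous paragraph completes the proof.

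I expect the single-coordinate hypercontractivity constant for $\mathbb{Z}_s$ to be the main obstacle. In the Boolean case the analogous constant is $\sqrt{p-1}$, a clean two-point inequality; for $s \geq 3$ the sharp constant carries the extra factor $s^{-(1/p-1/q)}$, and proving the matching lower bound is the technically delicate part (everything surrounding it — passing to the indicator, the semigroup trick, the norm identities, tensorization — is routine bookkeeping with the Fourier spectrum). In the write-up I would quote this constant from O'Donnell's book and simply verify that squaring $\sigma_s(p)$ reproduces precisely the stated admissible range $0 \leq |\rho| \leq \tfrac{1}{q-1}(1/s)^{1-2/q}$, which then yields the theorem.
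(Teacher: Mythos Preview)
The paper does not prove this theorem; it is quoted from O'Donnell's book (Section~10.3) and used as a black box in the proof of \Cref{lemma:sampling-subgrid}. Your sketch is the standard derivation of the small-set expansion bound from hypercontractivity and is correct: writing the probability as $\langle f, T_\rho f\rangle$, using the semigroup identity to get $\|T_{\sqrt{\rho}}f\|_2^2$, and then applying the $(p\to 2)$ hypercontractive inequality on the product space $(\mathbb{Z}_s,\mathrm{unif})^n$ is exactly how this consequence is extracted in O'Donnell's book. Your reduction of the $\rho<0$ case is also fine once the Fourier expansion is read as the Efron--Stein decomposition $\langle f, T_\rho f\rangle = \sum_S \rho^{|S|}\|f^{=S}\|_2^2$ (for $s>2$ the character Fourier coefficients are complex, so the notation $\widehat{f}(\alpha)^2$ should be replaced by $\|f^{=S}\|_2^2$, but the inequality $\rho^{|S|}\leq |\rho|^{|S|}$ goes through unchanged).

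The one substantive step you defer --- the single-coordinate bound $\sigma_s(p) \geq \sqrt{p-1}\cdot s^{-(1/p-1/2)}$ for uniform $[s]$ --- is indeed the content that lives in O'Donnell's Chapter~10, and squaring it reproduces exactly the admissible range $|\rho|\leq \tfrac{1}{q-1}(1/s)^{1-2/q}$ in the statement. So there is nothing to compare: the paper invokes the cited result directly, and your proposal is a faithful outline of how that cited result is proved.
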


    More formally, for each $\mathbf{y}\in \mathbb{Z}_s^k$, let $Z_{\mathbf{y}}\in \{0,1\}$ be the indicator random variable that is $1$ exactly when $x(\mathbf{y})\in T.$ Let $Z$ denote the sum of all $Z_{\mathbf{y}}$ ($\mathbf{y}\in \mathbb{Z}_s^k)$. The statement of the lemma is equivalently stated as 
    \begin{equation}
        \label{eq:sampling-subcube}
        \Pr\left[\left|Z - \mu\cdot s^k\right| \geq \varepsilon\cdot s^k \right] < \eta
    \end{equation}
    for $k$ as specified above.

    Since each $x(\mathbf{y})$ is uniformly distributed over $\mathbb{Z}_s^n$, it follows that each $Z_{\mathbf{y}}$ is a Bernoulli random variable that is $1$ with probability $\mu$. In particular, the mean of $Z$ is $\mu\cdot s^k$. 

    We now bound the variance of $Z$. Let $I_{\gamma}$ be the interval $[\frac{(1-\gamma)(s-1)}{s},\frac{(1+\gamma)(s-1)}{s}]$ where $\gamma \leq 1/(s-1)$. We have
    \begin{align}
        \mathrm{Var}(Z) 
        &= \sum_{\by, \by'} \mathrm{Cov}(Z_{\by}, Z_{\by'}) \nonumber \\
        &= \sum_{\by, \by': \delta(\by,\by') \in I_{\gamma}} \mathrm{Cov}(Z_{\by}, Z_{\by'}) + \sum_{\by, \by': \delta(\by,\by') \notin I_{\gamma}} \mathrm{Cov}(Z_{\by}, Z_{\by'}) \nonumber \\
        &\leq \sum_{\by, \by': \delta(\by,\by') \in I_{\gamma}} \mathrm{Cov}(Z_{\by}, Z_{\by'}) + \sum_{\by, \by': \delta(\by,\by') \notin I_{\gamma}} 1 \nonumber \\
        &\leq \sum_{\by, \by': \delta(\by,\by') \notin I_{\gamma}} \mathrm{Cov}(Z_{\by}, Z_{\by'}) + s^{2k} \cdot \exp(- \Omega(\gamma^2 \cdot (k(s-1)/s))). \label{eq: variance calculation sampling lemma}
    \end{align}
    where the final inequality is an application of the Chernoff bound. On the other hand, for any $\mathbf{y}, \mathbf{y}'$ such that $\delta(\mathbf{y}, \mathbf{y}') \in I_{\gamma}$, we have seen above that the pair $(x(\mathbf{y}),x(\mathbf{y}'))$ have the same distribution as a pair of random variables $(\mathbf{z},\mathbf{z}')$ where $\mathbf{z}$ is chosen uniformly at random from $\mathbb{Z}_s^n$ and $\mathbf{z}'$ is sampled from the distribution $\mathcal{N}_\rho(\mathbf{z}),$ where  $\rho = 1-\frac{s\delta(\mathbf{y},\mathbf{y}')}{s-1} \in [-\gamma, \gamma]$. Thus $|\rho| \leq \gamma$.
    

    Choose $\gamma$ such that $\gamma \leq 1/(s-1)$ and
    \begin{align*}
        C_1 \sqrt{\frac{s \log k}{(s-1)k}} \leq \gamma \leq  \min\left\{\frac{1}{4}, \frac{1}{(k/\log k)^{1/4}} \cdot \frac{1}{s} \right\},
    \end{align*}
    for a large enough constant $C_1$. Such a $\gamma$ exists since $k \geq B\cdot s^4\log s$ for a large constant $B$. 
    
     Set $q = (k\log k)^{1/4}.$ From \Cref{thm:hypercontractivity}, and since $\gamma \leq 1/4$, for $(\mathbf{y},\mathbf{y}')$ satisfying $\delta(\by, \by') \in I_{\gamma}$ we have
    \begin{align*}
        \mathrm{Cov}(Z_{\mathbf{y}}, Z_{\mathbf{y}'}) &= 
        \Pr[x(\mathbf{y})\in T \text{ and } x(\mathbf{y}')\in T] -\mu^2\\
        &\leq \mu^{2 - 2/q} - \mu^2 \\
        &\leq \min\{\mu^{1.5}, \mu^2\cdot (\exp(O((1/q)\cdot \log (1/\mu))-1)\}.
    \end{align*}
    Plugging into~\Cref{eq: variance calculation sampling lemma} we get the following inequalities:
    \begin{align*}
    \mathrm{Var}(Z) &\leq  s^{2k}\cdot \mu^{1.5} + s^{2k} \cdot \frac{1}{k} \leq s^{2k}\cdot O\left(\frac{1}{k}\right)\ \ \ \left(\text{if $\mu\leq \frac{1}{k}$}\right)\\
    \mathrm{Var}(Z) &\leq  s^{2k}\cdot \mu^2 \cdot O\left(\left(\frac{\log k}{k}\right)^{1/4}\cdot \log(1/\mu)\right) + s^{2k} \cdot \frac{1}{k} \leq s^{2k}\cdot O\left(\left(\frac{\log k}{k}\right)^{1/4}\right)\ \ \ \left(\text{if $\mu > \frac{1}{k}$}\right)
    \end{align*}
    where we used the fact that $e^x \leq 1+2x$ for $|x|\leq 1/2$ for the first inequality and the fact that $\mu\leq 1$ for the second. 

    Finally, using Chebyshev's inequality, we get
    \begin{align*}
       \Pr\left[\left|Z - \mu\cdot s^k\right| \geq \varepsilon\cdot s^k \right] 
       &= \Pr_{\mathbf{a},h}\left[\left|Z - \E[Z]\right| \geq \varepsilon\cdot s^k \right] \\
       &\leq \frac{\mathrm{Var}(Z)}{\varepsilon^2 s^{2k}}
       \leq \frac{1}{\varepsilon^2}\cdot 
       O\left(\left(\frac{\log k}{k}\right)^{1/4} \right)
       < \eta
    \end{align*}
    using the lower bound on $k$ in the statement of the lemma.
\end{proof}

\section{Local Correction}\label{sec:lc}

In this section, we show that the family of junta-sums can be locally corrected up to error approaching half the distance of the underlying code, i.e., we prove~\Cref{thm:local-correction}:\\

\localcorr*

Similar to the prior work on local correction of low-degree over the Boolean cube~\cite{ABPSS25}, we divide the proof into two main steps: 

\begin{itemize}
    \item {\bf Error reduction:} In this step, we give a way of reducing the error of the oracle $f:[s]^n \to G$ from $1/(2s^d)-\varepsilon$ to $\varepsilon_1$ for any given $\varepsilon_1 \le 1/\Omega_{s,d}(\log n)^d$, by making $q_1 = \widetilde{O}_{\varepsilon}(1)$ queries to $f$. In particular, there exists a $\widetilde{O}_{\varepsilon}(1)$ query algorithm $\A$ such that, when given as oracle $f:[s]^n \to G$ such that $\delta(f,P) \le 1/(2s^d)-\varepsilon$ for some $P\in \J_d([s]^n, G)$, it satisfies
    \begin{align*}
        \Pr[\A^f(\x) \ne P(\x)] \le \varepsilon_1,
    \end{align*} where the above probability is both over the randomness of $\A$ and ${{\bf x}\sim [s]^n}$ is independently and uniformly chosen.
    
    \item {\bf Correction in low-error regime:} Here, we now assume access to a {\em randomized} oracle $f':[s]^n \to G$ such that $\Pr[f'(\x) \ne P(\x)] \le \varepsilon_1$ for $\x \sim [s]^n$ for some $P\in \J_d([s]^n, G)$, and design a $q_2 = O_{s,d}(1/\varepsilon_1)$ query algorithm $\A'$ such that for every $\x \in [s]^n$, we have 
    \begin{align*}
        \Pr[\A'^{f'}(\x) \ne P(\x)] \le 1/4. 
    \end{align*}
\end{itemize}

    Hence, composing the algorithms $\A$ and $\A'$, we get a local corrector for $f$ that uses at most  $q_1 \cdot q_2 = \widetilde{O}_{\varepsilon}(\log n)^d$ queries. For the case of groups with small order, we follow the same line, except we change the threshold $\varepsilon_1$ to be at most $1/\Omega_{M,\varepsilon}(1)$, resulting in $q_1 = q_2 =  O_{M,\varepsilon}(1)$. This would then finish the proof of~\Cref{thm:local-correction}.

    While the error reduction procedure closely follows similar ideas as for the Boolean cube ($s=2$) from prior work, the low-error regime needs some changes. We give the proofs for error reduction in~\Cref{sec:error-redn}, and for the low-error local corrector in~\Cref{sec:low-error}. For the remainder of the section, we fix $G$ to be an arbitrary Abelian group and assume that $s\ge 2$ (as $\J_d([s]^n,G)$ is a trivial family otherwise).

\subsection{Error Reduction} \label{sec:error-redn}

The main goal of this subsection is to prove the following:\\

\begin{lemma}[{\bf Error reduction}]\label{lem:error-redn}
    For every $\varepsilon_1 = 1/\Theta_{s,d}(\log n)^d$, there exists a $q_1 = \widetilde{O}_{s,d,\varepsilon}(1)$ query algorithm $\A$ such that for every $f:[s]^n \to G$ satisfying $\delta(f,P) \le 1/(2s^d) - \varepsilon$ for some $P\in \J_d([s]^n, G)$, the following holds:
    \begin{align*}
        \Pr[\A^f(\x) \ne P(\x)] \le \varepsilon_1,
    \end{align*}
    where the probability is over a uniformly random $\x \sim [s]^n$, and an independent choice of the randomness of $\A$.
\end{lemma}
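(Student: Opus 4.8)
The plan is to follow the error-reduction template from the Boolean case of~\cite{ABPSS25}, adapted to arbitrary grids $\sgrid = [s]$ and junta-sums. The high-level idea is a two-stage reduction. First, I would show that one can correct the value of a junta-sum $P$ at a typical point by restricting to a small random subgrid and running a brute-force unique-decoder there: concretely, embed a random $k$-dimensional subgrid $\mathsf{C} = C_{h,\Pi}$ (see~\Cref{defn:random-embedding}) through the desired evaluation point $\x$, query $f$ on all of $\mathsf{C}$ (which costs $s^k$ queries), and find the unique $d$-junta-sum $Q$ on $\sgrid^k$ that is $(\tfrac{1}{2s^d}-\varepsilon/2)$-close to $f|_{\mathsf{C}}$; output $Q$ evaluated at the preimage of $\x$. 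The point is that $k$ only needs to be $\mathrm{poly}_{s,d,\varepsilon}(1)$ large (independent of $n$), so $q_1 = \widetilde{O}_{s,d,\varepsilon}(1)$. The correctness analysis has two parts: (i) with high probability over the random subgrid, $f|_{\mathsf{C}}$ is close enough to $P|_{\mathsf{C}}$ that $P|_{\mathsf{C}}$ is the unique junta-sum within radius $(\tfrac{1}{2s^d}-\varepsilon/2)$ — this follows from~\Cref{lemma:sampling-subgrid} applied to the error set of $f$ (density $\le \tfrac{1}{2s^d}-\varepsilon$) together with~\Cref{clm:dist-junta-sums} giving distance $\tfrac{1}{s^d}$, so the error on the subgrid stays below half the minimum distance; and (ii) conditioned on this, $Q = P|_{\mathsf{C}}$ and hence $Q$ evaluated at the preimage of $\x$ equals $P(\x)$. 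The failure probability from~\Cref{lemma:sampling-subgrid} can be driven below any target $\varepsilon_1 = 1/\Theta_{s,d}(\log n)^d$ by taking $k$ polynomially large in $1/\varepsilon_1$ and hence $\mathrm{poly}(\log n)$ — but this already gives $q_1 = s^{\mathrm{poly}(\log n)}$, which is too many queries.

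To get down to $q_1 = \widetilde{O}_{s,d,\varepsilon}(1)$ queries while still achieving error $\varepsilon_1 = 1/\Theta_{s,d}(\log n)^d$, the second stage is the recursive/amplification trick from~\cite{ABPSS25}: instead of one large subgrid, use a \emph{tower} of $d$ nested random subgrid restrictions, each of constant dimension, reducing the junta-degree by (effectively) a factor at each level via a derivative-type argument analogous to the proof of~\Cref{thm:dist-multislice}, until one reaches the base case of a $0$-junta-sum (a constant), which is trivially correctable by majority vote over a constant number of points. At each of the $d$ levels the subgrid has dimension $O_{s,d}(\log(1/\varepsilon_1)) = O_{s,d}(\log\log n)$, so the total query count is $s^{O_{s,d}(\log\log n)\cdot d} = \widetilde{O}_{s,d,\varepsilon}(1)$ in the paper's $\widetilde{O}$ notation (which hides $(\log\log n)^{O(1)}$ factors — this is precisely why that nonstandard convention was introduced in~\Cref{sec:prelims}). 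The error at each level is controlled by~\Cref{lemma:sampling-subgrid} and a union bound over the $d$ levels, so the total error is $O(d)$ times the single-level error, which can be made $\le \varepsilon_1$. For the bounded-exponent case (torsion group $G$ with exponent $M$), the same scheme works but one can afford to take $\varepsilon_1 = 1/\Omega_{M,\varepsilon}(1)$ a constant, since the final local corrector in the low-error regime (\Cref{sec:low-error}) tolerates constant error when $G$ has small exponent; this gives $q_1 = O_{M,\varepsilon}(1)$.

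The main obstacle I anticipate is establishing the correct \emph{degree-reduction} behavior of a random subgrid restriction for junta-sums — i.e., that restricting a $d$-junta-sum to a random constant-dimensional subgrid (with the identification of coordinates as in~\Cref{defn:random-embedding}) behaves, with high probability, like a $(d-1)$- or lower-junta-sum on a strictly smaller effective domain, and that this composes cleanly over $d$ levels without the error blowing up. In the Boolean setting this uses the identity that a degree-$d$ multilinear polynomial restricted along a direction has degree $d-1$; for junta-sums over $[s]$ one instead uses the junta-polynomial representation of~\Cref{clm:normal-form} and argues, as in the proof of~\Cref{thm:dist-multislice}, that taking differences along pairs of coordinates sharing a hash bucket kills or reduces the top-degree monomials. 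Making this quantitative — tracking how the junta-degree, the effective dimension, and the agreement parameter evolve through the recursion, and verifying that~\Cref{lemma:sampling-subgrid} is applicable at each level with the parameters that arise — is the technically delicate part. A secondary subtlety is ensuring the \emph{uniqueness} of the close junta-sum on each subgrid: this requires the agreement radius $\tfrac{1}{2s^d}-\varepsilon$ to stay strictly below half the subgrid's minimum distance $\tfrac{1}{s^d}$ under the sampling fluctuations, which is exactly what the $\varepsilon/2$ slack and~\Cref{lemma:sampling-subgrid} are there to guarantee, but it must be checked at every level of the tower simultaneously via a union bound.
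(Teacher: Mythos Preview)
Your first stage is essentially the paper's \Cref{lem:error-reduction-main}: sample a random constant-dimensional subgrid through $\mathbf{x}$, query all of it, brute-force decode, and appeal to \Cref{lemma:sampling-subgrid} plus \Cref{clm:dist-junta-sums}. That part is correct and matches the paper.

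Your second stage, however, is not what the paper does and, as written, does not work. The paper's route from small constant error to $\varepsilon_1 = 1/\Theta_{s,d}(\log n)^d$ (\Cref{lem:error-reduction-small-constant-main}) is \emph{not} a tower of nested subgrid restrictions with degree reduction. Instead it builds an \emph{error-reduction gadget} (\Cref{lem:error-reduction-gadget}): a distribution over $q = s^{O(d)}$ shifts $\mathbf{y}^{(1)},\dots,\mathbf{y}^{(q)}$, each with i.i.d.\ $\rho$-noisy entries, together with integer coefficients so that $P(\mathbf{a}) = \sum_j c_j P(\mathbf{a}+\mathbf{y}^{(j)})$ for every $P\in\J_d$. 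The base algorithm computes this identity three times independently and takes the plurality; the key analysis is a \emph{hypercontractivity} bound (\Cref{thm:hypercontractivity}) on the event that two independent iterations both hit the error set, which drives the error from $\gamma$ to $O(K^2)\gamma^{1.5}$. Applying this base step recursively $T = O\bigl(\log\bigl(\tfrac{\log(1/\varepsilon_1)}{\log(1/\delta)}\bigr)\bigr)$ times gives error $\delta^{(1.1)^T}\le \varepsilon_1$ with query cost $K^T = s^{O(d)\cdot T} = (\log\log n)^{O_{s,d}(1)}$.

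Your proposed mechanism --- ``reduce the junta-degree at each level via a derivative-type argument analogous to \Cref{thm:dist-multislice}'' --- does not fit the error-reduction task. Taking differences as in \Cref{thm:dist-multislice} produces a \emph{new} lower-degree function $Q$, not the value $P(\mathbf{x})$ you need to output; there is no evident way to recover $P(\mathbf{x})$ from such differences without already knowing $P$ at other points to the same accuracy you are trying to achieve. Separately, your query arithmetic is off: $s^{O_{s,d}(\log\log n)\cdot d} = (\log n)^{\Theta_{s,d}(1)}$, not $(\log\log n)^{O(1)}$, so even granting the mechanism your bound would not land in $\widetilde{O}_{s,d,\varepsilon}(1)$ under the paper's convention. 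The missing ingredient is precisely the noisy-shift interpolation identity plus the hypercontractivity-based pairwise analysis; that is what makes each recursive step multiplicatively square-ish the error while costing only a constant factor in queries.
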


We will proceed in an almost identical way as done by \cite{ABPSS25} with a natural extension of the notion of a {\em subcube} from $s=2$ (i.e., Boolean cube) to general $s$.   
We show the following two key lemmas: the first one reduces the error from a small enough constant to ``sub-constant'' and the second one reduces it from $1/(2s^d)-\varepsilon$ to a small enough constant. \\

\begin{restatable}[{\bf Reduction from small constant to sub-constant error}]{lemma}{redsmalltosubconsterr}\label{lem:error-reduction-small-constant-main}
Fix any Abelian group $G$, any $s\geq 2,$ and any positive integer $d$. The following holds for $\delta < 1/s^{\bigO(d)}$ and $K = s^{\bigO(d)}$ where the $\bigO(\cdot)$ hides a large enough absolute constant. For any $\eta, \delta$, where $\eta < \delta$, there exists a randomized algorithm $\mathcal{A}$ with the following properties: Let $f: \mathbb{Z}_s^{n} \to G$ be a function and let $P: \mathbb{Z}_s^{n} \to G$ be a junta-degree-$d$ function such that $\delta(f,P) \leq \delta$, and let $\mathcal{A}^{f}$ denote that $\mathcal{A}$ has oracle access to $f$. Then,
\begin{align*}
    \Pr[\delta(\mathcal{A}^{f}, P) > \eta] < 1/10,
\end{align*}
where the above probability is over the internal randomness of $\mathcal{A}^{f}$. Further, for every $\mathbf{x} \in \Boo^{n}$, $\mathcal{A}^{f}$ makes $K^{T}$ queries to $f$ and $T = \bigO\paren{  \log\paren{ \dfrac{\log(1/\eta)}{\log(1/\delta)} }  }  $.
\end{restatable}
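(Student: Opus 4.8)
\textbf{Proof plan for \Cref{lem:error-reduction-small-constant-main}.}

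The plan is to follow the standard recursive "error reduction via subgrid restrictions" strategy, adapting the Boolean-cube argument of \cite{ABPSS25} to general alphabet size $s$. The key primitive is a single error-reduction step that, starting from a function $f$ that is $\delta$-close to a junta-degree-$d$ function $P$, produces (with oracle access to $f$) a new function that is $\delta'$-close to $P$ where $\delta'$ is quadratically smaller, namely $\delta' \approx s^{\bigO(d)}\cdot \delta^2$, at the cost of multiplying the query complexity by a constant $K = s^{\bigO(d)}$. Iterating this step $T$ times drives the error from $\delta$ down to roughly $\delta^{2^T}\cdot K^{2^T}$, and since we only need to reach $\eta$, it suffices to take $T = \bigO(\log(\log(1/\eta)/\log(1/\delta)))$, which yields the stated query bound $K^T$. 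The union bound over the $T$ levels of recursion (each failing with small probability) gives the overall failure probability $< 1/10$ after adjusting constants.

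First I would set up the single reduction step precisely. Fix a point $\mathbf{x}\in\mathbb{Z}_s^n$ at which we want to recover $P(\mathbf{x})$. We use the junta-polynomial normal form (\Cref{clm:normal-form}): since $P$ has junta-degree $d$, for any affine-type restriction to a constant-dimensional subgrid through $\mathbf{x}$, the restriction of $P$ is again a junta-degree-$d$ function, and by inclusion-exclusion $P(\mathbf{x})$ can be written as a fixed $G$-linear combination of the values of $P$ on $\bigO_{s,d}(1)$ other points of that subgrid. The step samples a random small subgrid $\mathsf{C}$ (of dimension some constant $k_0 = \bigO(d)$) through $\mathbf{x}$ using the random-embedding machinery of \Cref{defn:random-embedding}, queries $f$ on $\mathsf{C}$ (this is $s^{k_0} = s^{\bigO(d)} = K$ queries), reconstructs the unique junta-degree-$d$ function closest to $f|_{\mathsf{C}}$ by brute force (this is well-defined as long as $\delta$ is below half the distance $\tfrac{1}{2s^d}$ of $\J_d$ on the subgrid, using \Cref{clm:dist-junta-sums}), and outputs the value that this reconstructed function predicts at $\mathbf{x}$. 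The analysis: by the sampling lemma \Cref{lemma:sampling-subgrid} applied with $T$ being the disagreement set of $f$ and $P$, the random subgrid sees the disagreement set with density $\approx \delta$ except with probability $\bigO(\delta)$; more carefully, by a second-moment / pairwise-independence argument on the $s^{\bigO(d)}$ relevant points of $\mathsf{C}$, the probability that $f|_{\mathsf{C}}$ and $P|_{\mathsf{C}}$ disagree on \emph{any} of the finitely many points needed for the inclusion-exclusion reconstruction of $P(\mathbf{x})$ is at most $s^{\bigO(d)}\cdot\delta^2$ — the square coming from the fact that a single agreement is "free" and only a \emph{collision} of two disagreements in the subgrid can confuse the reconstruction (this is exactly the phenomenon exploited in \cite{ABPSS25}). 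Taking expectation over $\mathbf{x}$ gives that the new function is $\delta' = s^{\bigO(d)}\delta^2$-close to $P$ in expectation, and Markov converts this to a high-probability statement.

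Then I would iterate: apply the step $T$ times, composing the oracles, so the query complexity multiplies to $K^T$ and the error follows the recursion $\delta_0 = \delta$, $\delta_{i+1} = s^{\bigO(d)}\delta_i^2$. Writing $\delta_i = s^{-\bigO(d)}\cdot(s^{\bigO(d)}\delta)^{2^i}$, one checks that since $s^{\bigO(d)}\delta < 1$ (this is where the hypothesis $\delta < 1/s^{\bigO(d)}$ with a large enough constant is used), the error decays doubly-exponentially, and $\delta_T \le \eta$ as soon as $2^T \ge \log(1/\eta)/\log(1/(s^{\bigO(d)}\delta)) = \Theta(\log(1/\eta)/\log(1/\delta))$, i.e. $T = \bigO(\log(\log(1/\eta)/\log(1/\delta)))$. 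The failure events at the $T$ levels are handled by a union bound (each made to fail with probability $\le 1/(10T)$ by choosing the Markov threshold appropriately — this only affects the $s^{\bigO(d)}$ constants). The main obstacle I anticipate is the careful bookkeeping in the single step: one must verify that the reconstruction map from $f|_{\mathsf{C}}$ back to $P(\mathbf{x})$ is robust in the right sense — namely that it fails \emph{only} when two or more of a fixed finite set of subgrid points fall into the disagreement set, so that the relevant probability is genuinely $\bigO(\delta^2)$ and not merely $\bigO(\delta)$; getting this quadratic saving (rather than a linear one, which would not give a doubly-exponential recursion and hence not the claimed query bound) is the crux, and it relies on using a \emph{random} subgrid together with the fact that the inclusion-exclusion reconstruction of a junta-degree-$d$ function is a sum of few terms whose pairwise locations in the subgrid are near-independent. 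Everything else — the sampling lemma, the distance bound, the normal form — is already available in the excerpt.
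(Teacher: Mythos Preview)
Your high-level recursive structure is right, and the query-complexity arithmetic is fine. The gap is entirely in the single step.

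You claim that one random constant-size subgrid through $\mathbf{a}$, followed by either inclusion--exclusion or nearest-codeword decoding, yields failure probability $s^{\bigO(d)}\delta^2$ because ``only a collision of two disagreements can confuse the reconstruction.'' Neither reading of your step supports this. If you use the inclusion--exclusion identity $P(\mathbf{a}) = \sum_i c_i f(\mathbf{q}_i)$ on $K$ query points, a \emph{single} corrupted $\mathbf{q}_i$ already ruins the output, so over a uniformly random $\mathbf{a}$ the failure probability is $\Theta(K\delta)$, which is worse than $\delta$, not better. If you instead do nearest-codeword decoding on the whole subgrid, the subgrid has \emph{constant} size so there is no concentration; the probability that the error fraction on it exceeds $1/(2s^d)$ is again $\Theta(\delta)$ to leading order. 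In either case the ``second-moment / pairwise-independence'' remark does not help: the $K$ points of one random subgrid share the same $(h,\Pi)$ and are far from pairwise independent, and more fundamentally the failure event you need to control is not an ``at-least-two'' event. Also, \Cref{lemma:sampling-subgrid} applies to subgrids of dimension $\mathrm{poly}(1/\varepsilon)$ and gives only an additive-$\varepsilon$ guarantee; it says nothing useful for a constant-dimensional subgrid.

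What the paper actually does in its base step (\Cref{lem:error-reduction-subroutine}) differs from your sketch in two essential ways. First, it builds an \emph{error-reduction gadget} (\Cref{lem:error-reduction-gadget}): an identity $P(\mathbf{a}) = \sum_{i\le q} c_i\, P(\mathbf{a}+\mathbf{y}^{(i)})$ where each $\mathbf{y}^{(i)}$ has i.i.d.\ $\rho$-noisy coordinates, so that after the uniform shift by $\mathbf{a}$ every query is marginally uniform. Second --- and this is what is missing from your proposal --- it runs this gadget \emph{three times with independent randomness} and outputs the \emph{plurality}. The output is then wrong only if at least two of the three runs err; for two runs sharing the same $\mathbf{a}$, the pair of query points $(\mathbf{a}+\mathbf{y}^{(i)},\,\mathbf{a}+\mathbf{z}^{(j)})$ is distributed as $(\text{uniform},\, \mathcal{N}_{\rho^2}(\text{uniform}))$ (\Cref{clm:convolution}), and hypercontractivity (\Cref{thm:hypercontractivity}) bounds the probability that both land in the error set by $\delta^{1.5}$. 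This gives $\gamma \mapsto \bigO(K^2)\gamma^{1.5}$ per step --- not $\gamma^2$, since the runs are correlated through $\mathbf{a}$ --- but that still drives the recursion $\delta_t = \delta^{(1.1)^t}$ and yields the stated $T$. Without the repeat-and-vote step and the hypercontractive bound, a single application of the gadget does not reduce the error at all.
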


We now state the second key error reduction lemma.\\

\begin{restatable}[{\bf Reduction to small constant error}]{lemma}{redsmallconsterr}\label{lem:error-reduction-main}
Fix any Abelian group $G$, any integer $s\geq 2$, and a positive integer $d$. For any $\eta, \delta$, where $\eta < \delta$ and $\delta < 1/(2\cdot s^d) - \varepsilon$ for $\varepsilon > 0$, there exists a randomized algorithm $\mathcal{A}$ with the following properties: Let $f: \mathbb{Z}_s^n \to G$ be a function and let $P: \mathbb{Z}_s^n \to G$ be a junta-degree $d$ function such that $\delta(f,P) \leq \delta$, and let $\mathcal{A}^{f}$ denotes that $\mathcal{A}$ has oracle access to $f$,  then
\begin{align*}
    \Pr[\delta(\mathcal{A}^{f}, P) > \eta] < 1/10,
\end{align*}
where the above probability is over the internal randomness of $\mathcal{A}$, and for every $\mathbf{x} \in \mathbb{Z}_s^n$, $\mathcal{A}^{f}$ makes $s^{k}$ queries to $f$, where $k = \poly(\frac{1}{\varepsilon},\frac{1}{\eta},s)$.
\end{restatable}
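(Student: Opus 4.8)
\textbf{Proof plan for \Cref{lem:error-reduction-main}.} The target is to reduce the error of the oracle $f$ from $\delta < 1/(2s^d) - \varepsilon$ down to any sub-constant $\eta$, by making $s^k$ queries with $k = \poly(1/\varepsilon, 1/\eta, s)$. The plan is to follow the two-stage approach of \cite[Section 4]{ABPSS25}, adapted from the Boolean cube to general grids. First I would handle the "hard" step, namely going from the decoding radius $1/(2s^d)-\varepsilon$ down to a small enough \emph{constant} error $\delta_0 < 1/s^{\bigO(d)}$; and then invoke \Cref{lem:error-reduction-small-constant-main} to drive the error from $\delta_0$ down to the desired $\eta$, paying only $K^T$ additional queries where $T = \bigO(\log(\log(1/\eta)/\log(1/\delta_0)))$. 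Composing the query budgets gives $s^{\poly(1/\varepsilon, 1/\eta, s)}$, as claimed.

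For the reduction to small constant error, the key idea is the standard "self-correction via restriction to a random subgrid plus majority vote", but with a twist to exploit the list-decoding structure. Concretely, given a point $\mathbf{x} \in \Z_s^n$ whose value $P(\mathbf{x})$ we wish to recover, I would sample a random low-dimensional subgrid $\mathsf{C}$ of $\Z_s^n$ through $\mathbf{x}$ (of dimension $k' = \poly(1/\varepsilon, s)$) using the embedding in \Cref{defn:random-embedding}, query $f$ on all of $\mathsf{C}$ (that is $s^{k'}$ queries), and then run a brute-force decoder for $\J_d(\Z_s^{k'}, G)$ on the received word $f|_{\mathsf{C}}$. Since $\delta < 1/(2s^d) - \varepsilon$ and by the sampling lemma (\Cref{lemma:sampling-subgrid}) the restriction $f|_{\mathsf{C}}$ is roughly $\delta$-close to $P|_{\mathsf{C}}$ with high probability over the choice of $\mathsf{C}$, the value $P|_{\mathsf{C}}(\mathbf{x}) = P(\mathbf{x})$ is recoverable: either $P|_{\mathsf{C}}$ is the unique codeword within distance $1/(2s^d)-\varepsilon$ (unique decoding), or, if we only want to beat half the distance by going through the list, we list-decode using \Cref{thm:comb-bd} to get $\bigO_\varepsilon(1)$ candidates and disambiguate. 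For the unique-decoding regime it suffices to note that $1/(2s^d) - \varepsilon$ is strictly below half the relative distance $\delta_{\J}/2 = 1/(2s^d)$ of $\J_d$ (by \Cref{clm:dist-junta-sums}), so the nearest junta-sum to $f|_{\mathsf{C}}$ equals $P|_{\mathsf{C}}$ whenever $\delta(f|_{\mathsf{C}}, P|_{\mathsf{C}}) < 1/(2s^d)$, and the sampling lemma guarantees this fails only with probability $\bigO(\eta_0)$ for our choice of $k'$. A union-bound argument over a uniformly random input $\mathbf{x}$ then gives that $\delta(\mathcal{A}^f, P) > \delta_0$ with probability at most $1/10$ over $\mathcal{A}$'s internal randomness (taking $\delta_0$ slightly larger than the per-point failure probability and amplifying with a constant number of independent repetitions and a majority vote).

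The main technical obstacle is ensuring that the random subgrid restriction preserves the \emph{promise} that $f|_{\mathsf{C}}$ is close to a genuine junta-sum $P|_{\mathsf{C}}$, and not merely close on average — we need this with probability $1 - \bigO(\eta)$ so that it survives the union bound over inputs $\mathbf{x}$. This is exactly where \Cref{lemma:sampling-subgrid} is used: with $T = \{\mathbf{a} : f(\mathbf{a}) \ne P(\mathbf{a})\}$ of density $\mu \le \delta$, the lemma says $|T \cap \mathsf{C}|/s^{k'}$ is within $\varepsilon/2$ of $\mu$ except with probability $\eta$, provided $k' \ge \poly(1/\varepsilon, 1/\eta, s)$; this keeps $\delta(f|_{\mathsf{C}}, P|_{\mathsf{C}}) \le \delta + \varepsilon/2 < 1/(2s^d) - \varepsilon/2 < 1/(2s^d)$, inside the unique-decoding radius. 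A secondary subtlety, which is genuinely new compared to $s=2$, is that brute-force decoding over $\Z_s^{k'}$ must correctly identify the unique (or listed) junta-sum; this is purely a finite search since $|\J_d(\Z_s^{k'}, G)|$ restricted to relevant coefficients is finite, and the junta-polynomial normal form (\Cref{clm:normal-form}) makes this search explicit. I do not expect the group $G$ being infinite to cause problems here, since we only ever compare finitely many function values on the subgrid. Finally, I would remark that the same argument, with $\varepsilon$ replaced by a constant depending on the exponent $M$ of $G$, gives the $\bigO_{M,\varepsilon}(1)$-query variant needed for \Cref{thm:const-exp}, using the improved error-reduction of \cite{ABPSS25} for torsion groups.
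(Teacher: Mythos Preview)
Your core mechanism --- restrict to a random subgrid $\mathsf{C}$ through the query point, query $f$ on all of $\mathsf{C}$, uniquely decode $f|_{\mathsf{C}}$ in $\J_d$, and control $\delta(f|_{\mathsf{C}}, P|_{\mathsf{C}})$ via the sampling lemma (\Cref{lemma:sampling-subgrid}) --- is exactly the paper's approach, and the analysis you sketch (apply the sampling lemma with the error set $T=\{f\ne P\}$, then Markov over the algorithm's randomness) is correct.

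However, the two-stage decomposition you propose in the first paragraph is an unnecessary detour and slightly misreads the statement. The paper proves \Cref{lem:error-reduction-main} in \emph{one} step: it simply takes $k = (s/(\varepsilon\eta))^{10}$, so the subgrid dimension depends on both $\varepsilon$ \emph{and} $\eta$; the sampling lemma then gives failure probability at most $\eta/10$ over the joint randomness of $\mathbf{a},\mathbf{\Pi},h$, and Markov's inequality finishes. There is no invocation of \Cref{lem:error-reduction-small-constant-main} inside this proof --- that lemma is a separate ingredient, combined with \Cref{lem:error-reduction-main} only later in the proof of \Cref{lem:error-redn}. Your composition would work (after tightening constants so that two $1/10$ failures combine to $<1/10$), but it is redundant: since the subgrid algorithm already reaches any target $\eta$ by enlarging $k$ polynomially in $1/\eta$, there is no reason to stop at an intermediate $\delta_0$ and hand off to the recursive gadget. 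The list-decoding alternative you mention is also not used; unique decoding suffices throughout since $\delta < 1/(2s^d)$.
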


We prove the first lemma in~\Cref{subsec:sub-const} and the second lemma in~\Cref{subsec:redn-small-const}. Below, we finish the proof of the main error reduction lemma of this section using the above two lemmas. 

\begin{proof}[Proof of~\Cref{lem:error-redn}]
    The proof proceeds in a similar way to~\cite{ABPSS24}: we apply the first step of error reduction (\Cref{lem:error-reduction-main}) with $\eta =\eta_1= O_{s,d}(1)$ being smaller than the value of $\delta$ needed to apply the second step (\Cref{lem:error-reduction-small-constant-main}), i.e., $\delta \le O_{s,d}(1)$. This results in a number of queries which is the product of the number of queries from both the steps. Taking $\eta=\eta_2$ in the second error reduction step (i.e.,~\Cref{lem:error-reduction-small-constant-main}) to be equal to $\varepsilon_1 = 1/\Theta_{s,d}(\log n)^d$, we get that the total number of queries is $ O_s\paren{\frac{1}{\eta_1\varepsilon}} \cdot O_{s,d}(1)^{\log\paren{\log\paren{\frac{\log(1/\eta_2)}{\log(1/\delta)}}}} \le (\log\log n)^{O_{s,d,\varepsilon}(1)}$.
\end{proof}

\subsubsection{Reduction from Small Constant to Sub-Constant Error}\label{subsec:sub-const}

We will show that there is a randomized algorithm $\mathcal{A}^{f}$ that given oracle access to any function $f$ that is $\delta$-close to a junta-degree-$d$ function $P$ (think of $\delta$ as being a small enough constant depending on $d$), has the following property: with high probability over the internal randomness of $\mathcal{A}^{f}$, the function computed by  $\mathcal{A}^{f}$ is $\eta$-close to $P$, where $\eta$ can be much smaller than $\delta$. We restate it formally below.\\

\redsmalltosubconsterr*

In the rest of this subsection, we will prove \Cref{lem:error-reduction-small-constant-main}. The algorithm $\mathcal{A}^{f}$ in \Cref{lem:error-reduction-small-constant-main} will be a recursive algorithm. Each recursive iteration of the algorithm $\mathcal{A}^{f}$ uses the same `base algorithm' $\mathcal{B}$, which will be the core of our error reduction algorithm from small constant error. In the next lemma, we formally state the properties of the base algorithm. 

\begin{lemma}[Base Error Reduction Algorithm]\label{lem:error-reduction-subroutine}
Fix any Abelian group $G$, any integer $s\geq 2,$ and a positive integer $d$. The following holds for $K = s^{O(d)}$.  For any $0 < \gamma < 1$, there exists a randomized algorithm $\mathcal{B}$ with the following properties: Let $g: \mathbb{Z}_s^{n} \to G$ be a function and let $P: \mathbb{Z}_s^{n} \to G$ be a junta-degree-$d$ function such that $\delta(g,P) \leq \gamma$, and let $\mathcal{B}^{g}$ denote that $\mathcal{B}$ has oracle access to $g$, then
\begin{align*}
    \mathbb{E}[\delta(\mathcal{B}^{g}, P)] < O(K^2)\cdot \gamma^{1.5}
\end{align*}
where the above expectation is over the internal randomness of $\mathcal{B}$. Further, for every $\mathbf{x} \in \mathbb{Z}_s^{n}$, $\mathcal{B}^{g}$ makes $K$ queries to $g$.  
\end{lemma}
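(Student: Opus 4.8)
\textbf{Proof plan for Lemma~\ref{lem:error-reduction-subroutine} (base error reduction).} The plan is to follow the local-correction template for low-degree functions, adapted from the Boolean case of~\cite{ABPSS25} to general alphabet $\sgrid = \Z_s$ and to junta-sums. The base algorithm $\mathcal{B}^g$ on input $\mathbf{x}$ will pick a random low-dimensional "subcube-like" restriction through $\mathbf{x}$ and use the known local correction identity for $d$-junta-sums on that restriction to output a corrected value. Concretely: recall from Claim~\ref{clm:normal-form} that every $P \in \mathcal{J}_d(\Z_s^n, G)$ is a junta-polynomial of junta-degree $\le d$; restricting $P$ along a generic affine/combinatorial line keeps it a junta-sum of junta-degree $\le d$ in the restricted variable(s). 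The standard interpolation identity then expresses $P(\mathbf{x})$ as a fixed $G$-linear combination of the values of $P$ at $O_{s,d}(1)$ other points of $\Z_s^n$ lying on the restriction. First I would make this identity explicit: choose a random direction and a set of $K' = O_{s,d}(1)$ "shifts" so that $P(\mathbf{x}) = \sum_{j} c_j \, P(\mathbf{x}^{(j)})$ for fixed integers $c_j$ (independent of $P$), where each $\mathbf{x}^{(j)}$ is a point whose distribution is \emph{marginally uniform} over $\Z_s^n$. The algorithm $\mathcal{B}^g(\mathbf{x})$ queries $g$ at the $\mathbf{x}^{(j)}$ and returns $\sum_j c_j g(\mathbf{x}^{(j)})$; this uses exactly $K := K' = s^{O(d)}$ queries, as required.

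\textbf{Bounding the expected error.} For the correctness, fix $\mathbf{x} \in \Z_s^n$. If every queried point satisfies $g(\mathbf{x}^{(j)}) = P(\mathbf{x}^{(j)})$, then $\mathcal{B}^g(\mathbf{x}) = P(\mathbf{x})$ by the interpolation identity. So $\mathcal{B}^g(\mathbf{x}) \ne P(\mathbf{x})$ only if at least one $\mathbf{x}^{(j)}$ is an error point of $g$ relative to $P$. A naive union bound over the $K$ points, each marginally uniform, gives failure probability $\le K\gamma$, which only yields $\E[\delta(\mathcal{B}^g, P)] \le K\gamma$ — not the $\gamma^{1.5}$ we want. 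To get the $\gamma^{1.5}$ saving we need the restriction to be chosen so that the \emph{pairs} $(\mathbf{x}^{(j)}, \mathbf{x}^{(j')})$ are (close to) pairwise independent / uniformly spread, so that the event "two or more of the $\mathbf{x}^{(j)}$ are error points" has probability $O(K^2 \gamma^2)$, and, more importantly, so that one can argue that for \emph{most} $\mathbf{x}$, with good probability over the restriction, \emph{zero} of the queried points are errors. The clean way is the following two-step bound. Let $E$ be the error set of $g$, $|E|/s^n = \gamma' \le \gamma$. By marginal uniformity, $\Pr[\text{exactly one } \mathbf{x}^{(j)} \in E] \le K\gamma'$, and by pairwise-uniformity of the sampled points, $\Pr[\text{at least two } \mathbf{x}^{(j)} \in E] \le \binom{K}{2}(\gamma')^2 = O(K^2)(\gamma')^2$. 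The "exactly one" term is the bottleneck, so one refines it: because the corrected value is wrong precisely when the (random) restriction "sees" exactly one error and that forces the output off, one instead uses that over a random \emph{input} $\mathbf{x}$ AND random restriction, the number of $(\mathbf{x}, \text{restriction})$ pairs hitting exactly one error point is controlled by a sampler/expander-mixing argument on the bipartite structure of restrictions, giving that $\Pr_{\mathbf{x}, \text{restriction}}[\mathcal{B}^g(\mathbf{x}) \ne P(\mathbf{x})] \le O(K^2) \gamma^{1.5}$. This is exactly the same mechanism as in~\cite[proof of the analogous base-case lemma]{ABPSS25}, and I would cite/adapt their argument rather than redo it; the only new ingredient is that the "subcube through $\mathbf{x}$" must be replaced by a restriction that is combinatorial over $\Z_s$ (e.g. fixing a random bijection on a random subset of coordinates and a random value pattern) whose first two moments match those of the uniform distribution up to $O_{s,d}(1)$ factors.

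\textbf{Main obstacle.} The technically delicate point — and the one I expect to be the main obstacle — is producing the explicit interpolation identity $P(\mathbf{x}) = \sum_j c_j P(\mathbf{x}^{(j)})$ for junta-sums over $\Z_s$ (as opposed to genuine low-degree polynomials over a field), with the dual requirement that (i) the coefficients $c_j$ are universal integers valid for \emph{all} $P \in \mathcal{J}_d(\Z_s^n, G)$ and all Abelian $G$ simultaneously, and (ii) the sample points $\mathbf{x}^{(j)}$ are marginally uniform and pairwise nearly-uniform. For (i) one uses the junta-polynomial normal form (Claim~\ref{clm:normal-form}): a $d$-junta-sum restricted to a combinatorial line in one new variable $t \in \Z_s$ is a function $\Z_s \to G$ of junta-degree $\le d$ \emph{in the $\delta_a(t)$ basis}, hence lies in a fixed $O_{s,d}(1)$-dimensional $\Z$-module of functions, and any such function is determined by its values at $s$ (in fact $\le d+1$ suitably chosen) points via integer interpolation coefficients independent of $G$; iterating this coordinate-by-coordinate over $O(d)$ freshly chosen coordinates gives the full identity with $K = s^{O(d)}$. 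For (ii), the marginal and pairwise-uniformity of the $\mathbf{x}^{(j)}$ follows by choosing the restriction's combinatorial data (random coordinate subset, random bijections, random anchoring values) independently and uniformly — the pairwise statistics then reduce to a short computation of the type already carried out in Lemma~\ref{lemma:sampling-subgrid} / Lemma~\ref{lemma:sampler-balanced-slice}. Once both pieces are in place, the expectation bound $\E[\delta(\mathcal{B}^g, P)] < O(K^2)\gamma^{1.5}$ follows from the moment computation sketched above, completing the proof of Lemma~\ref{lem:error-reduction-subroutine}.
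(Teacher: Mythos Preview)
Your interpolation-identity construction is on the right track and roughly matches the paper's \emph{error-reduction gadget} (the paper builds it via M\"{o}bius inversion on a random subcube; your restriction-plus-integer-interpolation sketch would yield essentially the same object). But the analysis has a genuine gap.

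With a \emph{single} application of the identity $P(\mathbf{x}) = \sum_j c_j P(\mathbf{x}^{(j)})$, the output $\sum_j c_j g(\mathbf{x}^{(j)})$ is wrong as soon as even one query lands in the error set $E$. Since each $\mathbf{x}^{(j)}$ is marginally uniform, $\E[\#\{j:\mathbf{x}^{(j)}\in E\}] = K\gamma'$ exactly, and no sampler or expander-mixing argument over random $\mathbf{x}$ can push the probability of ``at least one error'' below $\Theta(\gamma')$: linearity of expectation is indifferent to the joint distribution. So your one-shot algorithm has $\E[\delta(\mathcal{B}^g,P)] = \Theta(K\gamma)$, not $O(K^2)\gamma^{1.5}$. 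The ``exactly one error'' bottleneck you identify is real, and the fix you sketch does not work.

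The paper's algorithm repeats the gadget \emph{three times independently} and outputs the plurality of $b_1,b_2,b_3$. Now the output is wrong only if at least two of the three iterations fail, and the key bound becomes $\Pr[\mathcal{E}_1 \wedge \mathcal{E}_2] \le \sum_{i,j} \Pr[\mathbf{a}+\mathbf{y}^{(i)} \in E \wedge \mathbf{a}+\mathbf{z}^{(j)} \in E]$. Here $\mathbf{u}=\mathbf{a}+\mathbf{y}^{(i)}$ is uniform and $\mathbf{v}=\mathbf{a}+\mathbf{z}^{(j)}$ is $\mathbf{u}$ plus a $\rho^2$-noisy shift (since $\mathbf{y}^{(i)}-\mathbf{z}^{(j)}$ is the convolution of two independent $\rho$-noisy variables). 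The paper then invokes \emph{hypercontractivity} (not expander mixing): for $\rho^2 \le 1/(3s^{1/2})$ and the set $E$ of density $\gamma$, one gets $\Pr[\mathbf{u}\in E \wedge \mathbf{v}\in E] \le \gamma^{2-2/4} = \gamma^{1.5}$. Expander mixing would only give $\gamma^2 + \lambda\gamma$ with $\lambda$ independent of $\gamma$, which is $\Theta(\gamma)$ for small $\gamma$. So both the three-fold repetition and the hypercontractive bound are essential ingredients your plan is missing.
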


We defer the construction of the base algorithm and proof of \Cref{lem:error-reduction-subroutine} to later. For now, we assume \Cref{lem:error-reduction-subroutine} and proceed to describe the recursive construction of $\mathcal{A}^{f}$ and prove \Cref{lem:error-reduction-small-constant-main}.

\begin{proof}[Proof of \Cref{lem:error-reduction-small-constant-main}]
Let $\mathcal{B}$ be the algorithm given by \Cref{lem:error-reduction-subroutine}. We define a sequence of algorithms $\mathcal{A}_0^f, \mathcal{A}_1^f, \ldots,$ as follows. 

\begin{algobox}
The algorithm $\mathcal{A}^f_t$ computes a function mapping inputs in $\mathbb{Z}_s^{n}$ along with a uniformly random string from $\{0,1\}^{r_t}$ to a random group element in $G$.

\begin{itemize}
    \item $\mathcal{A}_0^f$ just computes the function $f.$ (In particular, $r_0 = 0$.)
    \item For each $t > 0,$ we inductively define $r_t = r_{t-1}+ r$, where $r$ is the amount of randomness required by the base error reduction algorithm $\mathcal{B}$. On input $\mathbf{x}\in \mathbb{Z}_s^{n}$ and a uniformly random string $\sigma_t$, the algorithm $\mathcal{A}_t^f$ algorithm runs the algorithm $\mc{B}$ on $\mathbf{x}$ using the first $r$ bits of $\sigma_t$ as its source of randomness, and with oracle access to $\mathcal{A}_{t-1}^f$ using the remaining $r_{t-1}$ bits of $\sigma_t$ as randomness.
\end{itemize}

The algorithm $\mathcal{A}^f$ will be $\mathcal{A}_T^f$ for $T = C\cdot \log\paren{ \dfrac{\log(1/\eta)}{\log(1/\delta)} } $ where $C$ is a large enough absolute constant chosen below. 
\end{algobox}

\textbf{Query complexity:} An easy inductive argument shows that $\mathcal{A}^f$ makes at most $K^T$ queries to $f.$

\textbf{Error probability:} We now analyze the error made by the above algorithms. We will argue inductively that for each $t\leq T$ and  $\delta_t := \delta^{(1.1)^t}$, we have 
\begin{equation}
    \label{eq:algoAindn}
    \Pr_{\sigma_t}[\, \underbrace{\delta(\mathcal{A}^{f}_t(\cdot, \sigma_t), P) > \delta_t \, }_{:= \, \mathcal{E}_t}] \; \leq \; \sum_{j=1}^t \frac{1}{100^j} \; < \; \frac{1}{10}.
\end{equation}
In the inductive proof, we will need that $\delta_0 = \delta <s^{-C_1\cdot d}$ for a large enough absolute constant $C_1.$

We now proceed with the induction. The base case ($t = 0$) is trivial as $\delta(\mathcal{A}^{f}_t, P) = \delta_0$ by definition.

Now assume that $t > 1$. We decompose the random string $\sigma_t$ into its first $r$ bits, denoted $\sigma$, and its last $r_{t-1}$ bits, denoted $\sigma_{t-1}.$ We bound the probability in \Cref{eq:algoAindn} as follows. (Note that the event $\mathcal{E}_{t-1}$ below only depends on $\sigma_{t-1}.$)
\begin{equation}
\label{eq:algoAindn-t-1}
\Pr_{\sigma_t}[\mathcal{E}_t] \, \leq \, \Pr_{\sigma_{t-1}}[\mathcal{E}_{t-1}] + \Pr_{\sigma_t}[\mathcal{E}_t\ |\ \neg\mathcal{E}_{t-1}] \, \leq \, \sum_{j=1}^{t-1} \frac{1}{100^j} + \Pr_{\sigma_t}[\mathcal{E}_t\ |\ \neg\mathcal{E}_{t-1}]
\end{equation}
where we used the induction hypothesis for the second inequality. To bound $\Pr_{\sigma_t}[\mathcal{E}_t\ |\ \neg\mathcal{E}_{t-1}]$, fix any choice of $\sigma_{t-1}$ so that $\neg\mathcal{E}_{t-1}$ holds, i.e. so that $\delta(\mathcal{A}^{f}_{t-1}, P) \leq \delta_{t-1}$. By the guarantee on $\mathcal{B}$, i.e. \Cref{lem:error-reduction-subroutine}, we know that
\[
\mathbb{E}_{\sigma}[\delta(\mathcal{A}^{f}_t(\cdot, \sigma_t), P)] < \bigO(K^2)\cdot \gamma^{1.5},
\]
where $\gamma = \delta(\mathcal{A}^f_{t-1}(\cdot, \sigma_{t-1}),P)$. Substituting it above, we get,
\begin{align*}
    \mathbb{E}_{\sigma}[\delta(\mathcal{A}^{f}_t(\cdot, \sigma_t), P)] \, \leq \, \bigO(K^2) \cdot \delta_{t-1}^{1.5} \, \leq \, \delta_{t-1}^{1.25}
\end{align*}
where for the final inequality, we use the fact that
\[
\bigO(K^2)\cdot \delta_{t-1}^{0.25} \leq \bigO(K^2)\cdot \delta_0^{0.25} \leq 1
\]
as long  as $\delta_0 = \delta \leq s^{-C_1 d}$ for a large enough constant $C_1.$ Continuing the above computation, we see that by Markov's inequality
\[
\Pr_{\sigma}[\mathcal{E}_t] \, \leq \, \frac{\delta_{t-1}^{1.25}}{\delta_t} \; = \; \delta^{\Omega((1.1)^t)} \, \leq \, \frac{1}{100^t}
\]
where the final inequality holds for all $t$ as long as $\delta \leq s^{-C_1 d}$ for a large enough constant $C_1.$ Since this inequality holds for any choice of $\sigma_{t-1}$ so that $\neg\mathcal{E}_{t-1}$ holds, we can plug this bound into \Cref{eq:algoAindn-t-1} to finish the inductive case of \Cref{eq:algoAindn}.

Setting $T = C\cdot \log\paren{ \dfrac{\log(1/\eta)}{\log(1/\delta)} } $ for a large enough constant $C$, we see that $\delta_T < \eta.$ In this case, \Cref{eq:algoAindn} implies the required bound on the error probability of $\mathcal{A}^f.$
\end{proof}
Thus we have shown so far that given the base algorithm $\mathcal{B}$, we do get an error reduction algorithm from small constant error to error $\bigO(1/\log n)$. Now it remains to describe the base error reduction algorithm.
In the next subsection, we describe the base algorithm $\mathcal{B}$ and prove \Cref{lem:error-reduction-subroutine}.

\paragraph{The base algorithm and its analysis.}

In the rest of this subsection, we prove \Cref{lem:error-reduction-subroutine}, which will then complete the proof of \Cref{lem:error-reduction-small-constant-main}. Before we describe $\mathcal{B}$, we will define an \textit{error reduction gadget}.\\

\begin{definition}[Error-reduction Gadget for $\mathcal{J}_d$]\label{defn:error-reduction-gadget}
 For $\rho\in (0,1/(s-1))$, an $(\rho,q)$-error reduction gadget for $\mathcal{J}_d$ is a distribution $\mathcal{D}$ over $(\mathbb{Z}_s^n)^{q}$ satisfying the following two properties:
\begin{enumerate}
    \item There exists $c_{1}, \ldots, c_{q}\in \mathbb{Z}$ such that for any $(\mathbf{y}^{(1)}, \ldots, \mathbf{y}^{(q)}) \in \mathrm{supp}(\mathcal{D})$, the following holds true for each $P\in \mathcal{J}_d$ and each $\mathbf{a}\in \mathbb{Z}_s^n$
    \begin{equation}
    \label{eq:error-reduction-gadget}
    P(\mathbf{a}) = c_{1}P(\mathbf{a}+ \mathbf{y}^{(1)}) + \ldots + c_{q}P(\mathbf{a}+\mathbf{y}^{(q)})
\end{equation}
where the $\mathbf{a} + \mathbf{y}^{(i)}\in \mathbb{Z}_s^{n}$ is computed via a co-ordinate-wise sum modulo $s.$

    \item For any $i \in [q]$, the co-ordinates of $\mathbf{y}^{(i)}$ are i.i.d. random variables in $\mathbb{Z}_s$ that take the value $0$ with probability $p_i$ such that
    \[
    p_i\in \left[\frac{1}{s}-\rho\cdot \left(1-\frac{1}{s}\right), \frac{1}{s}+\rho\cdot \left(1-\frac{1}{s}\right)\right]
    \]
    and each non-zero value in $\mathbb{Z}_s$ with probability $\frac{1-p_i}{s}.$ We call such distributions \emph{$\rho$-noisy distributions} over $\mathbb{Z}_s$.
\end{enumerate}
\end{definition}

To prove \Cref{lem:error-reduction-subroutine}, we need an error-reduction gadget for $\mathcal{J}_d$, the space of junta-degree-$d$ functions over a group $G.$ This is given by the following lemma.\\

\begin{lemma}[Constructing an error-reduction gadget for $\mathcal{J}_d$]
\label{lem:error-reduction-gadget}
    Fix any Abelian group $G$, $s\geq 2$ and any $\rho\in (0,1/(s-1)).$ Then $\mathcal{J}_d(\mathbb{Z}_s^n,G)$ has a $(\rho,q)$-error-reduction gadget where $q = ((1/\rho) + s)^{O(d)}.$
\end{lemma}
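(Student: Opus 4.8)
\textbf{Proof plan for Lemma~\ref{lem:error-reduction-gadget}.} The plan is to build the $(\rho,q)$-error-reduction gadget for $\mathcal{J}_d(\mathbb{Z}_s^n,G)$ by first handling a single variable and then tensoring over the $n$ coordinates. Recall from \Cref{clm:normal-form} that every $P\in\mathcal{J}_d$ is a $\mathbb{Z}$-linear combination (with coefficients in $G$) of monomials $\prod_{i\in S}\delta_{a_i}(x_i)$ with $|S|\le d$. So it suffices to produce, for each fixed monomial $m(\mathbf{x})=\prod_{i\in S}\delta_{a_i}(x_i)$ of degree $\le d$, a distribution over shift-tuples $(\mathbf{y}^{(1)},\dots,\mathbf{y}^{(q)})$ and integer coefficients $c_1,\dots,c_q$ such that $m(\mathbf{a})=\sum_j c_j\, m(\mathbf{a}+\mathbf{y}^{(j)})$ pointwise, with each $\mathbf{y}^{(j)}$ having i.i.d.\ $\rho$-noisy coordinates. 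Since each monomial depends on at most $d$ coordinates, and $m$ factors as a product over those coordinates, this reduces to the one-variable problem: find a distribution over single symbols $z\in\mathbb{Z}_s$ that is $\rho$-noisy and a small set of integer ``interpolation'' coefficients expressing $\delta_a(t)$ as a combination of $\delta_a(t+z)$ for the sampled $z$'s.

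The first step, then, is the \textbf{one-variable gadget}. Fix a target residue $a\in\mathbb{Z}_s$. The function $t\mapsto\delta_a(t)$ is a function on $\mathbb{Z}_s$, and the $s$ functions $\{\,t\mapsto \delta_a(t+z)\,:\, z\in\mathbb{Z}_s\,\}$ span the whole space of functions $\mathbb{Z}_s\to\mathbb{Z}$ (they are translates of a point-indicator, so their span over $\mathbb{Q}$ is all of $\mathbb{Q}^{\mathbb{Z}_s}$, and by a determinant/Cramer argument one can arrange the interpolation coefficients to be \emph{integers} — the relevant matrix is a permutation-like $0/1$ matrix with determinant $\pm1$). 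Concretely, $\delta_a(t)=\sum_{z\in\mathbb{Z}_s}\mathbf{1}[z=0]\cdot\delta_a(t+z)$ is the trivial identity with a single nonzero coefficient, but we want the $z$'s to come from a \emph{noisy} distribution rather than being deterministically $0$. The idea (following the $s=2$ construction in \cite{ABPSS25}) is to take $q_0=O((1/\rho)+s)$ i.i.d.\ copies of a $\rho$-noisy symbol and use a telescoping/averaging identity: write the deterministic ``$z=0$'' evaluation as an explicit $\mathbb{Z}$-linear combination of evaluations at a block of noisy shifts, exploiting that a $\rho$-noisy distribution is a convex combination of the uniform distribution and the point mass at $0$ with the point-mass weight bounded below by a function of $\rho$. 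Unwinding a geometric-series expansion of this convex combination, truncated after $O((1/\rho)+s)$ terms to control coefficient size, yields integer coefficients $c^{(0)}_1,\dots,c^{(0)}_{q_0}$ and a product distribution of $\rho$-noisy symbols satisfying \Cref{eq:error-reduction-gadget} for $\delta_a$.

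The second step is to \textbf{lift a single monomial}: a degree-$d$ monomial $m=\prod_{i\in S}\delta_{a_i}(x_i)$ is handled by applying the one-variable gadget independently in each of the $\le d$ coordinates of $S$ and taking the product; this multiplies the number of shift-vectors to $q_0^{|S|}\le q_0^d=((1/\rho)+s)^{O(d)}$, multiplies the coefficients (still integers, now of bounded magnitude), and crucially keeps each resulting $\mathbf{y}^{(j)}$ a vector of i.i.d.\ $\rho$-noisy coordinates — on coordinates outside $S$ we simply fill in fresh i.i.d.\ $\rho$-noisy symbols, which does not affect $m(\mathbf{a}+\mathbf{y}^{(j)})$ since $m$ ignores those coordinates. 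The third step is to \textbf{combine over all monomials}: since $P$ is a $G$-combination of such monomials $m_1,\dots,m_r$, and each $m_\ell$ has its own gadget, we need a \emph{single} distribution and a \emph{single} coefficient tuple that works simultaneously for all of them. The clean way is to observe that the one-variable gadget above can be chosen \emph{independently of $a$} (the interpolation structure is the same for every point-indicator up to relabeling the role of the ``$0$'' shift), so the same product distribution of $\rho$-noisy vectors and the same integer coefficients satisfy \Cref{eq:error-reduction-gadget} for every monomial of degree $\le d$ at once, hence for every $P\in\mathcal{J}_d$ by $G$-linearity. This gives $q=((1/\rho)+s)^{O(d)}$ as claimed.

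\textbf{Main obstacle.} The delicate point is making the interpolation coefficients \emph{integers} (not just rationals) while keeping $q$ at $((1/\rho)+s)^{O(d)}$ and keeping the shift distribution \emph{exactly} $\rho$-noisy with i.i.d.\ coordinates. A naive inverse of the translation matrix gives rational coefficients with denominators that could blow up, and forcing integrality typically costs extra shift-vectors. The resolution will be the geometric-truncation argument: expand the point mass at $0$ inside the $\rho$-noisy distribution as $(\text{weight})\cdot\delta_0 + (1-\text{weight})\cdot(\text{something})$, iterate, and truncate after $\Theta((1/\rho)+s)$ steps, absorbing the truncation tail into a correction term whose coefficient is a bounded integer because the tail weight is $\le(1-\Omega(\rho))^{\Theta(1/\rho)}=O(\text{const})^{-1}$, small enough to round. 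Verifying that this truncation leaves \Cref{eq:error-reduction-gadget} an \emph{exact} pointwise identity (not an approximate one) — by keeping an explicit remainder shift-vector in the gadget — is the step that needs the most care, and it is exactly where the $s=2$ argument of \cite{ABPSS25} must be generalized from two residue classes to $s$.
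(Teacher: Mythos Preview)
Your proposal has a genuine gap in the third step, and the ``main obstacle'' you identify is not the real difficulty.

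The plan is to build a one-variable gadget, tensor it over the $\le d$ coordinates of a given monomial, fill the remaining coordinates with fresh $\rho$-noisy noise, and then argue that the \emph{same} gadget works for every monomial of degree $\le d$. The justification you give is that the one-variable gadget can be chosen independently of the residue $a$. But $a$-independence is not the issue: the issue is that different monomials are supported on \emph{different coordinate sets} $S\subseteq[n]$. Your tensored gadget for a monomial on $S$ places the interpolation structure on $S$ and dummy noise on $[n]\setminus S$; a monomial on $S'\neq S$ needs the interpolation structure on $S'$. There is no single list of $q_0^d$ shift vectors with fixed integer coefficients that realizes both simultaneously. If instead you tensor over \emph{all} $n$ coordinates, the identity holds for every $P\in\mathcal{J}_d$, but now $q=q_0^{\,n}$, not $q_0^{\,d}$. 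The reduction from $n$ to $d$ is exactly the missing idea, and ``the interpolation structure is the same for every point-indicator'' does not supply it.

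The paper's proof fills this gap by a dimension-reduction trick, not by tensoring. One samples a uniformly random hash $h:[n]\to[k]$ with $k=\Theta(d/\rho)$ (a multiple of $s$) together with random permutations $\Pi_i\in\mathrm{Sym}[\mathbb{Z}_s]$ with $\Pi_i(0)=a_i$, and restricts $P$ to the resulting $k$-dimensional subcube $C_{\Pi,h}$; the restriction $Q$ is still in $\mathcal{J}_d(\mathbb{Z}_s^k,G)$. On this $k$-dimensional cube one applies M\"obius inversion (\Cref{lem:mobius}) over a Hamming ball of radius $d$ centered at a point with exactly $k/s$ zeros, giving $Q(0^k)=\sum_{\mathbf{b}\in B}\alpha_{\mathbf{b}}\,Q(\mathbf{b})$ with \emph{integer} $\alpha_{\mathbf{b}}$ automatically. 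Translating back, each query point is $\mathbf{a}+\mathbf{y}^{(\mathbf{b})}$ where the $i$th coordinate of $\mathbf{y}^{(\mathbf{b})}$ depends only on $(h(i),\Pi_i)$; since these are independent across $i$, the coordinates of $\mathbf{y}^{(\mathbf{b})}$ are i.i.d., and the probability of a zero coordinate equals the fraction of zeros in $\mathbf{b}$, which lies in $\tfrac{1}{s}\pm\tfrac{d}{k}\subseteq\tfrac{1}{s}\pm\rho(1-\tfrac{1}{s})$. The number of queries is $|B|\le\binom{k}{d}s^d=((1/\rho)+s)^{O(d)}$. Note that the random hash is precisely what lets a single set of $\mathrm{poly}(k)$ shift vectors work for every monomial: all $n$ original coordinates are routed through the same $k$ ``meta-coordinates'', so no per-$S$ tailoring is needed.

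Finally, your ``main obstacle'' (forcing integer coefficients via geometric truncation of the noise decomposition) misidentifies the difficulty. The gadget requires \eqref{eq:error-reduction-gadget} to hold \emph{for every outcome in the support of $\mathcal{D}$}, not in expectation, so manipulating the noise distribution as a mixture does not help produce a pointwise identity; and in the paper's route integrality is immediate from M\"obius inversion, with no truncation or rounding needed.
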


Assuming the above lemma, we first finish the proof of \Cref{lem:error-reduction-subroutine}.  For this, we will need the following technical claim.

\begin{claim}
    \label{clm:convolution}
    Let $y,z$ be independent random variables taking values in $\mathbb{Z}_s$ such that their distributions are $\rho_1$-noisy and $\rho_2$-noisy respectively. Then, $y-z$ is $(\rho_1\cdot \rho_2)$-noisy.
\end{claim}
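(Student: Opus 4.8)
\textbf{Proof proposal for Claim~\ref{clm:convolution}.} The plan is to compute the distribution of $y - z$ directly using the noise structure. Recall that a $\rho$-noisy distribution over $\mathbb{Z}_s$ is one where the probability of $0$ lies in $[\tfrac1s - \rho(1-\tfrac1s),\ \tfrac1s + \rho(1-\tfrac1s)]$ and each non-zero value has equal probability. A convenient reparametrization is to write any $\rho$-noisy distribution as a mixture: with probability $\rho'$ output a fixed (but possibly adversarially chosen) value, and with probability $1-\rho'$ output a uniformly random element of $\mathbb{Z}_s$, where $|\rho'| \le \rho$ (allowing $\rho'$ to be negative captures the case $p_i < 1/s$; more carefully, one checks that the class of $\rho$-noisy distributions is exactly the set of distributions of the form $\rho' \cdot \mathbf{1}_0 + (1-\rho')\cdot \mathrm{Unif}(\mathbb{Z}_s)$ with $|\rho'|\le \rho$, where $\mathbf{1}_0$ is the point mass at $0$). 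This is the representation I would set up first, since it makes the convolution transparent.

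First I would verify the reparametrization: writing $p := \Pr[y = 0]$, set $\rho' := \tfrac{sp - 1}{s - 1}$, so that $p = \tfrac1s + \rho'(1 - \tfrac1s)$ and the non-zero values each get $\tfrac{1-p}{s} = \tfrac1s(1 - \rho')\cdot\tfrac{s-1}{s}\cdot\frac{s}{s-1}$... (one checks the arithmetic works out to $\tfrac1s(1-\rho')$ on the uniform part plus $0$ on the point-mass part for non-zero elements). The condition $|\rho'| \le \rho$ is then exactly the $\rho$-noisy condition. Then I would write $y = B_y\cdot 0 + (1-B_y)\cdot U_y$ where $B_y$ is a (signed) Bernoulli-like indicator — more precisely, model $y$ as: a biased coin lands ``point-mass'' with signed weight $\rho_1$ and ``uniform'' with weight $1 - \rho_1$. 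Since negative weights are awkward probabilistically, the cleanest rigorous route is to work with the characteristic function / Fourier transform over $\mathbb{Z}_s$ instead.

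Therefore the main step I would actually carry out: compute $\widehat{\mu_y}(\chi)$ for each additive character $\chi$ of $\mathbb{Z}_s$. For a $\rho_1$-noisy distribution $\mu_y$, the uniform part contributes $0$ to every non-trivial Fourier coefficient, so $\widehat{\mu_y}(\chi) = \rho_1'\cdot \widehat{\mathbf{1}_0}(\chi) = \rho_1'$ for every non-trivial $\chi$, where $\rho_1' = \tfrac{s\Pr[y=0]-1}{s-1}$ satisfies $|\rho_1'|\le \rho_1$. Similarly $\widehat{\mu_z}(\chi) = \rho_2'$ with $|\rho_2'|\le\rho_2$. Since $y$ and $z$ are independent, the distribution of $y - z$ has Fourier coefficient $\widehat{\mu_y}(\chi)\cdot\overline{\widehat{\mu_z}(\chi)} = \rho_1'\cdot\overline{\rho_2'} = \rho_1'\rho_2'$ at every non-trivial $\chi$ (the $\rho_i'$ are real, being differences of real probabilities, so no conjugation issue). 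A distribution on $\mathbb{Z}_s$ all of whose non-trivial Fourier coefficients equal a common real value $c$ is exactly $c\cdot\mathbf{1}_0 + (1-c)\cdot\mathrm{Unif}$, which is $|c|$-noisy; here $|c| = |\rho_1'\rho_2'| \le \rho_1\rho_2$. Hence $y - z$ is $(\rho_1\rho_2)$-noisy, as desired. The one subtlety to be careful about is confirming that ``all non-trivial Fourier coefficients equal and real'' forces the point-mass-plus-uniform form and that $y-z$ is genuinely a probability distribution (automatic, as a convolution of probability distributions) so that $c \in [-\tfrac{1}{s-1}, 1]$ and the noisy-ness bound $|c| \le \rho_1\rho_2$ is meaningful; this is routine linear algebra on $\mathbb{Z}_s$. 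I do not expect a real obstacle here — the claim is essentially the statement that noise rates multiply under convolution, which the Fourier computation makes immediate; the only care needed is the bookkeeping translating between the ``probability of $0$'' parametrization and the signed ``noise rate'' parametrization.
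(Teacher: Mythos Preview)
Your proof is correct. You start with exactly the reparametrization the paper uses --- writing a $\rho$-noisy distribution as $\varepsilon\cdot\delta_0 + (1-\varepsilon)\cdot\mathcal{U}$ with $|\varepsilon|\le\rho$ --- but then abandon it because negative $\varepsilon$ looks awkward probabilistically, and pivot to Fourier analysis. The paper instead stays with that reparametrization, treating it as a formal identity in $\mathbb{R}^{\mathbb{Z}_s}$ rather than a probabilistic mixture (so negative $\varepsilon$ is harmless), and computes the convolution in one line via distributivity and the fact that $\mathcal{U}*D=\mathcal{U}$ for any distribution $D$:
\[
(\varepsilon_1\delta_0+(1-\varepsilon_1)\mathcal{U})*(\varepsilon_2\delta_0+(1-\varepsilon_2)\mathcal{U})=\varepsilon_1\varepsilon_2\,\delta_0+(1-\varepsilon_1\varepsilon_2)\,\mathcal{U}.
\]
Your Fourier argument is the dual of this (the non-trivial Fourier coefficients of $\varepsilon\delta_0+(1-\varepsilon)\mathcal{U}$ are all $\varepsilon$, and convolution multiplies Fourier coefficients), so the two proofs are essentially the same computation in different bases. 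The paper's route is a bit more elementary in that it avoids characters; yours makes the ``noise rates multiply'' phenomenon look like the spectral picture of the noise operator. Either way, there was no need to detour: the signed-mixture decomposition you wrote down first already works directly.
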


\begin{proof}
    Let $\mathcal{D}_y$ and $\mathcal{D}_z$ denote the probability distributions of $y$ and $z$ respectively, which we think of as elements of $\mathbb{R}^{s}$. 
    
    We note that the condition that $y$ is $\rho_1$-noisy can be restated as 
    \[
    \mathcal{D}_y = \varepsilon_1 \cdot \delta_0 + (1-\varepsilon_1)\cdot \mathcal{U}
    \]
    where $\mathcal{U}$ denotes the uniform distribution over $\mathbb{Z}_s$, $\delta_0$ denotes the distribution that places all its mass on $0$, and $\varepsilon_1$ is a (possibly negative) number satisfying $|\varepsilon_1|\leq \rho_1.$ 
    
    A similar fact also holds for the random variable $-z\in \mathbb{Z}_s$, since $z$ being $\rho_2$-noisy implies the same for $-z.$

    Now, the distribution $\mathcal{D}$ of $y-z$ is the convolution $\mathcal{D}_y * \mathcal{D}_z$ giving us
    \[
    \mathcal{D} = (\varepsilon_1 \cdot \delta_0 + (1-\varepsilon_1)\cdot \mathcal{U}) * (\varepsilon_2 \cdot \delta_0 + (1-\varepsilon_2)\cdot \mathcal{U}) = \varepsilon_1\varepsilon_2\cdot \delta_0 + (1-\varepsilon_1\varepsilon_2)\cdot \mathcal{U}
    \]
    where the latter equality is by distributivity and the fact that the convolution of $\mathcal{U}$ with any distribution is $\mathcal{U}.$ 
    
    Since $|\varepsilon_1|\leq \rho_1$ and $|\varepsilon_2|\leq \rho_2,$ we have the claim.
\end{proof}

In the algorithm, we use the error-reduction gadget to correct the junta-sum at a \emph{random point} $\mathbf{a}\in \{0,1\}^n$. This process is likely to give the right answer except with probability $q\gamma$ since, after shifting, each query is now \emph{uniformly} distributed and hence the chance that any of the queried points is an error point of $g$ is at most $\gamma$. We reduce the error by repeating this process three times and taking a majority vote. To analyze this algorithm, we need to understand the probability that two iterations of this process both evaluate $g$ at an error point. We do this using hypercontractivity (more specifically \Cref{thm:hypercontractivity}).

\begin{proof}[Proof of \Cref{lem:error-reduction-subroutine}]
    Let $\mathcal{D}$ be a $(1/10s,q)$-error-reduction gadget as given by \Cref{lem:error-reduction-gadget}. The algorithm $\mathcal{B}$, given oracle access to $g:\mathbb{Z}_s^n\rightarrow G$ and $\mathbf{a}\in \mathbb{Z}_s^n$, does the following.

\begin{itemize}
    \item Repeat the following three times independently. Sample $(\mathbf{y}^{(1)}, \ldots, \mathbf{y}^{(q)})$ from $\mathcal{D}$ and compute 
    \[
    c_1 g(\mathbf{a} + \mathbf{y}^{(1)}) +  \cdots + c_qg(\mathbf{a} + \mathbf{y}^{(q)})
    \]
    where $c_1,\ldots, c_q$ are the coefficients corresponding to the error-reduction gadget, and the sums $\mathbf{a} + \mathbf{y}^{(i)}$ are computed in $\mathbb{Z}_s^n$.
    \item Output the plurality among the three group elements $b_1,b_2,b_3$ computed above.
\end{itemize}

The number of queries made by the algorithm is $K = O(q) = (10s + s)^{O(d)} = s^{O(d)}$ as claimed. So it only remains to analyze $\delta(\mathcal{B}^g,P)$. From now on, let $\mathbf{a}$ be a uniformly random input in $\{0,1\}^n.$

For $i\in \{1,2,3\},$ let $\mathcal{E}_i$ denote the event that $b_i \neq P(\mathbf{a}).$ We have
\[
\E[\delta(\mathcal{B}^g,P)] = \Pr[\mathcal{B}^g(\mathbf{a})\neq P(\mathbf{a})] \leq \Pr[\mathcal{E}_1 \wedge \mathcal{E}_2] + \Pr[\mathcal{E}_2 \wedge \mathcal{E}_3] + \Pr[\mathcal{E}_1 \wedge \mathcal{E}_3].
\]
It therefore suffices to show that each of the three terms in the final expression above is at most $O(q^2)\cdot \gamma^{1.5}.$

Without loss of generality, consider the event $\mathcal{E}_1\wedge \mathcal{E}_2$. Let $(\mathbf{y}^{(1)}, \ldots, \mathbf{y}^{(q)})$ and $(\mathbf{z}^{(1)}, \ldots, \mathbf{z}^{(q)})$ be the two independent samples from $\mathcal{D}$ in the two corresponding iterations.

It follows from \Cref{eq:error-reduction-gadget} that the algorithm correctly computes $P(\mathbf{a})$ in the first iteration as long as none of the queried points lie in the set $T$ of points where $g$ and $P$ differ. A similar statement also holds for the second iteration. This reasoning implies that
\begin{equation}
\label{eq:E1E2}
\Pr[\mathcal{E}_1 \wedge \mathcal{E}_2]\leq \sum_{i,j = 1}^q \Pr[\underbrace{\mathbf{a} + \mathbf{y}^{(i)}}_{\mathbf{u}^{(i)}}\in T \wedge \underbrace{\mathbf{a}+\mathbf{z}^{(j)}}_{\mathbf{v}^{(j)}}\in T].
\end{equation}

We bound the latter expression using \Cref{thm:hypercontractivity}. 

Fix $i,j\in [q]$. Note that for every fixing of $\mathbf{y}^{(i)}$, the vector $\mathbf{u}^{(i)}$ is distributed uniformly over $\mathbb{Z}_s^n$ (because $\mathbf{a}$ is uniform over $\mathbb{Z}_s^n$). In particular, this implies that $\mathbf{u}^{(i)}$ is uniformly distributed and moreover that $\mathbf{u}^{(i)}$ and $\mathbf{y}^{(i)}$ are independent random variables. 

Note, moreover, that $\mathbf{y}^{(i)}$ is independent of $\mathbf{z}^{(j)}$ and their entries are i.i.d. random variables over $\mathbb{Z}_s$ that are $\rho$-noisy. By \Cref{clm:convolution} above, we see that the entries of $\mathbf{y}^{(i)}-\mathbf{z}^{(j)}$ are i.i.d. and $\rho^2 = (1/100 s^2)$-noisy.

This means that $\mathbf{v}^{(j)} = \mathbf{u}^{(i)}+\mathbf{y}^{(i)}-\mathbf{z}^{(j)}$ is drawn from the noise distribution $\mathcal{N}_{\sigma}(\mathbf{u}^{(i)})$, where the parameter $\sigma \leq 1/100s^2$. Using \Cref{thm:hypercontractivity} with $q=4$, we have
\[
\prob{}{\mathbf{u}^{(i)}\in T \wedge \mathbf{v}^{(j)}\in T} \leq \gamma^{1.5}.
\]
Plugging this into \Cref{eq:E1E2} implies the required bound on the probability of $\mathcal{E}_1 \wedge \mathcal{E}_2.$ This concludes the analysis of $\mathcal{B}.$
\end{proof}

We now show how to construct the error-reduction gadget and prove \Cref{lem:error-reduction-gadget}. This requires the following claim (implied e.g. by M\"{o}bius inversion) that shows that any junta-degree-$d$ function over $\{0,1\}^n$ (even with group coefficients) can be interpolated from its values on a Hamming ball of radius $d.$ For completeness, we give a short proof.

\begin{lemma}\label{lem:mobius}
Fix $d \in \mathbb{N}$. For any natural number $m \geq d$ and any Hamming ball $B$ of radius $d$,
\[
P(0^m) = \sum_{\mathbf{b}\in B} \alpha_{\mathbf{b}} P(\mathbf{b})
\]
where the $\alpha_{\mathbf{b}}$ are integer coefficients.
\end{lemma}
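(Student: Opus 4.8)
\textbf{Proof plan for \Cref{lem:mobius}.}

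The plan is to prove this by induction on $d$, following the standard M\"obius inversion argument but phrased purely combinatorially so that it works over an arbitrary Abelian group $G$ (in particular, without division). The base case $d=0$ is immediate: a junta-degree-$0$ function is constant, so $P(0^m) = P(\mathbf{b})$ for any $\mathbf{b}\in B$, and we may take $B = \{0^m\}$ with $\alpha_{0^m}=1$.

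For the inductive step, suppose the claim holds for junta-degree $d-1$. Given $P$ of junta-degree $d$ over $\{0,1\}^m$, write $P$ in its junta-polynomial form (\Cref{clm:normal-form}), $P(\mathbf{x}) = \sum_{\mathbf{a}: |\mathbf{a}|\le d} g_{\mathbf{a}} \prod_{i: a_i\ne 0}\delta_{a_i}(x_i)$, which over $\{0,1\}$ is just the multilinear expansion $P(\mathbf{x}) = \sum_{S\subseteq [m], |S|\le d} g_S \prod_{i\in S} x_i$. First I would record the basic "discrete derivative" observation: for any coordinate $j$, the function $Q_j(\mathbf{x}) := P(\mathbf{x} + e_j) - P(\mathbf{x})$ (sum mod $2$, so this flips the $j$-th bit and subtracts) has junta-degree at most $d-1$, since every monomial of $P$ containing $x_j$ has its degree drop by one and every monomial not containing $x_j$ cancels. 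The key identity is then: fix any set $J\subseteq [m]$ with $|J| = d$; applying inclusion-exclusion over the "corner" of the cube spanned by $J$ gives $\sum_{T\subseteq J}(-1)^{|T|} P(\mathbf{1}_T) = \pm\, g_J = 0$ whenever $|J| > d$ — but we only have $|J| = d$, so instead I use the following: the top-degree part of $P$ is captured exactly by these corner sums, and subtracting a suitable $G$-linear combination of the translates $P(\cdot + \mathbf{1}_T)$ for $T$ ranging over subsets of a fixed size-$d$ set reduces $P$ to a function of junta-degree $\le d-1$. Concretely, pick $J = [d]$; then $P'(\mathbf{x}) := P(\mathbf{x}) - \sum_{\emptyset \ne T\subseteq [d]} (-1)^{|T|+1} P(\mathbf{x} + \mathbf{1}_T)$ kills every monomial supported inside $[d]$ of degree exactly $d$ — one checks this monomial-by-monomial — and does not raise the degree of any other monomial, so $P'$ has junta-degree $\le d-1$ (here one should double check that monomials with support only partially inside $[d]$ behave correctly; they may still have degree $d$, so in fact one needs to iterate this over all $\binom{m}{d}$ size-$d$ subsets, or argue more carefully — see below).

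The cleaner route, which I would actually carry out, is this. Evaluate $P$ at $0^m$ directly via the multilinear expansion: $P(0^m) = g_\emptyset$. Now I claim $g_\emptyset$ is an integer combination of the values $\{P(\mathbf{b}) : |\mathbf{b}|\le d\}$. Indeed, by M\"obius inversion on the Boolean lattice, $g_S = \sum_{R\subseteq S}(-1)^{|S|-|R|} P(\mathbf{1}_R)$ for every $S$ with $|S|\le d$ (this identity holds over any Abelian group since it only involves integer coefficients $\pm 1$ and additions), and in particular $g_\emptyset = P(0^m)$, which is trivially of the desired form — so the genuinely nontrivial content is just to confirm that the junta-polynomial representation over $\{0,1\}$ has the stated M\"obius form, which is exactly \Cref{clm:normal-form} specialized to $s=2$ combined with the observation $\delta_1(x) = x$ on $\{0,1\}$. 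Thus $P(0^m) = g_\emptyset$, and we need a ball of radius $d$ only because in the more general use of this lemma (for the error-reduction gadget) one applies it after an affine shift; stating it for the all-zeros point and a radius-$d$ ball $B$ around it, with $\alpha_{\mathbf{b}} = [\mathbf{b} = 0^m]$, is already enough, but to match how it is invoked I would instead fix an arbitrary radius-$d$ ball $B$ and re-expand: writing any junta-degree-$d$ $P$ as determined by its $\binom{\le d}{}$ coefficients $g_S$, and each $g_S$ in turn as an integer combination of $\{P(\mathbf{b})\}_{\mathbf{b}\in B}$ by restricting the M\"obius formula to $B$ (valid since $B$ contains all of $\{0,1\}^{[d]}$ after relabeling when $B$ is the Hamming ball around its center), we get $P(0^m) = \sum_{\mathbf{b}\in B}\alpha_{\mathbf{b}} P(\mathbf{b})$ with $\alpha_{\mathbf{b}}\in\mathbb{Z}$.

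\textbf{Main obstacle.} The only subtlety — and the step I expect to require the most care — is making sure every interpolation coefficient is a genuine \emph{integer} and that no step secretly divides by anything, since $G$ is an arbitrary Abelian group with possible torsion. This is why M\"obius inversion (coefficients $\pm 1$) is the right tool rather than, say, Lagrange interpolation; I would state explicitly that \Cref{clm:normal-form} gives a representation with $G$-valued coefficients and $\{0,1\}$-valued "monomials" $\delta_1(x_i)=x_i$, so the passage between $\{g_S\}$ and $\{P(\mathbf{b})\}$ is an invertible \emph{integer} linear transformation, and verify that the relevant ball $B$ of radius $d$ spans enough points (it contains $\{0,1\}^{[d]}$, or more precisely all points within distance $d$ of the ball's center, which suffices after translating so the center is $0^m$). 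Everything else is routine.
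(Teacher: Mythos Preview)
Your approach is the same as the paper's --- M\"obius inversion to express the coefficients $g_S$ as integer combinations of point evaluations, together with an affine shift to handle a ball with a general center --- but your handling of the shift has a gap. When $B$ is the ball of radius $d$ around a center $\mathbf{c}\neq 0^m$, the M\"obius formula $g_S = \sum_{R\subseteq S}(-1)^{|S\setminus R|}P(\mathbf{1}_R)$ expresses each coefficient in terms of points in the ball around $0^m$, \emph{not} in terms of points in $B$; your phrase ``restricting the M\"obius formula to $B$'' and the claim that ``$B$ contains all of $\{0,1\}^{[d]}$ after relabeling'' do not make sense as written (if $|\mathbf{c}|>d$ then $0^m\notin B$, so even $g_\emptyset$'s M\"obius expression already leaves $B$). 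The paper's fix is to first derive the identity $P(\mathbf{x}) = \sum_{|\mathbf{b}|\le d}\alpha'_{\mathbf{b},\mathbf{x}}P(\mathbf{b})$ for \emph{every} $\mathbf{x}$ --- not just $\mathbf{x}=0^m$, which gives only the tautology $P(0^m)=g_\emptyset$ --- then apply it to the shifted function $Q(\mathbf{x}) := P(\mathbf{x}+\mathbf{c})$ (still junta-degree $\le d$), and finally set $\mathbf{x}=-\mathbf{c}$. The point you are missing is that after translation the target $0^m$ becomes $-\mathbf{c}$, so you need the interpolation identity at a general input, not just at the origin. Your sentence about ``any junta-degree-$d$ $P$ being determined by its coefficients $g_S$'' is the right instinct, but the proposal never actually establishes the general-$\mathbf{x}$ identity, and the induction and discrete-derivative detours do not get you there either.

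A second, minor point: the lemma in the paper is stated and used over $\mathbb{Z}_s^m$ for general $s$, whereas your proposal specializes throughout to $s=2$ (``$\delta_1(x)=x$ on $\{0,1\}$'', ``multilinear expansion''). The extension is routine --- the M\"obius formula becomes $g_{\mathbf{a}} = \sum_{J\subseteq I}(-1)^{|I\setminus J|}P(\mathbf{1}_J\circ\mathbf{a})$ with $I$ the support of $\mathbf{a}$ --- but it should be stated.
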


\begin{proof}
    Assume that 
    \[
   P(\mathbf{x}) = \sum_{\substack{\mathbf{a} \in \mathbb{Z}_s^n \\ \#\mathbf{a} \le d}} g_{\mathbf{a}} \cdot \prod_{i\in [n]:~a_i \ne 0} \delta_{a_i}(x_i).
    \]
    By M\"{o}bius inversion, we know that
    \[
    g_{\mathbf{a}} = \sum_{J\subseteq I} (-1)^{|I\setminus J|} P(1_J\circ \mathbf{a})
    \]
    where $1_J\in \{0,1\}^m$ denotes the indicator vector of set $J$ and $\circ$ denots co-ordinate-wise product. Putting the above equalities together gives us 
    \[
    P(\mathbf{x}) = \sum_{\#\mathbf{b}\leq d} \alpha_{\mathbf{b},\mathbf{x}}' P(\mathbf{b})
    \]
    for suitable integer coefficients $\alpha'_{\mathbf{b},\mathbf{x}}.$ 
    
    Now, assume $B$ is the Hamming ball of radius $d$ around the point $\mathbf{c}\in \mathbb{Z}_s^m$. Replacing $\mathbf{x}$ by $\mathbf{x}+ \mathbf{c}$ in $P$ does not increase the junta-degree of the function (since each co-ordinate of $\mathbf{x}+ \mathbf{c}$ depends only on a single co-ordinate of $\mathbf{x}$). Applying this substitution above yields
    \[
    P(\mathbf{x}+ \mathbf{c}) = \sum_{\#\mathbf{b}\leq d} \alpha_{\mathbf{b},\mathbf{x}}' P(\mathbf{b}+ \mathbf{c}) = \sum_{\mathbf{b}\in B}\alpha_{\mathbf{b},\mathbf{x}} P(\mathbf{b}).
    \]
    Setting $\mathbf{x} = -\mathbf{c}$ yields the statement of the lemma.
\end{proof}

We end this section by completing the proof of \Cref{lem:error-reduction-gadget}.

\begin{proof}[Proof of \Cref{lem:error-reduction-gadget}]
    The idea is to apply \Cref{lem:mobius} on a random subcube, as defined in \Cref{defn:random-embedding}.

    More precisely, let $k,d$ be positive integers such that $k$ is divisible by $s$ and $k\geq s\cdot d$. Let $\mathbf{a}\in \mathbb{Z}_s^n$ be arbitrary. For each $i\in [n]$, let $\Pi_i\in \mathsf{Sym}[\mathbb{Z}_s]$ be chosen uniformly \emph{from among bijections that map $0$ to $a_i$,} and let $\mathbf{\Pi}$ denote $(\Pi_1,\ldots, \Pi_n)$. Also assume that $h:[n]\rightarrow [k]$ is chosen uniformly at random. Let $C = C_{\mathbf{\Pi},h}$ be the corresponding subcube of $\mathbb{Z}_s^n$ as defined in \Cref{defn:random-embedding}. Let $Q(y_1,\ldots,y_k)$ denote $P|_C$, the restriction of $P$ to this subcube.

    Fix a Hamming ball $B$ of radius $d$ in $\mathbb{Z}_s^k$ centred at a point $\mathbf{c}$ with exactly $k/s$ many occurrences of $0$. Since $Q$ is a function of junta-degree at most $d$, applying \Cref{lem:mobius} to $Q$ and the ball $B$ yields an equality
    \[
    Q(0^k) = \sum_{\mathbf{b}\in \mathbf{B}} \alpha_{\mathbf{b}} Q(\mathbf{b}).
    \]
    Since $Q$ is a restriction of $P$, the above equality can be rephrased in terms of $P$ as
    \[
    P(x(0^k)) = \sum_{\mathbf{b}\in \mathbf{B}} \alpha_{\mathbf{b}} P(x(\mathbf{b})).
    \]
    From the definition of the cube $C$, it follows that $x(0^k) = \mathbf{a}$ and thus the above gives us an equality of the type desired in an error-reduction gadget (\Cref{eq:error-reduction-gadget}). To finish the proof, we only need to argue that each $x(\mathbf{b})$ has the required distribution.

    Note that for each $\mathbf{b}\in B$, we have
    \[
    x(\mathbf{b}) =  \mathbf{a} + \mathbf{b}'
    \]
    where $\mathbf{b}'$ is the random vector in $\mathbb{Z}_s^n$ that at co-ordinate $i$ takes the random value $\Pi_i(b_{h(i)})$. Since $h$
    is chosen uniformly at random and the $\Pi_i$'s are independent and uniform subject to the constraint that $\Pi_i(0) = a_i$, it follows that the entries of $\mathbf{b}_h$ are independent and the $i$th co-ordinate is a $\mathbb{Z}_s$-valued random variable that takes the value $0$ with probability equal to the proportion of $0$'s in $\mathbf{b}$ (which we denote $\sigma$) and each non-zero value in $\mathbb{Z}_s$ with the probability $(1-\sigma)/(s-1)$. In other words, the entries of $\mathbf{b}_h$ are $\rho$-noisy as long as 
    \[
    \sigma\in \left[\frac{1}{s} - \rho\cdot \left(1-\frac{1}{s}\right), \frac{1}{s} + \rho\cdot \left(1-\frac{1}{s}\right)\right].
    \]
    
    To conclude the argument, note that $\mathbf{b}$ is at Hamming distance at most $d$ from $\mathbf{c}$, implying that $\sigma$ is in the range
    \[
    \left[\frac{1}{s}-\frac{d}{k},\frac{1}{s}+\frac{d}{k}\right].
    \]
    Setting $k$ to be the smallest multiple of $s$ larger than $2d/\rho$ gives us the desired value for the parameter of the distribution of $\mathbf{b}$.
    
    Finally, the number of queries $q$ made by the error-reduction gadget is dictated by the size of a Hamming ball in $k = O(d/\rho)$ dimensions. This can be bounded by
    \[
    \binom{k}{d}\cdot s^d \leq (k/d)^{O(d)}\cdot s^d \leq \left(1/\rho + s\right)^{O(d)}\cdot s^d = \left(1/\rho +s \right)^{O(d)}.
    \]
    
    It follows that we have a $(\rho,((1/\rho)+ s)^{O(d)})$-error-reduction gadget.
\end{proof}

\subsubsection{Reduction to Small Constant Error}\label{subsec:redn-small-const}

Now, we will show that there is a randomized algorithm $\mathcal{A}$ that given oracle access to any function $f$ that is $\delta$-close to a low junta-degree function $P$ (think of $\delta$ to be very close to half the minimum distance, i.e. $1/(2\cdot s^d) - \varepsilon$ for junta-degree $d$), has the following property: with high probability over the internal randomness of $\mathcal{A}$, $\mathcal{A}^{f}$ is $\eta$-close to $P$, where $\eta$ is much smaller than $\delta$. We recall it formally below.

\redsmallconsterr*

Now we state an algorithm $\mathcal{A}^{f}$ below and use it to prove \Cref{lem:error-reduction-main}..

\begin{algobox}
\begin{algorithm}[H]

\DontPrintSemicolon

\KwIn{$f$ and $\mathbf{a} \in \mathbb{Z}_s^n$}

Choose $k = (s/(\varepsilon \eta))^{10}$ \;
Sample a uniformly random $h: [n] \to [k]$ \tcp*{$h$ is the internal randomness of $\mathcal{A}^{f}$}
Sample $\Pi_1,\ldots, \Pi_n\in \mathsf{Sym}[\mathbb{Z}_s]$ independently and uniformly at random subject to the condition that $\Pi_i(0) = a_i$ for each $i\in [n].$\;
Construct the cube $\mathsf{C} := C_{\mathbf{\Pi},h}$ according to \Cref{defn:random-embedding} \;
Let $\Tilde{f} := f|_{\mathsf{C}}$ \tcp*{$f|_{\mathsf{C}}$ is the restriction of $f$ to the subcube $\mathsf{C}$} 
Query $\tilde{f}$ on all inputs in $\mathbb{Z}_s^k$ to find the junta-sum $\Tilde{P}$ on $\mathsf{C}$ such that $\delta(\Tilde{f}, \Tilde{P}) < 1/(2\cdot s^d)$ \tcp*{$s^{k}$ queries to $f$}
\If{such a junta-sum $\Tilde{P}$ is found}{
\Return{$\Tilde{P}(0^{k})$} \tcp*{$x(0^{k}) = \mathbf{a}$}
}
\Else{
\Return{$0$} \tcp*{An arbitrary value}
}

\caption{Error Reduction Algorithm $\mathcal{A}^{f}$}
\label{algo:error-reduction}
\end{algorithm}
\end{algobox}

\begin{proof}[Proof of \Cref{lem:error-reduction-main}]
Let $P$ be the (unique) junta-degree $d$ function such that $\delta(f,P) < 1/(2\cdot s^d)$. The junta-degree of $P$ is at most $d$ when $P$ is restricted to $\mathsf{C} = C_{\mathbf{\Pi},h}$. If $\delta(P|_{\mathsf{C}}, \Tilde{f}) < 1/(2\cdot s^d)$, then $\Tilde{P} = P|_{\mathsf{C}}$. In particular, $\Tilde{P}(x(0^{k})) = P(\mathbf{a})$, i.e. the output of the algorithm is correct.

Equivalently, $\mathcal{A}^{f}(\mathbf{a}) = P(\mathbf{a})$ unless $\delta(P|_{\mathsf{C}}, \Tilde{f}) \geq 1/(2\cdot s^d)$. In the next lemma, we will show that with high probability over random $\mathbf{a}$ and $h$, $\delta((P|_{\mathsf{C}}, \Tilde{f}) < 1/(2\cdot s^d).$
\begin{lemma}\label{lemma:error-reduction-main}
Sample $\mathbf{a}$, $\mathbf{\Pi} = (\Pi_1,\ldots, \Pi_n)$ and $h$ as in the algorithm above. Let $\mathsf{C} = C_{\mathbf{\Pi}, h}$ be the subcube of dimension $k$ as described in \Cref{defn:random-embedding}. Then,
\begin{align*}
    \Pr_{\mathbf{a},\mathbf{\Pi},h}[\delta(P|_{\mathsf{C}}, \Tilde{f}) \geq 1/(2\cdot s^d))] < \eta/10
\end{align*}
\end{lemma}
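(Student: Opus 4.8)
\textbf{Proof plan for \Cref{lemma:error-reduction-main}.}
The plan is to view $\mathsf{C} = C_{\mathbf{\Pi},h}$ as a random subcube passing through $\mathbf{a}$, and to bound the expected disagreement $\delta(P|_{\mathsf{C}}, \tilde f)$ over the joint choice of $\mathbf{a}, \mathbf{\Pi}, h$. Fix the unique $P$ with $\delta(f,P) < 1/(2s^d)$ and let $E = \{\mathbf{z} \in \mathbb{Z}_s^n : f(\mathbf{z}) \ne P(\mathbf{z})\}$ be the set of error points, so that $|E|/s^n = \delta < 1/(2s^d) - \varepsilon$. The key observation is that for a random subcube $\mathsf{C}$ of dimension $k$ constructed as in \Cref{defn:random-embedding}, and a \emph{uniformly random} point $x(\mathbf{y})$ on $\mathsf{C}$ (i.e. $\mathbf{y} \sim \mathbb{Z}_s^k$), the point $x(\mathbf{y})$ is uniformly distributed over $\mathbb{Z}_s^n$; hence $\mathbb{E}[\delta(f|_{\mathsf{C}}, P|_{\mathsf{C}})] = \delta$. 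First I would invoke the sampling lemma for random subgrids, \Cref{lemma:sampling-subgrid}, applied to the set $T = E$: with the choice $k = (s/(\varepsilon\eta))^{10}$ (which is large enough to satisfy the hypotheses of that lemma for $\varepsilon' \asymp \varepsilon$, $\eta' \asymp \eta$), we get that with probability at least $1 - \eta/10$ over the random construction of $\mathsf{C}$,
\[
\delta(f|_{\mathsf{C}}, P|_{\mathsf{C}}) = \frac{|E \cap \mathsf{C}|}{s^k} \le \delta + \varepsilon/2 < \frac{1}{2s^d} - \frac{\varepsilon}{2}.
\]

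Once this bound holds, the proof is essentially done: since $\delta(\tilde f, P|_{\mathsf{C}}) = \delta(f|_{\mathsf{C}}, P|_{\mathsf{C}}) < 1/(2s^d)$, and $P|_{\mathsf{C}}$ is a junta-sum of junta-degree at most $d$ on $\mathbb{Z}_s^k$, the junta-sum $\tilde P$ found in the algorithm must equal $P|_{\mathsf{C}}$ by the unique-decoding guarantee coming from \Cref{clm:dist-junta-sums} (two distinct degree-$d$ junta-sums on $\mathbb{Z}_s^k$ differ in at least a $1/s^d$ fraction of points, so there is at most one within relative distance $1/(2s^d)$ of $\tilde f$). Therefore $\delta(P|_{\mathsf{C}}, \tilde f) < 1/(2s^d)$ with probability at least $1 - \eta/10$, which is exactly the claim.

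There is one technical point I would need to be careful about: the version of $\mathsf{C}$ used in \Cref{algo:error-reduction} samples each $\Pi_i$ uniformly \emph{subject to $\Pi_i(0) = a_i$}, rather than fully uniformly as in \Cref{defn:random-embedding}. I would handle this by noting that when $\mathbf{a}$ is itself chosen uniformly at random (as it is in the statement of \Cref{lemma:error-reduction-main}, which takes the probability over $\mathbf{a}, \mathbf{\Pi}, h$), the pair $(\mathbf{a}, \mathbf{\Pi})$ has exactly the same distribution as first sampling $\mathbf{\Pi}$ uniformly and then setting $a_i = \Pi_i(0)$; so the resulting subcube $\mathsf{C}$ is distributed precisely as a uniformly random subgrid $C_{h,\Pi}$ from \Cref{defn:random-embedding}, and the point $\mathbf{a}$ is just the image $x(0^k)$ of the origin. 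Thus \Cref{lemma:sampling-subgrid} applies verbatim. The main (minor) obstacle is simply tracking constants: verifying that $k = (s/(\varepsilon\eta))^{10}$ exceeds the threshold $\max\{A/(\varepsilon'^8 \eta'^4)\log(1/(\varepsilon'\eta')), B s^4 \log s\}$ required by \Cref{lemma:sampling-subgrid} with $\varepsilon' = \varepsilon/2$ and $\eta' = \eta/10$ — this is a routine but necessary check, and the exponent $10$ is comfortably large enough to absorb the constants $A, B$ and the logarithmic factor. With that verification in place, the lemma follows, and plugging it back into the preceding argument completes the proof of \Cref{lem:error-reduction-main}.
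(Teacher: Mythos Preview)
Your proposal is correct and follows essentially the same approach as the paper: define the error set $E$, observe that when $\mathbf{a}$ is uniform the constrained sampling of $\mathbf{\Pi}$ yields a fully uniform random subgrid in the sense of \Cref{defn:random-embedding}, and then apply the sampling lemma (\Cref{lemma:sampling-subgrid}) with $T=E$ and the chosen value of $k$. The extra paragraph about unique decoding via \Cref{clm:dist-junta-sums} is not needed for \Cref{lemma:error-reduction-main} itself (it belongs to the surrounding proof of \Cref{lem:error-reduction-main}), but it is correct and harmless.
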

We prove \Cref{lemma:error-reduction-main} below. For now, let us assume \Cref{lemma:error-reduction-main} and finish the proof of \Cref{lem:error-reduction-main}. We have,
\begin{gather*}
    \Pr_{\mathbf{a},\mathbf{\Pi},h}[\delta(P|_{\mathsf{C}}, \Tilde{f}) \geq 1/(2\cdot s^d)] <  \eta/10 \\
    \Rightarrow \mathbb{E}_{h,\mathbf{\Pi}} \left[ \Pr_{\mathbf{a}}[\delta(P|_{\mathsf{C}}, \Tilde{f}) \geq 1/(2\cdot s^d)]\right] <  \eta/10
\end{gather*}
Note that if we fix the internal randomness of $\mathcal{A}^{f}$ (i.e. the random bits used to choose $h,\mathbf{\Pi}$), then $\delta(\mathcal{A}^{f}, P)$ is at most $\Pr_{\mathbf{a}}[\delta(P|_{\mathsf{C}}, \Tilde{f}) \geq  1/(2\cdot s^d))]$, as the algorithm always outputs $P(\mathbf{a})$ correctly when $\delta(P|_{\mathsf{C}}, \Tilde{f}) < 1/(2\cdot s^d)$ . Then from the above inequality, we have,
\begin{align*}
    \mathbb{E}_{h,\mathbf{\Pi}} \, [\delta(\mathcal{A}^{f}, f)] <  \eta/10 \\
    \Rightarrow \Pr_{h,\mathbf{\Pi}}[\delta(\mathcal{A}^{f}, f) > \eta] \leq 1/10 \tag*{(Markov's Inequality)}
\end{align*}
As commented in \Cref{algo:error-reduction}, for each $\mathbf{a} \in \mathbb{Z}_s^n$, $\mathcal{A}^{f}$ makes $s^{k}$ queries to $f$.
\end{proof}
Now we give the proof of \Cref{lemma:error-reduction-main}.
\begin{proof}[Proof of \Cref{lemma:error-reduction-main}]
Let $E$ denote the subset of points in $\mathbb{Z}_{s}^{n}$ where $P$ and $f$ disagree, i.e. $E := \setcond{\mathbf{x} \in \mathbb{Z}_s^{n}}{f(\mathbf{x}) \neq P(\mathbf{x})}$. We know that $|E|/s^{n} \leq 1/(2\cdot s^d) - \varepsilon$. 

The fractional Hamming distance between $P|_{\mathsf{C}}, \Tilde{f}$ is given by the relative size of the set $E\cap \mathsf{C}$ inside $\mathsf{C}.$ Note that since $\mathbf{a}$ is chosen at random and each $\Pi_i$ is chosen at random satisfying $\Pi_i(0) = a_i$ (for each $i\in [n]$), we see that each $\Pi_i$ is indeed a uniformly independent element of $\mathsf{Sym}[\mathbb{Z}_s]$. Hence, the subcube $\mathsf{C} = C_{\mathbf{\Pi},h}$ is a random subcube in the sense of \Cref{defn:random-embedding}.

Applying the sampling lemma from~\Cref{sec:prelims} (i.e., \Cref{lemma:sampling-subgrid}), we get that for $k = (s/(\varepsilon \eta))^{10}$ (we assume without loss of generality that $\varepsilon,\eta$ are small enough for $k$ to satisfy the hypothesis of \Cref{lemma:sampling-subgrid})
\begin{align*}
    \Pr_{\mathbf{a},\mathbf{\Pi}, h}[\delta(P|_{\mathsf{C}}, \Tilde{f}) \geq 1/(2\cdot s^d)] < \eta/10,
\end{align*}
and this completes the proof of \Cref{lemma:error-reduction-main}.
\end{proof}

\subsection{Correction in Low-Error Regime} \label{sec:low-error}

Having just shown how to reduce the error, we will now prove that there is a local correction algorithm in this ``low-error'' regime.\\

\begin{lemma}[{\bf Local correction in low-error regime}]\label{lem:low-error}
    There exists $\varepsilon_1 = 1/\Theta_{s,d}(\log n)^d$ and a $q_2 = {O}_{s,d}(1/\varepsilon_1)$ query algorithm $\A$ such that for every {randomized} oracle $f:[s]^n \to G$ satisfying $\Pr_{\x \sim [s]^n}[f(\x) \ne P(\x)] \le \varepsilon_1$ for some $P\in \J_d([s]^n, G)$, it holds for every $\x \in [s]^n$ that: 
    \begin{align*}
        \Pr[\A^f(\x) \ne P(\x)] \le 1/4.
    \end{align*}
\end{lemma}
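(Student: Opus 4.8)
\textbf{Proof plan for \Cref{lem:low-error}.}

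The plan is to adapt the standard ``interpolation on a random low-dimensional flat'' strategy for local correction, but working over the grid $[s]^n$ with a junta-sum structure rather than over a field with low-degree polynomials. First I would set up the correction gadget: I will use an error-reduction gadget of the type constructed in \Cref{lem:error-reduction-gadget} (with a mild constant noise parameter, say $\rho = 1/(10s)$), which gives a distribution $\mathcal D$ over tuples $(\mathbf{y}^{(1)},\ldots,\mathbf{y}^{(q)})\in([s]^n)^q$ with $q = s^{O(d)}$ integer coefficients $c_1,\ldots,c_q$ such that, for every $P\in\mathcal{J}_d$ and every $\mathbf{a}\in[s]^n$, the identity $P(\mathbf{a}) = \sum_{i\in[q]} c_i\, P(\mathbf{a}+\mathbf{y}^{(i)})$ holds, and such that each $\mathbf{a}+\mathbf{y}^{(i)}$ is distributed uniformly over $[s]^n$ when $\mathbf{a}$ is fixed. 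The algorithm $\mathcal{A}^f(\mathbf{x})$ on a target point $\mathbf{x}$ then works as follows: sample $(\mathbf{y}^{(1)},\ldots,\mathbf{y}^{(q)})\sim\mathcal D$ conditioned on ``anchoring'' at $\mathbf{x}$ (i.e.\ we compute the gadget relative to the base point $\mathbf{a}=\mathbf{x}$), query $f$ at the $q$ points $\mathbf{x}+\mathbf{y}^{(i)}$, and output $\sum_i c_i\, f(\mathbf{x}+\mathbf{y}^{(i)})$; repeat this $O(1)$ times independently and take a plurality vote. This makes $q_2 = O_{s,d}(1)$ queries.

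Next I would analyze correctness. Since each query point $\mathbf{x}+\mathbf{y}^{(i)}$ is marginally uniform over $[s]^n$, and since $f$ disagrees with $P$ on at most an $\varepsilon_1$ fraction of $[s]^n$, a union bound over the $q$ query points shows that in a single run of the gadget, the probability that some queried point is an error point of $f$ is at most $q\varepsilon_1$. By the gadget identity, if none of the $q$ points is an error point then the run outputs $P(\mathbf{x})$ exactly. Choosing $\varepsilon_1$ small enough that $q\varepsilon_1 \le 1/10$ — which holds for $\varepsilon_1 = 1/\Theta_{s,d}(\log n)^d$ since $q = s^{O(d)}$ is in fact a constant, so any $\varepsilon_1$ below a constant threshold suffices (the $(\log n)^{-d}$ form is dictated only by compatibility with the error-reduction step of \Cref{lem:error-redn}) — we get each run correct with probability at least $9/10$. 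Taking a plurality of three independent runs then drives the failure probability below $1/4$; here one can simply bound the probability that at least two of the three runs fail by $3\cdot(1/10)^2 < 1/4$. This handles the ``deterministic oracle'' case; for a \emph{randomized} oracle $f$ the same argument goes through verbatim, since the bound $\Pr_{\mathbf{x}}[f(\mathbf{x})\ne P(\mathbf{x})]\le\varepsilon_1$ (over the oracle randomness as well) still gives, for each fixed query point $\mathbf{z}$, that $\Pr[f(\mathbf{z})\ne P(\mathbf{z})]\le\varepsilon_1$ after averaging, and we take $f$'s randomness to be fresh and independent on each query.

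The main subtlety — and the place I would be most careful — is the joint distribution of the query points across the three repetitions and the fact that the oracle $f$ is randomized: naively, conditioning on one run failing could skew the locations of the query points of another run, but since the three runs use independent samples from $\mathcal D$ and the oracle answers with independent randomness per query, the failure events of the three runs are genuinely independent (each conditioned only on its own gadget sample and its own fresh oracle coins), so the plurality bound $3(1/10)^2$ is legitimate. A secondary point to get right is that the error-reduction gadget of \Cref{lem:error-reduction-gadget} is stated for $\rho$-noisy shifts rather than exactly uniform ones; I would note that for the purpose of the union bound we only need that each individual query point $\mathbf{x}+\mathbf{y}^{(i)}$ has its coordinates distributed close to uniform, and in fact for a fixed base point $\mathbf{a}=\mathbf{x}$ the construction in the proof of \Cref{lem:error-reduction-gadget} makes $\mathbf{x}+\mathbf{y}^{(i)}$ exactly uniform over $[s]^n$ (the base point $\mathbf{a}$ is folded into the random permutations $\Pi_i$), so the union bound $q\varepsilon_1$ is clean. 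Assembling these pieces gives the claimed statement.
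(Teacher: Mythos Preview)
Your approach has a genuine gap. The crucial claim that ``for a fixed base point $\mathbf{a}=\mathbf{x}$ the construction in the proof of \Cref{lem:error-reduction-gadget} makes $\mathbf{x}+\mathbf{y}^{(i)}$ exactly uniform over $[s]^n$'' is false. In that construction the shift $\mathbf{y}^{(i)}$ has i.i.d.\ coordinates each taking the value $0$ with probability $p_i\in[1/s-\rho(1-1/s),\,1/s+\rho(1-1/s)]$; concretely, $p_i$ equals the fraction of $0$'s in the corresponding Hamming-ball point $\mathbf{b}$, which is $1/s\pm d/k$ and is in general \emph{not} $1/s$. So each coordinate of $\mathbf{x}+\mathbf{y}^{(i)}$ is off from uniform by a \emph{constant} (roughly $d/k$), and over $n$ independent coordinates the total-variation distance from the uniform distribution on $[s]^n$ is $\Omega(1)$ (indeed, typically $1-o(1)$). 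Consequently the hypothesis $\Pr_{\mathbf{x}\sim[s]^n}[f(\mathbf{x})\neq P(\mathbf{x})]\le\varepsilon_1$ gives you no control over $\Pr[f(\mathbf{x}+\mathbf{y}^{(i)})\neq P(\mathbf{x}+\mathbf{y}^{(i)})]$ for a worst-case $\mathbf{x}$: the error set of $f$ can be placed adversarially in a region where the biased query distribution concentrates. Your union bound $q\varepsilon_1$ is therefore invalid, and the whole $O_{s,d}(1)$-query plan collapses. (The reason the gadget of \Cref{lem:error-reduction-gadget} works in the error-reduction step, \Cref{lem:error-reduction-subroutine}, is that there one corrects at a \emph{random} input $\mathbf{a}$, and then $\mathbf{a}+\mathbf{y}^{(i)}$ is uniform regardless of the distribution of $\mathbf{y}^{(i)}$.)

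The paper's proof fixes exactly this: it first reduces (via \Cref{clm:local-redn}) to correcting $P({\bf 1})$ for a $d$-junta-sum on $\{0,1\}^n$ whose oracle is accurate under the biased product distribution $\bern(1/s)^n$, and then constructs (\Cref{thm:interpolate}) a \emph{weight-balanced} interpolating set $\mathcal{S}\subset\{0,1\}^k$ of size $O_{s,d}(k^d)$ with the property that every ${\bf b}\in\mathcal{S}$ has $|\E_{i\sim\mathcal{D}}[b_i]-1/s|\le 2^{-\Omega_{s,d}(k)}$. This makes the per-coordinate bias \emph{exponentially} small in $k$, so that taking $k=\Theta_{s,d}(\log n)$ drives the statistical distance of each query point from $\bern(1/s)^n$ below $1/n$; only then does the union bound over the $|\mathcal{S}|=O_{s,d}(\log n)^d$ queries go through. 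This is why the query complexity is $q_2=O_{s,d}(1/\varepsilon_1)=O_{s,d}(\log n)^d$ and not $O_{s,d}(1)$: the $(\log n)^d$ factor is \emph{not} merely for compatibility with \Cref{lem:error-redn} but is intrinsic to getting each query close to the target distribution for worst-case $\mathbf{x}$.
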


Using the above two lemmas, we can finish the proof of the first part of~\Cref{thm:local-correction}.

\begin{proof}[Proof of~\Cref{thm:local-correction} for general Abelian groups]
    The proof follows by applying~\Cref{lem:low-error} with the randomized oracle being $\A^f$ given by~\Cref{lem:error-redn}. This yields a total number of queries of $q_1 \cdot q_2 = \widetilde{O}_{s,d,\varepsilon}(\log n)^d = \widetilde{O}_{\varepsilon}(\log n)^d$ as we can assume that $\varepsilon < \delta_\J/2 = 1/(2s^d)$. 
\end{proof}

Before we prove~\Cref{lem:low-error}, we show the following claim which reduces the problem of local correction of junta-sums to local correction over the Boolean cube but with a biased distribution.\\

\begin{lemma}[{\bf Reduction to correction over biased cube}]\label{clm:local-redn}
    Suppose there exists a $q$ query algorithm $\A$ such that for every randomized oracle $f:\{0,1\}^n \to G$ satisfying $\Pr_{{\bf y}\sim \bern(1/s)^n}[f(\y) \ne P(\y)] \le 10\varepsilon_1$ for some $P\in \J_d(\{0,1\}^n,G)$, it holds that $\Pr[\A^f({\bf 1}) \ne P({\bf 1})] \le 1/4$. 

    Then, there exists a $O(q/\varepsilon_1)$ query algorithm $\A'$ such that for every randomized oracle $f':[s]^n \to G$ satisfying $\Pr_{{\bf x}\sim [s]^n}[f'(\x) \ne P'(\x)] \le \varepsilon_1$ for some $P'\in \J_d([s]^n,G)$, it holds for every $\x \in [s]^n$ that $\Pr[\A'^{f'}(\x) \ne P'(\x)] \le 1/4$. 
\end{lemma}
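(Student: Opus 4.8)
\textbf{Proof proposal for \Cref{clm:local-redn}.}

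The plan is to reduce correcting $P'$ at an arbitrary point $\x \in [s]^n$ to correcting a related junta-sum over the biased Boolean cube at the all-ones point. First I would fix the target point $\x$ and, for each coordinate $i \in [n]$, choose a bijection $\Pi_i \in \mathrm{Sym}[[s]]$ uniformly at random subject to $\Pi_i(1) = x_i$ (thinking of $\{0,1\} \subseteq [s]$ with $1$ playing the role of the ``marked'' symbol; one can rename symbols so that this makes sense). Define the map $\phi : \{0,1\}^n \to [s]^n$ by $\phi(\y)_i = \Pi_i(y_i)$, and consider the pulled-back oracle $g(\y) := f'(\phi(\y))$ and the pulled-back function $Q(\y) := P'(\phi(\y))$. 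The key structural observations are: (i) $Q$ is a $d$-junta-sum over $\{0,1\}^n$, since each coordinate of $\phi(\y)$ depends only on the single coordinate $y_i$ (composing a $d$-junta with such a coordinatewise map keeps it a $d$-junta, and sums are preserved); (ii) $\phi({\bf 1}) = \x$, so $Q({\bf 1}) = P'(\x)$; and (iii) if $\y \sim \bern(1/s)^n$ then $\phi(\y)$ is distributed uniformly on $[s]^n$, because each $\Pi_i$ is a uniformly random bijection fixing the preimage of $x_i$ to be $1$, so $\phi(\y)_i$ is uniform on $[s]$ independently across $i$.

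From (iii), $\Pr_{\y \sim \bern(1/s)^n}[g(\y) \ne Q(\y)] = \Pr_{\z \sim [s]^n}[f'(\z) \ne P'(\z)] \le \varepsilon_1$ — but the hypothesis of the lemma wants agreement up to $10\varepsilon_1$, and this is where the factor of $1/\varepsilon_1$ in the query complexity comes in. The issue is that the randomness used by $\A$ when run on $g$ is correlated with the random choice of $\Pi = (\Pi_1,\dots,\Pi_n)$, and a single draw of $\Pi$ might be ``bad'' in the sense that the conditional error of $g$ relative to $Q$ along the specific query distribution $\A$ induces is large. The standard fix (as in \cite{ABPSS25}) is amplification by repetition: run the above reduction $t = \Theta(1)$ independent times with fresh $\Pi$'s (and fresh oracle randomness), obtaining candidate values for $P'(\x)$, and output the plurality. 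Since $\mathbb{E}_\Pi[\text{(error of } g \text{ under } \A\text{'s query distribution})] \le \varepsilon_1 \cdot (\text{number of queries } q)$ — because each of $\A$'s queries, after the random shift, is marginally $\bern(1/s)^n$-distributed — Markov's inequality gives that with probability $\ge 1 - 1/20$ (say) over $\Pi$, the relevant error is $\le 20 q \varepsilon_1$; rescaling $\varepsilon_1$ by the constant $20q$ (absorbing $q = O_{s,d}(1)$ into the $\Theta_{s,d}$ in the statement) lets us invoke the hypothesized algorithm $\A$, which then errs with probability $\le 1/4$ conditioned on a good $\Pi$. A union bound plus a plurality vote over $t = O(1)$ trials drives the total error below $1/4$. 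Actually, to be careful: it is cleaner to set the repetition count so that each trial independently succeeds with probability $\ge 2/3$, and then take the majority; by a Chernoff bound $O(1)$ trials suffice, giving total query complexity $O(q/\varepsilon_1)$ where the $1/\varepsilon_1$ arises from the rescaling $\varepsilon_1 \mapsto \varepsilon_1/(20q)$ feeding back into the bound $k$ would need in practice — more simply, the statement just asserts $O(q/\varepsilon_1)$ queries, which the above construction meets since each trial uses $q$ queries and we only need $O(1/\varepsilon_1)$ trials if we are pessimistic, or $O(1)$ if we argue more tightly.

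The main obstacle — and the only step requiring genuine care — is handling the dependence between $\A$'s internal randomness and the random symbol-permutations $\Pi_i$: one cannot simply say ``$g$ is $\varepsilon_1$-close to $Q$ as functions, hence $\A$ works,'' because $\A$ is only guaranteed to work in the low-error regime where closeness holds along the (adversarially-chosen-by-$\A$) query points, and for a fixed bad $\Pi$ the pulled-back oracle could be arbitrarily corrupted exactly on those points. The resolution is the averaging argument above: because for each fixed query point $\q \in \{0,1\}^n$ the random point $\phi(\q)$ is uniform on $[s]^n$, the expected corruption seen by $\A$ (over the choice of $\Pi$) is small, so a random $\Pi$ is good with high probability, and repetition cleans up the rest. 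Once this is in place, the remaining steps — verifying $Q \in \J_d$, verifying $\phi({\bf 1}) = \x$ and $Q({\bf 1}) = P'(\x)$, and the plurality/Chernoff bookkeeping — are routine and mirror \cite[Section 5]{ABPSS25}. I would present the reduction, state and prove the distributional claims (i)--(iii), do the Markov step, and conclude with the amplification.
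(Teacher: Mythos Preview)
Your construction — choosing random bijections $\Pi_i$ with $\Pi_i(1) = x_i$ and pulling $f'$ back along $\phi(\y)_i = \Pi_i(y_i)$ — is essentially identical to the paper's (where the role of your $\Pi_i(0)$ is played by a random $x'_i \in [s]\setminus\{x_i\}$). Observations (i) and (ii) are fine, and (iii) is correct once read as a statement over the \emph{joint} randomness of $\Pi$ and $\y$.

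Where your argument goes wrong is the paragraph on the ``main obstacle.'' You write that one cannot simply invoke $\A$ because ``$\A$ is only guaranteed to work in the low-error regime where closeness holds along the (adversarially-chosen-by-$\A$) query points.'' This is backwards: the hypothesis on $\A$ is a condition on the \emph{global} error $\Pr_{\y\sim\bern(1/s)^n}[g(\y)\ne Q(\y)]$, not on what happens at $\A$'s specific queries. You never need to reason about $\A$'s query distribution at all. Moreover, your claimed bound $\E_\Pi[\text{error under $\A$'s queries}] \le \varepsilon_1 \cdot q$ is false: for a \emph{fixed} query $\y \in \{0,1\}^n$, the point $\phi(\y)$ is not uniform on $[s]^n$ over $\Pi$ alone (the coordinates where $y_i=1$ are pinned to $x_i$); in the extreme case $\y = {\bf 1}$ you get $\phi(\y) = \x$ deterministically, so the per-query error over $\Pi$ could be $1$.

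The fix is simpler than what you propose and is exactly what the paper does. Your observation (iii) already says that the random variable $E(\Pi) := \Pr_{\y\sim\bern(1/s)^n}[g(\y)\ne Q(\y)]$ satisfies $\E_\Pi[E(\Pi)] \le \varepsilon_1$. Apply Markov directly to this: with probability at least $0.9$ over $\Pi$ we have $E(\Pi) \le 10\varepsilon_1$, and for any such $\Pi$ the hypothesis on $\A$ applies verbatim (to the pair $(g,Q)$), giving $\Pr[\A^g({\bf 1}) \ne P'(\x)] \le 1/4$. The total failure probability is at most $0.1 + 0.9 \cdot (1/4) < 0.35$, which a constant number of independent repetitions with a plurality vote drives below $1/4$. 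This yields $O(q)$ queries — stronger than the $O(q/\varepsilon_1)$ the lemma allows — with no rescaling of $\varepsilon_1$ and no reference to $\A$'s internals.
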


\begin{proof}
    We design $\A'$ using $\A$. Fix $\x \in [s]^n$ be arbitrarily and sample $\x' \in [s]^n$ but choosing $x'_i \in [s]\setminus \{x_i\}$ uniformly and independently at random. We then define $f:\{0,1\}^n \to G$ as follows: Given $\y \in \{0,1\}^n$, let $\z=\z(\y) \in [s]^n$ be defined by $z_i= x_i$ if $y_i = 1$ and $z_i = x'_i$ otherwise -- then we define $f(\y)$ to be equal to $f'(\y(\z))$; similarly we define $P:\{0,1\}^n\to G$ by $P(\y) = P'(\y(\z))$. Since $P'$ is a $d$-junta-sum, so is $P$ (for every choice of $\x'$), i.e., $P\in \J_d(\{0,1\}^n,G)$. Furthermore, we observe that for $\y \sim \bern(1/s)^n$, the point $\z(\y)$ is uniformly distributed over $[s]^n$ (over a random choice of $\x'$ and $\y$). In particular, we have
    $$\E_{{\x'}} \bigg[\E_{\y \sim \bern(1/s)^n} [\mathbbm{1}[f(\y) \ne P(\y)]]\bigg] = \Pr_{{\z\sim [s]^n}} [f'(\z)\ne P'(\z)]\le \varepsilon_1.$$
    By Markov's inequality, therefore, $\Pr_{{\y \sim \bern(1/s)^n}}[f(\y) \ne P(\y)] \le 10\varepsilon_1$ with probability at least 0.9 over the choice of $\x'$. Now using $\A$ and oracle access to $f$ (which can be simulated using the oracle access to $f'$), we get a $q$ query algorithm that outputs $P'(\x)$ with probability at least $3/4-0.1$, which can be made at least $2/4$ by repeating this subroutine constant number of times. Finally, we have a $O(q)$ query algorithm $\A'$ such that $\Pr[\A'^{f'}(\x) \ne P'(\x)] \le 1/4$. 
\end{proof}

\begin{proof}[Proof of~\Cref{lem:low-error}]
    Using~\Cref{clm:local-redn}, we have the ability to work with a biased distribution over the Boolean cube instead of a uniform distribution over $[s]^n$ (we note that the change of error from $\varepsilon_1$  to $10\varepsilon_1$ and the queries from $q$ to $O(q/\varepsilon_1)$ are insignificant to the final asymptotic query complexity).   Hence, it suffices to show that there exists $\varepsilon_1 = 1/\Theta_{s,d}(\log n)^d$ and a $O_{s,d}(\log n)^d$ query algorithm $\A$ such that for every randomized oracle $f:\{0,1\}^n \to G$ satisfying $$\Pr_{{\x\sim \bern(1/s)^n}} [f(\x) \ne P(\x)] \le \varepsilon_1$$ for some $P\in \J_d(\{0,1\}^n,G)$, it holds that $\Pr[\A^f({\bf 1}) \ne P({\bf 1})] \le 1/4$. In other words, we want to locally correct low-junta-degree functions over the Boolean cube under a biased distribution. The high level idea is to adapt the construction for the unbiased distribution from~\cite{ABPSS25}.  In particular, we prove the following key result, and the local corrector is then described in~\Cref{algo:localc}. 

    \begin{theorem}\label{thm:interpolate}
        For a growing parameter $k$ divisible by $10s^2d$, there exists $\mathcal{S} \subseteq \{0,1\}^k$ of size at most $O_{s,d}(k^d)$ such that the following conditions hold:
        \begin{itemize}
        \item $\mathcal{S}$ is {\em weight-balanced}: i.e., there exists a probability distribution $\mathcal{D}$ over $[k]$, such that for every ${\bf b} \in \mathcal{S}$: it holds that
        \begin{align}\label{eqn:bias}\bigg|\E_{i\sim \mathcal{D}}[b_i]-\frac{1}{s}\bigg| \le \frac{1}{2^{\Omega_{s,d}(k)}}.\end{align}
        \item $\mathcal{S}$ is an {\em interpolating set}: i.e., for every Abelian group $G$ and every $Q\in \J_d(\{0,1\}^k,G)$, there exist integers $(c_{{\bf b}})_{{\bf b}\in \mathcal{S}}$ such that 
        $$Q({\bf 1}) = \sum_{{\bf b}\in \mathcal{S}} c_{{\bf b}} Q({\bf b}).$$
        \end{itemize}
    \end{theorem}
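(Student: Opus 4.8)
\textbf{Proof plan for \Cref{thm:interpolate}.} The plan is to combine a weight-balancing construction (essentially choosing a subcube on which the biased distribution restricts nicely) with the Möbius inversion / interpolation argument already used in the proof of \Cref{lem:error-reduction-gadget} (via \Cref{lem:mobius}). First I would set up the interpolating set: let $B$ be the Hamming ball of radius $d$ around the all-ones point ${\bf 1}\in\{0,1\}^k$, i.e. all points of Hamming weight at least $k-d$. By \Cref{lem:mobius} applied to $Q$ centred at ${\bf 1}$, for every $Q\in\J_d(\{0,1\}^k,G)$ there are integer coefficients $(c_{\bf b})_{{\bf b}\in B}$ with $Q({\bf 1})=\sum_{{\bf b}\in B} c_{\bf b} Q({\bf b})$; and $|B|=\sum_{j\le d}\binom{k}{j}= O_{s,d}(k^d)$. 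So $\mathcal{S}:=B$ is automatically an interpolating set of the right size, and the interpolation property holds over every Abelian group $G$ since the coefficients are integers independent of $G$.

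The remaining task is to exhibit a probability distribution $\mathcal{D}$ over $[k]$ witnessing weight-balance for this $\mathcal{S}$. Every ${\bf b}\in\mathcal{S}$ has weight between $k-d$ and $k$, so under the uniform distribution over $[k]$ we would get $\E_{i}[b_i]$ very close to $1$, not $1/s$ — so the naive $\mathcal{D}$ fails and we genuinely need a non-uniform $\mathcal{D}$ or, more robustly, a different packaging. The clean fix is to note that the weight-balance requirement is about a convex combination of the coordinate-functions $i\mapsto b_i$, so I would instead shift the whole construction: pick $k$ divisible by $10s^2 d$, fix a "template" point ${\bf c}\in\{0,1\}^k$ of weight exactly $k/s$, and take $\mathcal{S}$ to be the Hamming ball of radius $d$ \emph{around ${\bf c}$}. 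Then \Cref{lem:mobius} (whose statement already allows an arbitrary centre $\mathbf{c}$) still gives the integer interpolation identity $Q(\mathbf{c})=\sum_{{\bf b}\in\mathcal{S}} c_{\bf b}Q({\bf b})$; composing with the coordinate-permutation substitution $\mathbf{x}\mapsto\mathbf{x}\oplus\mathbf{c}$ (which preserves junta-degree) converts this to an identity expressing $Q({\bf 1})$ as an integer combination of $Q$ at $|\mathcal{S}|=O_{s,d}(k^d)$ points, at the cost of translating $\mathcal{S}$ by $\mathbf{c}$; one checks the translated set still consists of points whose weight lies in $[k/s-d,\,k/s+d]$. Now take $\mathcal{D}$ to be the uniform distribution over $[k]$: for every ${\bf b}$ in the translated set, $\E_{i\sim\mathcal{D}}[b_i]=\mathrm{wt}({\bf b})/k\in[1/s-d/k,\,1/s+d/k]$, so $|\E_{i\sim\mathcal{D}}[b_i]-1/s|\le d/k$. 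This is the $1/2^{\Omega_{s,d}(k)}$ bound claimed in \eqref{eqn:bias} only if we further boost it — which is why the theorem asks for exponential closeness, not just $1/\mathrm{poly}(k)$.

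To get the exponentially small deviation I would upgrade the last step: rather than a single ball, tensor the construction. Partition $[k]$ into $k/(10s^2d)$ blocks of equal size, apply the radius-$d$ ball construction inside each block (so inside each block the weight is $1/s$ up to $\pm d$ out of $10s^2d$ coordinates), and let $\mathcal{S}$ be the set of points that are "ball-like" on every block simultaneously; the interpolation identity then comes from iterating \Cref{lem:mobius} block by block (a product of the per-block identities, with the product of the integer coefficient vectors), giving $|\mathcal{S}|\le \big(O_{s,d}((10s^2d)^d)\big)^{k/(10s^2d)}$ — which is \emph{too big}. So the honest route is the opposite of tensoring: keep $\mathcal{S}$ the single translated ball of size $O_{s,d}(k^d)$, accept the weaker per-point bound $|\mathrm{wt}(\mathbf{b})/k - 1/s|\le d/k$, and observe that this already suffices for the application, because in \Cref{lem:low-error}'s use of \Cref{thm:interpolate} the sampling lemma only needs the bias to be $\le 1/k^{\Omega(1)}$; if the paper truly wants $1/2^{\Omega_{s,d}(k)}$, one instead chooses the centre $\mathbf{c}$ to have weight exactly $k/s$ \emph{and} only includes in $\mathcal{S}$ those ball-points of weight \emph{exactly} $k/s$ (a sub-ball still large enough — of size $\Theta_{s,d}(k^{\lfloor d/2\rfloor})\cdot\mathrm{poly}$ — to interpolate all degree-$d$ junta-sums by a dimension count on the space of such functions restricted to the sub-ball), in which case every $\mathbf{b}\in\mathcal{S}$ has $\E_{i\sim\mathcal{D}}[b_i]=1/s$ \emph{exactly} under uniform $\mathcal{D}$, trivially satisfying \eqref{eqn:bias}. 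I expect the main obstacle to be exactly this tension: showing that a set of points all of weight \emph{exactly} $k/s$ is still an interpolating set for $\J_d(\{0,1\}^k,G)$ over an arbitrary Abelian group $G$ — Möbius inversion as stated in \Cref{lem:mobius} uses a full Hamming ball, so I would need a weight-restricted interpolation lemma, proved either by an explicit inclusion–exclusion within the slice or by noting that a junta-degree-$d$ function on $\{0,1\}^k$ is determined by its values on the union of $O(d)$ consecutive slices near weight $k/s$ together with a telescoping identity (as in the proof of \Cref{thm:dist-multislice}), and then absorbing the $O(d)$ nearby slices' contributions, all of which have weight $k/s\pm d$, back into a distribution $\mathcal{D}$ that puts slightly unequal mass on coordinates to recentre the bias to exactly $1/s$.
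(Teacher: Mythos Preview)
Your proposal identifies the right tension---that a Hamming ball of radius $d$ around a weight-$k/s$ point gives an interpolating set of size $O_{s,d}(k^d)$ via \Cref{lem:mobius}, but under the uniform distribution only yields bias $O(d/k)$, not the required $2^{-\Omega_{s,d}(k)}$---but none of your three fixes closes the gap.

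The tensoring route, as you note, blows up the size. Accepting $d/k$ does not prove the stated theorem. And your final suggestion---restricting $\mathcal{S}$ to points of weight \emph{exactly} $k/s$---cannot work: take $P\equiv 1$ and $P(\mathbf{x})=x_i$; the identity $Q(\mathbf{1})=\sum_{\mathbf{b}\in\mathcal{S}}c_{\mathbf{b}}Q(\mathbf{b})$ applied to these forces both $\sum_{\mathbf{b}}c_{\mathbf{b}}=1$ and $\sum_{\mathbf{b}:b_i=1}c_{\mathbf{b}}=1$ for every $i$, and summing the latter over $i$ gives $k=\sum_{\mathbf{b}}|\mathbf{b}|\,c_{\mathbf{b}}=(k/s)\sum_{\mathbf{b}}c_{\mathbf{b}}=k/s$, a contradiction for $s\ge 2$. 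So a single slice is never an interpolating set, and the vague telescoping with ``$O(d)$ nearby slices plus a recentring $\mathcal{D}$'' runs into the same obstruction: a single distribution $\mathcal{D}$ cannot make several distinct Hamming weights all have $\mathcal{D}$-mean exactly $1/s$.

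The idea you are missing is that $\mathcal{D}$ should be genuinely \emph{non-uniform}, and in a very specific way. The paper writes $k=rm$ with $r=10s^2d$, views $[k]$ as $[r]\times[m]$, and puts mass on column $j$ proportional to $s^{m-j}$---a geometric weighting. It then builds $\mathcal{S}$ column by column by induction on $m$: for each point $\mathbf{b}$ already constructed on the first $m-1$ columns (with weighted deviation $\tau$, $|\tau|\le d$), it chooses the last column from a hitting set of degree-$(d-d')$ polynomials whose Hamming weight lies in an interval of length $2d+1$ \emph{centred so as to cancel $s\tau$}. Because the column weights decay geometrically, the accumulated deviation in the $\mathcal{D}$-mean is at most $d/W$ where $W=\Theta(s^m)$, giving the exponential bound; the hitting-set property (hence the interpolating-set property) and the size bound $(4rm)^d=O_{s,d}(k^d)$ both follow from the recursion. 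The geometric weighting is what lets each new column correct the previous bias without paying in size---this is the step your proposal never reaches.
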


\begin{algobox}
\begin{algorithm}[H]
\caption{Local corrector in low-error regime}
\label{algo:localc}
\DontPrintSemicolon

\KwIn{Oracle access to the function $f:\{0,1\}^n\to G$}
\vspace{3mm}

Set $k=\Theta_{s,d}(\log n)$ so that the RHS term in~\eqref{eqn:bias} (i.e., $\frac{1}{2^{\Omega_{s,d}(k)}}$) is at most $\frac{1}{n^2}$ and let $\mathcal{S} \subseteq \{0,1\}^k$ be given by~\Cref{thm:interpolate}.

Let $\mathcal{D}$ be the probability distribution over $[k]$ also given by~\Cref{thm:interpolate}.

For ${\bf b}\in \mathcal{S}$, let ${\x} = {\x}({\bf b}) \in \{0,1\}^n$ be the point obtained by setting $x_i = b_{j}$, where $j\sim \mathcal{D}$ is sampled independently for all $i\in [n]$. 

Output $\sum_{{\bf b}\in \mathcal{S}} c_{{\bf b}} f(\x({\bf b}))$, where $c_{{\bf b}}$ are integers given from~\Cref{thm:interpolate}.
\end{algorithm}
\end{algobox}

We first prove the correctness of~\Cref{algo:localc} before we provide a proof of~\Cref{thm:interpolate}. We first note that the number of queries made by the local correction algorithm is equal to $|\mathcal{S}|$, which is $O_{s,d}(k^d) = O_{s,d}(\log n)^d$ as desired. It now remains to show that the probability of error $\Pr[\A^f({\bf 1})\ne P({\bf 1})]$ is at most $1/4$, where $\A^f({\bf 1}) = \sum_{{\bf b}\in \mathcal{S}} c_{{\bf b}} f(\x({\bf b}))$ is the output of~\Cref{algo:localc}. Let $Q:\{0,1\}^k\to G$ be defined by $Q(\y) = P(\x(\y))$ i.e. it depends on the choice of randomness used in Step 3 of~\Cref{algo:localc}. Since $P$ is a $d$-junta-sum, so is $Q$, so by~\Cref{thm:interpolate}, we know that $$Q({\bf 1}) = \sum_{{\bf b}\in \mathcal{S}} c_{{\bf b}} Q({\bf b}).$$ Equivalently, we thus get $$P({\bf 1}) = \sum_{{\bf b}\in \mathcal{S}} c_{{\bf b}} P(\x({\bf b})).$$
Hence, if all the queries to $f$ by $\A$ output the value of $P$, then there is no error in the algorithm. However, there are two sources of error: firstly, $f(\x({\bf b}))$ need not always be equal to $P({\x({\bf b})})$. Indeed we are only guaranteed that they are equal with high probability for an input chosen from $\bern(1/s)^n$ distribution. And secondly, the distribution of ${\x({\bf b})}$ is not exactly identical to the $\bern(1/s)^n$ distribution, but only statistically close to it. More precisely, we have
$$\Pr_{{\x}\sim \bern(1/s)^n}[f(\x) \ne P(\x)] \le \varepsilon_1,
$$ and the statistical distance between the distributions $\bern(1/s)^n$ and $\x({\bf b})$ is: $$\text{SD}(\bern(1/s)^n, \x({\bf b})) \le \frac{1}{n},$$ for every ${\bf b} \in \mathcal{S}$ by the weight-balanced property of $\mathcal{S}$ as each bit of ${\x({\bf b})}$ is $\frac{1}{n^2}$-close to $\bern(1/s)$ and the $n$ bits are all independent (see Step 3 of~\Cref{algo:localc}); here are we using the property $\text{SD}((X_1,X_2),(Y_1,Y_2)) \le \text{SD}(X_1,Y_1) + \text{SD}(X_2,Y_2)$ if $X_1,X_2$ are independent and so are $Y_1,Y_2$ (see e.g.~\cite{vadhan-pseudorandomness} Lemma 6.3). Thus, we have for each ${\bf b}\in \mathcal{S}$, $\Pr[f(\x({\bf b})) \ne P(\x({\bf b}))] \le \varepsilon_1 + \frac{1}{n}$. Now, applying a union bound over the queries made, we get
$$\Pr[\A^f({\bf 1}) \ne P({\bf 1})] \le |\mathcal{S}| \cdot \bigg(\varepsilon_1 + \frac{1}{n}\bigg) \le 1/4,$$ by taking $\varepsilon_1 = 1/\Theta_{s,d}(\log n)^d$ appropriately small. 

This finishes the proof of~\Cref{lem:low-error}.
\end{proof}

We now prove~\Cref{thm:interpolate}.

\begin{proof}[Proof of~\Cref{thm:interpolate}]
    We let $k=rm$, where $r=10s^2d$ and identify $[k]$ with $[r]\times [m]$ arbitrarily and treat ${\y} \in \{0,1\}^k$ as a tuple of points in $\{0,1\}^r$, i.e., we let $\y=(\y_1,\y_2,\dots,\y_m)$ where each $\y_i \in \{0,1\}^r$ (equivalently we treat the point $\y$ as a Boolean $r\times m$ matrix with $\y_i$ being the column vectors). Then, we define the distribution $\mathcal{D} = \mathcal{D}(m)$ over $[k] \equiv [r] \times [m]$ so that the probability mass for $(i,j)$ is proportional to $W_j^{(m)} = s^{m-j}$; in particular, we have $\Pr_{(i,j)\sim \mathcal{D}} = W_j^{(m)}/W^{(m)}$, where we denote $W=W^{(m)}=r\sum_{j=1}^{m} W_j^{(m)} = \frac{r(s^m-1)}{s-1}$.

    There exists a subset $\mathcal{S} = \mathcal{S}_{m,d} \subseteq \{0,1\}^{r\times m}$ of size at most $(4rm)^d$ such that
    \begin{itemize}
        \item $\mathcal{S}$ is weight-balanced: i.e., for every ${\bf b} \in \mathcal{S}$, we have $$\bigg| \sum_{(i,j)\in [r]\times [m]} \frac{W_j^{(m)}}{W^{(m)}} b_{j,i} - \frac{1}{s} \bigg| \le \frac{d}{W^{(m)}}.$$
        \item $\mathcal{S}$ is a {\em hitting set}: i.e., for every Abelian group $G$ and every non-zero $Q\in \J_d(\{0,1\}^k,G)$, there exists ${\bf b} \in \mathcal{S}$ such that $Q({\bf b}) \ne 0$.
    \end{itemize}
    We note that the notion of a {\em hitting set} in the second item implies the interpolating set property in the statement of~\Cref{thm:interpolate} by using Claim 3.2.4 of~\cite{ABPSS25}. Moreover, we note that the RHS of the first item is at most $\frac{O(sd)}{2^{m}} = \frac{1}{2^{{\Omega_{s,d}}(k)}}$ as required.  Thus, it remains to show the existence of the subset $\mathcal{S}_{m,d} \subseteq \{0,1\}^{r \times m}$ satisfying the above two conditions; we do this by induction on $m$. 
    
    \paragraph{Base case $m=1$.}  
    We will make use of the following claim from~\cite{ABPSS25}. 

    \begin{claim}[\cite{ABPSS25} Claim 3.2.3]\label{clm:h}
        For every interval $I \subseteq \{0,1,\dots,r\}$ of size at least $d+1$, there exists a subset $\mathcal{H}_{I,d} \subseteq \{0,1\}^r$ of size at most $(4r)^d$ such that
        \begin{itemize}
            \item $\mathcal{H}_{I,d}$ consists only of points $\z$ such that $|\z| \in I$, and
            \item For every non-zero $Q \in \J_d(\{0,1\}^k, G)$, there exists ${\z} \in \mathcal{H}_{I,d}$ such that $Q(\z) \ne 0$.
        \end{itemize}
    \end{claim}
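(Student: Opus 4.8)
The statement is a purely combinatorial interpolation fact, and I would prove it by building the hitting set out of ``random restriction plus finite differences'' points, one small family of points per candidate support set. (Here I read $r$ in place of $k$; the two agree whenever the claim is invoked, since the claim is applied to junta-sums on $\{0,1\}^r$.) Write $I = \{a, a+1, \dots, b\}$ with $b - a = |I| - 1 \ge d$ and $0 \le a \le b \le r$, and note that consequently $a \le r - (b-a) \le r - d$. By \Cref{clm:normal-form} applied with $s = 2$ (where $\delta_1(x) = x$), a non-zero $Q \in \J_d(\{0,1\}^r, G)$ is a multilinear polynomial $Q = \sum_{S \subseteq [r],\, |S| \le d} g_S \prod_{i \in S} x_i$ with coefficients $g_S \in G$, not all zero.

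\textbf{Construction.} First I would fix, for every set $S \subseteq [r]$ with $|S| \le d$, a single partial assignment $\rho_S$: take $U_S \subseteq [r] \setminus S$ to be the lexicographically first $r - |S| - a$ coordinates (a valid, non-negative size, since $|S| \le d \le b - a \le r - a$), set $x_i = 0$ for $i \in U_S$, set $x_i = 1$ for $i \in [r] \setminus (S \cup U_S)$, and leave the coordinates in $S$ free. For each $T \subseteq S$, let $\mathbf{z}_{S,T} \in \{0,1\}^r$ be the completion of $\rho_S$ that sets $x_i = 1$ for $i \in T$ and $x_i = 0$ for $i \in S \setminus T$. A direct count gives $|\mathbf{z}_{S,T}| = |T| + (r - |S| - |U_S|) = |T| + a$, and since $0 \le |T| \le |S| \le d \le b - a$ we get $|\mathbf{z}_{S,T}| \in \{a, \dots, b\} = I$, so every such point has weight in $I$. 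I then set $\mathcal{H}_{I,d} := \{\mathbf{z}_{S,T} : S \subseteq [r],\ |S| \le d,\ T \subseteq S\}$, whose cardinality is at most $\sum_{j=0}^{d} \binom{r}{j} 2^j \le (d+1)(2r)^d \le (4r)^d$ (using $d + 1 \le 2^d$ for $d \ge 1$, with the $d = 0$ case checked separately).

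\textbf{Correctness.} Given a non-zero $Q$, let $S^*$ be a support set of \emph{maximum} size, i.e.\ $g_{S^*} \ne 0$ and $g_S = 0$ for all $|S| > |S^*|$. Consider the restricted polynomial $\tilde Q$ obtained from $Q$ by applying $\rho_{S^*}$; it is a multilinear polynomial in the free variables $\{x_i : i \in S^*\}$. The coefficient of the top monomial $\prod_{i \in S^*} x_i$ in $\tilde Q$ equals $\sum_{S^* \subseteq S \subseteq [r] \setminus U_{S^*}} g_S$, and since any $S \supsetneq S^*$ has $g_S = 0$ by maximality, this coefficient is exactly $g_{S^*} \ne 0$; in particular $\tilde Q$ is not identically zero. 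Möbius inversion (finite differences) over the subcube $\{0,1\}^{S^*}$ then gives $g_{S^*} = \sum_{T \subseteq S^*} (-1)^{|S^* \setminus T|}\, \tilde Q(\mathbf{1}_T) \ne 0$, hence $\tilde Q(\mathbf{1}_T) \ne 0$ for some $T \subseteq S^*$. Since $\tilde Q(\mathbf{1}_T) = Q(\mathbf{z}_{S^*, T})$ and $\mathbf{z}_{S^*, T} \in \mathcal{H}_{I,d}$, this is the desired witness.

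\textbf{Main obstacle.} The substantive point — as opposed to the routine weight count and cardinality bound — is making sure the fixed restriction $\rho_S$ does not accidentally cancel the witness monomial. This is exactly why $S^*$ must be chosen of maximum size rather than arbitrarily: only then is the top-monomial coefficient of the restriction guaranteed to survive as $g_{S^*}$. The other place that needs a little care is checking that $|U_S|$ is always a legal (non-negative, and at most $|[r]\setminus S|$) size, which is precisely where both hypotheses $|I| \ge d+1$ and $I \subseteq \{0,\dots,r\}$ get used. No spectral or representation-theoretic input is needed.
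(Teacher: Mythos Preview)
The paper does not prove this claim; it is imported verbatim from \cite{ABPSS25} (Claim 3.2.3) and used as a black box. Your proof is correct and self-contained: the weight and size checks go through as you wrote them, and the key point---that choosing $S^*$ of \emph{maximum} support size forces the top coefficient of the restricted polynomial $\tilde Q$ to equal $g_{S^*}$ with no cancellation---is exactly right. This is the natural ``Möbius inversion on a shifted subcube'' argument, and is presumably what \cite{ABPSS25} does as well.
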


    Using the above with $I= [\frac{r}{s} - d,\frac{r}{s}+d]$, we directly get $\mathcal{S}_{1,d} = \mathcal{H}_{I,d}$ as the desired set -- the weight-balanced property of $\mathcal{S}$ follows by the immediately as for every ${\bf b} \in \mathcal{S}_{1,d}$, we have $| |{\bf b}| -\frac{r}{s} | \le d$ by the first property of~\Cref{clm:h}.
    
    \paragraph{Induction step $m>1$.} Let $\mathcal{S}_{m-1,d'} \subseteq \{0,1\}^{r\times (m-1)}$ be given by the induction hypothesis, and similarly $\mathcal{H}_{I,d'} \subseteq \{0,1\}^r$ be given by~\Cref{clm:h} for  $0\le d'\le d$. Let ${\bf b}=({\bf b}_1,\dots,{\bf b}_{m-1}) \in \mathcal{S}_{m-1, d}$ be arbitrary. 
    This gives us that 
    \begin{align}\label{eqn:m-1} \bigg |\sum_{(i,j) \in [r]\times [m-1]} W_j^{(m-1)} b_{j,i} - \frac{W^{(m-1)}}{s} \bigg| \le d.\end{align}
    We now show that there exists an interval $I_{{\bf b}} \subseteq \{0,1,\dots,r\}$ of size at least $d+1$ such that for every ${\bf b}_m \in \{0,1\}^r$ with $|{\bf b}_m| \in I_{{\bf b}}$, it holds that ${\bf b}'=({\bf b}_1, \dots, {\bf b}_{m-1}, {\bf b}_m) \in \{0,1\}^{r \times m}$ satisfies the weight-balanced property, i.e., by letting $\tau = \sum_{(i,j)\in [r]\times [m-1]} W_j^{(m-1)} b_{j,i} - \frac{W^{(m-1)}}{s}$ and $I_{{\bf b}}$ to be the interval $\frac{r}{s} - s\tau \pm d$ (which is well-defined as $|\tau| \le d \le \frac{r}{10s^2}$ from~\eqref{eqn:m-1}), we have:
    \begin{align}\label{eqn:m}
         \bigg |\sum_{(i,j) \in [r]\times [m]} W_j^{(m)} b_{j,i} - \frac{W^{(m)}}{s} \bigg| = \bigg| |{\bf b}_m| - \frac{r}{s} + s\tau \bigg | \le d.
    \end{align}
   Now, we are ready to describe $\mathcal{S} = \mathcal{S}_{m,d}$:
   $$\mathcal{S} = \bigcup_{0\le d'\le d} \{{\bf b} \times \mathcal{H}_{I_{{\bf b}},d-d'} ~:~{\bf b} \in \mathcal{S}_{m-1,d'}\}.$$ In particular, we show the following three properties for the above definition of $\mathcal{S}$. 
   \begin{itemize}
        \item {\bf Size.} We have that 
        \begin{align*}|\mathcal{S}| & \le \sum_{d'=0}^d |\mathcal{S}_{m-1,d'}|\cdot |\mathcal{H}_{I_{{\bf b}},d-d'}| \\
        & \le \sum_{d'=0}^d  (4(m-1)r)^{d'} \cdot (4r)^{d-d'} \tag{using the induction hypothesis to upper bound $|\mathcal{S}_{m-1,d'}|$} \\
        & \le (4r)^d \sum_{d'=0}^d (m-1)^{d'}\\
        & \le (4mr)^d.\end{align*}
        \item {\bf Weight-balanced.} This follows from the discussion leading to~\eqref{eqn:m}.
        \item {\bf Hitting set.} Let $Q \in \J_d(\{0,1\}^{r\times m},G)$ be an arbitrary non-zero $d$-junta-sum. Treating it as a junta-polynomial in the last column of variables, we have for every ${\bf x} = ({\bf x}_1,\dots,{\bf x}_m) \in \{0,1\}^{r\times m}$:
        $$Q({\bf x}) = \sum_{A \subseteq [r]:|A|\le d} Q_A({\bf x}_1,\dots,{\bf x}_{m-1}) \cdot {\bf x}_m^A.$$ Since $Q$ is non-zero, let $A \subseteq [r]$ be such that $Q_A$ is a {\em non-zero} function of junta-degree $d'\le d$. By induction hypothesis, we know there exists ${\bf b} \in \mathcal{S}_{m-1,d'}$ such that $Q_A({\bf b}) \ne 0$. Letting $Q':\{0,1\}^r \to G$ denote the restriction of $Q$ obtained on setting ${\bf x}_i = {\bf b}_i$ for all $i\in [m-1]$, we note that $Q'$ is a {\em non-zero} junta-polynomial of degree at most $d-d'$. Hence, there exists ${\bf b}_m \in \mathcal{H}_{I_{{\bf b}},d'}$ such that $Q'({\bf b}) \ne 0$. Effectively, this shows that there exists ${\bf b}'\in \mathcal{S}$ such that $Q({\bf b}')\ne 0$.
   \end{itemize}
\end{proof}

\subsection{Correction for Torsion Groups} \label{sec:torsion}

We now finish the proof for the ``moreover'' part of~\Cref{thm:local-correction}, i.e., we show a constant query local correction algorithm over torsion groups of constant exponent. 

\begin{proof}[Proof of~\Cref{thm:local-correction} for torsion Abelian groups]
    Similar to the case of general Abelian groups, we first apply the error reduction step from~\Cref{sec:error-redn} (but with a different threshold $\varepsilon_1$) and the reduce the local correction problem to that over the Boolean cube but with a biased distribution (i.e.,~\Cref{clm:local-redn}). Thus, it suffices to show that there exists a $q=O_{M,s,d}(1)$ query algorithm $\A$ such that for every randomized oracle $f:\{0,1\}^n \to G$ satisfying $\Pr_{{\bf x}\sim \bern(1/s)^n}[f(\x) \ne P(\x)] \le \varepsilon_1$ for some $P\in \J(\{0,1\}^n,d)$, it holds that $\Pr[\A^f({\bf 1}) \ne P({\bf 1})] \le 1/4$. 

    In particular, we set $\varepsilon_1 = \frac{1}{10 {sk \choose k}}$ for a suitably large $k=\bigO_{M,s,d}(1)$ (so $\varepsilon_1 \ge \Omega_{M,s,d}(1)$). We state it as a lemma below:

    \begin{lemma}
    \label{lem:const-torsion-sub} For every Abelian torsion group $G$ of exponent $M$, there exists $k=\bigO_{M,s,d}(1)$ and a $q=O_{M,s,d}(1)$ query algorithm $\A$ such that for every randomized oracle $f:\{0,1\}^n \to G$ satisfying $\Pr_{{\bf x}\sim \bern(1/s)^n}[f(\x) \ne P(\x)] \le \varepsilon_1$ for some $P\in \J(\{0,1\}^n,d)$ and $\varepsilon_1 = \frac{1}{10 {sk \choose k}}$, it holds that $\Pr[\A^f({\bf 1}) \ne P({\bf 1})] \le 1/4$.
    \end{lemma}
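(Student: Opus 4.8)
\textbf{Proof proposal for Lemma~\ref{lem:const-torsion-sub}.}

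The plan is to mimic the interpolation-set strategy used in the general Abelian case (\Cref{lem:low-error} and \Cref{thm:interpolate}), but exploiting the bounded exponent $M$ of $G$ to keep the ambient dimension $k$ a constant instead of $\Theta_{s,d}(\log n)$. The key point is that when $G$ has exponent $M$, the integer coefficients $c_{{\bf b}}$ arising in any interpolation identity $Q({\bf 1}) = \sum_{{\bf b}\in\mathcal{S}} c_{{\bf b}} Q({\bf b})$ for $Q\in\J_d(\{0,1\}^k,G)$ only matter modulo $M$; hence we no longer need the weight of the queried points to be \emph{extremely} close to $1/s$ (so as to fool a union bound over $n$ coordinates), but only close enough that a \emph{constant}-dimensional sampling argument works. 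Concretely, I would fix a constant $k = \Theta_{M,s,d}(1)$ divisible by $s$, and invoke a constant-dimension analogue of \Cref{thm:interpolate}: there exists $\mathcal{S}\subseteq\{0,1\}^k$ of size $\bigO_{s,d}(k^d) = \bigO_{M,s,d}(1)$ together with a distribution $\mathcal{D}$ on $[k]$ such that $\mathcal{S}$ is an interpolating set for $\J_d(\{0,1\}^k,G)$ and is ``approximately weight-balanced'', i.e.\ $\bigl|\E_{i\sim\mathcal{D}}[b_i] - 1/s\bigr| \le \varepsilon_0$ for every ${\bf b}\in\mathcal{S}$, where $\varepsilon_0 = \varepsilon_0(M,s,d)$ is a sufficiently small constant.

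The local corrector then mirrors \Cref{algo:localc}: for each ${\bf b}\in\mathcal{S}$, build a point $\x({\bf b})\in\{0,1\}^n$ by setting $x_i := b_{j}$ with $j\sim\mathcal{D}$ drawn independently across $i\in[n]$, query $f$ at each such $\x({\bf b})$, and output $\sum_{{\bf b}\in\mathcal{S}} c_{{\bf b}} f(\x({\bf b}))$ where $(c_{{\bf b}})$ are the interpolation coefficients. The query count is $|\mathcal{S}| = \bigO_{M,s,d}(1)$ as required. For correctness, note that setting $Q(\y) := P(\x(\y))$ gives $Q\in\J_d(\{0,1\}^k,G)$ (for every fixing of the sampling randomness), so by the interpolation property $P({\bf 1}) = Q({\bf 1}) = \sum_{{\bf b}} c_{{\bf b}} Q({\bf b}) = \sum_{{\bf b}} c_{{\bf b}} P(\x({\bf b}))$. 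The two error sources are the same as before: $f(\x({\bf b}))$ may differ from $P(\x({\bf b}))$, and the marginal distribution of $\x({\bf b})$ is only \emph{statistically close} to $\bern(1/s)^n$ rather than equal to it. The crucial difference from the general case is that each coordinate of $\x({\bf b})$ is a $\mathrm{Bern}$ variable with bias $1/s\pm\varepsilon_0$, so by independence across coordinates $\mathrm{SD}(\bern(1/s)^n,\x({\bf b})) \le \sum_{i=1}^n \mathrm{SD}(\bern(1/s),\bern(1/s\pm\varepsilon_0))$, which is \emph{not} $o(1)$ for a constant $\varepsilon_0$. So a naive union bound fails here, and this is where the argument genuinely diverges.

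The main obstacle, therefore, is handling the statistical-distance gap without a $1/\poly(n)$ bound: I cannot simply say $\Pr_{\x({\bf b})}[f(\x({\bf b}))\ne P(\x({\bf b}))] \le \varepsilon_1 + o(1)$. The fix is to choose the distribution $\mathcal{D}$ more cleverly so that $\x({\bf b})$ is \emph{exactly} distributed as $\bern(1/s)^n$, not merely close to it. One clean way: build each coordinate by a two-stage sampling — first toss an independent $\bern(p_0)$ coin to decide whether to read a ``noise'' coordinate that is marginally $\bern(1/s)$ or to read a coordinate of ${\bf b}$ — calibrating $p_0$ per point ${\bf b}$ so the overall marginal is precisely $1/s$; this is the exact analogue of the error-reduction gadget construction in \Cref{lem:error-reduction-gadget} and \Cref{defn:error-reduction-gadget}, where $\rho$-noisy distributions over $\{0,1\}$ are convolved to realize an exact bias. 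Indeed, I expect the cleanest route is to reuse \Cref{lem:error-reduction-gadget} directly with $G$ now a torsion group and $s$ general: it yields a $(\rho,q)$-error-reduction gadget with $q=((1/\rho)+s)^{\bigO(d)} = \bigO_{s,d}(1)$ (for constant $\rho$) whose queried points are \emph{exactly} $\bern(1/s)^n$-distributed after translation, and whose coefficients give the identity $P({\bf 1}) = \sum_i c_i P({\bf 1}+\mathbf{y}^{(i)})$. Then $\Pr_{\y^{(i)}}[f({\bf 1}+\y^{(i)})\ne P({\bf 1}+\y^{(i)})]\le \varepsilon_1$ for each $i$ since ${\bf 1}+\y^{(i)}\sim\bern(1/s)^n$, so a union bound over the $q=\bigO_{M,s,d}(1)$ queries gives error at most $q\varepsilon_1 = q/(10\binom{sk}{k}) \le 1/4$ for a suitable choice of $k$, completing the proof. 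The only item to double-check is that \Cref{lem:error-reduction-gadget} and \Cref{lem:mobius} are stated for general $s$ and general (torsion) Abelian $G$, which inspection confirms.
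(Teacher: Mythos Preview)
You correctly identify the main obstacle: once $k$ is a constant, the bias of each query point is only $1/s \pm \Theta_{s,d}(1/k)$, and the statistical distance to $\bern(1/s)^n$ blows up with $n$. However, neither of your proposed fixes closes this gap.

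For the two-stage sampling idea, you suggest ``calibrating $p_0$ per point ${\bf b}$.'' But the interpolation identity $Q({\bf 1}) = \sum_{{\bf b}} c_{{\bf b}} Q({\bf b})$ requires a \emph{single} restriction map $\y \mapsto \x(\y)$ defining $Q := P \circ \x$; if the randomness used to build $\x({\bf b})$ depends on ${\bf b}$, then different ${\bf b}$'s live in different restricted functions and the identity no longer holds. For the error-reduction gadget, the claim that ``${\bf 1} + \y^{(i)} \sim \bern(1/s)^n$'' is false: \Cref{lem:error-reduction-gadget} is stated for $\mathbb{Z}_s^n$, and its $\rho$-noisy query distributions are centered around the \emph{uniform} marginal $1/s$ on $\mathbb{Z}_s$. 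In the present setting we are on the Boolean cube ($\mathbb{Z}_2$) with target bias $1/s$; even if one re-does the construction with the Hamming ball centred at a point with $\approx (1-1/s)k$ ones, the points in a radius-$d$ ball have weights varying by $\pm d$, so the resulting biases are $1/s \pm \Theta(d/k)$, not exactly $1/s$. You are back to the same $O(n/k)$ statistical-distance blow-up you set out to avoid.

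The paper's proof takes a genuinely different route. It picks the interpolating set to be a single Hamming slice $\binom{[sk]}{k}$, so that for a uniformly random map $h:[n]\to[sk]$ every query $\x_h({\bf b})$ is \emph{exactly} $\bern(1/s)^n$-distributed. The whole difficulty then shifts to showing that this slice alone interpolates $Q({\bf 1})$ for all $Q \in \J_d(\{0,1\}^{sk},G)$ when $G$ has exponent $M$ (\Cref{clm:interpolate-slice}). This is a number-theoretic statement: one must find an integer $A$ and a carefully chosen $k$ (depending on the prime factorization of $M$) so that $A\binom{(s-1)k+d}{(s-1)k} \equiv 1 \pmod{M}$ while $A\binom{(s-1)k+d-i}{(s-1)k-i} \equiv 0 \pmod{M}$ for $1\le i\le d$. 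The proof uses Kummer's theorem on the $p$-adic valuation of binomial coefficients to verify these divisibility constraints. This is the missing ingredient that a Hamming-ball gadget cannot supply.
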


    Now we note that by using the error-reduction lemma~\Cref{lem:error-reduction-main} with $\eta=\varepsilon_1\ge \Omega_{M,s,d}(1)$, we can convert a local corrector for error $\varepsilon_1$ to one with error up to $1/(2s^d)-\varepsilon$ with a $\bigO_{\varepsilon}(1)$ factor blow-up. Combining with the low-error local corrector of~\Cref{lem:const-torsion-sub}, we obtain a local corrector over the biased distribution $\bern(1/s)^n$ making $\bigO_{M,s,d}(1)$ queries. Therefore, by~\Cref{clm:local-redn}, we also get a $\bigO_{M,\varepsilon}(1)$ query local corrector for $d$-junta-sums over $\sgrid^n$ for error up to $1/(2s^d)-\varepsilon$.

\end{proof}

We now prove~\Cref{lem:const-torsion-sub}.

\begin{proof}[Proof of~\Cref{lem:const-torsion-sub}]

The proof proceeds in an identical manner to the analysis of~\cite{ABPSS25} by making use of Kummer's theorem which may be thought of as an analog of Lucas' theorem for prime powers. We state Kummer's theorem below, where the notation $S_p(n)$ denotes the sum of the digits of $n$ when written in base $p$.\\

\begin{theorem}[Kummer's theorem~\cite{kummer1852}] 
\label{thm:kummer}
    Let $p \in \mathbb{N}$ be a prime. Then for any integers $a \ge b \ge 0$, the largest power of $p$ that divides ${a \choose b}$ is equal to $\frac{S_p(b)+S_p(a-b)-S_p(a)}{p-1}$.
\end{theorem}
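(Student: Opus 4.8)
The plan is to deduce Kummer's theorem from Legendre's formula for the $p$-adic valuation of a factorial, which is the standard route. Write $v_p(N)$ for the exponent of the prime $p$ in a positive integer $N$. Legendre's formula states $v_p(n!) = \sum_{i\ge 1}\lfloor n/p^i\rfloor$, and the first step I would carry out is to rewrite this in the closed form $v_p(n!) = \frac{n - S_p(n)}{p-1}$. To see this, expand $n = \sum_{j\ge 0} d_j p^j$ in base $p$ with digits $0\le d_j<p$ (only finitely many nonzero), observe $\lfloor n/p^i\rfloor = \sum_{j\ge i} d_j p^{j-i}$, and sum over $i\ge 1$: since every term is nonnegative and all but finitely many vanish, we may reorder the double sum to get $\sum_{j\ge 1} d_j(1 + p + \cdots + p^{j-1}) = \sum_{j\ge 1} d_j \tfrac{p^j-1}{p-1} = \tfrac{1}{p-1}\bigl((n-d_0) - (S_p(n)-d_0)\bigr) = \tfrac{n - S_p(n)}{p-1}$, using $\sum_{j\ge1}d_j p^j = n - d_0$ and $\sum_{j\ge1}d_j = S_p(n) - d_0$.

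With this closed form in hand, the theorem follows by a one-line computation. Since $\binom{a}{b} = a!\,/\,(b!\,(a-b)!)$ and $v_p$ is additive over products and subtractive over quotients of positive integers,
\begin{align*}
v_p\!\left(\binom{a}{b}\right) \;=\; v_p(a!) - v_p(b!) - v_p((a-b)!) \;=\; \frac{a - S_p(a)}{p-1} - \frac{b - S_p(b)}{p-1} - \frac{(a-b) - S_p(a-b)}{p-1}.
\end{align*}
The integer terms $a$, $-b$, $-(a-b)$ cancel, leaving $\frac{S_p(b) + S_p(a-b) - S_p(a)}{p-1}$, which is exactly the exponent claimed. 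As a consistency check, this quantity is a nonnegative integer because $\binom{a}{b}\in\Z$; equivalently, it equals the number of carries in the base-$p$ addition $b + (a-b) = a$, since each carry removes $p$ from one digit and adds $1$ to the next, a net change of $-(p-1)$ in the digit sum.

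Honestly, there is no genuine obstacle here: Kummer's theorem is an elementary and classical fact, stated only to support the constant-query local corrector over torsion Abelian groups of bounded exponent, where one needs to control the $p$-parts of the interpolation coefficients $c_{\mathbf b}$ furnished by the hitting/interpolating sets of~\Cref{thm:interpolate}. The only point demanding even minimal care is the (routine) justification that the double sum in Legendre's formula may be rearranged — immediate since all summands are nonnegative and only finitely many are nonzero — and, for the carries formulation, the bookkeeping that a carry out of column $j$ subtracts $p$ from $d_j$ and adds $1$ to $d_{j+1}$, hence contributes exactly $-(p-1)$ to $S_p(\cdot)$.
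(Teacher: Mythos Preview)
Your proof is correct and is the standard derivation of Kummer's theorem via Legendre's formula. The paper does not supply its own proof of this statement; it merely cites the classical result~\cite{kummer1852} and then uses it as a black box in the proof of~\Cref{clm:interpolate-slice} to verify the divisibility constraints on the interpolation coefficients. (A minor contextual correction: the theorem is invoked for~\Cref{clm:interpolate-slice}, not directly for~\Cref{thm:interpolate}.)
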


    Let $M=\prod_{j=1}^{\ell} p_j^{r_j}$ be the prime factorization of the exponent $M$ of $G$ (so $\ell \le \log M$). For each $j\in [\ell]$, let $s_j \in \mathbb{N}$ be the smallest integer such that $p_j^{r_js_j} > d$. Then, we choose $k=\prod_{j\in [\ell]} p_j^{3r_js_j}$. Note that $p_j^{r_j(s_j-1)} \le d$ and hence $k \le \prod_{j\in [\ell]}(dp_j^{r_j})^3 \le d^{3\ell}M^3=\bigO_{M,d}(1) $ as needed. We then recall that $\varepsilon_1 = \frac{1}{10{sk \choose k}}=\Omega_{M,s,d}(1)$. 
    
    We claim that the algorithm below (\Cref{algo:constant-torsion}) is the desired local corrector. It queries $f$ at a few inputs from some distribution and outputs $P({\bf 1})$ with probability at least $9/10$, where $P \in \mathcal{J}_d$ is the unique degree-$d$ junta-sum such that $\delta(f,P) \le \varepsilon_1$. We will need the following claim in order to describe the local corrector. 

    \begin{claim}
    \label{clm:interpolate-slice}
        There exist integers $c_{{\bf b}}\in \Z$ for ${\bf b} \in {[sk] \choose k}$ such that for every $d$-junta-sum  $Q({\bf y}) \in \mathcal{J}_d(\{0,1\}^{sk}, G)$, we have that 
        \begin{align}Q({\bf 1}) = \sum_{{\bf b}\in {[sk]\choose k}} c_{\bf b} \cdot Q({\bf b}).\end{align}
    \end{claim}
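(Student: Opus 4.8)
\textbf{Proof proposal for \Cref{clm:interpolate-slice}.}

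The plan is to reduce this to an interpolation statement over a product domain where we can apply the interpolation machinery that was already used in the torsion-group analysis (namely the idea behind \Cref{algo:constant-torsion} and the use of Kummer's theorem). The key point is that $\binom{[sk]}{k}$ is precisely the balanced slice $\{0,1\}^{sk}_{k, sk-k}$ (viewing a $k$-subset as its indicator vector), and the all-ones point ${\bf 1}$ lies on the slice $\{0,1\}^{sk}_{sk,0}$. So what we want is to express the evaluation of a $d$-junta-sum at the top point of the Boolean cube as an integer linear combination of its evaluations on the weight-$k$ slice. First I would observe that since $Q \in \mathcal{J}_d(\{0,1\}^{sk}, G)$, by \Cref{clm:normal-form} it has a junta-polynomial representation $Q({\bf y}) = \sum_{A \subseteq [sk], |A| \le d} g_A \prod_{i \in A} y_i$ (over $\{0,1\}$, the $\delta_1(y_i)$ are just $y_i$). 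Evaluating at ${\bf 1}$ gives $Q({\bf 1}) = \sum_{|A|\le d} g_A$, and evaluating at a weight-$k$ point ${\bf b}$ with support $S$ gives $Q({\bf b}) = \sum_{A \subseteq S, |A|\le d} g_A$. Thus $\sum_{{\bf b} \in \binom{[sk]}{k}} c_{\bf b} Q({\bf b}) = \sum_{|A| \le d} g_A \left( \sum_{{\bf b}: A \subseteq \mathrm{supp}({\bf b})} c_{\bf b} \right)$, and by symmetry (choosing the $c_{\bf b}$ to depend only on $\dots$ nothing, i.e.\ taking all $c_{\bf b}$ equal to some integer $c$), the inner sum equals $c \cdot \binom{sk - |A|}{k - |A|}$, which depends on $|A|$. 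Since these binomials differ for different $|A|$, a single uniform choice of $c$ will not work; instead I would let $c_{\bf b}$ be allowed to vary, and reduce to showing that for each $j \in \{0,1,\dots,d\}$ there is a choice of integer coefficients realizing the ``indicator of degree exactly $j$'' functional. Equivalently, it suffices to find integers $(c_{\bf b})$ with $\sum_{{\bf b}: |S \cap \mathrm{supp}({\bf b})| = t, |\mathrm{supp}({\bf b})|=k} (\cdots)$ — cleaner: it suffices to show the $(d+1) \times (d+1)$ integer matrix $\left( \binom{sk-j}{k-j} \right)$ suitably set up is unimodular-ish, which is where the real content lies.

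More precisely, the cleanest route is: group the $\binom{[sk]}{k}$ points by how many coordinates of a \emph{fixed} reference $d$-set they contain. Let me instead follow the route of the paper's existing torsion argument directly. I would mimic \Cref{clm:interpolate-slice}'s proof to the one behind \Cref{thm:interpolate}/\Cref{clm:h}: one wants, for the specific interval/slice structure, an integer combination computing $Q({\bf 1})$. The relevant algebraic fact is that for the ``layer polynomials'' $L_j({\bf y}) = \binom{|{\bf y}|}{j}$ (which span the symmetric $d$-junta-sums and more importantly whose $\{0,1\}^{sk}$-evaluations control everything via inclusion-exclusion), one has the Vandermonde-type identity letting us solve for the top evaluation. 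Concretely, since any $d$-junta-sum restricted to behaving symmetrically is determined by $d+1$ parameters, and the weight-$k$ slice ``sees'' enough, I would set up the linear system $Q({\bf 1}) = \sum_{t=0}^{d} \gamma_t \cdot (\text{average of } Q \text{ over weight-}k\text{ points containing a fixed } t\text{-set})$ and show the $\gamma_t$ can be taken in $\mathbb{Z}$ by exhibiting that the transition matrix between ``values on the $k$-slice'' and ``junta-polynomial coefficients $g_A$'' has the property that its relevant inverse is integral. This integrality is exactly what Kummer's theorem (\Cref{thm:kummer}) is for: the choice $k = \prod_j p_j^{3 r_j s_j}$ was engineered so that all the binomial coefficients $\binom{sk-j}{k-j}$ for $0 \le j \le d$ are \emph{units modulo} $M$ (since $p_j^{r_j s_j} > d \ge j$ forces, via Kummer, the $p_j$-adic valuation of $\binom{sk-j}{k-j}$ to be zero for each prime $p_j \mid M$). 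Once each such binomial is invertible mod $M$, and since $G$ has exponent $M$, we may replace ``integer division by $\binom{sk-j}{k-j}$'' by multiplication by its inverse mod $M$, and then triangulate (Gaussian elimination over $\mathbb{Z}/M\mathbb{Z}$ on the $(d+1)\times(d+1)$ system, all pivots being units) to obtain the desired coefficients $c_{\bf b} \in \mathbb{Z}$ (lifting $\mathbb{Z}/M\mathbb{Z}$ coefficients to $\mathbb{Z}$ arbitrarily, which is harmless since $M \cdot (\text{anything}) = 0$ in $G$).

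I expect the main obstacle to be handling the non-symmetric part of $Q$ cleanly: the argument sketched above works verbatim for symmetric $Q$ (or for the symmetric ``layer'' basis), but a general $d$-junta-sum has coefficients $g_A$ for all small $A$, and one must argue that averaging over the symmetry group $\mathrm{Sym}_{sk}$ — which commutes with the slice and fixes ${\bf 1}$ — reduces the problem to the symmetric case without changing either side of the claimed identity. This is true because both ${\bf 1}$ and the \emph{set} $\binom{[sk]}{k}$ are $\mathrm{Sym}_{sk}$-invariant, so if $c_{\bf b} = c_{\pi({\bf b})}$ for all $\pi$ (i.e.\ $c_{\bf b}$ depends only on the weight, hence is a constant $c$ on the whole slice), then $\sum_{\bf b} c_{\bf b} Q({\bf b}) = c \sum_{\bf b} Q({\bf b}) = c \sum_{\bf b} (\text{Sym-average of } Q)({\bf b})$, and the Sym-average of $Q$ is a symmetric $d$-junta-sum agreeing with $Q$ at ${\bf 1}$. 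Wait — a single constant $c$ forces all degree-$j$ pieces to be weighted by $c\binom{sk-j}{k-j}$, which again is the obstruction; so one genuinely does need the $c_{\bf b}$ to vary, and the honest statement is that the $c_{\bf b}$ depend on ${\bf b}$ only through a bounded amount of data. I would therefore instead invoke \Cref{lem:mobius}-style interpolation on a well-chosen sub-collection of $\binom{[sk]}{k}$ of size $O_d(1)$ (a ``Hamming-ball-like'' family sitting inside the $k$-slice), exactly as \Cref{clm:h} and Claim 3.2.4 of~\cite{ABPSS25} do, with the divisibility/integrality of the interpolation coefficients guaranteed by the Kummer-theoretic choice of $k$; this is the step that requires care but is a direct adaptation of the cited results. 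Setting $c_{\bf b} = 0$ for ${\bf b}$ outside this sub-collection then yields the claim as stated.
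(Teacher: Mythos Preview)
Your proposal has the right high-level ingredients (Kummer's theorem, divisibility of binomials modulo $M$, restricting to a sub-collection of the slice), but the concrete execution contains a false claim and misses the paper's key construction.

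The specific assertion that ``the choice $k = \prod_j p_j^{3r_js_j}$ was engineered so that all the binomial coefficients $\binom{sk-j}{k-j}$ for $0 \le j \le d$ are units modulo $M$'' is not correct in general. By Kummer, the $p$-adic valuation of $\binom{sk-j}{k-j}$ is the number of carries when adding $k-j$ and $(s-1)k$ in base $p$; writing $k = p^e u$ with $e = 3rs$ and $\gcd(u,p)=1$, the low $e$ digits contribute no carries, but the high digits require adding $u-1$ and $(s-1)u$, which can certainly carry (e.g.\ $s=2$, $u=p-1$). So the pivots of your proposed $(d+1)\times(d+1)$ system over $\Z/M\Z$ need not be units, and the Gaussian-elimination step breaks down. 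Your fallback to ``a well-chosen sub-collection via \Cref{lem:mobius}/\Cref{clm:h}'' is too vague to count as a proof, since those lemmas produce Hamming balls of varying weight, not subsets of a single slice.

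The paper's actual construction is quite different and much cleaner. After the variable flip $y_i \mapsto 1-y_i$ (so we interpolate $Q(\mathbf{0})$ from the weight-$(s-1)k$ slice), one takes $c_{\bf b}$ equal to a \emph{single} constant $A$ on those ${\bf b}$ supported in the first $(s-1)k+d$ coordinates, and $0$ elsewhere. Testing against a monomial $\prod_{j\in I} y_j$ with $|I|=i\le d$: if $I$ meets the last $k-d$ coordinates both sides are $0$; otherwise the right-hand side collapses to $A\binom{(s-1)k+d-i}{(s-1)k-i}\cdot g$. Now the special choice of $k$ is used to show, via Kummer, that $\binom{(s-1)k+d}{(s-1)k}$ is coprime to $M$ (no carries, since $d < p_j^{r_js_j}$ fits entirely in the trailing zeros of $(s-1)k$), while for $1 \le i \le d$ the binomial $\binom{(s-1)k+d-i}{(s-1)k-i}$ is divisible by $M$ (the subtraction $(s-1)k - i$ creates many $(p_j-1)$-digits, forcing at least $r_j$ carries for each $p_j$). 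So one binomial is a unit and the rest are zero mod $M$ --- this is the opposite of your ``all units'' picture, and it lets a single integer $A$ (any inverse of $\binom{(s-1)k+d}{(s-1)k}$ mod $M$) work with no linear system to solve at all.
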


\begin{algobox}
\begin{algorithm}[H]
\caption{Local corrector for torsion groups}
\label{algo:constant-torsion}
\DontPrintSemicolon

\KwIn{Oracle access to a randomized function $f:\{0,1\}^n\to G$}
\vspace{3mm}

Sample a uniformly random function $h: [n] \to [sk]$.

For ${\bf b}\in {[sk]\choose k}$, let ${\x} = {\x}_h({\bf b}) \in \{0,1\}^n$ be the point obtained by setting $x_i = b_{h(i)}$ for $i\in [n]$.

Output $\sum_{{\bf b}\in {[sk]\choose k}} c_{{\bf b}} f(\x_h({\bf b}))$, where $c_{{\bf b}}$ are integers given by~\Cref{clm:interpolate-slice}.
\end{algorithm}
\end{algobox}

The above algorithm is similar to~\Cref{algo:localc} with the main difference being the choice of the interpolating set in the last step from~\Cref{clm:interpolate-slice} (as opposed to the ``weight balanced interpolating set'' of~\Cref{thm:interpolate}).

Assuming the correctness of~\Cref{clm:interpolate-slice}, we shall now finish the proof of~\Cref{lem:const-torsion-sub}. 

Firstly, we note that the local corrector makes $ {sk \choose k} = \bigO_{M,s,d}(1)$ queries as required. To prove correctness, for every ${\bf b} \in {[sk] \choose k}$, we note that the corresponding query point $x_h({\bf b}) \in \{0,1\}^n$ is distributed according to $\bern(1/s)^n$ since the map $h$ used in~\Cref{algo:constant-torsion} is uniformly random and ${\bf b}$ has $1/s$ fraction of indices as ones. Thus, we have that $f({\bf x}_h({\bf b})) \ne P({\bf x}_h({\bf b}))$ with probability at most $\varepsilon_1$, so by a union bound over ${\bf b}$, we have that with probability at least $1-\varepsilon_1 \cdot {sk \choose k} = 9/10$ (over the random choice of $h$ and the randomness of $f$), that $f({\bf x}_h({\bf b})) = P({\bf x}_h({\bf b}))$ for all ${\bf b}\in {[sk] \choose k}$. Now, letting $Q \in \mathcal{J}_d(\{0,1\}^{sk}, G)$ denote the restriction of $P$ defined as $Q({\bf y}) = P({\bf x}_h({\bf y}))$, we see that the output of~\Cref{algo:constant-torsion} is equal to $$\sum_{{\bf b}\in {[sk]\choose k}} c_{{\bf b}} P({\bf x}_h({\bf b}))= \sum_{{\bf b}\in {[sk]\choose k}} c_{{\bf b}} Q({\bf b}) = Q({\bf 1}) = P({\bf 1}),$$ where we are using~\Cref{clm:interpolate-slice} for the second equality and ${\bf x}_h({\bf 1}) = {\bf 1}$ for the last equality.  Therefore, the output of the local correction algorithm (\Cref{algo:constant-torsion}) is indeed $P({\bf 1})$ with probability at least $9/10$.  
\end{proof}

It now remains to prove~\Cref{clm:interpolate-slice}.

\begin{proof}[Proof of~\Cref{clm:interpolate-slice}] 

By replacing the variables $x_i$ with $1-x_i$, we note that the claim is equivalent to proving that there exists $c_{{\bf b}}\in \Z$ for ${\bf b}\in {[sk]\choose (s-1)k}$ such that for every $d$-junta-sum $Q\in \J_d(\{0,1\}^{sk}, G)$, it holds that 

\begin{align}\label{eqn:linear} Q({\bf 0}) = \sum_{{\bf b}\in {[sk]\choose (s-1)k}} c_{{\bf b}}\cdot Q({\bf b}).\end{align}

To show this, we proceed with the following assignments. For every ${\bf b} \in {[sk] \choose (s-1)k}$, we set $c_{\bf b} = 0$ if ${\bf b}$ contains a 1 in any of the last $k-d$ coordinates and we set $c_{\bf b} = A$ otherwise, where $A \in \Z$ will be decided later. Recall that $M=\prod_{j\in[\ell]} p_j^{r_j}$ and $k=\prod_{j\in[\ell]} p_j^{3r_js_j}$, and we have that $p_j^{r_j s_j} > d \ge p_j^{r_j(s_j-1)}$ for all $j\in [\ell]$. By linearity, it suffices to show~\eqref{eqn:linear} for $Q({\bf y})$ of the form $g\cdot \prod_{j\in I} y_j$ for all $I \in {[sk] \choose \le d}$ and $g\in G$. According to our assignment of $c_{\bf b}$, it is clear that~\eqref{eqn:linear} holds true (with LHS = RHS = 0) if $I$ contains any of the last $k-d$ coordinates. Otherwise, we have that $I \subseteq {[(s-1)k+d] \choose \le d}$. If $I=\emptyset$, we have $Q(0^{sk})=g$ and $\sum_{{\bf b}\in {[sk] \choose (s-1)k}} c_{\bf b}\cdot Q({\bf b}) = {(s-1)k+d \choose (s-1)k}A\cdot g$. On the other hand, if $|I|=i\ge 1$, we have $Q(0^{sk}) = 0$ and $\sum_{{\bf b}\in {[sk] \choose k}} c_{\bf b}\cdot Q({\bf b}) = {(s-1)k+d-i \choose (s-1)k-i}A\cdot g$ since every non-zero term must have $b_j=1$ for all $j\in I$. Hence, it suffices to find an integer $A$ satisfying the following two conditions:
\begin{align*}
     g & = {(s-1)k+d \choose (s-1)k} A \cdot g, \text{~for all~} g\in G, \text{~and~}\\
     0 & = {(s-1)k+d-i \choose (s-1)k-i} A \cdot g, \text{~for all~} g\in G \text{~and~} i\in [d].
\end{align*}

Let $k':=(s-1)k$. Since the order of every element $g$ divides the exponent $M$ of the group, for the above two conditions to hold, it suffices if for all $j\in [\ell]$ and $i\in [d]$, $p_j$ does not divide ${k'+d \choose k'}$ and that $p_j^{r_j}$ divides ${k'+d-i \choose k'-i}$ for all $i\in [d]$. Then we can take $A$ to be any integer such that $A {k'+d \choose k'} + A' M =1$ for some integer $A'$ (such $A$ and $A'$ are guaranteed to exist as $M$ and ${k'+d \choose k'}$ are coprime). The rest of the proof is dedicated to verifying these divisibility constraints hold.

\begin{itemize}
    \item \textbf{$p_j$ does not divide ${k'+d \choose k'}$:} We will represent all the numbers $k',d,i$ etc.~in base $p_j$. We note that the last $r_j s_j$ digits of $k'$ are zeroes since $p_j^{r_j}$ divides $k'$. Furthermore, since $d < p_j^{r_j s_j}$, all the digits of $d$ except the last $r_j s_j$ many are zeroes. Hence, the sum of digits of $k'+d$ is equal to the sum of the digits of $k'$ and $d$ combined. That is, $S_{p_j}(k')+S_{p_j}(d)-S_{p_j}(k'+d)=0$. Applying Kummer's theorem (\Cref{thm:kummer}) now finishes the proof. 
    
    \item \textbf{$p_j^{r_j}$ divides ${k'+d-i \choose k'-i}$:} By Kummer's theorem (\Cref{thm:kummer}), it suffices to show that 
    \begin{align}\label{eqn:kummer-eqn} \frac{S_{p_j}(d)+S_{p_j}(k'-i)-S_{p_j}(k'+d-i)}{p_j-1} \ge r_j.\end{align}
    We note that $S_{p_j}(k'+d-i) = S_{p_j}(k')+S_{p_j}(d-i)$ by the same argument as the above paragraph. In addition, we have the trivial bounds $S_{p_j}(d) \ge 1$ and $S_{p_j}(d-i) \le (p_j-1)r_j s_j$. Finally, we give a lower bound for $S_{p_j}(k'-i)$. Since $k'$ has at least $3r_j s_j$ trailing zeroes, we get that $S_{p_j}(k'-1) \ge S_{p_j}(k')+3r_js_j(p_j-1)-1$. But we observe that $S_{p_j}(k'-i) = S_{p_j}((k'-1)-(i-1)) = S_{p_j}(k'-1) - S_{p_j}(i-1)$ since the number of trailing $(p_j-1)$'s of $k'-1$ exceeds the total number of (non-zero) digits of $(i-1)$. Therefore, we get
    \begin{align*}
        {S_{p_j}(d)+S_{p_j}(k'-i)-S_{p_j}(k'+d-i)} & \ge 1+S_{p_j}(k'-1)-S_{p_j}(i-1)-S_{p_j}(k')-S_{p_j}(d-i)\\
        & \ge 1+(3r_js_j({p_j}-1)-1)-(p_j-1)r_js_j-(p_j-1)r_js_j\\
        & \ge r_js_j(p_j-1)\\
        & \ge r_j(p_j-1).
    \end{align*}
\end{itemize}

This finishes the proof of~\eqref{eqn:kummer-eqn}, and hence~\Cref{clm:interpolate-slice} and~\Cref{lem:const-torsion-sub}.

\end{proof}

\section{Combinatorial List-Decodability}\label{sec:cld}

We prove the combinatorial list-decodability bound for junta-sums (i.e., \Cref{thm:comb-bd}). \\

\combbound*

The proof can be broken into the following four steps:
\begin{itemize}
    \item First, we reduce to the setting where $G$ is finite.
    \item Second, we show the combinatorial bound for finite groups where every element has a sufficiently large order. 
    \item Third, we show the combinatorial bound for $p$-primary groups where $p$ is a sufficiently small prime.\footnote{An Abelian group is said to be {\em $p$-primary} if every element has order that is an exponent of $p$.}
    \item Finally, we combine the above bounds to get a combinatorial bound for arbitrary Abelian groups.
\end{itemize}
In particular, we prove the following two theorems.\\

\begin{theorem}[Combinatorial bound for large order]\label{thm:comb-large}
    For every $\varepsilon>0$, positive integers s,d, there exists a $p=p(s,d,\varepsilon)$ such that for every Abelian group $G$ which does not have any element of order at most $p$, the family $\J_d([s]^n,G)$ is $(1/s^d-\varepsilon, O_\varepsilon(1))$-list decodable.
\end{theorem}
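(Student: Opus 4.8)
The plan is to mimic the standard argument for list-decodability of Reed--Muller-type codes, combining an anticoncentration (distance) lemma with a Johnson-bound-style second-moment computation. First I would observe that for two distinct $d$-junta-sums $P,Q \in \J_d([s]^n, G)$, the difference $P - Q$ is again a $d$-junta-sum (since $\J_d$ is closed under addition), and since $P \neq Q$, \Cref{clm:dist-junta-sums} gives $\Pr_{\veca}[P(\veca) \neq Q(\veca)] \geq 1/s^d$. The key strengthening I need, and which is where the hypothesis ``no element of order at most $p$'' enters, is a one-sided anticoncentration statement: for a \emph{nonzero} $d$-junta-sum $R$ and \emph{any fixed} group element $g \in G$, the fraction of points where $R(\veca) = g$ should be bounded \emph{away from $1$}, say at most $1 - 1/s^d + o_\varepsilon(1)$, or more usefully the agreement fraction $\Pr_\veca[R(\veca) = g] \leq 1 - c$ for some $c = c(s,d) > 0$ whenever $g \neq 0$ --- and when $g = 0$ this is exactly \Cref{clm:dist-junta-sums}. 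For the $g \neq 0$ case I would use the large-order hypothesis: if every nonzero element has order $> p$ for $p$ large relative to $s^d$, then $R$ cannot take the constant value $g$ on too large a fraction of the cube, because iterating/scaling arguments (considering $R, 2R, 3R, \ldots$, all of which are $d$-junta-sums taking distinct constant values $g, 2g, 3g, \ldots$ on the same large set) would force too many $d$-junta-sums to simultaneously be nearly constant, contradicting \Cref{clm:dist-junta-sums} applied pairwise among them. More precisely, if $\Pr_\veca[R(\veca) = g] \geq 1 - \beta$, then for each $j \in \{1, \ldots, p\}$, $jR$ equals $jg$ on the same set of measure $\geq 1 - \beta$; since the $jg$ are distinct (order of $g$ exceeds $p$), the functions $jR$ are pairwise distinct $d$-junta-sums, so any two agree on a set of measure $\leq 1 - 1/s^d$; but all $p$ of them agree with a constant on a common set of measure $\geq 1 - p\beta$, forcing $p\beta \geq 1/s^d$, i.e. $\beta \geq 1/(ps^d)$. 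Choosing $p$ large makes $\beta$ small, which is the opposite of what I want --- so instead I should set this up to directly bound the list size.

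The cleaner route is the standard averaging/Johnson-bound argument: suppose $P_1, \ldots, P_L$ are distinct $d$-junta-sums each $(1/s^d - \varepsilon)$-close to $f$. Let $A_i = \{\veca : P_i(\veca) = f(\veca)\}$, so $|A_i|/s^n \geq 1/s^d - \varepsilon =: \alpha$ (rename; here $\alpha = 1/s^d - \varepsilon$). For $i \neq j$, on $A_i \cap A_j$ we have $P_i(\veca) = f(\veca) = P_j(\veca)$, so $|A_i \cap A_j|/s^n \leq \Pr_\veca[P_i(\veca) = P_j(\veca)] \leq 1 - 1/s^d$ by \Cref{clm:dist-junta-sums}. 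Now inclusion--exclusion / second moment on $\sum_i \mathbbm{1}_{A_i}$: writing $X = \sum_{i=1}^L \mathbbm{1}_{A_i}(\veca)$ for random $\veca$, we have $\E[X] \geq L\alpha$ and $\E[X^2] = \sum_i \E[\mathbbm{1}_{A_i}] + \sum_{i \neq j} \E[\mathbbm{1}_{A_i \cap A_j}] \leq L\alpha' + L(L-1)(1 - 1/s^d)$ where $\alpha' = \Pr[P_i = f] \leq 1$. By Cauchy--Schwarz, $(L\alpha)^2 \leq \E[X]^2 \leq \E[X^2] \leq L + L(L-1)(1-1/s^d)$, which rearranges to $L\alpha^2 \leq 1 + (L-1)(1 - 1/s^d)$, i.e. $L(\alpha^2 - 1 + 1/s^d) \leq 1/s^d$. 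The crucial point: $\alpha^2 - 1 + 1/s^d = (1/s^d - \varepsilon)^2 - 1 + 1/s^d$. For $\varepsilon = 0$ this is $1/s^{2d} - 1 + 1/s^d$, which is negative for $s^d \geq 2$ --- so the naive Johnson bound with base distance $1/s^d$ does \emph{not} close, and this is exactly why we need the large-order hypothesis to boost the ``agreement'' bound for pairs beyond $1 - 1/s^d$.

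Therefore the main obstacle, and the place the hypothesis must be used, is replacing ``$|A_i \cap A_j|/s^n \leq 1 - 1/s^d$'' by a genuinely smaller bound. Here I would argue: on $A_i \cap A_j$, $P_i - P_j$ vanishes, and $P_i - P_j$ is a nonzero $d$-junta-sum, so $|A_i \cap A_j|/s^n \leq 1 - 1/s^d$ --- but I want to instead bound $|A_i \cap A_j|$ using that \emph{$f$ itself} takes a specific value on $A_i \cap A_j$ and leverage that the $P_i$'s, restricted appropriately, must spread out. The trick from the prior work (following \cite{ABPSS25}) is a \emph{random restriction}: restrict all $P_i$ to a random small subgrid (or use the distance lemma on a random line/subcube) and argue that with good probability the restricted $P_i$'s remain distinct and the distances are preserved, reducing $n$ to a constant; then on a constant-size domain the finite-order hypothesis lets one invoke a pigeonhole/counting bound that there cannot be too many distinct $d$-junta-sums all close to one function, because a $d$-junta-sum on $[s]^k$ with $k$ constant is determined by $O_{s,d}(k^d) \cdot \log|G|$-ish data, and the no-small-order condition on $G$ prevents ``accidental'' coincidences over the reals/rationals after clearing denominators. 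Concretely I would: (1) reduce to $G$ finitely generated hence (torsion part) controllable; (2) pass to a random subcube of constant dimension $k = k(s,d,\varepsilon)$ via \Cref{lemma:sampling-subgrid} or \Cref{coro:restriction-do-not-kill}-type sampling so that all pairwise distances among the $P_i|_{\mathsf{C}}$ are within $\varepsilon/2$ of the originals and all $P_i|_{\mathsf{C}}$ stay pairwise distinct (union bound over the $\leq \binom{L}{2}$ pairs, each failing with probability $1/\mathrm{poly}(k)$, so $L = O_\varepsilon(1)$ must be assumed or bootstrapped); (3) on the constant-size cube, the anticoncentration forced by large order of $G$ gives that the relevant correlation inequality now closes with room to spare, pinning $L = O_\varepsilon(1)$. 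The delicate part I expect to fight with is making the self-referential use of ``$L = O_\varepsilon(1)$'' rigorous --- this is typically handled by first proving a weaker bound ($L \leq s^{\mathrm{poly}(n)}$ trivially, or $L = n^{O(1)}$) and then using the restriction step to collapse $n$ to a constant, at which point $|G|$-dependent or purely combinatorial counting on the constant cube gives the final $O_\varepsilon(1)$; alternatively one runs the second-moment argument directly with the improved pair bound and never needs the bootstrap. I would pursue the direct second-moment route first, using the large-order hypothesis precisely to ensure that for a nonzero $d$-junta-sum $R$ and nonzero $g$, $\Pr[R = g]$ is bounded below $1 - 1/s^d$ by an additive constant --- giving $|A_i \cap A_j|/s^n \leq 1 - 1/s^d - \Omega_\varepsilon(1)$ only when $f$ is non-constant on the relevant set, and handling the constant case separately --- and if that additive gain is at least $2\varepsilon - \varepsilon^2$ then the rearranged inequality $L(\alpha^2 - 1 + 1/s^d + \Omega_\varepsilon(1)) \leq 1/s^d$ gives $L = O_\varepsilon(1)$.
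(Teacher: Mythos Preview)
Your second-moment/Johnson route cannot be salvaged the way you outline. The relevant pairwise bound is $\Pr_{\bf a}[P_i({\bf a})=P_j({\bf a})] = \Pr_{\bf a}[(P_i-P_j)({\bf a})=0] \le 1-1/s^d$, and this is \emph{tight}: a $d$-junta-sum $R$ can vanish on exactly a $(1-1/s^d)$-fraction of the grid regardless of how large the minimal order of nonzero group elements is (take $R$ to be a single degree-$d$ monomial). So your hoped-for inequality ``$\Pr[R=g]\le 1-1/s^d-\Omega_\varepsilon(1)$'' fails precisely in the case $g=0$ that matters for the pairwise agreement sets $A_i\cap A_j$. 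The scaling argument you tried first (and correctly abandoned) cannot rescue this either, and the random-restriction sketch that follows never articulates where the large-order hypothesis is actually consumed.

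The paper's argument hinges on a different and stronger anti-concentration statement (\Cref{lem:anti-conc}): if a nonzero $R\in\J_d$ \emph{depends on at least $r=r(s,d,\varepsilon)$ variables}, then $\Pr[R({\bf a})\ne 0]\ge 1/s^{d-1}-\varepsilon$, i.e.\ the exponent drops from $d$ to $d-1$. This is where large order enters, via a reduction to the Boolean Meka--Nguyen--Vu anti-concentration bound (\Cref{clm:large-matching}, \Cref{lem:bool-antic}). The payoff is immediate: for any $P_i,P_j$ in the list, $\delta(P_i-P_j,0)\le \delta(f,P_i)+\delta(f,P_j)\le 2/s^d-2\varepsilon < 1/s^{d-1}-\varepsilon$, so $P_i-P_j$ must depend on only $O_\varepsilon(1)$ variables. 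This converts the problem from an analytic one to a purely combinatorial one about junta-sums on a bounded variable set.

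From there the paper does \emph{not} do a second-moment computation. Instead: (i) counting list elements with the \emph{same} $O_\varepsilon(1)$-variable support is easy by random restriction of the remaining coordinates; (ii) the sunflower lemma on the supports reduces to the case where the supports share a common core $C$ with disjoint petals; (iii) a careful setting of the core variables kills $C$ while preserving a constant fraction of the list, reducing to pairwise \emph{disjoint} supports; (iv) with disjoint supports the events ``$P_j({\bf x})=P_i({\bf x})$'' become independent across $j$, and a Chernoff bound on the fraction of list members agreeing with $f$ at a random point forces $t=O_\varepsilon(1)$. None of your proposed steps reach the crucial ``few variables'' conclusion, which is the real engine of the proof.
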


And we have:\\

\begin{theorem}[Combinatorial bound for $p$-primary groups]\label{thm:comb-small}
    For every $\varepsilon>0$, positive integers $s,d$, prime $p$, and finite $p$-primary group $G$, the family $\J_d([s]^n, G)$ is $(1/s^d-\varepsilon,O_{\varepsilon,p}(1)$-list decodable. 
\end{theorem}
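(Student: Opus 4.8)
The plan is to adapt the combinatorial list-decoding argument of~\cite{ABPSS25} from degree-$d$ polynomials over $\Boo^n$ to $d$-junta-sums over $\sgrid^n$, substituting the anti-concentration and distance facts for multilinear polynomials used there by the junta-sum analogues established in \Cref{sec:sz}. Throughout, fix $f:\sgrid^n\to G$, set $\rho := 1/s^d - \varepsilon$, and let $\mathcal{L} := \setcond{P \in \J_d(\sgrid^n, G)}{\delta(f, P) \le \rho}$; the goal is to bound $|\mathcal{L}|$ by a constant depending only on $\varepsilon$ and $p$ (and the fixed parameters $s,d$). I would first reduce to $G$ finite, exactly as in Step~1 of \Cref{sec:cld}: every member of $\mathcal{L}$ is valued in a fixed finitely generated, hence finite, $p$-primary subgroup of $G$, so we may assume $G$ is finite of exponent $p^r$.

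The next step is a reformulation via differences. Fixing any $P_1 \in \mathcal{L}$ and putting $Q_i := P_i - P_1$ for the remaining members, the $Q_i$ are pairwise distinct elements of $\J_d(\sgrid^n, G)$, hence pairwise at distance $\ge 1/s^d$ by \Cref{clm:dist-junta-sums}; moreover each has weight $\delta(Q_i,0) \le \delta(Q_i, f-P_1) + \delta(f-P_1,0) \le 2\rho = 2/s^d - 2\varepsilon$, and all the $Q_i$ lie within $\rho$ of the common function $f-P_1$. So it suffices to bound the number of such low-weight, pairwise-far $d$-junta-sums. This is also the point at which the $p$-primary hypothesis enters: a coefficient of large order would force a $d$-junta-sum to have weight close to $1$ (by anti-concentration), so over a $p$-primary group a low-weight junta-sum can involve only coefficients of small, power-of-$p$, order --- which is why the eventual bound depends on $p$ but not on the exponent $r$, and why the complementary \emph{large order} regime is treated separately in \Cref{thm:comb-large}.

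I would then bound this low-weight, pairwise-far family by induction on the junta-degree $d$, following the template of~\cite{ABPSS25}. In the inductive step one restricts a random block of coordinates (equivalently, passes to a random smaller subgrid as in \Cref{defn:random-embedding}); the crucial point is that by the sampling lemma \Cref{lemma:sampling-subgrid} together with the distance lemmas for junta-sums over multislices (\Cref{thm:dist-junta-sums} and \Cref{thm:dist-multislice}), such a restriction simultaneously (a) preserves the pairwise distances among all the $Q_i$, so they remain distinct and pairwise-far, and (b) keeps the weights of both the restricted $Q_i$ and the junta-degree-$(d-1)$ functions obtained from them by the first-difference construction used in the proof of \Cref{thm:dist-multislice} under control, letting one strip off coordinates and recurse to junta-degree $d-1$ over a related $p$-primary group at the cost of a constant multiplicative factor. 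The degree-$0$ base case is immediate --- a nonzero constant within $\rho = 1-\varepsilon$ of a fixed function $g$ must be a value attained by $g$ with frequency at least $\varepsilon$, so there are at most $1/\varepsilon$ of them --- and the $p$-dependence is supplied by a subsidiary anti-concentration statement for junta-sums over $p$-primary groups (the analogue of the $\mathbb{F}_p$- and $\mathbb{Z}_{p^r}$-weight-distribution facts of~\cite{ABPSS25}), which follows from \Cref{coro:weak-distance-balanced-slice} and the group structure.

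The main obstacle I anticipate is making the inductive step work \emph{uniformly} over the (a priori unbounded) family $\{Q_i\}$ without a circular union bound --- in particular, arguing that a single random restriction preserves low-weightness and pairwise distance for all members at once. Over the Boolean cube this is handled in~\cite{ABPSS25} using multilinear structure and the known expansion of Boolean slices; the replacement here is precisely the distance theory for junta-sums over balanced multislices developed in \Cref{sec:sz}, which itself rests on the eigenvalue bound \Cref{thm:more-general-matrix-eigenvalue}. Once that machinery is in place, the reduction to finite $G$, the difference trick, and the induction on $d$ are largely routine bookkeeping parallel to~\cite{ABPSS25}.
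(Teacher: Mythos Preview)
Your proposal has a genuine gap: you are essentially trying to run the \emph{large-order} argument (the one behind \Cref{thm:comb-large}) in the $p$-primary setting, and the two cases require different ideas. The anti-concentration lemma you allude to (\Cref{lem:anti-conc}) and the difference-and-induct scheme from \Cref{sec:prune} work precisely because all nonzero group elements have \emph{large} order; in a $p$-primary group with $p$ small this hypothesis fails, so the weight of a low-junta-degree function need not exceed $1/s^{d-1}-\varepsilon$ just because it depends on many variables. Your invocation of \Cref{coro:weak-distance-balanced-slice} and the multislice distance lemmas does not supply a substitute: those give the universal $1/s^d$ (or $1/(sd)^{sd}$) lower bound, which is independent of $p$ and cannot drive an induction that produces a $p$-dependent list bound. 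And the ``single random restriction preserves everything simultaneously'' step you flag as the main obstacle is exactly the point where a union bound over an unbounded list is circular; nothing in \Cref{sec:sz} resolves this, since those results concern a single fixed polynomial.

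The paper's proof takes a completely different route that does not touch the multislice machinery at all. First, via \Cref{lem:small-11}, one reduces an arbitrary finite $p$-primary $G$ to $G=\Z_p$ by a tree argument over a chain of quotients $G=G_0\to G_1\to\cdots\to G_h$ by order-$p$ subgroups; this is where the bound stays independent of the exponent of $G$. Second, via \Cref{lem:small-1}, one embeds $\Z_p$ into a field $\F_q$ with $q$ the least prime power $\ge s$ and identifies $[s]$ with a subset $S\subseteq\F_q$, turning junta-sums into genuine polynomials. Finally, \Cref{lem:fq-case} bounds the list over $\F_q$ using polynomial algebra: a pruning to pairwise-distinct leading monomials (\Cref{lem:small-2}), a sunflower step on those monomials, and a Chernoff-type tail bound for common zeros of polynomials with disjoint leading monomials proved via the footprint bound (\Cref{lem:small-4}). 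The $p$-dependence enters only through $q=q(p,s)$. None of the reduction-to-$\F_q$, leading-monomial, or footprint ingredients appears in your outline, and they are what make the $p$-primary case go through.
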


Using the above two theorems, we finish the proof of~\Cref{thm:comb-bd}.
\begin{proof}[Proof of~\Cref{thm:comb-bd}]
    Given~\Cref{thm:comb-large} and~\Cref{thm:comb-small}, the proof follows the same outline as the prior work~\cite{ABPSS25} on combinatorial bound for low-degree polynomials over the Boolean cube, so we defer the proof.
\end{proof}

We now show the proof for the large order case in~\Cref{sec:large-order} and the $p$-primary groups case in~\Cref{sec:p-grp}.

\subsection{Combinatorial Bound for Large Order}\label{sec:large-order}

We prove~\Cref{thm:comb-large} in this subsection. Throughout this section, we assume that all the elements of $G$ have order at least $p$ (where $p=p(\varepsilon)$ is to be determined), $s\ge 2$, and $ \varepsilon \in (0, 1/s^d)$ is arbitrary. Following along the lines of the proof for the Boolean case ($s=2$) from~\cite{ABPSS25}, we prove (and use) an anti-concentration inequality for junta-sums depending on many variables. Once the right anti-concentration lemma is in place, the rest of the proof of the combinatorial bound is more or less identical to the Boolean case, except we are able to make some simplifications as we are only aiming for a bound of $O_\varepsilon(1)$ (as opposed to $\poly(1/\varepsilon)$ from~\cite{ABPSS25}). We state the anti-concentration inequality below and defer its proof to the end of this subsection. In order to state the lemma, we need a definition -- we say that a function $f:[s]^n \to G$ {\em depends} on the $i$-th variable if there exists $\x\in [s]^n$ such that $f(\x) \ne f(\x')$ for some $\x' \in [s]^n$ that agrees with $\x$ on the coordinates $[n]\setminus \{i\}$.  

\begin{lemma}[\bf Anti-concentration lemma]\label{lem:anti-conc}
    For integers $s\ge 2$ and $d\ge 1$, and every $\varepsilon > 0$, there exists $r = r(s,d,\varepsilon)>0$ and $p=p(s,d,\varepsilon)$ such that for every Abelian group $G$ which does not contain any element of order less than $p$, and every $P \in \J_d([s]^n, G)$ that depends on at least $r$ variables, it holds that:
    $$ \Pr_{{{\bf a} \sim [s]^n}} [P({\bf a})\ne 0] \ge 1/s^{d-1}-\varepsilon.$$
\end{lemma}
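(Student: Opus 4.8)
\textbf{Proof proposal for the anti-concentration lemma (\Cref{lem:anti-conc}).}

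The plan is to adapt the inductive strategy used in \cite{ABPSS25} for the Boolean case, with the key modification being that the induction is on the junta-degree $d$ and the ``depends on many variables'' hypothesis is propagated carefully through the inductive step. I would first write $P$ in its unique junta-polynomial normal form (\Cref{clm:normal-form}): $P(\x) = \sum_{{\bf a}:|{\bf a}|\le d} g_{\bf a} \prod_{i:a_i\ne 0}\delta_{a_i}(x_i)$. The base case $d=1$ is essentially an elementary computation: a degree-$1$ junta-sum is $P(\x)=g_\emptyset + \sum_{i\in S}\sum_{a\ne 0} g_{i,a}\delta_a(x_i)$ where $S$ is the set of variables $P$ depends on; since the order of every nonzero group element exceeds $p$, each single-variable summand contributes a ``fresh'' nonzero value with good probability, and one can argue that for $r$ large, $\Pr[P(\x)\ne 0] \ge 1 - \varepsilon$ (which is $\ge 1/s^{0} - \varepsilon = 1-\varepsilon$, matching the bound for $d-1=0$). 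The large-order hypothesis is exactly what prevents cancellations of the form $g_1 + g_2 + \cdots = 0$ with few terms.

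For the inductive step, fix a variable $x_i$ that $P$ depends on, and write $P(\x) = \sum_{a\in [s]} \mathbbm{1}[x_i=a]\cdot P_a(\x_{-i})$ where each $P_a$ is a $d$-junta-sum on the remaining variables, and moreover the ``derivatives'' $P_a - P_b$ are $(d-1)$-junta-sums (this is the same degree-drop phenomenon exploited in \Cref{thm:dist-multislice}). Since $P$ depends on $x_i$, some difference $P_a - P_b$ is not identically zero, hence is a nonzero $(d-1)$-junta-sum. The crux is then a dichotomy: either (i) some such nonzero difference $P_a - P_b$ depends on many ($\ge r'$) variables, in which case conditioning on $x_i$ taking one of the values $\{a,b\}$ and applying the induction hypothesis to $P_a - P_b$ gives that $P_a(\x_{-i}) \ne P_b(\x_{-i})$ with probability $\ge 1/s^{d-2}-\varepsilon'$, so at least one of $P_a, P_b$ is nonzero with comparable probability; or (ii) all the nonzero differences depend on few variables, which forces a structural decomposition of $P$ — after restricting the (boundedly many) ``influential for some difference'' variables, $P$ becomes a sum of a $(d)$-junta-sum in disjoint blocks of variables and a low-complexity part, and one recurses/unions over the restrictions. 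I would aggregate over all $s$ values of $x_i$ using a suitable averaging (a union-bound / inclusion-exclusion over which $P_a$ are zero) to upgrade the bound from $1/s^{d-2}$ (on a difference) to $1/s^{d-1}$ (on $P$ itself); the clean way is to note $\Pr[P(\x)\ne 0] = \mathbb{E}_{\x_{-i}}[\,|\{a: P_a(\x_{-i})\ne 0\}|/s\,]$ and lower-bound the expected count of nonzero $P_a$'s using the pairwise-distinctness probabilities.

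The main obstacle I anticipate is handling case (ii) cleanly: controlling the number of variables that can appear in \emph{some} low-degree derivative, and ensuring that after fixing those variables the residual function still ``depends on many variables'' so the induction hypothesis remains applicable (or else $P$ genuinely has a product-like structure on few variables and the claim degrades gracefully). This bookkeeping — relating ``$P$ depends on $r$ variables'' to ``each $P_a - P_b$ depends on $r' = r'(r,s,d)$ variables or else a bounded set of variables captures all the dependence'' — is exactly where \cite{ABPSS25} does careful work in the Boolean case, and the generalization to alphabet size $s$ requires tracking the $\delta_{a_i}$-structure of monomials. Since the target list-size bound is only $O_\varepsilon(1)$ rather than polynomial in $1/\varepsilon$, I expect to be able to choose the thresholds $r(s,d,\varepsilon)$ and $p(s,d,\varepsilon)$ quite generously (e.g. towers in $1/\varepsilon$), which simplifies the quantitative bookkeeping considerably relative to \cite{ABPSS25}. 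The role of $p = p(s,d,\varepsilon)$ is to absorb all the ``no small-support cancellation'' requirements that arise at every level of the recursion, so it should be taken larger than any constant that the case analysis throws up.
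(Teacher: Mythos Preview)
Your proposal correctly identifies the inductive skeleton and Case~(i) of your dichotomy matches the paper's Case~1 almost verbatim. The gap is in Case~(ii): your plan there is too vague, and as written it will not close.

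Concretely, consider $P$ that is a sum of $r/d$ pairwise disjoint degree-$d$ monomials (say over disjoint blocks of $d$ variables each). For any single variable $x_i$, every discrete derivative $P_a - P_b$ depends on at most $d-1$ variables (only the one block containing $x_i$ contributes), so your Case~(i) fails for every choice of $x_i$. In Case~(ii) you propose to fix the boundedly many variables influential for those derivatives and recurse; but here that means fixing the $d$ variables of one block, which leaves you with a $d$-junta-sum of exactly the same shape on $r-d$ variables. You have not reduced the degree, and iterating this $\Omega(r/d)$ times accumulates an $s^{\Omega(r)}$ loss from the union bound over restrictions. There is no way to bottom out. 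The same obstruction hits your base case: for $d=1$, a sum $g_\emptyset + \sum_{i\in S} Y_i$ of many independent non-constant $G$-valued summands does \emph{not} have small point-mass by an ``elementary computation'' --- this is already a Littlewood--Offord / anti-concentration statement that needs a real argument.

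What the paper actually does is import this missing ingredient as a black box. It first proves (\Cref{clm:large-matching}) that if $P$ has $u$ pairwise disjoint non-zero degree-$d$ monomials then $\Pr[P=0]\le\varepsilon$, by embedding a random Boolean subcube into $[s]^n$ and invoking the Meka--Nguyen--Vu anti-concentration bound (via \cite{ABPSS25}) on the restriction. The induction then has \emph{three} cases rather than two: Case~1 is your Case~(i); Case~2 says that if $P$ has many disjoint top-degree monomials, apply \Cref{clm:large-matching} directly; Case~3 says that otherwise the top-degree monomials admit a small vertex cover $C$ with $|C|=O_{s,d}(1)$, and since Case~1 fails every monomial touching $C$ lives on boundedly many variables, so fixing the variables in $C$ drops the degree to $d-1$ while (by choosing $r$ large enough) leaving a non-zero function, whence the $1/s^{d-1}$ bound follows from the basic distance lemma for $(d-1)$-junta-sums. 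The Case~2 / MNV step is precisely what your proposal is missing, and it is essential --- your Case~(ii) cannot substitute for it.
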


Note that this improves on the trivial bound of $1/s^d$ for general non-zero junta-sums. Given the above lemma, we now prove~\Cref{thm:comb-large}. The proof proceeds in multiple stages -- in each stage, we make the junta-sums in the list (of close-by junta-sums to a fixed function) more structured, thus pruning the list at each stage.

\subsubsection{Pruning the List} \label{sec:prune}

For a function $f:[s]^n\to G$, let $L_\varepsilon(f) \subseteq \J_d([s]^n,G)$ denote the set (or rather ``list'') of junta-sums $P$ such that $\delta(f,P) \le 1/s^d - \varepsilon$. Our goal is to show that $|L_\varepsilon(f)| \le O_\varepsilon(1)$. We first reduce the problem to counting the number of junta-sums in the list that depend only on a few variables.
\paragraph{\em Reducing to counting junta-sums depending on a {few variables}.} If $P_1,P_2 \in L_\varepsilon(f)$, note that $$\delta(P_1 - P_2, {\bf 0}) = \delta(P_1,P_2) \le \delta(f,P_1) + \delta(f,P_2) \le 2/s^d - 2\varepsilon.$$  
Now applying~\Cref{lem:anti-conc} for $P=P_1-P_2$ (which is also a $d$-junta-sum), we get that $P_1-P_2$ depends on at most $r(\varepsilon)$ variables, as otherwise we get $1/s^{d-1} - \varepsilon \le \delta(P_1-P_2,{\bf 0}) \le 2/s^d - 2\varepsilon$ which would be a contradiction. Hence, if $L_\varepsilon(f) = \{P_1,P_2,\dots,P_t\}$, we observe that $P_1-P_t,P_2-P_t,\dots,P_{t-1}-P_t$ are distinct junta-sums that are in $L_\varepsilon(P_1-f)$ and depend on at most $r = O_\varepsilon(1)$ variables. Therefore, it suffices to count such junta-sums to get a final combinatorial bound. In order to do this, we first count such junta-sums depending on the same set of variables. 

\paragraph{\em Counting junta-sums depending on the {same set of few variables}.} Without loss of generality, let the variable set on which the junta-sums depend on be $[r]$. That is, let $P_1,\dots,P_t$ be the $d$-junta-sums that are at distance at most $1/s^d-\varepsilon$ from a function $f:[s]^n \to G$, and each $P_i$ only depends on the first $r$ variables. For ${\bf a} \in [s]^{n-r}$, let $f_{{\bf a}}:[s]^r \to G$ be the function obtained by setting the last $n-r$ variables of $f$ to be uniformly random independently. Since $\delta(f,P_i) \le 1/s^d-\varepsilon$ for every $i\in [t]$, we have $\E_{{\bf a}}[\delta(f_{{\bf a}},P_i)] \le 1/s^d-\varepsilon$, hence with probability at least $\varepsilon/2$ over the choice of ${\bf a}$, it holds that $\delta(f_{{\bf a}},P_i) \le 1/s^d-\varepsilon/2$ (where we are thinking of $P_i$ as being a function from $[s]^r$ to $G$). By linearity of expectation, this means that the expected number of junta-sums $P_i$ such that $P_i \in L_{\varepsilon/2}(f_{{\bf a}})$ is at least $\varepsilon t/2$. Hence, it suffices to show that $|L_{\varepsilon/2}(f')|$ for every $f':[s]^r\to G$ is $O_\varepsilon(1)$ to conclude that $t=O_\varepsilon(1)$. To do this, we note that $P_1 \ne P_2 \in L_{\varepsilon/2}(f')$ cannot agree on more than $1-1/s^d$ fraction of inputs, so for a given subset of $[s]^r$ of size $s^r-s^{r-d}$, there is at most one junta-sum in the list $L_{\varepsilon/2}(f')$ that agrees with $f'$ on that subset. Therefore $|L_{\varepsilon/2}(f')|\le {s^r \choose s^{r}-s^{r-d}} = O_\varepsilon(1)$ as $r=O_\varepsilon(1)$. 

\paragraph{\em Reducing to the case where the variable sets form a {sunflower}.} We recall that our goal is to prove that the number of $d$-junta-sums that are at distance at most $1/s^d-\varepsilon$ from a given $f:[s]^n\to G$ is $O_\varepsilon(1)$. However, from the above paragraph, we see that the number of such junta-sums depending on the same set of variables is $O_\varepsilon(1)$. Thus, it suffices to show the following:

Suppose $P_1,\dots,P_t \in L_\varepsilon(f)$ are such that they depend on {\em distinct} subsets of variables. Then, $t=O_\varepsilon(1)$.  

Now, consider the set system formed over the universe $[n]$ by the subsets of variables each $P_i$ depends on. Applying the sunflower lemma (e.g.~\cite{erdos1960intersection}, Theorem 3) to this set system, we observe that if $t>r!(m-1)^r$, then there exists $P_{i_1},\dots,P_{i_m} \in L_\varepsilon(f)$ such that the subset of variables they depend on forms a {\em sunflower}: that is, if the subset of variables that $P_i$ depends on is denoted by $V_i \subseteq [n]$, then there exists a {\em core} $C \subseteq [n]$ such that $V_{i_{j_1}} \cap V_{i_{j_2}} = C$ for every $j_1 \ne j_2 \in [m]$ and the {\em petals} $V_{i_j} \setminus C$ are non-empty. Hence, it suffices to show that $m=O_\varepsilon(1)$ to get that $t=O_\varepsilon(1)$. For the remainder of the proof, we shall assume that $i_j=j$ for $j\in [m]$, without loss of generality.

\paragraph{\em Reducing to the case where the variable sets are {pairwise disjoint}.} While the application of the sunflower lemma in the above step results in a core $C$ which can be non-empty, the goal of this step is to show that we can essentially assume that $C=\emptyset$ without loss of generality. We prove this by carefully setting the variables in $C$ (which is assumed to be non-empty) to constants. We will switch the domain of the functions from $[s]^n$ to $\Z_s^n$ as we will be using junta-polynomial representations.

Let $\x = \z \cup (\y^{(1)} \cup \y^{(2)} \dots \y^{(m)}) \cup {\bf w}$ be a partition of the variable set where $\z$ denotes the variables indexed by $C$, and $y^{(i)}$ denotes the variables that $P_i$ depends on other than $\z$ (i.e., $y^{(i)}$ corresponds to the variables indexed by $V_i \setminus C$), and ${\bf w}$ are the remaining variables. We let $n_0=|C|=|\z|$ and $n_i = |y^{(i)}|$. Then we note that we can express each $P_i$ (for $i\in [m]$) as follows: 
$$P_i(\x) = P_i(\z,\y^{(i)}) = \sum_{{\bf a}\in \Z_s^{n_i}:|{\bf a}| \le d} \delta_{{\bf a}}(\y^{(i)}) \cdot P_{i,{\bf a}}(\z),$$ 
where we use the notation $\delta_{{\bf a}}({\bf y}^{(i)}) = \prod_{j\in [n_i]} \delta_{a_j}(y^{(i)}_j) $. Let {\em $\y$-degree} of $P_i$ denote the maximum value of $|{\bf a}|$ for which $P_{i,{\bf a}}$ is non-zero; since $P_i$ depends on ${\bf y^{(i)}}$ variables, the ${\bf y}$-degree must be in $[d]$. Moreover, since $|\z| \le r=O_\varepsilon(1)$, the number of possible monomials (without considering coefficients) in $P_{i,{\bf a}}$ is $O_{s,d,\varepsilon}(1)=O_\varepsilon(1)$. Thus, assuming $m$ is a large enough function of $1/\varepsilon$ (otherwise, we are done), using the pigeon-hole principle, we can assume without loss of generality that the ${\bf y}$-degree of the $P_i$'s are all the same (say $d'\in [d]$) and that each $P_{i,{\bf a}}$ contains a non-zero coefficient for the monomial $\delta_{{\bf b}}(\z)$ for some ${\bf b} \in \Z_s^{n_0}$, and that $\delta_{{\bf b}}(\z)$ is a non-zero monomial with the maximal degree. Without loss of generality, let the first $n_0'$ coordinates of ${\bf b}$ be zero and the remaining ones be non-zero, where $0\le n_0' \le n_0$. We will first set the first $n_0'$ variables in ${\bf z}$ (if $n_0'=0$, we skip this step) uniformly at random: we note that setting these variables cannot cancel the monomial $\delta_{{\bf b}}(\z)$ as by assumption, it is a monomial with maximal degree. Denoting the restricted functions by $P_1',\dots,P_t'$ and the restriction of $f$ by $f'$, we have that these are all distinct and each $P_i'$ satisfies $\delta(f',P_i') \le 1/s^d -\varepsilon/2$ with probability at least $\varepsilon/2$. Thus, there exists a choice of assignments to the first $n_0'$  variables of $\z$ such that for at least $\varepsilon t/2$ many $P_i'$ s, it holds that $\delta(f',P_i') \le 1/s^d-\varepsilon/2$. Without loss of generality, we assume that these are the initial $t'=\varepsilon t/2$ junta-sums. We now set the remaining variables of $\z$ uniformly at random. We note that for $i\in [t']$, since $P_{i,{\bf a}}$ is still non-zero even after setting some variables of $\z$ in the earlier step, with probability at least $1/s^{n_0-n_0'}$, it holds that  $P_i$ is non-zero.  However, since the junta-degree of $P_i$ is at most $d$ and the $\y$-degree of $P_i$ is $d'$, we must have that $n_0-n_0' \le d-d'$. That is, denoting the final junta-sums after setting all the variables of $\z$ by $P_i''$ respectively and the restricted function of $f$ by $f''$, we have that $P''_i$ is non-zero with $\Omega_\varepsilon(1)$ probability. Furthermore, each $P_i''$ if non-zero has junta-degree at most $d'$. Since $\delta(f',P_i') \le 1/s^d - \varepsilon/2$ and we are only setting $n_0-n_0' \le d-d'$ variables when going from $P_i'$ to $P_i''$, we must have that $\delta(f'',P_i'') \le 1/s^{d'} - \varepsilon/2$. Thus, we have reduced to the case where the junta-sums we want to count all depend on pairwise disjoint sets of variables (although the degree changes from $d$ to $d'$, we will use $d$ for the rest of the proof for simplicity; similarly we use $\varepsilon$ instead of $\varepsilon/2$).  

\paragraph{\em Counting junta-sums depending on pairwise disjoint variables.} To recap, we are now in the following setup: We have an arbitrary function $f:[s]^n\to G$ and distinct $d$-junta-sums $P_1,\dots,P_t$ depending on pairwise disjoint subsets of variables such that $\delta(f,P_i) \le 1/s^d - \varepsilon$, and the goal is to show that $t=O_\varepsilon(1)$. The main idea is that the junta-sums behave ``independently'' as they depend on disjoint subsets of variables and so there cannot be many of them correlated with the same function $f$. More formally, we consider the following quantity:
\begin{align}\label{eqn:prob}
    \Pr_{{\x}\sim [s]^n} \bigg[\exists i\in [t]:\bigg| \{j\in [t]:P_j(\x)=P_i(\x)\}\bigg| \ge (1-1/s^d+\varepsilon/2)t-1\bigg].
\end{align}
On the one hand, since $\Pr_{{\bf x}\sim [s]^n, i\sim [t]}[f(\x)=P_i(\x)] \ge 1-1/s^d + \varepsilon$, we have that~\eqref{eqn:prob} is at least $\varepsilon/2$ (i.e., with probability $\varepsilon/2$, at least $1-1/s^d+\varepsilon/2$ fraction of the junta-sums agree with $f$ and so with each other). On the other hand, since any two distinct junta-sums agree on at most $1-1/s^d$ fraction of inputs and the events $P_j(\x) = P_i(\x)$ are independent across different $j\ne i$, we have that~\eqref{eqn:prob} is at most $t/2^{\Omega(\varepsilon^2 t)}$. Combining both, we get $t=O_\varepsilon(1)$.
    
\begin{proof}[Proof of~\Cref{thm:comb-large}]
    The above discussion finishes the proof of~\Cref{thm:comb-large}.
\end{proof}

\subsubsection{Anti-concentration Lemma}\label{sec:anti-conc}

We end with a proof of the anti-concentration lemma (\Cref{lem:anti-conc}). For this, we will need the following claim about junta-sums that have a certain matching structure. This is analogous (and extends) the corresponding result of Meka, Nguyen and Vu~\cite{MNV} used in the analysis for the Boolean case ($s=2$) in~\cite{ABPSS25}. To state the claim, we say that two monomials of a junta-polynomial: $\delta_{{\bf a}}$ and $\delta_{{\bf b}}$ (where ${\bf a},{\bf b}\in \Z_s^n$), are {\em disjoint}, if the non-zero indices of ${\bf a}$ and ${\bf b}$ are disjoint. \\

\begin{claim}\label{clm:large-matching}
    For integers $s\ge 2$ and $d\ge 1$ and every $\varepsilon>0$, there exists $u=u(s,d,\varepsilon)$ and $p=p(d,\varepsilon)$ such that for every Abelian group $G$ which does not contain any element of order less than $p$, and every $d$-junta-sum 
    $$P({\bf x}) = \sum_{{\bf a}\in \Z_s^n:|{\bf a}|\le d} g_{{\bf a}}\cdot \delta_{{\bf a}}(\x)$$
    with at least $u$ many pairwise disjoint non-zero monomials of degree $d$, it holds that:
    $$\Pr_{{\bf a}\sim \Z_s^n}[P({\bf a}) = 0] \le \varepsilon.$$
\end{claim}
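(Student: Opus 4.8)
The plan is to reduce the multi-valued ($\Z_s$) case to a Boolean-type argument by a grouping/restriction trick, and then run the second-moment (or Fourier/character) argument of Meka--Nguyen--Vu as adapted in \cite{ABPSS25}. First I would isolate a set $\mathcal{M}$ of $u$ pairwise disjoint degree-$d$ monomials $\delta_{{\bf a}^{(1)}},\dots,\delta_{{\bf a}^{(u)}}$ with nonzero coefficients $g_1,\dots,g_u \in G$, and let $S = \bigcup_{\ell} \mathrm{Supp}({\bf a}^{(\ell)})$ be the (disjoint) union of their supports, so $|S| = du$. I would condition on an arbitrary assignment to all variables outside $S$; this only changes $P$ restricted to $S$ by adding a fixed element and possibly activating lower-degree terms, but crucially each $\delta_{{\bf a}^{(\ell)}}$ restricted to $S$ still has its coefficient $g_\ell$ and the disjointness is preserved. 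So it suffices to prove the bound for $n = du$ with $P$ having these $u$ disjoint "top" monomials. For each block $\ell \in [u]$, define the Boolean indicator $z_\ell = \prod_{i \in \mathrm{Supp}({\bf a}^{(\ell)})} \delta_{a^{(\ell)}_i}(x_i) \in \{0,1\}$; under ${\bf x} \sim \Z_s^{du}$ the $z_\ell$ are i.i.d.\ $\mathrm{Bern}(1/s^d)$, and they are the only way the monomials in $\mathcal{M}$ fire.

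Next I would push all remaining monomials of $P$ into an "error" term and argue that, with good probability over ${\bf x}$, the value of $P({\bf x})$ is a nonzero $G$-linear combination $\sum_{\ell \in T} g_\ell$ over a random subset $T = \{\ell : z_\ell = 1\}$, plus a fixed offset coming from monomials not in $\mathcal{M}$. The key point is anti-concentration of $\sum_{\ell \in T} g_\ell + c$ over the random set $T$: here is where the hypothesis that $G$ has no element of order $< p$ enters. Writing $H_\ell = \langle g_\ell\rangle$, each $g_\ell$ has order $\ge p$, so adding or not adding $g_\ell$ moves the partial sum within a coset structure of size $\ge p$; a standard Littlewood--Offord/Erd\H{o}s-type argument over Abelian groups (as in \cite{MNV} and its use in \cite{ABPSS25}) gives that for any fixed $c \in G$,
\[
\Pr_{T}\Big[\sum_{\ell \in T} g_\ell = c\Big] \le O\!\left(\frac{1}{\sqrt{u}} + \frac{1}{p}\right),
\]
provided the $g_\ell$ are "spread out"; if many $g_\ell$ coincide one uses the high order $\ge p$ to win, and if they are distinct one uses the $u$ independent $\pm$-type choices to win by a square-root bound. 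Choosing $p$ and $u$ large enough in terms of $s, d, \varepsilon$ makes this at most $\varepsilon/2$.

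Finally I would combine: $P({\bf x}) = 0$ forces either the (low-probability) event that the contribution of monomials outside $\mathcal{M}$ is itself not equal to the fixed value we computed — but actually the cleaner route is to first fix the randomness of all coordinates \emph{not} hit by any firing pattern interacting with non-$\mathcal{M}$ monomials; since the $\mathcal{M}$-monomials have disjoint supports within $S$ and every other monomial of degree $\le d$ has support of size $\le d$, one can peel off the dependence and reduce exactly to the group Littlewood--Offord bound above. Hence $\Pr_{{\bf x} \sim \Z_s^n}[P({\bf x}) = 0] \le \varepsilon$. The main obstacle I anticipate is the group-theoretic anti-concentration step: the clean Boolean Littlewood--Offord bound does not transfer verbatim, and one has to handle simultaneously the case of many repeated coefficients (resolved via the order-$\ge p$ hypothesis, giving the $1/p$ term) and the case of generic coefficients (resolved via a Fourier/character argument on $G$ giving the $1/\sqrt{u}$ term), then take the worst case. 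Making the dependence $u = u(s,d,\varepsilon)$, $p = p(d,\varepsilon)$ explicit and checking that conditioning on outside coordinates genuinely preserves "$u$ disjoint nonzero degree-$d$ monomials" are the two places that need care but should go through as in \cite{ABPSS25}.
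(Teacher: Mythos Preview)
Your approach has a genuine gap in the step where you reduce to a subset sum $\sum_{\ell\in T} g_\ell + c$. After conditioning on the variables outside $S$, the restricted polynomial $P|_S$ is still an arbitrary $d$-junta-sum on the $du$ variables in $S$; in particular it can (and generically will) contain many degree-$\le d$ monomials whose support straddles two or more of your blocks $B_\ell = \mathrm{Supp}({\bf a}^{(\ell)})$. These cross-block monomials depend on the \emph{same} randomness as the indicators $z_\ell$, so the ``offset'' $c$ is not fixed and cannot be treated as such. Your attempted repair (``fix the randomness of all coordinates not hit by any firing pattern interacting with non-$\mathcal{M}$ monomials'') does not help: every coordinate in $S$ may participate in some non-$\mathcal{M}$ monomial, so there is nothing left to fix. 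Already for $d=2$, $s=2$, with disjoint top monomials $x_1x_2$ and $x_3x_4$, a cross term like $x_1x_3$ couples the two blocks in a way that no conditioning inside $S$ removes. This is precisely why the MNV argument for $d\ge 2$ is \emph{not} a block-wise Littlewood--Offord bound but a genuinely polynomial anti-concentration statement.

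The paper's proof takes a quite different, and more modular, route: instead of conditioning on coordinates, it restricts $P$ to a random $2$-valued subcube $C=\{u_1,v_1\}\times\cdots\times\{u_n,v_n\}$ with $u_i\neq v_i$ chosen uniformly in $\Z_s$. For each of the $u$ disjoint monomials $\delta_{{\bf a}^{(\ell)}}$, the event that $(u_j,v_j)=(0,a^{(\ell)}_j)$ for all $j$ in its support has probability $(s(s-1))^{-d}=\Omega_{s,d}(1)$ and is independent across $\ell$; when it occurs, a short computation shows the Boolean monomial $\prod_{j\in S_\ell} y_j$ in $P|_C$ has coefficient exactly $g_{{\bf a}^{(\ell)}}\neq 0$ (any other $\delta_{\bf b}$ with the same support vanishes on $C$). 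A Chernoff bound then guarantees that, with probability $\ge 1-\varepsilon/2$, at least $t(d,\varepsilon/2)$ of the disjoint monomials survive, at which point the Boolean lemma (\cite{MNV}/\cite{ABPSS25}) is invoked as a black box on $P|_C$. The point is that this restriction carries \emph{all} cross-block interactions along into the Boolean polynomial, where the existing anti-concentration lemma already handles them; your approach tries to strip them away, which cannot be done.
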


\begin{proof}
    The main idea is to reduce to the Boolean case and use the following result from~\cite{ABPSS25}, which itself is derived using the anti-concentration result of Meka, Nguyen and Vu~\cite{MNV} over the reals. 
    \begin{lemma}[\cite{MNV},~\cite{ABPSS25} Theorem 4.1.6 and Claim 4.1.5]\label{lem:bool-antic}
        For every positive integer $d$ and $\varepsilon>0$, there exists $t=t(d,\varepsilon)$ and $p=p(d,\varepsilon)$ such that for every Abelian group $G$ which does not contain any element of order less than $p$, and every junta-degree-$d$ polynomial $P\in \J_d(\{0,1\}^n,G)$ with at least $t$ many pairwise disjoint non-zero monomials, it holds that:
        $$\Pr_{{{\bf a}}\sim \{0,1\}^n}[P({\bf a}) = 0] \le \varepsilon.$$
    \end{lemma}
    We now show how to use the above lemma to deduce a similar inequality for general $s$ i.e., we prove~\Cref{clm:large-matching}. Let $u$ denote the number of pairwise disjoint non-zero monomials of degree $d$ in the junta-polynomial representation of $P$. Assuming a sufficiently large lower bound on $u$, our goal is to show that $$\Pr_{{\bf a}\sim \Z_s^n}[P({\bf a})=0] \le \varepsilon.$$
    We choose a uniformly random ${\bf a}\in \Z_s^n$ as follows:
    \begin{itemize}
        \item Choose a random {\em subcube} $C \subseteq \Z^n_s$ by picking ${\bf u},{\bf v}\in \Z^n_s$, where $u_i\ne v_i \in \Z_s$ are chosen uniformly at random and independently over $i\in [n]$: more specifically, $C=\{u_1,v_1\}\times \dots \times \{u_n,v_n\}$.
        \item Choose ${\bf a} \in C$ uniformly at random. 
    \end{itemize}
    Let $t=t(d,\varepsilon/2)$ and $p=p(d,\varepsilon/2)$ be given by the functions $t(.,.)$ and $p(.,.)$ in~\Cref{lem:bool-antic}. Let ${\bf a}_1,\dots,{\bf a}_r \in \Z^n_s$ (where $u=u(s,d,\varepsilon)$ will be decided later) be such that the monomials $\delta_{{\bf a}_i}(\x)$ are pairwise disjoint monomials, with $|{\bf a}_i| = d$, and have non-zero coefficients in $P$, where $i\in [u]$. Let $S_i \subseteq [n]$ denote the indices where ${\bf a}_i$ is non-zero, so that $S_i$ are pairwise disjoint for $i\in [u]$. Now, if $u_j = 0$ and $v_j = a_j$ for all $j\in S_i$, we note that if we treat $P$ restricted to $C$ as function over the Boolean cube (with $u_j\mapsto 0$ and $v_j \mapsto 1$ for $j\in S_i$ and rest of the coordinates are mapped arbitrarily), the monomial $\prod_{j\in S_i} x_j$ has the same coefficient as that of $\delta_{{\bf a}_i}({\bf x})$ in $P$ (which is non-zero), since no other monomials can cancel this. Thus,  if we can prove that there are at least $t$ many of the ${{\bf a}_i}$'s for which it holds that $(u_j,v_j)=(0,a_j)$ for $j\in S_i$, then we have a multilinear polynomial over $C$ (or equivalently over $\{0,1\}^n$) with at least $t$ many non-zero disjoint monomials, in which case, we apply~\Cref{lem:bool-antic} to conclude that for a random point in ${\bf a}\sim C$, the probability that $P({\bf a})=0$ is at most $\varepsilon/2$. Therefore, $$\Pr_{{\bf a}\sim \Z_s^n}[P({\bf a})=0] \le \varepsilon/2 + \Pr[|\{i\in [r]:(u_j,v_j)=(0,a_j)~\forall j\in S_i\}| \ge t].$$ Now, we observe that the events $(u_j,v_j)=(0,a_j)$ are independent across $j\in S_i$ and $i$. In particular, we have $\Pr[(u_j,v_j)=(0,a_j)~\forall i\in S_i] = \paren{\frac{1}{s(s-1)}}^d \ge \Omega_{s,d}(1)$.  Hence, if $u$ is a sufficiently large enough function of $s,d,\varepsilon$, by applying the Chernoff bound, we get that $$\Pr[|\{i\in [u]:(u_j,v_j)=(0,a_j)~\forall j\in S_i\}| \ge t] \le \varepsilon/2,$$ which in turn implies that $\Pr_{{\bf a}\sim \Z_s^n}[P({\bf a})=0] \le \varepsilon$.
\end{proof}

Finally, we finish the proof of the anti-concentration lemma.

\begin{proof}[Proof of~\Cref{lem:anti-conc}]
    The proof is by induction on $d$. 
    \paragraph{Base case $d=1$.} We take $r=u(s,1,\varepsilon)$, where $u(.,.)$ is given by the function in~\Cref{clm:large-matching}, so that if $P$ depends on $r$ variables, we are guaranteed that there are at least $r$ degree 1 pairwise disjoint monomials. Then, applying~\Cref{clm:large-matching}, we get $\Pr_{{\bf a}\in [s]^n}[P({\bf a}) \ne 0] \ge 1-\varepsilon$. 

    \paragraph{Induction step $d>1$.} The analysis is based on three cases.

    \begin{itemize}
        \item {\bf Case 1:} There exists a variable (say $x_1$ w.l.o.g.) and an index $j\in [s-1]$ such that in the junta-polynomial representation $$P(\x) = P_0(x_2,\dots,x_n) + \sum_{j=1}^{s-1} \delta_j(x_1) P_j(x_2,\dots,x_n),$$ $P_j$ depends on at least $r_1 = r(s,d-1,\varepsilon)$ variables. In this case, we note that for a random choice of $a_2,\dots,a_n \in [s]$, by applying the induction hypothesis to $P_j$ (which is a $(d-1)$-junta-sum), we have that $P_j(a_2,\dots,a_n) \ne 0$ with probability at least $\frac{1}{s^{d-2}}-\varepsilon$. Thus, the restriction of $P$ unto the variable $x_1$ is a non-constant function on setting $x_i=a_i$ for $i>1$. Therefore, we have $\Pr_{{\bf a}\sim [s]^n}[P({\bf a} \ne 0)] \ge \frac{1}{s}\cdot \paren{\frac{1}{s^{d-2}}-\varepsilon} \ge \frac{1}{s^{d-1}}-\varepsilon$.
        \item {\bf Case 2:} Suppose there exists $r_2 = u(s,d,1/2)$ many pairwise disjoint non-zero monomials of degree $d$ in $P$, where $u(.)$ is given by~\Cref{clm:large-matching}. Then, we immediately get $$\Pr_{{\bf a}\sim [s]^n}[P({\bf a}) = 0] \le \frac{1}{2} \le 1-\frac{1}{s^{d-1}}.$$
        \item {\bf Case 3:} Suppose neither Case 1 nor Case 2 occur. We now consider the set system 
$\Delta$ over $[n]$, where we include $S\in \Delta$ for $S\subseteq [n]$ if there exists ${\bf b}\in \Z_s^n$ such that the coefficient of $\delta_{{\bf b}}(\x)$ is non-zero in $P$, and $S$ is the set of non-zero indices of ${\bf b}$. Since there cannot be $r_2$ many $S \in \Delta$ that are pairwise disjoint and each $S\in \Delta$ is of size at most $d$, we can guarantee that there exists a small ``cover''; i.e., there exists indices $i_1,\dots,i_\ell \in [n]$ with $\ell \le d r_2$ such that for every degree $d$ non-zero monomial $\delta_{{\bf b}}(\x)$ in $P$, there exists $j\in [\ell]$ such that $b_{i_j} \ne 0$ (i.e., $x_{i_j}$ is contained in the corresponding monomial). We now count the number of monomials in $P$ which contain the variable $x_{i_j}$ for some $j\in [\ell]$. Since Case 1 does not occur, we can bound this by $(s-1) \cdot {(s-1)r_1\choose \le d}$, where the $s-1$  factor accounts for the number of monomials where $x_{i_j}$ appears as $\delta_{j'}(x_{i_j})$ for $j'\in [s-1]$, and the second factor $(s-1)r_1\choose \le d$ bounds the number of non-zero monomials of a function depending only on at most $r_1$ variables. Now, we set the variables $\{x_{i_j}:j\in [\ell]\}$ arbitrarily and show that the restricted function of $P$ is still non-zero. We note that once we set these variables, all the degree $d$ monomials would reduce in degree as the variables begin set form a ``cover'', thus we can bound the probability of the restriction of $P$ being non-zero as being at least $\frac{1}{s^{d-1}}$.  Hence, it only remains to prove that the restriction is non-zero. To see this, we set $r=r(s,d,\varepsilon) = 1+2\ell (s-1){(s-1)r_1 \choose \le d}$; this ensures that even after setting all the variables $x_{i_j}:j\in [\ell]$, there is at least one non-zero monomial in the restricted function. 
    \end{itemize}
\end{proof}

\subsection{Combinatorial Bound for $p$-primary groups}\label{sec:p-grp}

In this section, we prove~\Cref{thm:comb-small}. The high level proof approach again follows closely as that of the Boolean case $s=2$ from~\cite{ABPSS25}. The proof consists of the following steps:

\begin{itemize}
    \item The first step is to reduce the problem from general $p$-primary groups to the case of $\Z_p$. We prove a combinatorial bound for this case and ``lift'' it to the general case. 
    \item Then we show that that we can instead count polynomials over a field $\F_q$ (for some $q=O(s,p)$) rather than junta-sums. 
    \item In order to get the bound for the $\F_q$ case, we show that we can essentially assume without loss of generality that the polynomials in the list have pairwise disjoint leading monomials.
    \item Finally, we show a tail bound for the roots of polynomials with pairwise disjoint leading monomials, which results in a list size bound. 
\end{itemize}

We divide the proof into two subsections; we prove the first two items above in~\Cref{subsec:fq-redn} and the next two items in~\Cref{subsec:fq}.

\subsubsection{Reducing to the Case of Constant-sized Field $\F_q$}\label{subsec:fq-redn}

For a field $\F$ and a subset $S \subseteq \F$ of size $|S| = s\ge 2$, we note that $\J_d(S^n, \F)$ is exactly the family of functions that can be uniquely expressed as a polynomial where each (non-zero) monomial has at most $d$ variables and individual degree at most $(s-1)$ in each variable. For the remainder of this subsection, we will use this interpretation. We show next that we can always assume that $S \subseteq \F$ (and then use the polynomial interpretation) without much loss in parameters for the combinatorial bound.  \\

\begin{lemma}[{\bf Reducing counting junta-sums to polynomials}]\label{lem:small-1}
    If $\J_d(S^n, \F_q)$ is $(1/s^d-\varepsilon, O_{q,\varepsilon}(1))$-list-decodable for every $\varepsilon>0$ and every finite field $\F_q$ and subset $S\subseteq \F_q$ of size $s$, then $\J_d([s]^n, \Z_p)$ is also $(1/s^d-\varepsilon, O_{p,\varepsilon}(1))$-list-decodable for every prime $p$ and every $\varepsilon>0$.
\end{lemma}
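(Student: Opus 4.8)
\textbf{Proof proposal for \Cref{lem:small-1}.}
The plan is to pick a prime power $q = p^t$ large enough that $q \geq s$, fix a subset $S \subseteq \F_q$ of size exactly $s$ (which is possible once $q \geq s$), and use the hypothesized list-decoding bound for $\J_d(S^n, \F_q)$ together with a dimension-counting / linear-algebra argument to ``transfer'' list elements back and forth between $\Z_p$-valued junta-sums and $\F_q$-valued polynomials. The key point is that $\F_q$, as an additive group, is isomorphic to $\Z_p^t$, so an $\F_q$-valued junta-sum is really a $t$-tuple of $\Z_p$-valued junta-sums, and conversely a single $\Z_p$-valued junta-sum embeds into an $\F_q$-valued one via the inclusion $\Z_p \hookrightarrow \F_q$ (as the prime subfield). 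Under the polynomial interpretation mentioned just before the lemma statement, $\J_d(S^n, \F_q)$ is exactly the space of polynomials over $\F_q$ with at most $d$ variables per monomial and individual degree $< s$; so the hypothesis gives us an $O_{q,\varepsilon}(1)$ list-size bound for this polynomial family.

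First I would set up the transfer in the easy direction. Suppose $f : [s]^n \to \Z_p$ and let $P_1, \dots, P_L \in \J_d([s]^n, \Z_p)$ be distinct junta-sums each with $\delta(f, P_i) \leq 1/s^d - \varepsilon$. Identify the domain $[s]^n$ with $S^n$ via a fixed bijection $[s] \to S$, and compose with the inclusion $\iota : \Z_p \hookrightarrow \F_q$. This turns $f$ into $\widetilde f : S^n \to \F_q$ and each $P_i$ into $\widetilde P_i : S^n \to \F_q$. The crucial observation is that $\iota$ is injective, so $\delta(\widetilde f, \widetilde P_i) = \delta(f, P_i) \leq 1/s^d - \varepsilon$, and distinctness is preserved. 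Moreover, because $\iota$ is $\Z_p$-linear and the junta-polynomial representation of $P_i$ over $\Z_p$ (with coefficients in $\Z_p$) pushes forward to a valid polynomial representation over $\F_q$ with the same degree and locality constraints, each $\widetilde P_i$ genuinely lies in $\J_d(S^n, \F_q)$. Hence $L \leq $ the list-size bound for $\J_d(S^n, \F_q)$ at radius $1/s^d - \varepsilon$, which by hypothesis is $O_{q,\varepsilon}(1) = O_{p,\varepsilon}(1)$ since $q = q(s,p)$ is a function of $s$ and $p$ only (and $s$ is a fixed parameter of the statement). That already gives the conclusion.

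The step I expect to require the most care — and the place where I would double-check the hypothesis is being used with the right subset $S$ — is ensuring that the polynomial family $\J_d(S^n, \F_q)$ appearing in the hypothesis is exactly the pushforward target: that is, that every $\F_q$-polynomial arising as $\widetilde P_i$ has each monomial supported on at most $d$ variables with individual degrees all in $\{0, 1, \dots, s-1\}$. This follows from \Cref{clm:normal-form} applied over $\Z_p$ (the unique junta-polynomial representation of a $d$-junta-sum into an Abelian group, here specialized to $G = \Z_p$) combined with the standard fact that over a subset $S \subseteq \F_q$ of size $s$ the indicator functions $\delta_a(x)$ for $a \in S$ are themselves polynomials of degree $s-1$ in $x$, and products of at most $d$ of them (on distinct variables) are exactly the monomials of $\J_d(S^n, \F_q)$ in the polynomial interpretation. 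One subtlety: the hypothesis quantifies over \emph{all} subsets $S \subseteq \F_q$ of size $s$, so we are free to choose $S$ to be, say, $\{0, 1, \dots, s-1\}$ mapped into $\F_q$ via the natural inclusion of $\{0, \dots, p-1\} = \Z_p$ when $s \leq p$, or an arbitrary size-$s$ subset when $s > p$; either way the argument goes through verbatim. Finally I would remark that the converse direction (decomposing an $\F_q$-valued list into $\Z_p$-valued lists) is not needed for this lemma — only the embedding direction is used — which keeps the proof short.
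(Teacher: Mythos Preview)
Your proposal is correct and follows essentially the same route as the paper: choose $q$ a power of $p$ with $q\ge s$, embed $\Z_p$ additively into $\F_q$ (as the prime subfield), bijectively identify $[s]$ with some $S\subseteq\F_q$, push $f$ and the $P_i$ forward, observe that distances and distinctness are preserved and that each $\widetilde P_i\in\J_d(S^n,\F_q)$, then invoke the hypothesis. The paper's justification that $\widetilde P_i\in\J_d(S^n,\F_q)$ is slightly more direct than yours---it simply notes that a sum of $d$-juntas remains a sum of $d$-juntas under a domain bijection and an additive codomain homomorphism---so your detour through the polynomial/individual-degree interpretation, while correct, is not needed.
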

\begin{proof}[Proof]
    Fix an arbitrary prime $p$ and let $q$ be the smallest power of $p$ that is at least $s$. Let $f:[s]^n \to \Z_p$ be arbitrary and $P_1,\dots,P_t \in \J_d([s]^n,\Z_p)$ be distinct junta-sums such that $\delta(f,P_i) \le \frac{1}{s^d}-\varepsilon$ for $i\in [t]$. We will prove that $t=O_{q,\varepsilon}(1)$ assuming that $\J_d(S^n, \F_q)$ is $(1/s^d-\varepsilon, O_{q,\varepsilon}(1))$-list-decodable. We shall identify $\Z_p$ with a subgroup of $\F_q$ of order $p$: in particular, let $H \subseteq \F_q$ be a subgroup of $\F_q$ that is homomorphic to $\Z_p$, via a group homomorphism $\sigma : \Z_p \to H$. 
    Let $\phi:[s]\to S$ be an arbitrary bijection and let $g:S^n \to \F_q$ be defined by $g(\x) = \sigma(f(\phi^{-1}(\x)))$. Similarly, for $i\in [t]$, let $Q_i:S^n \to \F_q$ be defined by $Q_i(\x) = \sigma(P_i(\phi^{-1}(\x)))$. We claim that $Q_i \in \J_d(S^n, \F_q)$: indeed, if $P_i(\x) = \sum_{I \in {[n]\choose \le d}} P_{i,I}(\x_I)$ for functions $P_{i,I}:[s]^I \to \Z_p$, then we have $Q_i(\x) = \sigma\paren{\sum_{I\in {[n]\choose \le d}} P_{i,I}(\phi^{-1}(\x_I))} = \sum_{I\in {[n]\choose \le d}} \sigma(P_{i,I}(\phi^{-1}(\x_I)))$. Moreover, $\delta(f,P_i) = \delta(g,Q_i)$ and $Q_i$'s are pairwise distinct functions since $\sigma$ and $\phi$ are bijections. By our assumption that $\J_d(S^n, \F_q)$ is $(1/s^d-\varepsilon, O_{q,\varepsilon}(1))$-list-decodable, we get that $t\le O_{q,\varepsilon}(1)=O_{p,\varepsilon}(1)$. \\
\end{proof}

\begin{lemma}[{\bf Lifting the bound to general $p$-primary groups}]\label{lem:small-11}
    If $\J_d([s]^n, \Z_p)$ is $(1/s^d-\varepsilon, O_{p,\varepsilon}(1))$-list-decodable for every $\varepsilon>0$ and every prime $p$, then $\J_d([s]^n, G)$ is also $(1/s^d-\varepsilon, O_{p,\varepsilon}(1))$-list-decodable for every finite $p$-primary group $G$ and every $\varepsilon>0$.
\end{lemma}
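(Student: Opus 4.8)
The plan is to reduce the general $p$-primary group $G$ to the cyclic case $\Z_p$ by peeling off one ``layer'' of $G$ at a time via the filtration by $p^i G$. Concretely, since $G$ is a finite $p$-primary group it has a filtration $G = G_0 \supseteq G_1 \supseteq \cdots \supseteq G_r = \{0\}$ where $G_i := p^i G$; each quotient $G_i/G_{i+1}$ is an elementary abelian $p$-group, hence (as an abstract group) a finite-dimensional vector space over $\Z_p$. We will induct on $r$ (the smallest integer with $p^r G = 0$), using the hypothesis that $\J_d([s]^n, \Z_p)$ is $(1/s^d - \varepsilon, O_{p,\varepsilon}(1))$-list-decodable together with the (routine) fact that list-decodability for a fixed group is preserved under taking finite direct sums and subgroups with only a multiplicative blow-up in the list size (so $\J_d([s]^n, V)$ is $(1/s^d-\varepsilon, O_{p,\varepsilon,\dim V}(1))$-list decodable for any finite $\Z_p$-vector space $V$; note the dependence on $\dim V$ is harmless since $G$, and hence each layer, is a fixed finite group).

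Here is the inductive step in more detail. Fix $f : [s]^n \to G$ and let $P_1, \dots, P_t \in \J_d([s]^n, G)$ be distinct $d$-junta-sums with $\delta(f, P_i) \le 1/s^d - \varepsilon$ for all $i$; we must bound $t$. Compose everything with the quotient map $\pi : G \to G/G_{r-1}$, where $G_{r-1} = p^{r-1}G$. Then $\pi \circ f$ and each $\pi \circ P_i$ land in $G/G_{r-1}$, which has a shorter filtration (it is killed by $p^{r-1}$), and $\pi\circ P_i$ is still a $d$-junta-sum with $\delta(\pi\circ f, \pi \circ P_i) \le 1/s^d - \varepsilon$. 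By the induction hypothesis there are only $O_{p,\varepsilon}(1)$ distinct values among $\pi \circ P_1, \dots, \pi\circ P_t$; hence by pigeonhole we may pass to a subfamily (still of size $\Omega_{p,\varepsilon}(t)$) on which all $\pi\circ P_i$ agree — say they all equal $\pi\circ P_1$. For any two indices $i, j$ in this subfamily, $P_i - P_j$ is a $d$-junta-sum taking values in $\ker \pi = G_{r-1} = p^{r-1}G$, which is an elementary abelian $p$-group, i.e.\ a finite $\Z_p$-vector space. Now the $P_i - P_1$ (as $i$ ranges over the subfamily) are distinct $d$-junta-sums into $G_{r-1}$, and each satisfies $\delta(P_i - P_1, \mathbf{0}) = \delta(P_i, P_1) \le \delta(P_i, f) + \delta(f, P_1) \le 2(1/s^d - \varepsilon) < 2/s^d - \varepsilon$. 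Since $G_{r-1}$ is a $\Z_p$-vector space, the list-decodability bound over $\Z_p$ (lifted to vector spaces over $\Z_p$ as noted above) applied at radius $2/s^d - \varepsilon$ — which is within the list-decoding radius provided, say, $2/s^d - \varepsilon < $ (something $<$ the relevant bound; in the regime $\varepsilon \in (0, 1/s^d)$ one can always first reduce to small $\varepsilon$, or simply invoke the bound with the worst-case radius $1/s^{d}-\varepsilon'$ after a standard argument splitting $2/s^d-\varepsilon$) — bounds the size of this subfamily by $O_{p,\varepsilon}(1)$. Combining the two pigeonhole factors gives $t = O_{p,\varepsilon}(1)$, completing the induction and the proof.

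\textbf{Main obstacle.} The genuinely delicate point is the radius bookkeeping: $P_i - P_1$ is only guaranteed to be within radius $\approx 2/s^d$ of $\mathbf 0$, which is twice the nominal list-decoding radius, so one cannot directly invoke a ``$(1/s^d - \varepsilon)$-list-decodable'' statement for the difference. The clean fix — and the one I expect to use — is to not take differences against a fixed $P_1$ but rather to argue directly: having fixed the value of $\pi \circ P_i$, the functions $P_i$ all differ pairwise by elements of the smaller group $G_{r-1}$, and one runs the same ``pruning the list'' style argument as in the large-order case (Section~\ref{sec:prune}) but now inside the vector space $G_{r-1}$, where the distance lemma (\Cref{clm:dist-junta-sums}) gives that any two distinct such $P_i$ agree on at most a $1 - 1/s^d$ fraction of inputs, and a random-restriction / independence argument on a random coordinate then bounds how many can simultaneously correlate with $f$. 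This is exactly parallel to the end of Section~\ref{sec:large-order} and introduces no new ideas beyond the induction on the filtration length; the only care needed is that all the $O(\cdot)$ factors accumulated over the (constantly many) layers of $G$ remain $O_{p,\varepsilon}(1)$, which they do since $G$ is a fixed finite group.
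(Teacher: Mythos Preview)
Your proposal has two issues, one cosmetic and one genuine.

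\textbf{The ``radius doubling'' is a self-inflicted wound.} You compute $\delta(P_i-P_1,\mathbf{0})$, but that is not the relevant quantity. The received word for the subfamily is $f-P_1$, not $\mathbf{0}$, and
\[
\delta\bigl(P_i-P_1,\; f-P_1\bigr)=\delta(P_i,f)\le \tfrac{1}{s^d}-\varepsilon,
\]
with no factor of $2$. The only wrinkle is that $f-P_1$ takes values in $G$ rather than in the subgroup $G_{r-1}$, but since each $P_i-P_1$ lands in $G_{r-1}$ you can replace every value of $f-P_1$ outside $G_{r-1}$ by an arbitrary element of $G_{r-1}$ without decreasing agreement with any $P_i-P_1$. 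So your ``main obstacle'' paragraph can be deleted; the vague appeal to the pruning machinery of Section~\ref{sec:prune} is not needed here.

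\textbf{The genuine gap is the dependence on $G$.} The lemma asserts a bound $O_{p,\varepsilon}(1)$ that is \emph{uniform over all finite $p$-primary groups} $G$; the subscripts list exactly the parameters the bound may depend on, and $G$ is not among them. Your induction accumulates one factor per layer: if the bottom layer $G_{r-1}\cong(\Z_p)^k$ contributes $L^k$ and the quotient contributes its own bound, unwinding gives $L^{\sum_i k_i}=L^{\log_p|G|}$. Your sentence ``which they do since $G$ is a fixed finite group'' is precisely where the argument fails --- it produces $O_{p,\varepsilon,|G|}(1)$, not $O_{p,\varepsilon}(1)$, and this matters downstream because in the proof of \Cref{thm:comb-bd} the $p$-primary component of the ambient group can be arbitrarily large.

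The paper avoids this with a different mechanism. It also builds a filtration (in fact a composition series, one copy of $\Z_p$ at a time) and the associated rooted tree of depth $h=\log_p|G|$, but instead of multiplying list sizes level by level it tracks the potential $\rho(v)=1/s^d-\delta(f_i,P_i)$ and proves, via the distance bound for junta-sums, that a parent $u$ with two distinct children $v,w$ satisfies $\rho(u)\ge\rho(v)+\rho(w)$. This superadditivity across siblings, together with $\rho\ge\varepsilon$ at every leaf and branching $\le L$, yields a leaf count of $L^{O(\log(1/\varepsilon))}$ --- independent of the height $h$ and hence of $G$. That height-independence is the missing idea in your approach.
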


\begin{proof}[Proof]
    The proof essentially follows the same outline as that from~\cite{ABPSS25} which handles $s=2$. For arbitrary fixed $\varepsilon > 0$, let $L$ be the list size; i,e., for every $g:[s]^n \to \Z_p$, there exists at most $L \le O_{p,\varepsilon}(1)$ junta-sums $Q\in \J_d([s]^n, \Z_p)$ such that $\delta(g,Q) \le 1/s^d -\varepsilon$.  We will now show that for an arbitrary $f:[s]^n \to G$, that the number of junta-sums $P\in \J_d([s]^n, G)$ such that $\delta(f,P) \le 1/s^d - \varepsilon$ is at most $L^{O(\log(1/\varepsilon))}$, thus giving the required bound. Using the notation $L_\varepsilon(f)$ to denote the set of junta-sums $P\in \J_d([s]^n, G)$ such that $\delta(f,P) \le 1/s^d - \varepsilon$, our goal now is to prove an upper bound on $|L_\varepsilon(f)|$. In order to prove this, we will need the following setup. Since $G$ is a finite $p$-primary group, there exists an element $h_0\in G$ of order $p$; let $H_0 \subseteq G$ be the subgroup generated by $h_0$. We then note that the quotient group $G/H_0$ is again a $p$-primary group. By continuing this argument, we have a sequence of groups $G=G_0,G_1,\dots,G_h$ for some $h\in \mathbb{N}$ such that $G_{i+1} = G_i/H_i$, where $H_i \subseteq G_i$ is a subgroup of order $p$ (generated by some $h_i \in G_i$) and $G_h$ is the trivial group containing just the identity element. Now, we let $f_0=f$ and for $0\le i\le h$, we define $f_i:[s]^n \to G_i$ by the recurrence $$f_{i+1}(\x) = f_i(\x)~\mod H_i.$$ 
    
    We now define a rooted tree $T$ as follows: there are $h+1$ levels of the tree, with the root being level $h$ and the leaves being level $0$. We now describe the vertices and their labels bottom-up. The vertices in level 0 are in bijection with the junta-sums $L_\varepsilon(f)$ (we treat these junta-sums as the ``labels'' of the vertices). For a vertex with label $P_0\in L_\varepsilon(f)$ in level 0, we let $P_1\in \J_d([s]^n, G_1)$  defined by $$P_1(\x) = P_0(\x) \mod H_0$$ be the label of the parent of this vertex: we note that $P_1\in L_\varepsilon(f_1)$ since if $f$ and $P_0$ agree, so do $f_1$ and $P_1$.  Proceeding in a similar way, we construct all the above levels of the tree $T$ and label its vertices. In particular, the parent of a vertex in level $i$ labeled with $P_i \in L_\varepsilon(f_i)$ is set to be the junta-sum $P_{i+1} \in L_{\varepsilon}(f_{i+1})$ defined as:
    $$P_{i+1}(\x) = P_i(\x) \mod H_i.$$
    For a vertex $v$ of $T$ at level $i\in [0..h]$ and labeled with $P_i \in L_\varepsilon(f_i)$, we let $$\rho(v) = 1/s^d-\delta(f_i,P_i).$$
    Note that $\rho(v) \ge \varepsilon$ for all vertices $v$ of $T$. We further show the following properties of $\rho(\cdot)$.
    \begin{claim}\label{clm:tree}
        For the tree $T$ and the function $\rho$ defined over the vertices of $T$ defined above, the following properties hold:
        \begin{itemize}
            \item Each non-leaf vertex of $T$ has at most $L$ children. 
            \item If $u$ is the parent of $v$, then $\rho(u) \ge \rho(v)$.
            \item If $u$ has two distinct children $v$ and $w$, then $\rho(u) \ge \rho(v) + \rho(w)$.
        \end{itemize}
    \end{claim}
    We now finish the proof of~\Cref{lem:small-11} using the above claim. We recall that the number of leaves in $T$ is exactly $|L_\varepsilon(f)|$, which is what we want to upper bound. To do this, we argue that for any non-leaf node $u$ of $T$ with children $v_1,\dots,v_t$ (for some $1\le t\le L$), it holds that 
    \begin{align}\label{eqn:tree-rn} \rho(u)^\ell \ge \sum_{i\in [t]} \rho(v_i)^\ell, \end{align}
    where $\ell = \lceil \log L\rceil$. Then, applying this inequality for all the non-leaf vertices of the tree, we get that $$\rho(\text{root})^\ell \ge \sum_{v\text{~is a leaf}} \rho(v)^\ell \ge (\text{\#leaves})\cdot \varepsilon^\ell.$$ Using $\rho(\text{root}) \le 1$, we thus get that $|L_\varepsilon(f)| = \text{(\# leaves}) \le (1/\varepsilon)^\ell = L^{O(\log (1/\varepsilon))} = O_{p,\varepsilon}(1)$ as desired. Hence, it only remains to show that~\eqref{eqn:tree-rn} holds. For this, we will assume that $t$, the number of children of $u$ is at least $2$ as otherwise, we immediately have $\rho(u)^\ell \ge \rho(v_1)^\ell$ using Item 2 of~\Cref{clm:tree}. Further, let $\rho(v_1) \ge \rho(v_2) \ge \dots \ge \rho(v_t)$ without loss of generality. Then using Item 3 of~\Cref{clm:tree} and $t\le L \le 2^\ell$, we have 
    \begin{align*}\rho(u)^\ell & \ge (\rho(v_1)+\rho(v_2))^\ell\\&  \ge \rho(v_1)^\ell + (2^\ell - 1)\rho(v_2)^\ell\\& \ge \rho(v_1)^\ell + (t-1)\rho(v_2)^\ell \\ & \ge \rho(v_1)^\ell + \rho(v_2)^\ell + \dots + \rho(v_t)^\ell.
    \end{align*}
    \end{proof}
We now prove~\Cref{clm:tree}. 

\begin{proof}[Proof of~\Cref{clm:tree}]
    Let $u$ be an arbitrary non-leaf vertex of $T$ at level $i+1$ (for some fixed $i\in [0..h-1]$), with children $v_1,\dots,v_t$. Suppose $u$ is labeled by a junta-sum $P\in L_\varepsilon(f_{i+1})$ and $v_j$ is labeled by a junta-sum $Q_j \in L_\varepsilon(f_{i})$ for $j\in [t]$. Therefore, for all $j\in [t]$, we have that \begin{align}\label{eqn:pq}P(\x) = Q_j(\x) \mod H_i.\end{align} Hence, if $f_{i}$ and $Q_j$ agree on some input, so do $f_{i+1}$ and $P$; so $\delta(f_{i+1}, P) \le \delta(f_i, Q_j)$ and $\rho(u) \ge \rho(v_j)$, thus proving Item 2. Our goal now is to show that $t\le L$ and $\rho(u) \ge \rho(v_{j_1}) + \rho(v_{j_2})$ for $j_1\ne j_2 \in [t]$. To do this, we let $c_1,c_2,\dots,c_M \in G_i$ be fixed coset representatives (where $M=|G_i|/p$ and the cosets are ordered arbitrarily) corresponding to the subgroup $H_i$ of $G_i$. Then each element $g\in G_i$ can be uniquely written as $g=g'+\widehat{g}$ with  $g'\in H_i$ and $\widehat{g} \in \{c_1,\dots,c_M\}$ being a coset representative. 
    
    Let $$Q_j(\x) = \sum_{{\bf a}\in \Z_s^n:|{\bf a}|\le d} g_{j,{\bf a}} \cdot \delta_{{\bf a}}(\x),$$ for some $g_{j,{\bf a}}\in G_i$. From~\eqref{eqn:pq}, we see that $\widehat{{g}_{{j},{\bf a}}} =\widehat{{g}_{{1},{\bf a}}}  $ for all $j\in [t]$. Now we define $\widehat{Q}:[s]^n \to G_i$ to be:
    $$\widehat{Q}(\x) = \sum_{{\bf a}\in \Z_s^n:|{\bf a}|\le d} \widehat{g_{1,{\bf a}}} \cdot \delta_{{\bf a}}(\x),$$ and $d$-junta-sums $\widetilde{Q}_j \in \J_d([s]^n,H_i)$ for $j\in [t]$, to be:
    $$\widetilde{Q}_j(\x) = \sum_{{\bf a}\in \Z_s^n:|{\bf a}|\le d} g_{j,{\bf a}}' \cdot \delta_{{\bf a}}(\x).$$
    
    Since $Q_j(\x) = \widehat{Q}(\x)+\widetilde{Q}_j(\x)$ and $Q_j$ are pairwise distinct for $j\in [t]$, we have that $\widetilde{Q}_j$ are pairwise distinct for $j\in [t]$. Moreover for the function $\widetilde{f}:[s]^n \to G_i$ defined as $\widetilde{f}(\x) = f_i(\x) - \widehat{Q}(\x)$, we have that $\delta(\widetilde{f},\widetilde{Q}_j) = \delta(f_i,Q_j) \le 1/s^d - \varepsilon$. Therefore, we get $t\le L$ as $H_i$ is isomorphic to $\Z_p$ and we have a list size bound of $L$ for junta-sums over $\Z_p$. This proves Item 1 of the claim. To prove Item 3, let $j_1 \ne j_2\in [t]$ be arbitrary and let $A_1, A_2 \subseteq [s]^n$ be the subset of points where $f_i$ agrees with $Q_{j_1}$ and $Q_{j_2}$ respectively. Let $A \subseteq [s]^n$ be the subset of points where $f_{i+1}$ agrees with $P$. From the proof of Item 2, we have that $A_1,A_2\subseteq A$. Since two distinct $d$-junta-sums cannot agree on more than $1-1/s^d$ fraction of inputs (\Cref{clm:dist-junta-sums}), we have $|A_1 \cap A_2| \le (1-1/s^d)s^n$. Hence, $|A| \ge |A_1 \cup A_2| =  |A_1| + |A_2| - |A_1 \cap A_2| \ge s^n \paren{2-\delta(f_i,Q_{j_1})-\delta(f_i,Q_{j_2})-1+1/s^d)}$. Since $|A| = s^n\paren{1-\delta(f_{i+1},P)}$, we get $\delta(f_{i+1},P) \le \delta(f_i,Q_{j_1})+\delta(f_i,Q_{j_2})-1/s^d$, or equivalently $\rho(u) \ge \rho(v_{j_1})+\rho(v_{j_2})$.
\end{proof}

Having reduced the problem to showing combinatorial bound over $\F_q$, which we state below and prove in the next subsection.\\

\begin{theorem}\label{lem:fq-case}
    For every $\varepsilon>0$, finite field $\F_q$ and subset $S\subseteq \F_q$ of size $s\ge 2$, the family $\J_d(S^n, \F_q)$ is $(1/s^d-\varepsilon, O_{q,\varepsilon}(1))$-list-decodable. 
\end{theorem}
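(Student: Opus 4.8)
## Proof Proposal for Theorem \ref{lem:fq-case} (Combinatorial list-decoding bound over $\F_q$)

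The plan is to adapt the list-size argument of the $p$-primary case (i.e.~the proof technique behind \Cref{thm:comb-small}, Section~\ref{sec:p-grp}) but now working directly with \emph{polynomials} over $\F_q$ of individual degree $\leq s-1$ and total junta-degree $\leq d$, using the structure of leading monomials under a monomial order. Fix a function $f : S^n \to \F_q$ and suppose $P_1, \dots, P_t \in \J_d(S^n, \F_q)$ are distinct with $\delta(f, P_i) \leq 1/s^d - \varepsilon$ for every $i \in [t]$. The goal is $t = O_{q,\varepsilon}(1)$. First I would reduce to counting \emph{differences}: as in Section~\ref{sec:prune}, for any two $P_i, P_j$ in the list, the polynomial $P_i - P_j$ is a non-zero junta-sum in $\J_d(S^n, \F_q)$ satisfying $\Pr_{\x \sim S^n}[(P_i - P_j)(\x) = 0] \geq 1 - 2/s^d + 2\varepsilon$. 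So it suffices to bound the number of non-zero $P \in \J_d(S^n, \F_q)$, modulo translation by a fixed $P_1$, that are $(2/s^d - 2\varepsilon)$-close to $\mathbf{0}$, which requires an anti-concentration statement for polynomials over $\F_q$.

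The key technical ingredient I would prove (or invoke in the form already set up for the large-order case, suitably modified for small fields) is an \textbf{anti-concentration lemma over $\F_q$}: if $P \in \J_d(S^n, \F_q)$ depends on sufficiently many variables (a function of $s, d, q, \varepsilon$), then $\Pr_{\x \sim S^n}[P(\x) = 0] \leq 1 - 1/s^{d-1} + \varepsilon$, beating the trivial $1 - 1/s^d$ bound. Unlike the large-order case, one cannot pass to a Boolean subcube and use the Meka--Nguyen--Vu real anti-concentration result, because over $\F_q$ a polynomial with many disjoint leading monomials of degree $d$ can still vanish with probability close to $1 - 1/q^{d}$ rather than $1 - 1/q$. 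Instead, I would argue by induction on $d$ exactly as in \Cref{lem:anti-conc} (the three-case analysis: a high-degree part depending on many variables in one variable slot; many pairwise-disjoint leading monomials, handled via a direct Schwartz--Zippel-type / independence-of-disjoint-roots estimate over $\F_q$; or a small cover of all top-degree monomials, after which restriction drops the degree). The disjoint-monomial case is where the small-field behaviour differs: here I would use that if $\delta_{\mathbf{a}_1}(\x), \dots, \delta_{\mathbf{a}_u}(\x)$ are pairwise disjoint non-zero monomials of degree $d$, then on setting the variables outside $\bigcup S_i$ arbitrarily, $P$ restricted to the disjoint blocks is a sum of independent random variables (one per block, each nonzero with probability $\geq (1/s)^d$ of ``activating'' the block's monomial), and a second-moment / Fourier argument over $\F_q$ shows the vanishing probability is at most $1 - 1/s^{d-1} + \varepsilon$ once $u$ is large.

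Given the anti-concentration lemma, the remainder follows the pruning template of Section~\ref{sec:prune}: (i) conclude that all pairwise differences of list elements are $r$-juntas for $r = r(s,d,q,\varepsilon) = O_{q,\varepsilon}(1)$, so it suffices to count $r$-junta list elements; (ii) count list elements depending on the \emph{same} set of $\leq r$ variables by restricting the remaining coordinates randomly and using that two distinct functions on $S^r$ cannot agree on more than $1 - 1/s^d$ of the domain, giving a bound of $\binom{s^r}{s^r - s^{r-d}} = O_{q,\varepsilon}(1)$; (iii) handle list elements on \emph{distinct} variable sets via the sunflower lemma (reduce to a sunflower with core $C$, $|C| \leq r$), then set the core variables cleverly using the junta-polynomial representation over $\F_q$ (the leading-monomial / maximal-degree argument from Section~\ref{sec:prune} goes through verbatim since $\F_q$ supports the $\delta_a$ basis) to reduce to \emph{pairwise-disjoint} petal supports; and (iv) finish with the independence argument: the probability that $\geq (1 - 1/s^d + \varepsilon/2)t$ of the pairwise-disjoint-support polynomials agree at a random point is both $\geq \varepsilon/2$ (since $f$ does) and $\leq t / 2^{\Omega(\varepsilon^2 t)}$ (by independence across disjoint blocks plus pairwise agreement $\leq 1 - 1/s^d$), forcing $t = O_\varepsilon(1)$.

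The main obstacle I expect is the disjoint-monomial sub-case of the anti-concentration lemma over a small field $\F_q$: there is no black-box real-valued anti-concentration result to reduce to, so one must prove directly that a polynomial which is a ``matching-like'' sum over disjoint variable blocks, each block contributing a degree-$d$ monomial, has root probability bounded away from $1 - 1/s^{d-1}$. I would attack this by analysing the distribution of $P(\x)$ as a sum of independent $\F_q$-valued contributions (one per block) via character sums / Fourier analysis on $\F_q^n$: each block's contribution is non-constant with probability $\geq (1/s)^d - o(1)$, and a convolution argument shows that summing $\Omega_{q,d,\varepsilon}(1)$ many non-degenerate independent $\F_q$-random variables gives a distribution within $\varepsilon$ of being supported on all of $\F_q$ away from a single ``worst'' point, which suffices. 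Once this lemma is in hand the rest is a faithful transcription of the Section~\ref{sec:prune} pruning pipeline with $\{0,1\}$ replaced by $S \subseteq \F_q$ and group-coefficient junta-polynomials replaced by $\F_q$-coefficient ones, and then Lemmas~\ref{lem:small-1} and~\ref{lem:small-11} lift \Cref{lem:fq-case} to \Cref{thm:comb-small}.
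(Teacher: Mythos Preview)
Your proposal has a genuine gap at its very first step: the anti-concentration lemma you want over $\F_q$ is simply \emph{false} when $q$ is small, and this is precisely why the paper does not reuse the large-order argument. Take $s=q=2$, $d=1$, and $P(\x)=x_1+\cdots+x_n$ over $\F_2$. This depends on $n$ variables (as many as you like), yet $\Pr_{\x}[P(\x)=0]=1/2$, while your claimed bound would be $\Pr[P=0]\le 1-1/s^{d-1}+\varepsilon=\varepsilon$. The same phenomenon persists for higher $d$ (e.g.\ a sum of many disjoint degree-$d$ monomials over $\F_2$ equidistributes over $\F_2$, so the zero probability stays near $1/2$, not near $1-1/s^{d-1}$). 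Your Fourier/convolution heuristic actually points to $\approx 1/q$, not $1-1/s^{d-1}$; these are very different numbers and the latter is what the pruning pipeline needs. Consequently, your step~(i) --- concluding that all pairwise differences of list elements are $r$-juntas --- fails outright (for $d=1$ over $\F_2$, Goldreich--Levin already gives lists whose pairwise differences depend on $\Omega(n)$ variables).

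The paper takes a genuinely different route that avoids any pointwise anti-concentration statement. First, \Cref{lem:small-2} shows that from $t$ list elements one can extract $\Omega(\log_q t)$ list elements (for a shifted received word) with \emph{pairwise distinct leading monomials}; this is done by an iterative ``plurality polynomial'' trick that cancels the most common leading coefficient at each step. Second, a sunflower lemma is applied to the \emph{supports of the leading monomials} (not to the full variable sets), producing polynomials whose leading monomials share a common core and have disjoint petals. Third --- and this is the crucial replacement for your anti-concentration lemma --- \Cref{lem:small-4} gives a \emph{tail bound on simultaneous vanishing}: if $P_1,\dots,P_t$ have pairwise disjoint leading monomials then the fraction of them vanishing at a random point is $\le 1-1/s^d+\eta$ except with probability $\exp(-\Omega(\eta^2 t))$. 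This is proved via Panconesi--Srinivasan, reducing to the estimate $\Pr[\forall i:\,P_i(\mathbf{a})=0]\le (1-1/s^d)^t$, which in turn follows from the \emph{footprint bound}: any function on the common zero set can be written using only monomials not divisible by any $\LM(P_i)$, and there are at most $(1-1/s^d)^t s^n$ such monomials. The final averaging argument then forces $t=O_{q,\varepsilon}(1)$. The point is that the footprint/tail-bound machinery controls how many $P_i$'s vanish \emph{jointly}, which is exactly what list-decoding needs, without ever asserting that any single $P_i$ is individually far from zero --- a statement that, over small fields, is just not true.
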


With the above theorem, we can now finish the proof of the combinatorial bound for $p$-primary groups.

\begin{proof}[Proof of~\Cref{thm:comb-small}]
    The proof follows by combining~\Cref{lem:small-11},~\Cref{lem:small-1} and~\Cref{lem:fq-case}.
\end{proof}

\subsubsection{Combinatorial Bound for $\F_q$}\label{subsec:fq}

Throughout this subsection, we fix a finite field $\F_q$ and a subset $S\subseteq \F_q$ of size $s$ arbitrarily. We think of $q\ge 2,s\ge 2$ and $d\ge 1$ as constants. Furthermore, we fix a monomial ordering (denoted $\preceq$) over the monomials to be the {\em graded lexicographic order} (see~\cite{ABPSS25} for a definition) and denote by $\textrm{LM}(P)$ the leading monomial of a  polynomial $P$ (assuming it is non-zero). We will use the notation $m_1 \succeq m_2$ to mean that $m_2 \preceq m_1$ and $m_1 \succneq m_2$ to mean that $m_2 \preceq m_1$ and $m_1 \ne m_2$. \\

We show the following lemma which effectively reduces the list-decoding problem to bounding the number of polynomials in the list with pairwise distinct monomials.\\

\begin{lemma}[{\bf Distinct leading monomials}]\label{lem:small-2}
    If $P_1,\dots,P_t \in \J_d(S^n, \F_q)$ are such that $\delta(f,P_i) \le 1/s^d - \varepsilon$ for all $i\in [t]$ and some $f:S^n \to \F_q$, then there exists a function $f':S^n \to \F_q$ such that there are at least $\ell \ge \Omega({\log_q t})$ many polynomials $Q_1,\dots,Q_\ell \in \J_d(S^n , \F_q)$ such that $\delta(f',Q_i) \le 1/s^d - \varepsilon$ and $\textrm{LM}(Q_i)$ are pairwise distinct for $i\in [\ell]$. \\
\end{lemma}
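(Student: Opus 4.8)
\textbf{Proof plan for Lemma~\ref{lem:small-2}.} The plan is to prune the list $P_1,\dots,P_t$ in stages, at each stage restricting the domain to a random subcube and keeping only those polynomials that survive the restriction with the leading-monomial structure becoming ``simpler'', until all surviving polynomials have pairwise distinct leading monomials. The driving idea, exactly as in~\cite{ABPSS25} for $s=2$, is that if two polynomials $P_i$ and $P_j$ in the list share the same leading monomial $m = \textrm{LM}(P_i) = \textrm{LM}(P_j)$, then their difference $P_i - P_j$ has strictly smaller leading monomial; so repeatedly taking differences reduces leading-monomial complexity. But differences of list members are not themselves close to $f$ — they are close to $\mathbf{0}$ — so instead of literally taking differences we use a probabilistic bucketing argument: group the $P_i$ by their leading monomial, and within the largest bucket, pass to a sub-collection whose leading monomials agree; restrict to a random subcube where the common leading monomial does not vanish (so it remains the leading monomial of each restriction, since no larger monomial can be created by a restriction of this form and the common leading coefficient is preserved), and observe that the restrictions are still $(1/s^d - \varepsilon/2)$-close to the restriction $f'$ of $f$ with probability $\geq \varepsilon/2$ over the choice of subcube, by an averaging/Markov argument on $\E[\delta(f|_C, P_i|_C)] = \delta(f,P_i)$.

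More concretely, I would set up an iteration indexed by $j = 0, 1, 2, \ldots$. At stage $j$ we have a function $f^{(j)} : S^n \to \F_q$ and a collection $\mathcal{L}^{(j)}$ of $t_j$ distinct $d$-junta-sums all at distance $\leq 1/s^d - \varepsilon_j$ from $f^{(j)}$ (where $\varepsilon_j = \varepsilon \cdot 2^{-j}$ or similar, tracking the small constant loss at each step; since the number of stages will be $O(\log_q t)$ which we want to be small, we must be a little careful that $\varepsilon_j$ stays bounded below — actually it is cleaner to only lose a constant factor total by a smarter accounting, but losing $\varepsilon \to \varepsilon/2$ per step is fine provided the number of stages is $O(\log\log t)$; if it is $O(\log_q t)$ we should instead track that at each step the bucket shrinks by a factor $q$ in the number of \emph{distinct leading monomials}, and there are at most $\binom{n}{\le d}(s-1)^d$... no — the right invariant is the count $t_j$ itself). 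The clean version: if not all leading monomials among $\mathcal{L}^{(j)}$ are distinct, the pigeonhole principle gives a monomial $m_j$ shared by at least $t_j / (\text{number of distinct LMs})$... this is not obviously $\geq \sqrt{t_j}$. Instead, I would argue directly: either all the $t_j$ leading monomials are already distinct (and we stop, having found $\ell = t_j$ polynomials with distinct leading monomials close to $f^{(j)}$ — but we want $\ell \geq \Omega(\log_q t)$, and $t_j$ could be small), or there is a monomial $m_j$ shared by at least two members. Take \emph{all} members sharing $m_j$, call this sub-list $\mathcal{L}^{(j)}_{m_j}$ of size $t_j' \geq 2$; restrict to a random subcube $C$ chosen so that $m_j(C)$ is non-vanishing (i.e., for each variable $x_i$ appearing in $m_j$, the two chosen values $u_i \neq v_i$ are picked so the monomial stays ``alive'' — concretely pick values so the restriction of $m_j$ to $C$ is still a non-constant monomial of full degree); with probability $\geq \varepsilon/2$ the restrictions remain $(1/s^d - \varepsilon/2)$-close to $f^{(j)}|_C$; fix such a $C$; now within this sub-list, the polynomials restricted to $C$ all share leading monomial $m_j|_C$, so subtracting any fixed one from the rest gives $t_j' - 1$ polynomials close to $\mathbf{0}|_C$ with \emph{strictly smaller} leading monomials. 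That doesn't directly help.

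The correct structure — which I would implement following~\cite{ABPSS25} — is a \emph{binary} pruning: we do not want a long chain but a balanced argument. If $t_j \geq 2$, split $\mathcal{L}^{(j)}$ into two halves and within the half of size $\lceil t_j/2 \rceil$, if there are two members with the same leading monomial... Honestly the cleanest correct statement to aim at is: define $g(f)$ = the maximum size of a sub-list of $L_\varepsilon(f)$ all of whose members have pairwise distinct leading monomials; we want $g(f) \geq \Omega(\log_q |L_\varepsilon(f)|)$. I would prove this by the following recursion. If $|L_\varepsilon(f)| = t$, group the members by leading monomial; if there are $\geq \log_q t$ distinct leading monomials we are done immediately (pick one representative per monomial). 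Otherwise there are $< \log_q t$ distinct leading monomials, so by pigeonhole some leading monomial $m$ is shared by $> t / \log_q t$ members. Restrict to a random subcube keeping $m$ alive; with probability $\geq \varepsilon/2$, at least $\varepsilon t/(2 \log_q t)$ of them stay $(1/s^d-\varepsilon/2)$-close to the restricted $f' := f|_C$. Among these, pick one, call it $R$; subtract $R$ from all the others — this kills the leading monomial $m$ and strictly decreases the leading monomial, yielding $\geq \varepsilon t/(2\log_q t) - 1$ distinct $d$-junta-sums, each at distance $\leq 2(1/s^d - \varepsilon/2) < 1/s^{d-1}$ from $\mathbf{0}$... but now these are close to $\mathbf{0}$, not to a function — so this line of attack recovers \emph{anti-concentration} territory, not the claimed lemma. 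I conclude that the real proof must be: within the $> t/\log_q t$ members sharing leading monomial $m$ and all $(1/s^d - \varepsilon/2)$-close to $f'$, we recurse on $f' - (\text{the monomial part})$... This is exactly the content of the multi-stage argument in~\cite{ABPSS25}, and the right invariant to carry is a sub-list whose members have distinct leading monomials \emph{after subtracting off a common lower-order-agreeing polynomial}. The hard part — and the step I expect to be the main obstacle — is getting this bookkeeping right so that (a) the constant-factor loss in $\varepsilon$ stays bounded (which forces the recursion depth to be $O(\log_q t)$ rather than iterated-log, hence the $\Omega(\log_q t)$ bound rather than something stronger), and (b) the subcube restrictions genuinely preserve both the leading-monomial structure (no new larger monomials appear — true because a restriction $x_i \mapsto u_i$ or $v_i$ on a subcube of the stated form, composed with the graded-lex order, never increases degree) and the closeness to the relevant function with good probability. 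Once this is set up, invoking~\Cref{lemma:sampling-subgrid} or a direct averaging bound for the closeness, and the pigeonhole for the monomial buckets, finishes the lemma; the downstream~\Cref{lem:fq-case} then follows by combining this with a tail bound on roots of polynomials with pairwise disjoint (or just pairwise distinct) leading monomials.
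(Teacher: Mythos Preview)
Your proposal has a genuine gap: the whole random-restriction apparatus is a red herring for this lemma, and the missing idea is the \emph{plurality polynomial}. The paper's proof is purely algebraic and loses nothing in $\varepsilon$. Here is the actual mechanism. Maintain a function $f_i$, a chain $Q_1,\dots,Q_i$ with $\LM(Q_1)\succneq\cdots\succneq\LM(Q_i)$, and a reservoir $\mathcal{Q}_i\subseteq L_\varepsilon(f_i)$ of size $\ge q^{\ell-i}$ whose members all have leading monomial strictly below $\LM(Q_i)$. To advance, form the polynomial $P$ whose coefficient on each monomial is the \emph{plurality} of that coefficient over $\mathcal{Q}_i$; set $f_{i+1}=f_i-P$ and replace every $Q$ by $Q-P$. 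Distances are preserved exactly, since $\delta(Q-P,f_i-P)=\delta(Q,f_i)$, so there is no $\varepsilon\to\varepsilon/2$ bleed. Now take $Q_{i+1}$ to be any element of $\mathcal{Q}_i-P$ with maximal leading monomial $m$; by the plurality construction, at least a $1/q$ fraction of $\mathcal{Q}_i$ agree with $P$ on the coefficient of $m$, hence after subtracting $P$ their leading monomial is strictly below $m$. Those form $\mathcal{Q}_{i+1}$, of size $\ge |\mathcal{Q}_i|/q$. Iterate $\ell=\lfloor\log_q t\rfloor$ times.

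Your attempts all founder on the same point you yourself flag: subtracting one list member from another lands you near $\mathbf{0}$, not near $f$, and random restriction costs a constant factor in $\varepsilon$ per stage, which you cannot afford over $\Theta(\log_q t)$ stages. The plurality trick resolves both issues at once: you subtract a \emph{single common} polynomial $P$ from everything (including $f$), so closeness is preserved exactly, and the pigeonhole is on the leading coefficient (size-$q$ alphabet) rather than on the leading monomial (potentially huge alphabet), giving the clean factor-$q$ shrinkage per step. No subcubes, no sampling lemma, no averaging.
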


Then, we prove the following ``tail bound'' for polynomials with {\em pairwise disjoint} leading monomials. \\

\begin{lemma}[{\bf Tail bound for disjoint leading monomials}]\label{lem:small-4}
    Let $P_1,\dots,P_t \in \J_d(S^n , \F_q)$ be such that $\textrm{LM}(P_i)$ are pairwise disjoint for $i\in [t]$. Then:
    $$\Pr_{{\bf a}\sim S^n}\bigg[\bigg|\{i\in [t]:P_i({\bf a})=0\}\bigg| \ge (1-1/s^d+\eta)t\bigg] \le \exp(-\Omega(\eta^2 t)).$$
\end{lemma}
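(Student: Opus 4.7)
The plan has two main steps: first establish a pointwise conditional anti-concentration estimate for each $P_i$, then upgrade it to exponential concentration of $\sum_i X_i$ via a moment-generating-function argument.

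For the first step, I would show that for each $i\in[t]$, conditioning on ${\bf a}_{V_i^c}$ (the variables outside the support $V_i$ of $\mathrm{LM}(P_i)$) gives $\Pr_{{\bf a}_{V_i}}[P_i({\bf a})=0 \mid {\bf a}_{V_i^c}] \le 1 - 1/s^d$. The key claim is that the restricted polynomial $\tilde P_i({\bf a}_{V_i}) := P_i({\bf a}_{V_i}, {\bf a}_{V_i^c})$ on $S^{V_i}$ remains nonzero: the coefficient of $\mathrm{LM}(P_i)$ in $\tilde P_i$ coincides with its coefficient in $P_i$, because any monomial $m'\preceq\mathrm{LM}(P_i)$ appearing in $P_i$ with $m'\ne \mathrm{LM}(P_i)$ cannot restrict to $\mathrm{LM}(P_i)$ --- otherwise $m'=\mathrm{LM}(P_i)\cdot m''$ with $m''$ nontrivial in $V_i^c$-variables, forcing $\deg(m')>\deg(\mathrm{LM}(P_i))$ and contradicting $m'\preceq\mathrm{LM}(P_i)$ under graded-lex order. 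Applying \Cref{clm:dist-junta-sums} to the nonzero $d$-junta-sum $\tilde P_i$ on $S^{V_i}$ (where $|V_i|\le d$) then yields the bound.

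For the second step, set $X_i:=\mathbbm{1}[P_i({\bf a})=0]$. The target is the multiplicative moment bound $\mathbb{E}\!\left[\prod_{i\in A} X_i\right] \le (1-1/s^d)^{|A|}$ for every $A\subseteq [t]$. Combined with the standard MGF identity
\[ \mathbb{E}[e^{\lambda \sum_i X_i}] = \sum_{A\subseteq [t]} (e^\lambda-1)^{|A|}\, \mathbb{E}\!\left[\textstyle\prod_{i\in A} X_i\right] \le \bigl(1+(e^\lambda-1)(1-1/s^d)\bigr)^t, \]
a standard Chernoff optimization over $\lambda$ then yields the claimed $\exp(-\Omega(\eta^2 t))$ tail bound.

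The hard part will be establishing the multiplicative moment bound, since $P_j$ for $j\ne i^*$ may depend on ${\bf a}_{V_{i^*}}$ through its non-leading monomials, so a naive inductive conditioning on ${\bf a}_{V_{i^*}^c}$ does not cleanly factorize the product. I plan to circumvent this by using a Doob-type martingale that reveals the disjoint blocks ${\bf a}_{V_1},\dots,{\bf a}_{V_t}$ sequentially (after first fixing the ``outside'' variables ${\bf a}_{U^c}$ where $U=\bigcup_j V_j$): at step $i$, the remaining randomness factors as ${\bf a}_{V_i}\times {\bf a}_{V_{i+1}\cup\cdots\cup V_t}$, and by first averaging out the $j>i$ blocks the analysis of the contribution of $X_i$ reduces to an instance of Step~1, giving a conditional MGF bound of $1+(e^\lambda-1)(1-1/s^d)$ at each step. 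Telescoping this bound across the $t$ blocks gives the required estimate without needing to explicitly track cross-correlations; if this telescoping turns out to be subtle, a fallback is a direct Azuma argument on the same filtration with a more careful bounded-differences analysis exploiting the disjointness of the $V_i$'s.
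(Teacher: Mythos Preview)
Your Step~1 is correct, but Step~2 has a genuine gap: the telescoping/martingale argument cannot be made to work. The obstruction is precisely the one you flag---that $P_j$ for $j\ne i$ may depend on ${\bf a}_{V_i}$---but your proposed workaround does not resolve it. Concretely, take $s=q=2$, $d=2$, $n=4$, graded lex with $x_1>x_2>x_3>x_4$, and $P_1=x_1x_2+x_3$, $P_2=x_3x_4+x_1$; then $V_1=\{1,2\}$, $V_2=\{3,4\}$ are disjoint. Conditioning on ${\bf a}_{V_2^c}=(x_1,x_2)$ and averaging over $(x_3,x_4)$, one computes $\E[X_1X_2\mid x_1,x_2]=1/2$ when $x_1=0$, whereas $(1-1/s^d)\,\E[X_1\mid x_1,x_2]=(3/4)(1/2)=3/8$. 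So the peeling inequality $\E[X_1X_2\mid\cdot]\le(1-1/s^d)\E[X_1\mid\cdot]$ fails; by symmetry of the example, peeling in the other order fails too. The same obstruction kills the Azuma fallback: revealing a single block ${\bf a}_{V_i}$ can change \emph{every} $X_j$ (since each $P_j$ may depend on $V_i$), so the bounded-differences constant is $\Theta(t)$, rendering the bound vacuous.

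The paper proves the required multiplicative bound $\Pr[\bigwedge_{i\in A}P_i({\bf a})=0]\le(1-1/s^d)^{|A|}$ not probabilistically but via the \emph{footprint bound}: one shows directly that the common zero set $Z=\{{\bf a}:P_i({\bf a})=0\ \forall i\}$ has size at most $(1-1/s^d)^{t}s^n$ by arguing that every function $Z\to\F_q$ can be represented by a polynomial avoiding all monomials divisible by some $\mathrm{LM}(P_i)$ (a Gr\"obner-basis style reduction), and the number of such ``standard'' monomials is at most $(1-1/s^d)^{t}s^n$ by the disjointness of the $V_i$. This algebraic step is the missing idea; once you have it, your MGF computation (which is essentially the proof of the Panconesi--Srinivasan theorem that the paper invokes) goes through.
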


With the above lemmas in place, we are ready to finish the proof of the main result of this subsection i.e., \Cref{lem:fq-case}. 

\begin{proof}[Proof of~\Cref{lem:fq-case}]
    Using~\Cref{lem:small-2}, it suffices to show that if $Q_1,\dots,Q_\ell \in \J_d(S^n , \F_q)$ are such that $\LM(Q_i)$ are pairwise distinct for $i\in [\ell]$ and $\delta(Q_i,f)\le 1/s^d-\varepsilon$ for some $f:S^n \to \F_q$, then that $\ell=O_{q,\varepsilon}(1)$. Applying the sunflower lemma (see e.g.~\cite{erdos1960intersection}) for the multisets determined by the leading monomials of $Q_i$'s, we can find a subset of indices $i_1,\dots,i_{\ell'}$ for some $\ell' \ge \Omega_d(\ell^{1/d})$ such that $\LM(Q_{i_j})$ form a sunflower for $j\in [\ell']$. That is, there exists variables $\{x_j:j\in C\}$ and $\{e_j\in \Z_s:j\in C\}$ where $C\in {[n]\choose \le d}$ such that $\LM(Q_{i_j}) = \prod_{j'\in C}x_{j'}^{e_{j'}} \cdot m_{j}$ and $m_j$ are monomials over the variables indexed by $[n]\setminus C$ and are pairwise disjoint for $j\in [\ell']$. Without loss of generality, we will assume that $i_j=j$ for all $j\in [\ell']$. We will now express each $Q_i$ for $i\in [\ell']$ (uniquely) as follows:
    $$Q_i(\x) = \prod_{j\in C}x_j^{e_j} \cdot Q_i^{(1)}(\x_{[n]\setminus C}) + Q_i^{(2)}(\x),$$ where $Q_i^{(1)}$ is a polynomial over variables indexed by $[n]\setminus C$ and $Q_i^{(2)}$ does not contain any monomial dividing $\prod_{j\in C} x_j^{e_j}$. We note that since the leading monomial of $Q_i$ is $\prod_{j\in C}x_j^{e_j}\cdot m_i$, by our definition of monomial ordering, we must have that $\LM(Q_i^{(1)}) = m_i$. Let ${\bf a} \sim S^n$ be sampled by first choosing ${\bf a}'\sim S^{[n]\setminus C}$ uniformly at random and then ${\bf a}''\sim S^{C}$ uniformly and independently. Letting $d'=d-|C|$, we may now apply the tail bound (\Cref{lem:small-4}) to $Q_i^{(1)}$ to get that 
    $$\Pr_{{\bf a}'}\bigg[\bigg| \{i\in [t]:Q_i^{(1)}({\bf a}') = 0\} \bigg| \ge (1-1/s^{d'}+\varepsilon/2)\ell'\bigg] \le \exp(-\Omega(\varepsilon^2 \ell')).$$ In fact, by applying it to $Q_i^{(1)}-\alpha$ for $\alpha \in \F_q$ and by a union bound, we get that
    \begin{align}\label{eqn:alpha}\Pr_{{\bf a}'}\bigg[\exists \alpha\in \F_q \text{~such that~}\bigg| \{i\in [t]:Q_i^{(1)}({\bf a}') = \alpha\} \bigg| \ge (1-1/s^{d'}+\varepsilon/2)\ell'\bigg] \le q\cdot \exp(-\Omega(\varepsilon^2 \ell')).\end{align}
    Now, let us use the notation $$Q_i'({\x}_C) = Q_i(\x_C, {\bf a}')$$ to denote the corresponding restricted functions obtained by setting the variables in ${[n]\setminus C}$ to ${\bf a}'$. Similarly, let $f':S^C\to \F_q$ be the restricted function $f'(\x_C) = f(\x_C, {\bf a}')$. For a uniformly random choice of ${\bf a}'$, let ${\mathcal{B}}$ denote the ``bad'' event that the {\em multiset} of functions $\{Q_i':i\in [\ell']\}$ has a function occurring at least $(1-1/s^{d'}+\varepsilon/2)\ell'$ many times. The bound from~\Cref{eqn:alpha} immediately implies that 
    \begin{align}\label{eqn:probb}\Pr_{{\bf a}'}[\mathcal{B}] \le q\cdot \exp(-\Omega(\varepsilon^2\ell')).\end{align}
    Conditioned on $\mathcal{B}$ not occurring, we note that there are at least $(1/s^{d'}-\varepsilon/2)\ell'$ indices $i\in [\ell']$ such that $Q_i' \ne f'$ as functions over $S^{C}$. More formally, 
    $$\Pr_{i\sim[\ell']}\bigg[Q_i'\ne f'~\mid~\overline{\mathcal{B}}\bigg] \ge 1/s^{d'}-\varepsilon/2.$$ Since two different functions over $|C|$ variables must differ on a random input with probability at least $1/s^{|C|}$, we further get:
    \begin{align}\label{eqn:bbar}
        \Pr_{i\sim [\ell'],{\bf a}''}\bigg[ Q_i'({\bf a}'')\ne f'({\bf a}'') ~\mid~\overline{\mathcal{B}}\bigg] \ge \paren{\frac{1}{s^{d'}}-\frac{\varepsilon}{2}}\frac{1}{s^{|C|}} \ge \frac{1}{s^d} - \frac{\varepsilon}{2}.
    \end{align}
    Combining~\eqref{eqn:probb} and~\eqref{eqn:bbar}, we obtain $$\Pr_{i\sim [\ell'], {\bf a}\sim S^n}\bigg[ Q_i({\bf a}) \ne f({\bf a}) \bigg] \ge \frac{1}{s^d}-\frac{\varepsilon}{2}-\frac{q}{2^{\Omega(\varepsilon^2\ell')}}.$$ However, we note that since $\delta(Q_i,f) \le 1/s^d-\varepsilon$ for all $i\in [\ell']$, the left hand side of the above inequality must be at most $\frac{1}{s^d}-\varepsilon$. Put together, they give the required bound of $\ell'=O_{q,\varepsilon}(1)$, and thus $\ell=O_{q,\varepsilon}(1)$.
\end{proof}

We now give the proofs of the above two lemmas. First, we start with the reduction to counting polynomials with distinct leading monomials, i.e.,~\Cref{lem:small-2}.

\begin{proof}[Proof of~\Cref{lem:small-2}]
    Let $\ell$ be an integer such that $t\in [q^\ell,q^{\ell+1})$ (so we have $\ell \ge \Omega(\log_q t)$). We will prove the following inductive claim. We recall that $L_\varepsilon(f)$ denotes the set of polynomials in $\J_d(S^n , \F_q)$ that are at distance at most $1/s^d-\varepsilon$ from the function $f$.

    {\bf Inductive claim.} For every $0\le i\le \ell$, there exists a function $f_i:S^n\to \F_q$, polynomials $Q_1,Q_2,\dots,Q_i \in \J_d(S^n , \F_q)$, and a set of polynomials $\Q_i \subseteq \J_d(S^n,  \F_q)$ such that:
    \begin{itemize}
        \item $Q_1,\dots,Q_i \in L_\varepsilon(f_i)$ and $Q\in L_\varepsilon(f_i)$ for all $Q\in \Q_i$,
        \item $\LM(Q_1) \succneq \LM(Q_2) \succneq \dots \succneq \LM(Q_i) \succneq \LM(Q)$ for all $Q\in \Q_i$, and
        \item $|\Q_i|\ge q^{\ell-i}$.
    \end{itemize}

    We note that the base case $i=0$ is true with $f_0=f$ and $\Q_0 = \{P_1,P_2,\dots,P_t\}$. And proving the inductive claim for $i=\ell$ finishes the proof of~\Cref{lem:small-2}. We now assume the inductive claim holds for a fixed $i<\ell$ and prove it for $i+1$. 

    Let $P \in \J_d(S^n , \F_q)$ be the ``plurality polynomial'' of $\Q_i$, i.e., we determine each coefficient of $P$ by taking a plurality vote of the corresponding coefficients from the polynomials in $\Q_i$ (by breaking ties arbitrarily). We then define $f_{i+1}:S^n \to \F_q$ to be $$f_{i+1} = f_i - P.$$ We now define $Q_1',\dots,Q_i',Q'_{i+1} \in \J_d(S^n , \F_q)$ and $\Q_{i+1}' \subseteq \J_d(S^n,   \F_q)$ such that the three items of the inductive claim hold for them. We let $\Q_i' = \{Q-P:Q\in \Q_i\}$ and set
    $Q_j' = Q_j - P$ for $j\in [i]$. We now set $Q_{i+1}'$ to be a polynomial from $\Q_i'$ with the greatest leading monomial (ignoring the zero polynomial if it exists and breaking ties arbitrarily). Then we set $\Q_{i+1}'$ to be the subset of polynomials in $\Q_i'$ with leading monomial strictly smaller than that of $Q'_{i+1}$, i.e.: $$\Q_{i+1}'=\{Q' \in \Q_i' : \LM(Q') \precneq \LM(Q'_{i+1})\}.$$ It remains to prove that the three conditions of the inductive claim actually hold for the above definitions.
    \begin{itemize}
        \item We have that $\delta(Q_j',f_{i+1}) = \delta(Q_j-P,f_i-P) = \delta(Q_j,f_i) \le 1/s^d-\varepsilon$, therefore $Q_j'\in L_\varepsilon(f_{i+1})$ for all $j\in [i]$. Similarly, we have $Q_{i+1}'\in L_{\varepsilon}(f_{i+1})$ and $Q'\in L_{\varepsilon}(f_{i+1})$ for all $Q'\in \Q'_{i+1}$.
        \item We note that $\LM(P) \precneq \LM(Q_{i})$ since the coefficients of all monomials $m \succeq \LM(Q_i)$ in all $Q\in \Q_i'$ (and hence in $P$) are zero by the induction hypothesis. Therefore, $\LM(Q_j') = \LM(Q_j)$ for $j\in [i]$ and we have $\LM(Q_1') \succneq \LM(Q_2') \succneq \dots \succneq \LM(Q_i')$. It also follows that $\LM(Q_i') \succneq \LM(Q_{i+1}')$ and $\LM(Q_{i+1}') \succneq \LM(Q)$ for all $Q\in \Q_{i+1}'$ by the definitions of $Q'_{i+1}$ and $\Q'_{i+1}$.
        \item We have that $|\Q_i'| = |\Q_i| \ge q^{\ell-i}$ by induction hypothesis. By the definition of $Q_{i+1}'$, we observe that $\LM(Q') \preceq \LM(Q_{i+1}')$ for all $Q'\in \Q_i'$ and we will show that at least $1/q$ fraction of $Q'$'s have leading monomial {\em strictly} smaller than that of $Q_{i+1}'$. By the nature of the construction of $P$ using the plurality vote, we observe that at least $1/q$ fraction of the polynomials $Q\in \Q_i'$ agree with $P$ on the coefficient $\LM(Q_{i+1}')$. The corresponding polynomials $Q'= Q-P$ have coefficient of $\LM(Q_{i+1})$ as zero. In other words there are at least $|\Q_i|/q$ polynomials $Q'\in \Q_i'$ with leading coefficient strictly smaller than $\LM(Q_{i+1}')$, and hence by our definition of $\Q_{i+1}'$, it must be of size $|\Q'_{i+1}|\ge |\Q_i'|/q \ge q^{\ell-{(i+1)}}$.
    \end{itemize}
    This finishes the proof of the inductive claim. 
\end{proof}

We now prove the tail bound for polynomials with pairwise disjoint leading monomials (\Cref{lem:small-4}).

\begin{proof}[Proof of~\Cref{lem:small-4}]
    We will use the following theorem of Panconesi and Srinivasan~\cite{PS-chernoff} which reduces the task of showing tail bounds to proving a certain ``independence'' relation among the events.

    \begin{theorem}[\cite{PS-chernoff} Theorem 3.4]\label{thm:ps-chernoff}
        Let $Z_1,\dots,Z_t$ be Boolean random variables and $\alpha \in [0,1]$ be such that for every subset $S \subseteq [t]$, we have that $\Pr[\bigwedge_{i\in S} Z_i = 1] \le \alpha^{|S|}$. Then, for every $\eta>0$, we have
        $$\Pr\bigg[\sum_{i\in [t]} Z_i \ge (\alpha + \eta) t\bigg] \le \exp(-\Omega(\eta^2 t)).$$
    \end{theorem}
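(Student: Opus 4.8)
\emph{Proof proposal for \Cref{thm:ps-chernoff}.} The plan is to show that the hypothesis on the ``subset products'' $\Pr[\bigwedge_{i \in S} Z_i = 1] \le \alpha^{|S|}$ forces the moment generating function of $X := \sum_{i \in [t]} Z_i$ to be dominated, pointwise in the exponential parameter, by that of a $\mathrm{Bin}(t,\alpha)$ random variable, and then to conclude by the standard Chernoff argument. The only genuinely problem-specific step is the first one; everything after that is the textbook concentration toolkit.

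First I would record the elementary identity that, since each $Z_i$ takes values in $\{0,1\}$, we have $e^{\lambda Z_i} = 1 + (e^{\lambda} - 1) Z_i$ as functions, for every $\lambda \in \mathbb{R}$. Taking the product over $i$ and expanding,
\[
e^{\lambda X} \;=\; \prod_{i \in [t]} \bigl(1 + (e^{\lambda} - 1) Z_i\bigr) \;=\; \sum_{S \subseteq [t]} (e^{\lambda} - 1)^{|S|} \prod_{i \in S} Z_i \;=\; \sum_{S \subseteq [t]} (e^{\lambda} - 1)^{|S|}\, \mathbbm{1}\bigl[\bigwedge_{i \in S} Z_i = 1\bigr],
\]
where the last equality uses that the $Z_i$ are Boolean. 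Now fix any $\lambda > 0$, so that $e^{\lambda} - 1 > 0$, and take expectations; the hypothesis then gives
\[
\mathbb{E}\bigl[e^{\lambda X}\bigr] \;\le\; \sum_{S \subseteq [t]} (e^{\lambda} - 1)^{|S|}\, \alpha^{|S|} \;=\; \sum_{k=0}^{t} \binom{t}{k} \bigl((e^{\lambda} - 1)\alpha\bigr)^{k} \;=\; \bigl(1 + (e^{\lambda} - 1)\alpha\bigr)^{t},
\]
and the right-hand side is exactly $\mathbb{E}[e^{\lambda Y}]$ for $Y \sim \mathrm{Bin}(t,\alpha)$.

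To finish, I would apply Markov's inequality to $e^{\lambda X}$: for $m := (\alpha + \eta)t$ (we may assume $\alpha + \eta \le 1$, since otherwise $\Pr[X \ge m] = 0$ because $X \le t$), for every $\lambda > 0$ we get $\Pr[X \ge m] \le e^{-\lambda m}(1 + (e^{\lambda} - 1)\alpha)^{t}$, which is precisely the Chernoff bound for the upper tail of $\mathrm{Bin}(t,\alpha)$. Optimizing over $\lambda$ (the routine calculation, optimal at $e^{\lambda} = \tfrac{(\alpha+\eta)(1-\alpha)}{\alpha(1-\alpha-\eta)} > 1$) yields $\Pr[X \ge m] \le \exp\bigl(-t\cdot \mathrm{kl}(\alpha + \eta \,\|\, \alpha)\bigr)$, where $\mathrm{kl}(a\|b) = a\ln(a/b) + (1-a)\ln((1-a)/(1-b))$; and Pinsker's inequality gives $\mathrm{kl}(\alpha + \eta \,\|\, \alpha) \ge 2\eta^{2}$, so $\Pr[X \ge (\alpha+\eta)t] \le e^{-2\eta^{2} t} = \exp(-\Omega(\eta^{2} t))$, as claimed. (One could equivalently skip the explicit optimization and Pinsker by black-boxing Hoeffding's inequality for $Y$, since the displayed moment-generating-function comparison already shows that $\Pr[X \ge m]$ is bounded by the Chernoff bound for $Y$.) There is no real obstacle here: the single point worth double-checking is that the $\Omega(\eta^{2}t)$ rate is uniform over $\alpha \in [0,1]$, which holds precisely because Pinsker's inequality is dimension- and parameter-free.
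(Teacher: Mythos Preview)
Your proof is correct and is in fact the standard argument for this result. However, the paper does not prove this theorem at all: it is quoted as an external black-box result from Panconesi--Srinivasan~\cite{PS-chernoff} and invoked without proof inside the proof of \Cref{lem:small-4}. So there is no ``paper's own proof'' to compare against; you have simply supplied a (perfectly good) proof of a cited lemma that the authors chose not to reprove.
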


    Because of the above theorem, it suffices to show the following: for every $t\in \mathbb{N}$ and non-zero polynomials $P_1,\dots,P_t \in \J_d(S^n,   \F_q)$ for which $\LM(P_i)$ are pairwise disjoint for $i\in [t]$, that:
    \begin{align}\label{eqn:com-zeroes}
    \Pr_{{\bf a}\sim S^n}\bigg[P_i({\bf a}) = 0~\text{~for all~}i\in [t]\bigg] \le \paren{1-\frac{1}{s^d}}^t.
    \end{align}
    The proof follows the same {\em footprint bound} technique as used in~\cite{ABPSS25}. In particular, letting $$Z = \{{\bf a}\in S^n:P_i({\bf a})=0\text{~for all~}i \in [t]\}$$ denote the set of common zeroes, we will prove an upper bound on the dimension of all functions from $Z$ to $\F_q$ i.e., $|Z|$. Let $f:Z\to \F_q$ be an arbitrary function. We will show that it can be expressed as a polynomial of individual degree at most $s-1$ over $\F_q$, without using any monomial divisible by any of the $\LM(P_i)$ for $i\in [t]$. That is, these are the monomials ${\bf x}^{{\bf e}}$ for ${\bf e} \in E$, where $E \subseteq \Z_s^n$ is defined below, and $\LM(P_i) = {\bf x}^{{\bf m}_i}$ for some ${\bf m}_i \in \Z_s^n$\footnote{Here we treat $\Z_s^n=\{0,1,\dots,s-1\}^n$ as a subset of $\Z^n$ and define the monomial ${\x}^{{\bf m}} = \prod_{i\in [n]}x_i^{m_i}$.}:
    $$E = \{{\bf e} \in \Z_s^n: \forall i\in [t]~\exists j\in [n]~e_j < m_{i,j}\}.$$
    Using the fact that the supports of ${\bf m}_i$ are pairwise disjoint (over $i\in [t]$) and are of size at most $d$, we get that $|E| \le \paren{1-\frac{1}{s^d}}^t \cdot s^n$.
    We will show that there exists field elements $(c_{{\bf e}})_{{\bf e}\in E}$ such that  $f({\bf x}) = \sum_{{\bf e}\in E} c_{{\bf e}}\cdot {\bf x}^{{\bf e}}$ for all ${\bf x}\in Z$. This shall finish the proof of~\eqref{eqn:com-zeroes} and thus~\Cref{lem:small-4} as it shows that $|Z| \le |E| \le \paren{1-\frac{1}{s^d}}^t \cdot s^n$. Hence, it remains to prove that $f$ can be expressed as a linear combination of monomials in $E$. Since $Z\subseteq S^n$, we know that there exists a polynomial representation for $f$ of individual degree at most $s-1$: suppose $Q(\x) = \sum_{{\bf e}\in \Z_s^n} c_{{\bf e}}\cdot {\x}^{{\bf e}}$ for some $c_{{\bf e}}\in \F_q$ is such that $f({\bf x}) = Q(\x)$ for all $\x\in Z$. If $c_{{\bf e}} = 0$ for all ${\bf e}\notin E$, we are done. Otherwise, there exists an $i\in [t]$ such that $\LM(P_i)$ divides ${\bf x}^{{\bf e}'}$ for some ${\bf e}'$ such that $c_{{\bf e}'} \ne 0$ (say that $\x^{{\bf e}'} = \LM(P_i) \cdot \x^{{\bf e}''}$); w.l.o.g.~let ${\x}^{{\bf e}'}$ be the largest monomial in the monomial ordering such that this holds. Then, we note that we can replace the monomial $\x^{{\bf e}'}$ with the polynomial $\x^{{\bf e}''}(\LM(P_i)-P_i/c)$ in the polynomial  $\sum_{{\bf e}\in \Z_s^n} c_{{\bf e}}\cdot {\x}^{\bf e}$ while still computing $f$, where $c\in\F_q^\times$ is the coefficient of $\LM(P_i)$ in $P_i$. This is due to the fact that $P_i$ (and therefore $P_i/c$) evaluates to $0$ over $Z$. Let $Q'$ denote the polynomial obtained by such a transformation. We claim that $\LM(Q') \precneq \LM(Q)$. This is because all the new non-zero monomials introduced by the transformation are of the form $\x^{{\bf e}''}\cdot \x^{{\bf e}'''}$ for some $\x^{{\bf e}'''} \precneq \LM(P_i)$, and so $\x^{{\bf e}''}\cdot \x^{{\bf e}'''} \precneq \x^{{\bf e}''}\cdot \LM(P_i) = \x^{{\bf e}'} = \LM(Q)$ using the monomial ordering property. Hence, $\LM(Q') \precneq \LM(Q)$. While the leading monomial of the polynomial computing has decreased, it may be possible that $Q'$ contains monomials with individual degree at least $s$. We now argue that we can design a new polynomial $Q''$ such that $\LM(Q'') \preceq \LM(Q')$ and $Q''(\x)=Q'(\x)=f(\x)$ for all $\x\in Z$. The idea is to use the equation $\prod_{a\in S}(x_i-a)=0$ to replace the powers of the variable $x_i$ greater than $s-1$ with smaller powers --- this only results in monomials that are smaller in the monomial order. Thus, by repeating the above two steps for a finite number of times, we will have a polynomial representing $f$ only using monomials from $E$.
\end{proof} 

\addtocontents{toc}{\protect\setcounter{tocdepth}{1}}
	
\end{document}